\documentclass[reqno,11pt,letterpaper]{amsart}

\usepackage{lipsum}
\usepackage{amsmath}
\usepackage{amssymb}
\usepackage{amsthm}
\usepackage{mathrsfs}
\usepackage{accents}
\usepackage{calc}
\usepackage{arydshln}
\usepackage{upgreek}
\usepackage{slashed}
\usepackage{xifthen}
\usepackage{graphicx}
\usepackage{longtable}
\usepackage[inline]{enumitem}


\usepackage{xcolor}
\definecolor{winered}{rgb}{0.6,0,0}
\definecolor{lessblue}{rgb}{0,0,0.7}

\usepackage[pdftex,colorlinks=true,linkcolor=winered,citecolor=lessblue,urlcolor=lessblue,breaklinks=true,bookmarksopen=true]{hyperref}

\hyphenation{Schwarz-schild}
\hyphenation{Min-kow-ski}
\hyphenation{Pro-po-si-tion}
\hyphenation{hy-per-sur-face}

\setcounter{tocdepth}{3}
\setcounter{secnumdepth}{3}

\makeatletter
\newcommand{\myitem}[3]{\item[#2]\def\@currentlabel{#3}\label{#1}}
\makeatother

\addtolength{\textheight}{0.4in}
\addtolength{\oddsidemargin}{-0.5in}
\addtolength{\evensidemargin}{-0.5in}
\addtolength{\textwidth}{1.0in}
\setlength{\topmargin}{0.00in}
\setlength{\headheight}{0.18in}
\setlength{\marginparwidth}{1.0in}
\setlength{\abovedisplayskip}{0.2in}
\setlength{\belowdisplayskip}{0.2in}
\setlength{\parskip}{0.05in}


\usepackage{titletoc}

\makeatletter

\def\@tocline#1#2#3#4#5#6#7{
\begingroup
  \par
    \parindent\z@ \leftskip#3 \relax \advance\leftskip\@tempdima\relax
                  \rightskip\@pnumwidth plus 4em \parfillskip-\@pnumwidth
    \ifcase #1 
       \vskip 0.6em \hskip 0em 
       \or
       \or \hskip 0em 
       \or \hskip 1em 
    \fi%
    %
    #6
    %
    \nobreak\relax{\leavevmode\leaders\hbox{\,.}\hfill}
    \hbox to\@pnumwidth {\@tocpagenum{#7}}
  \par
\endgroup
}

 \def\l@section{\@tocline{0}{0pt}{0pc}{}{}}

\renewcommand{\tocsection}[3]{%
  \indentlabel{\@ifnotempty{#2}{ 
    \ignorespaces\bfseries{#2. #3}}}
  \indentlabel{\@ifempty{#2}{\ignorespaces\bfseries{#3}}{}} 
    \vspace{1.5pt}}

\renewcommand{\tocsubsection}[3]{%
  \indentlabel{\@ifnotempty{#2}{
    \ignorespaces#2. #3}}
  \indentlabel{\@ifempty{#2}{\ignorespaces #3}{}}
    \vspace{1.5pt}}

\renewcommand{\tocsubsubsection}[3]{%
  \indentlabel{\@ifnotempty{#2}{
    \ignorespaces#2. #3}}
  \indentlabel{\@ifempty{#2}{\ignorespaces #3}{}}
    \vspace{1.5pt}}

\makeatother

\makeatletter
\def\@nomenstarted{0}
\newlength{\@nomenoldtabcolsep}

\newcommand{\nomenstart}
  {%
    \def\@nomenstarted{1}%
    \setlength{\@nomenoldtabcolsep}{\tabcolsep}%
    \setlength{\tabcolsep}{3.5pt}%
    \begin{longtable}{p{0.11\textwidth} p{0.86\textwidth}}
  }

\newcommand{\nomenitem}[2]{%
    \ifcase\@nomenstarted%
      \or 
      \or \\ 
    \fi%
    #1\,{\leavevmode\leaders\hbox{\,.}\hfill} & #2%
    \def\@nomenstarted{2}%
  }%
\newcommand{\nomenend}
  {\\%
      \end{longtable}%
      \setlength{\tabcolsep}{\@nomenoldtabcolsep}%
      \def\@nomenstarted{0}%
  }
\makeatother

\makeatletter
\newcommand{\BIG}{\bBigg@{3.5}}
\newcommand{\vast}{\bBigg@{4}}
\newcommand{\Vast}{\bBigg@{5}}
\newcommand{\VAST}[1]{\bBigg@{#1}}
\makeatother

\allowdisplaybreaks

\numberwithin{equation}{section}
\numberwithin{figure}{section}
\newtheorem{thm}{Theorem}[section]

\newtheorem{prop}[thm]{Proposition}
\newtheorem{lemma}[thm]{Lemma}
\newtheorem{cor}[thm]{Corollary}
\newtheorem{conj}[thm]{Conjecture}

\newtheorem*{thm*}{Theorem}
\newtheorem*{prop*}{Proposition}
\newtheorem*{cor*}{Corollary}
\newtheorem*{conj*}{Conjecture}

\theoremstyle{definition}
\newtheorem{definition}[thm]{Definition}
\newtheorem{notation}[thm]{Notation}

\theoremstyle{remark}
\newtheorem{rmk}[thm]{Remark}
\newtheorem{example}[thm]{Example}

\makeatletter
\newcommand{\fakephantomsection}{%
  \Hy@MakeCurrentHref{\@currenvir.\the\Hy@linkcounter}
  \Hy@raisedlink{\hyper@anchorstart{\@currentHref}\hyper@anchorend}%
  \Hy@GlobalStepCount\Hy@linkcounter%
}
\makeatother


\newcommand{\mc}{\mathcal}
\newcommand{\cA}{\mc A}

\newcommand{\cC}{\mc C}
\newcommand{\cD}{\mc D}
\newcommand{\cE}{\mc E}
\newcommand{\cF}{\mc F}
\newcommand{\cG}{\mc G}

\newcommand{\cK}{\mc K}
\newcommand{\cL}{\mc L}
\newcommand{\cM}{\mc M}
\newcommand{\cN}{\mc N}
\newcommand{\cO}{\mc O}

\newcommand{\cS}{\mc S}

\newcommand{\cU}{\mc U}
\newcommand{\cV}{\mc V}

\newcommand{\ms}{\mathscr}

\newcommand{\sD}{\ms D}

\newcommand{\sE}{\ms E}

\newcommand{\sR}{\ms R}


\newcommand{\C}{\mathbb{C}}
\newcommand{\N}{\mathbb{N}}
\newcommand{\R}{\mathbb{R}}
\newcommand{\Z}{\mathbb{Z}}

\newcommand{\Sph}{\mathbb{S}}


\newcommand{\sfe}{\mathsf{e}}

\newcommand{\sfs}{\mathsf{s}}

\newcommand{\sfG}{\mathsf{G}}


\newcommand{\bfw}{\mathbf{w}}

\newcommand{\bfB}{\mathbf{B}}

\newcommand{\fa}{\mathfrak{a}}

\newcommand{\fg}{\mathfrak{g}}

\newcommand{\fm}{\mathfrak{m}}
\newcommand{\fp}{\mathfrak{p}}
\newcommand{\fq}{\mathfrak{q}}

\newcommand{\ft}{\mathfrak{t}}


\newcommand{\sld}{\slashed{\dd}{}}
\newcommand{\slg}{\slashed{g}{}}

\newcommand{\slGamma}{\slashed{\Gamma}{}}
\newcommand{\sldelta}{\slashed{\delta}{}}
\newcommand{\slDelta}{\slashed{\Delta}{}}
\newcommand{\slnabla}{\slashed{\nabla}{}}

\newcommand{\slstar}{\slashed{\star}}
\newcommand{\slomega}{\slashed{\omega}{}}
\newcommand{\sltr}{\operatorname{\slashed\tr}}

\newcommand{\Err}{{\mathrm{Err}}{}}




\newcommand{\scal}{\mathsf{S}}
\newcommand{\scalspace}{\mathbf{S}}
\newcommand{\vect}{\mathsf{V}}

\newcommand{\vectspace}{\mathbf{V}}

\newcommand{\vol}{\operatorname{vol}}
\newcommand{\ran}{\operatorname{ran}}
\newcommand{\ann}{\operatorname{ann}}

\newcommand{\codim}{\operatorname{codim}}

\newcommand{\Hom}{\operatorname{Hom}}
\renewcommand{\Re}{\operatorname{Re}}

\newcommand{\Id}{\operatorname{Id}}

\newcommand{\mathspan}{\operatorname{span}}
\newcommand{\supp}{\operatorname{supp}}

\newcommand{\tr}{\operatorname{tr}}

\newcommand{\diag}{\operatorname{diag}}
\newcommand{\Res}{\operatorname{Res}}


\newcommand{\Ups}{\Upsilon}

\newcommand{\eps}{\epsilon}
\newcommand{\ftrans}{\;\!\wh{\ }\;\!}

\newcommand{\hra}{\hookrightarrow}
\newcommand{\la}{\langle}

\newcommand{\ol}{\overline}
\newcommand{\pa}{\partial}
\newcommand{\dd}{{\mathrm d}}
\newcommand{\ra}{\rangle}

\newcommand{\ul}[1]{\underline{#1}{}}

\newcommand{\wh}{\widehat}
\newcommand{\wt}{\widetilde}
\newcommand{\xra}{\xrightarrow}

\newcommand{\myrightleftarrows}[1]{\mathrel{\substack{\xrightarrow{\rule{#1}{0cm}} \\[-.7ex] \xleftarrow{\rule{#1}{0cm}}}}}

\newcommand{\ubar}[1]{\underaccent{\bar}#1}
\newcommand{\pfstep}[1]{$\bullet$\ \underline{\textit{#1}}}
\newcommand{\pfsubstep}[2]{{\bf#1}\ \textit{#2}}

\newcommand{\bop}{{\mathrm{b}}}

\newcommand{\sop}{{\mathrm{s}}}
\newcommand{\seop}{{\mathrm{se}}}

\newcommand{\scop}{{\mathrm{sc}}}

\newcommand{\scl}{{\mathrm{sc}}}

\newcommand{\eop}{{\mathrm{e}}}
\newcommand{\tbop}{{3\mathrm{b}}}
\newcommand{\tscop}{{3\mathrm{sc}}}

\newcommand{\ff}{\mathrm{ff}}

\newcommand{\sface}{{\mathrm{sf}}}

\newcommand{\rms}{{\mathrm{s}}}
\newcommand{\rmv}{{\mathrm{v}}}

\newcommand{\cp}{{\mathrm{c}}}

\newcommand{\Diff}{\mathrm{Diff}}

\newcommand{\Vb}{\cV_\bop}
\newcommand{\Ve}{\cV_\eop}

\newcommand{\Vs}{\cV_\sop}
\newcommand{\Vse}{\cV_\seop}

\newcommand{\Diffb}{\Diff_\bop}
\newcommand{\Diffe}{\Diff_\eop}

\newcommand{\Diffse}{\Diff_\seop}

\newcommand{\Vtb}{\cV_\tbop}
\newcommand{\Vtsc}{\cV_{3\scl}}
\newcommand{\Difftsc}{\Diff_\tscop}
\newcommand{\Difftb}{\Diff_\tbop}

\newcommand{\Vsc}{\cV_\scop}
\newcommand{\Diffsc}{\Diff_\scop}


\newcommand{\Omegasc}{{}^{\scop}\Omega}

\newcommand{\Tse}{{}^\seop T}
\newcommand{\Tsc}{{}^{\scop}T}

\newcommand{\Ttsc}{{}^{\tscop}T}

\newcommand{\half}{{\tfrac{1}{2}}}


\newcommand{\loc}{{\mathrm{loc}}}
\newcommand{\CI}{\cC^\infty}
\newcommand{\CIdot}{\dot\cC^\infty}

\newcommand{\CIc}{\cC^\infty_\cp}

\newcommand{\Hb}{H_{\bop}}

\newcommand{\Hbext}{\bar H_{\bop}}

\newcommand{\phg}{{\mathrm{phg}}}

\newcommand{\Riem}{\mathrm{Riem}}
\newcommand{\Ric}{\mathrm{Ric}}
\newcommand{\Ein}{\mathrm{Ein}}
\newcommand{\bhm}{\fm}
\newcommand{\bha}{\fa}

\newcommand{\openbigpmatrix}[1]
  {%
    \def\@bigpmatrixsize{#1}%
    \addtolength{\arraycolsep}{-#1}%
    \begin{pmatrix}%
  }
\newcommand{\closebigpmatrix}
  {%
    \end{pmatrix}%
    \addtolength{\arraycolsep}{\@bigpmatrixsize}%
  }

\newenvironment{reduce}
 {\hbox\bgroup\scriptsize$\displaystyle}
 {$\egroup}
\newenvironment{reduce2}
 {\hbox\bgroup\tiny$\displaystyle}
 {$\egroup}



\newlength{\enummargin}\setlength{\enummargin}{1.5em}

\newcommand{\usref}[1]{{\upshape\ref{#1}}}



\DeclareGraphicsExtensions{.mps}

\makeatletter
\newcommand*{\fwbw}[1]{\expandafter\@fwbw\csname c@#1\endcsname}
\newcommand*{\@fwbw}[1]{\ifcase #1 \or {\rm fw}\or {\rm bw}\fi}
\AddEnumerateCounter{\fwbw}{\@fwbw}
\makeatother

\begin{document}

\title{Gluing small black holes along timelike geodesics I: formal solution}

\date{\today. Original version: June 12, 2023}

\begin{abstract}
  Given a smooth globally hyperbolic $(3+1)$-dimensional spacetime satisfying the Einstein vacuum equations (possibly with cosmological constant) and an inextendible timelike geodesic, we construct a family of metrics depending on a small parameter $\eps>0$ with the following properties. (1) They solve the Einstein vacuum equations modulo $\cO(\eps^\infty)$. (2) Away from the geodesic they tend to the original metric as $\eps\to 0$. (3) Their $\eps^{-1}$-rescalings near every point of the geodesic tend to a fixed subextremal Kerr metric. Our result applies on all spacetimes with noncompact Cauchy hypersurfaces, and also on spacetimes without nontrivial Killing vector fields in a neighborhood of a point on the geodesic. If $(M,g)$ is a neighborhood of the domain of outer communications of subextremal or extremal Kerr(--anti de~Sitter) spacetime, our metrics model extreme mass ratio mergers if we choose the timelike geodesic to cross the event horizon.

  The metrics which we construct here depend on $\eps$ and the (rescaled) coordinates on the original spacetime in a log-smooth fashion. This in particular justifies the formal perturbation theoretic setup in work of Gralla--Wald on gravitational self-force in the case of small black holes.
\end{abstract}

\subjclass[2010]{Primary: 83C05, 35B25. Secondary: 83C57, 35C20, 35B40}

\author{Peter Hintz}
\address{Department of Mathematics, ETH Z\"urich, R\"amistrasse 101, 8092 Z\"urich, Switzerland}
\email{peter.hintz@math.ethz.ch}

\maketitle

\setlength{\parskip}{0.00pt}
\tableofcontents
\setlength{\parskip}{0.05in}

\section{Introduction}
\label{SI}

The Einstein vacuum equations (with cosmological constant $\Lambda\in\R$) for a Lorentzian metric $g$ (with signature $(-,+,+,+)$) on a $(3+1)$-dimensional manifold $M$ (assumed to be connected) read
\begin{equation}
\label{EqIEin}
  \Ric(g) - \Lambda g = 0
\end{equation}
where $\Ric$ is the Ricci curvature. Equivalently, $\Ein(g)+\Lambda g=0$ where $\Ein(g)=\Ric(g)-\frac12 R_g g$ is the Einstein tensor (with $R_g$ being the scalar curvature). We assume that $(M,g)$ is globally hyperbolic. The aim of this paper is to construct approximate (in a sense which we make precise below) solutions $g_\eps$ of this equation (which in local coordinates is a quasilinear second order partial differential equation for the coefficients $g_{\mu\nu}$ of the metric $g$) which are obtained from $g$ by gluing a small Kerr black hole \cite{KerrKerr} along a timelike geodesic $\cC\subset M$. The metric $\hat g_{\bhm,\bha}$ of a subextremal Kerr black hole depends on two parameters, $\bhm>0$ (mass) and $\bha\in\R^3$ (specific angular momentum) with $|\bha|<\bhm$. We recall that $\hat g_{\bhm,\bha}$ is
\begin{itemize}
\item defined on $\R_{\hat t}\times\{\hat x\in\R^3\colon|\hat x|>\bhm\}$;
\item stationary, i.e.\ time translations $(\hat t,\hat x)\mapsto(\hat t+c,\hat x)$ are isometries;
\item axisymmetric when $\bha\neq 0$, with the axis of symmetry (rotation axis of the black hole) given by $\frac{\bha}{|\bha|}$, or rotationally symmetric when $\bha=0$ (which gives the Schwarzschild metric \cite{SchwarzschildPaper});
\item asymptotically flat, i.e.\ $\hat g_{\bhm,\bha}=-\dd\hat t^2+\dd\hat x^2+\cO(|\hat x|^{-1})$ tends to the Minkowski metric $-\dd\hat t^2+\dd\hat x^2$ as $|\hat x|\to\infty$;
\item a solution of the Einstein vacuum equations $\Ric(\hat g_{\bhm,\bha})=0$.
\end{itemize}
See Definition~\ref{DefGK}. We may arrange that the $\hat t$-level sets are spacelike (see Lemma~\ref{LemmaGKCoords}).

\begin{thm}[Main result]
\label{ThmI}
  Let $(M,g)$ be a globally hyperbolic spacetime solving~\eqref{EqIEin}. Let $\fp\in M$, let $v\in T_\fp M$ be a future timelike unit vector, and denote by $\cC\subset M$ the maximal geodesic with initial conditions $\fp,v$. Let $\bhm>0$ and $\bha\in T_\fp M$, $\bha\perp v$, $|\bha|<\bhm$. In Fermi normal coordinates\footnote{In such coordinates, $\cC$ is given by $I\times\{0\}$ where $I\subseteq\R$, and $g$ is equal to the Minkowski metric $-\dd t^2+\dd x^2$ up to $\cO(|x|^2)$ errors. Furthermore, the curves $s\mapsto(t,s x)$ for constant $t,x$ are geodesics. Coordinates with these properties are uniquely determined up to constant shifts of $t$ and $x\mapsto A x$ where $A\in O(3)$ is $t$-independent.} $(t,x)\in I\times\R^3$, $I\subseteq\R$, around $\cC$, identify $\bha$ with a vector in $\R^3$. Fix a Cauchy hypersurface $X$ of $(M,g)$ with $X\cap\cC=\{\fp\}$ which is orthogonal\footnote{We require this orthogonality only for notational convenience: it ensures that $t$-level sets are spacelike also near the small Kerr black hole, cf.\ \eqref{EqIKerrConv}, given the form of metric we use here. See also Lemma~\ref{LemmaGKMod}.} to $\cC$. Assume that either
  \begin{enumerate}
  \myitem{ItIOGeneric}{\rm (I)}{I} in the domain of dependence of a connected open neighborhood $\cU^\circ\subset X$ of $\fp$ there do not exist any nontrivial Killing vector fields for $g$; or
  \myitem{ItIOKdS}{\rm (II)}{II} $(M,g)$ is a neighborhood of the domain of outer communications of a Kerr (when $\Lambda=0$), Kerr--de~Sitter (when $\Lambda>0$) or Kerr--anti de~Sitter (when $\Lambda<0$) black hole which is subextremal or extremal, in which case we fix $\cU^\circ\subset X$ to be a connected open set containing $\fp$ as well as a point in the black hole interior of $(M,g)$.
  \end{enumerate}
  For $\eps\in(0,1)$, let $M_\eps=M\setminus\{(t,x)\colon|x|<\eps\bhm\}$. Then there exists a family $(g_\eps)_{\eps\in(0,1)}$, where $g_\eps$ is a smooth symmetric 2-tensor on $M_\eps$, with the following properties.
  \begin{enumerate}
  \item\label{ItIAway}{\rm (Away from $\cC$: close to $g$.)} We have convergence $g_\eps|_{M\setminus\cC}\to g|_{M\setminus\cC}$ in the smooth topology, i.e.\ locally uniformly with all derivatives. That is, if $z\in\R^4$ denotes local coordinates on the closure $\bar U$ of a precompact open set $U\subset M$ with $\bar U\cap\cC=\emptyset$, then the metric coefficients $(g_\eps)_{\mu\nu}(z)=g_\eps(z)(\pa_{z^\mu},\pa_{z^\nu})$ converge to $g_{\mu\nu}(z)$ as $\eps\searrow 0$, together with all derivatives. More precisely, $(\eps,z)\mapsto(g_\eps)_{\mu\nu}(z)$ is log-smooth at $\eps=0$ in $[0,1)_\eps\times\R^4_z$, i.e.\ it has a full generalized Taylor expansion at $\eps=0$ into terms $\eps^m(\log\eps)^k a_{m,k}(z)$ where $m,k\in\N_0$ (with $k=0$ when $m=0$) and $a_{m,k}$ is smooth.
  \item\label{ItINear}{\rm (Near $\cC$: close to a small Kerr black hole.)} In Fermi normal coordinates, we have\footnote{The coefficients of $\hat g_{\bhm,\bha}$ are independent of $\hat t$ by stationarity.}
    \begin{equation}
    \label{EqIKerrConv}
      (g_\eps)_{\mu\nu}(t,\eps\hat x) \to (\hat g_{\bhm,\bha})_{\hat\mu\hat\nu}(\hat x)
    \end{equation}
    locally uniformly with all derivatives on $I_t\times\R^3_{\hat x}$. Here, $\mu,\nu=0,\ldots,3$ are indices for $z=(t,x)$, and $\hat\mu,\hat\nu$ are indices for the corresponding components of $\hat z=(\hat t,\hat x)$. More precisely, $(g_\eps)_{\mu\nu}(t,\eps\hat x)$ is log-smooth at $\eps=0$ inside $[0,1)_\eps\times I_t\times\R^3_{\hat x}$ and equals $(\hat g_{\bhm,\bha})_{\hat\mu\hat\nu}(\hat x)$ up to $\cO(\eps^2)$ errors.
    \item\label{ItITransition}{\rm (Transition region.)} The coefficients $(g_\eps)_{\mu\nu}$, as functions of
    \begin{equation}
    \label{EqICoordTransition}
      t\in I,\qquad
      \rho_\circ:=\frac{\eps}{|x|}\geq 0,\qquad
      \hat\rho:=|x|\geq 0,\qquad
      \omega=\frac{x}{|x|},
    \end{equation}
    are continuous down to $\rho_\circ=0$ and $\hat\rho=0$, with boundary values
    \[
      (g_\eps)_{\mu\nu}(t,\rho_\circ,\hat\rho,\omega) = \begin{cases} g_{\mu\nu}(t,\hat\rho\omega), & \rho_\circ=0, \\ (\hat g_{\bhm,\bha})_{\hat\mu\hat\nu}(\rho_\circ^{-1}\omega), & \hat\rho=0. \end{cases}
    \]
    More precisely, $(g_\eps)_{\mu\nu}$, defined on a neighborhood of $I\times\{0\}\times\{0\}\times\Sph^2$ in $I_t\times[0,1)_{\rho_\circ}\times[0,1)_{\hat\rho}\times\Sph^2_\omega$, is log-smooth at $\rho_\circ=0$ and at $\hat\rho=0$, and equals $(\hat g_{\bhm,\bha})_{\hat\mu\hat\nu}(\rho_\circ^{-1}\omega)$ up to $\cO(\hat\rho^2)$ errors.
    \item\label{ItIFormal}{\rm (Formal solution at $\eps=0$.)} The family $(g_\eps)_{\eps\in(0,1)}$ is a \emph{formal solution} of the Einstein vacuum equations in the following sense. Let $V\subset M$ be a precompact open set, and let $\eps(\ol{V})>0$ be such that $g_\eps$ is a Lorentzian metric on $V\cap M_\eps$ for $\eps\in(0,\eps(\ol{V}))$.\footnote{The existence of such an $\eps(\ol{V})$ is a consequence of parts~\eqref{ItIAway}--\eqref{ItITransition}.} Then
      \begin{equation}
      \label{EqIErr}
        \Err_\eps := \Ric(g_\eps) - \Lambda g_\eps = \cO(\eps^\infty),
      \end{equation}
      i.e.\ the components of $\Err_\eps$ in local coordinates on $M$ near $\ol{V}$ (restricted to $M_\eps$) are bounded by $C_N\eps^N$ for all $N\in\N$.
  \item\label{ItIFormalCauchy}{\rm (Formal solution at a Cauchy hypersurface.)} The error $\Err_\eps$ vanishes to infinite order at $X\cap M_\eps$ for all $\eps$; more precisely, in the notation of part~\eqref{ItIFormal}, it vanishes to infinite order at $X\cap M_\eps\cap V$ for $\eps\in(0,\eps(\ol{V}))$, where $V\subset M$ is an arbitrary precompact open set.
  \item\label{ItISupp}{\rm (Support.)} We have $g_\eps=g$ outside the domain of influence (with respect to $g$) of a compact subset of $\cU^\circ$.
  \end{enumerate}
\end{thm}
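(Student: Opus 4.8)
The plan is to realize $(g_\eps)_{\eps\in(0,1)}$ as a log-smooth family on the manifold with corners $\wt M$ obtained by blowing up $\{0\}\times\cC$ inside $[0,1)_\eps\times M$; near $\cC$, in the coordinates of~\eqref{EqICoordTransition}, $\wt M$ is locally $I_t\times[0,1)_{\rho_\circ}\times[0,1)_{\hat\rho}\times\Sph^2_\omega$ with $\eps=\rho_\circ\hat\rho$. Its two boundary hypersurfaces meeting the relevant region are the \emph{front face} $\ff=\{\hat\rho=0\}$, fibered over $\cC$ with fiber the radial compactification of $\{\hat x\in\R^3:|\hat x|\ge\bhm\}$, $\hat x=\omega/\rho_\circ$, and the lift $\mface=\{\rho_\circ=0\}$ of $\{\eps=0\}$, canonically $M$; they meet along the matching corner $\ff\cap\mface$, fibered over $\cC$ with fiber $\Sph^2$. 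I look for $g_\eps$ log-smooth on $\wt M$ with $g_\eps|_{\mface}=g$ and $g_\eps|_{\ff}=\hat g_{\bhm,\bha}(\hat x)$; these are consistent at $\ff\cap\mface$ since in Fermi coordinates $g=-\dd t^2+\dd x^2+\cO(|x|^2)$ while $\hat g_{\bhm,\bha}=-\dd\hat t^2+\dd\hat x^2+\cO(|\hat x|^{-1})$. To make~\eqref{EqIEin} tractable I impose a generalized wave-map gauge: fixing a reference metric $g_0$ equal to $g$ away from $\cC$ and, after rescaling, to the Kerr background near $\cC$, pass to $P(g):=\Ric(g)-\Lambda g-\delta_g^*\Ups(g)$ with $\Ups(g)=\Ups(g;g_0)$ the gauge one-form; its linearization is normally hyperbolic, so in $P(g_\eps+h)=P(g_\eps)+D_{g_\eps}P(h)+\cO(h^2)$ the remainder is harmless for corrections $h$ of positive $\eps$-order.

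\textbf{Reduction to the gauge-fixed equation.} Arranging (when choosing $g_\eps^{(0)}$, $g_0$, and the Cauchy data below) that $\Ups(g_\eps)$ vanishes to infinite order at $X$ together with its transverse derivative, the second Bianchi identity forces $\Ups(g_\eps)$ to solve a wave equation with right-hand side $\cO(\eps^\infty)$ once $P(g_\eps)=\cO(\eps^\infty)$; well-posedness then gives $\Ups(g_\eps)=\cO(\eps^\infty)$, whence $\Err_\eps=\Ric(g_\eps)-\Lambda g_\eps=\delta_{g_\eps}^*\Ups(g_\eps)+\cO(\eps^\infty)=\cO(\eps^\infty)$, which is~\eqref{EqIErr}; a further arrangement of the expansion near $X$ (solving the constraints for the data of $g_\eps^{(0)}$) yields the infinite-order vanishing of $\Err_\eps$ at $X$ in~\eqref{ItIFormalCauchy}. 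It therefore suffices to construct a log-smooth $g_\eps$ on $\wt M$, with the stated boundary values, solving $P(g_\eps)=\cO(\eps^\infty)$.

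\textbf{Matched-asymptotics iteration.} Start from an explicit interpolant $g_\eps^{(0)}$ — e.g.\ $g$ plus a $\cC$-adapted cutoff, supported in a thin tube $T$ around $\cC$, times the difference of the $\eps$-rescaled $\hat g_{\bhm,\bha}$ and $-\dd t^2+\dd x^2$ in Fermi coordinates — for which $P(g_\eps^{(0)})$ vanishes to positive order (with $\log$'s) at $\ff$ and at $\mface$. Inductively, given $g_\eps^{(j)}$ with $P(g_\eps^{(j)})$ vanishing to order $j$, I would solve $D_{g_\eps^{(j)}}P(h^{(j)})=-P(g_\eps^{(j)})+(\text{lower }\eps\text{-order})$ for $h^{(j)}$ of order $j$ and set $g_\eps^{(j+1)}=g_\eps^{(j)}+h^{(j)}$. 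Restricting to the two faces decouples this into: (i) at $\mface$, the linearized gauge-fixed Einstein equation on $M$ — a wave equation, $g$ being a genuine solution — with source the restriction of the error, solved with vanishing Cauchy data on $X$ and with prescribed conormal (generically $\log$-)singular behavior at $\cC$ fixed by the matching, where finite propagation speed together with $\cC\subset J(\{\fp\})$ and $\fp\in\cU^\circ$ keeps the solution supported in the domain of influence of a compact subset of $\cU^\circ$, giving~\eqref{ItISupp}; and (ii) at $\ff$, the \emph{stationary} linearized gauge-fixed Einstein operator $L_{\bhm,\bha}$ on Kerr, solved on the radial compactification of $\{|\hat x|\ge\bhm\}$ (with suitable regularity at the inner boundary $|\hat x|=\bhm$) against the restriction of the error, with at most polynomial growth in $(\hat t,\hat x)$ matching the $\mface$-solution at spatial infinity. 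Demanding that the Taylor coefficients of the two solutions agree at $\ff\cap\mface$ pins down, order by order, the exponents and $\log$-powers that occur and produces the log-smooth structure in~\eqref{ItIAway}--\eqref{ItITransition}; an asymptotic (Borel) summation of the resulting formal series then yields $g_\eps$.

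\textbf{The main obstacle.} The hard part is the solvability theory for $L_{\bhm,\bha}$ on Kerr at $\ff$: on the relevant weighted spaces its cokernel is finite-dimensional, spanned by linearized-Kerr modes (variations of $\bhm$ and $\bha$), by pure-gauge modes $\delta_{\hat g_{\bhm,\bha}}^*\zeta$ coming from the asymptotic symmetries at spatial infinity (translations and boosts of the black hole, i.e.\ shifts of the center and velocity underlying the Fermi coordinates), and — only in case~\eqref{ItIOKdS} — possibly further modes controlled through the linear (mode) stability of Kerr(--de~Sitter). One must show that the obstruction produced by the error is annihilated at every order: the leading-order piece vanishes \emph{precisely because $\cC$ is a geodesic} (to leading order a small black hole moves on a geodesic), and the higher-order pieces are removed using the available freedoms — the leading singular coefficient of the $\mface$-correction at $\cC$, which re-enters the $\ff$-problem through the matching, together with the addition of pure-gauge and linearized-Kerr terms. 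In case~\eqref{ItIOGeneric} it is the absence of nontrivial Killing vector fields for $g$ on the domain of dependence of $\cU^\circ$ that lets the $\mface$-solve realize the required leading coefficients; in case~\eqref{ItIOKdS} this role is played by the linear stability of Kerr(--de~Sitter) together with unique continuation from the point of $\cU^\circ$ in the black hole interior. Establishing this Fredholm-plus-cancellation structure, and carrying the attendant bookkeeping of index sets on $\wt M$, is where the work lies; by comparison the hyperbolic solve at $\mface$ and the asymptotic summation are routine.
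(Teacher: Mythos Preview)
Your blow-up setup and two-face iteration are right, but two load-bearing steps are mishandled. First, at $\mface$ you propose to solve the linearized (gauge-fixed) equation as a wave equation \emph{with vanishing Cauchy data on $X$}. This fails: the sources at $\mface$ are singular at $\cC$ (of size $r^{-1}(\log r)^k$), and the forward solution with zero data generically develops conormal singularities along the light cones from $\{\fp\}=X\cap\cC$, destroying log-smoothness on $\wt M$ (cf.\ the explicit example $\Box_g h=r^{-1}$, $h|_{t=0}=\pa_t h|_{t=0}=0$, which is only $\cC^1$ at $r=t$). The paper's fix is essential: first construct a \emph{formal} solution $h_\sharp$ in Taylor series at $\pa M_\circ$ by inverting the indicial family of $r^2\wh{D_{\ubar g}\Ric}(0)$ (this is where the delicate work of \S\ref{SRic} lives, since $D_g\Ric$ is characteristic at $N^*\cC$), and only then solve an IVP on $M$ for the \emph{smooth} remainder $f_\flat=f-(D_g\Ric-\Lambda)h_\sharp$ with initial data chosen via the linearized constraints. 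This is also why the paper works \emph{without} a global gauge for the nonlinear equation: enforcing $\Ups(g_\eps)|_X=0$ order-by-order would couple back into the iteration in exactly the way the IVP picture makes problematic (see \S\ref{SssIPfB}, Remarks~\ref{RmkIPfIVP1}--\ref{RmkIPfIVP2}).

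Second, several structural identifications are off. At $\hat M$, the $7$-dimensional \emph{cokernel} of $\wh{D_{\hat g_b}\Ric}(0)$ consists of the dual-pure-gauge tensors $\sfG_{\hat g_b}\delta_{\hat g_b}^*\omega$ for $\omega$ an asymptotic time translation, spatial translation, or rotation (Lemma~\ref{LemmaAhKdef}, Theorem~\ref{ThmAhKCoker}); linearized Kerr metrics and deformation tensors of translations are \emph{kernel} elements used to \emph{modulate} off the cokernel, and boosts enter only through the quadratic-in-$\hat t$ mechanism $D_{\hat g_b}\Ric(\frac{\hat t^2}{2}h_{b,\hat c}+\hat t\breve h_{b,\hat c})$, not as a cokernel direction. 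The role of assumption~\eqref{ItIOGeneric} is not to ``realize leading coefficients'' for matching but to solve the linearized constraints $D_{(\gamma,k)}P(\dot\gamma,\dot k)=(\sfG_g f_\flat)(\nu,\cdot)$ with \emph{compact support in $\cU^\circ$} (Propositions~\ref{PropAcConstr}--\ref{PropAcTrue}), which then yields~\eqref{ItISupp} by finite speed of propagation. In case~\eqref{ItIOKdS} the mechanism is neither linear stability nor unique continuation: one solves the constraints only modulo a finite-dimensional error supported in the black-hole interior $M_{\eta_0}\setminus M_{\eta_1}$, then \emph{restricts} to $M_{\eta_1}$ (using that $\dd r$ is future timelike there), and iterates (see \S\ref{SX}). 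Finally, the leading $\hat M$-obstruction does not simply vanish ``because $\cC$ is a geodesic''; rather, the geodesic condition (via Lemma~\ref{LemmaGLFermi}) makes $g-\ubar g=\cO(|x|^2)$, which together with the spherical-harmonic structure of $g_{(2)}$ (Lemma~\ref{LemmaFMc1Pure}) prevents unremovable scalar/vector type $1$ error terms at low orders---the geodesic hypothesis is recovered as a \emph{necessary} condition (\S\ref{SsAcGW}, Proposition~\ref{PropAhNec}), not as the direct cancellation you describe.
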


See Theorem~\ref{ThmM} for setting~\eqref{ItIOGeneric} and Theorem~\ref{ThmXGlue} for setting~\eqref{ItIOKdS}. We can also consider a third setting (see Remark~\ref{RmkXNc}):

\begin{thm}[Main result: third setting]
\label{ThmI2}
  In the notation of Theorem~\usref{ThmI}, let $X$ be a Cauchy hypersurface of $(M,g)$, and suppose $X$ is \emph{noncompact}. Let $V\subset M$ be any precompact open set. Then there exists a family $(g_\eps)_{\eps\in(0,1)}$ of symmetric 2-tensors on $M_\eps\cap\bar V$ so that the conclusions~\eqref{ItIAway}--\eqref{ItIFormalCauchy} hold on $\bar V$.
\end{thm}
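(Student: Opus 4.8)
The plan is to reduce Theorem~\ref{ThmI2} to Theorem~\ref{ThmM} (Theorem~\ref{ThmI} in setting~\eqref{ItIOGeneric}) by exploiting the noncompactness of $X$: I would modify $(M,g)$ near an ``end'' of $X$ so as to destroy all Killing vector fields, while leaving the metric unchanged on the part of $M$ that causally influences $\ol V$. One may assume $\ol V\cap\cC\neq\emptyset$, since otherwise parts~\eqref{ItIAway}--\eqref{ItIFormalCauchy} hold trivially on $\ol V$ with $g_\eps=g$. As $\ol V$ is compact and $(M,g)$ globally hyperbolic, $J^M(\ol V)\cap X$ is a compact subset of $X$ containing $\fp$ (any point of $\ol V\cap\cC$ is causally related to $\fp\in\cC\cap X$). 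I would then fix an open set $\cW\subset X$ with compact closure such that $J^M(\ol V)\cap X\subset\cW$ and $\ol V$ lies in the interior of $D^M(\cW)$, and---using that $X$ is noncompact---a point $q\in X\setminus\ol\cW$ together with a small open ball $B\ni q$ with $\ol B\cap\ol\cW=\emptyset$.

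The heart of the matter is a perturbation of the initial data. Writing $(h,k)$ for the vacuum initial data (with cosmological constant $\Lambda$) induced by $g$ on $X$, I would invoke the theory of the Einstein constraint equations to produce vacuum initial data $(\wt h,\wt k)$ on $X$, arbitrarily $\CI_\loc$-close to $(h,k)$, coinciding with $(h,k)$ outside $B$---hence on $\cW$ and near $\fp$---and such that the maximal globally hyperbolic vacuum development $(\wt M,\wt g)$ of $(X,\wt h,\wt k)$ admits no nontrivial Killing vector field on any neighborhood of $q$: this is the genericity of the absence of (local) Killing initial data, in the spirit of Beig--Chru\'sciel--Schoen, made possible because the perturbations supported in $B$ that solve the constraints form an infinite-dimensional family within which those producing a Killing symmetry near $q$ are of infinite codimension. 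Since $(\wt h,\wt k)=(h,k)$ on $\cW$, uniqueness of developments gives $\wt g=g$ on $D^{\wt M}(\cW)\cong D^M(\cW)\supset\ol V$; in particular $\wt X:=X$ is a Cauchy hypersurface for $(\wt M,\wt g)$, $\cC$ is an inextendible timelike geodesic in $\wt M$ (hence meets $X$ exactly once, at $\fp$), and the Fermi normal coordinates around $\cC$, the vectors $v,\bha$, and the orthogonality of $X$ to $\cC$ are all as in $(M,g)$ near $\fp$ and on $\ol V$.

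Finally, I would fix a connected open neighborhood $\wt\cU^\circ\subset X$ of the compact set $\ol\cW\cup\ol B$ (possible since $X$ is connected). A nontrivial Killing vector field of $\wt g$ on $D^{\wt M}(\wt\cU^\circ)$ would restrict to a nontrivial Killing vector field on a neighborhood of $q$, which does not exist; hence $(\wt M,\wt g)$ satisfies hypothesis~\eqref{ItIOGeneric} with $\cU^\circ=\wt\cU^\circ$. Theorem~\ref{ThmM}, applied to $(\wt M,\wt g)$ with the data $\fp,v,\bhm,\bha$, then produces a family $(\wt g_\eps)_{\eps\in(0,1)}$ on $\wt M_\eps$ satisfying~\eqref{ItIAway}--\eqref{ItISupp}; since $\wt g=g$ on a neighborhood of $\ol V$ and $\wt X=X$ there, the conclusions~\eqref{ItIAway}--\eqref{ItIFormalCauchy} for $(\wt g_\eps)$ restrict to those asserted in Theorem~\ref{ThmI2} for $g_\eps:=\wt g_\eps|_{M_\eps\cap\ol V}$. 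I expect the main obstacle to be exactly the data perturbation of the second paragraph---the constraint-preserving destruction of all local Killing symmetries near $q$ without disturbing $\cW$---and this is also where the noncompactness of $X$ is genuinely used, providing the ``room'' $X\setminus\ol\cW\neq\emptyset$ in which to carry it out.
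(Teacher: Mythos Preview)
Your approach is correct in outline but genuinely different from the paper's. You modify the spacetime---perturbing the initial data in a ball $B\subset X\setminus\ol\cW$ to destroy all Killing symmetry, then invoking Theorem~\ref{ThmM} on the new spacetime $(\wt M,\wt g)$---whereas the paper (Remark~\ref{RmkXNc}) leaves $(M,g)$ untouched and instead modifies the \emph{proof}: at the one place where KIDs obstruct the construction, namely the solution of the linearized constraints $D_{(\gamma,k)}P(\dot\gamma,\dot k)=(\sfG_g f_\flat)(\nu,-)$ in Proposition~\ref{PropAcTrue}, the paper simply drops the compact-support requirement on $(\dot\gamma,\dot k)$. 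This works because elements of $\ker(D_{(\gamma,k)}P)^*$ satisfy unique continuation, so any compactly supported element on noncompact $X$ vanishes identically---hence there is no cokernel. The resulting $h_\flat$ is no longer compactly supported, which is why the conclusions are only asserted on precompact $\ol V$; the Cauchy-surface correction (Theorem~\ref{ThmIVP}) is handled similarly by pushing its finite-dimensional error into a set $W\subset X\setminus\ol V$. Your reduction is pleasingly modular, but its cost is the constraint-preserving, KID-destroying local perturbation you correctly flag as the main obstacle: this is believable (Beig--Chru\'sciel--Schoen genericity combined with Corvino-type localized gluing) but not quite off the shelf in the form you need. The paper's route avoids this entirely, using only the unique continuation of KIDs---the very same fact that underlies your argument that a nontrivial Killing field on $D^{\wt M}(\wt\cU^\circ)$ must be nontrivial near $q$---and nothing more.
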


We can interpret~\eqref{EqIKerrConv} as follows: using the scaling property\footnote{This follows from the form~\eqref{EqGKBL} of the Kerr metric, specifically from the fact that pullback under the scaling map $(\hat t,\hat x)\mapsto(\eps\hat t,\eps\hat x)$, $\eps>0$, sends $\hat g_{\eps\bhm,\eps\bha}$ to $\eps^2\hat g_{\bhm,\bha}$.}
\[
  (\hat g_{\bhm,\bha})_{\hat\mu\hat\nu}(x/\eps) = (\hat g_{\eps\bhm,\eps\bha})_{\hat\mu\hat\nu}(x),
\]
the metric $g_\eps(t,x)$ is, for $|x|\lesssim\eps$, close to the metric $\hat g_{\eps\bhm,\eps\bha}(x)$ of a Kerr black hole with mass $\eps\bhm$ and specific angular momentum $\eps\bha$. See Figure~\ref{FigI}, and also Figure~\ref{FigI2} below. We furthermore obtain rough bounds on the exponents of the logarithms appearing in the generalized Taylor expansions. Namely, in part~\eqref{ItIAway} only $\eps^0$, $\eps^1$, $\eps^2\log\eps$, $\eps^2$, and $\eps^m(\log\eps)^k$ with $m\geq 3$ appear, and in part~\eqref{ItINear} only $\eps^0,\eps^2,\eps^m(\log\eps)^k$ with $m\geq 3$ (and similarly in part~\eqref{ItITransition} regarding the expansion at $\rho_\circ=0$, resp.\ $\hat\rho=0$).

\begin{figure}[!ht]
\centering
\includegraphics{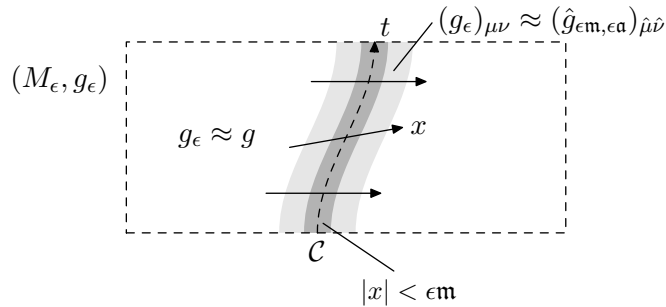}
\caption{Illustration of the metric $g_\eps$ from Theorem~\ref{ThmI} for some small positive $\eps>0$: $g_\eps$ is close to $g$ away from $\cC$, and near all points of $\cC$ close to the metric of a small Kerr black hole with mass $\eps\bhm$ and specific angular momentum $\eps\bha$. We cut out a ball $|x|<\eps\bhm$ in the interior of the small black hole.}
\label{FigI}
\end{figure}

\begin{rmk}[Black hole mergers]
\label{RmkIMerger}
  If in the Kerr--de~Sitter setting~\eqref{ItIOKdS} in Theorem~\ref{ThmI} we take $\cC$ to cross the event horizon of the KdS black hole in finite proper time, the metric $g_\eps$ describes an \emph{extreme mass ratio merger}:\footnote{We carefully distinguish this from \emph{extreme mass ratio inspirals} (EMRIs). In an EMRI, the small black hole moves (in the limit $\eps\searrow 0$) along a \emph{bound orbit} in an ambient Kerr spacetime. The loss of energy due to gravitational waves causes the parameters of the orbit to change and ultimately the small black hole to merge with the ambient one on time scales $\sim\eps^{-1}$. Theorem~\ref{ThmI} on the other hand does not give uniform control beyond time scales $\sim 1$.}   the merger of a mass $\eps$ Kerr black hole with a unit mass Kerr--de~Sitter black hole. Once the small black hole is a fixed distance $2\eta>0$ past the event horizon of the unit mass KdS black hole, one can, for small $\eps>0$, consider the initial data of $g_\eps$ on a suitable spacelike hypersurface in an $\eta$-neighborhood of the domain of outer communications of the unit mass black hole: these are $\eps$-close to the initial data of the unit mass KdS black hole. The future evolution of these data (which satisfy the constraints only modulo $\cO(\eps^\infty)$ errors due to~\eqref{EqIErr}) can be controlled using the robust stability result \cite{HintzVasyKdSStability}; see~\S\ref{SX} for details; see also Figure~\ref{FigXGlue}. If $(M,g)$ is a Kerr black hole, then an application of the existing stability results \cite{KlainermanSzeftelGCM1,KlainermanSzeftelGCM2,DafermosHolzegelRodnianskiTaylorSchwarzschild,KlainermanSzeftelKerr,GiorgiKlainermanSzeftelStability,ShenGCMKerr} would require not merely formal but true solutions; cf.\ Conjecture~\ref{ConjITrue} below.
\end{rmk}

In future work, we hope to correct the formal solution $g_\eps$ to a true solution of~\eqref{EqIEin}:

\begin{conj}[True solution]
\label{ConjITrue}
  Fix a precompact open set $V\subset M$. Then for some small $\eps_0>0$, there exists a smooth tensor $h_\eps$ on $\ol{V}\cap M_\eps$, $0<\eps<\eps_0$ which vanishes to infinite order at $\eps=0$ and outside the domain of influence of a compact subset of $\cU^\circ$, and which has the property that $g_\eps+h_\eps$ is a \emph{true solution} of the Einstein vacuum equations, i.e.\ $\Ric(g_\eps+h_\eps)-\Lambda(g_\eps+h_\eps)=0$ for all $\eps\in(0,\eps_0]$.
\end{conj}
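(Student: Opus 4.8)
The plan is to close the formal solution $g_\eps$ to a true one by a hyperbolic PDE argument, the decisive difficulty being a uniform-in-$\eps$ linear theory for the gauge-fixed linearized Einstein operator on $(M_\eps,g_\eps)$, whose geometry pinches off along $\cC$ as $\eps\searrow 0$. First I would fix a gauge: writing the sought true solution as $g_\eps+h$ and imposing the generalized harmonic (wave map) gauge relative to the background $g_\eps$ turns $\Ric(g')-\Lambda g'=0$ into a quasilinear wave equation $P_{g_\eps}h=-\Err_\eps-Q_{g_\eps}(h)$, where $P_{g_\eps}$ is the linearized gauge-fixed Einstein operator, $Q_{g_\eps}$ collects the quadratic-and-higher remainder, and $\Err_\eps=\cO(\eps^\infty)$ is the formal error from~\eqref{EqIErr}. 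By the second Bianchi identity the gauge one-form satisfies a homogeneous wave equation, so a solution of this reduced equation whose Cauchy data on $X$ satisfy the constraints is a genuine solution of~\eqref{EqIEin}; by part~\eqref{ItIFormalCauchy} the formal error vanishes to infinite order at $X$, so I can take \emph{zero} Cauchy data for $h$, which automatically yields $h=\cO(\eps^\infty)$ and $h$ vanishing to infinite order at $X$ and — by the support property~\eqref{ItISupp} and finite speed of propagation — outside the domain of influence of a compact subset of $\cU^\circ$.

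The heart of the matter is to solve $P_{g_\eps}h=f_\eps$, $f_\eps=\cO(\eps^\infty)$, with uniform control: a right inverse $P_{g_\eps}^{-1}$ bounded on suitable weighted ($\eps$-dependent) Sobolev spaces with at most polynomial loss in $\eps^{-1}$, so that a Picard iteration — or a Nash--Moser scheme, if the loss forces tame estimates — converges and produces $h=h_\eps=\cO(\eps^\infty)$. To build $P_{g_\eps}^{-1}$ I would resolve the geometry by blowing up $\cC\times\{0\}$ inside $M\times[0,1)_\eps$; on the resulting manifold $P_{g_\eps}$ becomes an operator with three relevant model operators: (a) at the front face — a bundle over $\cC$ with radially-compactified $\R^3$ fibers carrying the scaled metric $\hat g_{\bhm,\bha}$ — the normal operator is $\eps^{-2}$ times the linearized gauge-fixed Einstein operator of subextremal Kerr in the low-frequency/near-static regime; (b) at $\eps=0$ away from $\cC$, the linearized gauge-fixed Einstein operator of the ambient $(M,g)$; (c) at the corner, a matching model interpolating (a) at $|\hat x|\to\infty$ with (b) at $\cC$. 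Invertibility of (b) is where the dichotomy enters: under~\eqref{ItIOGeneric}, absence of Killing fields makes this operator have no cokernel (solvable Cauchy problem, no nontrivial stationary modes to quotient out), while under~\eqref{ItIOKdS} one invokes the robust stability result \cite{HintzVasyKdSStability}, which handles the stationary and axisymmetric resonances of Kerr(--anti--)de~Sitter — and the support condition~\eqref{ItISupp} guarantees that gluing does not perturb the ambient black hole parameters at leading order, so those resonances are not excited.

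Invertibility of model (a) is the crux of the linear theory. The linearized gauge-fixed Einstein operator on Kerr has a finite-dimensional kernel and cokernel: pure-gauge (Lie derivative) solutions, solutions tangent to the Kerr family (variations of $\bhm$ and $\bha$), and "boost/translation" solutions growing linearly in $|\hat x|$ — the last being precisely the channel through which the gravitational self-force acts. I would enlarge the problem by the corresponding \emph{modulation parameters}, namely slowly ($t$-)varying corrections to $(\bhm,\bha)$ and to the position and velocity of the small black hole along $\cC$, so that the enlarged operator becomes invertible; the modulation parameters are then pinned down by solvability (ODE) conditions along $\cC$, which at the appropriate orders reproduce the Jacobi/geodesic-deviation equation and the evolution of mass and angular momentum. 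Establishing the requisite mapping properties of the Kerr model on weighted spaces requires a near-zero-energy resolvent analysis for linearized gravity on subextremal Kerr — parallel to, but more involved than, the scalar or Maxwell cases — including radial-point estimates at the event horizon and scattering/edge behavior at fiber infinity that matches into the corner model (c).

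The main obstacle I expect is the uniform-in-$\eps$ microlocal parametrix patching across the transition region: assembling the three model inverses into a global right inverse for $P_{g_\eps}$ with controlled loss demands propagation-of-singularities and energy estimates that are uniform as $\eps\searrow 0$, and in particular a uniform treatment of the two a priori separate trapped sets — the photon sphere of the ambient Kerr(--de~Sitter) black hole and the light ring of the small Kerr black hole at "scale $\eps$" — via normally hyperbolic trapping estimates that degenerate gracefully, together with uniform control of the low-frequency ($\eps\to0$) behavior where the operator transitions from the ambient wave operator to the static Kerr operator. A secondary, more structural difficulty is reconciling the modulation picture with the clean statement of Conjecture~\ref{ConjITrue}: one will likely have to allow the formal solution itself to be re-adjusted within an $\cO(\eps^\infty)$ error — equivalently, the trajectory $\cC$ and the parameters $(\bhm,\bha)$ should carry $t$-dependent $\cO(\eps^\infty)$ corrections fixed by the solvability conditions — so that the residual correction $h_\eps$ genuinely vanishes to infinite order at $\eps=0$.
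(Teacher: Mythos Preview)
The statement you are addressing is labeled \emph{Conjecture}~\ref{ConjITrue} in the paper and is explicitly left open: the paper writes ``In future work, we hope to correct the formal solution $g_\eps$ to a true solution'' and provides no proof. There is thus no argument in the paper against which to compare your proposal.

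What you have written is not a proof but a research outline, and you say as much yourself when you flag ``the main obstacle I expect'' (uniform-in-$\eps$ parametrix patching across the transition region, simultaneous control of the two trapped sets at scales $1$ and $\eps$, uniform low-frequency behavior) and the ``secondary, more structural difficulty'' (reconciling modulation with the clean $\cO(\eps^\infty)$ statement) as unresolved. These are genuine open problems. The near-zero-energy resolvent analysis for linearized gravity on subextremal Kerr that you invoke for model~(a) goes well beyond what the paper uses for the formal construction (which relies only on the zero-energy operator via \cite{HaefnerHintzVasyKerr,AnderssonHaefnerWhitingMode}); and uniform normally hyperbolic trapping estimates as two trapped sets at different scales interact as $\eps\searrow 0$ are not available in the literature. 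Your structural scaffolding---gauge fixing, three model operators on the blown-up space $\wt M$, modulation to kill the Kerr cokernel, Nash--Moser to close---is a plausible strategy, but it identifies where the work would begin, not where it ends. One technical remark: your treatment of model~(b) conflates two roles of the KID hypothesis. The linearized gauge-fixed Einstein operator on $(M,g)$ is a principally scalar wave operator with a well-posed Cauchy problem regardless of whether \eqref{ItIOGeneric} or \eqref{ItIOKdS} holds; the KID dichotomy in the paper matters for solving the \emph{ungauged} linearized equations with source in the formal construction (\S\ref{SsELin}, Proposition~\ref{PropAcTrue}), not for inverting a gauge-fixed wave operator.
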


This conjecture implies that the formal solutions constructed by Theorem~\ref{ThmI} describe the interaction of the small Kerr black hole with the ambient spacetime $(M,g)$ to all orders in $\eps$.

Returning to Theorem~\ref{ThmI}, part~\eqref{ItINear} is a strengthening of the following statement. Fix $t_0\in I$, corresponding to a point $(t_0,0)$ in $\cC$, and introduce the `fast' coordinates
\begin{equation}
\label{EqICoordFast}
  \hat t=\frac{t-t_0}{\eps},\qquad
  \hat x=\frac{x}{\eps}
\end{equation}
near it: these are time and space coordinates for an observer on an $\eps^{-1}$ rescaling of $(M,g)$. Then $(g_\eps)_{\mu\nu}(t_0+\eps\hat t,\eps\hat x)\to(\hat g_{\bhm,\bha})_{\hat\mu\hat\nu}(\hat x)$ as $\eps\searrow 0$, locally uniformly and with all derivatives in $(\hat t,\hat x)\in\R\times\R^3$. Since $\pa_{\hat\mu}=\eps\pa_\mu$, this means that
\begin{equation}
\label{EqIConvKerr}
  \eps^{-2}g_\eps|_{(t_0+\eps\hat t,\eps\hat x)}(\pa_{\hat\mu},\pa_{\hat\nu}) \to (\hat g_{\bhm,\bha})_{\hat\mu\hat\nu}|_{\hat x}(\pa_{\hat\mu},\pa_{\hat\nu}).
\end{equation}
We may interpret this as follows. If $g$ is a solution of~\eqref{EqIEin}, then since $\Ric(\eps^{-2}g)=\Ric(g)$, the rescaling $\eps^{-2}g$ solves~\eqref{EqIEin} with cosmological constant $\eps^2\Lambda$. The coefficients of $\eps^{-2}g$ with respect to `fast' coordinates $\hat t=\frac{t-t_0}{\eps}$, $\hat x=\frac{x}{\eps}$ near a point $(t,x)=(t_0,0)$ in $M$, similarly to~\eqref{EqICoordFast}, are equal to the coefficients of $g$ with respect to `slow' coordinates $t,x$. At $(t,x)=(t_0,0)$ itself, $g$ is equal to the Minkowski metric if we choose the coordinates $t,x$ appropriately. In this sense, $\eps^{-2}g$ tends to the flat Minkowski metric in a $\cO(\eps)$-neighborhood of $(t_0,0)$ as $\eps\searrow 0$. Thus, the convergence~\eqref{EqIConvKerr} means that the local limit of $\eps^{-2}g_\eps$ at every point on $\cC$ is not the Minkowski metric, but the (asymptotically flat) Kerr metric. Under the smoothness conditions on $g_\eps$ in part~\eqref{ItITransition} of Theorem~\ref{ThmI}, the limit in~\eqref{EqIConvKerr} is necessarily an asymptotically flat metric which, given~\eqref{EqIErr}, is moreover Ricci-flat.

A further feature of~\eqref{EqIConvKerr} is that the limiting (Kerr) metric is \emph{stationary}. The stationarity of the local limit follows more generally for families $g_\eps$ for which, on the set
\begin{equation}
\label{EqICoordLocal}
  [0,1)_\eps\times I_t\times\R^3_{\hat x},
\end{equation}
the metric $(g_\eps)_{\mu\nu}$ depends only on the \emph{slow} time variable $t$. (The smoothness of $g_\eps$ on~\eqref{EqICoordLocal} means that the small black hole evolves in a \emph{quasistationary}, or \emph{adiabatic}, manner.) Therefore, the regularity properties of $g_\eps$ force the local limits to be stationary, asymptotically flat Ricci-flat spacetimes, and thus by necessity (at least conjecturally) Kerr spacetimes; see Remark~\ref{RmkGKWhyKerr} for further details. This explains why, in the vacuum setting under study here, one can only possibly glue Kerr black holes into $(M,g)$. In this sense, Theorem~\ref{ThmI} is the simplest possible result of its type.\footnote{One can likely perform a similar construction in the setting of the Einstein--Maxwell equations by gluing in Kerr--Newman black holes. Outside the (electro)vacuum regime however, even just the existence of small stationary bodies that one could attempt to glue in is a highly nontrivial issue; see for example \cite{ReinVlasovEinstein,JabiriEinsteinVlasovStatic}.}

\begin{prop}[Necessary conditions for gluing]
\label{PropINec}
  In the notation of Theorem~\usref{ThmI}, suppose that $\cC\subset M$ is an inextendible timelike curve,\footnote{There is still a notion of Fermi normal coordinates; see Lemma~\ref{LemmaGLFermi}.} furthermore $g_\eps$ is a family of metrics on $M_\eps$ which is log-smooth as in points~\eqref{ItINear}--\eqref{ItITransition} but where the subextremal Kerr parameters $\bhm,\bha$ are allowed to depend on $t\in I$ (i.e.\ on the point in $\cC$), with $\bhm=\bhm(t)$ not identically $0$; finally, assume that~\eqref{EqIErr} holds. Then $\cC$ is a geodesic (thus proving the geodesic hypothesis in our setting), $\bhm$ is constant, and $\bha$ is parallel along $\cC$.
\end{prop}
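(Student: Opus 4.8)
The plan is to extract the constraints on $\cC$, $\bhm$, and $\bha$ from the leading-order behavior of the equation $\Err_\eps=\cO(\eps^\infty)$ in the transition and near regions, by expanding $g_\eps$ in powers of $\hat\rho=|x|$ (with $\rho_\circ=\eps/|x|$ fixed) and in powers of $\eps$, and matching. Concretely, I would first write the Fermi normal coordinate expansion of $g$ near $\cC$: $g = -\dd t^2+\dd x^2 + \cO(|x|^2)$, where the $\cO(|x|^2)$ term encodes the curvature $R_{\mu i\nu j}(t,0)$ of the ambient spacetime along $\cC$ and, crucially, the acceleration of $\cC$ appears at order $|x|$ if $\cC$ is \emph{not} geodesic (the $g_{00}$ component picks up a term $-2 a_i(t) x^i$ where $a$ is the acceleration). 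The log-smoothness hypotheses in \eqref{ItINear}--\eqref{ItITransition} say that $(g_\eps)_{\mu\nu}$, as a function of $(t,\rho_\circ,\hat\rho,\omega)$, has boundary value $g_{\mu\nu}(t,\hat\rho\omega)$ at $\rho_\circ=0$ and $(\hat g_{\bhm(t),\bha(t)})_{\hat\mu\hat\nu}(\rho_\circ^{-1}\omega)$ at $\hat\rho=0$, with the latter holding up to $\cO(\hat\rho^2)$ errors; in particular there is no $\cO(\hat\rho)$ correction at $\hat\rho=0$. I would then transfer these statements into the `fast' coordinates $\hat x=\rho_\circ^{-1}\omega=x/\eps$ and write, for $|\hat x|\gtrsim 1$,
\begin{equation*}
  \eps^{-2}(g_\eps)(\pa_{\hat\mu},\pa_{\hat\nu}) = (\hat g_{\bhm(t),\bha(t)})_{\hat\mu\hat\nu}(\hat x) + \eps^2 g^{(2)}_{\hat\mu\hat\nu}(t,\hat x) + \cO(\eps^3\log\eps),
\end{equation*}
where the absence of an $\eps^1$ term is exactly the content of the improved $\cO(\hat\rho^2)$ (equivalently $\cO(\eps)$ relative to the leading term, but there is no such term) statement; matching against the slow expansion forces $g^{(2)}$ to grow quadratically in $|\hat x|$ and to reproduce the ambient curvature term. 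Here $\hat g_{\bhm,\bha}$ itself expands as $-\dd\hat t^2+\dd\hat x^2 + \frac{2\bhm}{|\hat x|}(\cdots) + \cO(|\hat x|^{-2})$, the $|\hat x|^{-1}$ term carrying the mass.

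Next I would feed this into $\Ric(g_\eps)-\Lambda g_\eps=\cO(\eps^\infty)$. Since $\Ric$ is a second-order quasilinear operator and $\eps^{-2}g_\eps$ is, in fast coordinates $\hat z=(\hat t,\hat x)$, a $t$-dependent family of metrics (with $t=t_0+\eps\hat t$, so that $\pa_{\hat t}$ acting on $t$ produces factors of $\eps$), I would organize $\Ric(\eps^{-2}g_\eps)=\Ric(g_\eps)$ as a formal power series in $\eps$ whose coefficients are expressions in $\hat g_{\bhm(t),\bha(t)}$, $g^{(2)}$, and finitely many $t$-derivatives thereof. The $\eps^0$ term is $\Ric(\hat g_{\bhm(t),\bha(t)})=0$ automatically (Kerr is Ricci-flat for each fixed $t$). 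The $\eps^1$ term is the linearization of $\Ric$ at $\hat g_{\bhm(t),\bha(t)}$ applied to the first $\eps$-correction of $\eps^{-2}g_\eps$; but we have just argued this correction vanishes, so the $\eps^1$ equation is automatically satisfied — \emph{unless} the slow expansion of $g$ forced an $\cO(|x|)=\cO(\eps|\hat x|)$ term, i.e.\ a nonzero acceleration $a(t)$ of $\cC$. Tracking that term: a nonzero $a(t)$ would contribute, in fast coordinates, a term $-2\eps\, a_i(t)\hat x^i\,\dd\hat t^2$ to $\eps^{-2}g_\eps$, which is precisely of order $\eps^1$ and grows linearly in $\hat x$; this is incompatible with the log-smoothness requirement of part~\eqref{ItITransition} that forbids an $\cO(\hat\rho)$ term at $\hat\rho=0$ (equivalently, matched against the near region, there is no $\eps^1$ term). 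Hence $a(t)\equiv 0$ and $\cC$ is a geodesic. I would double-check this by the alternative route of computing the $\cO(|x|)$ piece of $\Ric(g)$ near $\cC$ directly: it is the linearized Einstein operator applied to the would-be linear-in-$x$ perturbation $-2a_ix^i\dd t^2$, which does not vanish, contradicting $\Err_\eps=\cO(\eps^\infty)$ at that order. Either way, the geodesic hypothesis falls out of the order-$\eps$ (equivalently order-$\hat\rho$) analysis.

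For the constancy of $\bhm$ and parallel transport of $\bha$, I would go to the $\eps^2$ equation (equivalently the $\cO(\hat\rho^2)$ term whose coefficient the hypotheses explicitly allow, and which must solve a linearized Einstein equation with source built from $\pa_t\hat g_{\bhm(t),\bha(t)}$ and the ambient curvature). The key mechanism is a matched-asymptotics / balance-law argument: integrating the order-$\eps^2$ linearized constraint over a large coordinate sphere $|\hat x|=R$ in the fast region and letting $R\to\infty$ extracts the variation of the ADM-type conserved quantities of the Kerr metric $\hat g_{\bhm(t),\bha(t)}$ — its mass and angular momentum — as functionals of $t$. More precisely, the linearized second Bianchi identity (contracted against the stationary Killing field $\pa_{\hat t}$ of $\hat g_{\bhm(t),\bha(t)}$, and against the rotational Killing fields when $\bha\ne 0$) yields a divergence identity whose boundary term at $|\hat x|=\infty$ computes $\dot\bhm(t)$ (resp.\ $\dot\bha(t)$ corrected by the connection terms coming from how Fermi coordinates rotate, i.e.\ the covariant derivative of $\bha$ along $\cC$), while the interior/other boundary term must vanish because the near-region data at $\hat\rho=0$ is exactly Kerr with no extra $\cO(\hat\rho)$ hair and because the ambient metric is vacuum ($\Ric(g)=\Lambda g$). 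The upshot is $\dot\bhm=0$ and $\nabla_{\dot\cC}\bha=0$, i.e.\ $\bhm$ constant and $\bha$ parallel along $\cC$.

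The main obstacle I anticipate is making the order-$\eps^2$ balance-law argument rigorous: one must identify exactly which components of the linearized Einstein equation, paired with which Killing fields of the fixed Kerr background, produce the clean conservation identities, and one must control the behavior of the order-$\eps^2$ coefficient $g^{(2)}$ at \emph{both} ends — its growth as $|\hat x|\to\infty$ (matching the ambient curvature term $R_{0i0j}(t,0)\hat x^i\hat x^j$, which is the source that makes the boundary integral nonzero in general and must be shown to integrate to zero against $\pa_{\hat t}$ by the vacuum condition) and its regularity as $|\hat x|\to 0$ near the excised region (where one uses that $g_\eps$ is a genuine metric and that the black hole is subextremal, so the relevant Kerr linearized operator has no obstructing cokernel beyond the expected mass/angular-momentum modes — essentially the linear stability / mode analysis of subextremal Kerr). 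Handling the $\log\eps$ terms is a technical nuisance but does not affect the leading-order identities: logs first enter at $\eps^2\log\eps$, which is lower order than the $\eps^1$ obstruction and, for the $\eps^2$ balance law, contributes only through an explicit term whose coefficient is itself determined by the $\eps^0$ Kerr data and therefore does not spoil the conclusion. I would also note that the same computation, read in the other direction, is what \emph{permits} the construction in Theorem~\ref{ThmI}: it is precisely because $\cC$ is geodesic, $\bhm$ constant, and $\bha$ parallel that the formal series can be continued, so Proposition~\ref{PropINec} is the sharp obstruction-theoretic counterpart of the main theorem.
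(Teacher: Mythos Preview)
Your argument has a genuine gap at the $\eps^1$ level. You write the near-field expansion in the \emph{slow} variable $t$ as $\hat g_{b(t)}(\hat x)+\eps^2 g^{(2)}+\cdots$ and conclude that the $\eps^1$ Einstein equation is automatically satisfied. But Ricci curvature is computed in the \emph{fast} coordinates $(\hat t,\hat x)$ with $t=t_0+\eps\hat t$, and Taylor-expanding $\hat g_{b(t)}$ around $t_0$ produces $\eps\hat t\,\hat g_b'(\dot b(t_0))$ at order $\eps^1$. The $\eps^1$ equation is therefore
\[
  [D_{\hat g_b}\Ric,\hat t]\,\hat g_b'(\dot b) + \wh{D_{\hat g_b}\Ric}(0)\hat h = 0,
\]
where $\hat h$ is the (possibly nonzero) $\eps^1$ adiabatic correction; its leading term as $|\hat x|\to\infty$ is exactly the acceleration piece $-2\Gamma_{j 0 0}(t,0)\tfrac{\hat x^j}{|\hat x|}\,\dd\hat t^2$ forced by matching with $g$ (Lemma~\ref{LemmaGLFermi}). \emph{All three} constraints live in this equation, not at order $\eps^2$. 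Your ``alternative route'' for the geodesic condition also fails: one checks directly that $D_{\hat{\ubar g}}\Ric(a_j\hat x^j\,\dd\hat t^2)=0$ on Minkowski space, so that computation alone cannot detect the acceleration; and if you instead take the $\cO(\hat\rho^2)$ refinement as a hypothesis---which the paper's Proposition~\ref{PropAhNec} does not---your matching argument does give the geodesic condition, but the $\eps^1$ equation above still carries the $\dot\bhm$ and $\dot\bha$ constraints, which you have placed one order too deep.

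The paper (Proposition~\ref{PropAhNec}) pairs the displayed $\eps^1$ equation against the dual pure gauge tensors $h^*=\sfG_{\hat g_b}\delta_{\hat g_b}^*\omega^*$, where $\omega^*$ is a cutoff times one of the asymptotic Killing 1-forms of Definition~\ref{DefAhK} (spatial translations, time translation, rotations). These pairings reduce to boundary terms at $|\hat x|=\infty$ (Lemma~\ref{LemmaBgBdyPair}, Theorem~\ref{ThmAhKCoker}) which evaluate to nonzero multiples of the acceleration, of $\dot\bhm$, and of $\dot\bha$ respectively; their vanishing gives the proposition. Your balance-law intuition---ADM-type charges from a large-sphere limit against asymptotic symmetries---is thus on target, but must be implemented one order earlier, with the deformation tensors $\sfG_{\hat g_b}\delta_{\hat g_b}^*\omega^*\in\ker\wh{D_{\hat g_b}\Ric}(0)^*$ as the test objects rather than a direct contraction with Kerr's exact Killing fields. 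There is also an independent $M_\circ$-side derivation of the geodesic and constant-mass conditions in \S\ref{SsAcGW} (following Gralla--Wald): the $\eps^1$ error restricted to $M_\circ$ has distributional divergence $8\pi(-\bhm\,(\nabla_{\pa_t}\pa_t)^\flat+\bhm'\,\pa_t^\flat)\delta(x)$ supported on $\cC$, whose vanishing forces both conditions.
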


See Proposition~\ref{PropAhNec}, and also \S\ref{SsAcGW}. Thus, in the quasistationary setting, one cannot possibly prove a more general result than Theorem~\ref{ThmI}.

Finally, we remark that $\cO(\hat\rho^2)$ nature of the corrections to $\hat g_{\bhm,\bha}$ in part~\eqref{ItITransition} of Theorem~\ref{ThmI} is optimal in that the nonvanishing of the Riemann curvature tensor of $(M,g)$ at the point $t=t_0$ on the geodesic $\cC$ induces nontrivial $\hat\rho^2$ correction terms at $t=t_0$. Indeed, the quadratic terms of $g$ at $\cC=x^{-1}(0)$ in Fermi normal coordinates are given in terms of components of the Riemann curvature tensor; see Lemma~\ref{LemmaFMc1Metric}.

\subsection{Context and prior work}
\label{SsICx}

The presence of two regimes, as described in parts~\eqref{ItIAway} and \eqref{ItINear} of Theorem~\ref{ThmI}, has for a long time been a prominent feature of studies in the physics literature on the motion of small bodies, whose mass is a small parameter $\eps>0$, in curved spacetimes $(M,g)$ satisfying Einstein's field equations. The starting point was work by Burke \cite{BurkeGravRadiationDamping} who was the first to apply the method of \emph{matched asymptotic expansions} to general relativity.

One key objective in such studies is to determine the motion of the small body: to leading order as $\eps\to 0$, it must move along a geodesic $\cC$ of $(M,g)$ (see \cite{EinsteinInfeldHoffmanGeodesics,ThomasGeodesicHypothesis,TaubGeodesicHypothesis} for early contributions, and \cite{EhlersGerochMotion,GrallaWaldSelfForce} for more recent works), and one is interested in $\cO(\eps)$ and higher order corrections to geodesic motion arising from \emph{gravitational self-force}: the interaction of the small body with the gravitational field which is generated by it and interacts with the ambient spacetime. (The notion of a limit of a family of spacetimes $(M_\eps,g_\eps)$ was studied by Geroch \cite{GerochLimits}.) In the self-force problem, every formula for the correction of geodesic motion is necessarily gauge-dependent. One particular such result is the MiSaTaQuWa equation \cite{MinoSasakiTanakaRadiationReaction,QuinnWaldRadiationReaction} in harmonic (or Lorenz) gauge. We do not obtain new results regarding the problem of gravitational self-force here. Rather, in our construction of $g_\eps$, which does not involve a fixed choice of gauge, we re-center the small black hole in the first few steps of the construction (and could re-center it to all orders in $\eps$ if we so desired). For a detailed introduction and comprehensive literature review on the topic of gravitational self-force, we refer the reader to \cite{PoissonPoundVegaPointParticles,PoundWardellBHPerturbationSelfForce}. The description of extreme mass ratio inspirals (EMRIs) is discussed in the review article \cite{BarackPoundSelfForce}.

The method of matched asymptotic expansions assumes the existence of two expansions of $g_\eps$: a `near-field expansion', which in our notation is a (generalized) Taylor expansion on
\[
  [0,1)_\eps\times I_t\times\R^3_{\hat x},\qquad \hat x=\frac{x}{\eps},
\]
at $\eps=0$, with coefficients that are regular in the rescaled spatial coordinates and adiabatic (i.e.\ they only depend on the `slow' time variable $t$ of the ambient spacetime); and a `far-field expansion', i.e.\ an expansion on $[0,1)_\eps\times(M\setminus\cC)$. The two expansions are matched in an intermediate (or buffer) region (where $|\hat x|\gg\eps^{-1}$ but $|x|\ll 1$, e.g.\ where $|x|\sim\eps^{1/2}$), which produces boundary conditions for the terms in each expansion. This method was subsequently applied by D'Eath \cite{DEathSmallBHDynamics} to study corrections to the parameters of a small Kerr black hole in the same setting that we study here; in particular, he already argued for the validity of Proposition~\ref{PropINec}. D'Eath's work was extended by Kates \cite{KatesSmallBodyMotion} and Thorne--Hartle \cite{ThorneHartleSmallBHMotion} who studied general small bodies and computed leading order corrections for mass and spin.

Gralla and Wald \cite{GrallaWaldSelfForce} introduced a clean perspective on the self-force problem which discards the intermediate region in favor of a joint smoothness requirement of the metric coefficients which corresponds \emph{exactly} to smoothness in the coordinates~\eqref{EqICoordTransition} here. In this paper, we shall relate this requirement to smoothness on a manifold with corners which encodes the parameter $\eps$ and all spacetime manifolds $M_\eps$, $\eps\in(0,1)$, in a single object, denoted $\wt M$ below; see~\S\ref{SsIPf}. (We will also revisit their argument for the necessity that $\cC$ is a geodesic; see~\S\ref{SsAcGW}.)

The following is a list of further novel features of our approach and result which are related to the above physics literature.

\begin{itemize}
\item We demonstrate that families of spacetimes satisfying the \emph{assumptions} made in the aforementioned works indeed \emph{exist}---at least on the level of formal solutions as in~\eqref{EqIErr} (which is sufficient for the arguments in all of those works to go through unchanged, as they only require the validity of the field equations up to $\cO(\eps^N)$ errors, with $N\leq\infty$ depending on the problem under study).
\item We produce the metrics $g_\eps$ in a constructive manner, proceeding in an order-by-order fashion where correction terms are computed as solutions of the linearized field equations with sources, in turn at the original manifold $M$ (but with singular boundary conditions at $\cC$) and on the small Kerr black hole spacetime (with asymptotic boundary conditions at spacelike infinity).
\item Unlike previous works, our method \emph{does not involve any fixed choice of gauge}. Instead, the terms of the (generalized) Taylor expansion of $g_\eps$ at the interface $\hat\rho=\rho_\circ=0$ between the near- and far-field regimes are constructed in an essentially gauge-free manner, and only the solutions of linearized field equations with `trivial' forcing terms (which are essentially supported entirely in either the far-field or near-field regime) involve (rather arbitrary) gauge choices which can be chosen at each step of the construction individually. The main point is that one can solve the sourced linearized field equations $D_g(\Ein+\Lambda)h=f$ (under a genericity assumption on $g$) whenever $f$ is divergence-free; and this condition on $f$ arises naturally from the second Bianchi identity in the iterative construction (where $f$ is equal to $(\Ein+\Lambda)(g+\textrm{[correction terms]})$) of $g_\eps$.
\item The construction in Taylor series in the near-field regime requires a \emph{modulation} of the Kerr black hole parameters to compensate for the failure of solvability for the linearized Einstein vacuum equations around the Kerr solution with (divergence-free) sources when restricting to spaces of stationary tensors.
\end{itemize}
See~\S\ref{SsIPf} for further details.

\medskip

From a mathematical perspective, we have two main goals in the present paper:
\begin{enumerate}
\item\label{ItIMBH} contribute to the theory of many-black-hole spacetimes;
\item\label{ItIGluing} study gluing problems for the Einstein equations.
\end{enumerate}

Regarding~\eqref{ItIMBH}, we recall that the only known explicit solutions of Einstein's field equations describing several black holes are the asymptotically flat Majumdar--Papapetrou \cite{MajumdarSolution,PapapetrouSolution} solutions of the Einstein--Maxwell equations and the related Kastor--Tra\-schen solutions \cite{KastorTraschenManyBH} with positive cosmological constant. In previous work \cite{HintzGluedS}, the author constructed de~Sitter spacetimes in which exact Kerr--de~Sitter black holes are glued into neighborhoods of points at the future conformal boundary. Furthermore, Chru\'sciel--Mazzeo \cite{ChruscielMazzeoManyBH} showed that certain classes of asymptotically flat many-black-hole initial data (constructed in \cite{ChruscielDelayMapping}) evolve into spacetimes with the property that for many asymptotically hyperboloidal slicings of the spacetime (up to some finite retarded time) the apparent horizon has several connected components.

More is known regarding initial data. Recall here that the initial data of a Lorentzian metric $g$ on a $(3+1)$-dimensional spacetime $(M,g)$ at a spacelike hypersurface $X$ are the first and second fundamental form of $X$, respectively; we denote them
\[
  \gamma, k \in \CI(X;S^2 T^*X).
\]
When $(M,g)$ solves the Einstein vacuum equations~\eqref{EqIEin}, the pair $(\gamma,k)$ is a solution of the \emph{constraint equations}
\begin{equation}
\label{EqIConstraints}
  R_\gamma - |k|_\gamma^2 + (\tr_\gamma k)^2 - 2\Lambda = 0,\qquad
  \delta_\gamma k + \dd\tr_\gamma k = 0,
\end{equation}
where $R_\gamma$ is the scalar curvature of $\gamma$, and $(\delta_\gamma k)_\mu=-k_{\mu\nu;}{}^\nu$ is the (negative) divergence operator. Conversely, every solution of the constraint equations gives rise to a unique (up to isometries) maximal globally hyperbolic spacetime attaining $\gamma,k$ as its initial data \cite{ChoquetBruhatLocalEinstein,ChoquetBruhatGerochMGHD,RingstromBook}. Brill and Lindquist \cite{BrillLindquist,LindquistIVP} as well as Misner \cite{MisnerGeometrostatics} constructed explicit (and rigid) time-symmetric solutions of~\eqref{EqIConstraints} (i.e.\ $k=0$) describing multiple black holes in the sense that the initial data contain multiple minimal 2-spheres. Corvino \cite{CorvinoScalar} introduced a flexible gluing method, based on the underdetermined elliptic nature of (the linearization of) the constraint equations. This has found many applications \cite{ChruscielDelayMapping,CarlottoSchoenData,AndersonCorvinoPasqualottoMultiLoc}; see also the review article \cite{CarlottoConstraints}.

The paper \cite{HintzGlueID}, which also uses Corvino's technique \cite{CorvinoScalar}, constructs initial data by gluing in any asymptotically flat data set (satisfying~\eqref{EqIConstraints} with $\Lambda=0$ on the complement of a compact subset of $\R^3$, and with $\gamma$ tending to the Euclidean metric and $k$ to $0$ at infinity) into the neighborhood of a point in a given data set, much like Theorem~\ref{ThmI} but in the elliptic setting of~\eqref{EqIConstraints} instead of in the hyperbolic setting of~\eqref{EqIEin}. Our remarks following Theorem~\ref{ThmI} above imply that the evolution of the glued initial data $(\gamma_\eps,k_\eps)$ constructed in \cite{HintzGlueID} \emph{cannot} be adiabatic \emph{unless} one glues in Kerr initial data; and even if the data $(\hat\gamma,\hat k)$ in \cite[Theorem~1.1]{HintzGlueID} are those of a subextremal Kerr black hole, the requirement in the present paper that the spacetime metric be adiabatic \emph{to all orders in $\eps$} imposes requirements on $(\gamma_\eps,k_\eps)$ \emph{at all orders}.\footnote{This is the reason why the conjecture in \cite[\S1.4]{HintzGlueID}---which is our motivation for Theorem~\ref{ThmI} and Conjecture~\ref{ConjITrue}---required the family $(\gamma_\eps,k_\eps)$ to be a `suitable family'.} Conversely, the initial data of $g_\eps$ at $X\cap M_\eps$ describe a Kerr initial data set glued into the initial data of $(M,g)$ at $X$. In this sense, Theorem~\ref{ThmI} reproves the formal result \cite[Proposition~5.4]{HintzGlueID} in the special case of Kerr data.\footnote{However, our proof of Theorem~\ref{ThmI} relies on some of the results proved in \cite{HintzGlueID}, specifically the solvability theory for the linearized constraints on $X$ with control on supports, see Proposition~\ref{PropAcConstr} and \cite{ChruscielDelayMapping}. Part~\eqref{ItIFormalCauchy} of Theorem~\ref{ThmI} relies on the nonlinear analysis in~\cite{HintzGlueID}. We further remark that the present construction directly produces log-smooth total families; this was not the case with the construction in \cite{HintzGlueID}, although the latter can be modified to give log-smooth total families, as demonstrated in \cite{HintzUnDet}.}

Initial data gluing has recently been developed for the \emph{characteristic} initial value problem of~\eqref{EqIEin}, with initial data given on null hypersurfaces \cite{AretakisCzimekRodnianskiGluing,CzimekRodnianskiGluing,ChruscielCongCharGluing}. Recent works by Kehle--Unger \cite{KehleUngerThirdLaw,KehleUngerHorizonGluing} demonstrate the effectiveness of characteristic gluing techniques for gluing event horizons of different spacetimes (Minkowski and black hole spacetimes).

\medskip

Concerning point~\eqref{ItIGluing} above, the present paper appears to be the first work on spacetime gluing for the Einstein vacuum equations which is not chiefly based on some version of initial data gluing (i.e.\ gluing for the constraint equations).\footnote{The key step in the construction of \cite{HintzGluedS} is the solution of a linear divergence equation (relative to a Riemannian metric) related to the constraint equations on the conformal boundary of a de~Sitter type spacetime \cite{FriedrichDeSitterPastSimple}. The only point where the construction in \cite{HintzGluedS} involves a hyperbolic PDE is the solution of a (quasilinear) gauge-fixed Einstein equation in a final step; this step is straightforward however, since the error term solved away there is supported away from the Kerr--de~Sitter black holes which are glued in, and thus the solution of the PDE has support only in an asymptotically de~Sitter type region, far from the glued-in black holes.} We mention however Yang's work \cite{YangGeodesicHypo} (building on Stuart's earlier \cite{StuartGeodesicsEinstein}) on the construction of (true, not merely formal) solutions in a toy model describing the motion of a very small amplitude and $\eps$-rescaled stable nonlinear Klein--Gordon soliton which is glued along a timelike geodesic in a given spacetime $(M,g)$ as a solution of the Einstein--scalar field system, with the scalar field potential scaled in a way that matches the scaling of the amplitude of the soliton; in \cite{YangGeodesicHypo}, the singularly perturbed spacetime metric is $\cC^1$-close to $g$, even near the geodesic.

The literature on gluing or singular perturbation methods for other hyperbolic evolution equations has largely been concerned with semilinear PDE. Results include the existence of multi-soliton solutions for the nonlinear Schr\"odinger equation \cite{MerleCritNLSBlowup,MartelMerleMultiSolitonNLS} and for generalized Korteweg--de Vries equations \cite{MartelGenKdVMultiSoliton}; the proofs evolve approximate solutions backwards in time and use compactness arguments relying on uniform estimates to extract the desired solutions. This strategy was extended to multi-soliton constructions involving exponentially unstable solitons in \cite{CoteMartelMerlegKdVNLSMultiSoliton,MartelMerleNLW5MultiSoliton,JendrejTwoBubble,JendrejMartelNLWMultiBubble}. Further gluing, multi-soliton, or multi-bubble results include \cite{DaviladelPinoMussoWeiVortex,DaviladelPinoMussoWeiHelices} for the Euler equations, \cite{CoteMunozNLKGMultiSoliton,BellazziniGhimentiLeCozNLKGMultiSoliton,CoteMartelNLKGTravelling} for the nonlinear Klein--Gordon equation, and \cite{MingRoussetTzvetkovWaterWavesMulti} for the water waves system.

\subsection{Elements of the proof}
\label{SsIPf}

As in \cite{HintzGlueID} and related works on gluing problems in the elliptic category such as \cite{SchroersSingerInstantons,KottkeSingerMonopoles}, we adopt a geometric singular analysis perspective and phrase Theorem~\ref{ThmI} as a singular perturbation problem. We construct the family of metrics $\wt g:=(g_\eps)_{\eps\in(0,1)}$ as a single smooth section of the bundle of vertical symmetric 2-tensors on $(0,1)_\eps\times M$ (i.e.\ they annihilate $\pa_\eps$) which is log-smooth on an appropriate (partial) compactification $\wt M$ of $(0,1)_\eps\times M$.

In this introduction, we shall largely work in Fermi normal coordinates $t\in I$, $x\in\R^3$ near $\cC$; this is sufficient to describe all aspects of the analysis except for the very far field behavior of $g_\eps$. We denote by
\begin{equation}
\label{EqIPfhatx}
  \hat x = \frac{x}{\eps}
\end{equation}
the rescaled (`fast') spatial coordinates near the small Kerr black hole. Then $\wt M$ should contain the `far field' $[0,1)_\eps\times(M\setminus\cC)$ (containing the chart $[0,1)_\eps\times I_t\times(\R^3_x\setminus\{0\})$) and the `near field' $[0,1)_\eps\times I_t\times\R^3_{\hat x}$ as smooth submanifolds. We may glue these two charts together over $\eps>0$, $x\neq 0$; but their respective boundary hypersurfaces at $\eps=0$ are disjoint, and there are curves which remain bounded in $M$ and along which $\eps\to 0$ but which do not have a limit.\footnote{A simple example is $(\eps,t,x)=(\eps,t_0,\eps^{1/2})$ where $\eps\in(0,1)$ while $t_0\in I$ is fixed. In terms of~\eqref{EqIPfhatx}, this is $(\eps,t,\hat x)=(\eps,t_0,\eps^{-1/2})$.} To remedy this failure of compactness, we include in $\wt M$ also a coordinate chart
\begin{equation}
\label{EqIPfChart}
  I_t \times [0,1)_{\rho_\circ} \times [0,1)_{\hat\rho} \times \Sph^2,
\end{equation}
glued together with the previous two charts via
\begin{equation}
\label{EqIPfCoordCorner}
  t,\qquad
  \rho_\circ=\frac{\eps}{|x|}=\frac{1}{|\hat x|},\qquad
  \hat\rho=|x|=\eps|\hat x|,\qquad
  \omega=\frac{x}{|x|},
\end{equation}
on the common domains of definition. Invariantly, $\wt M$ is the \emph{blow-up} of $\wt M':=[0,1)_\eps\times M$ at $\{0\}\times\cC$, denoted $\wt M=[\wt M';\{0\}\times\cC]$; see \cite{MelroseDiffOnMwc}. Thus, $\wt M$ has two boundary hypersurfaces:
\begin{enumerate}
\item the \emph{front face} $\hat M=I_t\times\ol{\R^3_{\hat x}}$, where $\ol{\R^3_{\hat x}}=\R^3\sqcup\Sph^2$ is the radial compactification of $\R^3_{\hat x}$ in which $\Sph^2=\{|\hat x|^{-1}=0\}$ is attached as the sphere at infinity;
\item the lift of the original spacetime $M_\circ=I_t\times[0,\infty)_{\hat\rho}\times\Sph^2_\omega$, which is obtained from $M$ by replacing the curve $\cC$ with its spherical normal bundle (which can be thought of as an infinitesimal tube $I_t\times\{0\}\times\Sph^2_\omega$ around $\cC$). There is a smooth \emph{blow-down map}
  \[
    \upbeta_\circ\colon M_\circ\to M,\qquad \upbeta_\circ(t,\hat\rho,\omega) \mapsto (t,\hat\rho\omega) \in M,
  \]
  which over $\hat\rho>0$ is a diffeomorphism $\{\hat\rho>0\}\to M\setminus\cC$; the preimage of $\cC$ is a bundle of 2-spheres over $\cC$.
\end{enumerate}
The front face $\hat M$ is the total space of a fibration $\hat M\to\cC\cong I_t$. Its fibers $\hat M_t$, $t\in I$, are copies of $\ol{\R^3}$; these should be thought of as compactifications of the quotients of a local stationary spacetime manifold (namely, $T_p M\cong\R^4$ at $p\in\cC$) by the time translation action (which is the translation action by $T_p\cC\cong\R\times\{0\}$). Thus, $\hat M$ accurately captures adiabatic behavior. See Figure~\ref{FigI2}.

\begin{figure}[!ht]
\centering
\includegraphics{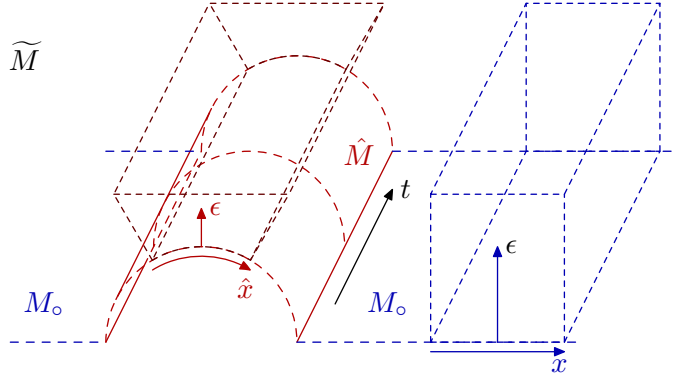}
\caption{Illustration of the total space $\wt M$. Shown are: the portion of the near field $[0,1)_\eps\times I_t\times\R^3_{\hat x}$ where $|\hat x|\leq 1$, the boundary hypersurface $\hat M$, and three fibers of $\hat M$ (in red); a portion of the far field $[0,1)_\eps\times I_t\times(\R^3_x\setminus\{0\})$ where $|x|\geq 1$, and the boundary hypersurface $M_\circ$ (in blue); and some local coordinates.}
\label{FigI2}
\end{figure}

One may hope to insert solutions of a PDE on $\R^3_{\hat x}$ (e.g.\ the Kerr metric restricted to $\hat t=0$) into the fibers of $\hat M$ which at $|\hat x|=\infty$ in the fiber over $p\in\cC$ match with a given solution on $M$ at $p$ (e.g.\ the metric $g$, which at $p$ is the Minkowski metric in Fermi normal coordinates), and to subsequently correct the resulting `zeroth order gluing' by higher order (i.e.\ vanishing at $M_\circ$ and $\hat M$) corrections. Carefully note now that $\wt M$ does \emph{not} contain an intermediate (or buffer) region; thus, to find such corrections, one must solve away error terms at $M_\circ$, resp.\ $\hat M$ with full asymptotic control at the other boundary hypersurface, i.e.\ at the boundary $\pa M_\circ=M_\circ\cap\hat M=\pa\hat M$, with the error terms, and thus also the solutions, typically featuring singular behavior (e.g.\ terms involving $\hat\rho^k(\log\hat\rho)^m$) at the boundary.

We first illustrate how to carry out such a procedure in a linear toy model in~\S\ref{SssIPfM} before turning to the setting of Theorem~\ref{ThmI} in~\S\S\ref{SssIPfB}--\ref{SssIPfN}.

\subsubsection{A model hyperbolic singular perturbation problem}
\label{SssIPfM}

We work on the Minkowski spacetime $(M,g)=(\R\times\R^3,-\dd t^2+\dd x^2)$. Denote by $u=u(t,x)$ a smooth solution of the linear scalar wave equation
\[
  \Box_g u = (-D_t^2+\Delta_x)u = \biggl(\pa_t^2-\sum_{j=1}^3 \pa_j^2\biggr)u=0,\qquad D = \frac{1}{i}\partial.
\]
We define a singular perturbation of $\Box_g$ by\footnote{The assumptions on $V$ here are made for the sake of maximal simplicity. One can treat potentials with smooth dependence on $t$ with only notational modifications, and similarly also potentials with inverse cubic (or faster) decay. Many potentials with inverse quadratic decay can be handled as well with more substantial modifications, including altered exponents in polyhomogeneous expansions.}
\[
  P_\eps := \Box_g + \eps^{-2}V\Bigl(\frac{x}{\eps}\Bigr),\qquad 0\leq V\in\CIc(\R^3_{\hat x}),\quad \eps\in(0,1).
\]
We wish to perturb $u$ to a formal solution $u_\eps$ of the equation $P_\eps u_\eps=0$; that is, we want to find a log-smooth function $\wt u$ on $\wt M=[[0,1)\times M;\{0\}\times\cC]$, where $\cC=\R_t\times\{0\}\subset M$, so that
\begin{subequations}
\begin{equation}
\label{EqIPfMwtu}
  \wt u|_{M_\circ} = \upbeta_\circ^*u\qquad (\text{i.e.}\ \wt u|_{(\eps,t,x)=(0,t,x)}=u(t,x),\ x\neq 0),\qquad
  |P\wt u| \leq C_N\eps^N\ \forall\,N,
\end{equation}
where $\wt u$, resp.\ $P\wt u$ is defined on an $\eps$-level set by $u_\eps$, resp.\ $P_\eps u_\eps$.

The first task is to determine the local limit of $\wt u$ at $\hat M$ to be glued in. This task is trivial in the black hole gluing problem, where we wish to glue in the known Kerr metric; by contrast, in the current linear model problem, we artificially perturbed the partial differential operator from $\Box_g$ to $P_\eps$, and thus must find the stationary solution of the local model PDE on the fiber $\hat M_t$ of $\hat M$ over $(t,0)\in\cC$ which matches $u(t,0)$ (i.e.\ a constant for each fixed value of $t$) at $|\hat x|=\infty$. In the coordinates $t$ and $\hat x=\frac{x}{\eps}$, we have
\[
  \eps^2 P_\eps = (\eps\pa_t)^2 + \bigl(\Delta_{\hat x}+V(\hat x)\bigr),
\]
which acts on smooth functions of $\eps,t,\hat x$ (i.e.\ they only depend on the \emph{slow} time $t$) as $\hat P(0):=\Delta_{\hat x}+V(\hat x)$ at $\eps=0$. We then use:\footnote{A clean proof uses b-analytic techniques: the operator $\Delta+V\colon\Hb^{2,\alpha}\to\Hb^{0,\alpha+2}$, for $\alpha\in(-\frac32,-\frac12)$, is a compact perturbation of the invertible operator $\Delta\colon\Hb^{2,\alpha}\to\Hb^{0,\alpha+2}$, where $\Hb^{k,\beta}$ consists of all functions on $\R^3$ which lie in $\la x\ra^{-\beta}L^2(\R^3)$ upon application of up to $k$ of the vector fields $\pa_j$, $x^j\pa_\ell$. Since $V\geq 0$, the nullspace of $\Delta+V$ is trivial. Thus, $\hat u_{(0)}=1-(\Delta+V)^{-1}V$, and the stated regularity then follows in view of $\Delta\hat u_{(0)}=-V\hat u_{(0)}$ from elementary properties of $\Delta$. Alternatively, one can write $\hat u_{(0)}=1+v$ and find $v$ solving $(I+\Delta^{-1}V)v=-\Delta^{-1}V$ where $\Delta^{-1}=\frac{1}{4\pi|\cdot|}*$; the operator $I+\Delta^{-1}V$ is compact on $\la x\ra^\beta L^2$ for $\beta>\frac12$, and thus the Fredholm alternative applies and implies its invertibility when in addition $\beta<\frac32$.}

\begin{lemma}[Stationary solution]
\label{LemmaIPfM}
  There exists a unique solution $\hat u_{(0)}\in\CI(\R^3_{\hat x})$ of
  \[
    \hat P(0)\hat u_{(0)}=(\Delta+V)\hat u_{(0)}=0,\qquad
    \hat u_{(0)}\to 1\ \ \text{as}\ \ |\hat x|\to\infty.
  \]
  Moreover, $\hat u_{(0)}\in\CI(\ol{\R^3_{\hat x}})$, i.e.\ in $|\hat x|>0$, $\hat u_{(0)}$ is a smooth function of $|\hat x|^{-1}$ and $\frac{\hat x}{|\hat x|}$.
\end{lemma}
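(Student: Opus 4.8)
The plan is to prove existence and uniqueness separately, then upgrade the regularity from $\CI(\R^3)$ to $\CI(\ol{\R^3})$ by a rescaling/inversion argument.

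\textbf{Uniqueness.} First I would show that the difference $w$ of two solutions lies in $\CI(\R^3)$, solves $(\Delta+V)w=0$, and tends to $0$ at infinity. Since $V\geq 0$ and $V$ is compactly supported, the standard energy identity $\int_{B_R}|\nabla w|^2 + \int_{B_R} V w^2 = \int_{\pa B_R} w\,\pa_r w$ applies; as $|w|\to 0$ at infinity one checks (using that $w$ is harmonic outside a compact set, hence $w=O(|x|^{-1})$ with $\pa_r w = O(|x|^{-2})$ by a multipole expansion of the exterior harmonic function) that the boundary term vanishes as $R\to\infty$. This forces $\nabla w\equiv 0$, hence $w$ constant, hence $w\equiv 0$.

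\textbf{Existence.} I would make the ansatz $\hat u_{(0)}=1+v$, reducing the problem to finding $v$ with $(\Delta+V)v=-V$ and $v\to 0$ at infinity. Writing $\Delta^{-1}$ for convolution with the Newtonian potential $-\frac{1}{4\pi|\cdot|}$ (so $\Delta\Delta^{-1}=\Id$), this becomes the integral equation $(\Id+\Delta^{-1}V)v=-\Delta^{-1}V$. On the weighted space $\la x\ra^\beta L^2(\R^3)$ with $\beta\in(\tfrac12,\tfrac32)$ the operator $\Delta^{-1}V$ is compact (it factors through multiplication by the compactly supported $V$ followed by the mildly smoothing, appropriately decaying $\Delta^{-1}$), so the Fredholm alternative applies: invertibility follows once the nullspace is trivial, and that is exactly the uniqueness statement just proved (any nullvector $v_0$ gives a solution $1+v_0$ of $(\Delta+V)(\cdot)=0$ decaying to $0$, which... more precisely, $v_0$ itself satisfies $(\Delta+V)v_0=0$ and $v_0\to 0$, so $v_0\equiv 0$). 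Elliptic regularity for $\Delta$ then gives $v\in\CI(\R^3)$, and since $\Delta v = -V(1+v)$ vanishes outside a compact set, $v$ is harmonic there, hence bounded and actually $v=O(|x|^{-1})$ with $v\to 0$; so $\hat u_{(0)}=1+v\to 1$ as desired. Alternatively one can run the b-analytic argument sketched in the footnote, treating $\Delta+V$ as a compact perturbation of the invertible $\Delta\colon\Hb^{2,\alpha}\to\Hb^{0,\alpha+2}$ for $\alpha\in(-\tfrac32,-\tfrac12)$.

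\textbf{Regularity at infinity.} The remaining claim is that $\hat u_{(0)}\in\CI(\ol{\R^3_{\hat x}})$, i.e.\ that $\hat u_{(0)}$ is smooth as a function of the inverted coordinates $r^{-1}=|\hat x|^{-1}$ and $\omega=\hat x/|\hat x|$ near $r^{-1}=0$. Since $V$ is compactly supported, $\hat u_{(0)}$ is harmonic for $|\hat x|$ large, and $w:=\hat u_{(0)}-1$ is a decaying harmonic function on an exterior region. By the Kelvin transform, $v(\hat x):=|\hat x|^{-1}w(\hat x/|\hat x|^2)$ extends harmonically across the origin in the inverted picture, hence is smooth there; unwinding, $w$, and therefore $\hat u_{(0)}$, is a smooth function of $(r^{-1},\omega)$ near $r^{-1}=0$, which is precisely membership in $\CI(\ol{\R^3})$. (In b-language: $w$ lies in the nullspace of $\Delta$ acting on b-Sobolev spaces on the exterior region, and elliptic b-regularity together with the indicial analysis of $\Delta$ at infinity—whose indicial roots are the integers—shows $w$ has a polyhomogeneous expansion $\sum_{k\geq 1} r^{-k}(\text{spherical harmonics})$ with no logarithmic terms, i.e.\ conormality plus integrality of the indicial roots upgrades to smoothness in $r^{-1}$.)

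\textbf{Main obstacle.} The substantive point is the regularity-at-infinity statement, i.e.\ promoting the a priori qualitative decay $\hat u_{(0)}\to 1$ to genuine smoothness in $|\hat x|^{-1}$; the existence and uniqueness on $\R^3$ are soft once one has the right compactness and the sign $V\geq 0$. The cleanest route is the Kelvin-transform observation above, which instantly converts "decaying harmonic at infinity" into "harmonic near a point", whence smoothness is automatic; the b-analytic route achieves the same but requires knowing that the indicial roots of $\Delta$ at the radial compactification are non-resonant enough to exclude logarithms, which here holds because the potential is exactly (not just approximately) zero near infinity.
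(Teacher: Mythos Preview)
Your proof is correct and essentially matches the paper's (footnote) argument: the paper sketches the same ansatz $\hat u_{(0)}=1+v$ with the integral equation $(I+\Delta^{-1}V)v=-\Delta^{-1}V$ and Fredholm alternative on $\la x\ra^\beta L^2$ for $\beta\in(\tfrac12,\tfrac32)$, and also offers the b-analytic route you mention. Your Kelvin transform argument is a clean way to make explicit what the paper calls ``elementary properties of $\Delta$'' for the regularity at infinity.
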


We now supplement the requirements~\eqref{EqIPfMwtu} by
\begin{equation}
\label{EqIPfMwtu2}
  \wt u|_{\hat M_t} = u(t,0)\hat u_{(0)}\qquad (\text{i.e.}\ \wt u|_{(\eps,t,\hat x)=(0,t,\hat x)} = u(t,0)\hat u_{(0)}(\hat x),\ t\in\R,\ \hat x\in\R^3).
\end{equation}
\end{subequations}
The requirements on $\wt u$ at $M_\circ$ and $\hat M$ in~\eqref{EqIPfMwtu} and \eqref{EqIPfMwtu2} are consistent at the corner $M_\circ\cap\hat M$.

\begin{rmk}[$P$ in terms of fast variables]
\label{RmkIPfPFast}
  If we fix $t_0\in\R$ and set $\hat t=\frac{t-t_0}{\eps}$, then
  \[
    \eps^2 P_\eps = -D_{\hat t}^2 + \Delta_{\hat x} + V(\hat x)
  \]
  is a wave operator; $\hat P(0)$ is its spectral family at frequency $0$.
\end{rmk}

\begin{rmk}[Other toy models]
\label{RmkIPfOther}
  A related singular perturbation problem which one can study using the procedure described below is $\Box_{g_\eps}u_\eps=0$ where $g_\eps=(g_\eps)_{\mu\nu}\,\dd z^\mu\,\dd z^\nu$, $z=(t,x)$, is a singular perturbation of the Minkowski metric $g$ in that $(g_\eps)_{\mu\nu}\in\CI(\wt M)$ restricts to $g_{\mu\nu}$ at $M_\circ$ and to $\hat g_{\mu\nu}$ at $\hat M$ where $\hat g$ is a stationary and asymptotically (as $|\hat x|\to\infty$) Minkowski metric for which $\pa_{\hat t}$ and $\dd\hat t$ are timelike (one may also allow $\hat g$ to depend smoothly on the parameter $t$). One demands that $\wt u=(u_\eps)_{\eps\in(0,1)}$ restricts to $u$ at $M_\circ$ and to the constant $u(t,0)$ at $\hat M_t$. One can also study singular perturbations in nonlinear settings, such as $\Box_{g_\eps}u_\eps=u_\eps^2$ when one is given $u$ satisfying $\Box_g u=u^2$. We leave the details to the interested reader.
\end{rmk}

\textbf{Step 1. Naive gluing.} Let $\wt u_0\in\CI(\wt M)$ be any function which is equal to $\upbeta_\circ^*u$ at $M_\circ$ and equal to $u(t,0)\hat u_{(0)}(\hat x)$ on $\hat M$.\footnote{In the coordinates~\eqref{EqIPfCoordCorner}, we can for example take $\wt u_0(t,\rho_\circ,\hat\rho,\omega)=u(t,\hat\rho\omega)+u(t,0)\hat u_{(0)}(\rho_\circ^{-1}\omega)-u(t,0)$; another possibility is to set $\wt u_0(\eps,t,x)=u(t,x)\hat u_{(0)}(x/\eps)$. We stress again that this is a \emph{sharp} gluing of $\hat u_{(0)}=\hat u_{(0)}(\hat x)$ and $u=u(t,x)$ in that there is no transition region, and we do not use any cutoffs like $\chi(|x|/\sqrt\eps)$ for transitioning between the near- and far-field regimes.} Since the restrictions of $\wt u_0$ to both boundary hypersurfaces $\eps^{-1}(0)=\hat M\cup M_\circ$ of $\wt M$ satisfy the desired PDE, the error term
\[
  \Err_0 := P\wt u_0
\]
vanishes to leading order at $\hat M\cup M_\circ$. More precisely, let
\[
  \hat\rho=(\eps^2+|x|^2)^{1/2},\qquad
  \rho_\circ=\frac{\eps}{(\eps^2+|x|^2)^{1/2}}
\]
denote defining functions of $\hat M$ and $M_\circ$: they vanish only at $\hat M$, resp.\ $M_\circ$, with nonvanishing differentials there. Then:

\begin{lemma}[Mapping properties]
\label{LemmaIPfMap}
  Let $k,\ell\in\R$. Then $P\colon\rho_\circ^k\hat\rho^\ell\CI(\wt M)\to\rho_\circ^k\hat\rho^{\ell-2}\CI(\wt M)$. Furthermore, if $\wt u\in\rho_\circ^k\CI(\wt M)$, then $(\eps^2 P\wt u)|_{\hat M}=\hat P(0)(\wt u|_{\hat M})$; and if $\wt u\in\hat\rho^\ell\CI(\wt M)$, then $(P\wt u)|_{M_\circ}=\Box_g(\wt u|_{M_\circ})$.
\end{lemma}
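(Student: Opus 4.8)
The plan is to reduce all three assertions to the single structural fact that
\[
  P=P_\eps=\hat\rho^{-2}Q,\qquad Q\in\Diffb^2(\wt M),
\]
i.e.\ that $Q$ is a second order $\bop$-differential operator on $\wt M$, generated by vector fields tangent to the two boundary hypersurfaces $\hat M$ and $M_\circ$ with coefficients in $\CI(\wt M)$. Granting this, assertion~1 is immediate: $\Diffb^1(\wt M)$ and multiplication by $\CI(\wt M)$ preserve every weighted space $\rho_\circ^k\hat\rho^\ell\CI(\wt M)$, so $Q$ maps $\rho_\circ^k\hat\rho^\ell\CI(\wt M)$ into itself and $P=\hat\rho^{-2}Q$ maps it into $\rho_\circ^k\hat\rho^{\ell-2}\CI(\wt M)$. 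The remaining two assertions then amount to computing the restrictions ($\bop$-normal operators) of $\eps^2 P=\rho_\circ^2 Q$ at $\hat M$ and of $P$ at $M_\circ$, using $\eps=\rho_\circ\hat\rho$. (I use the global defining functions $\hat\rho=(\eps^2+|x|^2)^{1/2}$ and $\rho_\circ=\eps(\eps^2+|x|^2)^{-1/2}$, which near the corner $\hat M\cap M_\circ$ differ only by smooth positive factors from the chart functions $|x|$, $\eps/|x|$ of~\eqref{EqIPfCoordCorner}.)

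To establish the factorization I would check it in the three coordinate charts covering $\wt M$. First, the potential is harmless: in near-field coordinates $\eps^{-2}V(x/\eps)=\eps^{-2}V(\hat x)=\hat\rho^{-2}V_0$ with $V_0=(1+|\hat x|^2)V(\hat x)$ smooth and compactly supported in $\hat x$; since $\supp V\subset\{|\hat x|\le R\}$, $V_0$ is supported away from $M_\circ$ and from $\hat M\cap M_\circ$, hence extends by zero to $V_0\in\CI(\wt M)$. It thus remains to show $\Box_g=\hat\rho^{-2}Q_0$ with $Q_0:=\hat\rho^2\Box_g\in\Diffb^2(\wt M)$, and set $Q=Q_0+V_0$. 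Away from $\hat M\cup M_\circ$ this is trivial ($\hat\rho$ bounded below, $\pa_t,\pa_{x^j}$ smooth and tangent to $\{\eps=0\}$). Near a point of $\hat M$ with $\hat x$ finite, use $\Box_g=\pa_t^2+\eps^{-2}\Delta_{\hat x}$ and $\hat\rho^2=\eps^2(1+|\hat x|^2)$, so $\hat\rho^2\Box_g=\hat\rho^2\pa_t^2+(1+|\hat x|^2)\Delta_{\hat x}$, a $\bop$-operator since $\pa_t,\pa_{\hat x^j}$ are smooth on $\wt M$ and tangent to $\hat M=\{\eps=0\}$ (no other boundary face is nearby). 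Near the corner $V\equiv 0$ and $\Box_g=\pa_t^2+\Delta_x$; writing $\Delta_x=-\pa_r^2-\tfrac2r\pa_r+r^{-2}\Delta_{\Sph^2}$ with $r=|x|$, one has $r^2\Delta_x=-(r\pa_r)^2-r\pa_r+\Delta_{\Sph^2}$, and the key point is that the radial field $r\pa_r$, taken at fixed $\eps$, equals $\hat\rho\pa_{\hat\rho}-\rho_\circ\pa_{\rho_\circ}$ in the corner chart (moving $r=\hat\rho$ at fixed $\eps$ changes $\rho_\circ=\eps/\hat\rho$). Hence $r\pa_r$, and likewise $\Delta_{\Sph^2}$ and $\pa_t$, are $\bop$-vector fields on $\wt M$, so $\hat\rho^2\Box_g=\hat\rho^2\pa_t^2-(r\pa_r)^2-r\pa_r+\Delta_{\Sph^2}\in\Diffb^2(\wt M)$. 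This proves the factorization and assertion~1.

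For assertion~2, $\eps^2 P=\rho_\circ^2 Q$, so for $\wt u\in\rho_\circ^k\CI(\wt M)$ one has $\eps^2 P\wt u\in\rho_\circ^{k+2}\CI(\wt M)$, and we compute its restriction to $\hat M$ chart by chart. In the near-field chart, $\eps^2 P=\eps^2\pa_t^2+\hat P(0)$ with $\hat P(0)=\Delta_{\hat x}+V$ independent of $(\eps,t)$, and $\eps^2\pa_t^2\wt u$ vanishes at $\{\eps=0\}$ (as $\pa_t^2\wt u$ is bounded there), so $(\eps^2 P\wt u)|_{\hat M}=\hat P(0)(\wt u|_{\hat M})$. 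Near the corner, $\eps^2 P=\rho_\circ^2\bigl(\hat\rho^2\pa_t^2-(r\pa_r)^2-r\pa_r+\Delta_{\Sph^2}\bigr)$; restricting to $\hat M=\{\hat\rho=0\}$ the $\hat\rho^2\pa_t^2$ piece drops, and the $\bop$-vector field $r\pa_r=\hat\rho\pa_{\hat\rho}-\rho_\circ\pa_{\rho_\circ}$ restricts (acting naturally on conormal functions on $\hat M$) to $-\rho_\circ\pa_{\rho_\circ}$, the $\hat\rho\pa_{\hat\rho}$ part dying there; this gives $(\eps^2 P\wt u)|_{\hat M}=\rho_\circ^2\bigl(-(\rho_\circ\pa_{\rho_\circ})^2+\rho_\circ\pa_{\rho_\circ}+\Delta_{\Sph^2}\bigr)(\wt u|_{\hat M})$, which is precisely $\Delta_{\hat x}=\hat P(0)$ expressed in the fibre coordinates $\rho_\circ=1/|\hat x|$, $\omega$ near the sphere at infinity of the fibre. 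This proves assertion~2 on all of $\hat M$. Assertion~3 is the analogous computation at $M_\circ=\{\rho_\circ=0\}$, where $P=\Box_g$: away from the corner it is trivial ($\Box_g$ is $\eps$-independent, $\hat\rho$ bounded below), and near the corner one restricts $\hat\rho^2\Box_g=\hat\rho^2\pa_t^2-(r\pa_r)^2-r\pa_r+\Delta_{\Sph^2}$, where now $r\pa_r$ restricts to $\hat\rho\pa_{\hat\rho}$ (the $\rho_\circ\pa_{\rho_\circ}$ part dying), to get $(\hat\rho^2\Box_g\wt u)|_{M_\circ}=\hat\rho^2\Box_g(\wt u|_{M_\circ})$; dividing by $\hat\rho^2$ gives $(P\wt u)|_{M_\circ}=\Box_g(\wt u|_{M_\circ})$.

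The only genuinely nontrivial point is the analysis at the corner $\hat M\cap M_\circ$, where the far-field model $\Box_g=\pa_t^2+\Delta_x$ and the near-field model $\hat P(0)=\Delta_{\hat x}+V$ must be reconciled. Since $\wt M$ is an honest manifold with corners on which $P$ is literally $\hat\rho^{-2}$ times one $\bop$-operator, this reconciliation is automatic, but unwinding it rests on the single coordinate identity $r\pa_r=\hat\rho\pa_{\hat\rho}-\rho_\circ\pa_{\rho_\circ}$ (radial differentiation at fixed $\eps$): this both exhibits $\hat\rho^2\Box_g$ as a $\bop$-operator and, on restriction to $\hat M$ (resp.\ $M_\circ$), reproduces exactly $\hat P(0)$ (resp.\ $\Box_g$)—the $\rho_\circ\pa_{\rho_\circ}$ part surviving at $\hat M$ and dying at $M_\circ$, and conversely for $\hat\rho\pa_{\hat\rho}$. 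Everything else is the routine bookkeeping that $\bop$-differential operators preserve conormal spaces and restrict to their normal operators at each boundary hypersurface.
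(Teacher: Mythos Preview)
Your proof is correct and follows essentially the same chart-by-chart approach as the paper (near $(M_\circ)^\circ$, near $\hat M^\circ$, and near the corner), though you are more systematic in framing it as the factorization $P=\hat\rho^{-2}Q$ with $Q\in\Diffb^2(\wt M)$, and more complete: the paper explicitly leaves the corner calculation to the reader, whereas you carry it out via the identity $r\pa_r=\hat\rho\pa_{\hat\rho}-\rho_\circ\pa_{\rho_\circ}$.
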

\begin{proof}
  Near the manifold interior $(M_\circ)^\circ$, and indeed where $|x|>\delta>0$ and thus weights in $\hat\rho$ can be dropped while $\rho_\circ$ can be replaced by $\eps$, this follows from the smoothness of the coefficients of the operator $P$---which in such a region equals $\Box_g$ when $\eps$ is sufficiently small. Near $\hat M^\circ$, and indeed where $|\hat x|<R<\infty$ and thus weights in $\rho_\circ$ can be dropped while $\hat\rho$ can be replaced by $\eps$, this follows from $P=\pa_t^2+\eps^{-2}(\Delta_{\hat x}+V(\hat x))$. We leave the calculation near the corner in the coordinates~\eqref{EqIPfCoordCorner} to the reader.
\end{proof}

By the choice of $\wt u_0$, the error $\Err_0\in\hat\rho^{-2}\CI(\wt M)$ has an extra order of decay at both $\hat M$ and $M_\circ$, so
\begin{equation}
\label{EqIPfErr0Worse}
  \Err_0 \in \rho_\circ\hat\rho^{-1}\CI(\wt M).
\end{equation}
If $\pa_x u(t,0)=0$, then we can choose $\wt u_0$ so that $\Err_0$ has an extra order of decay at $\hat M$, so
\begin{equation}
\label{EqIPfErr0Better}
  \Err_0 \in \rho_\circ\CI(\wt M);
\end{equation}
indeed, this holds for $\wt u_{(0)}(\eps,t,x)=u(t,x)\hat u_{(0)}(x/\eps)$ by direct computation. The plan is to add correction terms to $\wt u_0$ to improve the error term; we do this in turns at $M_\circ$ and $\hat M$. In what follows, we shall assume that the error satisfies~\eqref{EqIPfErr0Better}, as an analogue of this will hold in the black hole gluing setting. If, in the present toy model discussion, we instead only have~\eqref{EqIPfErr0Worse}, one merely needs to interchange Steps 2 and 3 below (and adjust the overall powers of $\eps$ there).

\textbf{Step 2. Solving away the error at $M_\circ$.} Given~\eqref{EqIPfErr0Better}, we want to find $h=h(t,x)$ so that\footnote{One may want to cut $h$ off to a neighborhood of $M_\circ$ in $\wt M$ in order to emphasize that $\eps h$ is a correction term \emph{at $M_\circ$}. We do not do this in this sketch for notational brevity.}
\[
  P(\wt u_0+\eps h) = \Err_0 + \eps P(h) \in \rho_\circ^2\CI(\wt M)
\]
vanishes to one order more at $M_\circ$ than $\Err_0$. By Lemma~\ref{LemmaIPfMap}, this is equivalent to
\begin{equation}
\label{EqIPfMcirc}
  \Box_g h = f_0 := -(\eps^{-1}\Err_0)|_{M_\circ} \in \hat\rho^{-1}\CI(M_\circ) = r^{-1}\CI(\R_t\times[0,\infty)_r\times\Sph^2),\qquad r=|x|.
\end{equation}
This can be viewed as a wave equation on $(M,g)$ with a source term that is singular at $\cC$ (albeit in a highly structured, \emph{polyhomogeneous conormal} sense). We claim that there exists a solution $h\in\hat\rho\CI(M_\circ)$ (ignoring the possible necessity of logarithmic terms for simplicity of presentation). Indeed, one can find $h$ in a two-step procedure.
\begin{enumerate}
\item First, one solves $\Box_g h_\sharp=(\pa_t^2-\pa_r^2-\frac{2}{r}\pa_r+r^{-2}\Delta_{\Sph^2})h_\sharp=f_0$ \emph{in Taylor series at $r=0$}; this is accomplished by making the ansatz $h_\sharp(t,r,\omega)\sim\sum_{j\geq 1}r^j h_{\sharp,j}(t,\omega)$ (ignoring the possibility that $(\log r)^k$ factors may be needed) and solving iteratively for $h_{\sharp,1}$, $h_{\sharp,2}$, $\ldots$.
\item Second, one finds a correction term $h_\flat$ satisfying $\Box_g h_\flat=f_0-\Box_g h_\sharp$; the source term here is smooth on $M_\circ$, but now vanishes to infinite order at $x=0$ and can thus be regarded as a smooth function on $M$. One can find $h_\flat$ by solving an initial value problem on $(M,g)$ with arbitrarily chosen (smooth) initial data at $t=0$, say.
\end{enumerate}

Thus $h=h_\sharp+\upbeta_\circ^*h_\flat$ solves~\eqref{EqIPfMcirc}, and we set $\wt u^1:=\wt u_0+\eps h$. The additional term here satisfies $\eps h\in\rho_\circ\hat\rho^2\CI(\wt M)$.

\begin{rmk}[Solving for $h_\sharp$, I: iterative procedure]
\label{RmkIPfhsharp1}
  The equation for $h_\sharp$ in the first step can be rewritten as $(-(r\pa_r)^2-r\pa_r+\Delta_{\Sph^2})h_\sharp=r^2 f_0-r^2\pa_t^2 h_\sharp$. Due to the smoothness requirement in the `slow' variable $t$, this is an equation which at each step in the construction of the Taylor series of $h_\sharp$ depends only \emph{parametrically} on $t$. Moreover, the operator on the left acts on $r^\lambda v(\omega)$ as $r^\lambda$ times
  \[
    N(r^2\wh{\Box_g}(0),\lambda)v,\qquad
    N(r^2\wh{\Box_g}(0),\lambda):=-\lambda^2-\lambda+\Delta_{\Sph^2}.
  \]
  Here, $N(r^2\wh{\Box_g}(0),\lambda)$ is the \emph{indicial family} of the zero energy operator $\wh{\Box_g}(0)$ of $\Box_g$. See~\S\ref{SssBgE} regarding the structural reasons for this behavior. For example, the equation for $h_{\sharp,1}$ reads $(-2+\Delta_{\Sph^2})h_{\sharp,1}(t,\omega)=(r f_0)|_{(t,r,\omega)=(t,0,\omega)}$, which is solvable if the right hand side is orthogonal to $l=1$ spherical harmonics (otherwise $\log r$ factors are needed in $h_\sharp$); the computation of $h_{\sharp,j}$, $j\geq 2$, involves $t$-derivatives of $f_0$ and $h_{\sharp,k}$ for $k\leq j-1$.
\end{rmk}

\begin{rmk}[Solving for $h_\sharp$, II: non-characteristic nature of $\cC$]
\label{RmkIPfhsharp2}
  The construction of $h_\sharp$ is a special case of solving (pseudo)differential equations $P u=f\bmod\CI$ where $f$ is a (polyhomogeneous) conormal symbol at a submanifold $\cC$, with $P$ non-characteristic at $N^*\cC$. Cf.\ \cite[Theorem~18.2.12]{HormanderAnalysisPDE3} for the principal symbol statement: this prompts one to invert the restriction of the symbol of $\Box_g$, given by $(t,x,\tau,\xi)\mapsto-\tau^2+|\xi|^2$, to $N^*\cC\setminus o=\{(t,x,\tau,\xi)\colon x=0,\ \tau=0,\ \xi\neq 0\}$. The procedure in Remark~\ref{RmkIPfhsharp1} is a `physical space' version of this.
\end{rmk}

\begin{rmk}[Issues with initial value problems: I]
\label{RmkIPfIVP1}
  If one were to solve~\eqref{EqIPfMcirc} immediately via an initial value problem, with initial data at $t=0$, say, the solution $h$ would typically be singular along the future and past light cones emanating from the point $\{t=0\}\cap\{r=0\}$ of intersection of the Cauchy surface and the curve $\cC$. As a concrete example, the solution of $\Box_g h=r^{-1}$, which is of the form~\eqref{EqIPfMcirc}, with initial data $(h,\pa_t h)|_{t=0}=(0,0)$, is
  \[
    h(t,r)=\begin{cases} t-r/2, & r<t, \\ t^2 / (2 r), & r>t. \end{cases}
  \]
  This fails to be $\cC^2$ at $r=t$. By contrast, we have $\Box_g(-\frac{r}{2})=r^{-1}$, and so $h=-\frac{r}{2}$ is the type of `good' solution produced by the above two-step procedure; in other words, obtaining the `good' solution from an initial value problem would require fine-tuning the initial conditions to match those of $h_\sharp$ modulo $\upbeta_\circ^*\CI(M)$.
\end{rmk}

\textbf{Step 3. Solving away the error at $\hat M$.} By construction, the approximate solution $\wt u^1\in\CI(\wt M)$ has error
\[
  \Err^1 = P(\wt u^1) \in \rho_\circ^2\CI(\wt M).
\]
We now solve away the restriction of $\Err^1$ to $\hat M$,
\[
  f^1 := -\Err^1|_{\hat M} \in \rho_\circ^2\CI(\hat M) = \CI\bigl(\R_t; \rho_\circ^2\CI(\ol{\R^3_{\hat x}})\bigr),\qquad \rho_\circ=\la\hat x\ra^{-1}.
\]
That is, we wish to find $h=h(t,\hat x)$ so that\footnote{As before, we do not explicitly cut off $h$ to a neighborhood of $\hat M$, for notational brevity.} $P(\wt u^1+\eps^2 h)=\Err^1+\eps^2 P(h)$ vanishes to one order more at $\hat M$ than $\Err_0$. By Lemma~\ref{LemmaIPfMap}, this requires solving
\begin{equation}
\label{EqIPfhatPEq}
  \hat P(0)h(t,\hat x) = f^1(t,\hat x)
\end{equation}
\emph{parametrically} in $t$. This is \emph{not} a wave equation in the fast variables (cf.\ Remark~\ref{RmkIPfPFast}); rather, the fact that only the zero energy operator $\hat P(0)$ appears here is due to our requirement that the solution $\wt u$ we seek be adiabatic. One can solve~\eqref{EqIPfhatPEq} in a two-step procedure.
\begin{enumerate}
\item First, one solves for $h(t,\hat x)$ in Taylor series at $|\hat x|^{-1}=0$ (in inverse polar coordinates $|\hat x|^{-1}$, $\omega$). This involves the same indicial family as the one mentioned in Remark~\ref{RmkIPfhsharp1}.
\item Second, one applies $\hat P(0)^{-1}=(\Delta+V)^{-1}$ to the remaining rapidly decaying error term. This produces a correction with a full asymptotic expansion as $|\hat x|\to\infty$ (here concretely a smooth function of $|\hat x|^{-1}$ and $\omega$ which vanishes at $|\hat x|^{-1}=0$).
\end{enumerate}

Ignoring the possibility of logarithmic terms, we obtain a solution $h\in\CI(\R_t;\CI(\ol{\R^3_{\hat x}}))$ of~\eqref{EqIPfhatPEq}. This gives
\[
  \wt u_1 := \wt u^1 + \eps^2 h,\qquad \Err_1 := P(\wt u_1) \in \rho_\circ^2\hat\rho\CI(\wt M).
\]

\begin{rmk}[Issues with initial value problems: II]
\label{RmkIPfIVP2}
  Consider equation~\eqref{EqIPfhatPEq} for $t$ near $0\in\R$ in the coordinates $\hat t=\frac{t}{\eps}$. Clearly, we \emph{cannot} solve this from the perspective of an initial value problem, as $h$ is (up to addition of $t$-dependent multiples of $\hat u_{(0)}$) uniquely determined if we restrict its growth as $|\hat x|\to\infty$ to be sublinear (so that $\eps^2 h$ does not affect the earlier correction term at $M_\circ$). If one were to attempt to solve the problem~\eqref{EqIPfMwtu}--\eqref{EqIPfMwtu2} via an initial value problem, with trivial initial data at $t=0$, say, as in Remark~\ref{RmkIPfIVP1}, then an adiabatic solution $h$ would not exist. Instead, the equation to be solved in the region $|\hat t|\lesssim 1$ would have to be the wave equation
  \begin{equation}
  \label{EqIPfIVP2}
    (-D_{\hat t}^2+\Delta_{\hat x}+V(\hat x))h(\hat t,\hat x) = f^1(0,\hat x),\qquad (h,\pa_{\hat t}h)|_{\hat t=0}=(0,0),
  \end{equation}
  the solution of which is \emph{not} stationary (unless the initial data were chosen so as to be consistent with~\eqref{EqIPfhatPEq}). The solution of~\eqref{EqIPfIVP2} typically has a nontrivial radiation field at null infinity in $\R_{\hat t}\times\R^3_{\hat x}$, causing oscillations on the scale $\hat t-|\hat x|\sim 1$, i.e.\ $t-|x|\sim\eps$, and hence a singularity to emerge out of $\cC$ along the light cone emanating from $(t,x)=(0,0)$.
\end{rmk}

\textbf{Step 4. Iteration; formal solution.} One continues solving away error terms in turns at $M_\circ$ and $\hat M$, obtaining correction terms which vanish to successively higher orders at $\{\eps=0\}=M_\circ\cup\hat M$. Note that after one full step, the decay of the error is improved at both boundary hypersurfaces $M_\circ$ and $\hat M$ by one order; thus, the linear equations one needs to solve in this iteration are always \emph{the same} (i.e.\ as in Steps 2 and 3 above) as far as the growth/decay of the right hand sides are concerned (modulo the possibility of logarithmic factors). Taking an asymptotic sum of $\wt u_0$ and these correction terms produces the desired formal solution $\wt u\in\CI(\wt M)$ (which is, really, only log-smooth) of the singular perturbation problem~\eqref{EqIPfMwtu}--\eqref{EqIPfMwtu2}. (Mirroring Conjecture~\ref{ConjITrue}, obtaining a true solution requires solving $P\wt v=-P\wt u\in\CIdot(\wt M)=\eps^\infty\CI(\wt M)$ with $\wt v\in\CIdot(\wt M)$; accomplishing this requires different arguments which will be discussed elsewhere.)

\subsubsection{Basic setup for black hole gluing}
\label{SssIPfB}

We now turn to the setting of the black hole gluing problem solved formally by Theorem~\ref{ThmI}. Working in the chart~\eqref{EqIPfChart} near the codimension $2$ corner of $\wt M$, we seek $\wt g=(g_\eps)_{\eps\in(0,1)}$ in the form
\[
  \wt g=\wt g(t,\rho_\circ,\hat\rho,\omega)_{\mu\nu}\,\dd z^\mu\,\dd z^\nu,\qquad z=(t,x).
\]
This matches the Gralla--Wald setup \cite{GrallaWaldSelfForce}, with the minor caveat that we need to allow for the presence of (powers of) logarithms (i.e.\ $\log\rho_\circ$ and $\log\hat\rho$) at $M_\circ$ and $\hat M$ in the lower order terms of the expansion of $\wt g$. At $\rho_\circ=0$, we demand that $\wt g$ be equal to $g_{\mu\nu}\,\dd z^\mu\,\dd z^\nu$ at the point $(t,x)=(t,\hat\rho\omega)$. As $\hat\rho\to 0$, this converges to the Minkowski metric, which thus becomes the boundary condition at infinity of the restriction of $\wt g$ to $\hat\rho=0$. This indeed holds for the Kerr metric $\wt g|_{\hat\rho=0}=(\hat g_{\bhm,\bha})_{\mu\nu}(\rho_\circ^{-1}\omega)\,\dd z^\mu\,\dd z^\nu$.

We face additional difficulties compared to the toy model considered in~\S\ref{SssIPfM}.
\begin{enumerate}
\item The equation $\Ric(\wt g)-\Lambda\wt g=0$ which we wish to formally solve is \emph{nonlinear}. A partial relief is the fact that the correction terms which we need to add to a naive gluing $\wt g_0$ are solutions of \emph{linear} equations, concretely of
  \begin{alignat}{2}
  \label{EqIPfBLinCirc}
    (D_g\Ric-\Lambda)h &= f&\quad&\text{(on $M_\circ$)}, \\
  \label{EqIPfBLinHat}
    \wh{D_{\hat g_b}\Ric}(0)h &= f&\quad&\text{(on $\hat M$)},
  \end{alignat}
  where $D_g\Ric$ is the linearization of the Ricci curvature operator, further $\hat g_b$ (with $b=(\bhm,\bha)$) is the metric of the small Kerr black hole, and $\wh{D_{\hat g_b}\Ric}(0)$ is the restriction of $D_{\hat g_b}\Ric$ to stationary symmetric 2-tensors on $\R_{\hat t}\times\R^3_{\hat x}$. Nonetheless, there are nonlinear interactions between various Taylor coefficients in the construction of the formal solution whose treatment requires some care.
\item The linear equation~\eqref{EqIPfBLinCirc} is not hyperbolic and~\eqref{EqIPfBLinHat} is not elliptic, and neither equation is solvable for general $f$. However, the error terms $f$ which arise in the construction are always leading order terms of the error $\Ric(\wt g_j)-\Lambda\wt g_j$ from a previous step of the construction, and thus in view of the second Bianchi identity lie in the kernel of $\delta_{\wt g_j}\sfG_{\wt g_j}$ (where $\sfG_g:=I-\frac12 g\tr_g$ is the \emph{trace reversal} operator), so at $M_\circ$, resp.\ $\hat M$ in the kernel of $\delta_g\sfG_g$, resp.\ $\delta_{\hat g_b}\sfG_{\hat g_b}$. This extra information on $f$ is sufficient in the settings considered in Theorem~\ref{ThmI} for the solvability of~\eqref{EqIPfBLinCirc}; however, equation~\eqref{EqIPfBLinHat} has a nontrivial cokernel even within the kernel of $\delta_{\hat g_b}\sfG_{\hat g_b}$.
\item\label{ItIPfBTaylor} The linear operator $D_g\Ric-\Lambda$ is everywhere characteristic, so even just solving error terms away in Taylor series at $\pa M_\circ$ is a nontrivial task.
\end{enumerate}

The solvability issues arising from the lack of hyperbolicity of~\eqref{EqIPfBLinCirc} can be avoided if one passes to a gauge-fixed version of the Einstein vacuum equations. But since we seek a formal solution of the Einstein vacuum equations themselves, one would need to ensure that the solutions of the corresponding gauge-fixed version of~\eqref{EqIPfBLinCirc} satisfy the linearized gauge conditions---which is equivalent to arranging the validity of gauge conditions at a Cauchy hypersurface (by the usual argument involving the linearized second Bianchi identity). However, we already observed in Remark~\ref{RmkIPfIVP1} the inadequacy of initial value formulations for the purpose of adiabatic gluing. This forces us to free ourselves from any particular choice of gauge, even though we are still allowed to use gauge conditions at various substeps of the construction as long as they are consistent with an adiabatic construction; this will for example allow us to tackle point~\eqref{ItIPfBTaylor} above (see~\S\ref{SssIPfF}). A similar comment applies to the problem of solving away error terms at $\hat M$, cf.\ Remark~\ref{RmkIPfIVP2}.

The starting point of the construction is a naive gluing $\wt g_0=(\wt g_0)_{\mu\nu}\,\dd z^\mu\,\dd z^\nu$ where $(\wt g_0)_{\mu\nu}\in\CI(\wt M)$, with boundary values
\begin{subequations}
\begin{align}
\label{EqIPfBValueMcirc}
  (\wt g_0)_{\mu\nu}|_{(\eps,t,x)=(0,t,x)}&=g_{\mu\nu}|_{(t,x)}, \\
\label{EqIPfBValueMhat}
  (\wt g_0)_{\mu\nu}|_{(\eps,t,\hat x)=(0,t,\hat x)}&=(\hat g_b)_{\hat\mu\hat\nu}|_{\hat x}.
\end{align}
\end{subequations}
Since $g$ and $\hat g_b$ solve the field equations, the tensor $\wt g_0$ satisfies the field equations to leading order at $M_\circ$ and $\hat M$, which means\footnote{That is, the coefficients in the coordinates $z=(t,x)$ on $M$ lie in $\rho_\circ\CI(\wt M)$. We remind the reader that $\Ric(\wt g_0)$ is defined on an $\eps$-level set as the Ricci curvature of the restriction of $\wt g_0$ to this level set; likewise for other geometric quantities and operators on $\wt M$.}\ \footnote{For general Lorentzian metrics $\wt g_0\in\CI(\wt M)$, one only has $\Err_0\in\hat\rho^{-2}\CI(\wt M)$. The gain of \emph{two} orders at $\hat M$ in~\eqref{EqIPfBErr0} holds for a careful choice of $\wt g_0$, and is due to the \emph{quadratic} nature of the difference of $g$ and the Minkowski metric in Fermi normal coordinates.}
\begin{equation}
\label{EqIPfBErr0}
  \Err_0 := \Ric(\wt g_0) - \Lambda\wt g_0 \in \rho_\circ\CI(\wt M).
\end{equation}

\subsubsection{Far field: linearized Einstein equations with sources}
\label{SssIPfF}

We wish to add to $\wt g_0$ a tensor $\eps h$ to solve away the error~\eqref{EqIPfBErr0} to leading order at $M_\circ$. This leads to the linear equation
\begin{equation}
\label{EqIPfFEq}
  (D_g\Ric-\Lambda)h=f_0 := -(\eps^{-1}\Err_0)|_{M_\circ} \in \hat\rho^{-1}\CI(M_\circ).
\end{equation}
Solvability of this equation requires $f_0$ to solve the linearized equations of motion, i.e.\ $\delta_g\sfG_g f_0=0$. Crucially, they hold \emph{automatically} due to the second Bianchi identity for $\wt g_0$, which reads $\delta_{\wt g_0}\sfG_{\wt g_0}\Err_0=0$.

The solvability of equation~\eqref{EqIPfFEq} is discussed in~\S\ref{SAc}. In brief, we first find $h_\sharp$ which solves~\eqref{EqIPfFEq} formally at $r=0$. Since $D_g\Ric-\Lambda$ is characteristic at $N^*\cC$ (cf.\ Remark~\ref{RmkIPfhsharp2}), this is a non-trivial task. We accomplish it as follows: as in Remark~\ref{RmkIPfhsharp1}, one first needs to solve
\begin{equation}
\label{EqIPfFIndicial}
  N\bigl(r^2\wh{D_{\ubar g}\Ric}(0),\lambda\bigr)h_{\sharp,1} \Bigl( = r^{-\lambda+2}D_{\ubar g}\Ric\bigl(r^\lambda h_{\sharp,1}(\omega)\bigr) \Bigr) = f_{0,-1}
\end{equation}
with smooth parametric dependence on $t\in\R$; here $\ubar g=-\dd t^2+\dd x^2$ is the Minkowski metric, which $g$ is equal to at $\cC$, and $f_{0,-1}$ is the $r^{-1}$ coefficient of $f_0$, while $h_{\sharp,1}$ is the sought-after $r^1$ coefficient of $h_\sharp$. The integrability condition $\delta_g\sfG_g f_0=0$ implies the analogous condition $N(r\wh{\delta_{\ubar g}\sfG_{\ubar g}}(0),-1)f_{0,-1}=0$. The solvability (and uniqueness) theory of~\eqref{EqIPfFIndicial}, which is an equation on spacetime symmetric 2-tensors on Minkowski space restricted to $(0,\infty)_r$ times a coordinate 2-sphere which are quasi-homogeneous in $r$, is studied in detail in~\S\ref{SRic}, the main results being Theorem~\ref{ThmRicUniq} and \ref{ThmRicSolv}. The upshot is that a formal solution of~\eqref{EqIPfFEq} exists, with precise control also on the logarithmic factors $(\log r)^k$ appearing in its generalized Taylor expansion at $r=0$.

The second step is to solve away the remaining error by solving
\[
  (D_g\Ric-\Lambda)h_\flat = f_0 - (D_g\Ric-\Lambda)h_\sharp.
\]
The right hand side now vanishes to infinite order at $\cC$ and thus is smooth on $M$; moreover, it still lies in the kernel of $\delta_g\sfG_g$. However, this by itself does not suffice for the existence of a solution $h_\flat$ with controlled support unless $(M,g)$ does not admit nontrivial Killing vector fields; see \cite{HintzLinEin}.\footnote{The control of supports---specifically, ensuring that correction terms vanish near spacelike infinity---is mainly of importance when performing gluing constructions on asymptotically flat spacetimes, with combinations with stability results in mind. See Remark~\ref{RmkXDomain}.} For the proof, one first solves the linearized constraint equations at $X$ using results going back to \cite{ChruscielDelayMapping}; this uses the assumptions on $\cU^\circ$ in Theorem~\ref{ThmI}, which ensure that the initial data for $h_\flat$ can be chosen to have support contained in $\cU^\circ$. (In the setting of Theorem~\ref{ThmI2}, we give up control of the initial data for $h_\flat$---and indeed allow for arbitrary growth at infinity in $X$ of the initial data---in return for the ability to unconditionally solve the linearized constraints.) Subsequently, one finds $h_\flat$ as the solution of a gauge-fixed version of the linearized Einstein equations; the support property in part~\eqref{ItISupp} of Theorem~\ref{ThmI} follows by finite speed of propagation. (Since at this point we work only with smooth tensors on $M$, gauge-fixing is consistent with the adiabatic nature of the gluing problem at $\cC$.)

The tensor $h=h_\sharp+\upbeta_\circ^*h_\flat$ solves~\eqref{EqIPfFEq}, and thus the log-smooth tensor
\begin{equation}
\label{EqIPfFg1}
  \wt g^1 := \wt g_0 + \eps h
\end{equation}
on $\wt M$ satisfies (ignoring\footnote{For completeness, we mention that in early stages of the construction one does need to keep careful track of leading order logarithmic factors, as e.g.\ in~\S\ref{SssFhM1Mod}.} logarithmic terms)
\[
  \Err^1 := \Ric(\wt g^1) - \Lambda\wt g^1 \in \rho_\circ^2\CI(\wt M).
\]
See~\S\ref{SsFMc1} for details.

\subsubsection{Near field: modulation of black hole parameters}
\label{SssIPfN}

Corresponding to Step 3 in~\S\ref{SssIPfM}, we seek an adiabatic tensor $h=h(t,\hat x)_{\mu\nu}\,\dd z^\mu\,\dd z^\nu$ so that $\wt g^1+\eps^2 h$ solves the Einstein vacuum equations to one order more at $\hat M$ than $\wt g^1$; this leads to the equation\footnote{As already noted by D'Eath \cite{DEathSmallBHDynamics}, `[W]e have QS [quasistationary] internal perturbations because gravitational waves only need a time of order $M$' (in present notation: $\eps$) `to cross the black hole, whereas the background is changing on a time scale of order $1$. Thus the small black hole can adjust its gravitational field on what it feels to be a long time scale in order to cope with the tidal field of the background.'}
\begin{equation}
\label{EqIPfNEq}
  \wh{D_{\hat g_b}\Ric}(0)h = f^1,
\end{equation}
where $f^1=-\Err^1|_{\hat M}\in\rho_\circ^2\CI(\hat M)$. The necessary condition for solvability, $\delta_{\hat g_b}\sfG_{\hat g_b}f^1=0$, follows from the second Bianchi identity for $\wt g^1$. By first solving~\eqref{EqIPfNEq} to infinite order at $|\hat x|^{-1}=0$, one can reduce to the case that $f^1$ has rapid decay as $|\hat x|\to\infty$.

For each fixed $t\in\R$, we are faced with the equation $D_{\hat g_b}\Ric(h)=f^1$ where $f^1=f^1(\hat x)$---dropping the $t$-dependence---, and we seek a stationary solution $h=h(\hat x)$. \emph{Pure gauge} tensors $h=\delta_{\hat g_b}^*\omega$, where $\omega$ is an arbitrary 1-form, solve the homogeneous equation; and when $\omega$ is a spatial translation or rotation, such $h$ have good (i.e.\ $\cO(|\hat x|^{-1})$ or better) decay as $\rho_\circ=\la\hat x\ra^{-1}\to 0$. \emph{Linearized Kerr metrics} $\hat g'_b(\dot b)=\frac{\dd}{\dd s}\hat g_{b+s\dot b}|_{s=0}$, $\dot b=(\dot\bhm,\dot\bha)\in\R\times\R^3$, provide further homogeneous solutions with good decay. There is a 7-dimensional space of homogeneous solutions comprised of certain such tensors.\footnote{Rotations are either Killing vector fields, or can be rewritten as changes of the axis of rotation of the Kerr black hole. Thus, there are $3$ translational and $4$ black hole parameter degrees of freedom.}

Dually, asymptotic symmetries (temporal and spatial translations, and rotations) give rise to a 7-dimensional space of \emph{dual pure gauge} tensors $\sfG_{\hat g_b}\delta_{\hat g_b}^*\omega$ in the kernel of the adjoint of $\wh{D_{\hat g_b}\Ric}(0)$; see~\S\ref{SsAh0} for details. This gives a 7-dimensional cokernel (i.e.\ obstruction space for the solvability) of~\eqref{EqIPfNEq},\footnote{We do not directly phrase this as a Fredholm index $0$ statement for $\wh{D_{\hat g_b}\Ric}(0)$. But for the proof we do use that a \emph{gauge-fixed version} of the linearized Ricci curvature operator, at zero frequency, can be regarded as a Fredholm index $0$ operator between suitable function spaces, as shown in \cite[Theorem~4.3]{HaefnerHintzVasyKerr}.} given by integration (over the spatial manifold $\R^3_{\hat x}\setminus\{|\hat x|<\bhm\}$ of Kerr) against these dual pure gauge tensors.

As is usual in geometric gluing problems, the basic idea is to avoid this cokernel via a modulation procedure which takes advantage of the flexibility we have in inserting the small black hole. The particular form which this modulation takes depends on the order of vanishing of the error term at $\hat M$. The basic observation is the following: given an adiabatic tensor $h=h(t,\hat x)=h_{\hat\mu\hat\nu}(t_0+\eps\hat t,\hat x)\,\dd\hat z^\mu\,\dd\hat z^\nu$, where $\hat z=(\hat t,\hat x)$ with $\hat t=\frac{t-t_0}{\eps}$ and $\hat x=\frac{x}{\eps}$, we can expand it in Taylor series in $\eps$ to obtain (for bounded $\hat z$)
\[
  h_0(\hat x) + \eps\hat t h_1(\hat x) + \eps^2\frac{\hat t^2}{2}h_2(\hat x) + \cO(\eps^3),\qquad
  h_j(\hat x) := \pa_t^j h(t_0,\hat x).
\]
While to leading order at $\eps=0$ this is stationary, the coefficient of $\eps^j$, $j\geq 1$, is a polynomial of degree $j$ in $\hat t$. Thus,
\[
  D_{\hat g_b}\Ric(h)=\wh{D_{\hat g_b}\Ric}(0)h_0 + \eps D_{\hat g_b}\Ric(\hat t h_1) + \eps^2 D_{\hat g_b}\Ric\Bigl(\frac{\hat t^2}{2}h_2\Bigr) + \cO(\eps^3).
\]
Roughly speaking, this means that a correction term $\eps^k h$ to the family of spacetime metrics produces a correction term to the output of $\Ric-\Lambda$ at order\footnote{The factor of $\eps^{-2}$ arises from switching from the coordinates $\hat z$ back to $z=(t,x)$.} $\eps^{-2}\eps^{k+j}=\eps^{k+j-2}$ which is of the form $\eps^{k+j-2}D_{\hat g_b}\Ric(\frac{\hat t^j}{j!}h_j)$. This can move the $\cO(\eps^{k+j-2})$ error term one is trying to solve away off the cokernel of $\wh{D_{\hat g_b}\Ric}(0)$.

\textbf{Modulation at the first step.} In~\eqref{EqIPfNEq}, we use a variant of this observation. We revisit the definition of $\wt g_0$: instead of gluing the Kerr black hole into $\hat M$ via~\eqref{EqIPfBValueMhat}, we shall set
\[
  (\wt g_0)_{\mu\nu}|_{(\eps,t,\hat x)=(0,t,\hat x)} = (\hat g_b)_{\hat\mu\hat\nu}|_{\hat x+\hat c(t)}
\]
for a function $\hat c\in\CI(\R;\R^3)$ which we need to determine. Expanding in Taylor series around $t=t_0$ as above, and assuming that $\hat c(t_0)=0$ for notational simplicity, this is
\[
  \hat g_b|_{\hat x} + \eps\hat t \cL_{\hat c'(t_0)\cdot\pa_{\hat x}}\hat g_b + \eps^2\frac{\hat t^2}{2}\bigl(\cL_{\hat c''(t_0)\cdot\pa_{\hat x}}\hat g_b + \cL_{\hat c'(t_0)\cdot\pa_{\hat x}}^2\hat g_b\bigr) + \cO(\eps^3).
\]
We can add to this a further adiabatic $\cO(\eps)$ term so that the total coefficient of $\eps^1$ of the resulting family $\wt g_{\hat c}$ is the Lie derivative of $\hat g_b$ along the Lorentz boost $\hat t\hat c'(t_0)\cdot\pa_{\hat x}+(\hat c'(t_0)\cdot\hat x)\pa_{\hat t}$. This implies that $\Err_{0,\hat c}:=\Ric(\wt g_{\hat c})-\Lambda\wt g_{\hat c}$ is of the same class as $\Err_0$ in~\eqref{EqIPfBErr0}. (Also, the leading order term of $\Err_{0,\hat c}$ at $M_\circ$ is the same as that of $\Err_0$, which means that the correction step at $M_\circ$ is unaffected by the presence of $\hat c$.) The new leading order term $f_{\hat c}^1$, defined analogously to~\eqref{EqIPfNEq}, \emph{is} sensitive to $\hat c(t)$, and indeed gets changed by\footnote{The argument really involves an additional term which is linear in $\hat t$ and $\hat c''(t_0)$, arising from the aforementioned further $\cO(\eps)$ correction term. We also omit further terms arising from the $\cO(\eps)$ terms of $\wt g_{\hat c}$ through the quadratic terms in the Einstein equations.}
\begin{equation}
\label{EqIPfNCorr}
  D_{\hat g_b}\Ric\Bigl(\frac{\hat t^2}{2}\cL_{\hat c''(t_0)\cdot\pa_{\hat x}}\hat g_b\Bigr).
\end{equation}
The inner product of this term with the dual pure gauge solutions related to the three generators of spatial translations can be made to be equal to any desired three values for a suitable choice of $\hat c''(t_0)$. In this manner, one can eliminate $3$ out of $7$ obstructions to the solvability of~\eqref{EqIPfNEq} by solving a linear second order ODE for $\hat c$.\footnote{The fact that quadratic-in-$\hat t$ correction terms are required to eliminate parts of the cokernel is closely related to the fact that the resolvent family for the linearization of a gauge-fixed version of the Einstein vacuum equations around Kerr has a second order pole at zero frequency \cite{HaefnerHintzVasyKerr}.}

The remaining $4$ obstructions can be eliminated by modulating the black hole parameters at order $\eps$: if instead of $\hat g_b$ one uses $\hat g_b+\eps\hat g'_b(\dot b(t))=\hat g_b+\eps\hat g_b'(\dot b(t_0))+\eps^2\hat t\hat g'_b(\dot b'(t_0))+\cO(\eps^3)$, with $\dot b(t)=(\dot\bhm(t),\dot\bha(t))$ to be determined, one produces a further correction term, in addition to~\eqref{EqIPfNCorr}, given by
\[
  D_{\hat g_b}\Ric\bigl(\hat t\hat g'_b(\dot b(t_0))\bigr).
\]
This can be made to integrate against the dual pure gauge solutions related to time translations, resp.\ generators of spatial rotations to yield any 4-tuple of numbers if one chooses $\dot\bhm'(t_0)$, resp.\ $\dot\bha'(t_0)$ suitably. For representation theoretic reasons, at this first correction step at $\hat M$ only those choices of $\dot b$ are needed which correspond to infinitesimal rotations of the rotation axis; these are pure gauge solutions (see Lemma~\ref{LemmaFhM1AxisLie} and Corollary~\ref{CorFhM1CokerParam}). See Theorem~\ref{ThmAhKCoker} and Proposition~\ref{PropFhM1} for details.

In summary, through adiabatic translations and sub-leading order pure gauge changes of the black hole parameters we can move the leading order error at $\hat M$ in~\eqref{EqIPfNEq} into the range of $\wh{D_{\hat g_b}\Ric}(0)$; see Theorem~\ref{ThmAhPhgSolv}. (For the proof of this theorem, we use a gauge-fixing procedure and apply results from \cite{HaefnerHintzVasyKerr,AnderssonHaefnerWhitingMode} on the solvability properties of the zero energy operator of the gauge-fixed linearized Ricci curvature operator.) Since the Einstein vacuum equations are diffeomorphism-covariant, one can pull back the resulting family of metrics $\wt g^1+\eps^2 h$ along a suitable diffeomorphism to re-center the center of mass and axis of rotation of the small Kerr black holes.

\textbf{Modulation at later steps.} At later stages of the construction of $\wt g$ in Theorem~\ref{ThmI}, we modulate the center of mass and black hole parameters at higher orders in $\eps$. For instance, if we add to the family $\wt g^k$ of spacetime metrics after the $k$-th step, $k\geq 2$, a correction term $\eps^{k-1}\cL_{\hat c(t)\cdot\pa_{\hat x}}\hat g_b$ or $\eps^k\hat g'_b(\dot b(t))$, we can eliminate the cokernel for the size $\eps^{-2}\eps^{k+1}=\eps^{k-1}$ leading order term of the error $\Ric(\wt g^k)-\Lambda\wt g^k$. For small $k$, the details are somewhat involved, as one needs to take into account quadratic and cubic\footnote{For instance, a $\cO(\eps^1)$ modulation of the center of mass of the small black hole, required to eliminate the cokernel when solving away a $\cO(\eps^{-2}\eps^{1+2})=\cO(\eps)$ error term at $\hat M$, produces also a $\cO(\eps^{-2}\cdot(\eps^1)^3)=\cO(\eps)$ term via cubic self-interaction, and further $\cO(\eps)$ terms via quadratic interactions with additional $\cO(\eps^2)$ correction terms.} nonlinearities of the Einstein vacuum equations, though only rather elementary structural information about these terms suffices. This is carried out in~\S\S\ref{SsFhM2} and \ref{SsFhM3}.

We also remark that during early stages of the construction, we keep careful track of the logarithmic terms at $\hat M$ and $M_\circ$ as well as of the algebraic structure of certain terms in the generalized Taylor expansion of the error terms at $\hat M$. This is needed to ensure the equality of $\wt g$ and $\hat g_b$ modulo $\cO(\eps^2)$ errors at $\hat M$ in part~\eqref{ItINear} of Theorem~\ref{ThmI}, which plays an important role in the construction as it significantly reduces the number of nonlinear interaction terms one needs to keep track of.

\subsubsection{Formal solution at a Cauchy hypersurface}
\label{SssIPfC}

Part~\eqref{ItIFormalCauchy} of Theorem~\ref{ThmI} is proved in~\S\ref{SIVP}. If $g_\eps$ solves $\Ric(g_\eps)-\Lambda g_\eps=\cO(\eps^\infty)$, then also the constraint equations are valid at $X\cap M_\eps$ modulo $\cO(\eps^\infty)$. The first step is to correct the first and second fundamental forms of $g_\eps$ at $X\cap M_\eps$ by tensors of size $\cO(\eps^\infty)$ so that the constraint equations are satisfied exactly. We accomplish this by adapting a contraction mapping type argument from \cite{HintzGlueID}. In a second step, we construct the Taylor series of the metric tensor at $X\cap M_\eps$ by expressing it in a $(3+1)$-splitting with fixed lapse and shift.

\subsection{Outline of the paper}
\label{SsIO}

\begin{itemize}
\item We begin in~\S\ref{SBg} with a review of notions from geometric singular analysis which are used throughout the paper, in particular blow-ups, b- and scattering structures and their (reduced) 3-body analogues, as well as polyhomogeneity.
\item In~\S\ref{SG}, we describe in detail the manifold $\wt M$, already introduced in~\S\ref{SssIPfM} above, on which the gluing construction will take place. The family $\wt g$ of metrics is a section of a smooth vector bundle on $\wt M$ (with local trivializations induced by lifts of coordinates $z=(t,x)$ on $M$) which we study in some detail. In particular, we explain the sense in which smooth sections of this bundle induce stationary metrics on the fibers of $\hat M$ over each point of the geodesic $\cC\subset M$; and we analyze the structural properties of differential operators and geometric quantities associated with such metrics.
\item In~\S\ref{SE}, we recall aspects of initial value problems and gauge-fixing for the Einstein vacuum equations and their linearizations, as well as the structural properties of these equations on $\wt M$.
\item In~\S\ref{SM}, we introduce and state the main result of this paper in full detail; see Theorem~\ref{ThmM}.
\item As explained in Remark~\ref{RmkIPfhsharp1} and~\S\ref{SssIPfF}, the analysis of the linearized field equations at $M_\circ$ and $\hat M$ utilizes the construction of formal solutions at $\pa M_\circ=M_\circ\cap\hat M=\pa\hat M$, which requires a detailed analysis of the linearized field equations on Minkowski space acting on tensors which are quasi-homogeneous with respect to spatial dilations. This is the content of~\S\ref{SRic}, following the computation of the explicit form of various geometric operators on Minkowski space in~\S\ref{SMk}.
\item The linear theory in the far field, or more precisely on $M_\circ$, as sketched in~\S\ref{SssIPfF}, is developed in~\S\ref{SAc}.
\item The linear theory in the near field, or more precisely on $\hat M$, as sketched in~\S\ref{SssIPfN}, is developed in~\S\ref{SAh}; this includes a precise description of the cokernel of the linearization of the Einstein vacuum equations around Kerr at zero frequency.
\item The heart of the paper is the construction of a formal solution of the gluing problem at $\eps=0$ in~\S\ref{SF}; an outline of the detailed construction is given there.
\item The formal solution at the Cauchy surface $X\cap M_\eps$, as discussed in~\S\ref{SssIPfC}, is constructed in~\S\ref{SIVP}.
\item Up to this point, all proofs take place in setting~\eqref{ItIOGeneric} of Theorem~\ref{ThmI}. The minor modifications required to handle the case of Kerr(--de~Sitter) background spacetimes, i.e.\ setting~\eqref{ItIOKdS}, and the application to the construction of extreme mass ratio mergers, is described in~\S\ref{SX}.
\end{itemize}

\section{Background on geometric singular analysis}
\label{SBg}

An $n$-dimensional manifold $M$ with corners is diffeomorphic to $[0,\infty)^k\times\R^{n-k}$ in the neighborhood of a point $p\in M$, where $k\in\{0,\ldots,n\}$ depends on $p$. The boundary hypersurfaces of $M$ are the closures of the connected components of the set of $p\in M$ for which $k=1$; following \cite{MelroseDiffOnMwc}, we require all boundary hypersurfaces to be embedded submanifolds. If one can take $k=0,1$ for all $p\in M$, then $M$ is a manifold with boundary. A \emph{boundary defining function} of a boundary hypersurface $H\subset M$ is a smooth function $\rho\in\CI(M)$ so that $\rho\geq 0$ on $M$, $\rho^{-1}(0)=H$, and $\dd\rho\neq 0$ on $H$. Any two boundary defining functions of the same boundary hypersurface are smooth nonzero multiples of each other. When working in an open subset $U\subset M$, a local boundary defining function of $H$ is a function $\rho\in\CI(U)$ satisfying these conditions on $U$.

An important example of a manifold with boundary is the radial compactification of $\R^n$, defined by
\[
  \ol{\R^n} := \Bigl( \R^n \sqcup \bigl( [0,\infty)_\rho\times\Sph^{n-1}_\omega\bigr) \Bigr) / \sim,\qquad \R^n\setminus\{0\} \ni x=r\omega \sim (r^{-1},\omega),
\]
where $r=|x|$ and $\omega=\frac{x}{|x|}$ are standard polar coordinates on $\R^n$. A boundary defining function of the sphere at infinity $\pa\ol{\R^n}=\rho^{-1}(0)\cong\Sph^{n-1}$ is $\la x\ra^{-1}=(1+|x|^2)^{-1/2}$; a local boundary defining function in $x\neq 0$ is $|x|^{-1}$. Note that the space $\CI(\ol{\R^n})$ of smooth functions on $\ol{\R^n}$ consists of all smooth functions $u$ on $\R^n$ which, when expressed in terms of $|x|^{-1}$ and $\frac{x}{|x|}$, are smooth down to $|x|^{-1}=0$; this means that they have Taylor expansions at infinity,
\[
  u \sim \sum_{j\geq 0} |x|^{-j}u_j\Bigl(\frac{x}{|x|}\Bigr),\qquad |x|\to\infty,\qquad u_j\in\CI(\Sph^{n-1}),
\]
meaning that the difference of $u$ and the truncation of the sum to $j\leq J$ is smooth and vanishes to order $J$ at $|x|^{-1}=0$.

The procedure of (real) blow-up produces a manifold with corners if one is given a manifold with corners $M$ and a \emph{p-submanifold} $S\subset M$: this is a submanifold so that at each point $p\in S$ there exists a coordinate chart $[0,\infty)^k\times\R^{n-k}$ on $M$ so that $S$ is given by the vanishing of a subset of the collection of local coordinates. Namely, the blow-up of $M$ along $S$ is
\[
  [M;S] := (M\setminus S) \sqcup S N^+S
\]
as a set, where $N^+S=T^+_S M/T^+S$ is the inward pointing normal bundle (with $T^+_q M$, for $q\in M$, consisting of all non-strictly inward pointing tangent vectors), and the inward pointing spherical normal bundle $S N^+S=(N^+S\setminus o)/\R^+$ is the quotient of the complement of the zero section $o\subset N^+S$ by the dilation action in the fibers. This can be given a smooth structure by declaring polar coordinates around $S$ to be smooth down to the polar coordinate origin. The \emph{blow-down} map
\[
  \upbeta \colon [M;S]\to M
\]
is defined to be the identity map on $M\setminus S$ and the base projection $S N^+S\to S$ on the \emph{front face} $S N^+S$.

We make this concrete in local coordinates, so $S=\{x^1=\ldots=x^j=0,\ y^1=\ldots=y^m=0\}\subset M=[0,\infty)_x^k\times\R^{n-k}_y$, where $0\leq j\leq k$ and $0\leq m\leq n-k$. If $j=0$ (thus $S$ is not contained in a boundary hypersurface), then
\begin{align*}
  &[M;S] = [0,\infty)_x^k \times \R_{(y^{m+1},\ldots,y^{n-k})}^{n-k-m} \times [0,\infty)_R \times \Sph_\omega^{m-1}, \\
  &\hspace{8em} R:=\biggl(\sum_{i=1}^m(y^i)^2\biggr)^{1/2},\ \omega:=\frac{(y^1,\ldots,y^m)}{R},
\end{align*}
with $S N^+S=R^{-1}(0)$. If $j\geq 1$, so $S$ is a boundary p-submanifold, we have
\[
  [M;S] = [0,\infty)_{(x^{j+1},\ldots,x^k)}^{k-j} \times \R_{(y^{m+1},\ldots,y^{n-k})}^{n-k-m} \times [0,\infty)_R \times \Sph^{j+m-1}_j,
\]
where $R=(\sum_{i=1}^j(x^i)^2+\sum_{i=1}^m(y^i)^2)^{1/2}$ and
\[
  \frac{(x^1,\ldots,x^j,y^1,\ldots,y^m)}{R}\in\Sph^{j+m-1}_j:=\{(\xi^1,\ldots,\xi^j,\eta^1,\ldots,\eta^m) \in \Sph^{j+m-1},\ \xi^1,\ldots,\xi^j\geq 0\}.
\]
The blow-down map is given by the product of the identity map in the coordinates $x^{j+1}$, $\ldots$, $x^k$, $y^{m+1}$, $\ldots$, $y^{n-k}$, and the polar coordinate map $(R,\omega)\mapsto R\omega$ in the remaining variables.

If $T\subset M$ is a submanifold, then the \emph{lift} $\upbeta^*T$ of $T$ to $[M;S]$ is defined to be $\upbeta^{-1}(T)$ when $T\cap S=\emptyset$, and the closure of $\upbeta^{-1}(T\setminus S)$ in $[M;S]$ otherwise. If $S,T\subset M$ are p-submanifolds so that the lift of $T$ to $[M;S]$ is again a p-submanifold, one can form the iterated blow-up $[M;S;T]:=[[M;S];\upbeta^*T]$. In the case that $S$ is a p-submanifold of $M$ and $T\subset S$ is also given by the vanishing of a subset of local coordinates in which already $S$ is of this form, then this condition is satisfied for $S,T$ and also for $T,S$; and the two iterated blow-ups $[M;S;T]$ and $[M;T;S]$ are naturally diffeomorphic (i.e.\ the identity map on $M\setminus S$ extends to a diffeomorphism of these two manifolds with corners).

As a special case, let $M$ denote a smooth $n$-dimensional manifold without boundary, and let $\cC\subset M$ be a closed p-submanifold of codimension $k$. Consider $\wt M=[[0,1)\times M;\{0\}\times\cC]$. Then the front face $\hat M\subset\wt M$ is a fiber bundle over $\cC$ with typical fiber $\ol{\R^k}$. In fact, there is a natural diffeomorphism $\hat M\cong\ol{N}\cC$ where $\ol{N}\cC$ is the fiber-wise radial compactification of the normal bundle $N\cC=T_\cC M/T\cC$; indeed, given $p\in\cC$ and a representative $V\in T_p M$ of an element of $N\cC$, let $\gamma\colon(-1,1)\to M$ denote a smooth curve with $\gamma(0)=p$, $\gamma'(0)=V$; then we can map $V$ to $\lim_{\eps\searrow 0}(\eps,\gamma(\eps))\in\wt M$. A local coordinate calculation shows that this extends by continuity to the claimed diffeomorphism. The lift of $\{0\}\times M$ is equal to $[M;\cC]$.

\subsection{Lie algebras of vector fields}
\label{SsBgLie}

On a manifold with corners $M$, the space $\Vb(M)$ of \emph{b-vector fields} \cite{MelroseMendozaB,MelroseAPS,GrieserBasics} consists of all smooth vector fields $V\in\cV(M):=\CI(M;T M)$ which are tangent to all boundary hypersurfaces. In local coordinates $[0,\infty)_x^k\times\R^{n-k}_y$, these are linear combinations, with smooth coefficients, of the vector fields $x^i\pa_{x^i}$ ($i=1,\ldots,k$) and $\pa_{y^j}$ ($j=1,\ldots,n-k$). If $M$ is a manifold with boundary, and if $\rho\in\CI(M)$ is a boundary defining function, then $\Vsc(M):=\rho\Vb(M)=\{\rho V\colon V\in\Vb(M)\}$ is the space of \emph{scattering vector fields} \cite{MelroseEuclideanSpectralTheory}. Both $\Vb(M)$ and $\Vsc(M)$ are Lie algebras with respect to the commutator of vector fields, and therefore we have graded algebras
\[
  \Diffb(M) = \bigoplus_{m\in\N_0} \Diffb^m(M),\qquad
  \Diffsc(M) = \bigoplus_{m\in\N_0} \Diffsc^m(M)
\]
of differential operators which are locally finite sums of up to $m$-fold compositions of b- and scattering vector fields, respectively.

From now on we only consider the case that $M$ is a manifold with boundary. In a local coordinate chart $[0,\infty)_\rho\times\R^{n-1}_y$, an element $P\in\Diffb^m(M)$ is of the form
\[
  P = \sum_{j+|\alpha|\leq m}a_{j\alpha}(\rho,y)(\rho\pa_\rho)^j \pa_y^\alpha,\qquad a_{j\alpha}\in\CI([0,\infty)\times\R^{n-1}).
\]
Its \emph{normal operator} at $\pa M=\rho^{-1}(0)$ is given by restricting the coefficients to $\rho=0$, giving
\[
  N(P) := \sum_{j+|\alpha|\leq m} a_{j\alpha}(0,y)(\rho\pa_\rho)^j \pa_y^\alpha \in \Diff_{\bop,\rm I}^m([0,\infty)\times\R^{n-1}),
\]
where the subscript `$I$' records the invariance of $N(P)$ under dilations $(\rho,y)\mapsto(\mu\rho,y)$, $\mu>0$. The action of $N(P)$ on functions of the form $\rho^\lambda v(y)$ is given by the \emph{indicial family}
\[
  N(P,\lambda) := \sum_{j+|\alpha|\leq m} a_{j\alpha}(0,y)\lambda^j \pa_y^\alpha \in \Diff^m(\R^{n-1}).
\]
Globally, one can define $N(P)\in\Diff_{\bop,\rm I}^m([0,\infty)\times\pa M)$ if one fixes a collar neighborhood of $M$, and then $N(P,\lambda)\in\Diff^m(\pa M)$.

We next recall that $\Vsc(M)$ is the space of smooth sections of the \emph{scattering tangent bundle} $\Tsc M$, with local frame $\rho^2\pa_\rho$, $\rho\pa_{y^j}$ ($j=1,\ldots,n-1$) ; the dual bundle is the \emph{scattering cotangent bundle} $\Tsc^*M$, with local frame $\frac{\dd\rho}{\rho^2}$, $\frac{\dd y^j}{\rho}$ ($j=1,\ldots,n-1$). When $M=\ol{\R^n}$, then a computation in projective coordinates shows that $\Vsc(\ol{\R^n})$ is spanned over $\CI(\ol{\R^n})$ by the standard coordinate vector fields $\pa_{x^1},\ldots,\pa_{x^n}$; thus, these form a \emph{global} (i.e.\ down to $\pa\ol{\R^n}$) smooth frame of $\Tsc\ol{\R^n}$, and the differentials $\dd x^1,\ldots,\dd x^n$ form a global smooth frame of $\Tsc^*\ol{\R^n}$. The Euclidean metric $\sum_{j=1}^n(\dd x^j)^2$ is thus an example of a smooth positive definite section of $S^2\,\Tsc^*\ol{\R^n}$ (also called a Riemannian scattering metric).

If $p\in\pa M$, then the blow-up $[M;\{p\}]$ is a manifold with corners. Following Vasy \cite{VasyThreeBody}, we define the Lie algebra $\Vtsc([M;\{p\}])$ of \emph{3-body-scattering vector fields} (or \emph{3sc-vector fields}) as the $\CI([M;\{p\}])$-span of the space of lifts of elements of $\Vsc(M)$ to $[M;\{p\}]$. This generalizes in a straightforward manner to the case that one blows up several distinct points in $\pa M$. The case of interest in the present paper will be $M=\ol{\R^n}$ where $\R^n=\R_t\times\R_x^{n-1}$, and we blow up the `north' and `south poles' $\{N,S\}=\pa\ol\R\times\{0\}\subset\pa\ol{\R^n}$; in the case $n=4$, a subset of
\begin{equation}
\label{EqBgBlowupNS}
  [\ol{\R^n};\{N,S\}]
\end{equation}
carries the Kerr metric as a smooth (and stationary) Lorentzian 3sc-metric, i.e.\ a Lorentzian signature section of $S^2\,\Ttsc^*[\ol{\R^n};\{N,S\}]$. The main point is that the two front faces of $[\ol{\R^n};\{N,S\}]$ are diffeomorphic to $\ol{\R^{n-1}_x}$, and thus linear combinations of the second symmetric tensor products of $\dd t,\dd x^1,\ldots,\dd x^{n-1}$ with $\CI(\ol{\R^{n-1}_x})$-coefficients are smooth symmetric 3sc-2-tensors. See Figure~\ref{FigBg3sc}.

\begin{figure}[!ht]
\centering
\includegraphics{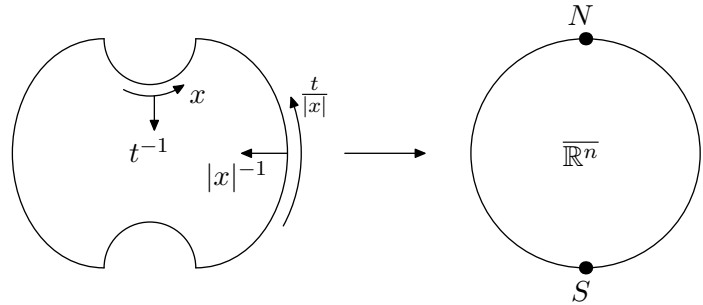}
\caption{Illustration of $[\ol{\R^n};\{N,S\}]$ and the blow-down map to $\ol{\R^n}$, together with some local coordinates.}
\label{FigBg3sc}
\end{figure}

We further recall from \cite{Hintz3b} the Lie algebra of \emph{3-body/b-vector fields} (or \emph{3b-vector fields}): if $\rho_\sface$ is a defining function of the lift of $\pa M$ to $[M;\{p\}]$, then this is defined as
\[
  \Vtb([M;\{p\}]):=\rho_\sface^{-1}\Vtsc([M;\{p\}]).
\]
Associated with these Lie algebras, we have graded algebras of differential operators, which we denote $\Difftsc([M;\{p\}])$ and $\Difftb([M;\{p\}])$. As an important special case, consider again $[\ol{\R_t\times\R_x^{n-1}};\{N,S\}]$ from~\eqref{EqBgBlowupNS}, and let us introduce inverse polar coordinates $\rho=|x|^{-1}$, $\omega=\frac{x}{|x|}\in\Sph^{n-2}$ in $x\neq 0$. Then elements of $\Difftb^m([\ol{\R^n};\{N,S\}])$ which are moreover invariant under translations in $t$ are, in $x\neq 0$, of the form
\[
  P = \sum_{j+k+|\alpha|\leq m} a_{j k\alpha}(x) (\rho^{-1}\pa_t)^j (\rho\pa_\rho)^k \pa_\omega^\alpha,
\]
where $a_{j k\alpha}\in\CI(\ol{\R^{n-1}})$. (The wave operator with respect to the Kerr metric, or indeed any stationary 3sc-metric, is of this form upon multiplication by $r^2$.) Formally passing to the Fourier transform in $t$, i.e.\ replacing $\pa_t$ by $-i\sigma$ where $\sigma\in\R$, produces the spectral family
\[
  \hat P(\sigma) = \sum_{j+k+|\alpha|\leq m} a_{j k\alpha}(x) (-i\sigma\rho^{-1})^j (\rho\pa_\rho)^k \pa_\omega^\alpha,
\]
which for $\sigma=0$, resp.\ $\sigma\neq 0$ gives an element of $\Diffb^m(\ol{\R^{n-1}})$, resp.\ $\rho^{-m}\Diffsc^m(\ol{\R^{n-1}})$. We refer the reader to \cite{Hintz3b,HintzNonstat} for further information on 3b-operators and their relationships with scattering or 3-body-scattering geometries.

Lastly, we recall from \cite{MazzeoEdge} the Lie algebra of \emph{edge vector fields}, defined on a manifold $M$ with boundary whose boundary hypersurface $\pa M$ is the total space of a fibration $Z-\pa M\to Y$; to wit, $\Ve(M)$ consists of all smooth vector fields on $M$ which are tangent to the fibers of the fibration. (In particular, $\Ve(M)\subset\Vb(M)$.) This situation arises when blowing up a p-submanifold inside a smooth manifold without boundary: the front face fibers over the p-submanifold, with the typical fiber being a sphere. See also~\S\ref{SssBgE}. Similarly to before, there exists an associated graded algebra $\Diffe^m(M)$ of edge differential operators.

For all algebras $D$ of differential operators introduced so far, one can also consider weighted versions: if $w$ is a weight, i.e.\ the product of (real) powers of boundary defining functions, one can define $w D:=\{w A\colon A\in D\}$. If $w'$ is another weight, then $w D\circ w' D\subset(w w')D$. (This follows from the fact that $w^{-1}V(w)$ is smooth, including at the boundary, when $w$ is a weight and $V$ is a b-vector field.)

\subsubsection{Edge operators arising from blow-ups}
\label{SssBgE}

Suppose $M$ is a smooth $(n+1)$-dimensional manifold without boundary, and let $\cC\subset M$ denote a closed 1-dimensional submanifold. Let $E,F\to M$ denote smooth vector bundles. We will encounter the following situation: we are given a smooth coefficient operator
\[
  L\in\Diff^m(M;E,F)
\]
for which we wish to solve $L u=f$ in $M\setminus\cC$ where $f$ is conormal or polyhomogeneous at $\cC$; and we wish for $u$ to be conormal or polyhomogeneous as well. Under the assumption that $L$ is non-characteristic at $N^*\cC$, this is a classical problem which can e.g.\ be solved using the symbol calculus for conormal distributions \cite[\S18]{HormanderAnalysisPDE3}; in our application, however, $L$ will be the linearized Einstein operator which does not satisfy this assumption.

We instead proceed as follows: write $M_\circ=[M;\cC]$ for the real blow-up, and $\upbeta_\circ\colon M_\circ\to M$ for the blow-down map. The front face $\pa M_\circ$ of $M_\circ$ is the total space of the fibration $\Sph^{n-1}-\pa M_\circ\to\cC$ given by the blow-down map restricted to the front face. Let $r\in\CI(M_\circ)$ denote a defining function of the front face $\pa M_\circ$.

\begin{lemma}[Edge operator via blow-up]
\label{LemmaBgE}
  The lift of $r^m L$ to $M_\circ$ satisfies
  \[
    L_\eop := \upbeta_\circ^*(r^m L)\in\Diffe^m(M_\circ;\upbeta_\circ^*E,\upbeta_\circ^*F).
  \]
\end{lemma}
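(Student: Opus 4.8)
The statement is local on $M_\circ$, so I would fix a point $q\in\pa M_\circ$ lying over a point $p\in\cC$ and work in a coordinate chart adapted to the blow-up. Choose coordinates $(y,x^1,\ldots,x^n)$ on $M$ near $p$ with $\cC=\{x^1=\cdots=x^n=0\}$ and $y$ a coordinate along $\cC$. Then near $q$ the blow-up $M_\circ=[M;\cC]$ is described by polar coordinates $r=|x|\ge 0$, $\omega=x/|x|\in\Sph^{n-1}$, $y$, and $r$ is a defining function of the front face $\pa M_\circ$. The plan is simply to track what the generators of $\Diff^m(M)$ become under $\upbeta_\circ^*$ after multiplication by $r^m$, and to recognize the result as a composition of $m$ edge vector fields with smooth coefficients.

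\textbf{Key steps.} First, since $L\in\Diff^m(M;E,F)$ has smooth coefficients, after trivializing $E$ and $F$ near $p$ we may write $L$ as a matrix of scalar operators, each a sum of terms $a(y,x)\,\pa_y^{j}\pa_x^{\alpha}$ with $j+|\alpha|\le m$ and $a$ smooth; bundle trivializations pull back to smooth trivializations of $\upbeta_\circ^*E,\upbeta_\circ^*F$, so it suffices to treat one such scalar monomial. Second, the chain rule in polar coordinates gives $\pa_{x^i}=\omega^i\pa_r+r^{-1}\Theta_i$ where $\Theta_i$ is a smooth vector field on $\Sph^{n-1}$ (tangent to the sphere), so each $\pa_{x^i}\in r^{-1}\Ve(M_\circ)$: indeed $r\pa_{x^i}=r\omega^i\pa_r+\Theta_i$ lies in $\Ve(M_\circ)$ since $r\pa_r$ and the $\Theta_i$ are tangent to the fibers $\Sph^{n-1}$ of $\pa M_\circ\to\cC$. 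Likewise $\pa_y\in\Vb(M_\circ)\subset$ (is not quite in $\Ve$), but $\pa_y$ is tangent to the fibers of $\pa M_\circ\to\cC$ only after noting that the fibers are the $y=\mathrm{const}$ spheres—so in fact $r\pa_y\in\Ve(M_\circ)$, while $\pa_y$ itself is an edge vector field only up to the weight $r$. The cleanest bookkeeping: $\{r\pa_r,\ \Theta_1,\ldots,\ r\pa_y\}$ together with (smooth multiples of) these span $\Ve(M_\circ)$ near $q$. Third, assemble: a monomial $a\,\pa_y^{j}\pa_x^{\alpha}$ with $j+|\alpha|\le m$, after multiplication by $r^m$, becomes $r^m a\,(r^{-1})^{j}(\text{edge ops})^j\,(r^{-1})^{|\alpha|}(\text{edge ops})^{|\alpha|}$, i.e. $r^{m-j-|\alpha|}$ times a product of $j+|\alpha|\le m$ edge vector fields with coefficients smooth on $M_\circ$ (note $\upbeta_\circ^*a$ is smooth). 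Since $r^{m-j-|\alpha|}$ is a smooth nonnegative power of a boundary defining function, the whole expression lies in $\Diffe^m(M_\circ)$, using that $\Ve(M_\circ)$ is closed under multiplication by $\CI(M_\circ)$ and that $\CI(M_\circ)\cdot\Diffe^k\subset\Diffe^k\subset\Diffe^m$ for $k\le m$. Summing over the finitely many monomials and reassembling the matrix gives $\upbeta_\circ^*(r^m L)\in\Diffe^m(M_\circ;\upbeta_\circ^*E,\upbeta_\circ^*F)$.

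\textbf{Main obstacle.} There is no deep obstacle; the proof is a coordinate computation. The only point requiring a little care is the correct normalization of powers of $r$: one must check that each first-order Cartesian derivative $\pa_{x^i}$ and also the $\cC$-tangential derivative $\pa_y$ contribute exactly one factor of $r^{-1}$ (no worse), so that multiplying by the single power $r^m$—rather than a higher power—suffices to clear all the negative powers for an order-$m$ operator. This is precisely the statement that $\upbeta_\circ^*(r\,\Diff^1(M))\subset\Diffe^1(M_\circ)$, which is exactly the edge structure associated to the fibration $\Sph^{n-1}-\pa M_\circ\to\cC$; the general case follows by composition since $\Diffe^\bullet(M_\circ)$ is a filtered algebra. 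One should also record (as the lemma implicitly uses) that the edge structure here is the one in which the base of the fibration is $\cC$ and the fiber is the normal sphere, matching the conventions of~\S\ref{SsBgLie}.
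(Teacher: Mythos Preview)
Your proposal is correct and follows essentially the same approach as the paper: write $L$ in local coordinates as a sum of monomials $a\,\pa_y^j\pa_x^\alpha$ with $j+|\alpha|\le m$, and observe that $r\pa_y$ and $r\pa_{x^i}$ span $\Ve(M_\circ)$, so that $r^m$ times each monomial lies in $\Diffe^m$. The only cosmetic difference is that the paper verifies the edge-spanning claim in projective coordinates $(x^1,\hat x^j=x^j/x^1)$ rather than your polar coordinates $(r,\omega)$; both are equivalent local charts on the blow-up. Your discussion of why $\pa_y$ requires the factor $r$ is slightly garbled (the sentence ``$\pa_y$ is tangent to the fibers \ldots only after noting\ldots'' says the opposite of what you mean: $\pa_y$ is \emph{not} tangent to the $y$-constant spheres, which is precisely why $r\pa_y$, not $\pa_y$, is edge), but your conclusion $r\pa_y\in\Ve(M_\circ)$ is correct.
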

\begin{proof}
  In local coordinates $(t,x)$ along $\cC$, with $\cC=x^{-1}(0)$, and in local trivializations of $E,F$, the operator $r^m L$ is a sum of terms of the form
  \begin{equation}
  \label{EqBgETerm}
    r^m a_{j\beta}(t,x) D_t^j D_x^\beta = r^{m-j-|\beta|}a_{j\beta}(t,x)(r D_t)^j r^{|\beta|}D_x^\beta,\qquad j+|\beta|\leq m,
  \end{equation}
  where $a_{j\beta}$ is a smooth matrix-valued function. It then remains to note that the vector fields $r\pa_t$, $r\pa_{x^j}$ ($j=1,\ldots,n$) are a local frame for $\Ve(M_\circ)$. Indeed, in the region $x^1\gtrsim|x|$, we can use local coordinates $t$, $x^1$, $\hat x^j=\frac{x^j}{x^1}$ ($j=2,\ldots,n$), and $r$ is a smooth positive multiple of $x^1$. The vector fields $x^1\pa_t$, $x^1\pa_{x^1}$, $x^1\pa_{x^j}$ take the form $x^1\pa_t$, $x^1\pa_{x^1}-\sum_{j=2}^n\hat x^j\pa_{\hat x^j}$, $\pa_{\hat x^j}$; the latter vector fields indeed span the space of edge vector fields in our chart, since the fibers of $\pa M_\circ$ over $x^1=0$ are the level sets of $t$.
\end{proof}

When considering the action of $\upbeta_\circ^*(r^m L)$ on conormal distributions at $\pa M_\circ$, it is more appropriate to regard this operator as a b-differential operator. We proceed to compute its normal operator at $\pa M_\circ$. If $\pi\colon N\cC\to\cC$ is the base projection and $o\subset N\cC$ is the zero section, write\footnote{We remark that $[N\cC;o]$ is naturally diffeomorphic to ${}^+N(\pa M_\circ)$.}
\[
  {}^\vee\cV_{\bop,\rm I}([N\cC;o]) \subset \Vb([N\cC;o])
\]
for the Lie subalgebra of vertical b-vector fields (i.e.\ they lie in $\ker\pi_*$) which are dilation-invariant in the fibers of $[N\cC;o]$. (Locally identifying $[N\cC;o]=\R_t\times[0,\infty)_r\times\Sph^{n-1}$, these vector fields are $a(t,\frac{x}{|x|})r D_x$ where $a\in\CI(\R_t\times\Sph^{n-1})$; i.e.\ $r D_t$ from~\eqref{EqBgETerm} is absent.) Write ${}^\vee\Diff_{\bop,\rm I}^m([N\cC;o])$ for the corresponding space of $m$-th order differential operators.

\begin{lemma}[b-normal operator]
\label{LemmaBgEb}
  In the notation of Lemma~\usref{LemmaBgE}, the b-normal operator of $\upbeta_\circ^*(r^m L)$ at $\pa M_\circ$, which we shall denote $\wh{L_\eop}(0)$, satisfies
  \[
    \wh{L_\eop}(0) \in {}^\vee\Diff_{\bop,\rm I}^m([N\cC;o];\pi^*E|_\cC,\pi^*F|_\cC).
  \]
  Moreover, the restriction of the \emph{operator} $\wh{L_\eop}(0)$ to a fiber $[N_p\cC;o]$, $p\in\cC$, only depends on the restriction of the \emph{principal symbol} of $L$ to $\ann T_p\cC\subset T_p^*M$.
\end{lemma}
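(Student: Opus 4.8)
The plan is to establish both assertions by a direct computation in the projective coordinates on $M_\circ=[M;\cC]$ adapted to the blow-up, continuing the local calculation in the proof of Lemma~\ref{LemmaBgE} and then reading off the $r\to 0$ behaviour. I work near a point $p=(t,0)$ of $\cC$ in local coordinates $(t,x)=(t,x^1,\dots,x^n)$ with $\cC=x^{-1}(0)$, and in local trivializations of $E,F$; the natural identifications of $\upbeta_\circ^*E,\upbeta_\circ^*F$ over the fibre of $\pa M_\circ$ above $p$ with the constant bundles $\pi^*(E|_\cC),\pi^*(F|_\cC)$ make the bundle bookkeeping routine, and I suppress it. Writing $L=\sum_{j+|\beta|\le m}a_{j\beta}(t,x)\,D_t^jD_x^\beta$ with smooth matrix-valued coefficients, the principal symbol at $p$ evaluated on the covector $\tau\,\dd t+\xi\cdot\dd x\in T^*_pM$ is $\sum_{j+|\beta|=m}a_{j\beta}(p)\,\tau^j\xi^\beta$, so its restriction to $\ann T_p\cC=\{\tau=0\}$ is $\xi\mapsto\sum_{|\beta|=m}a_{0\beta}(p)\,\xi^\beta$; this is equivalent data to the finite collection $\{a_{0\beta}(p):|\beta|=m\}$.

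The key structural input is that, in the chart $\{x^1>0\}$ with coordinates $t$, $r=x^1$, $\hat x^j=x^j/x^1$ ($2\le j\le n$) on $M_\circ$, one has $D_{x^i}=r^{-1}\tilde D_i$ with $\tilde D_i:=rD_{x^i}$ a vertical b-vector field on $M_\circ$ (up to the factor $\tfrac1i$) --- explicitly $\tilde D_1=\tfrac1i(r\pa_r-\sum_{j\ge2}\hat x^j\pa_{\hat x^j})$, $\tilde D_j=D_{\hat x^j}$ --- exactly the edge vector fields appearing in Lemma~\ref{LemmaBgE}, whereas $D_t$ lifts to the b-vector field $D_t$ with no accompanying power of $r$. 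Hence $r^{|\beta|}D_x^\beta$ lifts to $\Diffe^{|\beta|}(M_\circ)$ with coefficients that are $r$-independent (smooth functions of $\omega=x/|x|$) and free of $\pa_t$, so it equals its own b-normal operator $Q_\beta\in{}^\vee\Diff_{\bop,\rm I}^{|\beta|}([N\cC;o])$; and, moving the $r^{-1}$'s to the left (using that $\tilde D_i(r^{-1})$ is a constant multiple of $r^{-1}$), each term of $r^mL$ lifts under $\upbeta_\circ$ to $r^{m-|\beta|}\,a_{j\beta}(t,r\hat x)\,D_t^j\,Q_\beta$ with $a_{j\beta}(t,r\hat x)\in\CI(M_\circ)$.

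Regarding $L_\eop$ as a b-operator, its b-normal operator $\wh{L_\eop}(0)$ at $\pa M_\circ=\{r=0\}$ is obtained by freezing coefficients at $r=0$ and dropping every summand carrying a positive power of $r$. In the terms above the prefactor $r^{m-|\beta|}$ vanishes at $r=0$ unless $|\beta|=m$, which forces $j=0$; thus only the top-order, pure-$x$-derivative contributions survive and $\wh{L_\eop}(0)=\sum_{|\beta|=m}a_{0\beta}(t,0)\,Q_\beta$. Since each $Q_\beta\in{}^\vee\Diff_{\bop,\rm I}^m$ and $a_{0\beta}(t,0)$ depends only on $t\in\cC$, this gives $\wh{L_\eop}(0)\in{}^\vee\Diff_{\bop,\rm I}^m([N\cC;o];\pi^*E|_\cC,\pi^*F|_\cC)$, using the identification $[N\cC;o]\cong{}^+N(\pa M_\circ)$. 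Being vertical, $\wh{L_\eop}(0)$ restricts to each fibre $[N_p\cC;o]$, and there it equals $\sum_{|\beta|=m}a_{0\beta}(p)\,Q_\beta|_{[N_p\cC;o]}$ --- manifestly determined by $\{a_{0\beta}(p):|\beta|=m\}$, i.e.\ by $\sigma_m(L)(p,\cdot)|_{\ann T_p\cC}$; concretely it is the lift to $[N_p\cC;o]$ of $r^m$ times the constant-coefficient operator on $N_p\cC$ with that symbol. The statement is coordinate-invariant, and charts $\{x^1>0\}$ together with rotations in $x$ cover a neighbourhood of every point of $\pa M_\circ$, so the local computation suffices.

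The computation involves no genuine obstacle; the one point demanding care is the $r$-power count in the last paragraph. The essential asymmetry is that multiplication by $r^m$ converts each derivative $D_t$ along $\cC$ into $r\cdot D_t=r\times(\text{b-vector field})$, an $r$-suppressed b-operator, while it converts each transverse derivative $D_{x^i}$ into the genuine edge field $rD_{x^i}$ with no surplus factor of $r$. This is precisely what forces $\wh{L_\eop}(0)$ to be vertical and to feel only the part of $\sigma_m(L)$ tangent to $\ann T_p\cC$; verifying it amounts to the elementary commutation identities recorded above.
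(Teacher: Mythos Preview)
Your proof is correct and follows essentially the same approach as the paper's. Both arguments reduce to the observation that, written as a b-operator, the term $r^m a_{j\beta}(t,x)D_t^jD_x^\beta$ carries a coefficient $r^{m-|\beta|}$ (since $D_t$ is already a b-vector field while each $D_{x^i}$ costs one power of $r^{-1}$), which vanishes at $r=0$ unless $|\beta|=m$ and hence $j=0$; the surviving terms are then vertical, dilation-invariant, and determined by the top-order pure-$x$ coefficients $a_{0\beta}(p)$. Your write-up is more explicit than the paper's one-sentence proof---in particular your identification of these coefficients with $\sigma_m(L)|_{\ann T_p\cC}$ is spelled out---but the content is the same.
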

\begin{proof}
  The term~\eqref{EqBgETerm} has vanishing coefficients at $r=0$ as a b-operator unless $j=0$ and $|\beta|=m$; note that the terms with $j=0$ and $|\beta|=m$ only involve differentiation in $(r,\omega)$ (which are also the coordinates on the fibers $N\cC$) but not in $t$. This implies the claim.
\end{proof}

Thus, if we identify a collar neighborhood of $\cC$ with a neighborhood of the zero section in $N\cC$, and identify $E$ and $F$ in such a neighborhood with the pullback along $\pi$ of their restrictions to $\cC$, then
\[
  \upbeta_\circ^*(r^m L) - \hat\chi\wh{L_\eop}(0)\hat\chi \in r\Diffb^m(M_\circ;\upbeta_\circ^*E,\upbeta_\circ^*F),
\]
where $\hat\chi\in\CI(M_\circ)$ is identically $1$ near $\pa M_\circ$ and supported in the collar neighborhood of $\pa M_\circ$.

\begin{example}[Wave operator]
\label{ExBgEb}
  If $\cC=\{x=0\}\subset\R_t\times\R^n_x$, and $L=-D_t^2+\sum_{j=1}^n D_{x^j}^2=-D_t^2+D_r^2-\frac{i(n-1)}{r}D_r+r^{-2}\slDelta$ is the wave operator on Minkowski space, then $\wh{L_\eop}(0)=r^2 D_r^2-i(n-1)r D_r+\slDelta$ is $r^2$ times the Laplacian on $\R^n$, and indeed a smooth (in $t\in\R$) family of dilation-invariant operators on $[\ol{\R^n};\{0\}]$. (Note here that $[N\cC;o]=\R_t\times[\ol{\R^n};\{0\}]$.)
\end{example}

\begin{rmk}[Pullback bundle]
\label{RmkBgEbPullback}
  When computing the form of the operator $\wh{L_\eop}(0)$ in concrete applications, it is useful to note that one may work in bundle splittings of $\pi^*E$ induced not merely by splittings of $E|_\cC$, but of $(\upbeta_\circ^*E)|_{\pa M_\circ}$. This is due to the fact that in terms of the projection $\pi_\ff\colon N\cC\to S N\cC=\pa M_\circ$, we can factor $\pi=\upbeta_\circ\circ\pi_\ff$, and therefore
  \[
    \pi^*E|_\cC = \pi_\ff^*\bigl((\upbeta_\circ^*E)|_{\pa M_\circ}\bigr).
  \]
\end{rmk}

\subsection{Conormality and polyhomogeneity; boundary pairing}
\label{SsBgPhg}

Let $X$ denote a manifold with boundary, and let $\rho\in\CI(X)$ denote a boundary defining function. For $\alpha\in\R$, we then denote by
\[
  \cA^\alpha(X) = \{ u\in\rho^\alpha L^\infty_\loc(X) \colon P u\in\rho^\alpha L^\infty_\loc(X)\ \forall\,P\in\Diffb(X) \}
\]
the space of \emph{conormal functions} with weight $\alpha$. (Crucially, the local uniform boundedness holds \emph{up to} the boundary $\pa X$. The subscript `$\loc$' can be dropped when $X$ is compact.) Its elements are smooth on $X^\circ$, but become singular in a controlled fashion at $\pa X$. A typical element of $\cA^\alpha(X)$ is the function $\rho^\alpha$. Spaces of conormal functions can be defined in a completely analogous manner also on manifolds with corners.

Next, we recall that an \emph{index set} is a subset $\cE\subset\C\times\N_0$ so that $(z,k)\in\cE$ implies $(z+1,k)\in\cE$ and also $(z,k-1)\in\cE$ when $k\geq 1$, and so that for all $C\in\R$ there only exist finitely many elements $(z,k)\in\cE$ with $\Re z<C$. We use special notation for important examples, namely
\begin{equation}
\label{EqBgPhgIndexSets}
  (z,k) := \{(z+j,l)\colon j\in\N_0,\ l\leq k\}, \qquad
  (z,k)_+ := \{(z+j,l)\colon j\in\N_0,\ l\leq k+j\}.
\end{equation}
We moreover write
\[
  (z,*)
\]
for an index set which is contained in $(z+\N_0)\times\N_0$, but which we otherwise do not specify explicitly. We write $\Re\cE>\alpha$ if $\Re z>\alpha$ for all $(z,k)\in\cE$. Given index sets $\cE$ and $\cF$, we write $\cE+\cF=\{(z+w,k+l)\colon(z,k)\in\cE,\ (w,l)\in\cF\}$. In the special case $\cF=\cE$, we write $\cE+\cE=:2\cE$, and inductively $j\cE:=(j-1)\cE+\cE$. If moreover $\Re\cE>0$, then we define the \emph{nonlinear closure} of $\cE$ by
\begin{equation}
\label{EqBgPhgIndexSets2}
  \cE_\times := \bigcup_{j\in\N} j\cE;
\end{equation}
it is the smallest index set which contains $\cE$ and for every finite collection $(z_1,k_1)$, $\ldots$, $(z_N,k_N)$ also contains $(\sum_{i=1}^N z_i,\sum_{i=1}^N k_i)$. Note that for $z,k\in\N_0$, we have $(z,k)_{+\times}=((z,k)_+)_\times=(z,k)_+$ when $k\leq z$; otherwise $(z,k)_{+\times}\supsetneq(z,k)_+$. 

On a manifold $X$ with boundary, and with $\rho\in\CI(X)$ denoting a boundary defining function which moreover satisfies $0\leq\rho<\frac12$, we define
\[
  \cA_\phg^\cE(X)
\]
to consist of all smooth functions $u$ on $X^\circ$ which in a collar neighborhood $[0,\infty)_\rho\times\pa X$ of $\pa X$ are asymptotic sums
\[
  u(\rho,y) \sim \sum_{(z,k)\in\cE} \rho^z(\log\rho)^k u_{(z,k)}(y),\qquad u_{(z,k)}\in\CI(\pa X),
\]
meaning that for all $C\in\R$, the difference of $u$ and the finite sum obtained by restricting to $(z,k)\in\cE$ with $\Re z\leq C$ lies in $\cA^C([0,\infty)\times\pa X)$. (This space is independent of the choice of $\rho$ and the collar neighborhood, and it is a module over $\CI(X)$.) When $\cE$ is nonlinearly closed, then $\cA_\phg^\cE(X)$ is an algebra under pointwise multiplication.

Consider next a manifold $M$ with corners, for concreteness $M=[0,\frac12)_{\rho_1}\times[0,\frac12)_{\rho_2}\times X$. Given two index sets $\cE_1,\cE_2\subset\C\times\N_0$, we then define $\cA_\phg^{\cE_1,\cE_2}(M)$ to consist of smooth functions $u$ on $M^\circ$ which are polyhomogeneous at $H_j=\rho_j^{-1}(0)$ with index set $\cE_j$. That is, at $H_1$, the function $u$ is an asymptotic sum
\[
  u(\rho_1,\rho_2,y) \sim \sum_{(z,k)\in\cE_1} \rho_1^z(\log\rho_1)^k u_{(z,k)}(\rho_2,y),\quad \rho_1\searrow 0,\qquad u_{(z,k)}\in\cA_\phg^{\cE_2}([0,\tfrac12)\times X),
\]
which now means that the difference of $u$ and the truncation of the sum to $\Re z\leq C$ lies in $\cA^{C,\alpha_2}(M)$ for all $C$, where $\alpha_2\in\R$ is any $C$-independent constant for which $\Re\cE_2>C$; and an analogous expansion holds at the other boundary hypersurface $H_2$. In this manner, elements of $\cA_\phg^{\cE_1,\cE_2}$ have \emph{joint asymptotic expansions}, or \emph{full compound asymptotics}, into terms $\rho_1^z\rho_2^w(\log\rho_1)^k(\log\rho_2)^l$ where $(z,k)\in\cE_1$ and $(w,l)\in\cE_2$. See \cite{MelroseDiffOnMwc}, \cite[\S2A]{MazzeoEdge}, and \cite{MelrosePushfwd}.

Computations of kernels and cokernels of b-operators on a manifold with boundary often involve \emph{boundary pairings}; the following result will be used frequently for this purpose in~\S\ref{SsAh0}.

\begin{lemma}[Boundary pairing computation]
\label{LemmaBgBdyPair}
  Let $X$ be a manifold with compact boundary $\pa X$; denote by $x\in\CI(X)$ a boundary defining function, and identify a collar neighborhood of $\pa X\subset X$ with a neighborhood of $\{0\}\times\pa X$ inside of $[0,\infty)_x\times\pa X$. Let $\chi\in\CI([0,\infty))$ be equal to $0$ on $[0,1]$ and equal to $1$ near $\infty$, and set $\chi_\eps(x)=\chi(\frac{x}{\eps})$ for $0<\eps\ll 1$. Fix a smooth density on $X^\circ$ which near $\pa X$ is equal to $\mu=x^{-w}|\frac{\dd x}{x}\nu(x)|$ where $w\in\R$, and $0<\nu$ is a smooth density on $\pa X$ which depends smoothly on $x$. Fix $\mu_\pa=x^{-w}|\frac{\dd x}{x}\nu(0)|$ as the density on $[0,\infty)\times\pa X$. Let $L\in x^{-\alpha}\Diffb^m(X)$, write $N(L)=x^{-\alpha}N(x^\alpha L)\in\Diff_{\bop,\rm I}^m([0,\infty)\times\pa X)$, and denote the indicial family of the normal operator of $x^\alpha L$ by $N(x^\alpha L,\lambda)\in\Diff^m(\pa X)$, $\lambda\in\C$. Let $u,v^*$ be such that $L u=0$, $L^*v^*=0$, and suppose that
  \begin{align*}
    u &\in \cA_\phg^{(z,k)}(X) + \cA^{\Re z+\delta}(X), \\
    v^* &\in x^{-\bar z+\alpha+w}\CI(X) + \cA^{-\Re z+\alpha+w+\delta}(X),
  \end{align*}
  where $z\in\C$, $k\in\N_0$, and $\delta>0$. Write the leading order terms of $u$, resp.\ $v^*$ as
  \[
    u_\pa = \sum_{j=0}^k \frac{1}{j!}x^z(\log x)^j u_j,\qquad
    v^*_\pa := x^{-z+\alpha+w}v^*_0,
  \]
  respectively, where $u_j,v^*_0\in\CI(\pa X)$. Set $\tilde u_\pa(\lambda)=\sum_{j=0}^k (\lambda-z)^{-j-1}u_j$. Then
  \begin{equation}
  \label{EqBgBdyPair}
  \begin{split}
    \lim_{\eps\searrow 0} \la [L,\chi_\eps]u, v^* \ra_{L^2(X,\mu)} &= \la [N(L),\chi]u_\pa, v^*_\pa \ra_{L^2([0,\infty)\times\pa X,\mu_\pa)} \\
      &= \big\la \bigl(N(x^\alpha L,\lambda)\tilde u_\pa(\lambda)\bigr)|_{\lambda=z}, v^*_0 \big\ra_{L^2(\pa X,\nu(0))} \\
      \Bigl(\ \text{in the case $k=0$:}\ &= \la \pa_\lambda N(x^\alpha L,z)u_0,v_0^*\ra_{L^2(\pa X,\nu(0))}\ \Bigr).
  \end{split}
  \end{equation}
  The same holds true \emph{mutatis mutandis} when $L$ acts between sections of vector bundles $E,F$ which are equipped with nondegenerate (but not necessarily positive definite) fiber inner products.
\end{lemma}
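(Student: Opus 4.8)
The plan is to localize the pairing to a collar of $\pa X$, reduce it to the dilation-invariant normal operator $N(x^\alpha L)$ acting only on the leading terms $u_\pa,v^*_\pa$, and then evaluate the resulting model pairing by a Mellin transform in $x$; the only delicate point will be controlling the logarithmic terms.

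Since $Lu=0$ on $X^\circ$ while $\chi_\eps u$ is supported in $X^\circ$, we have $[L,\chi_\eps]u=L(\chi_\eps u)$, and $[L,\chi_\eps]$ has coefficients supported in an annulus $\{c_1\eps\le x\le C_0\eps\}$ which for small $\eps$ lies in the collar $[0,\tfrac12)_x\times\pa X$; so everything happens in the collar and the pairings are absolutely convergent integrals over a shrinking annulus. I would write $\tilde L:=x^\alpha L=N(\tilde L)+xR$ with $R\in\Diffb^m$ and $N(\tilde L)=N(x^\alpha L)$ dilation invariant, split $u=u_\pa+u_{\rm e}$ with $u_{\rm e}\in\cA^{\Re z+\delta'}(X)$ (folding the subleading polyhomogeneous terms of $u$ into $u_{\rm e}$, for some $\delta'\in(0,1]$), $v^*=v^*_\pa+v^*_{\rm e}$ with $v^*_{\rm e}\in\cA^{-\Re z+\alpha+w+\delta}(X)$, and $\mu=\mu_\pa(1+\cO(x))$. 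Substituting these into $\la x^{-\alpha}[\tilde L,\chi_\eps]u,v^*\ra_{L^2(X,\mu)}$ and expanding, every term except $\la x^{-\alpha}[N(\tilde L),\chi_\eps]u_\pa,v^*_\pa\ra_{L^2(\mu_\pa)}$ carries an extra factor $x$ or $x^{\delta'}$; since on the annulus $\{x\sim\eps\}$ one has $|u_\pa|\lesssim\eps^{\Re z}|\log\eps|^k$, $|v^*_\pa|\lesssim\eps^{-\Re z+\alpha+w}$, $x^{-\alpha}\sim\eps^{-\alpha}$ and $\int_{\rm ann}\mu_\pa\sim\eps^{-w}$, each such term is $\cO(\eps^{\delta''}|\log\eps|^k)\to0$. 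Hence it remains to evaluate $\lim_{\eps\to0}\la x^{-\alpha}[N(\tilde L),\chi_\eps]u_\pa,v^*_\pa\ra_{L^2(\mu_\pa)}$.

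The two facts that make this work are the indicial identities the equations impose on the leading data. Comparing leading coefficients: on one hand $\tilde L u_\pa=-\tilde L u_{\rm e}\in\cA^{\Re z+\delta'}$ and $\tilde L u_\pa-N(\tilde L)u_\pa=xR u_\pa\in\cA^{\Re z+1-\eta}$; on the other hand $N(\tilde L)u_\pa=\sum_{i=0}^k x^z(\log x)^ic_i$ with $c_i\in\CI(\pa X)$ and no other powers of $x$, as $N(\tilde L)$ is dilation invariant. Therefore all $c_i=0$, i.e.\ $N(x^\alpha L)u_\pa=0$ — equivalently the indicial recursion relations hold, equivalently $\lambda\mapsto N(x^\alpha L,\lambda)\tilde u_\pa(\lambda)$ is holomorphic at $\lambda=z$ (which is also what makes the right-hand side of \eqref{EqBgBdyPair} well-defined). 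The same argument applied to $L^*v^*=0$ gives $N(x^\alpha L,z)^*v^*_0=0$ for the formal adjoint of the indicial operator w.r.t.\ $\nu(0)$. Using the first identity, $[N(\tilde L),\chi_\eps]u_\pa=N(\tilde L)(\chi_\eps u_\pa)=-N(\tilde L)\bigl((1-\chi_\eps)u_\pa\bigr)=:-g_\eps$, and $g_\eps$ vanishes wherever $\chi_\eps$ is locally constant (again because $N(\tilde L)u_\pa=0$), hence is compactly supported in $(0,\infty)\times\pa X$. Pairing with $v^*_\pa=x^{-z+\alpha+w}v^*_0$ and integrating in $x$ first rewrites the model pairing as $-\la\widehat{g_\eps}(z),v^*_0\ra_{L^2(\pa X,\nu(0))}$, where $\widehat{g_\eps}(\lambda):=\int_0^\infty x^{-\lambda}g_\eps(x)\,\tfrac{dx}x$ is the Mellin transform of $g_\eps$ in $x$, entire in $\lambda$ by the compact support of $g_\eps$.

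To finish I would expand $g_\eps$ in powers of $\log\eps$ via $u_\pa(\eps x)$, compute each Mellin transform in the half-plane $\Re\lambda<\Re z$ using $\int_0^1 x^{z-\lambda}(\log x)^j\,\tfrac{dx}x=-j!/(\lambda-z)^{j+1}$ together with the commutation of the Mellin transform with the dilation-invariant $N(\tilde L)$, and continue analytically in $\lambda$. The polar part at $\lambda=z$ of $\mathcal M[(1-\chi_\eps)u_\pa]$ is exactly the generating function $-\tilde u_\pa(\lambda)$; using the relations $\tilde u_\pa^{(m)}(\lambda)=(\lambda-z)^m\tilde u_\pa(\lambda)-\sum_{i<m}(\lambda-z)^{m-1-i}u_i$ among the generating functions of the $\log$-shifted leading parts, together with the holomorphy of $N(x^\alpha L,\lambda)\tilde u_\pa(\lambda)$ at $z$, one gets $\widehat{g_\eps}(z)=-\bigl(N(x^\alpha L,\lambda)\tilde u_\pa(\lambda)\bigr)\big|_{\lambda=z}+r_\eps$, where $r_\eps$ is a sum of terms in $\ran N(x^\alpha L,z)$ (it is in these that the $\log\eps$-dependence resides). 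Pairing against $v^*_0\in\ker N(x^\alpha L,z)^*$ annihilates $r_\eps$, so the limit exists, is independent of $\eps$ (and of $\chi$), and equals $\bigl\la\bigl(N(x^\alpha L,\lambda)\tilde u_\pa(\lambda)\bigr)\big|_{\lambda=z},v^*_0\bigr\ra$ — which is the middle expression of \eqref{EqBgBdyPair}, its equality with the first expression being the $\eps=1$ instance, while the $k=0$ parenthetical is immediate since then $\tilde u_\pa(\lambda)=(\lambda-z)^{-1}u_0$ and $N(x^\alpha L,z)u_0=0$. The hard part will be precisely this last step: a priori the limit is a polynomial of degree $k$ in $\log\eps$, and it is the two indicial identities that force its unwanted coefficients to vanish; the localization, power counting, and Mellin asymptotics of polyhomogeneous functions are all routine. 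The vector bundle case is identical, with transposes taken relative to the given fiber inner products.
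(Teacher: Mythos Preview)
Your proposal is correct and follows essentially the same route as the paper: reduce to the normal operator and leading terms by power counting on the annulus, use the indicial identities $N(x^\alpha L)u_\pa=0$ and $N(x^\alpha L,z)^*v^*_0=0$, and evaluate the resulting model pairing via Mellin/residue calculus. The paper packages this slightly differently: it first conjugates away $\alpha,w,z$ to reduce to $\alpha=w=z=0$, then shows the model pairing is independent of the cutoff $\chi$ (hence of $\eps$) by a direct integration by parts using $\chi-\tilde\chi\in\CIc((0,\infty))$, and only afterwards computes its value via a contour-integral representation $u_\pa=\operatorname{Res}_{\lambda=z}x^\lambda\tilde u_\pa(\lambda)$. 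Your route instead keeps $\eps$ throughout and kills the $\log\eps$-dependent remainder $r_\eps$ by showing $r_\eps\in\ran N(x^\alpha L,z)$, which pairs to zero against $v^*_0\in\ker N(x^\alpha L,z)^*$; this is a valid alternative to the paper's integration-by-parts argument for $\chi$-independence.
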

\begin{proof}
  Replacing $L$ and $\mu$ by $x^{-w}L$ and $x^w\mu$ gives the same setup, but with $w=0$ and a different value for $\alpha$. Similarly, replacing $L$, $v^*$ by $x^\alpha L$, $x^{-\alpha}v^*$ gives the same setup, but now also with $\alpha=0$. Finally, we may replace $L$ by $x^{-z}L x^z$ and $u$, $v^*$ by $x^{-z}u$, $x^{\bar z}v^*$. Altogether, we may thus assume that $\alpha=w=z=0$.

  Note that $N(L)u_\pa=0$ and $N(L^*)v^*_\pa=0$. To prove the first equality in~\eqref{EqBgBdyPair}, note that replacing $L$, $u$, $v^*$ by their respective leading order terms $N(L)$, $u_\pa$, $v^*_\pa$ produces vanishing errors in the limit $\eps\searrow 0$. The second expression in~\eqref{EqBgBdyPair} on the other hand is unchanged if we pass from $\chi$ to another cutoff $\tilde\chi\in\CI([0,\infty))$ which is $0$ near $1$ and $1$ near $\infty$, for integration by parts in $\la [N(L),\chi-\tilde\chi]u_\pa,v_\pa^*\ra=\la N(L)((\chi-\tilde\chi)u_\pa),v_\pa^*\ra$ does not produce any boundary terms since $\chi-\tilde\chi\in\CIc((0,\infty))$.

  In order to prove the second equality in~\eqref{EqBgBdyPair}, note that
  \[
    u_\pa(x)=\Res_{\lambda=0}\bigl(x^\lambda\tilde u_\pa(\lambda)\bigr)=\frac{1}{2\pi i}\oint_0 x^\lambda\tilde u_\pa(\lambda)\,\dd\lambda,
  \]
  where we integrate along a small circle around $0$. We conclude that
  \[
    0 = N(L)u_\pa = \frac{1}{2\pi i}\oint_0 x^\lambda N(L,\lambda)\tilde u_\pa(\lambda)\,\dd\lambda,
  \]
  i.e.\ $N(L,\lambda)\tilde u_\pa(\lambda)$ is holomorphic. (Conversely, the holomorphicity of $N(L,\lambda)\tilde u_\pa(\lambda)$ implies $N(L)u_\pa=0$.) Therefore, the second line of~\eqref{EqBgBdyPair} is well-defined. Writing
  \[
    N(L) = \sum_{j=0}^m L_j(x\pa_x)^j,\qquad L_j\in\Diff^{m-j}(\pa X),
  \]
  we have $N(L,\lambda)=\sum_{j=0}^m L_j\lambda^j$, and we can then compute
  \begin{align*}
    &\la [N(L),\chi]u_\pa,v_\pa^* \ra \\
    &\quad = \frac{1}{2\pi i}\oint_0 \int_0^\infty \big\la [N(L),\chi](x^\lambda\tilde u_\pa(\lambda)), v_0^*\big\ra_{L^2(\pa X)}\,\frac{\dd x}{x}\,\dd\lambda \\
    &\quad = \frac{1}{2\pi i}\oint_0 \int_0^\infty \sum_{j=0}^m \bigg\la L_j \sum_{i=0}^{j-1} (x\pa_x)^i x\chi'(x) (x\pa_x)^{j-i-1} (x^\lambda\tilde u_\pa(\lambda)), v_0^* \bigg\ra_{L^2(\pa X)}\,\frac{\dd x}{x}\,\dd\lambda.
  \end{align*}
  Integration by parts of $(x\pa_x)^i$ produces $0$ (since $x\pa_x v_0^*=0$) unless $i=0$, so this is further equal to
  \begin{align*}
    &\frac{1}{2\pi i}\oint_0 \int_0^\infty x\chi'(x) \sum_{j=0}^m \Big\la L_j (x\pa_x)^{j-1}(x^\lambda\tilde u_\pa(\lambda)),v_0^*\Big\ra_{L^2(\pa X)}\,\frac{\dd x}{x}\,\dd\lambda \\
    &\quad = \int_0^\infty \frac{1}{2\pi i} \oint_0 x\chi'(x) \big\la x^\lambda \lambda^{-1}N(L,\lambda)\tilde u_\pa(\lambda), v_0^*\big\ra_{L^2(\pa X)}\,\dd\lambda\,\frac{\dd x}{x}.
  \end{align*}
  But $N(L,\lambda)\tilde u_\pa(\lambda)=f_0+\cO(\lambda)$ is holomorphic, so
  \[
    \frac{1}{2\pi i}\oint_0 x^\lambda\lambda^{-1}N(L,\lambda)\tilde u_\pa(\lambda)\,\dd\lambda = f_0 = \bigl(N(L,\lambda)\tilde u_\pa(\lambda)\bigr)|_{\lambda=0}.
  \]
  Using $\int_0^\infty x\chi'(x)\,\frac{\dd x}{x}=1$ finally proves~\eqref{EqBgBdyPair}.
\end{proof}

\section{Structure and geometry of the total gluing spacetime}
\label{SG}

Denote by $M$ an open $(n+1)$-dimensional manifold. Denote by $c\colon\R\to M$ an embedding whose image $\cC=c(\R)\subset M$ is a closed 1-dimensional submanifold. (An example to keep in mind is $M=(-1,1)\times\R^n$ and $\cC=\{(t,x_0)\colon t\in(-1,1)\}$.) We shall construct singular deformations, depending on a small parameter $\eps>0$, of a Lorentzian metric on $M$ by working on a resolution of an $(n+2)$-dimensional space $\wt M'$ which fibers over $[0,1)_\eps$ with typical fiber $M$. We immediately fix a trivialization
\[
  \wt M' = [0,1)_\eps \times M.
\]

\begin{definition}[Total gluing spacetime; tangent bundle]
\label{DefGTot}
  The \emph{total gluing spacetime} for $(M,\cC)$ is the resolution
  \begin{equation}
  \label{EqGTot}
    \wt M := [\wt M';\{0\}\times\cC] = \bigl[\,[0,1)\times M; \{0\}\times\cC\,\bigr]
  \end{equation}
  of $\wt M'$. The blow-down map is denoted $\wt\upbeta\colon\wt M\to\wt M'$. We denote by
  \[
    M_\circ=\wt\upbeta^*\bigl(\eps^{-1}(0)\bigr),
    \qquad
    \hat M=\wt\upbeta^*(\{0\}\times\cC)
  \]
  the lift of $\{0\}\times M$ and the front face, respectively. The restrictions of $\wt\upbeta$ to $M_\circ$ and $\hat M$ are denoted $\upbeta_\circ\colon M_\circ\to M$ and $\hat\upbeta\colon\hat M\to\cC$, respectively. The fiber of $\wt M$ over $\eps\in(0,1)$ is denoted $\wt M_\eps$.\footnote{The preimage of $0$ under $\eps\colon\wt M\to[0,1)$ is the union $M_\circ\cup\hat M$.} Moreover, the fiber of $\hat M$ over a point $p\in\cC$ is denoted
  \[
    \hat M_p:=\wt\upbeta^*(\{0\}\times\{p\}) \subset \hat M.
  \]
  We denote by $\wt T\wt M'\to\wt M'$ the vertical tangent bundle, i.e.\ the bundle of tangent vectors which are tangent to the fibers of $\wt M'\to[0,1)$, and by $\wt T\wt M\to\wt M$ the pullback of $\wt T\wt M'\to\wt M'$ along $\wt\upbeta$. Finally, we write $\wt\cV(\wt M):=\CI(\wt M;\wt T\wt M)$.
\end{definition}

Thus, $M_\circ=[M;\cC]$ is a manifold with boundary, and $\hat M=\ol{N}\cC$ is the radially compactified normal bundle of $\cC$, which is a bundle of closed $n$-balls over $\cC$. (See the last paragraph before~\S\ref{SsBgLie}.) See Figure~\ref{FigGTot}. Directly from the definition, we have $\wt T_{M_\circ}\wt M=\upbeta_\circ^*(T M_\circ)$. Definition~\ref{DefGTot} is completely analogous to \cite[Definition~3.1]{HintzGlueID}; however, the tangent bundle $\wt T\wt M$ has different features over the front face $\hat M$ due to the 1-dimensional nature of the submanifold being blown up in~\eqref{EqGTot}, as we will see in~\S\ref{SsGff}.

\begin{figure}[!ht]
\centering
\includegraphics{FigGTot}
\caption{\textit{On the left:} the total space $\wt M$ and its boundary hypersurfaces $\hat M$ and $M_\circ$. \textit{On the right:} the product space $\wt M'=[0,1)\times M$ and the inextendible curve $\cC\subset M\cong\{0\}\times M\subset\wt M'$. Also indicated are the blow-down map $\wt\upbeta$ and its restriction $\upbeta_\circ$ to $M_\circ$, as well as a fiber $\hat M_p$ of $\hat M$ (on the left) over the base point $p\in\cC$ (on the right). The blow-down map $\hat\upbeta\colon\hat M\to\cC$ is not shown here.}
\label{FigGTot}
\end{figure}

We write
\begin{equation}
\label{EqGBdfs}
  \hat\rho\in\CI(\wt M),\qquad \rho_\circ\in\CI(\wt M)
\end{equation}
for defining functions of $\hat M$ and $M_\circ$, respectively; we shall also use this notation for local defining functions (i.e.\ defining functions of $\hat M\cap\cU$ and $M_\circ\cap\cU$ defined over an open subset $\cU\subset\wt M$ depending on the context). For local coordinate computations near $\cC$ on $M$, we shall use
\begin{subequations}
\begin{equation}
\label{EqGLocCoord}
  (t,x),\qquad t\in\R,\quad x\in\R^n,
\end{equation}
with $c(t)=(t,0)$. These coordinates are valid for $|x|<r_0(t)$ where $0<r_0\in\CI(\R)$. Local coordinates near the interior $\hat M^\circ$ of $\hat M$ are then
\begin{equation}
\label{EqGLocCoordHat}
  (\eps,t,\hat x),\qquad \hat x:=\frac{x}{\eps}.
\end{equation}
Near the corner $\hat M\cap M_\circ$, we can use
\begin{equation}
\label{EqGLocCoordCorner}
  (t,\hat\rho,\rho_\circ,\omega),\qquad \hat\rho=|x|\in[0,r_0(t)),\quad \rho_\circ=\frac{\eps}{|x|},\quad \omega:=\frac{x}{|x|}\in\Sph^{n-1}.
\end{equation}
Projective coordinates are computationally more convenient at times; if we write $x=(x^1,x')$, then in the region where $x^1\gtrsim|x'|$, we may use
\begin{equation}
\label{EqGLocCoordProj}
  (t,\hat\rho,\rho_\circ,\hat x'),\qquad \hat\rho=x^1,\quad\rho_\circ=\frac{\eps}{x^1},\quad \hat x'=\frac{x'}{x^1}\in\R^{n-1}.
\end{equation}
\end{subequations}
Examples of local defining functions near $\hat M$ are
\[
  \hat\rho = (\eps^2+|x|^2)^{1/2} = \eps\la\hat x\ra,\qquad
  \rho_\circ = \frac{\eps}{(\eps^2+|x|^2)^{1/2}} = \la\hat x\ra^{-1}.
\]

We record the following analogue of \cite[Lemma~3.4]{HintzGlueID}:

\begin{lemma}[Relationships between parameterized spaces]
\label{LemmaGRel}
  The identity map $\wt M'\to\wt M'$ lifts to a diffeomorphism
  \[
    [\wt M;[0,1)_\eps\times\cC] \xra{\cong} \bigl[ [0,1)_\eps\times M_\circ; \{0\}\times\pa M_\circ \bigr].
  \]
  Using the above coordinates $(t,x)$ near $\{0\}\times\cC$, the identity map $\wt M'\to\wt M'$ also lifts to a diffeomorphism
  \begin{equation}
  \label{EqGRelHat}
    \wt M\cap\{|x|<r_0(t)\} \xra{\cong} \bigl[ [0,1)_\eps\times\hat M; \{0\}\times\pa\hat M \bigr] \cap \{ (\eps,t,\hat x) \colon |\hat x|<\eps^{-1}r_0(t) \}.
  \end{equation}
\end{lemma}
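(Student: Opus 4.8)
The claim is purely about compatibility of iterated blow-ups, so the natural strategy is to verify it in local coordinates near the submanifolds being blown up, since away from those submanifolds every map in sight is the identity (or a diffeomorphism already established by previous constructions). For the first assertion: both $[\wt M;[0,1)_\eps\times\cC]$ and $[\,[0,1)_\eps\times M_\circ;\{0\}\times\pa M_\circ\,]$ arise from $\wt M' = [0,1)_\eps\times M$ by blowing up two nested p-submanifolds, namely $\{0\}\times\cC$ and $[0,1)_\eps\times\cC$ on one side, and $[0,1)_\eps\times\cC$ (which after blow-up gives $M_\circ$ fibered over $[0,1)_\eps$, with $\{0\}\times\cC$ lifting to $\{0\}\times\pa M_\circ$) on the other. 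Thus I would invoke the commutativity-of-blow-ups principle recalled in \S\ref{SBg}: if $T\subset S\subset M$ are both cut out by subsets of a common coordinate system, then $[M;S;T]\cong[M;T;S]$. Here $T=\{0\}\times\cC$ and $S=[0,1)_\eps\times\cC$ inside $\wt M'$; in the coordinates $(\eps,t,x)$ near $\{0\}\times\cC$ the set $S$ is $\{x=0\}$ and $T$ is $\{\eps=0,\ x=0\}$, so both are coordinate subspaces and $T\subset S$ is again coordinate-cut-out inside $S$. Therefore $[\wt M';S;T]\cong[\wt M';T;S]$; the left side is $[\,[0,1)_\eps\times M_\circ;\{0\}\times\pa M_\circ\,]$ and the right side is $[\wt M;[0,1)_\eps\times\cC]$, which is exactly the asserted diffeomorphism, extending the identity away from $S$.

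For the second assertion the set-up is entirely analogous but localized to $\{|x|<r_0(t)\}$, where the relevant near-field coordinates $(\eps,t,\hat x)$ with $\hat x=x/\eps$ are valid. I would argue as follows. On $\{|x|<r_0(t)\}$, passing from $\wt M'$ to $\wt M$ amounts to blowing up $\{\eps=0,\ x=0\}$; in the coordinates $(\eps,t,\hat x)$ this near field is a smooth chart, and the original space $M$ with its coordinates $(t,x)=(t,\eps\hat x)$ corresponds to blowing up the zero section's sphere bundle — i.e. $\hat M$ is $\ol N\cC$ fiberwise. On the other hand, $[\,[0,1)_\eps\times\hat M;\{0\}\times\pa\hat M\,]$ is built from $[0,1)_\eps\times\hat M$ by blowing up the codimension-$1$-in-the-$\hat M$-factor boundary piece at $\eps=0$. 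The identification is again an instance of blow-up commutation: starting from $[0,1)_\eps\times\overline{N\cC}$ and recalling $\overline{N\cC}\cong\hat M$, one checks in the projective coordinates \eqref{EqGLocCoordHat} and \eqref{EqGLocCoordProj} that the two orders of resolving $\{\eps=0\}$ and the sphere-at-infinity $\pa\hat M$ of the fibers agree; the restriction $\{|\hat x|<\eps^{-1}r_0(t)\}$ on the right-hand side records precisely the image of the coordinate domain $\{|x|<r_0(t)\}$ under $x=\eps\hat x$. Concretely I would use $\hat\rho=|x|$, $\rho_\circ=\eps/|x|$, $\omega=x/|x|$ from \eqref{EqGLocCoordCorner} near the corner $\hat M\cap M_\circ$ and match them with corresponding boundary-defining functions on the right-hand space, verifying that the transition functions are smooth down to every boundary face.

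The main obstacle is really bookkeeping rather than conceptual: making sure that the p-submanifold hypotheses are genuinely satisfied (so that all blow-ups are defined and the commutation lemma from \S\ref{SBg} applies verbatim), and that the localization by $r_0(t)$ is handled consistently on both sides — i.e. that the coordinate change $x\mapsto\hat x=x/\eps$ really does carry the domain $\{|x|<r_0(t)\}$ in $\wt M$ onto $\{|\hat x|<\eps^{-1}r_0(t)\}$ in $[\,[0,1)_\eps\times\hat M;\{0\}\times\pa\hat M\,]$ including at the boundary hypersurfaces, where $r_0$ being merely a positive smooth function of $t$ (and not constant) must not cause any loss of smoothness. Since the analogous statement is \cite[Lemma~3.4]{HintzGlueID}, I expect the proof to be a short adaptation, and would simply say so, carrying out the one coordinate computation near the corner that differs (due to $\cC$ being $1$-dimensional here) and referring to \cite{HintzGlueID,MelroseDiffOnMwc} for the rest.
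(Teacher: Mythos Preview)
Your proposal is correct and follows essentially the same route as the paper: both diffeomorphisms are instances of the blow-up commutation principle from \S\ref{SBg}, and the paper simply cites \cite[Lemma~3.4]{HintzGlueID} (with $X,\wt X,X_\circ,\{\fp\}$ replaced by $M,\wt M,M_\circ,\cC$ for the first, and with the second being a $t$-parameterized version of the reference). Your identification of the nested submanifolds $T=\{0\}\times\cC\subset S=[0,1)_\eps\times\cC$ for the first assertion is exactly right, and the paper's proof amounts to the same observation, just stated more tersely by deferring to the reference.
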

\begin{proof}
  In local coordinates, the second diffeomorphism is a smoothly parameterized (by $t\in\R$) version of the second diffeomorphism in \cite[Lemma~3.4]{HintzGlueID}. The first diffeomorphism can be obtained as in the reference upon replacing $X,\wt X,X_\circ,\{\fp\}$ there by $M,\wt M,M_\circ,\cC$, respectively.
\end{proof}

We denote by
\begin{equation}
\label{EqGRelCutoffs}
  \hat\chi,\ \chi_\circ \in \CI(\wt M)
\end{equation}
two cutoff functions, with $\hat\chi$, resp.\ $\chi_\circ$ identically $1$ near, and supported in, a collar neighborhood of $\hat M$, resp.\ $M_\circ$, and indeed so that $|x|<r_0(t)$ on $\supp\hat\chi$.

\subsection{The front face \texorpdfstring{$\hat M$}{of the total spacetime}; families of stationary tensors}
\label{SsGff}

Consider $p\in\cC$ and a point $q\in\hat M_p^\circ=N_p\cC=T_p M/T_p\cC$. Given an element $V\in\wt T_q\wt M=T_p M$, note that we can regard $V\in T_p M$ as a translation-invariant vector field $V'\in\cV(T_p M)$ on $T_p M$ itself by means of the canonical isomorphism $T_z(T_p M)\cong T_p M$ for all $z\in T_p M$. Now, points in $N_p\cC$ (such as $q$) are the same as orbits of the translation action of $T_p\cC$ on $T_p M$; thus, we may restrict $V'$ to $q\subset T_p M$. We have defined an isomorphism
\begin{equation}
\label{EqGffExt}
  \wt T_q\wt M \ni V \mapsto \sfe(V) \in \{\text{constant maps}\ q\to T_q(T_p M)\},\qquad q\in N_p\cC=T_p M/T_p\cC.
\end{equation}
See Figure~\ref{FigGffExt}.

\begin{figure}[!ht]
\centering
\includegraphics{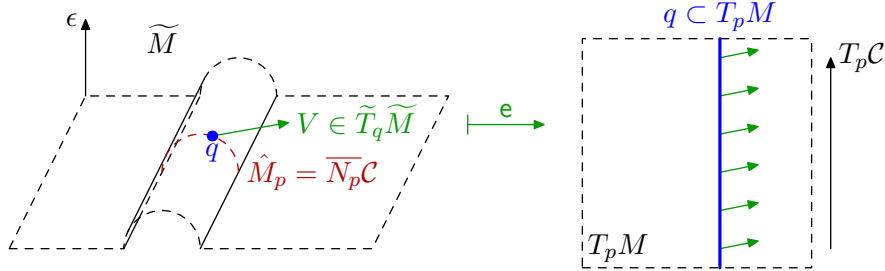}
\caption{Illustration of the map~\eqref{EqGffExt}.}
\label{FigGffExt}
\end{figure}

Given a section $V$ of $\wt T\wt M$ over $\hat M_p^\circ$, we can combine the images of $V|_q$, $q\in\hat M_p^\circ$, into a single element $\sfe(V)\in\CI_{\rm I}(T_p M;T(T_p M))\subset\cV(T_p M)$, where $\CI_{\rm I}(T_p M;T(T_p M))$ is the subspace of sections of $T(T_p M)\to T_p M$ which are constant along the fibers of $T_p M\to N_p\cC$. Equivalently, elements of $\CI_{\rm I}(T_p M;T(T_p M))$ are precisely the stationary ones in that they are invariant (hence the subscript `$\rm I$') under the translation action of $T_p\cC$ on $T_p M$. These maps $V\mapsto\sfe(V)$ in turn can be combined into a single isomorphism\footnote{We write $T^{a,b}M=(\bigotimes^a T M)\otimes(\bigotimes^b T^*M)$, similarly for tensor powers of other (tangent) bundles and their duals.}
\begin{equation}
\label{EqGffStatExt}
  \CI(\hat M^\circ;\wt T_{\hat M}^{a,b}\wt M) \ni V \mapsto \sfe(V) \in \CI_{\rm I}(T_\cC M;{}^\vee T^{a,b}(T_\cC M)),
\end{equation}
initially for $(a,b)=(1,0)$ and then also for general $a,b\in\N_0$. The notation here is as follows: the space on the right is the space of smooth sections which are stationary when restricted to each $T_p M$, $p\in\cC$; and ${}^\vee T(T_\cC M)\to T_\cC M$ denotes the vertical tangent bundle of the bundle $T_\cC M\to\cC$ (which is thus of rank $n+1$); that is, for $z\in T_\cC M$---i.e.\ $z\in T_p M$ where $p\in\cC$---we have $({}^\vee T(T_\cC M))_z=T_z(T_p M)\cong T_p M$.

We make this concrete in local coordinates~\eqref{EqGLocCoord}--\eqref{EqGLocCoordHat}. Write
\begin{equation}
\label{EqGffHatCoords}
  \hat t=\dd t(-)\in\R,\qquad \hat x=\dd x(-)\in\R^n
\end{equation}
for the induced linear coordinates on the fibers of $T_\cC M\to\cC$; also, $\hat t$ is a linear coordinate on the fibers of $T\cC\to\cC$, and $\hat x$ is a linear coordinate system on the fibers of $N\cC\to\cC$ since $T\cC=\R\pa_t$ and $\dd x(\pa_t)=0$. Note then that the point $q=(t,\hat x)\in\hat M_p$, whose base point in $(t,x)$-coordinates is $p=\hat\upbeta(q)=(t,0)$, is the equivalence class of $\hat x\pa_x$ in $T_p M/\R\pa_t$, whose coordinates in $T_p M$ are thus equal to $\hat x$ indeed; this justifies the notation. Moreover, the lifts of $\pa_t,\pa_{x^j}\in T_p M$ to $T(T_p M)$ are $\pa_{\hat t},\pa_{\hat x^j}$. Therefore, the map $\sfe$ on vector fields is
\begin{equation}
\label{EqGffStatExtVF}
  \sfe \colon \Bigl((t,\hat x)\mapsto a(t,\hat x)\pa_z\Bigr) \mapsto \Bigl( (t,\hat x)\mapsto a(t,\hat x)\pa_{\hat z}\Bigr),
\end{equation}
where $z=t$ or $x^j$ and $\hat z=\hat t$ or $\hat x^j$. On symmetric 2-tensors,
\begin{equation}
\label{EqGffStatExtS2}
\begin{split}
  \sfe \colon &\Bigl((t,\hat x)\mapsto g_{0 0}(t,\hat x)\,\dd t^2 + 2 g_{0 j}(t,\hat x)\,\dd t\,\dd x^j + g_{i j}(t,\hat x)\,\dd x^i\,\dd x^j\Bigr) \\
  &\qquad \mapsto \Bigl( (t,\hat x) \mapsto g_{0 0}(t,\hat x)\,\dd\hat t^2 + 2 g_{0 j}(t,\hat x)\,\dd\hat t\,\dd\hat x^j + g_{i j}(t,\hat x)\,\dd\hat x^i\,\dd\hat x^j\Bigr).
\end{split}
\end{equation}
The right hand side is a smooth family, parameterized by $t\in\R$, of stationary symmetric 2-tensors on $\R^{1+n}_{\hat t,\hat x}$.

In order to obtain a uniform description of the isomorphism~\eqref{EqGffStatExt} on $\hat M$, we first introduce:

\begin{definition}[Bundle of stationary spacetimes; tangent bundles]
\label{DefGffBdl}
  Define the fiber bundle
  \[
    \breve T_\cC M = \bigsqcup_{p\in\cC} \{p\} \times \breve T_p M \to\cC,\qquad \breve T_p M:=[\ol{T_p}M;\pa\ol{T_p}\cC],
  \]
  with base $\cC$ and typical fiber $[\ol{\R\times\R^n};\pa\ol\R\times\{0\}]$. We write
  \[
    {}^{\tscop,\vee}T(\breve T_\cC M)\to\breve T_\cC M,\qquad {}^\vee\cV_\tscop(\breve T_\cC M):=\CI(\breve T_\cC M;{}^{\tscop,\vee}T(\breve T_\cC M)),
  \]
  for the vertical 3sc-tangent bundle and the space of its smooth sections: the fiber of ${}^{\tscop,\vee}T(\breve T_\cC M)$ over a point $z\in\breve T_\cC M$ lying over $p\in\cC$ is $\Ttsc(\breve T_p M)$, which is the pullback along $\breve T_p M\to\ol{T_p}M$ of $\Tsc(\ol{T_p}M)$ (which has as a smooth frame the vector fields $\pa_{\hat t}$, $\pa_{\hat x^j}$, $j=1,\ldots,n$, in the coordinates used in~\eqref{EqGffStatExtVF}).
\end{definition}

The manifold interior of $\breve T_p M$ is $T_p M$; and for $z\in T_p M$, we have ${}^{\tscop,\vee}T_z(\breve T_\cC M)={}^\vee T_z(\breve T_\cC M)=T_z(T_p M)\cong T_p M$. This means that the restriction of ${}^{\tscop,\vee}T(\breve T_\cC M)$ to the manifold interior $T_\cC M$ of $\breve T_\cC M$ is equal to the bundle ${}^\vee T(T_\cC M)$ featuring in~\eqref{EqGffStatExt}.

Each fiber of $\breve T_\cC M$ carries a translation action by $T_p\cC$. (The closure of an orbit of this action is either a copy of $\ol\R$ or a single point in $\pa\breve T_p M$.) Moreover, the projection $T_p M\to N_p\cC$ extends to a smooth submersion
\begin{equation}
\label{EqGffBrevePi}
  \breve\pi \colon \breve T_p M\to\ol{N_p}\cC=\hat M_p.
\end{equation}
See Figure~\ref{FigGffBdl}. This is an instance of the following result:

\begin{lemma}[Quotient spaces and compactifications]
\label{LemmaGff3bProj}
  Let $V$ be a finite-dimensional real vector space and $W\subset V$ a linear subspace. Then the projection $V\to V/W$ extends by continuity from the interior to a smooth fibration
  \begin{equation}
  \label{EqGff3bProj}
    [\ol{V};\pa\ol{W}] \to \ol{V/W}.
  \end{equation}
  The preimage of a boundary defining function of $\ol{V/W}$ is a defining function of the lift of $\pa\ol V$.
\end{lemma}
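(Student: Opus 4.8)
The plan is to reduce the statement to a local computation near $\pa\ol W$ — the only place where the blow-up has any effect — and to choose projective coordinates on the two blow-ups so that the extended projection becomes literally a coordinate projection.

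First I would fix a linear complement $V=W\oplus U$ and identify $V/W\cong U$, so that the projection reads $\pi_0\colon(w,u)\mapsto u$. Fixing an inner product on $V$ with $W\perp U$, the boundary sphere $\pa\ol W=\ol W\cap\pa\ol V$ is the great subsphere $\Sph(W)\subset\Sph(V)$; in linear coordinates $(w,u)$ adapted to $W\oplus U$ one checks directly that $\ol W$ is a p-submanifold of $\ol V$ and $\pa\ol W$ a boundary p-submanifold, so that $[\ol V;\pa\ol W]$ is defined. Over the region away from $\pa\ol W$ the blow-up changes nothing, and in a projective chart near a point of $\pa\ol V\setminus\pa\ol W$ — say in the direction of a basis vector of $U$, with coordinates $\rho'=(u^1)^{-1}$, $w^j/u^1$, $u^i/u^1$ — one sees at once that $\pi_0$ extends to a smooth submersion onto $\ol U=\ol{V/W}$, with the pullback of a boundary defining function of $\ol U$ equal to $\rho'$, a defining function of $\pa\ol V$ there.

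The content is the behavior near $\pa\ol W$. After a rotation inside $W$ it suffices to work near the boundary point in the direction of a basis vector $e_1\in W$, where a projective chart on $\ol V$ is given by $\rho_1=(w^1)^{-1}$, $y^j=w^j/w^1$ ($2\le j\le k=\dim W$), $z^i=u^i/w^1$ ($1\le i\le m=\dim U$); here $\pa\ol V=\{\rho_1=0\}$ and $\pa\ol W=\{\rho_1=0,\ z=0\}$, so locally $[\ol V;\pa\ol W]$ is the blow-up of $[0,\infty)_{\rho_1}\times\R^{k-1}_y\times\R^m_z$ along the boundary p-submanifold $\{\rho_1=0,\ z=0\}$. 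Its front face is covered by the chart near the pole $z/\rho_1\to 0$, with coordinates $(\rho_1,\hat z=z/\rho_1,y)$ and front-face defining function $\rho_1$, together with charts in which some $z^i$ dominates, say $z^1>0$, with coordinates $(z^1,\ s=\rho_1/z^1,\ \zeta^i=z^i/z^1\ (i\ge 2),\ y)$, front-face defining function $z^1$, and $s$ a defining function of the lift of $\pa\ol V$ (since $\rho_1=s\,z^1$). Substituting the relations $w^1=\rho_1^{-1}$, $u^i=z^i\rho_1^{-1}$ for the underlying point of $V$, one computes $\pi_0$ in the pole chart as $(\rho_1,\hat z,y)\mapsto u=\hat z$, which lands in $(V/W)^\circ$; and in the $z^1$-dominant chart, using on $\ol U$ the projective coordinates $\rho_U=(u^1)^{-1}$, $\eta^i=u^i/u^1$ ($i\ge 2$) near the direction of the first basis vector of $U$, one finds $\pi_0\colon(z^1,s,\zeta',y)\mapsto(\rho_U,\eta')=(s,\zeta')$. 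Hence $\pi_0$ extends to these charts as a smooth map, indeed a coordinate projection, and in particular a submersion; moreover $\pi_0^*\rho_U=s$, a defining function of the lift of $\pa\ol V$, while in the pole chart $\pi_0$ maps the front face into $(V/W)^\circ$, consistently with the lift of $\pa\ol V$ not meeting that chart.

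These charts, together with the region away from $\pa\ol W$, cover $[\ol V;\pa\ol W]$, so $\pi_0$ extends to a global smooth submersion $[\ol V;\pa\ol W]\to\ol{V/W}$; since the domain is compact this map is proper, hence a locally trivial fibration by a version of Ehresmann's theorem for manifolds with corners (the local models above exhibit it as a b-fibration with respect to the lift of $\pa\ol V$). Finally, the above computations show that $\pi_0^{-1}(\pa\ol{V/W})$ is precisely the lift of $\pa\ol V$, and that for any boundary defining function $\rho_{V/W}$ of $\ol{V/W}$ the pullback $\pi_0^*\rho_{V/W}$ vanishes simply there and is nonzero elsewhere; that is, $\pi_0^*\rho_{V/W}$ is a defining function of the lift of $\pa\ol V$. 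The only real difficulty is bookkeeping — choosing the projective coordinates on the two blow-ups so that the extended map is a coordinate projection, and recognizing which compactification of $V/W$ appears in the blown-up chart near $\pa\ol W$ — after which every verification is immediate.
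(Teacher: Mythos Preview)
Your proof is correct and follows essentially the same approach as the paper: reduce to a local computation in projective coordinates near $\pa\ol W$ and verify the map is a coordinate projection, hence a smooth submersion. The paper uses polar-type projective coordinates ($|t|^{-1}$, $t/|t|$, $x/|t|$, then $|x|^{-1}$, $x/|x|$ on the blow-up) and only treats the codimension-two corner explicitly, leaving the rest to the reader; your version uses coordinate-axis projective charts (one $w$-coordinate or one $u$-coordinate dominant) and is more complete in that you also write out the pole chart of the front face and the region away from $\pa\ol W$.
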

\begin{proof}
  Extending a basis of $W$ to a basis of $V$, we may assume $V=\R^{m+n}_{t,x}$ and $W=\R^m_t\times\{0\}$. We verify the claim only near the codimension 2 corner of $[\ol V;\pa\ol W]$, and leave the remainder of the verification to the reader. It suffices to show that the map~\eqref{EqGff3bProj} is a submersion. Since smooth coordinates on $\ol V$ near $(|t|,x)=(\infty,0)$ are $\frac{t}{|t|}$ (when $m\geq 2$), $\frac{1}{|t|}$, and $\frac{x}{|t|}$, smooth coordinates near the corner of $[\ol V;\pa\ol W]$ are $\rho_\ff=\frac{|x|}{t}$, $\rho_\sface=\frac{1/|t|}{|x|/|t|}=\frac{1}{|x|}$, $\omega_W=\frac{t}{|t|}$ (when $m\geq 2$), and $\omega_V=\frac{x/|t|}{|x|/|t|}=\frac{x}{|x|}$. On the other hand, we can identify $V/W\cong\R^n_x$, with smooth coordinates near $|x|=\infty$ given by $|x|^{-1}$ and $\omega=\frac{x}{|x|}$. The map~\eqref{EqGff3bProj} is thus $(\rho_\ff,\rho_\sface,\omega_W,\omega_V)\mapsto(\rho_\sface,\omega_V)$; this is indeed a smooth submersion.
\end{proof}

\begin{figure}[!ht]
\centering
\includegraphics{FigGffBdl}
\caption{Illustration of a fiber $\breve T_p M$ of $\breve T_\cC M$, of translation orbits of $T_p\cC$ (red, dashed), and of the projection $\breve\pi\colon\breve T_p M\to\ol{N_p}\cC$.}
\label{FigGffBdl}
\end{figure}

\begin{lemma}[Stationary extension]
\label{LemmaGffStatExt}
  The map~\eqref{EqGffStatExt} for $(a,b)=(1,0)$ (i.e.\ on vector fields) is the restriction to $\hat M^\circ$ of the isomorphism
  \begin{equation}
  \label{EqGffCptExt}
    \sfe \colon \CI(\hat M;\wt T_{\hat M}\wt M) \xra{\cong} {}^\vee\cV_{\tscop,\rm I}(\breve T_\cC M) = \CI_{\rm I}(\breve T_\cC M;{}^{\tscop,\vee}T(\breve T_\cC M)),
  \end{equation}
  where the subscript `$\rm I$' denotes invariance under the $T_p\cC$-translation action; analogously for tensors of type $(a,b)$. It induces a short exact sequence
  \[
    0 \to \hat\rho\wt\cV(\wt M) \hra \wt\cV(\wt M) \xra{\sfe} {}^\vee\cV_{\tscop,\rm I}(\breve T_\cC M) \to 0.
  \]
\end{lemma}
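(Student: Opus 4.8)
The plan is to obtain the map $\sfe$ of~\eqref{EqGffCptExt} from two canonical bundle identifications, check that it restricts on $\hat M^\circ$ to~\eqref{EqGffStatExt}, and then read off the short exact sequence. First I would rewrite the source of $\sfe$. By Definition~\ref{DefGTot}, $\wt T\wt M=\wt\upbeta^*\wt T\wt M'$, and $\wt T\wt M'$ is canonically the pullback of $TM$ to $\wt M'=[0,1)\times M$; since $\wt\upbeta$ restricts on $\hat M$ to $\hat\upbeta\colon\hat M\to\cC$, this gives a tautological identification $\wt T_{\hat M}\wt M=\hat\upbeta^*(T_\cC M)$, hence $\CI(\hat M;\wt T_{\hat M}\wt M)=\CI(\hat M;\hat\upbeta^*T_\cC M)$. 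For the target, for each $p\in\cC$ the bundle $\Tsc\ol{T_p M}$ is trivial with fiber $T_p M$, globally framed by the constant-coefficient vector fields $\pa_{\hat t},\pa_{\hat x^j}$ (as recalled in~\S\ref{SsBgLie}); so its pullback ${}^{\tscop,\vee}T(\breve T_\cC M)$ along the blow-down is, over each fiber $\breve T_p M$, trivial with fiber $T_p M$, and by Lemma~\ref{LemmaGff3bProj} a $T_p\cC$-invariant smooth section of it is constant along the fibers of $\breve\pi$ on the interior, hence by continuity and the submersivity of $\breve\pi$ equals $\breve\pi^*$ of a smooth $T_p M$-valued function on $\hat M_p=\ol{N_p}\cC$; conversely any such pullback is invariant and smooth on all of $\breve T_p M$. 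Assembling over $p\in\cC$ yields ${}^\vee\cV_{\tscop,\rm I}(\breve T_\cC M)=\CI(\hat M;\hat\upbeta^*T_\cC M)$. I would then \emph{define} $\sfe$ in~\eqref{EqGffCptExt} as the composite of these two identifications, which makes it an isomorphism automatically; the case of $(a,b)$-tensors follows by taking the corresponding tensor powers of both identifications.

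Next I would verify that this $\sfe$ restricts on $\hat M^\circ$ to~\eqref{EqGffStatExt} and, in the same breath, make the identifications explicit near the corner $\hat M\cap M_\circ$, where smoothness is the one nonobvious point. In a chart~\eqref{EqGLocCoord} on $M$, both $\wt T\wt M$ (framed by $\pa_t,\pa_{x^j}$) and ${}^{\tscop,\vee}T(\breve T_\cC M)$ (framed by $\pa_{\hat t},\pa_{\hat x^j}$) become, under the identifications, the bundle whose fiber over $\hat M_p$ is $T_p M$, and the two frames correspond because $\pa_{\hat z^\mu}$ is dual to the fiber-linear coordinate $\hat z^\mu=\dd z^\mu(-)$ on $T_p M$. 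In the coordinates~\eqref{EqGLocCoordHat} near $\hat M^\circ$, a section $\sum_\mu a^\mu\pa_{z^\mu}$ with $a^\mu\in\CI(\wt M)$ restricts on $\hat M^\circ=\{\eps=0\}$ to coefficients that are smooth in $(t,\hat x)$ and independent of $\hat t$, and $\sfe$ simply relabels $\pa_{z^\mu}\mapsto\pa_{\hat z^\mu}$, which is exactly~\eqref{EqGffStatExtVF}. The main obstacle I anticipate is the behavior at the corner. Here I would use the coordinates~\eqref{EqGLocCoordCorner}: on $\hat M=\{\hat\rho=0\}$ the base point $(t,\hat\rho\omega)\in M$ collapses to $c(t)\in\cC$, so (i) a coefficient in $\CI(\wt M)$ restricts to an element of $\CI(\hat M)$ whose $\breve\pi$-pullback is smooth on $\breve T_\cC M$ down to the corner, since $\breve\pi$ is a smooth fibration by Lemma~\ref{LemmaGff3bProj}; and (ii) the transition matrix of $\wt T\wt M$ between two charts on $M$, namely the Jacobian of the coordinate change, restricts on $\hat M$ to its value at $\cC$, which depends only on the base point $p\in\cC$ and hence is constant along each fiber $\hat M_p$ --- precisely the transition matrix of the fiber-linear frames on $T_\cC M$. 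This collapse of the ambient base point to $\cC$ at $\hat M$ is what forces the two a priori distinct bundle structures to agree there; once it is pinned down, $\sfe$ patches across charts, is smooth down to the corner, and restricts to~\eqref{EqGffStatExt} on $\hat M^\circ$.

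Finally I would deduce the short exact sequence formally. The map $\sfe$ appearing there is the composition of the restriction $\CI(\wt M;\wt T\wt M)\to\CI(\hat M;\wt T_{\hat M}\wt M)$ with the isomorphism~\eqref{EqGffCptExt}. Restriction to the boundary hypersurface $\hat M$ is surjective onto $\CI(\hat M;\wt T_{\hat M}\wt M)$ (extend a section using a collar of $\hat M$), so $\sfe$ is surjective; and a smooth section of $\wt T\wt M$ restricts to $0$ on $\hat M$ exactly when it lies in $\hat\rho\,\CI(\wt M;\wt T\wt M)=\hat\rho\wt\cV(\wt M)$, which is therefore the kernel. Hence $0\to\hat\rho\wt\cV(\wt M)\hra\wt\cV(\wt M)\xra{\sfe}{}^\vee\cV_{\tscop,\rm I}(\breve T_\cC M)\to 0$ is exact, which completes the plan.
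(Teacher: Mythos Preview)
Your proposal is correct and follows essentially the same approach as the paper, which simply invokes the local coordinate expression~\eqref{EqGffStatExtVF} together with Lemma~\ref{LemmaGff3bProj} to see that $|\hat x|^{-1},\frac{\hat x}{|\hat x|}$ are smooth coordinates on $\breve T_p M$ near infinity. Your presentation is more top-down---you build $\sfe$ abstractly from the two bundle identifications $\wt T_{\hat M}\wt M=\hat\upbeta^*(T_\cC M)$ and ${}^\vee\cV_{\tscop,\rm I}(\breve T_\cC M)=\CI(\hat M;\hat\upbeta^*T_\cC M)$ and then check agreement with~\eqref{EqGffStatExtVF} on the interior---whereas the paper takes the bottom-up route of extending the interior map by checking smoothness of the coordinate expression at the corner; but the substance is the same, and your added discussion of chart transitions and the explicit deduction of the short exact sequence are reasonable elaborations of points the paper leaves implicit.
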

\begin{proof}
  This follows from~\eqref{EqGffStatExtVF}, and from Lemma~\ref{LemmaGff3bProj} which shows that the $\hat x$-coordinates, resp.\ inverse polar coordinates $|\hat x|^{-1}$, $\frac{\hat x}{|\hat x|}$ are smooth coordinates on $\breve T_p M$ in $|\hat x|\lesssim 1$, resp.\ $|\hat x|\gtrsim 1$.
\end{proof}

The manifold $\breve T_p M$ on which sections of $\wt T\wt M$ over $\hat M_p$ can be regarded as stationary vector fields can be given an interpretation directly on $\wt M$. For this purpose, we blow up $\hat M_p$ to get
\begin{equation}
\label{EqGffBlowup}
  [\wt M;\hat M_p]=[[0,1)\times M;\{0\}\times\cC;\{0\}\times\{p\}] \cong [[0,1)\times M;\{0\}\times\{p\};\{0\}\times\cC].
\end{equation}
The front face is $[\ol{T_p}M;\pa\ol{T_p}\cC]=\breve T_p M$; see Figure~\ref{FigGffBlowup}. The interior of the front face carries \emph{both} the rescaled spatial variables $\hat x$ \emph{and} the rescaled (`fast') temporal variable $\hat t_p:=\frac{t-t_0}{\eps}$ where $p=c(t_0)$ (and where $\hat t_p$ can be further identified with $\hat t$). In this manner, working near the front face of $[\wt M;\hat M_p]$ enables one to understand how, say, a metric $\wt g\in\CI(\wt M;S^2\wt T^*\wt M)$ deviates from its stationary model $\sfe(\wt g|_{\hat M_p})$ as the parameter $\eps$ increases from $0$ to positive values.

\begin{figure}[!ht]
\centering
\includegraphics{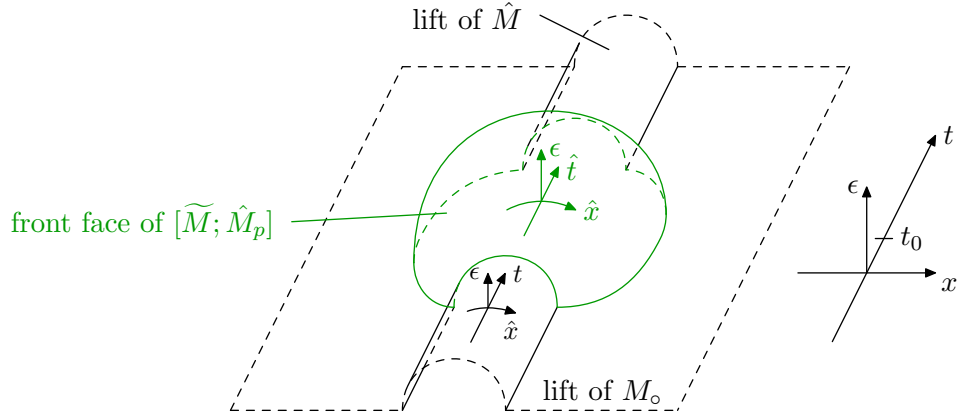}
\caption{Illustration of~\eqref{EqGffBlowup}, together with some local coordinate systems.}
\label{FigGffBlowup}
\end{figure}

\begin{rmk}[Multiplication by $\eps$]
\label{RmkGffMultEps}
  Carefully note that the lift of $V\in\cV(M)$ to an $\eps$-independent vector field on $\wt M'$ and then to a vector field on $\wt M$ is singular at $\hat M$ unless $V|_\cC\in T\cC$, as follows from the presence of the singular factor $\eps^{-1}$ in $\pa_x=\eps^{-1}\pa_{\hat x}$. (More generally, $V\in\wt\cV(\wt M)$, regarded as a smooth vector field on $\{\eps>0\}$, is singular at $\hat M$ unless $V(q)\in\wt\upbeta^*(T_p\cC)$ for all $p\in\cC$, $q\in\hat M_p$.) On the other hand, we saw above that $\sfe(\pa_x)=\pa_{\hat x}$. However, the map $\sfe$ on vector fields $V$ (sections of $\wt T_{\hat M}\wt M$) is not quite given by multiplication by $\eps$ (i.e.\ smooth extension off $\hat M$ as a section of $\wt T\wt M$, multiplication by $\eps$, and restriction back to $\hat M$ as a vector field) since $\eps\pa_t=0$ at $\hat M$; this should be contrasted with \cite[Lemma~3.2, Definition~3.3]{HintzGlueID}. This can be remedied by instead restricting $\eps V$ to the front face $\breve T_p M$ of $[\wt M;\hat M_p]$. In this manner, multiplication by $\eps$ induces an isomorphism between $\wt T_q\wt M$, $q\in\hat M_p$, and translation-invariant 3sc-vector fields on $[\ol{T_p}M;\pa\ol{T_p}\cC]$ defined over the translation orbit $q\subset T_p M$. Since $\breve T_\cC M$ is exactly the bundle of all $\breve T_p M$, we conclude that~\eqref{EqGffCptExt} is given by multiplication by $\eps$ for $(a,b)=(1,0)$, and by multiplication by $\eps^{a-b}$ in general. This now matches \cite[Definition~3.3]{HintzGlueID}.
\end{rmk}

\subsection{Hypersurfaces transversal to \texorpdfstring{$\cC$}{the gluing curve}}
\label{SsGHyp}

The study of evolution equations on $\wt M$ (i.e.\ on $\wt M_\eps$ for all small $\eps>0$ at once) requires the choice of Cauchy hypersurfaces. We recall from \cite[\S3]{HintzGlueID}:

\begin{definition}[Total gluing space for initial data]
\label{DefGHypTot}
  Let $X$ be a smooth open $n$-dimensional manifold, and let $\fp\in X$. Then we set $\wt X'=[0,1)_\eps\times X$ and
  \[
    \wt X = [\wt X';\{(0,\fp)\}],
  \]
  with boundary hypersurfaces denoted $\hat X$ (front face) and $X_\circ$ (lift of $\{0\}\times X$). The blow-down map is $\upbeta_{\wt X}\colon\wt X\to\wt X'$, with restrictions $\upbeta_{\hat X}\colon\hat X\to\{\fp\}$ and $\upbeta_{X_\circ}\colon X_\circ=[X;\{\fp\}]\to X$. The fiber of $\wt X$ over $\eps\in(0,1)$ is denoted $\wt X_\eps$. The bundle $\wt T\wt X'\to\wt X'$ is the vertical tangent bundle, and $\wt T\wt X\to\wt X$ is its pullback along $\upbeta_{\wt X}$; we write $\wt\cV(\wt X)=\CI(\wt X;\wt T\wt X)$.
\end{definition}

If $X\subset M$ is a smooth hypersurface which is transversal to $\cC$ and intersects $\cC$ only once in the point $\fp=c(t_0)\in\cC$, then the inclusion map $[0,1)\times X\hra[0,1)\times M$ lifts to an embedding $\wt X\hra\wt M$ of $\wt X$ as a smooth hypersurface, with $\wt X\cap\hat M=\hat M_\fp$ and $\wt X\cap M_\circ=X_\circ=\upbeta_\circ^*X$. It is important to retain more precise information about $\wt X$ near $\hat M$: to wit, $X$ defines a hypersurface $T_\fp X\subset T_\fp M$.

\begin{rmk}[Geometry of $\wt X\subset\wt M$]
  The lift of $\wt X$ to $[\wt M;\hat M_\fp]$ intersects the front face $\breve T_\fp M$ in a smooth hypersurface, namely the radial compactification $\ol{T_\fp}X$ of $T_\fp X$. See Figure~\ref{FigGHyp}. In this perspective, the lift of $\wt X$ is a Cauchy hypersurface both for wave evolution near $\fp$ in the fast ($T_\fp M$-)time scale and away from $\fp$ in the slow ($M$-)time scale.
  \begin{figure}[!ht]
  \centering
  \includegraphics{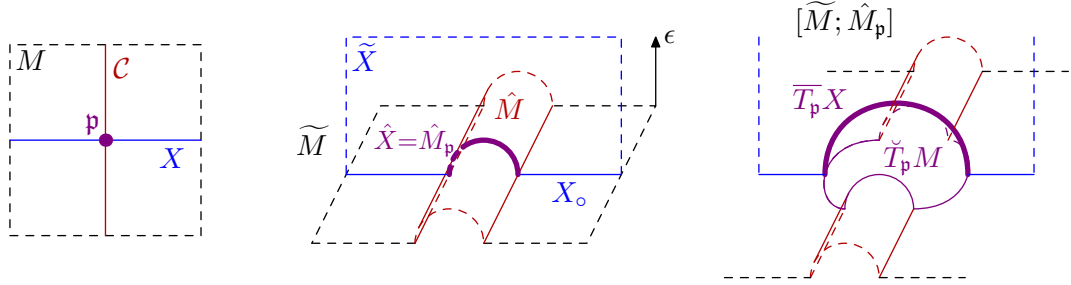}
  \caption{\textit{On the left:} the manifold $M$, the hypersurface $X$, and the curve $\cC$ (transversal to $X$). \textit{In the middle:} the total gluing space for initial data $\wt X$ as a hypersurface inside the total gluing spacetime $\wt M$. \textit{On the right:} lift of $\wt X$ to the blow-up of $\wt M$ at $\hat M_\fp$.}
  \label{FigGHyp}
  \end{figure}
\end{rmk}

We also recall the isomorphism of tensor bundles
\begin{equation}
\label{EqGHypScaling}
  \sfs \colon \wt T_q^{a,b}\wt X \xra{\cong} \Tsc_q^{a,b}\hat X,\qquad q\in\hat X,
\end{equation}
which is defined as multiplication (of a smooth extension) by $\eps^{a-b}$ (followed by restriction) in \cite[Definition~3.3]{HintzGlueID}. More in line with the construction in~\S\ref{SsGff}, we can define the map~\eqref{EqGHypScaling} for $(a,b)=(1,0)$ as follows: lift $V\in\wt T_q\wt X=T_\fp X$, $q\in\hat X=\ol{T_\fp}X$, to a translation-invariant vector field $V'$ on $T_\fp X$ (which is thus a scattering vector field on $\ol{T_\fp}X$) and evaluate this lift at $q$; the map $V\mapsto V'|_q$ thus defined is precisely $\sfs$. In local coordinates $x\in\R^n$ on $X$ and $\hat x=\dd x(-)$ (or equivalently $\hat x=\frac{x}{\eps}$) on $\hat X^\circ$, this map takes $\pa_{x^j}\mapsto\pa_{\hat x^j}$. Since $\wt T\wt X\hra\wt T_{\wt X}\wt M$, we have
\[
  \sfs(V)|_q = \sfe(V)|_{q_0},\qquad V\in\wt T_q\wt X,\ q\in\hat M_\fp^\circ,
\]
where $q_0=q\cap T_\fp X$ (with $q\in\hat M_\fp^\circ=N_\fp\cC=T_\fp M/T_\fp\cC\cong T_\fp X$ identified with the corresponding $T_\fp\cC$-orbit in $T_\fp M$).

\subsection{Vector fields}
\label{SsGVf}

Recall that smooth elements of the space $\wt\cV(\wt M)$ are singular at $\hat M$ when regarded as vector fields on $\wt M$; moreover, $\wt\cV(\wt M)$ is not a Lie algebra, the issue being the irregularity of the coefficients, which lie in $\CI(\wt M)$, with respect to (lifts of) vector fields on $M$ (i.e.\ $\pa_t,\pa_{x^j}$ in local coordinates). The description of differential operators related to geometric structures on $\wt T\wt M$ thus requires the usage of a different class of vector fields. As we shall see in~\S\ref{SsGL}, the appropriate class is the following.

\begin{definition}[se-vector fields]
\label{DefGVf}
  The space $\Vse(\wt M)$ of se-vector fields is defined as
  \[
    \Vse(\wt M) = \{ V\in\Vb(\wt M) \colon V\ \text{is vertical and tangent to the fibers of $\hat M$} \}.
  \]
  Here, $V$ being vertical means that $\dd\eps(V)=0$.
\end{definition}

\begin{rmk}[Terminology]
\label{RmkGVf}
  The total space $\wt M$ is the single surgery space associated with $M$ and $\cC$, as defined in \cite{MazzeoMelroseSurgery}. The Lie algebra of vector fields used in \cite{MazzeoMelroseSurgery} is
  \[
    \Vs(\wt M)=\{V\in\Vb(\wt M)\colon V\ \text{is vertical} \};
  \]
  its elements are \emph{surgery vector fields}. Thus, $\Vse(\wt M)\subset\Vs(\wt M)$ is the subspace (and indeed Lie subalgebra, as we argue below) consisting of those vector fields which are in addition of edge type \cite{MazzeoEdge} at $\hat M$ (which is the total space of a fibration $\ol{\R^n}-\hat M\to\cC$), hence the terminology `surgery-edge', or `se' for short.
\end{rmk}

Elements of $\Vse(\wt M)$ are smooth (in $\eps\in(0,1)$) families of smooth vector fields on $M$ which degenerate in a specific manner as $\eps\searrow 0$. In local coordinates~\eqref{EqGLocCoord}--\eqref{EqGLocCoordHat}, elements of $\Vse(\wt M)$ take the following form: away from $x=0$, they are smooth (in $\eps,t,x$) linear combinations of $\pa_t$, $\pa_{x^j}$, and near $\hat M^\circ$, they are smooth (in $\eps,t,\hat x$) linear combinations of $\eps\pa_t$, $\pa_{\hat x^j}$. Globally, in $(\eps,t,x)$-coordinates, they are smooth (on $\wt M$) linear combinations of $\hat\rho\pa_t$, $\hat\rho\pa_{x^j}$ (see~\eqref{EqGBdfs}). We verify this near the corner $\hat M\cap M_\circ$ using the coordinates~\eqref{EqGLocCoordProj}: the fibers of $\hat M$ are the level sets of $t$ at $\hat\rho=0$, and the claim follows from the fact that $\hat\rho\pa_t$, $\hat\rho\pa_{x^1}=\hat\rho\pa_{\hat\rho}-\rho_\circ\pa_{\rho_\circ}-\hat x'\pa_{\hat x'}$ and $\hat\rho\pa_{x^j}=\pa_{\hat x^{\prime j}}$, $j=2,\ldots,n$, are indeed tangent to $\rho_\circ=0$ and to the fibers of $\hat\rho=0$ (i.e.\ to the $t$-level sets), and linearly independent (as se-vector fields). We deduce in particular that
\begin{equation}
\label{EqGVfTilde}
  \Vse(\wt M) = \hat\rho\wt\cV(\wt M).
\end{equation}
Since $\wt\cV(\wt M)$ is spanned over $\CI(\wt M)$ by lifts of smooth sections of $\wt T\wt M'\to\wt M'$, i.e.\ by smooth families (in $\eps\in[0,1)$) of smooth vector fields on $M$, we infer that
\begin{equation}
\label{EqGVfMapPullback}
  \Vse(\wt M) \ni V \colon \wt\upbeta^*\CI(\wt M') \to \hat\rho\CI(\wt M).
\end{equation}
In fact, this mapping property characterizes se-vector fields in the space of vertical b-vector fields on $\wt M$. The space $\Vse(\wt M)$ is the space of smooth sections of the se-tangent bundle
\[
  \Tse\wt M \to \wt M,
\]
local frames of which are the explicit generators above. (By~\eqref{EqGVfTilde}, we have $\Tse\wt M=\hat\rho\wt T\wt M$.)

Since $\Vb(\wt M)$ is a Lie algebra, and since tangency to submanifolds is preserved under vector field commutators, we deduce that $\Vse(\wt M)$ is a Lie algebra. The corresponding space of $m$-th order differential operators is denoted $\Diffse^m(\wt M)$.

An element $P\in\Diffse^m(\wt M)$ has two normal operators which describe their leading order behavior at $\hat M$ and $M_\circ$.\footnote{A third `normal operator' is the principal symbol of $P$, which captures the leading order behavior at high frequencies. This plays no role in the present paper however.} Near $\hat M^\circ$, we first consider the case of se-vector fields. Combining~\eqref{EqGVfTilde} with~\eqref{EqGffCptExt}, we map $V\in\Vse(\wt M)$ into $\sfe(\eps^{-1}V)$; since $\eps^{-1}V\in\eps^{-1}\Vse(\wt M)=\rho_\circ^{-1}\wt\cV(\wt M)$, with $\rho_\circ$ a boundary defining function of each fiber $\ol{N_p}\cC$ of $\hat M$, the final part of Lemma~\ref{LemmaGff3bProj} gives
\[
  \sfe(\eps^{-1}V) \in \rho_\circ^{-1}{}^\vee\cV_{\tscop,\rm I}(\breve T_\cC M) = {}^\vee\cV_{\tbop,\rm I}(\breve T_\cC M);
\]
this is a smooth family of translation-invariant 3b-vector fields on the fibers $\breve T_p M$ of $\breve T_\cC M$. Upon setting $N_{\hat M}(V):=\sfe(\eps^{-1}V)$, we thus obtain a short exact sequence
\begin{equation}
\label{EqGVfSEShatM}
  0 \to \hat\rho\Vse(\wt M) \hra \Vse(\wt M) \xra{N_{\hat M}} {}^\vee\cV_{\tbop,\rm I}(\breve T_\cC M) \to 0.
\end{equation}

The normal operator
\[
  N_{M_\circ}(P)\in\Diffe^m(M_\circ)
\]
at $M_\circ$ is given by restriction as in the case of b-normal operators: $N_{M_\circ}(P)u=(P\tilde u)|_{M_\circ}$ where $\tilde u\in\CI(\wt M)$ is an (arbitrary) extension of $u\in\CI(M_\circ)$. The map $N_{M_\circ}$ is the multiplicative extension of the third arrow in
\[
  0\to\rho_\circ\Vse(\wt M)\hra\Vse(\wt M)\to\Ve(M_\circ)\to 0,
\]
where $\Ve(M_\circ)$ is the space of edge vector fields on $M_\circ=[M;\cC]$, i.e.\ those b-vector fields which are tangent to the fibers of the restriction of the blow-down map $\upbeta_\circ\colon M_\circ\to M$ to $\pa M_\circ$; this short exact sequence can be checked using the local coordinate descriptions above. (We remark that the fibers of $\upbeta_\circ|_{\pa M_\circ}$ are precisely the intersections of the fibers of $\hat M$ with $M_\circ$.) To summarize:

\begin{definition}[Normal operators of se-differential operators]
\label{DefGVfNorm}
  The multiplicative extension of~\eqref{EqGVfSEShatM}, resp.\ the restriction to $M_\circ$, gives rise to a surjective homomorphism
  \[
    N_{\hat M} \colon \Diffse^m(\wt M) \to {}^\vee\Diff_{\tbop,\rm I}^m(\breve T_\cC M),\qquad\text{resp.}\qquad
    N_{M_\circ} \colon \Diffse^m(\wt M) \to \Diffe^m(M_\circ),
  \]
  with kernel $\hat\rho\Diffse^m(\wt M)$, resp.\ $\rho_\circ\Diffse^m(\wt M)$. For $p\in\cC$, we write $N_{\hat M_p}\colon\Diffse^m(\wt M)\to\Diff_{\tbop,\rm I}^m(\breve T_p M)$ for the restriction of $N_{\hat M}$ to the fiber over $p$. (Here, the subscript `$I$' restricts to the space of operators which are invariant under $T\cC$-translations.)
\end{definition}

\begin{rmk}[Normal operator at $\hat M$ and restriction]
\label{RmkGVfNormHatRestr}
  In the context of~\eqref{RmkGffMultEps}, we can obtain $N_{\hat M_p}(V)$ as the restriction of $V\in\Vse(\wt M)$ to the front face $\breve T_p M\subset[\wt M;\hat M_p]$. The $\hat M_p$-normal operator of $P\in\Diffse^m(\wt M)$ is thus equal to the restriction of the lift of $P$ to $[\wt M;\hat M_p]$ to the front face $\breve T_p M$. (This can be seen explicitly from the local coordinate computations in~\eqref{EqGVfNhatMCoord}.)
\end{rmk}

In view of the translation-invariant nature of $N_{\hat M}$ on each fiber $\hat M_p$, we may pass to the Fourier transform along the fibers of the $T_p\cC$-action on $\breve T_p M$. To do this, we first fix $t\in\CI(\wt M)$ so that $\dd t\neq 0$ on $T\cC$. The only information about $t$ we need in the sequel is its differential $\dd t|_{T_\cC M}\in\CI(\cC;T_\cC^* M)$. We may then parameterize the orbits of the $T_p\cC$-translation action in the interior $T_p M$ of $\breve T_p M$ using the function
\[
  \hat t_p:=\dd t|_{T_p M}(-).
\]
Such a choice of `time function' also allows us to identify $N_p\cC$ with the transversal $\hat t_p^{-1}(0)=\ker\dd t\subset T_p M$ to the $T_p\cC$-action. Therefore, we may define the spectral family of $P\in\Diffse^m(\wt M)$ at $\hat M$ by
\begin{equation}
\label{EqGVfNhatM}
\begin{split}
  &\bigl(N_{\hat M}(P,\hat\sigma)u\bigr)(p,\hat x) := \bigl(e^{i\hat t_p\hat\sigma}N_{\hat M_p}(P)(e^{-i\hat t_p\hat\sigma}u)\bigr)(\hat x), \\
  &\hspace{6em} u\in\CI(\hat M^\circ),\ p\in\cC,\ \hat x\in N_p\cC\cong\hat t_p^{-1}(0),\ \hat\sigma\in\R.
\end{split}
\end{equation}
We shall also write
\[
  N_{\hat M_p}(P,\hat\sigma)=N_{\hat M}(P,\hat\sigma)(p,-).
\]
Note that for $P\in\Diffse^m(\wt M)$, this is a polynomial of degree $m$ in $\hat\sigma$. Conversely, we have
\begin{equation}
\label{EqGVfhatMSpec}
  N_{\hat M_p}(P) = \sum_{j=0}^m \frac{1}{j!}\pa_{\hat\sigma}^j N_{\hat M_p}(P,0)(-D_{\hat t_p})^j,
\end{equation}
since the spectral families of both sides are equal to $N_{\hat M_p}(P,\hat\sigma)$.

\begin{example}[Explicit computations]
\label{ExGVfExplicit}
  To make these constructions concrete, consider the se-vector fields $\hat\rho\pa_t$ and $\hat\rho\pa_{x^j}$ where $\hat\rho=(\eps^2+|x|^2)^{1/2}$. Put $\rho_\circ=\eps\hat\rho^{-1}=\la\hat x\ra^{-1}$. Then
  \[
    N_{M_\circ}(\hat\rho\pa_t)=|x|\pa_t,\qquad
    N_{M_\circ}(\hat\rho\pa_{x^j})=|x|\pa_{x^j};
  \]
  these are (a spanning set over $\CI(M_\circ)$ of the space of) edge vector fields on $M_\circ$. Using the coordinates $\hat t=\dd t(-)$, $\hat x^j=\dd x^j(-)$ on $T_\cC M$, we moreover have
  \begin{equation}
  \label{EqGVfNhatMCoord}
    N_{\hat M}(\hat\rho\pa_t) = \sfe(\rho_\circ^{-1}\pa_t) = \la\hat x\ra\pa_{\hat t},\qquad
    N_{\hat M}(\hat\rho\pa_{x^j}) = \la\hat x\ra\pa_{\hat x^j},
  \end{equation}
  which one should regard as smooth families (in $t$) of stationary vector fields on $\R_{\hat t}\times\R^n_{\hat x}$; these are 3b-vector fields on $[\ol{\R\times\R^n};\pa\ol\R\times\{0\}]$. Finally, for $\hat\sigma\in\R$,
  \[
    N_{\hat M}(\hat\rho\pa_t,\hat\sigma) = -i\hat\sigma\la\hat x\ra,\qquad
    N_{\hat M}(\hat\rho\pa_{x^j},\hat\sigma) = \la\hat x\ra\pa_{\hat x^j}.
  \]
  In the presence of smooth coefficients, we have, for example,
  \[
    N_{\hat M}(a(\eps,t,\hat x)\hat\rho\pa_t,\hat\sigma)=-i a(0,t,\hat x)\hat\sigma\la\hat x\ra;
  \]
  note the distinction of the slow ($t$) and fast ($\hat t$ and $\hat\sigma$) time (and frequency) scales.
\end{example}

Regarding $N_{\hat M}(P)$ as an operator on the compactification $\ol N\cC=\hat M$, the discussion of the structure of the spectral family of a 3b-differential operator in \cite[\S4.1]{Hintz3b} (or the explicit computations in Example~\ref{ExGVfExplicit}), applied here with smooth parametric dependence on $p\in\cC$, implies
\begin{align}
\label{EqGVfhatMSpec0}
  \wh P(0) := N_{\hat M}(P,0) &\in {}^\vee\Diffb^m(\hat M), \\
\label{EqGVfhatMSpecn0}
  \wh P(\hat\sigma) := N_{\hat M}(P,\hat\sigma) &\in \rho_\circ^{-m}\,{}^\vee\Diffsc^m(\hat M),\qquad \hat\sigma\neq 0, \\
\label{EqGVfhatMSpecn0Diff}
  \pa_{\hat\sigma}^j\wh P(0) &\in \rho_\circ^{-j}\,{}^\vee\Diffb^{m-j}(\hat M).
\end{align}
That is, these operators are b-, resp.\ weighted scattering operators on each fiber $\hat M_p$, $p\in\cC$. We finally note that the zero energy operator family $N_{\hat M}(P,0)$ is equal to the b-normal operator
\begin{equation}
\label{EqGVfhatMbNorm}
  \wh P(0) = N_{\hat M}(P,0) = N_{\hat M}(P)
\end{equation}
of $P$ (regarded as a b-differential operator $P\in\Diffb^m(\wt M)$, using that $\Vse(\wt M)\subset\Vb(\wt M)$) at $\hat M$; this can be seen from~\eqref{EqGVfNhatM} as a consequence of the fact that the lift of $\CI(\hat M^\circ)$ to $\breve T\cC$ is precisely the space of smooth translation-invariant functions on $(\breve T_\cC M)^\circ=T_\cC M$. In particular, $\wh P(0)$ is independent of the choice of $t$. The operators~\eqref{EqGVfhatMSpecn0} on the other hand do depend on the choice of $t$.

\subsection{Lorentzian metrics}
\label{SsGL}

We now assume that $M$ is equipped with a smooth Lorentzian metric $g\in\CI(M;S^2 T^*M)$ of signature $(-,+,\ldots,+)$; moreover, we shall assume that
\[
  \cC\ \text{is timelike}.
\]
Still requiring $\cC$ to be closed and the image of an embedding of $\R$, we now denote by $c\colon I\to M$ an arc-length parameterization of $\cC$, defined on some maximal interval $I\subset\R$ containing $0$. Moreover, we assume that $M$ is time-oriented and $c'$ is future timelike. The existence of \emph{Fermi normal coordinates} around $\cC$ is standard; we include a proof for completeness:

\begin{lemma}[Fermi normal coordinates]
\label{LemmaGLFermi}
  There exists a smooth coordinate system $(t,x)$ in a neighborhood of $\cC$ so that $c(t)=(t,0)$, the curves $s\mapsto(t_0,s x_0)$ are geodesics for all $(t_0,x_0)$, and
  \begin{equation}
  \label{EqGLFermi}
    g|_{(t,x)} = \bigl(-1-2\Gamma_{j 0 0}(t,0)x^j\bigr)\,\dd t^2 + \sum_{j=1}^n (\dd x^j)^2 + \cO(|x|^2).
  \end{equation}
  Here, $\Gamma_{\lambda\mu\nu}=\frac12(\pa_\mu g_{\lambda\nu}+\pa_\nu g_{\lambda\mu}-\pa_\lambda g_{\mu\nu})$ (with $\pa_0:=\pa_t$ and $\pa_j=\pa_{x^j}$, $j=1,\ldots,n$) denotes the Christoffel symbols of the first kind, and $\cO(|x|^2)$ denotes a symmetric 2-tensor on $M$ all of whose coefficients vanish quadratically at $\cC$. Moreover, if one fixes the tangent vectors $\pa_{x^1},\pa_{x^2},\pa_{x^3}$ at one point in $\cC$ (where they are an orthonormal basis of $(T\cC)^\perp$), then every other coordinate system $(t',x')$ with these properties satisfies $t'=t+a$, $x'=x$ for some $a\in\R$.
\end{lemma}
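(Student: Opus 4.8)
The plan is to construct the coordinates by the standard Fermi (or synchronous-along-$\cC$) procedure and then verify the claimed form of the metric and the asserted rigidity.

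\medskip

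\textbf{Construction of the coordinates.} First I would fix a point $p_0=c(t_0)$ on $\cC$ and a positively oriented orthonormal frame $(e_1,\dots,e_n)$ of $(T_{p_0}\cC)^\perp\subset T_{p_0}M$. Parallel transport this frame along $\cC$ with respect to the Levi-Civita connection of $g$; since $c$ is a geodesic (this is part of the hypothesis of the lemma -- note the statement presumes $\cC$ is timelike and $c$ is an arc-length parameterization, so $\nabla_{c'}c'=0$), parallel transport preserves orthogonality to $c'$ and preserves the orientation, giving for each $t\in I$ an orthonormal basis $e_1(t),\dots,e_n(t)$ of $(T_{c(t)}\cC)^\perp$ depending smoothly on $t$. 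Now define the map
\[
  \Phi(t,x) := \exp_{c(t)}\Bigl(\sum_{j=1}^n x^j e_j(t)\Bigr),
\]
defined for $t\in I$ and $|x|<r_0(t)$ for a suitable smooth positive function $r_0$; here $\exp$ is the Riemannian-geometry exponential map of $(M,g)$ (which makes sense in the Lorentzian setting as a local diffeomorphism near the zero section). Since $d\Phi$ at $(t,0)$ sends $\pa_t\mapsto c'(t)$ and $\pa_{x^j}\mapsto e_j(t)$, which form a basis of $T_{c(t)}M$, $\Phi$ is a local diffeomorphism near $\cC$, and its inverse furnishes the coordinate system $(t,x)$. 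By construction $c(t)=(t,0)$ and the radial curves $s\mapsto(t_0,sx_0)$ are (reparameterized) geodesics, since they are images under $\exp_{c(t_0)}$ of rays through the origin.

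\medskip

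\textbf{The form of the metric.} Next I would extract the Taylor expansion~\eqref{EqGLFermi}. From $c(t)=(t,0)$ being a geodesic and $c'$ being unit we get $g_{00}(t,0)=-1$ and, since $e_j(t)$ is orthonormal and orthogonal to $c'(t)$, $g_{ij}(t,0)=\delta_{ij}$, $g_{0j}(t,0)=0$. The vanishing of the first $x$-derivatives of $g_{ij}$ and $g_{0j}$ at $x=0$, and the identity $\pa_k g_{00}(t,0)=-2\Gamma_{k00}(t,0)$, follow from the geodesic equation for the radial curves together with the parallel-transport property of the frame along $\cC$: the radial geodesic condition gives $\Gamma^\mu_{jk}(t,0)x^jx^k$-type relations forcing the symmetrized first radial derivatives of the spatial block to vanish, while parallel transport of $e_j$ along $\cC$ kills $\pa_t$-derivatives of the relevant mixed components on $\cC$; the Christoffel symbol expression for $\pa_k g_{00}$ is just the definition $\Gamma_{k00}=\frac12(\pa_0 g_{k0}+\pa_0 g_{0k}-\pa_k g_{00})=-\frac12\pa_k g_{00}$ evaluated on $\cC$ (using $g_{k0}(t,0)\equiv0$ so $\pa_0 g_{k0}(t,0)=0$). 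Collecting these, $g=(-1-2\Gamma_{j00}(t,0)x^j)\dd t^2+\sum_j(\dd x^j)^2+\cO(|x|^2)$, where the $\cO(|x|^2)$ error is a symmetric $2$-tensor all of whose coefficients vanish to second order at $\cC$; the explicit quadratic terms in terms of the Riemann tensor are not needed here (they are recorded separately in Lemma~\ref{LemmaFMc1Metric}).

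\medskip

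\textbf{Rigidity.} Finally I would prove uniqueness. Suppose $(t',x')$ is another coordinate system with the stated properties and agreeing with $(t,x)$ in its induced frame $\pa_{x'^j}|_{p_1}=\pa_{x^j}|_{p_1}$ at one point $p_1\in\cC$. Writing $t'=t'(t,x)$, $x'=x'(t,x)$, restrict first to $\cC=\{x=0\}$: since both $c(t)=(t,0)$ in the $(t,x)$-chart and the analogous normalization hold, and both $t,t'$ are arc-length parameters along the timelike geodesic $\cC$ with $c'$ future-directed, $t'|_\cC=t+a$ for a constant $a$, and $x'|_\cC=0$. Differentiating in $x$ along $\cC$: the fact that in both charts the radial curves through a point of $\cC$ are geodesics means that the coordinate change, restricted to each normal fiber, maps radial geodesics to radial geodesics fixing the origin, hence is linear on each fiber -- more precisely, $x'=A(t)x+\cO(|x|^2)$ for a smooth family $A(t)\in GL(n)$, and the geodesic property forces the $\cO(|x|^2)$ terms to vanish, so $x'=A(t)x$ exactly. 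The requirement that both charts put $g$ in the form~\eqref{EqGLFermi} (in particular $g_{ij}=\delta_{ij}$, $g_{0j}=0$ on $\cC$ in both) forces $A(t)\in O(n)$; preservation of time orientation and the orientation normalization at $p_1$ (together with $\pa_{x'^j}$ being parallel along $\cC$ in both charts, which is implicit in~\eqref{EqGLFermi} holding along all of $\cC$, specifically $\pa_t g_{ij}(t,0)=\pa_t g_{0j}(t,0)=0$) forces $A(t)$ to be a constant element of $O(n)$, hence $A(t)\equiv A(t_0)=\Id$ by the normalization at $p_1$. Thus $t'=t+a$, $x'=x$. The main obstacle in writing this cleanly is the rigidity argument: carefully extracting linearity of the fiberwise coordinate change from the "radial curves are geodesics" hypothesis, and then pinning down that the orthogonal matrix cannot depend on $t$, requires differentiating the geodesic equation and the metric-normalization conditions along $\cC$ and bookkeeping which conditions kill which derivatives; the rest is routine.
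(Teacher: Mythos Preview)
There is a genuine gap: you assume $\cC$ is a geodesic, but the lemma is stated (and used, e.g.\ in \S\ref{SsAcGW}) for an \emph{arbitrary} timelike curve. Your justification --- ``$c$ is an arc-length parameterization, so $\nabla_{c'}c'=0$'' --- is wrong: arc-length parameterization only gives $g(c',c')\equiv -1$, not the geodesic equation. Indeed the very presence of the term $-2\Gamma_{j00}(t,0)x^j$ in~\eqref{EqGLFermi} is what records the possible failure of $\cC$ to be geodesic (when $\cC$ is a geodesic, $\Gamma^j_{00}(t,0)=0$ and this term vanishes).

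This matters for your construction: for a non-geodesic $\cC$, plain parallel transport of a frame of $(T_{p_0}\cC)^\perp$ does \emph{not} keep the frame orthogonal to $c'$. The paper instead arranges the weaker condition $\nabla_{c'(t)}V_i(t)\parallel c'(t)$ (Fermi--Walker type transport), which it obtains for general $\cC$ by solving an $O(n)$-valued ODE for a rotation applied to an arbitrary initial orthonormal frame of $c'^\perp$. This transport condition is exactly what is used to show $\partial_i g_{0j}(t,0)=0$: the radial geodesic property gives $2\Gamma_{0ij}=\partial_i g_{0j}+\partial_j g_{0i}=0$, while $\nabla_{c'}V_i\perp V_j$ gives $2\Gamma_{j0i}=\partial_i g_{0j}-\partial_j g_{0i}=0$; together these yield $\partial_i g_{0j}=0$. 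Your sketch of this step (``parallel transport \ldots\ kills $\partial_t$-derivatives of the relevant mixed components'') is not quite the right mechanism --- it is spatial derivatives $\partial_i g_{0j}$ one must kill, and for this one needs both the radial geodesic property and the transport condition on the frame. Once you replace parallel transport by the correct transport and fix this computation, the rest of your outline is fine and matches the paper.
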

\begin{proof}
  Complete $c'(0)$ to an orthonormal basis $c'(0),V_1(0),\ldots,V_N(0)\in T_{c(0)}M$. We continue this to a smooth orthonormal frame $c'(t),V_1(t),\ldots,V_N(t)\in T_{c(t)}M$, $t\in\R$, in such a manner that
  \begin{equation}
  \label{EqGLFermiTransp}
    \nabla_{c'(t)}V_i(t) \parallel c'(t).
  \end{equation}
  If $\cC$ is a geodesic, we may simply define $V_i(t)$ via parallel transport. In order to arrange~\eqref{EqGLFermiTransp} for general timelike curves $\cC$, first pick an arbitrary smooth orthonormal frame $W_1(t),\ldots,W_N(t)$ of $c'(t)^\perp$, and let
  \begin{equation}
  \label{EqGLFermiS}
    S_{i j}(t) := g(\nabla_{c'(t)}W_i(t),W_j(t)) = -S_{j i}(t).
  \end{equation}
  We then seek $(A_{\ell i}(t))\in\CI(I;O(n))$ so that for $V_i(t)=\sum A_{i\ell}(t)W_\ell(t)$ defined relative to Fermi normal coordinates, we have
  \[
    0 = g(\nabla_{c'}V_i,V_j) = ( A S A^T + A' A^T )_{i j} = \bigl(A(S+A^{-1}A')A^T\bigr)_{i j}\,.
  \]
  But if we set $A(0)=I$, then the solution of $A'(t)=-A(t)S(t)$ defines a smooth family of matrices along $\cC$ for which $J(t):=A^T(t)A(t)$ satisfies $J(0)=I$ and $J'=-S^T J-J S$; the unique solution of this ODE is $J(t)=I$ in view of~\eqref{EqGLFermiS}, and hence $A$ is orthogonal.

  The desired coordinate chart is
  \[
    (t,x) \mapsto \exp_{c(t)}\Biggl(\,\sum_{j=1}^n x^j V_j(t)\,\Biggr),
  \]
  which has invertible differential at $(t,0)$ and is thus a diffeomorphism onto its image for $(t,x)$ near $I\times\{0\}$. Note indeed that $g(t,0)=\diag(-1,1,\ldots,1)$ for all $t$. Moreover, for any fixed $t$ and for all $x\in\R^n$, the curves $s\mapsto(t,s x)$ are geodesics passing through $(t,0)$; by the geodesic equation $\ddot z^\mu(s)+\Gamma^\mu_{\kappa\lambda}|_{z(s)}\dot z^\kappa(s)\dot z^\lambda(s)=0$ (where $z=(t,x)$), this implies $\Gamma_{i j}^\mu(t,0)=0$ for $1\leq i,j\leq n$ (spatial indices) and $0\leq\mu\leq n$ (all indices), so $\Gamma_{\mu i j}(t,0)=0$. For $\mu=k\in\{1,\ldots,n\}$, this implies $\pa_i g_{j k}=\Gamma_{k i j}+\Gamma_{j i k}=0$. For $\mu=0$ on the other hand, we get for $1\leq i,j\leq n$ the equation $0=2\Gamma_{0 i j}=\pa_i g_{0 j}+\pa_j g_{0 i}$, while~\eqref{EqGLFermiTransp} implies $0=2 g(\nabla_0\pa_i,\pa_j)=2\Gamma_{j 0 i}=\pa_i g_{0 j}-\pa_j g_{0 i}$ at $(t,0)$; together, this gives $\pa_i g_{0 j}=0$. Since $-2\Gamma_{j 0 0}(t,0)=\pa_j g_{0 0}(t,0)$, the proof is complete.
\end{proof}

Regarding $g$ as an $\eps$-independent metric on $\wt M'$, the pullback $\wt\upbeta^*g\in\CI(\wt M;S^2\wt T^*\wt M)$ is a Lorentzian section of $S^2\wt T^*\wt M$; and in the coordinates~\eqref{EqGffHatCoords}, we have
\begin{equation}
\label{EqGLFermiMet}
  \sfe(\wt\upbeta^*g) = -\dd\hat t^2+\dd\hat x^2
\end{equation}
on $T_p M$ for all $p\in\cC$. Our gluing problem amounts to modifying $\wt\beta^*g$ (within the space of formal solutions of the Einstein vacuum equations) so that its restriction to $\hat M$ is the metric of a Kerr black hole, while the restriction to $M_\circ$ remains equal to $\upbeta_\circ^*g$.

\begin{definition}[Total family]
\label{DefGLTot}
  Let $g\in\CI(M;S^2 T^*M)$. Let $\hat K^\circ=\bigcup_{p\in\cC}\hat K_p^\circ\subset\hat M$ be a relatively open subset whose closure $\hat K=\bigcup_{p\in\cC}\hat K_p$ is disjoint from $\pa\hat M=\hat M\cap M_\circ$ and has connected complement in $\hat M$.\footnote{In our application, $\hat K_p$ will be a closed ball which, in $\hat x$-coordinates, has a fixed radius, and a center depending smoothly on $p$.} Let $\wt K=\{(\eps,t,x)\colon (t,\hat x)=(t,\eps x)\in\hat K\}$ be an extension of $\hat K$ to $\wt M$, where $(t,x)\in\R\times\R^n$ are local coordinates near $\cC$ with $\cC=\{x=0\}$. Let $\hat\cE,\cE\subset\C\times\N_0$ denote two nonlinearly closed index sets with $\Re\hat\cE,\Re\cE>0$. Then a Lorentzian signature $(-,+,\ldots,+)$ section $\wt g$ of $S^2\wt T^*\wt M$ over $\wt M\setminus\wt K^\circ$ is called a \emph{$(\hat\cE,\cE)$-smooth total family (relative to $(M,\cC,g)$)} if
  \begin{equation}
  \label{EqGLTot}
    \wt g = \wt\upbeta^*g + \wt g_{(1)},\qquad \wt g_{(1)}\in\cA_\phg^{\N_0\cup\hat\cE,\cE}(\wt M\setminus\wt K^\circ;S^2\wt T^*\wt M),
  \end{equation}
  with the index sets referring to $\hat M$ and $M_\circ$ (in this order). The \emph{$M_\circ$-model of $\wt g$} is $g$, and the \emph{$\hat M$-model of $\wt g$} is $\hat g:=\sfe(\wt g|_{\hat M})$. Moreover, we write $\hat g_p=\hat g|_{\hat M_p}=\sfe(\wt g|_{\hat M_p})$ for $p\in\cC$ and call this the \emph{$\hat M_p$-model of $\wt g$}.
\end{definition}

\begin{notation}[Lifts of tensors on $M$]
\label{NotGLLifts}
  In~\eqref{EqGLTot}, we regard $g\in\CI(M;S^2 T^*M)$ as an $\eps$-independent element of $\CI(\wt M';S^2\wt T^*\wt M')$, which we then pull back to $\wt M$ via $\wt\upbeta$. We shall use this notation also in the sequel for lifts of smooth functions, tensors, and differential operators on $M$.
\end{notation}

As a simple example, $\wt g=g$, as an $\eps$-independent tensor, is a $(\emptyset,\emptyset)$-smooth total family whose $\hat M_p$-model, with respect to Fermi normal coordinates $(t,x)$ and $\hat t=\dd t(-)$, $\hat x=\dd x(-)$, is the Minkowski metric $-\dd\hat t^2+\dd\hat x^2$ for all $p\in\cC$.

Definition~\ref{DefGLTot} is analogous to \cite[Definition~4.17]{HintzGlueID}. Note that $g$ is uniquely determined by $\wt g$ via $\upbeta_\circ^*g=\wt g|_{M_\circ}$. The $\hat M$-model $\wt g|_{\hat M}\in\cA_\phg^{\N_0\cup\cE}(\hat M\setminus\hat K^\circ;S^2\wt T^*\wt M)$ is Lorentzian, and hence $\hat g_p$ is a stationary Lorentzian metric on $T_p M$ with smooth dependence on $p\in\cC$. More precisely, the model of $\wt\upbeta^*g$ at $\hat M$ is
\begin{equation}
\label{EqGLMink}
  \hat\eta := \sfe\bigl(\hat\upbeta^*(g|_\cC)\bigr) \in \CI_{\rm I}(\breve T_\cC M;S^2\,{}^{\tscop,\vee}T^*\breve T_\cC M),
\end{equation}
which is a family of stationary Lorentzian metrics; in the coordinates~\eqref{EqGffHatCoords} associated with Fermi normal coordinates around $\cC$, we have $\hat\eta_p=-\dd\hat t^2+\dd\hat x^2$ for all $p\in\cC$ (cf.\ \eqref{EqGLFermiMet}); therefore,
\begin{equation}
\label{EqGLAF}
  \hat g-\hat\eta \in \cA_{\phg,\rm I}^\cE\bigl(\breve T_\cC M\setminus\breve\pi^{-1}(\hat K^\circ);S^2\,{}^{\tscop,\vee}T^*(\breve T_\cC M)\bigr),
\end{equation}
where $\breve\pi$ was defined in~\eqref{EqGffBrevePi} (so $\breve\pi^{-1}(\hat K^\circ)=\ol{\R_{\hat t}}\times\hat K^\circ$ in local coordinates); in this sense, $\hat g$ is a family of asymptotically flat metrics. The notation on the right in~\eqref{EqGLAF} means that the index set at the lift of $\ol{T_\cC}M$ to $\breve T_\cC M$ is $\cE$, and that $\hat g-\hat\eta$ is stationary on each fiber of $\breve T_\cC M$ (which implies, but is stronger than, $\hat g-\hat\eta$ having index set $\N_0$ at the front face of $\breve T_\cC M$).

Conversely, suppose that $\wt g$ of the form~\eqref{EqGLTot}, and suppose that the restriction of $\wt g$ to $M_\circ$ is a Lorentzian metric $g$ and that the $\hat M_p$-model of $\wt g$ is Lorentzian for all $p\in\cC$. We then claim that for every precompact open set $V\subset M$ there exists an $\eps(\ol{V})>0$ so that $\wt g$ is Lorentzian on
\begin{equation}
\label{EqGLDod}
  \wt\upbeta^{-1}\bigl([0,\eps(\ol{V}))\times V\bigr)\subset\wt M.
\end{equation}
Indeed, note that $\wt g$ is automatically Lorentzian in an open neighborhood of $\hat M\cup M_\circ$; since any such neighborhood contains $\wt\upbeta^{-1}(\{0\}\times\ol{V})=(\hat M\cap\hat\upbeta^{-1}(\ol{V}\cap\cC))\cup\upbeta_\circ^{-1}(\ol{V})$, the existence of $\eps(\ol{V})$ follows from the compactness of $\ol{V}$. If we take the union of the sets~\eqref{EqGLDod} over all precompact $V\subset M$, we obtain an open neighborhood of $\hat M\cup M_\circ$ on which $\wt g$ is Lorentzian. (Conversely, every open neighborhood of $\hat M\cup M_\circ$ contains the set~\eqref{EqGLDod} for any precompact $V\subset M$ and some $\eps(\ol{V})>0$.)

\begin{notation}[Definitions on neighborhoods of $\hat M\cup M_\circ$]
\label{NotGLDef}
  By a mild abuse of notation, we shall write $\CI(\wt M)$, $\Diff(\wt M)$, etc.\ for functions, differential operators, etc.\ which are defined on an open neighborhood of $\hat M\cup M_\circ$.
\end{notation}

We proceed to describe geometric objects and operators associated with $\wt g$, thus in particular explaining how the se-structures discussed in~\S\ref{SsGVf} arise. We write $\dd_{\wt M}$ for the fiberwise exterior derivative which is given by the usual exterior derivative on each fiber $\wt M_\eps\cong M$ of $\wt M$ over $(0,1)_\eps$. We similarly write $\nabla^{\wt g}$ for the fiberwise Levi-Civita connection. On $\breve T_\cC M$, we may similarly define the fiberwise exterior derivative $\dd_{\breve T_\cC M}$ (which restricts to the interior $T_p M$ of $\breve T_p M$, $p\in\cC$, to the exterior derivative on $T_p M$) and fiberwise connection $\nabla^{\hat g}$ with respect to the $\hat M$-model $\hat g$ of $\wt g$.

\begin{lemma}[Exterior derivative]
\label{LemmaGLExtDer}
  The exterior derivative on $k$-forms (here meaning: sections of $\Lambda^k\wt T^*\wt M$) satisfies (using Notation~\usref{NotGLLifts})
  \begin{equation}
  \label{EqGLExtDer}
    \dd_{\wt M} \in \wt\upbeta^*\Diff^1(M;\Lambda^k T^*M,\Lambda^{k+1}T^*M) \subset \hat\rho^{-1}\Diffse^1(\wt M;\Lambda^k\wt T^*\wt M,\Lambda^{k+1}\wt T^*\wt M).
  \end{equation}
  For the normal operators, we have
  \begin{equation}
  \label{EqGLExtDerNorm}
    \sfe\circ N_{\hat M}(\eps\dd_{\wt M})\circ\sfe^{-1} = \dd_{\breve T_\cC M},\qquad
    N_{M_\circ}(\dd_{\wt M}) = \upbeta_\circ^*\dd,
  \end{equation}
  where $\dd$ is the exterior derivative on $M$ and
  \[
    \upbeta_\circ^*\dd \in \upbeta_\circ^*\bigl(\Diff^1(M;\Lambda^k T^*M,\Lambda^{k+1}T^*M)\bigr) \subset \hat\rho^{-1}\Diffe^1(M_\circ;\upbeta_\circ^*\Lambda^k T^*M,\upbeta_\circ^*\Lambda^{k+1} T^*M)
  \]
  is its lift to $M_\circ$.
\end{lemma}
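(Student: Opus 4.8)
The plan is to reduce everything to the local-coordinate form of $\dd_{\wt M}$ near $\cC$ together with the structure of $\Vse(\wt M)$ from~\S\ref{SsGVf}. First I would observe that $\dd_{\wt M}$, being the exterior derivative on each fiber $\wt M_\eps\cong M$ of the \emph{product} $\wt M'=[0,1)\times M$ (whose vertical tangent bundle $\wt T\wt M'$ is the pullback of $TM$), is nothing but the lift $\wt\upbeta^*\dd$ of the exterior derivative on $M$; this is the first membership in~\eqref{EqGLExtDer}. For the inclusion $\wt\upbeta^*\Diff^1(M;\Lambda^kT^*M,\Lambda^{k+1}T^*M)\subset\hat\rho^{-1}\Diffse^1(\wt M;\Lambda^k\wt T^*\wt M,\Lambda^{k+1}\wt T^*\wt M)$ --- indeed, more generally, $\wt\upbeta^*\Diff^m(M)\subset\hat\rho^{-m}\Diffse^m(\wt M)$ --- I would write $\dd_{\wt M}=\sum_\mu(\dd z^\mu\wedge\,\cdot\,)\circ\pa_\mu$ in local coordinates $z=(t,x)$, note that the bundle homomorphisms $\dd z^\mu\wedge(\cdot)\colon\Lambda^k\wt T^*\wt M\to\Lambda^{k+1}\wt T^*\wt M$ are smooth, and use~\eqref{EqGVfTilde} together with the fact that $\wt\cV(\wt M)$ is $\CI(\wt M)$-spanned by lifts of vector fields on $M$ to conclude that $\pa_\mu=\hat\rho^{-1}(\hat\rho\pa_\mu)\in\hat\rho^{-1}\Vse(\wt M)$.

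Next I would compute $N_{M_\circ}(\dd_{\wt M})$. By Definition~\ref{DefGVfNorm} the map $N_{M_\circ}$ is restriction of $P\tilde u$ to $M_\circ$ and intertwines multiplication by $\hat\rho$ on $\wt M$ with multiplication by the boundary defining function $\hat\rho|_{M_\circ}$ of $\pa M_\circ$. Since $\dd_{\wt M}=\wt\upbeta^*\dd$ is $\eps$-independent, restricting it to $M_\circ$ yields exactly the lift $\upbeta_\circ^*\dd$ of $\dd$ to $M_\circ=[M;\cC]$; that $\upbeta_\circ^*\dd\in\hat\rho^{-1}\Diffe^1(M_\circ;\upbeta_\circ^*\Lambda^kT^*M,\upbeta_\circ^*\Lambda^{k+1}T^*M)$ is the case $L=\dd$, $m=1$ of Lemma~\ref{LemmaBgE} (cf.\ Example~\ref{ExBgEb}).

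The substantive computation is that of the $\hat M$-normal operator. First one checks $\eps\dd_{\wt M}\in\Diffse^1(\wt M;\dots)$, which holds because $\eps\in\hat\rho\CI(\wt M)$. Working near $\hat M^\circ$ with $\hat\rho=(\eps^2+|x|^2)^{1/2}$, so that $\eps=\rho_\circ\hat\rho$ with $\rho_\circ=\la\hat x\ra^{-1}$, I would write $\eps\dd_{\wt M}=\rho_\circ\sum_\mu(\dd z^\mu\wedge\,\cdot\,)\circ(\hat\rho\pa_\mu)$. Applying the homomorphism $N_{\hat M}$, inserting $\rho_\circ|_{\hat M}=\la\hat x\ra^{-1}$, the identities $\sfe(\dd t)=\dd\hat t$ and $\sfe(\dd x^j)=\dd\hat x^j$ (dual to~\eqref{EqGffStatExtVF}; cf.~\eqref{EqGffStatExtS2}), and the explicit normal operators $N_{\hat M}(\hat\rho\pa_t)=\la\hat x\ra\pa_{\hat t}$, $N_{\hat M}(\hat\rho\pa_{x^j})=\la\hat x\ra\pa_{\hat x^j}$ from Example~\ref{ExGVfExplicit}, the factors $\la\hat x\ra$ cancel and one obtains
\[
  \sfe\circ N_{\hat M}(\eps\dd_{\wt M})\circ\sfe^{-1} = \dd\hat t\wedge\pa_{\hat t}+\sum_j\dd\hat x^j\wedge\pa_{\hat x^j}
\]
on the interior $T_p M$ of each fiber of $\breve T_\cC M$, which is precisely $\dd_{\breve T_\cC M}$ there. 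Since both sides lie in ${}^\vee\Diff_{\tbop,\rm I}^1(\breve T_\cC M;\dots)$, whose coefficients are smooth up to $\pa\breve T_\cC M$, and they agree on the dense subset $T_\cC M\subset\breve T_\cC M$, they agree everywhere, which gives the first identity in~\eqref{EqGLExtDerNorm}.

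The one point requiring care --- and the only genuine obstacle --- is the bookkeeping in this final step: as flagged in Remark~\ref{RmkGffMultEps}, the map $\sfe$ on se-vector fields is \emph{not} naive multiplication by $\eps$ followed by restriction to $\hat M$ (since $\eps\pa_t$ vanishes there), so one must route the computation through $\hat\rho\pa_\mu\in\Vse(\wt M)$ and the definition $N_{\hat M}(V)=\sfe(\eps^{-1}V)$ rather than literally setting $\eps=0$. Correspondingly, the choice of the weight $\eps$ (and not $\hat\rho$) in $\eps\dd_{\wt M}$ is exactly what makes the compensating factor $\rho_\circ|_{\hat M}=\la\hat x\ra^{-1}$ cancel the $\la\hat x\ra$'s from Example~\ref{ExGVfExplicit}, so that the normal operator is the undecorated exterior derivative $\dd_{\breve T_\cC M}$ rather than a weighted version of it. Everything else is routine unwinding of the definitions.
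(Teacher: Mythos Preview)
Your proof is correct and follows essentially the same approach as the paper. The paper argues the membership \eqref{EqGLExtDer} via $\pa_\mu\in\hat\rho^{-1}\Vse(\wt M)$ just as you do, and for the $\hat M$-normal operator writes $\eps\dd_{\wt M}u=(\eps\pa_t u)\,\dd t+(\eps\pa_{x^j}u)\,\dd x^j$ together with $N_{\hat M}(\eps\pa_z)=\sfe(\pa_z)=\pa_{\hat z}$; your routing through $\eps=\rho_\circ\hat\rho$ and Example~\ref{ExGVfExplicit} is an equivalent (and slightly more explicit) unpacking of that same identity.
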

\begin{proof}
  The membership~\eqref{EqGLExtDer} follows from the fact that elements of $\wt T\wt M$ (such as the local coordinate derivatives $\pa_t$, $\pa_{x^j}$ on $M$ lifted to $\wt M$) lie in $\hat\rho^{-1}\Vse(\wt M)$. Alternatively, one can use the coordinate-free formula for the exterior derivative and use that for $V,W\in\wt\cV(\wt M)=\hat\rho^{-1}\Vse(\wt M)$, we have $[V,W]\in\hat\rho^{-2}\Vse(\wt M)=\hat\rho^{-1}\wt\cV(\wt M)$.

  Consider next the action of $\eps\dd_{\wt M}$ on functions $u$, which is
  \[
    \eps\dd_{\wt M}u = (\eps\pa_t u)\,\dd t + (\eps\pa_{x^j}u)\,\dd x^j.
  \]
  Since $N_{\hat M}(\eps\pa_z)=\sfe(\pa_z)=\pa_{\hat z}$ for $z=t,x^j$ and $\hat z=\hat t,\hat x^j$ (in the notation~\eqref{EqGffHatCoords}), and since $\sfe(\dd z)=\dd\hat z$, we have verified the first equality in~\eqref{EqGLExtDerNorm} on functions. The verification on $k$-forms is analogous. The second equality in~\eqref{EqGLExtDerNorm} is clear.
\end{proof}

\begin{lemma}[Covariant derivative]
\label{LemmaGLNabla}
  Let $\wt g$ be a $(\hat\cE,\cE)$-smooth total family relative to $(M,\cC,g)$, with $\hat M$-model $\hat g$. Let $a,b\in\N_0$. Then, using Notation~\usref{NotGLDef},
  \begin{align*}
    \nabla^{\wt g} &\in \wt\upbeta^*\bigl(\Diff^1(M;\wt T^{a,b}M,\wt T^{a,b+1}M)\bigr) + \cA_\phg^{(\N_0\cup\hat\cE)-1,\cE}\Diffse^1(\wt M\setminus\wt K^\circ;\wt T^{a,b}\wt M,\wt T^{a,b+1}\wt M) \\
      &\subset \cA_\phg^{(\N_0\cup\hat\cE)-1,\N_0\cup\cE}\Diffse^1(\wt M\setminus\wt K^\circ;\wt T^{a,b}\wt M,\wt T^{a,b+1}\wt M).
  \end{align*}
  The normal operators are given by
  \[
    \sfe\circ N_{\hat M}(\eps\nabla^{\wt g})\circ\sfe^{-1}=\nabla^{\hat g},\qquad
    N_{M_\circ}(\nabla^{\wt g})=\upbeta_\circ^*(\nabla^g).
  \]
\end{lemma}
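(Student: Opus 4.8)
The plan is to mimic the structure of Lemma~\ref{LemmaGLExtDer}: separate off an honest smooth-coefficient operator lifted from $M$, show the remainder is a lower-order correction with the advertised polyhomogeneity, and then read off the two normal operators.

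\textbf{Step 1 (decomposition and regularity).} Write $\nabla^{\wt g}=\nabla^g+S_{\wt g}$, where $\nabla^g$ denotes the lift along $\wt\upbeta$ (and the projection $\wt M'\to M$) of the Levi-Civita connection of $g$, and $S_{\wt g}:=\nabla^{\wt g}-\nabla^g$ is the difference tensor. In Fermi normal coordinates $(t,x)$, which trivialize $\wt T\wt M$ via the lifts of $\pa_t,\pa_{x^j}$, the standard tensorial formula gives $S_{\wt g}{}^\lambda_{\mu\nu}=\tfrac12\,\wt g^{\lambda\kappa}\bigl(\nabla^g_\mu(\wt g_{(1)})_{\nu\kappa}+\nabla^g_\nu(\wt g_{(1)})_{\mu\kappa}-\nabla^g_\kappa(\wt g_{(1)})_{\mu\nu}\bigr)$, with $\wt g_{(1)}=\wt g-\wt\upbeta^*g\in\cA_\phg^{\N_0\cup\hat\cE,\cE}$. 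The coordinate derivatives $\pa_\mu$ lift to elements of $\hat\rho^{-1}\Vse(\wt M)=\wt\cV(\wt M)$; since $\Vse(\wt M)\subset\Vb(\wt M)$ preserves polyhomogeneity, the entries $\nabla^g_\mu(\wt g_{(1)})_{\nu\kappa}$ lie in $\hat\rho^{-1}\cA_\phg^{\N_0\cup\hat\cE,\cE}=\cA_\phg^{(\N_0\cup\hat\cE)-1,\cE}$. Inverting $\wt g=\wt\upbeta^*g+\wt g_{(1)}$ by a Neumann series and using that $\cA_\phg^{\N_0\cup\hat\cE,\cE}$ is an algebra (the index sets $\N_0\cup\hat\cE$ and $\cE$ being nonlinearly closed, the former because $\N_0$ absorbs addition into index sets) yields $\wt g^{\lambda\kappa}-\wt\upbeta^*(g^{\lambda\kappa})\in\cA_\phg^{\N_0\cup\hat\cE,\cE}$. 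Multiplying out and using $(\N_0\cup\hat\cE)+((\N_0\cup\hat\cE)-1)\subseteq(\N_0\cup\hat\cE)-1$ and $(\N_0\cup\cE)+\cE\subseteq\cE$ gives $S_{\wt g}{}^\lambda_{\mu\nu}\in\cA_\phg^{(\N_0\cup\hat\cE)-1,\cE}$. Since $S_{\wt g}$ acts on tensors by $\CI(\wt M)$-linear contraction it is a zeroth order se-operator with these coefficients, so $S_{\wt g}\in\cA_\phg^{(\N_0\cup\hat\cE)-1,\cE}\Diffse^1(\wt M\setminus\wt K^\circ;\wt T^{a,b}\wt M,\wt T^{a,b+1}\wt M)$, while $\nabla^g\in\wt\upbeta^*\Diff^1(M;\wt T^{a,b}M,\wt T^{a,b+1}M)$; this is the first displayed inclusion. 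The second follows because $\wt\upbeta^*\Diff^1(M;\cdot,\cdot)\subseteq\hat\rho^{-1}\Diffse^1(\wt M)\subseteq\cA_\phg^{(\N_0\cup\hat\cE)-1,\N_0\cup\cE}\Diffse^1$ (coordinate vector fields, with smooth coefficients, lie in $\hat\rho^{-1}\Vse(\wt M)$) and $\cA_\phg^{(\N_0\cup\hat\cE)-1,\cE}\subseteq\cA_\phg^{(\N_0\cup\hat\cE)-1,\N_0\cup\cE}$.

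\textbf{Step 2 ($M_\circ$-normal operator).} Restriction to $M_\circ$ of the lift of a differential operator on $M$ is its lift along $\upbeta_\circ$, exactly as in the second identity of~\eqref{EqGLExtDerNorm}; hence $N_{M_\circ}(\nabla^g)=\upbeta_\circ^*(\nabla^g)$. The correction $S_{\wt g}$ has coefficients in $\cA_\phg^{\,\cdot\,,\cE}$ with $\Re\cE>0$, so it vanishes at $M_\circ$ and contributes nothing to $N_{M_\circ}$. Therefore $N_{M_\circ}(\nabla^{\wt g})=\upbeta_\circ^*(\nabla^g)$.

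\textbf{Step 3 ($\hat M$-normal operator).} On functions $\nabla^{\wt g}=\dd_{\wt M}$, and Lemma~\ref{LemmaGLExtDer} already gives $\sfe\circ N_{\hat M}(\eps\dd_{\wt M})\circ\sfe^{-1}=\dd_{\breve T_\cC M}$. For tensors we compute $N_{\hat M}$ of $\eps\nabla^{\wt g}=\eps\nabla^g+\eps S_{\wt g}$ in coordinates. By Example~\ref{ExGVfExplicit} and Remark~\ref{RmkGffMultEps}, $N_{\hat M}(\eps\pa_t)$ and $N_{\hat M}(\eps\pa_{x^j})$ act, after the identification $\sfe$ and restriction to the front-face fibers $\breve T_p M$, as $\pa_{\hat t}$ and $\pa_{\hat x^j}$. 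The restriction of $\wt g$ to $\hat M$ has coordinate components equal to those of $\hat g=\sfe(\wt g|_{\hat M})$ through~\eqref{EqGffStatExtS2}, so $N_{\hat M}(\wt g^{\lambda\kappa})=\hat g^{\hat\lambda\hat\kappa}$; and since each $\hat g_p$ is stationary, $\pa_{\hat t}$-derivatives annihilate $\wt g|_{\hat M}$ and $N_{\hat M}(\eps\pa_\mu\wt g_{\nu\kappa})=\pa_{\hat\mu}\hat g_{\hat\nu\hat\kappa}$ for all $\mu$ (with the temporal component vanishing on both sides). Substituting into the Christoffel formula, $N_{\hat M}(\eps\Gamma[\wt g]^\lambda_{\mu\nu})=\tfrac12\hat g^{\hat\lambda\hat\kappa}(\pa_{\hat\mu}\hat g_{\hat\nu\hat\kappa}+\pa_{\hat\nu}\hat g_{\hat\mu\hat\kappa}-\pa_{\hat\kappa}\hat g_{\hat\mu\hat\nu})=\Gamma[\hat g]^{\hat\lambda}_{\hat\mu\hat\nu}$, whence $\sfe\circ N_{\hat M}(\eps\nabla^{\wt g})\circ\sfe^{-1}=\pa_{\hat\mu}+\Gamma[\hat g]\ast(-)=\nabla^{\hat g}$.

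The main obstacle is the bookkeeping in Step 1: ensuring that the index sets do not deteriorate under the products and the $\hat\rho^{-1}$-weight shifts built into the Christoffel symbols, which is where the nonlinear-closure hypotheses on $\hat\cE,\cE$ (together with the absorption property of $\N_0$) are used repeatedly; and in Step 3 the identification of $N_{\hat M}(\eps\Gamma[\wt g])$ with $\Gamma[\hat g]$, which hinges on the vanishing of the fast-time derivatives at $\hat M$ (stationarity of the $\hat M$-models) and on the explicit form of the isomorphism $\sfe$.
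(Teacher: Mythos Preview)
Your proof is correct and follows essentially the same approach as the paper's: compute the Christoffel symbols in Fermi normal coordinates, track the polyhomogeneous index sets using nonlinear closure, and identify the two normal operators by restriction. The only organizational difference is that you package the decomposition as $\nabla^{\wt g}=\nabla^g+S_{\wt g}$ (difference-of-connections tensor) while the paper writes $\Gamma(\wt g)^\kappa_{\mu\nu}\in\wt\upbeta^*\CI+\cA_\phg^{(\N_0\cup\hat\cE)-1,\cE}$ directly and reduces to the cases $(a,b)=(0,0),(1,0)$ before invoking the Leibniz rule; your version handles all $(a,b)$ at once since $S_{\wt g}$ acts by contraction on any tensor type.
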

\begin{proof}
  It suffices to prove this for $(a,b)=(0,0),(1,0)$. (This implies the result for $(a,b)=(0,1)$, and the general case then follows from the Leibniz rule.) For $(a,b)=(0,0)$, the gradient $\nabla^{\wt g}$ is the composition of\footnote{Here $\upbeta^*\CI=\upbeta^*\CI(M)$ is a subset of $\cA_\phg^{\N_0,\N_0}(\wt M)$.}
  \[
    \wt g^{-1} \in \bigl(\wt\upbeta^*\CI+\cA_\phg^{\N_0\cup\hat\cE,\cE}\bigr)\bigl(\wt M\setminus\wt K^\circ;\Hom(\wt T^*\wt M,\wt T\wt M)\bigr)
  \]
  (using that $\hat\cE,\cE$ are nonlinearly closed) with $\dd_{\wt M}$; the result in this case thus follows from Lemma~\ref{LemmaGLExtDer}. For $(a,b)=(1,0)$, one may compute the Christoffel symbols $\Gamma(\wt g)_{\mu\nu}^\kappa$ of $\wt g$ in local coordinates $z=(t,x)$: since $\pa_\lambda\wt g_{\mu\nu}\in\wt\upbeta^*\CI+\cA_\phg^{(\N_0\cup\hat\cE)-1,\cE}$, we get $\Gamma(\wt g)_{\mu\nu}^\kappa\in\wt\upbeta^*\CI+\cA_\phg^{(\N_0\cup\hat\cE)-1,\cE}$. The claim then follows from $\nabla^{\wt g}_\mu(a^\nu\pa_\nu)=(\pa_\mu a^\kappa+a^\nu\Gamma_{\mu\nu}^\kappa(\wt g))\pa_\kappa$, since $\pa_\mu$ lies in $\hat\rho^{-1}\Vse(\wt M)$ as a differential operator and is smooth as a section of $\wt T\wt M$. The $M_\circ$-normal operator selects the coefficients of class $\wt\upbeta^*\CI$ (which come solely from $g$), while for the $\hat M$-normal operator we note that the restriction of $\eps\pa_{z^\lambda}\wt g_{\mu\nu}=\pa_{\hat z^\lambda}\wt g_{\mu\nu}$ to $\hat M_p$ is $\pa_{\hat z^\lambda}\hat g(\pa_{\hat z^\mu},\pa_{\hat z^\nu})$. (We leave a coordinate-free proof using the Koszul formula to the reader.) 
\end{proof}

By the properties of the $\hat M$-normal operator map, or by direct computation on $\hat M$, the operators $\dd_{\breve T_\cC M}$ and $\nabla^{\hat g}$ are elements of $\rho_\circ{}^\vee\Difftb^1(\hat M)$ acting between the appropriate tensor powers of ${}^{\tscop,\vee}T(\breve T_\cC M)$. This is an instance of the well-known fact that geometric operators associated with \emph{scattering} metrics are \emph{weighted b}-operators, see e.g.\ \cite[Theorem~6.8]{VasyMinicourse} and \cite[\S4]{HintzUnDet}.

\begin{cor}[Curvature]
\label{CorGLCurvature}
  Let $\wt g$ be a $(\hat\cE,\cE)$-smooth family relative to $(M,\cC,g)$, with $\hat M$-model $\hat g$. Then the Riemann curvature tensor $\Riem(\wt g)(X,Y)Z=([\nabla^{\wt g}_X,\nabla^{\wt g}_Y]-\nabla^{\wt g}_{[X,Y]})Z$ satisfies
  \begin{align*}
    &\Riem(\wt g) \in \wt\upbeta^*\CI(M;\wt T^{1,3}M)+\cA_\phg^{(\N_0\cup\hat\cE)-2,\cE}(\wt M\setminus\wt K^\circ;\wt T^{1,3}\wt M) \\
      &\hspace{1.3em}\subset \cA_\phg^{(\N_0\cup\hat\cE)-2,\N_0\cup\cE}(\wt M\setminus\wt K^\circ;\wt T^{1,3}\wt M), \\
    &\sfe\bigl((\eps^2\Riem(\wt g))|_{\hat M}\bigr) = \Riem(\hat g),\qquad
      \Riem(\wt g)|_{M_\circ}=\upbeta_\circ^*(\Riem(g)).
  \end{align*}
  The Ricci tensor similarly satisfies
  \begin{align*}
    &\Ric(\wt g)\in\wt\upbeta^*\CI(M;S^2\wt T^*M)+\cA_\phg^{(\N_0\cup\hat\cE)-2,\cE}(\wt M\setminus\wt K^\circ;S^2\wt T^*\wt M), \\
      &\hspace{1.3em}\subset \cA_\phg^{(\N_0\cup\hat\cE)-2,\N_0\cup\cE}(\wt M\setminus\wt K^\circ;\wt T^{1,3}\wt M), \\
    &\sfe\bigl((\eps^2\Ric(\wt g))|_{\hat M}\bigr) = \Ric(\hat g),\qquad
    \Ric(\wt g)|_{M_\circ}=\upbeta_\circ^*(\Ric(g));
  \end{align*}
  analogously for the scalar curvature.
\end{cor}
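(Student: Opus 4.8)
The plan is to reduce the corollary to Lemma~\ref{LemmaGLExtDer} and Lemma~\ref{LemmaGLNabla} together with the closure of $\cA_\phg$-spaces under products (valid since $\hat\cE,\cE$ are nonlinearly closed and of positive real part) and under se-differentiation. Since all assertions are local and the coordinate vector fields $\pa_t,\pa_{x^j}$ are smooth sections of $\wt T\wt M$ which, as operators, lie in $\hat\rho^{-1}\Vse(\wt M)$, it suffices to work in the frames $\pa_{z^\mu},\dd z^\mu$ (with $z=(t,x)$ Fermi normal coordinates near $\cC$) and to track component functions. From the proof of Lemma~\ref{LemmaGLNabla} one already has, over $\wt M\setminus\wt K^\circ$, that $\wt g^{\mu\nu}\in\wt\upbeta^*\CI+\cA_\phg^{\N_0\cup\hat\cE,\cE}$, $\pa_\lambda\wt g_{\mu\nu}\in\wt\upbeta^*\CI+\cA_\phg^{(\N_0\cup\hat\cE)-1,\cE}$ (each coordinate derivative carries a factor $\hat\rho^{-1}$, lowering the $\hat M$-order by one while leaving the $M_\circ$-index set untouched), and hence $\Gamma(\wt g)^\kappa_{\mu\nu}\in\wt\upbeta^*\CI+\cA_\phg^{(\N_0\cup\hat\cE)-1,\cE}$. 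Feeding this into the coordinate formula $\Riem(\wt g)^\kappa{}_{\lambda\mu\nu}=\pa_\mu\Gamma^\kappa_{\nu\lambda}-\pa_\nu\Gamma^\kappa_{\mu\lambda}+\Gamma^\kappa_{\mu\sigma}\Gamma^\sigma_{\nu\lambda}-\Gamma^\kappa_{\nu\sigma}\Gamma^\sigma_{\mu\lambda}$, the term $\pa\Gamma$ costs one further order at $\hat M$ and the quadratic term stays in $\cA_\phg^{(\N_0\cup\hat\cE)-2,\cE}$, giving $\Riem(\wt g)\in\wt\upbeta^*\CI+\cA_\phg^{(\N_0\cup\hat\cE)-2,\cE}$.

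To identify the $\wt\upbeta^*\CI$-summand as $\wt\upbeta^*\Riem(g)$, I would write $\Riem(\wt g)-\Riem(\wt\upbeta^*g)=\int_0^1 D\Riem|_{\wt\upbeta^*g+s\wt g_{(1)}}(\wt g_{(1)})\,\dd s$: the integrand is linear in $\wt g_{(1)}$ and its derivatives of order $\leq 2$ with coefficients rational in $\wt\upbeta^*g+s\wt g_{(1)}$ (hence in $\wt\upbeta^*\CI+\cA_\phg^{\N_0\cup\hat\cE,\cE}$), so it lies in $\cA_\phg^{(\N_0\cup\hat\cE)-2,\cE}$ uniformly in $s$, and $\Riem(\wt\upbeta^*g)=\wt\upbeta^*\Riem(g)$ exactly. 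The inclusion into $\cA_\phg^{(\N_0\cup\hat\cE)-2,\N_0\cup\cE}$ then follows from $\wt\upbeta^*\CI\subset\cA_\phg^{\N_0,\N_0}$ together with $\N_0\subset(\N_0\cup\hat\cE)-2$ and $\N_0\subset\N_0\cup\cE$. The statements for $\Ric(\wt g)$ follow by contracting the index pattern of $\Riem(\wt g)$, and those for the scalar curvature by pairing $\Ric(\wt g)$ with $\wt g^{-1}\in\wt\upbeta^*\CI+\cA_\phg^{\N_0\cup\hat\cE,\cE}$.

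For the normal operators at $M_\circ$: the remainder in $\cA_\phg^{(\N_0\cup\hat\cE)-2,\cE}$ vanishes at $M_\circ$ because $\Re\cE>0$, so $\Riem(\wt g)|_{M_\circ}=(\wt\upbeta^*\Riem(g))|_{M_\circ}=\upbeta_\circ^*\Riem(g)$, and similarly for $\Ric$ and the scalar curvature. For the normal operators at $\hat M$ I would use that $N_{\hat M}$ is a homomorphism (Definition~\ref{DefGVfNorm}): $\Riem(\wt g)$ is built from $\nabla^{\wt g}$ by applying it twice through a fixed algebraic recipe, and $\sfe\circ N_{\hat M}(\eps\nabla^{\wt g})\circ\sfe^{-1}=\nabla^{\hat g}$ by Lemma~\ref{LemmaGLNabla}, so the two factors of $\eps$ in $\eps^2\Riem(\wt g)$ are exactly what is needed for $\sfe\bigl((\eps^2\Riem(\wt g))|_{\hat M}\bigr)=\Riem(\hat g)$; concretely, $\eps\Gamma(\wt g)$ restricts at $\hat M$ under $\sfe$ to $\Gamma(\hat g)$ (the slow-time contributions $\eps\pa_t\wt g$ dropping out, consistently with $\hat t$-independence of $\hat g$), whence $\eps^2(\pa\Gamma+\Gamma\Gamma)$ restricts to $\Riem(\hat g)$; the same reasoning, applied to the contraction and to the pairing with $\hat g^{-1}=\sfe(\wt g^{-1}|_{\hat M})$, gives the $\hat M$-models of $\Ric$ and of the scalar curvature. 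The only bookkeeping that requires care—hence the main (though routine) obstacle—is keeping the index-set shifts consistent: each $\pa_{z^\mu}$ contributes one factor $\hat\rho^{-1}$, so two derivatives give the passage from $\N_0\cup\hat\cE$ (for $\wt g$) to $(\N_0\cup\hat\cE)-1$ (for $\nabla^{\wt g}$, $\Gamma$) to $(\N_0\cup\hat\cE)-2$ (for the curvatures), matched by the two powers of $\eps$ in the $\hat M$-normal operator, and one must invoke nonlinear closedness of $\hat\cE,\cE$ wherever products of $\cA_\phg$-terms arise.
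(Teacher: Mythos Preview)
Your proposal is correct and follows the same route the paper implicitly takes (the corollary is stated without proof, as an immediate consequence of Lemma~\ref{LemmaGLNabla}): you work in the coordinate frame, use that $\Gamma(\wt g)^\kappa_{\mu\nu}\in\wt\upbeta^*\CI+\cA_\phg^{(\N_0\cup\hat\cE)-1,\cE}$ from the proof of Lemma~\ref{LemmaGLNabla}, feed this into $R^\kappa{}_{\lambda\mu\nu}=\pa\Gamma+\Gamma\Gamma$, and read off the normal operators from those of $\nabla^{\wt g}$ and the multiplicativity of $N_{\hat M}$, $N_{M_\circ}$. The integral-remainder formula you use to identify the $\wt\upbeta^*\CI$ summand as $\wt\upbeta^*\Riem(g)$ is valid but unnecessary: the direct Christoffel computation already shows that the $\wt\upbeta^*\CI$ part of $\pa\Gamma(\wt g)+\Gamma(\wt g)\Gamma(\wt g)$ is exactly $\wt\upbeta^*(\pa\Gamma(g)+\Gamma(g)\Gamma(g))$, since setting $\wt g_{(1)}=0$ kills all $\cA_\phg^{\cdot,\cE}$ contributions.
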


The following result connects the notion of total spacetime family to the corresponding notion on the level of initial data sets for the Einstein equations, cf.\ \cite[Definition~4.17]{HintzGlueID}:

\begin{cor}[Initial data]
\label{CorGLInitial}
  Let $\wt g$ be a $(\hat\cE,\cE)$-smooth total family relative to $(M,\cC,g)$ on $\wt M\setminus\wt K^\circ$ in the notation of Definition~\usref{DefGLTot}. Let $X\subset M$ be a smooth spacelike hypersurface of $(M,g)$ with $X\cap\cC=\{\fp\}$. Suppose that $T_\fp X\setminus\hat K_\fp^\circ$ is spacelike for the $\hat M_\fp$-model $\hat g_\fp$. Let $\cK\subset M$ be compact. Then:
  \begin{enumerate}
  \item\label{ItGLInitialSpace} $\wt X\setminus\wt K^\circ\subset\wt M$ is spacelike for $\wt g$ on $\wt\upbeta^{-1}([0,\eps_0)\times\cK)$ when $\eps_0>0$ is sufficiently small;
  \item\label{ItGLInitialMem} the initial data $(\wt\gamma,\wt k)$ of $\wt g$ at $\wt X\setminus\wt K^\circ$ (i.e.\ the first and second fundamental form of $\wt X_\eps\setminus\wt K^\circ\subset\wt M_\eps$, $\eps\in(0,1)$) take the form
    \[
      \wt\gamma = \wt\upbeta^*\gamma + \wt\gamma_{(1)},\qquad
      \wt k = \wt\upbeta^*k + \wt k_{(1)},
    \]
    where $(\gamma,k)$ are the initial data of $X$ in $(M,g)$, and where
    \[
      \wt\gamma_{(1)} \in \cA_\phg^{\N_0\cup\hat\cE,\cE}(\wt X\setminus\wt K^\circ;S^2\wt T^*\wt X),\qquad
      \wt k_{(1)} \in \cA_\phg^{(\N_0\cup\hat\cE)-1,\cE}(\wt X\setminus\wt K^\circ;S^2\wt T^*\wt X).
    \]
    Moreover, $(\hat\gamma,\hat k):=(\sfs(\wt\gamma|_{\hat X}),\sfs(\eps\wt k|_{\hat X}))$ are the initial data of $\hat g_\fp$ at $T_\fp X\setminus\hat K_\fp^\circ$.
  \end{enumerate}
\end{cor}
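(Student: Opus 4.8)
The plan is to follow the argument for the analogous statement on initial data sets in \cite{HintzGlueID}, transporting the geometric quantities attached to $\wt g$ to the hypersurface $\wt X\hookrightarrow\wt M$, which is a p-submanifold meeting $\hat M$ in $\hat M_\fp$ and $M_\circ$ in $X_\circ=\upbeta_\circ^*X$; restriction of polyhomogeneous sections of tensor powers of $\wt T^*\wt M$ to $\wt X$ therefore preserves index sets. For part~\eqref{ItGLInitialSpace} I would first produce an open neighborhood $\cU$ of $\bigl(\eps^{-1}(0)\cap\wt X\bigr)\setminus\wt K^\circ$ in $\wt X\setminus\wt K^\circ$ on which the induced metric $\wt\gamma$ is Riemannian: near $X_\circ$ this holds because $\wt g|_{M_\circ}=\upbeta_\circ^*g$ and $X$ is spacelike for $g$; near $\hat M_\fp^\circ$ the relevant model is $\hat g_\fp$, which by hypothesis is spacelike on $T_\fp X\setminus\hat K_\fp^\circ$; and near $\pa\hat M_\fp$ the metric $\hat g_\fp$ is asymptotically flat by~\eqref{EqGLAF}, while $T_\fp X$ is a spacelike hyperplane for the Minkowski model $\hat\eta_\fp$ since $X$ is spacelike for $g$ and $g|_\fp=\diag(-1,1,\dots,1)$ in Fermi normal coordinates (Lemma~\ref{LemmaGLFermi}). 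The existence of $\cU$ then follows by continuity of $\wt g$ (in the appropriate se-normalization) up to $\hat M\cup M_\circ$. Since $\wt\upbeta^{-1}(\{0\}\times\cK)\cap\wt X$ is a compact subset of $\cU$, there is $\eps_0>0$ with $\wt\upbeta^{-1}([0,\eps_0)\times\cK)\cap\wt X\subset\cU$; this is precisely the compactness argument establishing~\eqref{EqGLDod}.

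For part~\eqref{ItGLInitialMem}, write $\wt g=\wt\upbeta^*g+\wt g_{(1)}$ with $\wt g_{(1)}\in\cA_\phg^{\N_0\cup\hat\cE,\cE}$ and restrict to $\wt X$: since the first fundamental form is algebraic in the metric, $\wt\gamma=\wt\upbeta^*\gamma+\wt\gamma_{(1)}$ with $\wt\upbeta^*g$ restricting to $\wt\upbeta^*\gamma$ (for $\eps>0$ the fiber metric is $g$ on $\wt M_\eps\cong M$, with restriction $\gamma$ to $\wt X_\eps\cong X$) and $\wt\gamma_{(1)}\in\cA_\phg^{\N_0\cup\hat\cE,\cE}(\wt X\setminus\wt K^\circ;S^2\wt T^*\wt X)$. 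For the second fundamental form I would set $\wt k_{(1)}:=\wt k-\wt\upbeta^*k$, noting that $\wt\upbeta^*g$ alone produces the lift $\wt\upbeta^*k$ of the second fundamental form of $X$ in $(M,g)$, and compute $\wt k(Y,Z)=\wt g(\nabla^{\wt g}_Y\nu,Z)$ for tangential $Y,Z$, where the $\wt g$-unit normal $\nu$ of $\wt X$ is obtained by applying $\wt g^{-1}$ to a smooth generator of the (rank one) conormal bundle $N^*\wt X$ and normalizing; since $\wt g^{-1}\in\wt\upbeta^*\CI+\cA_\phg^{\N_0\cup\hat\cE,\cE}$ and $\hat\cE,\cE$ are nonlinearly closed, $\nu$ is a section of $\wt T_{\wt X}\wt M$ of this same class. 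By Lemma~\ref{LemmaGLNabla}, $\nabla^{\wt g}\in\wt\upbeta^*\Diff^1(M)+\cA_\phg^{(\N_0\cup\hat\cE)-1,\cE}\Diffse^1(\wt M\setminus\wt K^\circ)$, and applying this to $\nu$ and pairing with $\wt g$ and smooth tangential vector fields—using that $\Diffse^1$ preserves the index sets $\N_0\cup\hat\cE$ at $\hat M$ and $\cE$ at $M_\circ$, that near $M_\circ$ the lifts $\wt\upbeta^*\Diff^1(M)$ are of the form $\hat\rho^{-1}\Diffe^1$ (Lemma~\ref{LemmaGLExtDer}) so that the factor $\hat\rho^{-1}$ only shifts the $\hat M$-index set down by one, and that all quadratic and higher contributions of $\wt g_{(1)}$ are products of elements of $\cA_\phg^{(\N_0\cup\hat\cE)-1,\cE}$ and $\cA_\phg^{\N_0\cup\hat\cE,\cE}$, hence again in $\cA_\phg^{(\N_0\cup\hat\cE)-1,\cE}$ by nonlinear closure—yields $\wt k_{(1)}\in\cA_\phg^{(\N_0\cup\hat\cE)-1,\cE}(\wt X\setminus\wt K^\circ;S^2\wt T^*\wt X)$.

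Finally, to identify the $\hat M_\fp$-models I would restrict to $\hat X=\hat M_\fp$ and use the relation $\sfs(V)|_q=\sfe(V)|_{q_0}$: the tensor $\sfs(\wt\gamma|_{\hat X})$ is then the restriction of $\hat g_\fp=\sfe(\wt g|_{\hat M_\fp})$ to $\ol{T_\fp X}$, i.e.\ the first fundamental form $\hat\gamma$ of $\hat g_\fp$. For $\hat k$, one writes $\wt k=\eps^{-1}\wt g(\eps\nabla^{\wt g}_{(\cdot)}\nu,\cdot)$ and invokes the normal operator identity $\sfe\circ N_{\hat M}(\eps\nabla^{\wt g})\circ\sfe^{-1}=\nabla^{\hat g}$ of Lemma~\ref{LemmaGLNabla}: applying $N_{\hat M_\fp}$ (equivalently, restricting the lift to the front face of $[\wt M;\hat M_\fp]$, cf.\ Remark~\ref{RmkGVfNormHatRestr}) and tracking the $\eps$-weights—which is exactly what the prefactor $\eps$ in $\sfs(\eps\wt k|_{\hat X})$ and the scaling isomorphism $\sfs$ encode—turns $\eps\wt k|_{\hat X}$ into the second fundamental form of $\ol{T_\fp X}$ in $(\breve T_\fp M,\hat g_\fp)$, namely $\hat k$; this is the exact analogue of the identity $\sfe\bigl((\eps^2\Ric(\wt g))|_{\hat M}\bigr)=\Ric(\hat g)$ of Corollary~\ref{CorGLCurvature}. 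I expect the main obstacle to be this last identification together with the order counting for $\wt k$: one must verify that the loss at $\hat M$ is exactly one order—so that $\eps\wt k$ is bounded there and its boundary value computes $\hat k$—and that the second fundamental form, despite depending on the auxiliary choices of unit normal and tangential frame, is sufficiently natural to commute with the $\hat M$-normal operator map.
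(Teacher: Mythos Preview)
Your proposal is correct and follows essentially the same approach as the paper: a continuity/compactness argument for part~\eqref{ItGLInitialSpace}, the algebraic restriction for $\wt\gamma$, and for $\wt k$ the construction of the unit normal as a section of class $\wt\upbeta^*\CI+\cA_\phg^{\N_0\cup\hat\cE,\cE}$ followed by an application of Lemma~\ref{LemmaGLNabla}, with the $\hat M$-identification done by passing to fast coordinates $(\hat t,\hat x)=(\frac{t}{\eps},\frac{x}{\eps})$. The paper's proof is slightly more concrete (working directly with $\wt g^{-1}(\dd t,-)$ and $(\wt g^{00})_{\mu\nu}$ in coordinates near the corner), while you phrase the $\hat M$-identification via the normal operator map $\sfe\circ N_{\hat M}(\eps\nabla^{\wt g})\circ\sfe^{-1}=\nabla^{\hat g}$; these are equivalent.
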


We shall also call $(\hat\gamma,\hat k)$ the initial data of $\wt g$ at $\hat M_\fp$. In the terminology of \cite[Definition~4.18]{HintzGlueID}, part~\eqref{ItGLInitialMem} states that the pair $(\wt\gamma,\wt k)$ is a $(\hat\cE,\cE)$-smooth total family with boundary data $(\hat\gamma,\hat k)$ and $(\gamma,k)$.

\begin{proof}[Proof of Corollary~\usref{CorGLInitial}]
  We verify the claims near the codimension $2$ corner of $\wt M$. We work in local coordinates $z=(t,x)$ (which need not be Fermi normal coordinates) on $M$ with respect to which $X=t^{-1}(0)$ and $\cC=x^{-1}(0)$, and use coordinates $t\in\R$, $\hat\rho=|x|\geq 0$, $\rho_\circ=\frac{\eps}{|x|}\geq 0$, $\omega\in\Sph^{n-1}$ near $\hat M\cap M_\circ$. Then the dual metric $\wt g^{-1}$ takes the form
  \[
    \wt g^{-1}(t,\hat\rho,\rho_\circ,\omega) = \bigl(g^{\mu\nu}(t,\hat\rho\omega) + g_{(1)}^{\mu\nu}(t,\hat\rho,\rho_\circ,\omega)\bigr)\pa_{z^\mu}\otimes_s\pa_{z^\nu},
  \]
  where $g_{(1)}^{\mu\nu}$ is $\cE$-smooth at $\rho_\circ=0$ and $(\N_0\cup\hat\cE)$-smooth at $\hat\rho=0$; in particular it is continuous and vanishes at $\rho_\circ=0$. The spacelike nature of $X$ is equivalent to $g^{0 0}(t,\hat\rho\omega)<0$. Writing $\hat z=(\hat t,\hat x)=(\dd t(-),\dd x(-))$, the $\hat M_t$-model of $\wt g^{-1}$ is
  \[
    \bigl(g^{\mu\nu}(t,0)+g_{(1)}^{\mu\nu}(t,0,\rho_\circ,\omega)\bigr) \pa_{\hat z^\mu}\otimes_s\pa_{\hat z^\nu},
  \]
  and the spacelike nature of $T_\fp X=\hat t^{-1}(0)$ is equivalent to $g^{0 0}(t,0)+g_{(1)}^{0 0}(t,0,\rho_\circ,\omega)<0$. Taken together, we conclude that $\wt g^{\mu\nu}(t,\hat\rho,\rho_\circ,\omega)<0$ when $(t,\hat\rho,\rho_\circ,\omega)$ is a sufficiently small neighborhood of $\wt X\cap(\hat M\cup M_\circ)$.

  Turning to part~\eqref{ItGLInitialMem}, the statements about the induced metric $\wt\gamma$ are clear. Regarding the second fundamental form, we note that the normal vector field $\wt g^{-1}(\dd t,-)$ to $\wt X$ is of class
  \begin{equation}
  \label{EqGLInitialClass}
    (\wt\upbeta^*\CI+\cA_\phg^{\N_0\cup\hat\cE,\cE})(\wt X\setminus\wt K^\circ;\wt T\wt M),
  \end{equation}
  and its squared norm near $\hat X\cup X_\circ$ is strictly negative. Therefore, the unit normal is an element of~\eqref{EqGLInitialClass} as well. By Lemma~\ref{LemmaGLNabla}, its covariant derivative along elements of $\wt T\wt M$ lies in $\wt\upbeta^*\CI+\cA_\phg^{(\N_0\cup\hat\cE)-1,\cE}$; therefore $\wt k$ is of the stated form. The identification of the boundary values of $\wt k$ follows at $M_\circ$ from these arguments, and at $\hat M$ as usual by passing to the coordinates $(\hat t,\hat x)=(\frac{t}{\eps},\frac{x}{\eps})$.
\end{proof}

\subsection{Kerr metrics as front face models}
\label{SsGK}

In this section, we work with
\[
  n=3
\]
spatial dimensions. (Thus, $M$ is 4-dimensional, and $\wt M$ is 5-dimensional.) First, we recall (see \cite{KerrKerr,BoyerLindquistKerr}):

\begin{definition}[Kerr metric]
\label{DefGK}
  Let $\bhm>0$ and $a\in\R$ be \emph{subextremal} Kerr parameters, i.e.\ $|a|<\bhm$. Write $\hat r_{\bhm,a}:=\bhm+\sqrt{\bhm^2-a^2}\in(\bhm,2\bhm]$ for the radius of the event horizon. Then the Kerr metric $\hat g_{\bhm,a}$ in Boyer--Lindquist coordinates is the Lorentzian metric
  \begin{equation}
  \label{EqGKBL}
  \begin{split}
    \hat g_{\bhm,a} &= -\frac{\mu(\hat r)}{\varrho^2(\hat r,\theta)}(\dd\hat t_{\rm BL}-a\sin^2\theta\,\dd\phi)^2 + \varrho^2(\hat r,\theta)\Bigl(\frac{\dd\hat r^2}{\mu(\hat r)}+\dd\theta^2\Bigr) \\
    &\quad\hspace{3em} + \frac{\sin^2\theta}{\varrho^2(\hat r,\theta)}\bigl((\hat r^2+a^2)\dd\phi-a\,\dd\hat t_{\rm BL}\bigr)^2, \\
    \mu(\hat r)&=\hat r^2-2\bhm\hat r+a^2,\qquad
      \varrho^2(\hat r,\theta)=\hat r^2+a^2\cos^2\theta.
  \end{split}
  \end{equation}
  on the manifold $\R_{\hat t_{\rm BL}}\times(\hat r_{\bhm,a},\infty)_{\hat r}\times\Sph^2_{\theta,\phi}$.
\end{definition}

This solves $\Ric(\hat g_{\bhm,a})=0$. Moreover, $\pa_{\hat t_{\rm BL}}$ is the unique Killing vector field which is asymptotically (as $\hat r\to\infty$) future timelike and has squared length approaching $-1$. A useful form of $\hat g_{\bhm,a}$, obtained by using the standard metric on $\Sph^2$, $\slg=\dd\theta^2+\sin^2\theta\,\dd\phi^2$, to rewrite the $\dd\theta^2$ term, is
\begin{equation}
\label{EqGKBL2}
\begin{split}
  \hat g_{\bhm,a} &= -\Bigl(1-\frac{2\bhm\hat r}{\varrho^2}\Bigr)\dd\hat t_{\rm BL}^2 + \Bigl(1+\frac{2\bhm\hat r-a^2\sin^2\theta}{\mu}\Bigr)\dd\hat r^2+\varrho^2\slg \\
    &\qquad + \Bigl(1+\frac{2\bhm\hat r}{\varrho^2}\Bigr)(a\sin^2\theta\,\dd\phi)^2 - \frac{4 a\bhm\hat r\sin^2\theta}{\varrho^2}\dd\hat t_{\rm BL}\,\dd\phi.
\end{split}
\end{equation}

\begin{rmk}[Importance of Kerr]
\label{RmkGKWhyKerr}
  Suppose $\wt g$ is a time-oriented $(\hat\cE,\cE)$-smooth total family on $\wt M\setminus\wt K^\circ$ relative to $(M,\cC,g)$, and suppose that
  \[
    \Ric(\wt g) - \Lambda\wt g=0.
  \]
  By Corollary~\ref{CorGLCurvature}, this implies $0=\sfe^{-1}((\eps^2\Ric(\wt g)-\eps^2\Lambda\wt g)|_{\hat M})=\Ric(\hat g)$; that is, for each $p\in\cC$, the metric $\hat g_p\in\CI(T_p M\setminus\hat K_p^\circ;S^2 T^*(T_p M))$ is a stationary and asymptotically flat solution of $\Ric(\hat g_p)=0$. According to the black hole uniqueness conjecture, see e.g.\ \cite[Conjecture~3.4]{ChruscielCostaHeuslerStationaryBH}, this (together with additional hypotheses concerning the extent of $T_p M\setminus\hat K_p^\circ$) forces $\hat g_p$ to be isometric to the metric of a Kerr black hole.
\end{rmk}

Near the event horizon, the Boyer--Lindquist coordinate singularity can be removed by passing to new coordinates:

\begin{lemma}[Smooth coordinates across the future event horizon]
\label{LemmaGKCoords}
  Given subextremal Kerr parameters $\bhm,a$, denote by $T,\Phi\colon(\hat r_{\bhm,a},\infty)\to\R$ a pair of smooth functions with
  \[
    T'(\hat r) = -\frac{\hat r^2+a^2}{\mu(\hat r)}+\tilde T(\hat r),\qquad
    \Phi'(\hat r) = -\frac{a}{\mu(\hat r)} + \tilde\Phi(\hat r),
  \]
  where $\tilde T,\tilde\Phi$ are analytic on $[0,4\bhm]$. Let $\hat t=\hat t_{\rm BL}-T(\hat r)$ and $\phi_*=\phi-\Phi(\hat r)$. Then $\hat g_{\bhm,a}$ extends analytically from $\hat r>\hat r_{\bhm,a}$ to a metric on $\R_{\hat t}\times(\hat r^-_{\bhm,a},\infty)_{\hat r}\times\Sph^2_{\theta,\phi_*}$ where $\hat r^-_{\bhm,a}=\bhm-\sqrt{\bhm^2-a^2}\in[0,\bhm)$. Moreover, we may choose $\tilde T,\tilde\Phi$ so that the following additional conditions are satisfied:
  \begin{enumerate}
  \item $T(\hat r)=0$ and $\Phi(\hat r)=0$ for large $\hat r$;
  \item $\dd\hat t$ is everywhere (past) timelike;
  \item for $\bhm,a$ which are close to fixed subextremal parameters $\bhm_0,a_0$, the functions $\tilde T$ and $\tilde\Phi$ depend smoothly on $\bhm,a$.
  \end{enumerate}
\end{lemma}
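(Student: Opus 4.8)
The plan is to obtain the extension by a direct substitution into the Boyer--Lindquist form~\eqref{EqGKBL}, and then to exhibit a single concrete choice of $\tilde T,\tilde\Phi$ that yields properties (1)--(3). The only point requiring genuine care is the joint arrangement of property~(2) with (1) and (3); everything else is routine computation.

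\emph{Extension across the horizon.} Inserting $\dd\hat t_{\rm BL}=\dd\hat t+T'(\hat r)\,\dd\hat r$ and $\dd\phi=\dd\phi_*+\Phi'(\hat r)\,\dd\hat r$ into~\eqref{EqGKBL}, one uses the elementary identities $T'-a\sin^2\theta\,\Phi'=-\frac{\varrho^2}{\mu}+(\tilde T-a\sin^2\theta\,\tilde\Phi)$ and $(\hat r^2+a^2)\Phi'-a\,T'=(\hat r^2+a^2)\tilde\Phi-a\tilde T$, in which the singular $\mu^{-1}$ contributions cancel, to rewrite $\dd\hat t_{\rm BL}-a\sin^2\theta\,\dd\phi=A-\frac{\varrho^2}{\mu}\dd\hat r$ and $(\hat r^2+a^2)\dd\phi-a\,\dd\hat t_{\rm BL}=C$, where $A=\dd\hat t-a\sin^2\theta\,\dd\phi_*+(\tilde T-a\sin^2\theta\,\tilde\Phi)\,\dd\hat r$ and $C=(\hat r^2+a^2)\dd\phi_*-a\,\dd\hat t+\bigl((\hat r^2+a^2)\tilde\Phi-a\tilde T\bigr)\dd\hat r$. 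Since $-\frac{\mu}{\varrho^2}\bigl(A-\frac{\varrho^2}{\mu}\dd\hat r\bigr)^2=-\frac{\mu}{\varrho^2}A^2+2A\,\dd\hat r-\frac{\varrho^2}{\mu}\dd\hat r^2$ and the last term cancels the $\varrho^2\,\dd\hat r^2/\mu$ term of~\eqref{EqGKBL}, we obtain
\[
  \hat g_{\bhm,a}=-\frac{\mu}{\varrho^2}A^2+2A\,\dd\hat r+\varrho^2\,\dd\theta^2+\frac{\sin^2\theta}{\varrho^2}C^2 .
\]
The coefficients here are analytic in $(\hat r,\theta,\phi_*)$ on every $\hat r$-interval on which $\tilde T,\tilde\Phi$ are analytic, and $\varrho^2=\hat r^2+a^2\cos^2\theta>0$ for $\hat r>\hat r^-_{\bhm,a}$ (using $\hat r^-_{\bhm,a}>0$ when $a\neq0$, and $\hat r>0$ on the stated domain when $a=0$). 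As $[\hat r^-_{\bhm,a},\hat r_{\bhm,a}]\subset[0,2\bhm]$, the interval $[0,4\bhm]$ is a neighborhood of $[\hat r^-_{\bhm,a},\hat r_{\bhm,a}]$; so if $\tilde T,\tilde\Phi$ are analytic there, the displayed metric is an analytic Lorentzian metric extending $\hat g_{\bhm,a}|_{\hat r>\hat r_{\bhm,a}}$ across the event horizon, and it is a smooth Lorentzian metric on all of $\R_{\hat t}\times(\hat r^-_{\bhm,a},\infty)_{\hat r}\times\Sph^2_{\theta,\phi_*}$ provided the choice of $\tilde T,\tilde\Phi$ on $\{\hat r\geq4\bhm\}$ keeps it nondegenerate of the right signature---which the choice below does.

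\emph{A concrete choice; properties (1) and (3).} Fix $\hat r_2>4\bhm$ and $\chi\in\CI([0,\infty);[0,1])$ with $\chi\equiv0$ on $[0,4\bhm]$ and $\chi\equiv1$ on $[\hat r_2,\infty)$, and set $\tilde T:=\chi\,\frac{\hat r^2+a^2}{\mu}+(1-\chi)$ and $\tilde\Phi:=\chi\,\frac{a}{\mu}$. On $[0,4\bhm]$ these equal the constants $1$ and $0$, hence are analytic; for $\hat r\geq\hat r_2$ one has $T'=\Phi'=0$, so normalizing the integration constants gives $T\equiv\Phi\equiv0$ for $\hat r\geq\hat r_2$, which is~(1). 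Since $\chi$ vanishes on $[0,4\bhm]\supset[\hat r^-_{\bhm,a},\hat r_{\bhm,a}]$, the functions $\tilde T,\tilde\Phi$ differ from $1,0$ only where $\hat r>4\bhm$, i.e.\ where $\mu(\hat r)>0$, and there $\frac{\hat r^2+a^2}{\mu},\frac{a}{\mu}$ are smooth in $(\hat r,\bhm,a)$; choosing $\hat r_2$ and $\chi$ independently of $(\bhm,a)$ (with $\chi$ vanishing on $[0,4\bhm_0]$, for $(\bhm,a)$ near a fixed subextremal $(\bhm_0,a_0)$) then makes $\tilde T,\tilde\Phi$ depend smoothly on $(\bhm,a)$, which is~(3).

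\emph{Property (2).} Because $\hat t=\hat t_{\rm BL}-T(\hat r)$ involves only $\hat t_{\rm BL},\hat r$, and~\eqref{EqGKBL2} is block diagonal in $\{(\hat t_{\rm BL},\phi)\}\oplus\{\hat r\}\oplus\{\theta\}$ with $\hat g^{\hat t_{\rm BL}\hat r}=0$, $\hat g^{\hat r\hat r}=\frac{\mu}{\varrho^2}$ and $\hat g^{\hat t_{\rm BL}\hat t_{\rm BL}}=-\frac{\Sigma}{\varrho^2\mu}$ where $\Sigma:=(\hat r^2+a^2)^2-a^2\mu\sin^2\theta>0$, we get, using $(\hat r^2+a^2)^2-\Sigma=a^2\mu\sin^2\theta$,
\[
  \hat g_{\bhm,a}^{-1}(\dd\hat t,\dd\hat t)=-\frac{\Sigma}{\varrho^2\mu}+(T')^2\frac{\mu}{\varrho^2}=\frac{1}{\varrho^2}\bigl(\mu\,\tilde T^2-2(\hat r^2+a^2)\tilde T+a^2\sin^2\theta\bigr),
\]
a manifestly regular expression. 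For $\tilde T=1$ (hence on $\{\hat r\leq4\bhm\}$, in particular across the horizon) the bracket is $\mu+a^2\sin^2\theta-2(\hat r^2+a^2)<0$, since $\mu<\hat r^2+a^2$ and $a^2\sin^2\theta<\hat r^2+a^2$ for $\hat r>0$. For $\hat r>\hat r_{\bhm,a}$ the bracket is a quadratic in $\tilde T$ with positive leading coefficient $\mu$ and two real roots $\frac{(\hat r^2+a^2)\pm\sqrt{\Sigma}}{\mu}$, so it is negative strictly between them; both $1$ and $\frac{\hat r^2+a^2}{\mu}$ lie strictly between (the latter because $0<\sqrt\Sigma<\hat r^2+a^2$ when $\mu>0$; the former because $\frac{(\hat r^2+a^2)-\sqrt\Sigma}{\mu}<\tfrac12$ and $\frac{(\hat r^2+a^2)+\sqrt\Sigma}{\mu}\geq2$ for all $\theta$). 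Since $\tilde T(\hat r)$ is a convex combination of $1$ and $\frac{\hat r^2+a^2}{\mu}$ wherever $\hat r>4\bhm$, the bracket is negative on all of $(\hat r^-_{\bhm,a},\infty)$, so $\dd\hat t$ is timelike; it is \emph{past} timelike by continuity from $\hat r\geq\hat r_2$, where $\hat t=\hat t_{\rm BL}$ and $\pa_{\hat t_{\rm BL}}$ is future timelike. The content of the "moreover" part of the lemma is exactly this: one needs $\tilde T$ to be analytic near the horizon, to keep $\hat g^{-1}(\dd\hat t,\dd\hat t)<0$ on the entire extended $\hat r$-range, to reduce to $\frac{\hat r^2+a^2}{\mu}$ for large $\hat r$, and to vary smoothly in $(\bhm,a)$; the explicit interval of admissible values for $\tilde T(\hat r)$ identified above shows that the elementary interpolation $\tilde T=\chi\frac{\hat r^2+a^2}{\mu}+(1-\chi)$ meets all four requirements at once.
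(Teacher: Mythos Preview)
Your proof is correct and follows precisely the standard computation that the paper itself merely cites (``This is a standard and straightforward computation; see e.g.\ \cite[\S3.1]{HintzKdSMS}''). You have filled in the details the paper omits.

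Two minor remarks. First, your parenthetical bounds on the roots of $Q(\tilde T)=\mu\tilde T^2-2(\hat r^2+a^2)\tilde T+a^2\sin^2\theta$ (that the smaller root is $<\tfrac12$ and the larger is $\geq 2$) are not needed and are not obviously correct uniformly in $\theta$; what you actually use---and correctly establish---is $Q(1)<0$ and $Q(\frac{\hat r^2+a^2}{\mu})<0$, which by convexity of $Q$ already forces $Q<0$ on all convex combinations. Second, the Lorentzian signature of the extended metric, which you assert, follows from the explicit computation $\det\hat g=-\varrho^4\sin^2\theta$ in the new coordinates (obtained e.g.\ by writing the metric in the frame $(A,\dd\hat r,\dd\theta,C)$, where the Gram matrix has determinant $-\sin^2\theta$ and the change-of-frame determinant is $\varrho^2$). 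For property~(3), to keep $\tilde T,\tilde\Phi$ analytic on $[0,4\bhm]$ for all $\bhm$ near $\bhm_0$, choose $\chi\equiv 0$ on a slightly larger interval than $[0,4\bhm_0]$.
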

\begin{proof}
  This is a standard and straightforward computation; see e.g.\ \cite[\S3.1]{HintzKdSMS}.
\end{proof}

\begin{definition}[Kerr model]
\label{DefGKModel}
  Fix Kerr parameters $\bhm>0$, $\bha\in\R^3$ which are subextremal (in the sense that $\bhm,a:=|\bha|$ are subextremal), and define $\hat t,\phi_*$ as in Lemma~\ref{LemmaGKCoords}. Fix the spacetime manifold
  \[
    \hat M_{\bhm,\bha}^\circ := \R_{\hat t} \times \hat X^\circ_{\bhm,\bha},\qquad \hat X^\circ_{\bhm,\bha}=\R^3\setminus(\hat K_{\bhm,\bha})^\circ,
  \]
  where $\hat K_{\bhm,\bha}=\hat K_{\bhm,\bha}^0$ with $\hat K_{\bhm,\bha}^\delta:=\{\hat x\in\R^3\colon\hat r=|\hat x|\leq\bhm-\delta\}$ for $\delta\in(\bhm-\hat r_{\bhm,a},\bhm-\hat r_{\bhm,a}^-)=(-\sqrt{\bhm^2-a^2},\sqrt{\bhm^2-a^2})$. Define $\hat g_{\bhm,a}$ as a metric on $\hat M_{\bhm,\bha}^\circ$ by identifying $\theta\in(0,\pi)$ and $\phi_*\in(0,2\pi)$ with the standard polar coordinates on $\R^3_{\hat x}$ in $\hat r\geq\bhm$. Let $R\in SO(3,1)$ be a rotation in the spatial variables which maps $(\hat t,0,0,a)^T\mapsto(\hat t,\bha)^T$. We then write\footnote{The Lorentz transformation $R$ is unique up to pre-composition by a rotation around the $z$-axis. Since such rotations are isometries for $\hat g_{\bhm,a}$, the metric $\hat g_{\bhm,\bha}$ is well-defined.}
  \[
    \hat g_{\bhm,\bha}=R_*\hat g_{\bhm,a}
  \]
  Setting $ b=(\bhm,\bha)$, we also write these objects as
  \[
    \hat g_b,\ \hat M_b^\circ,\ \hat X_b^\circ,\ \hat K_b.
  \]
  We finally denote by $\hat{\ubar g}=-\dd\hat t^2+\dd\hat x^2$ the Minkowski metric on the same spacetime manifold.
\end{definition}

Thus, $\hat g_{\bhm,\bha}$ is stationary in that $\pa_{\hat t}$ is a Killing vector field which is asymptotically timelike. Moreover, the components of $\hat g_{\bhm,\bha}$ and $\hat{\ubar g}$ in the coordinates $(\hat t,\hat x)$ satisfy
\begin{equation}
\label{EqGKAsymp}
  \hat g_{\bhm,\bha} - \hat{\ubar g} \in \hat r^{-1}\CI(\ol{\R^3}\setminus(\hat K_{\bhm,\bha}^\delta)^\circ),\qquad
  \hat g_{\bhm,\bha} - \hat g_{\bhm,0} \in \hat r^{-2}\CI(\ol{\R^3}\setminus(\hat K_{\bhm,\bha}^\delta)^\circ)
\end{equation}
for all $\delta\in[0,\bhm-\hat r_{\bhm,a}^-)$, as can easily be verified using the formula~\eqref{EqGKBL2}; see also equation~\eqref{EqGKLot} below.

In order to efficiently keep track of asymptotic expansions, we now compactify:

\begin{definition}[Compactifications related to the Kerr spacetime manifold]
\label{DefGKCpt}
  In the notation of Definition~\usref{DefGKModel}, we write $\hat X_b\subset\ol{\R^3}\setminus\hat K_b^\circ$ for the closure $\hat X_b=\hat X_b^\circ\cup\pa\ol{\R^3}$ of $\hat X_b^\circ$ inside the radial compactification of $\R^3$. We furthermore set
  \[
    \hat M_b := \bigl[\ol{\R_{\hat t}\times\R^3_{\hat x}}; \pa\ol{\R_{\hat t}}\,\bigr] \setminus \hat\pi^{-1}(\hat K_b^\circ),
  \]
  where $\hat\pi\colon[\ol{\R\times\R^3};\pa\ol{\R}]\to\ol{\R^3}$ is the lift of the projection $(\hat t,\hat x)\mapsto\hat x$.
\end{definition}

See Figure~\ref{FigGKCpt}. Therefore, $\hat g_{\bhm,\bha},\hat{\ubar g}\in\CI_{\rm I}(\hat M_{\bhm,\bha};S^2\,\Ttsc^*\hat M_{\bhm,\bha})$ (the subscript `$\rm I$' restricting to translation-invariant sections). By Lemma~\ref{LemmaGff3bProj}, we can identify $\CI_{\rm I}(\hat M_{\bhm,\bha})\cong\CI(\hat X_{\bhm,\bha})$. The membership~\eqref{EqGKAsymp} is then equivalent to
\begin{equation}
\label{EqGKCpt3sc}
  \hat g_{\bhm,\bha}-\hat{\ubar g}\in\rho_\circ\CI(\hat X_{\bhm,\bha};S^2\,\Ttsc^*_{\hat X_{\bhm,\bha}}\hat M_{\bhm,\bha}),\qquad
  \hat g_{\bhm,\bha}-\hat g_{\bhm,0}\in\rho_\circ^2\CI,\qquad \rho_\circ:=\la\hat x\ra^{-1}.
\end{equation}

\begin{figure}[!ht]
\centering
\includegraphics{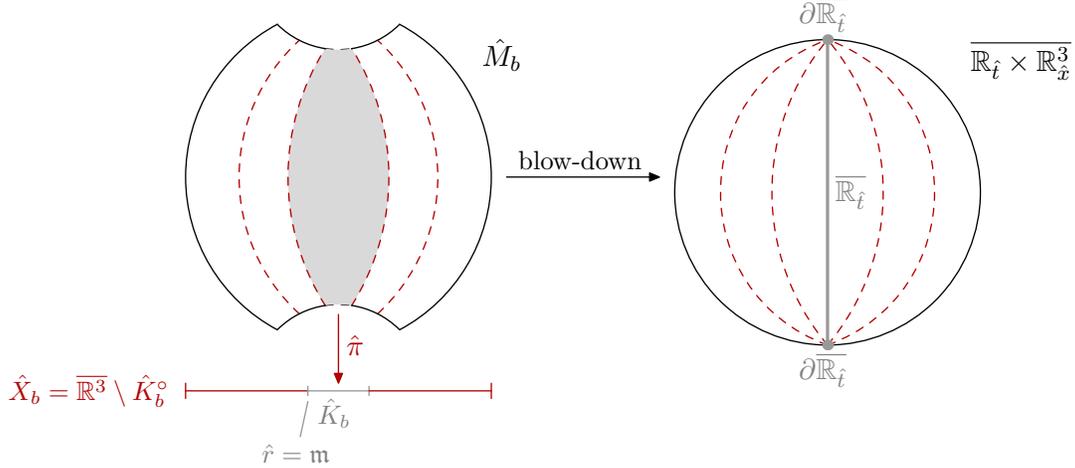}
\caption{Illustration of Definition~\ref{DefGKCpt}. \textit{On the left:} the compactification $\hat M_b$ of the Kerr spacetime manifold, and the projection map to the spatial manifold $\hat X_b$. \textit{On the right:} the radial compactification of $\R_{\hat t}\times\R^3_{\hat x}$.}
\label{FigGKCpt}
\end{figure}

We include the parameter $\delta>0$ in Definition~\ref{DefGKModel} in order to enable us to regard $\hat g_b$ for $b=(\bhm,\bha)$ near fixed subextremal Kerr parameters $b_0=(\bhm_0,\bha_0)$ as a smooth family of metrics on the fixed spacetime manifold $\hat M_{b_0}^\circ\subset\R_{\hat t}\times(\R^3\setminus(\hat K_b^\delta)^\circ)$; this was shown in the Kerr--de~Sitter case in \cite[Proposition~3.5]{HintzVasyKdSStability}. We recall the relevant calculations in the region $\hat r\gg\bhm$ where $(\hat t,\hat r,\theta,\phi_*)$ are equal to the Boyer--Lindquist coordinates $(\hat t_{\rm BL},\hat r,\theta,\phi)$. There, the metric $\hat g_{\bhm,\bha}$ is then given by~\eqref{EqGKBL2} but with the spherical coordinates defined relative to the axis of rotation $\hat\bha=\frac{\bha}{|\bha|}$ (when $\bha\neq 0$) instead of $(0,0,1)^T\in\R^3$. Using Euclidean coordinates $\hat x=(\hat r\sin\theta\cos\phi,\hat r\sin\theta\sin\phi,\hat r\cos\theta)$ and the Euclidean inner product, we can then write $\hat r^2=|\hat x|^2$, $a^2=|\bha|^2$, $a^2\cos^2\theta=(\bha\cdot\frac{\hat x}{|\hat x|})^2$, and
\[
  \varrho^2=|\hat x|^2+\Bigl(\bha\cdot\frac{\hat x}{|\hat x|}\Bigr)^2,\qquad
  \hat r a\sin^2\theta\,\dd\phi=\Bigl(\bha\times\frac{\hat x}{|\hat x|}\Bigr)\cdot\dd\hat x,\qquad
  a^2\sin^2\theta=|\bha|^2-\Bigl(\bha\cdot\frac{\hat x}{|\hat x|}\Bigr)^2.
\]
Thus, all terms in~\eqref{EqGKBL2} depend smoothly on $\bha$ indeed. We record the leading order behavior
\begin{equation}
\label{EqGKLot}
\begin{split}
  \hat g_{\bhm,\bha} &\equiv \hat{\ubar g} + \hat r^{-1}\bigl[2\bhm(\dd\hat t^2+\dd\hat r^2)\bigr] \\
    &\qquad + \hat r^{-2}\biggl[ \Bigl(4\bhm^2-|\bha|^2+\Bigl(\bha\cdot\frac{\hat x}{|\hat x|}\Bigr)^2\Bigr)\,\dd\hat r^2 + \Bigl(\bha\cdot\frac{\hat x}{|\hat x|}\Bigr)^2\hat r^2\slg + \Bigl(\Bigl(\bha\times\frac{\hat x}{|\hat x|}\Bigr)\cdot\dd\hat x\Bigr)^2 \\
    &\qquad\hspace{6em} - 4\bhm\,\dd\hat t \otimes_s \Bigl(\Bigl(\bha\times\frac{\hat x}{|\hat x|}\Bigr)\cdot\dd\hat x\Bigr)\biggr] \\
    &\qquad \bmod \hat r^{-3}\CI\bigl(\hat X_{\hat\bhm,\hat\bha};S^2\,\Ttsc^*_{\hat X_{\hat\bhm,\hat\bha}}\hat M_{\hat\bhm,\hat\bha}\bigr).
\end{split}
\end{equation}

\begin{definition}[Linearized Kerr metrics]
\label{DefGKLin}
  Given subextremal Kerr parameters $b=(\bhm,\bha)$, and given $\dot b=(\dot\bhm,\dot\bha)\in\R\times\R^3$, we define
  \[
    \hat g'_b(\dot b) := \frac{\dd}{\dd s}\hat g_{b+s\dot b}\Big|_{s=0}
  \]
  as a symmetric 2-tensor on $\hat M_b^\circ$.
\end{definition}

In view of~\eqref{EqGKCpt3sc}, and again identifying $\CI_{\rm I}(\hat M_b)\cong\CI(\hat X_b)$, we have (with $\rho_\circ=\la\hat x\ra^{-1}$)
\begin{equation}
\label{EqGKLinSize}
\begin{split}
  \hat g'_b(\dot b) &\in \rho_\circ\CI(\hat X_b;S^2\,\Ttsc^*_{\hat X_b}\hat M_b), \\
  \hat g'_b(0,\dot\bha) &\in \rho_\circ^2\CI(\hat X_b;S^2\,\Ttsc^*_{\hat X_b}\hat M_b).
\end{split}
\end{equation}
Furthermore, linearizing the equation $\Ric(\hat g_b)=0$ in $b$, we obtain
\begin{equation}
\label{EqGKLinRic}
  D_{\hat g_b}\Ric(\hat g'_b(\dot b))=0\qquad\forall\,\dot b\in\R\times\R^3.
\end{equation}

\subsubsection{Gluing in a family of Kerr metrics}

We proceed to describe how to glue a family of Kerr metrics into the front face $\hat M$ in the setting of~\S\ref{SsGL}. Recalling the Minkowski metric
\begin{equation}
\label{EqGKMink}
  \hat\eta_p=\sfe(\hat\upbeta^*g_p)
\end{equation}
on $T_p M$ from~\eqref{EqGLMink}, define
\begin{equation}
\label{EqGKX}
  \hat X_p^\circ := (T_p\cC)^{\perp_{\hat\eta_p}} \subset T_p M.
\end{equation}
Denote by
\begin{equation}
\label{EqGKT}
  T\in\cV(T_p M)
\end{equation}
the unique generator of the $T_p\cC$-translation action which is future timelike and has squared length $-1$. (Thus, $T=\pa_{\hat t}$ in the coordinates~\eqref{EqGffHatCoords} relative to Fermi normal coordinates along $\cC$.) By choosing an orthonormal basis of $\hat X_p^\circ$ to define coordinates $\hat x\in\R^3$, and moreover defining the linear coordinate
\[
  \hat t := -\hat\eta(T,-)
\]
on $T_p M=T_p\cC\oplus\hat X_p^\circ$, we obtain an isometric isomorphism
\[
  \Phi_p \colon (\R^{1+3}_{\hat t,\hat x},\hat{\ubar g}) \to (T_p M,\hat\eta_p).
\]
Since $T\hat x=0$ by construction, the pushforward $(\Phi_p)_*\hat g_{\bhm,\bha}$ is a stationary (with respect to $T_p\cC$) metric on $T_p M$ which asymptotes to $\hat\eta_p$, and for which $\hat X_p^\circ$ is a spacelike Cauchy surface. More precisely,
\begin{equation}
\label{EqGKPhiAF}
\begin{split}
  (\Phi_p)_*\hat g_{\bhm,\bha} &\in \CI_{\rm I}\bigl(\breve T_p M\setminus\breve\pi^{-1}(\Phi_p(\hat K_{\bhm,\bha}^\circ));S^2\,{}^{\tscop,\vee}T^*(\breve T_p M)\bigr), \\
  (\Phi_p)_*\hat g_{\bhm,\bha}-\hat\eta_p&\in\cA_{\phg,\rm I}^{\N_0+1}\bigl(\breve T_p M\setminus\breve\pi^{-1}(\Phi_p(\hat K_{\bhm,\bha}^\circ));S^2\,{}^{\tscop,\vee}T^*(\breve T_p M)\bigr),
\end{split}
\end{equation}
cf.\ \eqref{EqGLAF} and~\eqref{EqGKCpt3sc}, as follows from the fact that $\hat\eta_p=\Phi_*\hat{\ubar g}$. In~\eqref{EqGKPhiAF}, we wrote $\Phi_p$ also for the map $\R^3_{\hat x}\to\hat X_p^\circ$ given by restriction of $\Phi_p$ to $\hat t=0$.

\begin{lemma}[Naive gluing]
\label{LemmaGKNaive}
  Let $M,\cC,g$ be as in~\textnormal{\S}\usref{SsGL}, and fix $0<\bhm\in\CI(\cC)$, $\bha\in\CI(\cC;\R^3)$ where $|\bha(p)|<\bhm(p)$ for all $p\in\cC$; set $b=(\bhm,\bha)$.\footnote{In our gluing construction, we will operate under the assumptions that $\cC$ is a geodesic and $b$ is constant. We state the more general result here so that we may prove the \emph{necessity} of these assumptions for the existence of a total family with specified $M_\circ$- and $\hat M$-models satisfying the Einstein vacuum equations; see~\S\S\ref{SsAcGW} and \ref{SsAhNec}. We could further generalize the naive gluing here by considering maps $\Phi_p(\hat t,\hat x)=(\hat t,A(p)\hat x+\hat c(p))$ where $\hat A\in\CI(\cC;O(3))$ and $\hat c\in\CI(\cC;\R^3)$. It will turn out that $\hat A$ is necessarily constant, see Proposition~\ref{PropAhNec}, while the flexibility of shifting the center of mass of the $\hat M$-model metrics via $\hat c$ is not needed; in any case this could equivalently be implemented via pulling back along $(\eps,t,x)\mapsto(\eps,t,x-\eps\hat c(t))$ near $\cC$ (glued via a partition of unity to the identity map on $\wt M$ away from $\hat M$).} Fix Fermi normal coordinates $(t,x)$ along $\cC$, and write $r=|x|$ (Euclidean norm), $\omega=\frac{x}{|x|}\in\Sph^2$.
  \begin{enumerate}
  \item\label{ItGKNaive} There exists a $(\emptyset,\N_0+1)$-smooth total family $\wt g$ relative to $(M,\cC,g)$ so that for all $p\in\cC$, the $\hat M_p$-model of $\wt g$ is $(\Phi_p)_*\hat g_{b(p)}$, where
    \begin{equation}
    \label{EqGKNaivePhi}
      \Phi_p(\hat t,\hat x)=(\hat t,\hat x)
    \end{equation}
    in the coordinates $(\hat t,\hat x)$ on $\hat M_{b(p)}^\circ$ and the fiber-linear coordinates $\hat t=\dd t$ and $\hat x^j=\dd x^j$, $j=1,2,3$, on $T_\cC M$. More precisely, if we set $\hat K^\circ=\bigsqcup_{p\in\cC}\Phi_p(\hat K_{b(p)}^\circ)$ and define $\wt K\subset\wt M$ as in Definition~\usref{DefGLTot}, then $\wt g$ is defined on $\wt M\setminus\wt K^\circ$.
  \item\label{ItGKNaiveGeod} If $\cC$ is a geodesic, then the total family $\wt g$ in part~\eqref{ItGKNaive} can be defined so that $\wt g(t_0)-\sfe^{-1}((\Phi_{c(t_0)})_*\hat g_{b(c(t_0))})\in\hat\rho^2\CI(\{t=t_0\};S^2\wt T^*\wt M)$ for all $t_0\in I$, and so that
  \begin{equation}
  \label{EqGKNaiveEps}
    \bigl(\eps^{-1}(\wt g-\wt\upbeta^*g)\bigr)|_{M_\circ}\equiv\frac{2\bhm}{r}(\dd t^2+\dd r^2)\bmod\CIdot(M_\circ;\upbeta_\circ^*S^2 T^*M)
  \end{equation}
  near $\pa M_\circ$. Here, we extend the section $\sfe^{-1}(\Phi_{c(t_0)})_*\hat g_{b(c(t_0))}\in\CI(\hat M_{c(t_0)};S^2\wt T^*\wt M)$ to $\{t=t_0\}\subset\wt M$ to be constant in $\eps$ in the space on the right in~\eqref{EqGRelHat}.\footnote{Concretely, writing $\sfe^{-1}((\Phi_{c(t_0)})_*\hat g_{b(c(t_0))})=-\dd t^2+\dd r^2+r^2\slg+h(t_0,\frac{\eps}{r},\omega)$, this extension is simply given by the expression on the right hand side.}
  \end{enumerate}
\end{lemma}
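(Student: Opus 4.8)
The claim is purely a construction/bookkeeping statement: I must produce a Lorentzian section $\wt g$ of $S^2\wt T^*\wt M$ over $\wt M\setminus\wt K^\circ$ of the form $\wt\upbeta^*g+\wt g_{(1)}$ with $\wt g_{(1)}\in\cA_\phg^{\N_0\cup\emptyset,\N_0+1}(\wt M\setminus\wt K^\circ;S^2\wt T^*\wt M)=\cA_\phg^{\N_0,\N_0+1}$, whose $\hat M_p$-model is the prescribed Kerr metric transported by $\Phi_p$. The essential difficulty is that $(g_p)$ at $\hat M$ is the Minkowski metric (by Lemma~\ref{LemmaGLFermi} and \eqref{EqGLFermiMet}) while $\hat g_{b(p)}$ differs from $\hat\eta_p$ by an $\cA_{\phg,\mathrm I}^{\N_0+1}$ term on $\breve T_p M$; so I want to glue these two objects \emph{sharply}, i.e.\ with no transition region, which is exactly the content of the coordinate chart \eqref{EqGLocCoordCorner} and the smooth-structure discussion around Lemma~\ref{LemmaGRel}. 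The plan is to write down an explicit interpolating expression in the corner coordinates $(t,\rho_\circ,\hat\rho,\omega)$ and then check log-smoothness at both $M_\circ$ and $\hat M$ using that $\hat g_b-\hat{\ubar g}\in\rho_\circ\CI$ (cf.\ \eqref{EqGKCpt3sc}).

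\textbf{Key steps for part \eqref{ItGKNaive}.} First I fix Fermi normal coordinates $(t,x)$ along $\cC$; in the fiber-linear coordinates $\hat t=\dd t$, $\hat x^j=\dd x^j$ the map $\sfe$ acts by \eqref{EqGffStatExtS2}, so $\sfe^{-1}$ of a stationary tensor $\sum h_{\hat\mu\hat\nu}(\hat x)\,\dd\hat z^{\hat\mu}\,\dd\hat z^{\hat\nu}$ is $\sum h_{\mu\nu}(\hat x)\,\dd z^\mu\,\dd z^\nu$ with $\hat x$ to be interpreted as $\rho_\circ^{-1}\omega=\frac{x}{|x|^2}\cdot\frac{|x|^2}{\eps}\cdot\dots$—more precisely as the point $\hat x=\frac{x}{\eps}$, which in the corner coordinates is $\hat x=\rho_\circ^{-1}\omega$. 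Writing $\hat g_{b(p)}-\hat{\ubar g}$ as $\sum_{\hat\mu\hat\nu} k_{\hat\mu\hat\nu}(p,\hat x)\,\dd\hat z^{\hat\mu}\,\dd\hat z^{\hat\nu}$ with $k_{\hat\mu\hat\nu}\in\rho_\circ\CI(\hat X_{b(p)})$ (depending smoothly on $p=c(t)$, by Lemma~\ref{LemmaGKCoords}(3) and the smooth-family discussion after Definition~\ref{DefGKLin}), I set
\[
  \wt g := \wt\upbeta^*g + \sum_{\mu,\nu} k_{\mu\nu}\bigl(c(t),\tfrac{x}{\eps}\bigr)\,\dd z^\mu\,\dd z^\nu,
\]
suitably cut off by $\hat\chi$ from \eqref{EqGRelCutoffs} (which equals $1$ near $\hat M$ and is supported in $|x|<r_0(t)$) to make the expression globally defined and to ensure the correction is supported near $\hat M$; away from $\supp\hat\chi$ we take $\wt g=\wt\upbeta^*g$. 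I then verify: (i) the correction term lies in $\cA_\phg^{\N_0,\N_0+1}$—smoothness at $M_\circ$ in $(\hat\rho,\omega)$ is clear since $k_{\mu\nu}(c(t),\rho_\circ^{-1}\omega)$ is a smooth function of $\rho_\circ=\frac{\eps}{|x|}$ on $\ol{\R^3_{\hat x}}$ by \eqref{EqGKCpt3sc}, vanishing at $\rho_\circ=0$ (index set $\N_0+1$); smoothness at $\hat M$ in $(\eps,t,\hat x)$ is immediate since the correction is literally $\eps$-independent in the fast coordinates (index set $\N_0$, and in fact the empty set modulo the restriction, as claimed by the $(\emptyset,\N_0+1)$ designation once one observes the $\hat M$-model is attained exactly); (ii) restricting to $\hat M_p$ and applying $\sfe$ recovers $\hat{\ubar g}+\sum k_{\hat\mu\hat\nu}(p,\hat x)\,\dd\hat z^{\hat\mu}\dd\hat z^{\hat\nu}=\hat g_{b(p)}=(\Phi_p)_*\hat g_{b(p)}$ since $\Phi_p$ is the identity map \eqref{EqGKNaivePhi}; (iii) Lorentzian signature on $\wt M\setminus\wt K^\circ$ holds since it holds on the two boundary hypersurfaces (where $\wt g$ restricts to a Lorentzian metric and to a Kerr metric outside $\hat K_{b(p)}^\circ$ respectively, using $\dd\hat t$ timelike from Lemma~\ref{LemmaGKCoords}(2)) and $\wt K^\circ$ is a neighborhood of the excised region, together with the argument around \eqref{EqGLDod}; (iv) the $M_\circ$-model is $g$ and the $\hat M_p$-model is as prescribed, by construction.

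\textbf{Key steps for part \eqref{ItGKNaiveGeod}.} When $\cC$ is a geodesic I can use parallel transport in Lemma~\ref{LemmaGLFermi} (the case $\nabla_{c'}V_i\parallel c'$ is achieved trivially), which gives extra structure: the quadratic terms of $g$ at $\cC$ in Fermi coordinates are curvature terms with no contribution from a spin of the frame, and one can arrange the $\cO(\hat\rho^2)$ matching as follows. The two requirements are: $\wt g(t_0)-\sfe^{-1}((\Phi_{c(t_0)})_*\hat g_{b(c(t_0))})\in\hat\rho^2\CI$, and the leading $\eps^1$ coefficient at $M_\circ$ being $\frac{2\bhm}{r}(\dd t^2+\dd r^2)$. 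Both follow from my explicit formula once I note, first, that along the slice $\{t=t_0\}$ the difference between $\wt g$ and the $\eps$-constant extension of $\sfe^{-1}((\Phi_{c(t_0)})_*\hat g_{b(c(t_0))})$ is $(\wt\upbeta^*g-\wt\upbeta^*\ubar g)$ plus $[k_{\mu\nu}(c(t),\cdot)-k_{\mu\nu}(c(t_0),\cdot)]$-type terms; the first is $\cO(\hat\rho^2)$ by \eqref{EqGLFermi}, and the second vanishes at $t=t_0$; so I should instead define $\wt g$ on a neighborhood of each slice by freezing $c(t)\rightsquigarrow c(t_0)$ in the Kerr parameters and the $\Phi$'s only up to $\cO(\hat\rho^2)$ corrections, which is consistent since $b$ is assumed constant here (so no freezing is needed) and $\Phi_p=\mathrm{id}$. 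Concretely, using the constancy of $b$ the correction $\sum k_{\mu\nu}(b,\frac{x}{\eps})\,\dd z^\mu\,\dd z^\nu$ is already $\eps$-constant and $t$-independent in the fast picture, so $\wt g(t_0)-\sfe^{-1}((\Phi)_*\hat g_b)=\wt\upbeta^*g(t_0)+\wt\upbeta^*\ubar g-2\wt\upbeta^*\ubar g=\wt\upbeta^*(g-\ubar g)|_{t=t_0}\in\hat\rho^2\CI$ by \eqref{EqGLFermi}. Finally \eqref{EqGKNaiveEps} is read off from the leading asymptotics \eqref{EqGKLot}: $k_{\mu\nu}(b,\hat x)=\hat r^{-1}[2\bhm(\dd\hat t^2+\dd\hat r^2)]_{\mu\nu}+\cO(\hat r^{-2})$, and restricting to $M_\circ$ means setting $\rho_\circ=0$ after dividing by $\eps$, i.e.\ replacing $\hat r^{-1}=\frac{\eps}{r}\cdot\frac1\eps\cdot\frac{r}{\hat r}\eps$... more simply: $\eps^{-1}k_{\mu\nu}(b,\frac x\eps)=\frac{2\bhm}{r}(\dd t^2+\dd r^2)_{\mu\nu}+\cO(\eps/r^2)$, whose $M_\circ$-restriction is exactly the claimed expression modulo $\CIdot(M_\circ)$, the error being Schwartz at $\pa M_\circ$ because the full expansion \eqref{EqGKLot} converges in $\rho_\circ$.

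\textbf{Main obstacle.} The only genuinely non-routine point is establishing the log-smoothness (indeed plain smoothness, i.e.\ index sets $\N_0$ and $\N_0+1$ with no logs) of the correction term \emph{jointly} at the corner $\hat M\cap M_\circ$—that is, that $k_{\mu\nu}(c(t),\frac{x}{\eps})$, a smooth function of $\hat x\in\ol{\R^3}$ and of $p=c(t)$, lifts to a smooth function on $\wt M$ near the corner in the coordinates \eqref{EqGLocCoordCorner}. This is precisely where Lemma~\ref{LemmaGRel} and Lemma~\ref{LemmaGff3bProj} are needed: the map $(\eps,t,x)\mapsto(t,\frac{x}{\eps})\in\breve T_\cC M$ is smooth near $\hat M$ by the definition of the blow-up, and $k_{\mu\nu}$ extends smoothly to $\breve T_\cC M$ by \eqref{EqGKAsymp}/\eqref{EqGKCpt3sc}; composing gives smoothness on $\wt M\setminus\wt K^\circ$, with the vanishing at $M_\circ$ (index set shifted to $\N_0+1$) coming from $k_{\mu\nu}\in\rho_\circ\CI$. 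Everything else—signature, the $M_\circ$- and $\hat M$-models, the cutoff bookkeeping, and the $\cO(\hat\rho^2)$ refinement in the geodesic case—is direct verification from the formula and the already-established structure of Fermi coordinates and the Kerr metric.
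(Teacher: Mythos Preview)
Your approach is essentially the same as the paper's: write $\wt g=\wt\upbeta^*g+\hat\chi\,\sfe^{-1}(\hat g_{b(\cdot)}-\hat{\ubar g})$ and verify the index sets from $\hat g_b-\hat{\ubar g}\in\rho_\circ\CI$, then in the geodesic case read off both claims from $g-\ubar g=\cO(|x|^2)$ and the $\hat r^{-1}$ leading term in \eqref{EqGKLot}. The paper's part~\eqref{ItGKNaive} is phrased abstractly (matching boundary values at the corner, then invoking existence of an extension) while you give the explicit formula, but the content is identical.

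Two minor points of exposition in your part~\eqref{ItGKNaiveGeod} deserve tightening: first, you invoke ``constancy of $b$'', which is \emph{not} assumed in the statement (only that $\cC$ is a geodesic); but your argument does not actually need it, since on the slice $t=t_0$ one simply has $k_{\mu\nu}(c(t),\cdot)=k_{\mu\nu}(c(t_0),\cdot)$, so there is nothing to freeze. Second, the intermediate algebra ``$\wt\upbeta^*g(t_0)+\wt\upbeta^*\ubar g-2\wt\upbeta^*\ubar g$'' is garbled; the correct computation is $\wt g(t_0)-\sfe^{-1}(\hat g_{b(c(t_0))})=(\wt\upbeta^*g+\hat\chi k)-(\ubar g+k)=\wt\upbeta^*(g-\ubar g)+(\hat\chi-1)k$, the first term being $\cO(\hat\rho^2)$ by \eqref{EqGLFermi} and the second vanishing identically near $\hat M$. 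Finally, for \eqref{EqGKNaiveEps} the restriction to $M_\circ$ (i.e.\ $\rho_\circ=0$) kills all higher-order $\rho_\circ$ terms in $\eps^{-1}k$ exactly, so no ``Schwartz'' argument is needed---the $\bmod\CIdot$ absorbs only the effect of $\hat\chi$ away from $\pa M_\circ$.
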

\begin{proof}
  The prescription $\wt g|_{\hat M_p}=\sfe^{-1}((\Phi_p)_*\hat g_{b(p)})$ on $\hat M_p$, $p\in\cC$, defines a smooth section $\wt g|_{\hat M}$ of $S^2\wt T^*\wt M$ over $\hat M\setminus\hat K^\circ$. Now, the difference $\wt g|_{\hat M_p}-\hat\upbeta^*g_p$ vanishes at $\pa\hat M_p$ as a section of $S^2\,{}^{\tscop,\vee}T^*(\breve T_p M)$; indeed, note that its image under $\sfe$ is $(\Phi_p)_*\hat g_{b(p)}-\hat\eta_p$, and recall~\eqref{EqGKPhiAF}. We conclude that $\wt g|_{\hat M}$ and $\upbeta_\circ^*g$ are equal (as sections of $S^2\wt T^*\wt M$) at the corner $\hat M\cap M_\circ$ (and indeed they are equal to the Minkowski metric $-\dd t^2+\dd x^2$ there). Therefore, there indeed exists a $(\emptyset,\N_0+1)$-smooth total family $\wt g$ which over $\wt M\setminus\wt K^\circ$ equals $\wt g|_{\hat M}$ at $\hat M$ and $\upbeta_\circ^*g$ at $M_\circ$ (as a section of $S^2\wt T^*\wt M$).

  For part~\eqref{ItGKNaiveGeod}, we note that Lemma~\ref{LemmaGLFermi} gives
  \[
    g(t,r,\omega)=-\dd t^2+\dd r^2+r^2\slg+g',
  \]
  where $g'\in\CI(M;S^2 T^*M)$ vanishes quadratically at $x=0$. Moreover,
  \[
   \sfe^{-1}\bigl((\Phi_{c(t_0)})_*\hat g_{b(c(t_0))}\bigr) = -\dd t^2+\dd r^2+r^2\slg+h\Bigl(t_0,\frac{\eps}{r},\omega\Bigr)
 \]
 where $h$ is a smooth section of $S^2\wt T^*\wt M$ over $\hat M$ which vanishes simply at $\frac{\eps}{r}=0$, and which near $\pa\hat M$ and modulo $(\frac{\eps}{r})^2\CI=\rho_\circ^2\CI$ is equal to $\frac{2\bhm(t_0)}{r/\eps}(\dd t^2+\dd r^2)$. We may then take
  \[
    \wt g(t_0) = -\dd t^2+\dd r^2+r^2\slg + \wt\upbeta^*g'(t_0) + \hat\chi(t_0,x)h\Bigl(t_0,\frac{\eps}{r},\omega\Bigr),
  \]
  where $\hat\chi$ is a cutoff function to the Fermi normal coordinate chart which is $1$ near $x=0$. Since $\wt\upbeta^*f\subset\hat\rho^2\CI(\wt M)$ when $f\in\CI(M)$ vanishes quadratically at $\cC$, this has the required properties.
\end{proof}

\subsubsection{Spacelike hypersurfaces in \texorpdfstring{$M$}{the background spacetime}}

Suppose now we are given a spacelike hypersurface $X\subset M$ with $X\cap\cC=\{\fp\}$. Besides $\hat X_\fp^\circ=(T_\fp\cC)^\perp$ (determined by $\cC$ and $g$), we then have another hypersurface $T_\fp X\subset T_\fp M$ (determined by $X$) which is spacelike with respect to the Minkowski metric $\hat\eta_\fp$ from~\eqref{EqGKMink}.

\begin{definition}[Lorentz boosts]
\label{DefGKBoostDef}
  Fix as the generators $B_j\in\mathfrak{so}(1,3)$, $j=1,2,3$, of Lorentz boosts in the $\hat x^j$-direction on $\R_{\hat t}\times\R^3_{\hat x}$ the matrices $(B_j)_{p q}=\delta_{0 p}\delta_{j q}+\delta_{0 q}\delta_{j p}$, $0\leq p,q\leq 3$, and let $\bfB=(B_1,B_2,B_3)$. Given a vector $\hat\bfw\in\R^3$, we define by $L(\hat\bfw)=\exp(\hat\bfw\cdot\bfB)\in SO(1,3)$ the \emph{Lorentz boost with rapidity $\hat\bfw$}.
\end{definition}

Identifying $(T_\fp M,\hat\eta_\fp)\cong(\R^{1+3}_{\hat t,\hat x},\hat{\ubar g})$ (and thus $\hat X_\fp^\circ\cong\R^3_{\hat x}$), there exists a unique Lorentz boost $L(\hat\bfw)\colon T_\fp M\to T_\fp M$, where $\hat\bfw\in\hat X_\fp^\circ$, so that
\begin{equation}
\label{EqGKBoost}
  L(\hat\bfw)(\hat X_\fp^\circ) = T_\fp X.
\end{equation}
(Equivalently, $L(\hat\bfw)$ maps $T$ from~\eqref{EqGKT} to the future unit normal of $X$ at $\fp$.) While $T_\fp X$ may not be globally spacelike for a Kerr metric $\hat g_\fp=(\Phi_\fp)_*\hat g_{b(\fp)}$ (using~\eqref{EqGKNaivePhi}), it is spacelike near its boundary at infinity where $\hat g_\fp=\hat\eta_\fp+\cO(\rho_\circ)$. Therefore, the initial data $(\hat\gamma,\hat k)$ of $\hat g_\fp$ are well-defined on the complement of a large enough ball in $T_\fp X$, and there they are boosted Kerr initial data with rapidity $\hat\bfw$ determined by~\eqref{EqGKBoost}. See Figure~\ref{FigGKBoost}.

\begin{figure}[!ht]
\centering
\includegraphics{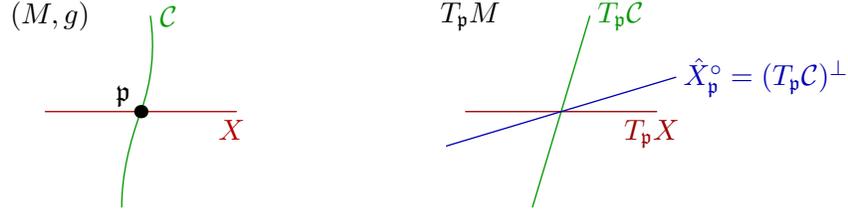}
\caption{\textit{On the left:} the timelike curve $\cC$ and the spacelike hypersurface $X$ inside of $(M,g)$. \textit{On the right:} geometry of $T_\fp M$: the hypersurfaces $\hat X_\fp^\circ$ and $T_\fp X$ (which are related via a Lorentz boost) are both spacelike with respect to the Minkowski metric $\hat\eta_\fp$ induced by $(M,g)$ at $\hat M_\fp$. With respect to a subextremal Kerr metric $(\Phi_\fp)_*\hat g_b$ on $T_\fp M$ (sans the black hole interior), the hypersurface $\hat X_\fp^\circ$ is still spacelike, whereas $T_\fp X$ is spacelike only in the complement of a large enough spatial coordinate ball.}
\label{FigGKBoost}
\end{figure}

There are two ways by which one can extend the lift of $\{0\}\times X$ to $\wt M$ to a spacelike (for small $\eps>0$) hypersurface in $\wt M$ equipped with a total family with respect to $g$ and with $\hat M_\fp$-model $\hat g_\fp$: either one modifies $\wt X\subset\wt M$ near a compact subset of $\hat M_\fp^\circ$ so that the intersection of its lift to $[\wt M;\hat M_\fp]$ (cf.\ the discussion around~\eqref{RmkGffMultEps}) with the front face becomes spacelike for $\hat g_\fp$; or one passes to a suitable modification $X_0$ of $X$ with the property that $\wt{X_0}$ is spacelike for such $\wt g$. We implement the latter approach:

\begin{lemma}[Modified spacelike hypersurface]
\label{LemmaGKMod}
  Let $M,\cC,g$ be as in~\textnormal{\S}\usref{SsGL}. Let $X\subset M$ be spacelike with $X\cap\cC=\{\fp\}$. Let $\cU\subset X$ be an open neighborhood of $\fp$. Then there exists a spacelike hypersurface $X_0\subset M$ so that $X_0\setminus\cU=X\setminus\cU$, while near $\fp$ the hypersurface $X_0$ is the level set of a Fermi normal coordinate function $t$; in particular, $T_\fp X_0=(T_\fp\cC)^\perp$ (with respect to $g|_\fp$). In particular, if $\wt g$ is a total family with $\hat M_\fp$-model given by a Kerr metric $(\Phi_\fp)_*\hat g_b$, and if $\cK\subset M$ is compact, then $\wt{X_0}\subset\wt M$ is a spacelike hypersurface in $(\wt M,\wt g)$ over $\wt\upbeta^{-1}([0,\eps_0)\times\cK)$ when $\eps_0>0$ is sufficiently small.
\end{lemma}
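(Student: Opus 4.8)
The plan is to construct $X_0$ by interpolating, in Fermi normal coordinates near $\cC$, between a $t$-level set and the original hypersurface $X$, and then to check that the resulting hypersurface is spacelike both for $g$ and, after passing to $\wt M$, for a total family $\wt g$ with Kerr $\hat M_\fp$-model.

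\textbf{Step 1: Local description and interpolation.} Fix Fermi normal coordinates $(t,x)$ around $\cC$, so $\cC=x^{-1}(0)$, $\fp=(t_0,0)$ with (after a shift) $t_0=0$, and $g=-\dd t^2+\dd x^2+\cO(|x|^2)$ by Lemma~\ref{LemmaGLFermi}. Near $\fp$ the hypersurface $X$, being spacelike and transversal to $\cC$, is the graph $t=\psi(x)$ of a smooth function $\psi$ with $\psi(0)=0$ and $|\dd\psi(0)|<1$ (so that $\dd t-\dd\psi$ is timelike at $\fp$). Choose $\delta>0$ small enough that $\{|x|<2\delta\}$ lies in the Fermi chart, that $\{t=\psi(x)\colon|x|<2\delta\}\subset\cU$, and that $X$ remains spacelike there. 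Pick a cutoff $\chi\in\CIc(\{|x|<2\delta\})$ with $\chi\equiv 1$ on $\{|x|\le\delta\}$, and define $X_0$ near $\cC$ as the graph $t=(1-\chi(x))\psi(x)=:\psi_0(x)$, glued to $X\setminus\cU=X_0\setminus\cU$ away from the chart (this gluing is consistent since $\chi$ is supported inside $\{|x|<2\delta\}$ and there $X$ is already the graph of $\psi$, while $\psi_0=0$ near $x=0$, i.e.\ $X_0$ is the level set $t=0$ near $\fp$). In particular $T_\fp X_0=\ker\dd t=(T_\fp\cC)^{\perp}$.

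\textbf{Step 2: $X_0$ is spacelike for $g$.} Spacelikeness of the graph $t=\psi_0(x)$ is the pointwise condition $g^{-1}(\dd t-\dd\psi_0,\dd t-\dd\psi_0)<0$. Outside $\{|x|<2\delta\}$ this holds since $X_0=X$; inside, since $\psi_0$ is a convex-combination-type deformation one must ensure $|\dd\psi_0|_g<1$. This is where one may need to shrink $\delta$: on $\{|x|<2\delta\}$ we have $\dd\psi_0=(1-\chi)\dd\psi-\psi\,\dd\chi$, and $|\psi(x)|\le C|x|\le 2C\delta$, $|\dd\chi|\le C/\delta$, so $|\psi\,\dd\chi|\le 2C^2$; this does \emph{not} shrink with $\delta$, so a more careful choice is needed. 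Instead, take $\chi$ radial, $\chi(x)=\chi_1(|x|/\delta)$, and use that $\psi(x)=\dd\psi(0)\cdot x+\cO(|x|^2)$; a cleaner interpolation is $\psi_0(x)=\Theta(|x|/\delta)\psi(x)$ with $\Theta$ smooth, $\Theta\equiv 0$ near $0$, $\Theta\equiv 1$ near $1$. Then on $\{|x|<\delta\}$ one computes $\dd\psi_0=\Theta(|x|/\delta)\dd\psi+\delta^{-1}\Theta'(|x|/\delta)\psi(x)\,\dd|x|$ and $\delta^{-1}|\psi(x)|\le\delta^{-1}\cdot C|x|\le C$, which is bounded; but the \emph{linear} part $\dd\psi(0)\cdot x$ of $\psi$ contributes $\delta^{-1}\Theta'(\cdot)(\dd\psi(0)\cdot x)\,\dd|x|$ with $\delta^{-1}|\dd\psi(0)\cdot x|\le|\dd\psi(0)|\cdot|x|/\delta\le|\dd\psi(0)|<1$ on $\{|x|<\delta\}$. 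So the obstruction is controlling $|\dd\psi_0|$ by roughly $|\dd\psi(0)|+\cO(\delta)$; since $|\dd\psi(0)|<1$ strictly and $g$ is $\cO(|x|^2)$-close to Minkowski, choosing $\delta$ small makes $X_0$ spacelike for $g$. (I expect this elementary interpolation estimate to be the main nuisance, though not a deep obstacle; it is essentially the standard fact that one can deform a spacelike graph to a level set while staying spacelike, using that the spacelike cone condition is open.)

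\textbf{Step 3: Passage to $\wt M$ and the Kerr model.} By Corollary~\ref{CorGLInitial} and the discussion around~\eqref{EqGLDod}, for a total family $\wt g$ with $\hat M_\fp$-model $(\Phi_\fp)_*\hat g_b$, the tensor $\wt g$ is Lorentzian on $\wt\upbeta^{-1}([0,\eps_0')\times\cK)$ for small $\eps_0'$. The lift $\wt{X_0}\subset\wt M$ meets $M_\circ$ in $\upbeta_\circ^*X_0$ and meets $\hat M$ in $\hat M_\fp$, since near $\fp$ the hypersurface $X_0$ is $t^{-1}(0)$ and $\hat t=t/\eps$, so the lifted hypersurface is $\hat t^{-1}(0)$, i.e.\ all of $\hat M_\fp$ (equivalently, its lift to $[\wt M;\hat M_\fp]$ meets the front face $\breve T_\fp M$ in $\ol{T_\fp X_0}=\ol{(T_\fp\cC)^\perp}=\ol{\hat X_\fp^\circ}$). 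Now $\wt{X_0}$ is spacelike for $\wt g$ at $M_\circ$ (this is Step 2, via $N_{M_\circ}(\wt g)=g$ and Corollary~\ref{CorGLInitial}\eqref{ItGLInitialSpace}) and spacelike at $\hat M_\fp$ because the $\hat M_\fp$-model of $\wt g$ is the Kerr metric $(\Phi_\fp)_*\hat g_b$ for which $\hat X_\fp^\circ=(T_\fp\cC)^\perp$ is spacelike by construction of $\Phi_\fp$ (recall $T\hat x=0$, so $(\Phi_\fp)_*\hat g_b$ is stationary with $\hat X_\fp^\circ$ a Cauchy surface, as recorded around~\eqref{EqGKPhiAF}). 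Since spacelikeness is an open condition and $\wt{X_0}\cap\wt\upbeta^{-1}(\{0\}\times\cK)$ is a compact subset of $\hat M_\fp\cup M_\circ$ on which $\wt{X_0}$ is spacelike, there is an $\eps_0\in(0,\eps_0']$ so that $\wt{X_0}$ is spacelike for $\wt g$ on the compact set $\wt\upbeta^{-1}([0,\eps_0]\times\cK)$, hence on $\wt\upbeta^{-1}([0,\eps_0)\times\cK)$. This is exactly the claimed statement.
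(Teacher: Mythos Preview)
Your overall structure is right, and Step~3 is essentially the paper's argument via Corollary~\ref{CorGLInitial}\eqref{ItGLInitialSpace}. The gap is in Step~2: your claimed bound $|\dd\psi_0|\le|\dd\psi(0)|+\cO(\delta)$ is not what your computation gives. With $\psi_0(x)=\Theta(|x|/\delta)\psi(x)$ and $s=|x|/\delta$ you have
\[
  \dd\psi_0=\Theta(s)\,\dd\psi+\Theta'(s)\,\delta^{-1}\psi(x)\,\dd|x|,
\]
and using $\psi(x)=\dd\psi(0)\!\cdot\! x+\cO(|x|^2)$ the bound is
\[
  |\dd\psi_0|\le\bigl(\Theta(s)+s\,|\Theta'(s)|\bigr)\,|\dd\psi(0)|+\cO(\delta).
\]
You dropped the factor $|\Theta'(s)|$ when estimating the second term. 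For a generic cutoff $\Theta$ with transition on, say, $[\tfrac12,1]$, one has $\int_{1/2}^1(s\Theta(s))'\,\dd s=1$ over an interval of length $\tfrac12$, so $\sup_s\bigl(\Theta(s)+s\Theta'(s)\bigr)\ge 2$; thus $|\dd\psi_0|$ can exceed $1$ whenever $|\dd\psi(0)|>\tfrac12$, and your interpolated hypersurface need not be spacelike.

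The missing ingredient is precisely what the paper supplies: one must choose the cutoff so that $\sup_r(r\chi(r))'$ is only slightly larger than $1$. The paper does this by taking $\chi(r)=\frac{r-1}{r}\frac{R_0}{R_0-1}$ on $[1,R_0]$ (then smoothed), so that $(r\chi)'\le\frac{R_0}{R_0-1}$; if $|\dd\psi(0)|<1-2\delta$ one picks $R_0=\frac{1-\delta}{\delta}$ and gets $\frac{R_0}{R_0-1}(1-2\delta)=1-\delta<1$. One then rescales $(t,x)\mapsto(t/\eta,x/\eta)$, $\eta\to 0$, to localize near $\fp$ and absorb the $\cO(|x|^2)$ metric error and the nonlinear part of $\psi$. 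This specific shape of the cutoff---transition region long compared to where it starts---is the content you waved away as ``not a deep obstacle''.
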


See Figure~\ref{FigGKMod}.

\begin{figure}[!ht]
\centering
\includegraphics{FigGKMod}
\caption{Illustration of Lemma~\ref{LemmaGKMod}: the spacelike hypersurface $X$ inside $(M,g)$ is modified near its intersection point $\fp$ with $\cC$ to a new spacelike hypersurface $X_0$ for which $T_\fp X_0=(T_\fp\cC)^\perp$ (and indeed $X_0$ is the level set of a Fermi normal coordinate $t$ near $\fp$), which is thus spacelike for Kerr metrics $(\Phi_\fp)_*\hat g_b$.}
\label{FigGKMod}
\end{figure}

\begin{proof}[Proof of Lemma~\usref{LemmaGKMod}]
  On $(1+1)$-dimensional Minkowski space with coordinates $(t,x)$, the idea is the following: we want to modify the spacelike hypersurface $t=v x$, $|v|<1$, by a spacelike hypersurface which is tangent to the $x$-axis at the origin. We define the latter to be equal to $t=0$ for $|x|<1$, the (nonlinear) interpolation $t=\frac{|x|-1}{|x|}\frac{R_0}{R_0-1}v x$ for $1\leq\pm x\leq R_0$, and $t=v x$ for $|x|>R_0$. In order for this to be spacelike, we need $|\frac{v R_0}{R_0-1}|<1$, which holds true for sufficiently large $R_0$: if $|v|<1-2\delta$, then $R_0=\frac{1-\delta}{\delta}$ (or any larger value) works with a $\delta$ to spare. Upon scaling the resulting hypersurface in space and time, we obtain an arbitrarily localized (near $(0,0)$) modification which, moreover, can be smoothed out to give a smooth hypersurface.

  To prove the Lemma, we use Fermi normal coordinates $(t,x)$ along $\cC$, with $(t,x)=(0,0)$ at $\fp$. The hypersurface $X$ is then the graph $\{(f(x),x)\colon |x|<\delta\}$ of a smooth function $f\colon\R^3\to\R$ with $f(0)=0$; here $\delta>0$ is small enough so that the Fermi normal coordinates are valid in an open set $\cU$ with $[-C,C]\times\{|x|\leq\delta\}\subset\cU$ where $C=\sup_{|x|\leq\delta}|f(x)|$. The hypersurface $\hat X_\fp^\circ\subset T_\fp M$ is, in local coordinates, the tangent space at $(0,0)$ of $\{0\}\times\R^3$. Moreover, the inverse metric is $g^{-1}=-\pa_t^2+\pa_x^2+\cO(|x|)$. Upon shrinking $\delta>0$ further, we may assume that $|f'(0)|<1-2\delta$. Let $R_0=\frac{1-\delta}{\delta}$ and $R_1\in(R_0,R_0+1)$. Let $\chi\in\CI([0,\infty))$ be a smoothed-out version of the function which is equal to $0$ on $[0,1]$, equal to $1$ on $[R_1,\infty)$, and given by $\frac{r-1}{r}\frac{R_1}{R_1-1}$ on $[1,R_1]$; we can arrange that $\chi(r)=0$ for $r\leq 1$, further $\chi(r)=1$ for $r\geq R_0+1$, and

  \[
    0\leq\chi+r\chi'=(r\chi)'\leq\frac{R_0}{R_0-1}\quad \text{for}\quad 1\leq r\leq R_0+1.
  \]
  For $0<\eta<\frac{\delta}{R_0+1}$ chosen momentarily, we define
  \[
    X_0=\Bigl\{\Bigl(\chi\Bigl(\frac{|x|}{\eta}\Bigr)f(x),x\Bigr)\colon|x|<\delta\Bigr\}\cup(X\setminus\{|x|\leq\delta\}).
  \]
  We claim that for sufficiently small $\eta>0$, the differential of $t-\chi(|x|/\eta)f(x)$ is timelike; we only need to check this for $\eta\leq|x|\leq\eta(R_0+1)$. Passing to rescaled coordinates $(\frac{t}{\eta},\frac{x}{\eta})$ and letting $\eta\searrow 0$, it suffices to check this in the case that $f(x)=x\cdot v$ is linear with $|v|<1$ (using the Euclidean inner product and norm), with $g^{-1}=-\pa_t^2+\pa_x^2$, and with the scaling parameter $\eta$ absent. That is, we need to check that $|\nabla(\chi(|x|)x\cdot v)|<1$ (using the Euclidean gradient and norm). But the gradient has norm
  \[
    \Bigl|\chi(|x|)v + |x|\chi'(|x|)\frac{x}{|x|}\Bigl(\frac{x}{|x|}\cdot v\Bigr)\Bigr| \leq \bigl(\chi(|x|)+|x|\chi'(|x|)\bigr)|v| < \frac{R_0}{R_0-1}(1-2\delta) = 1-\delta < 1,
  \]
  as required. The final conclusion of the Lemma is now a consequence of Corollary~\ref{CorGLInitial}\eqref{ItGLInitialSpace}.
\end{proof}

\section{The Einstein vacuum equations and their linearization}
\label{SE}

In this section, we discuss algebraic and analytic properties of the Einstein vacuum equations, with cosmological constant $\Lambda\in\R$,
\begin{equation}
\label{EqEEq}
  \Ric(g) - \Lambda g = 0,
\end{equation}
on a Lorentzian manifold $(M,g)$ and on the total gluing spacetime $\wt M$, defined as in Definition~\ref{DefGTot} with $\cC\subset M$ a smooth embedded timelike curve diffeomorphic to the real line.

\subsection{Constraint equations}
\label{SsECE}

Let $X$ be a 3-dimensional manifold. For a Riemannian metric $\gamma\in\CI(X;S^2 T^*X)$ and a symmetric 2-tensor $k\in\CI(X;S^2 T^*X)$, the \emph{constraints map} is
\begin{equation}
\label{EqMConstraints}
  P(\gamma,k;\Lambda) = \bigl(P_1(\gamma,k;\Lambda), P_2(\gamma,k)\bigr) := \bigl( R_\gamma - |k|_\gamma^2 + (\tr_\gamma k)^2 - 2\Lambda, \delta_\gamma k+\dd\tr_\gamma k\bigr),
\end{equation}
where we recall that $R_\gamma=\tr_\gamma\Ric(\gamma)$ is the scalar curvature and $\delta_\gamma$ is the negative divergence. For any Lorentzian metric $g$ on $M$ and its initial data $(\gamma,k)$ at a spacelike hypersurface $X\subset M$ with unit normal $\nu\in\CI(X;T_X M)$---i.e.\ $\gamma(V,W)=g(V,W)$ and $k(V,W)=g(\nabla_V\nu,W)$, $V,W\in T X$, are the first and second fundamental form of $X$, respectively---we have
\begin{equation}
\label{EqMConstraintsEin}
\begin{split}
  P_1(\gamma,k;\Lambda) &= 2\sfG_g(\Ric(g)-\Lambda g)(\nu,\nu) \in \CI(X), \\
  P_2(\gamma,k) &= -\sfG_g(\Ric(g)-\Lambda g)(\nu,-) \in \CI(X;T^*X),
\end{split}
\end{equation}
where $\sfG_g=I-\half g\tr_g$. (Note that $P_2$ is indeed independent of $\Lambda$.) In this sense, the constraint equations are equivalent to the validity of part of the Einstein vacuum equations at $X$. In particular, when $\Ric(g)-\Lambda g=0$, then $P(\gamma,k;\Lambda)=0$.

We say that $(X,\gamma,k)$ \emph{has no KIDs} (`Killing Initial Data sets') in $\cU^\circ$ if the kernel of the formal adjoint $(D_{(\gamma,k)}P)^*$ on $\CI(\cU;\ul\R\oplus T^*_\cU X)$ is trivial, where $\cU=\ol{\cU^\circ}$. (We remark that the linearization of $P$ is independent of $\Lambda$.) The operator $(D_{(\gamma,k)}P)^*$ is overdetermined elliptic, and its kernel is automatically finite-dimensional. Furthermore, elements of $\ker(D_{(\gamma,k)}P)^*$ are smooth, and they are determined by their restriction to any nonempty open subset of $\cU^\circ$. (See e.g.\ \cite[Lemma~4.3]{HintzGlueID}.) By \cite{MoncriefLinStabI} or \cite[Lemma~2.2]{FischerMarsdenMoncriefEinstein}, the space of KIDs for $(X,\gamma,k)$ in $\cU^\circ$ can be identified with the space of Killing vector fields in the maximal globally hyperbolic development of $(\cU^\circ,\gamma|_{\cU^\circ},k|_{\cU^\circ})$. In particular, if a spacetime $(M,g)$ does not admit any Killing vector fields in some globally hyperbolic subset with Cauchy surface $\cU^\circ$, then the initial data of $g$ at $\cU^\circ$ have no KIDs (and vice versa).

\subsection{Linearization and gauge fixing}
\label{SsELin}

The linearization of the Einstein vacuum operator $g\mapsto\Ric(g)-\Lambda g$ in $g$ is given by the second order differential operator
\begin{equation}
\label{EqELinDRic}
  D_g\Ric-\Lambda = \frac12\Box_g - \delta_g^*\delta_g\sfG_g + \sR_g - \Lambda,
\end{equation}
where $(\delta_g h)_\mu=-h_{\mu\nu;}{}^\nu$ and $(\delta_g\omega)_{\mu\nu}=\half(\omega_{\mu;\nu}+\omega_{\nu;\mu})$, and
\[
  (\sR_g h)_{\mu\nu} = R_{\kappa\mu\nu\lambda}h^{\kappa\lambda} + \frac12(\Ric(g)_{\mu\kappa}h_\nu{}^\kappa + \Ric(g)_{\kappa\nu}h_\mu{}^\kappa).
\]
We use the convention $R_{\kappa\mu\nu\lambda}=\la\pa_\kappa,\Riem(g)(\pa_\nu,\pa_\lambda)\pa_\mu\ra$ where $\Riem(g)(\pa_\nu,\pa_\lambda)=[\nabla_\nu,\nabla_\lambda]$. See for example~\cite[Equation~(2.4)]{GrahamLeeConformalEinstein}.

From now on, we shall assume that $g$ solves~\eqref{EqEEq}, and $(M,g)$ is globally hyperbolic. The linearized Einstein operator is analytically rather ill-behaved. The operator $D_g\Ric-\Lambda\in\Diff^2(M;S^2 T^*M)$ has an infinite-dimensional kernel which contains all \emph{pure gauge} metric perturbations, i.e.\ all symmetric 2-tensors which are of the form $\delta_g^*\omega$ where $\omega$ is any 1-form on $M$ (and $\omega$ may even be a distribution). The formal adjoint
\begin{equation}
\label{EqELinDRicAdj}
  (D_g\Ric-\Lambda)^* = \sfG_g\circ(D_g\Ric-\Lambda)\circ\sfG_g
\end{equation}
likewise has an infinite-dimensional kernel which contains all \emph{dual pure gauge} perturbations $\sfG_g\delta_g^*\omega$, with $\omega$ again an arbitrary 1-form. Therefore, the cokernel of $D_g\Ric$ (acting on any reasonable space of tensors) is infinite-dimensional. Towards a characterization of the cokernel, recall the second Bianchi identity $\delta_g\sfG_g\Ric(g)=0$ (and thus also $\delta_g\sfG_g(\Ric(g)-\Lambda g)=0$), which holds for \emph{all} metrics $g$ regardless of the validity of the Einstein vacuum equations. If $g$ solves~\eqref{EqEEq}, then upon linearizing this identity in $g$ we also obtain the \emph{linearized second Bianchi identity}
\begin{equation}
\label{EqELin2ndBianchi}
  \delta_g\sfG_g\circ(D_g\Ric-\Lambda) = 0.
\end{equation}
Thus, symmetric 2-tensors in the range of $D_g\Ric-\Lambda$ necessarily lie in $\ker\delta_g\sfG_g$. The converse is not always true, and indeed generally fails when $(M,g)$ admits nontrivial Killing vectors, as demonstrated in \cite{HintzLinEin}.

A typical equation we need to solve in the course of our gluing construction is\footnote{While often, e.g.\ in linear stability problems, only considers the case $f=0$, we need to allow general $f$ here; see the discussion in~\S\ref{SssIPfF}.}
\begin{equation}
\label{EqELinNonFixed}
  (D_g\Ric-\Lambda)h=f,\qquad f\in\CI\cap\ker\delta_g\sfG_g.
\end{equation}
We explain the procedure for solving this equation, following \cite{HintzLinEin}, as we will encounter it in a variety of contexts later on. In order to pass to a hyperbolic equation admitting a well-posed forcing (or initial value) problem, one adds a gauge fixing term; we shall work here with generalized harmonic coordinate gauges, which take the form $\delta_g\sfG_g h-\theta=0$ where $\theta\in\CI(M;T^*M)$ is fixed but may be chosen arbitrarily. The gauge-fixed linearized Einstein equation is then
\begin{equation}
\label{EqELinFixed}
  (D_g\Ric-\Lambda)h + \delta_g^*(\delta_g\sfG_g h-\theta) = \Bigl(\frac12\Box_g + \sR_g - \Lambda\Bigr)h - \delta_g^*\theta = f.
\end{equation}
Upon specifying Cauchy data
\begin{equation}
\label{EqEhData}
  h_0=h|_X,\qquad
  h_1=\nabla_\nu h|_X
\end{equation}
for $h$ at some Cauchy hypersurface $X\subset M$ with future unit normal $\nu$, there exists a unique solution $h$ of~\eqref{EqELinFixed}. This solves~\eqref{EqELinNonFixed} provided the gauge 1-form $\eta:=\delta_g\sfG_g h-\theta$ vanishes. By virtue of~\eqref{EqELin2ndBianchi}, $\eta$ satisfies the decoupled equation
\[
  \Box^\cC_g \eta = 0,\qquad \Box^\cC_g := 2\delta_g\sfG_g\circ\delta_g^*,
\]
where $\Box^\cC_g$ is a wave operator on 1-forms (i.e.\ its principal symbol is scalar and equal to that of $\Box_g$). Thus, one needs to ensure that $\eta$ has trivial Cauchy data $\eta|_X,\nabla_\nu\eta|_X\in\CI(X;T^*_X M)$. (We stress that these are 1-forms on $M$ defined over $X$, and \emph{not} 1-forms on $X$.) But if $\eta|_X=0$, then the vanishing of $\nabla_\nu\eta|_X$ is easily seen to be equivalent to that of $(\sfG_g\delta_g^*\eta)(\nu,-)\in\CI(X;T^*_X M)$. Now, applying $\sfG_g$ to the PDE~\eqref{EqELinFixed} and plugging $\nu$ into the first slot of the resulting equation at $X$, we obtain
\[
  D_{(\gamma,k)}P(\dot\gamma,\dot k) + \bigl(\sfG_g\delta_g^*\eta\bigr)(\nu,-) = (\sfG_g f)(\nu,-).
\]
Here, $\dot\gamma$ and $\dot k$ are the linearized first and second fundamental form corresponding to the linearized Cauchy data $h_0$, $h_1$. Requiring the second term on the left to vanish is thus equivalent to a PDE for $(\dot\gamma,\dot k)$.

To summarize, in our approach to solving~\eqref{EqELinNonFixed}, we need to choose $h_0,h_1,\theta$ so that
\begin{align}
\label{EqEIDConstraints}
  D_{(\gamma,k)}P(\dot\gamma,\dot k)&=(\sfG_g f)(\nu,-), \\
\label{EqEIDeta}
  \eta=\delta_g\sfG_g h-\theta&=0
\end{align}
at $X$ where $h$ is related to $h_0,h_1$ via~\eqref{EqEhData}. \emph{If}\footnote{We stress that \emph{equation~\eqref{EqEIDConstraints} may not be solvable} when $(X,\gamma,k)$ has nontrivial KIDs.} one can solve the first equation, then one can pick $h_0,h_1$ which induce the data $\dot\gamma,\dot k$. Such $h_0,h_1$ always exist (see Remark~\ref{RmkEIDh0h1} below), but they are not unique. One may then take $\theta\in\CI(M;T^*M)$ to be any 1-form with value $\delta_g\sfG_g h$ at $X$; note here that the 1-form $(\delta_g\sfG_g h)|_X\in\CI(X;T^*_X M)$ only depends on $h_0,h_1$. This arranges~\eqref{EqEIDConstraints}. Having thus fixed $\theta,h_0,h_1$, the solution $h$ of the gauge-fixed equation~\eqref{EqELinFixed} satisfies $\eta=0$ on $M$ and thus also~\eqref{EqELinNonFixed}.

\begin{rmk}[Existence of Cauchy data inducing linearized initial data]
\label{RmkEIDh0h1}
  In order to build Cauchy data $(h_0,h_1)$ from $(\dot\gamma,\dot k)$, note first that on a small open neighborhood $\cM\subset\R\times X$ of $\{0\}\times X$, the map $\cM\ni(s,p)\mapsto\exp_p(s\nu)$ is a diffeomorphism onto a neighborhood of $X$. Splitting the tangent bundle of $M$ in this collar neighborhood into $\la\pa_s\ra\oplus T X$, the metric $g$ then takes the block form
  \[
    g(s,p) = \begin{pmatrix} -1 & 0 \\ 0 & \gamma(s) \end{pmatrix},\qquad \gamma(0)=\gamma,\quad \gamma'(0)=2 k.
  \]
  We may thus take
  \[
    h_0 = \begin{pmatrix} 0 & 0 \\ 0 & \dot\gamma \end{pmatrix},\qquad
    h_1 = \begin{pmatrix} 0 & 0 \\ 0 & 2\dot k \end{pmatrix}.
  \]
\end{rmk}

\subsection{Linearization on the total gluing spacetime}
\label{SsELT}

In the notation of Definition~\ref{DefGLTot}, let $\wt g$ denote a $(\hat\cE,\cE)$-smooth total family (relative to $M,\cC,g$) on the total gluing spacetime $\wt M\setminus\wt K^\circ$; we write $\hat g=(\hat g_p)_{p\in\cC}$ for its $\hat M$-model. Corollary~\ref{CorGLCurvature} implies that
\begin{equation}
\label{EqELTErr}
\begin{split}
  \Err &:= \Ric(\wt g)-\Lambda\wt g \in \wt\upbeta^*\CI(M;S^2\wt T^*M)+\cA_\phg^{(\N_0\cup\hat\cE)-2,\cE}(\wt M\setminus\wt K^\circ;S^2\wt T^*\wt M), \\
    &\quad \sfe^{-1}\bigl((\eps^2\Err)|_{\hat M}\bigr)=\Ric(\hat g),\qquad
    \Err|_{M_\circ}=\upbeta_\circ^*(\Ric(g)-\Lambda g).
\end{split}
\end{equation}
We fix cutoff functions $\hat\chi,\chi_\circ\in\CI(\wt M)$ to collar neighborhoods of $\hat M,M_\circ$ as in~\eqref{EqGRelCutoffs}.

\begin{lemma}[Linearization and its model operators]
\label{LemmaELTLin}
  The linearization of the Einstein vacuum operator $\fg\mapsto\Ric(\fg)-\Lambda\fg$ at $\wt g$ (defined on each fiber $\wt M_\eps$, $\eps>0$, of $\wt M$ as the linearization at $\wt g|_{\wt M_\eps}$) satisfies
  \begin{align*}
    D_{\wt g}\Ric-\Lambda &\in \wt\upbeta^*\bigl(\Diff^2(M;S^2 T^*M)\bigr) + \cA_\phg^{(\N_0\cup\hat\cE)-2,\cE}\Diffse^2(\wt M\setminus\wt K^\circ;S^2\wt T^*\wt M) \\
      &\subset \cA_\phg^{(\N_0\cup\hat\cE)-2,\N_0\cup\cE}\Diffse^2(\wt M\setminus\wt K^\circ;S^2\wt T^*\wt M).
  \end{align*}
  Its normal operators are
  \[
    \sfe\circ N_{\hat M}\bigl(\eps^2(D_{\wt g}\Ric-\Lambda)\bigr)\circ\sfe^{-1}=D_{\hat g}\Ric,\qquad
    N_{M_\circ}(D_{\wt g}\Ric-\Lambda) = \upbeta^*(D_g\Ric-\Lambda).
  \]
\end{lemma}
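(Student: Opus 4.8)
\textbf{Proof plan for Lemma~\ref{LemmaELTLin}.}

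The plan is to obtain all three claims by differentiating the formula~\eqref{EqELinDRic} for $D_g\Ric-\Lambda$ with $g$ replaced by the total family $\wt g$, and then tracking the se-structure of each constituent piece using the results of~\S\ref{SG}. Since $\wt g=\wt\upbeta^*g+\wt g_{(1)}$ with $\wt g_{(1)}\in\cA_\phg^{\N_0\cup\hat\cE,\cE}(\wt M\setminus\wt K^\circ;S^2\wt T^*\wt M)$ by Definition~\ref{DefGLTot}, Corollary~\ref{CorGLCurvature} already gives the mapping properties and normal operators of $\Riem(\wt g)$ and $\Ric(\wt g)$. First I would record that $\Box_{\wt g}$, $\delta_{\wt g}$, $\delta_{\wt g}^*$, $\sfG_{\wt g}$, and $\sR_{\wt g}$ are all built from $\wt g$, $\wt g^{-1}$, the fiberwise connection $\nabla^{\wt g}$, and $\Riem(\wt g)$, together with their contractions; by Lemma~\ref{LemmaGLNabla} and the fact that $\wt g^{-1}\in(\wt\upbeta^*\CI+\cA_\phg^{\N_0\cup\hat\cE,\cE})(\wt M\setminus\wt K^\circ;\Hom(\wt T^*\wt M,\wt T\wt M))$ (nonlinear closedness of the index sets, as in the proof of Lemma~\ref{LemmaGLNabla}), each first-order covariant derivative contributes an element of $\wt\upbeta^*\Diff^1+\cA_\phg^{(\N_0\cup\hat\cE)-1,\cE}\Diffse^1$, using $\Diffse(\wt M)=\hat\rho\wt\cV(\wt M)$-composition and the weight arithmetic $w D\circ w'D\subset(w w')D$ from~\S\ref{SsBgLie}. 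Composing two such first-order operators and using $\delta_{\wt g}^*\delta_{\wt g}\sfG_{\wt g}\in\wt\upbeta^*\Diff^2+\cA_\phg^{(\N_0\cup\hat\cE)-2,\cE}\Diffse^2$, $\half\Box_{\wt g}$ of the same class, and $\sR_{\wt g}\in\wt\upbeta^*\CI+\cA_\phg^{(\N_0\cup\hat\cE)-2,\cE}$ (order-zero, from $\Riem(\wt g)$ and $\Ric(\wt g)$), and finally $-\Lambda$ being a smooth multiple of the identity (hence lying in $\wt\upbeta^*\Diff^0$), yields the stated membership; the second, coarser inclusion follows since $\wt\upbeta^*\Diff^2(M)\subset\Diffse^2(\wt M)$ up to the weight $\hat\rho^{-2}$ absorbed into $\cA_\phg^{(\N_0\cup\hat\cE)-2,\N_0\cup\cE}$.

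For the normal operators, the key point is that the $\hat M$- and $M_\circ$-normal operator homomorphisms of Definition~\ref{DefGVfNorm} are multiplicative and commute with the algebraic operations (contraction, inversion of metrics, the $\sfG$ and $\delta^*$ maps) that assemble $D_g\Ric-\Lambda$ out of $g$, $g^{-1}$, $\nabla^g$, $\Ric(g)$. Concretely, at $M_\circ$ one has $N_{M_\circ}(\nabla^{\wt g})=\upbeta_\circ^*(\nabla^g)$, $N_{M_\circ}(\Ric(\wt g))=\upbeta_\circ^*\Ric(g)$ by Lemma~\ref{LemmaGLNabla} and Corollary~\ref{CorGLCurvature}, and $N_{M_\circ}$ restricts $\wt g^{-1}$ to $\upbeta_\circ^*(g^{-1})$ (the $\cA_\phg^{\N_0\cup\hat\cE,\cE}$ part vanishes at $\rho_\circ=0$); assembling, $N_{M_\circ}(D_{\wt g}\Ric-\Lambda)=\upbeta_\circ^*(D_g\Ric-\Lambda)$, noting $\upbeta_\circ^*(D_g\Ric-\Lambda)\in\hat\rho^{-2}\Diffe^2(M_\circ)$ by Lemma~\ref{LemmaBgE} applied to each scalar coefficient. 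At $\hat M$, one uses $\sfe\circ N_{\hat M}(\eps\nabla^{\wt g})\circ\sfe^{-1}=\nabla^{\hat g}$ and $\sfe((\eps^2\Ric(\wt g))|_{\hat M})=\Ric(\hat g)$ from the same corollaries, together with the fact that $\sfe\circ N_{\hat M}(\eps\dd_{\wt M})\circ\sfe^{-1}=\dd_{\breve T_\cC M}$ (Lemma~\ref{LemmaGLExtDer}); each of the three first-order factors building the second-order part carries one power of $\eps$, so $\eps^2(\half\Box_{\wt g})$, $\eps^2\delta_{\wt g}^*\delta_{\wt g}\sfG_{\wt g}$, and $\eps^2\sR_{\wt g}$ pass under $\sfe\circ N_{\hat M}(\cdot)\circ\sfe^{-1}$ to $\half\Box_{\hat g}$, $\delta_{\hat g}^*\delta_{\hat g}\sfG_{\hat g}$, $\sR_{\hat g}$ respectively, while $\eps^2(-\Lambda)$ maps to $0$ since $\eps|_{\hat M}=0$; adding these recovers $D_{\hat g}\Ric$ via the identity~\eqref{EqELinDRic} applied to the fiberwise metric $\hat g$ on $\breve T_\cC M$ (which is Ricci-flat only if $\wt g$ solves the equations, but the algebraic identity~\eqref{EqELinDRic} holds for any metric, so no such assumption is needed—$\sR_{\hat g}$ retains its $\Ric(\hat g)$-dependent lower-order terms).

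The only genuine subtlety—and the step I would treat most carefully—is bookkeeping the index sets and weights through compositions near the corner $\hat M\cap M_\circ$, where both $\rho_\circ$- and $\hat\rho$-expansions are simultaneously present and one must verify that commuting an se-vector field past an $\cA_\phg^{(\N_0\cup\hat\cE)-k,\cE}$-coefficient does not spoil polyhomogeneity; this is exactly the content of the weight-composition rule $w\Diffse\circ w'\Diffse\subset(w w')\Diffse$ combined with $\cA_\phg^{\cF}\cdot\cA_\phg^{\cF'}\subset\cA_\phg^{\cF+\cF'}$ for nonlinearly closed index sets, and the fact (from~\S\ref{SsBgLie}) that $V(w)/w\in\CI$ for a weight $w$ and $V\in\Vb$. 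Given the machinery of~\S\S\ref{SsBgLie},~\ref{SsGVf},~\ref{SsGL} this is routine, so I expect the lemma to follow by a direct, if somewhat tedious, verification; the proof in the paper can safely be left as a short remark citing Corollary~\ref{CorGLCurvature}, Lemma~\ref{LemmaGLNabla}, Lemma~\ref{LemmaGLExtDer}, and the algebraic formulas~\eqref{EqELinDRic}--\eqref{EqELinDRicAdj}.
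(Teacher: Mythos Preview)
Your proposal is correct and takes essentially the same approach as the paper: the paper's proof is a terse two-sentence remark citing exactly the formula~\eqref{EqELinDRic}, Lemma~\ref{LemmaGLNabla}, Corollary~\ref{CorGLCurvature}, and the multiplicativity of the normal operator maps. You have simply unpacked what the paper leaves implicit, including the careful observation that $\eps^2\Lambda$ vanishes at $\hat M$ so that $N_{\hat M}$ yields $D_{\hat g}\Ric$ without a cosmological constant term.
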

\begin{proof}
  The structure of $D_{\wt g}\Ric-\Lambda$ and the identification of its normal operators follows immediately from the formula~\eqref{EqELinDRic}, the expressions for the normal operators in Lemma~\ref{LemmaGLNabla} and Corollary~\ref{CorGLCurvature}, and the multiplicativity of the normal operator maps.
\end{proof}

We next study the extent to which the linearization provides an approximation to the nonlinear Einstein operator. The part of the following result concerning $M_\circ$ will suffice for our purposes; at $\hat M$ on the other hand we will need significantly more refined descriptions in~\S\ref{SF}.

\begin{prop}[Accuracy of the linearization]
\label{PropELTAcc}
  Let $\hat\cF',\cF'\subset\C\times\N_0$ denote two index sets with $\Re\hat\cF',\Re\cF'>0$, and let $\wt h\in\cA_\phg^{\hat\cF,\cF}(\wt M\setminus\wt K^\circ;S^2\wt T^*\wt M)$. Denote by $\hat\cF,\cF$ the nonlinear closures of $\hat\cF',\cF'$. Then
  \begin{equation}
  \label{EqELTAccLin}
    L := \bigl(\Ric(\wt g+\wt h)-\Lambda(\wt g+\wt h)\bigr) - \bigl(\Ric(\wt g)-\Lambda\wt g\bigr) \subset \cA_\phg^{\hat\cF+(\N_0\cup\hat\cE)-2,\cF+(\N_0\cup\cE)}(\wt M\setminus\wt K^\circ;S^2\wt T^*\wt M).
  \end{equation}
  To leading order at $\hat M$ and $M_\circ$, the tensor $L$ has the following description.
  \begin{enumerate}
  \item\label{ItELTAccMhat}{\rm (Accuracy near $M_\circ$.)} Let us regard $g$ as an $\eps$-independent section of $S^2\wt T^*\wt M'$ over $\wt M'$ in order to define $D_g\Ric-\Lambda$ as an $\eps$-independent differential operator on sections of $S^2\wt T^*\wt M'$, and lift it to $\wt M$. Then
    \begin{equation}
    \label{EqELTAccMcirc}
      L - \chi_\circ(D_g\Ric-\Lambda)(\chi_\circ\wt h) \in \cA_\phg^{\hat\cF+(\N_0\cup\hat\cE)-2,(\cF+\cE)\cup 2\cF}(\wt M\setminus\wt K^\circ;S^2\wt T^*\wt M).
    \end{equation}
  \item\label{ItELTAccMcirc}{\rm (Accuracy near $\hat M$.)} Extend the zero energy operator family $\wh{D_{\hat g}\Ric}(0)$ (which we recall is a vertical b-differential operator on $\hat M$) to an $\eps$-independent operator on $[0,1)_\eps\times\hat M$ and identify it on $\supp\hat\chi$ with an operator on $\wt M$ via the (coordinate-dependent) diffeomorphism~\eqref{EqGRelHat} from Lemma~\usref{LemmaGRel}. Then
    \begin{equation}
    \label{EqELTAccMhat}
      L - \eps^{-2}\hat\chi\sfe^{-1}\Bigl(\wh{D_{\hat g}\Ric}(0)\bigl(\hat\chi\sfe(\wt h)\bigr)\Bigr) \in \cA_\phg^{\hat\cF+((\N_0+1)\cup\hat\cE)-2,\cF+(\N_0\cup\cE)}(\wt M\setminus\wt K^\circ;S^2\wt T^*\wt M).
    \end{equation}
  \end{enumerate}
\end{prop}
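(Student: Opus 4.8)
The plan is to prove Proposition~\ref{PropELTAcc} by Taylor expansion of the nonlinear Einstein operator $\Ric-\Lambda$ around $\wt g$ and careful bookkeeping of the index sets. Write $N(\fg)=\Ric(\fg)-\Lambda\fg$. By the second-order Taylor theorem (with integral remainder) applied fiberwise in $\eps>0$, we have
\[
  L = N(\wt g+\wt h)-N(\wt g) = (D_{\wt g}\Ric-\Lambda)\wt h + Q(\wt h),\qquad Q(\wt h)=\int_0^1(1-s)\,D^2_{\wt g+s\wt h}N(\wt h,\wt h)\,\dd s,
\]
where $D^2 N$ is the second derivative of the (polynomial-in-the-inverse-metric-and-its-derivatives, hence smooth in the metric coefficients as long as $\wt g+s\wt h$ is nondegenerate) operator $N$. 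The membership~\eqref{EqELTAccLin} then follows from Lemma~\ref{LemmaELTLin} for the linear term — which maps $\cA_\phg^{\hat\cF,\cF}$ into $\cA_\phg^{\hat\cF+(\N_0\cup\hat\cE)-2,\cF+(\N_0\cup\cE)}$ since $D_{\wt g}\Ric-\Lambda$ lies in $\cA_\phg^{(\N_0\cup\hat\cE)-2,\N_0\cup\cE}\Diffse^2$ and se-differential operators preserve polyhomogeneity with a loss of $2$ orders at $\hat M$ absorbed in $\hat\rho^{-2}$ — together with a corresponding estimate for $Q(\wt h)$: schematically $Q(\wt h)$ is a sum of terms each of which is a product of (a bounded number of) factors from $\wt\upbeta^*\CI+\cA_\phg^{\N_0\cup\hat\cE,\cE}$ (coefficients built from $\wt g^{-1}$ and its se-derivatives) with two factors involving $\wt h$ and at most two se-derivatives distributed among all factors. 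Since $\hat\cF,\cF$ are nonlinearly closed, each such term lies in $\cA_\phg^{2\hat\cF+(\N_0\cup\hat\cE)-2,\,2\cF+(\N_0\cup\cE)}$, which is contained in $\cA_\phg^{\hat\cF+(\N_0\cup\hat\cE)-2,\cF+(\N_0\cup\cE)}$ because $\Re\hat\cF,\Re\cF>0$; this in fact gives the stronger quadratic gain that will be needed for part~\eqref{ItELTAccMhat}. Here it is crucial that $\wt g+s\wt h$ is a Lorentzian metric on a neighborhood of $\hat M\cup M_\circ$ — which holds for $\wt h$ small, and more to the point we only claim the statement as an identity of polyhomogeneous sections near $\hat M\cup M_\circ$ (Notation~\ref{NotGLDef}), where $\wt g+s\wt h\to\wt g$ and $\hat g+s\hat h\to\hat g$ as we approach $\hat M\cup M_\circ$, so nondegeneracy is automatic there for all $s\in[0,1]$.

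For part~\eqref{ItELTAccMhat}, I would argue that near $M_\circ$ (on $\supp\chi_\circ$, away from $\hat M$) the tensor $\wt g$ differs from $\wt\upbeta^*g$ by a term in $\cA_\phg^{\cE}$ vanishing at $M_\circ$, and hence $D_{\wt g}\Ric-\Lambda$ differs from the $\eps$-independent operator $D_g\Ric-\Lambda$ by an operator whose coefficients lie in $\cA_\phg^{\cE}\Diffse^2$, i.e.\ vanish at $M_\circ$. Applying this difference of operators to $\chi_\circ\wt h\in\cA_\phg^{\cdot,\cF}$ produces a term in $\cA_\phg^{\cdot,\cF+\cE}$. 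Combined with the quadratic remainder $Q(\wt h)\in\cA_\phg^{\cdot,2\cF}$ from the previous paragraph, this yields~\eqref{EqELTAccMcirc}. (The cutoff discrepancies $\chi_\circ(D_g\Ric-\Lambda)(\chi_\circ\wt h)$ versus $(D_{\wt g}\Ric-\Lambda)(\chi_\circ\wt h)$ versus the full linear term $(D_{\wt g}\Ric-\Lambda)\wt h$ differ by terms supported away from $M_\circ$ or by commutator terms $[\chi_\circ, D]$ which vanish where $\chi_\circ$ is locally constant, i.e.\ near $M_\circ$; these are all of the required remainder class.)

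For part~\eqref{ItELTAccMcirc} — which I expect to be the main obstacle, since it is where the adiabatic/se-structure is genuinely used — the point is that at $\hat M$ the model of $\eps^2(D_{\wt g}\Ric-\Lambda)$ is $\sfe^{-1}\circ D_{\hat g}\Ric\circ\sfe$ by Lemma~\ref{LemmaELTLin}, whose \emph{zero-energy} part is $\wh{D_{\hat g}\Ric}(0)$; the difference between $D_{\hat g}\Ric$ acting on a general (not necessarily adiabatic) tensor and $\wh{D_{\hat g}\Ric}(0)$ acting on it picks up exactly the terms involving $\pa_{\hat\sigma}^j\wh{D_{\hat g}\Ric}(0)$ with $j\geq1$, which by~\eqref{EqGVfhatMSpecn0Diff} carry extra powers of $\rho_\circ$ — but, more importantly, each such $\pa_{\hat t}$-derivative when written in the slow variable $t$ carries a factor of $\eps=\hat\rho\rho_\circ$, hence an extra order of vanishing at $\hat M$. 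Concretely: write the linear term $(D_{\wt g}\Ric-\Lambda)\wt h = \eps^{-2}\sfe^{-1}(N_{\hat M}(\eps^2(D_{\wt g}\Ric-\Lambda))(\sfe\wt h)) + (\text{remainder in }\cA_\phg^{\hat\cF+(\N_0\cup\hat\cE)-2+1,\cdot})$, where the $+1$ gain at $\hat M$ comes from the definition of the normal operator as the leading term. Then expand $N_{\hat M}(\eps^2(D_{\wt g}\Ric-\Lambda)) = \sfe^{-1}D_{\hat g}\Ric\,\sfe$ via~\eqref{EqGVfhatMSpec} into $\wh{D_{\hat g}\Ric}(0)$ plus terms $\tfrac{1}{j!}\pa_{\hat\sigma}^j\wh{D_{\hat g}\Ric}(0)(-D_{\hat t})^j$ with $j\geq1$; when this acts on $\sfe\wt h$ and is converted back to the $z=(t,x)$ trivialization and to the slow coordinates, each factor $D_{\hat t}=\eps D_t$ contributes a factor $\eps=\hat\rho\rho_\circ$, so these terms lie in $\cA_\phg^{\hat\cF+(\N_0+1\cup\hat\cE)-2,\cdot}$. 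Inserting the cutoff $\hat\chi$ (which is locally constant near $\hat M$, so commutator terms are harmless near $\hat M$) and combining with the quadratic remainder $Q(\wt h)\in\cA_\phg^{2\hat\cF+(\N_0\cup\hat\cE)-2,\cdot}\subset\cA_\phg^{\hat\cF+((\N_0+1)\cup\hat\cE)-2,\cdot}$ (using $\Re\hat\cF>0$ so that $2\hat\cF\subset\hat\cF+(\N_0+1)$ — more precisely $\Re(2\hat\cF)\geq\Re\hat\cF+\min\Re\hat\cF$, and the claimed index set $\hat\cF+((\N_0+1)\cup\hat\cE)-2$ is meant with the convention that remainders are merely \emph{contained} in that polyhomogeneous class, which is implied once $\min\Re\hat\cF\geq1$; in the application $\hat\cF\subset\N_0+1$ or better so this is fine), we obtain~\eqref{EqELTAccMhat}. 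The one subtlety to handle with care is that $\wt g$ itself is only $(\emptyset,\cdot)$- or $(\hat\cE,\cdot)$-smooth at $\hat M$, so $D_{\hat g}\Ric$ genuinely depends on $t$ through $\hat g_p$; but this $t$-dependence is \emph{smooth} (parametric), so the expansion~\eqref{EqGVfhatMSpec} applies with smooth parametric dependence on $p\in\cC$ exactly as recorded after Example~\ref{ExGVfExplicit}, and no extra terms arise from differentiating $\hat g_p$ in $t$ within $\wh{D_{\hat g}\Ric}(0)$ since those derivatives act on coefficients, not on the solution, and are already subsumed in the $\cA_\phg$ bookkeeping. $\qed$
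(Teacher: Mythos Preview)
Your approach is correct and follows the same strategy as the paper: Taylor-expand $\Ric-\Lambda$ around $\wt g$, control the quadratic remainder using the nonlinear closure of the index sets, and at each boundary face compare the linear term to its model operator ($D_g\Ric-\Lambda$ at $M_\circ$, $\wh{D_{\hat g}\Ric}(0)$ at $\hat M$). The only difference is packaging: you invoke the normal-operator and spectral-family machinery of~\S\ref{SsGVf} (in particular~\eqref{EqGVfhatMSpec}), while the paper works directly in the fast coordinates $(\eps,\hat t,\hat x)$ near a single fiber $\hat M_p$ and records the key gain as the improved regularity $\pa_{\hat t}^j(\sfe\wt h)\in\cA_\phg^{\hat\cF+j}$ coming from $\pa_{\hat t}=\eps\pa_t$; these are equivalent formulations of the same mechanism.

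The index-set wrinkle you flag for the quadratic remainder at $\hat M$---whether $2\hat\cF+(\N_0\cup\hat\cE)$ is contained in $\hat\cF+((\N_0+1)\cup\hat\cE)$---is present verbatim in the paper's proof (it simply writes ``This gives~\eqref{EqELTAccMhat}'' after displaying the $2\hat\cF$ term), so your caution here is if anything more scrupulous than the original; as you note, in every application the index sets have integer leading exponent $\geq 1$ and the issue does not arise.
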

\begin{proof}
  Since $L$ is polyhomogeneous by Corollary~\ref{CorGLCurvature}, it suffices to work near the interiors of $M_\circ$ and $\hat M$, i.e.\ away from the corner of $\wt M$, in order to determine the index sets in~\eqref{EqELTAccLin}--\eqref{EqELTAccMhat}.

  Let us first work near $(M_\circ)^\circ$, where $\wt g\equiv g\bmod\cA_\phg^\cE$, where we abbreviate $\cA_\phg^\cE:=\cA_\phg^\cE([0,1);\CI(M\setminus\cC;S^2 T^*M))$;\footnote{This is the same as the space $\cA_\phg^\cE(\wt M'\setminus([0,1)\times\cC);S^2\wt T^*\wt M')$.} and $\wt h\in\cA_\phg^\cF$. Then
  \[
    (\wt g+\wt h)^{-1}-\wt g^{-1}\equiv-\wt g^{-1}\wt h\wt g^{-1}\bmod\cA_\phg^{2\cF+(\N_0\cup\cE)}\subset\cA_\phg^{\cF+(\N_0\cup\cE)},
  \]
  since, in local coordinates, finite products of components of $\wt g^{-1}$ lie in $\cA_\phg^{\N_0\cup\cE}$, while $k$-fold products of components of $\wt h$, with $k\geq 2$, lie in $\cA_\phg^{k\cF}\subset\cA_\phg^{2\cF}$. This verifies the second index set in~\eqref{EqELTAccLin}, and moreover gives
  \begin{align*}
    L - (D_{\wt g}\Ric-\Lambda)(\wt h) &= \bigl(\Ric(\wt g+\wt h)-\Lambda(\wt g+\wt h)\bigr)-\bigl(\Ric(\wt g)-\Lambda\wt g\bigr) - (D_{\wt g}\Ric-\Lambda)(\wt h) \\
      &\in \cA_\phg^{2\cF+(\N_0\cup\cE)}.
  \end{align*}
  Note then further that $D_{\wt g}\Ric-D_g\Ric\in\cA_\phg^\cE([0,1);\Diff^2(M\setminus\cC;S^2 T^*M))$, which when evaluated on $\wt h$ gives an element of $\cA_\phg^{\cF+\cE}$. Since $(2\cF+(\N_0\cup\cE))\cup(\cF+\cE)=(2\cF\cup(2\cF+\cE))\cup(\cF+\cE)=(\cF+\cE)\cup 2\cF$, this verifies the second index set in~\eqref{EqELTAccMcirc}.

  Regarding the second index set of~\eqref{EqELTAccMhat}, we note that $\eps^{-2}\hat\chi\sfe^{-1}\circ\wh{D_{\hat g}\Ric}(0)\circ\hat\chi\sfe\in\eps^{-2}\rho_\circ^2\Diffse^2(\wt M\setminus\wt K^\circ;S^2\wt T^*\wt M)\subset\hat\rho^{-2}\Diffb^2$ maps $\wt h$ near $(M_\circ)^\circ$ into an element of $\cA_\phg^\cF$; it then remains to use~\eqref{EqELTAccLin} to conclude.

  We next turn to a neighborhood of $\hat M^\circ$ where we use the smooth coordinates $\eps\geq 0$, $t\in I$, and $\hat x=\frac{x}{\eps}\in\R^3$. It suffices to evaluate~\eqref{EqELTAccLin}--\eqref{EqELTAccMhat} at a single fiber $\hat M_p^\circ$, $p=c(t_0)$, of $\hat M^\circ$; we do this in by passing to the `fast time' coordinate $\hat t:=\frac{t-t_0}{\eps}$. Thus,
  \begin{equation}
  \label{EqELTAccFastCoord}
    \eps\geq 0,\qquad
    \hat t\in\R,\qquad
    \hat x\in\R^3
  \end{equation}
  are local coordinates near the interior of the front face of $[\wt M;\hat M_p]$, and the map $\sfe\colon\dd z^\mu\,\dd z^\nu\mapsto\dd\hat z^\mu\,\dd\hat z^\nu$, $z=(t,x)$, $\hat z=(\hat t,\hat x)$, is given by multiplication by $\eps^2$; cf.\ Remark~\ref{RmkGffMultEps}. We thus consider
  \[
    \eps^2\sfe L = \bigl(\Ric(\sfe\wt g+\sfe\wt h)-\eps^2\Lambda(\sfe\wt g+\sfe\wt h)\bigr) - \bigl(\Ric(\sfe\wt g)-\eps^2\Lambda\sfe\wt g\bigr).
  \]
  By Definition~\ref{DefGLTot}, we have
  \[
    \sfe\wt g\equiv\hat g_p\bmod\cA_\phg^{(\N_0+1)\cup\hat\cE}:=\cA_\phg^{(\N_0+1)\cup\hat\cE}\bigl([0,1)_\eps;\CI(\R^4\setminus(\R\times\hat K_p^\circ);S^2 T^*\R^4)\bigr)
  \]
  and $\sfe\wt h\in\cA_\phg^{\hat\cF}$. Carefully note also the improved regularity
  \begin{equation}
  \label{EqELTAccThatReg}
    \pa_{\hat t}^j(\sfe\wt h)\in\cA_\phg^{\hat\cF+j},\qquad j\in\N,
  \end{equation}
  of $\wt h$ in the fast time variable; this follows from $\pa_{\hat t}=\eps\pa_t$. Therefore,
  \[
    (\sfe\wt g+\sfe\wt h)^{-1}-\sfe\wt g^{-1}\equiv-\sfe \wt g^{-1}\wt h\wt g^{-1}\bmod\cA_\phg^{2\hat\cF+(\N_0\cup\hat\cE)} \subset \cA_\phg^{\hat\cF+(\N_0\cup\hat\cE)}
  \]
  and thus
  \[
    \eps^2\sfe L \in \cA_\phg^{\hat\cF+(\N_0\cup\hat\cE)},\qquad
    \eps^2\sfe L-(D_{\sfe\wt g}\Ric-\eps^2\Lambda)(\sfe\wt h)\in\cA_\phg^{2\hat\cF+(\N_0\cup\hat\cE)}
  \]
  similarly to above. But in view of~\eqref{EqELTAccThatReg},
  \[
    D_{\sfe\wt g}\Ric(\sfe\wt h) \equiv \wh{D_{\hat g_p}\Ric}(0)(\sfe\wt h) \bmod \cA_\phg^{\hat\cF+((\N_0+1)\cup\hat\cE)},
  \]
  and $\eps^2\Lambda(\sfe\wt h)\in\cA_\phg^{\hat\cF+2}$ is of even lower order. This gives~\eqref{EqELTAccMhat}.

  Finally, the verification of the $\hat M$-index set of~\eqref{EqELTAccMcirc} uses that $\chi_\circ(D_g\Ric-\Lambda)\chi_\circ\in\chi_\circ r^{-2}\Diffse^2(\wt M;S^2\wt T^*\wt M)$ maps $\wt h$ into $\cA_\phg^{\hat\cF-2}$ near $\hat M^\circ$.
\end{proof}

\section{Setup and statement of the main result}
\label{SM}

\begin{definition}[Gluing data]
\label{DefMData}
  \emph{Gluing data} are a tuple $(M,g,\fp,v,\bhm,\bha,\Lambda)$ with the following properties:
  \begin{enumerate}
  \item\label{ItMDataM} $(M,g)$ is a smooth open globally hyperbolic Lorentzian manifold which satisfies the Einstein vacuum equations with cosmological constant $\Lambda\in\R$, that is,
    \[
      \Ric(g) - \Lambda g = 0;
    \]
  \item\label{ItMDataGeod} $\fp\in M$, and $v\in T_\fp M$ is a future timelike unit vector;
  \item\label{ItMDataKerr} $\bhm>0$, $\bha\in T_\fp M$, $\bha\perp v$, $|\bha|<\bhm$ (so $\bhm,|\bha|$ are subextremal Kerr black hole parameters);
  \item\label{ItMDataKID} there exists a precompact connected open neighborhood $\cU_M^\circ\subset M$ of $\fp$ so that $(\cU_M^\circ,g|_{\cU_M^\circ})$ does not have any nontrivial Killing vector fields.
  \end{enumerate}
\end{definition}

(Kerr(--de~Sitter) spacetimes violate~\eqref{ItMDataKID}; we discuss the modifications required to handle this case in~\S\ref{SX}.) Given such gluing data, we shall fix the following further objects.
\begin{enumerate}
\setcounter{enumi}{4}
\item\label{ItMC} We write $c\colon I\to M$ (with $I\subseteq\R$ an open interval) for the maximally extended arc-length parameterized timelike geodesic with $c(0)=\fp$ and $c'(0)=v$, and we denote by $\cC=c(I)\subset M$ its image. We denote by $\wt M=[[0,1)\times M;\{0\}\times\cC]$ the total gluing spacetime of Definition~\ref{DefGTot}, with boundary hypersurfaces $\hat M$ and $M_\circ$.
\item\label{ItMFermi} We fix Fermi normal coordinates
  \begin{equation}
  \label{EqMFermi}
    (t,x) \in \R\times\R^3,\qquad t\in I,\quad r=|x|<r_0(t),
  \end{equation}
  along $\cC$, with $\fp=(0,0)$; we require $r_0\in\CI(I)$ to satisfy $0<r_0(t)<\frac12$ for all $t\in I$. In these coordinates, identify $\bha\in(T_\fp\cC)^\perp\subset T_\fp M$ with a vector in $\R^3$, denoted $\bha$ still.
\item\label{ItMX} By $X\subset M$ we denote a Cauchy hypersurface with $X\cap\cC=\{\fp\}$ and $T_\fp X\perp T_\fp\cC$, and we let $\cU^\circ\subset X$ denote a smoothly bounded precompact connected open neighborhood of $\fp$ so that the domain of dependence of $\cU^\circ$ contains $\cU_M^\circ$. (See Remarks~\ref{RmkMXChoice} and \ref{RmkMKIDs} below.)
\item\label{ItMFamily} We let $\wt g_0\in\CI(\wt M\setminus\wt K^\circ;S^2\wt T^*\wt M)$ denote a $(\emptyset,\N_0+1)$-smooth total family as produced by Lemma~\ref{LemmaGKNaive}\eqref{ItGKNaiveGeod}; here
  \[
    \wt K=\{(\eps,t,x)\colon|\eps x|\leq\bhm\}
  \]
  is as in Definition~\ref{DefGLTot} for the choice $\hat K_p=\Phi_p(\hat K_{\bhm,\bha})\subset\hat M_p$ (with $\Phi_p$ given by Lemma~\ref{LemmaGKNaive}, and $\hat K_{\bhm,\bha}=\{\hat x\in\R^3\colon|\hat x|\leq\bhm\}$ as in Definition~\ref{DefGKModel}). We write $\hat g_p$ for the $\hat M_p$-model of $\wt g$, and $\hat g=(\hat g_p)_{p\in\cC}$ for the $\hat M$-model of $\wt g$. Thus, when expressing $\hat g$ in the frame $\dd\hat t$, $\dd\hat x$ (where $\hat t=\dd t(-)$, $\hat x=\dd x(-)$) on $T_p M$, it is equal to the Kerr metric $\hat g_{\bhm,\bha}=\hat g_{\bhm,\bha}(\hat x;\dd\hat t,\dd\hat x)$ and thus $t$-independent.
\item\label{ItMData} We write $(\gamma,k)$ for the initial data of $(M,g)$ at $X$; that is, $\gamma\in\CI(X;S^2 T^*X)$ is the induced metric, and $k\in\CI(X;S^2 T^*X)$ is the second fundamental form of $X\subset M$.
\item\label{ItMDataTilde} We define the total gluing space $\wt X$ as a subset of $\wt M$ as in~\S\ref{SsGHyp}.
\end{enumerate}

\begin{rmk}[Choice of $X$]
\label{RmkMXChoice}
  By Lemma~\ref{LemmaGKMod}, we can always modify a given Cauchy hypersurface intersecting $\cC$ at $\fp$ in an arbitrarily small neighborhood of $\fp$ to a hypersurface satisfying the stated orthogonality condition; if we make this modification inside of a convex neighborhood of $\fp$, the modified hypersurface is guaranteed to be a Cauchy hypersurface still.
\end{rmk}

\begin{rmk}[Existence of $\cU^\circ$; KIDs]
\label{RmkMKIDs}
  As the set $\cU^\circ\subset X$ in item~\eqref{ItMX} above, we may take any smoothly bounded precompact connected open set containing the set $\cV\subset X$ of the intersection points with $X$ of all maximally extended causal curves emanating from a point in $\ol{\cU_M^\circ}$. Note that since $\ol{\cU_M^\circ}$ is compact and $(M,g)$ is globally hyperbolic, the set $\cV$ is compact. As recalled in~\S\ref{SsECE}, we then conclude that $(X,\gamma,k)$ has no KIDs in $\cU^\circ$. We also recall from \cite[Remark~4.13]{HintzGlueID} that for all sufficiently small $\delta>0$, the nonexistence of KIDs persists on the subset of $\cU^\circ$ consisting of all points with distance larger than $\delta>0$ from $\pa\ol{\cU^\circ}$.
\end{rmk}

\begin{thm}[Main theorem, precise version]
\label{ThmM}
  Given gluing data $(M,g,\fp,v,\bhm,\bha,\Lambda)$, define $\cC,X\subset M$ and $\wt K\subset\wt M$ as above. Then there exists a $(\hat\cE,\cE)$-smooth (with index sets $\hat\cE\subset(3+\N_0)\times\N_0$ and $\cE\subset(1,0)_+\cup((3+\N_0)\times\N_0)$) total family $\wt g$ on $\wt M\setminus\wt K^\circ$ (see Definition~\usref{DefGLTot}) with respect to $g$ (i.e.\ $\wt g|_{M_\circ}=\upbeta_\circ^*g$) and with $\hat M$-model equal to $\hat g$ (i.e.\ with $\hat M_p$-model equal to the Kerr metric $\hat g_b=\hat g_{\bhm,\bha}$ for all $p\in\cC$, as in point~\eqref{ItMFamily} above) so that
  \begin{enumerate}
  \item $\wt g$ is a \emph{formal solution} of the Einstein vacuum equations with cosmological constant $\Lambda$ at $\eps=0$ and at $\wt X$. That is, in a neighborhood of $(\hat M\setminus\wt K)\cup M_\circ$,
    \begin{equation}
    \label{EqM}
      \Ric(\wt g)-\Lambda\wt g
    \end{equation}
    is a smooth section of $S^2\wt T^*\wt M$ over $\wt M\setminus\wt K^\circ$ which vanishes to infinite order at $\hat M$, $M_\circ$, and $\wt X\subset\wt M$;
  \item $\wt g$ is equal to $g$ outside the domain of influence of a compact subset of $\cU^\circ$;
  \item $\sfe\wt g$ is equal to the Kerr metric $\hat g_b$ at $\hat M$ modulo quadratically vanishing error terms, in the sense that in Fermi normal coordinates around $\cC$, the components of
    \[
      \sfe\wt g(\eps,t,\hat x;\dd\hat t,\dd\hat x) - \hat g_b(\hat x;\dd\hat t,\dd\hat x)
    \]
    in the frame $\dd\hat t,\dd\hat x$ vanish quadratically at $\eps=0$.
  \end{enumerate}
\end{thm}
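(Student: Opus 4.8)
The plan is to construct $\wt g$ by an iterative scheme on the total gluing spacetime $\wt M$, starting from the naive gluing $\wt g_0$ of point~\eqref{ItMFamily} (Lemma~\usref{LemmaGKNaive}\eqref{ItGKNaiveGeod}) and successively adding correction terms that improve the error $\Err:=\Ric(\wt g)-\Lambda\wt g$ at the two boundary hypersurfaces $M_\circ$ and $\hat M$, alternating between the two as in the toy model of~\S\usref{SssIPfM}. By Corollary~\usref{CorGLCurvature} and the choice of $\wt g_0$ (which uses the \emph{quadratic} vanishing of $g-\hat\eta$ in Fermi normal coordinates), the initial error satisfies $\Err_0=\Ric(\wt g_0)-\Lambda\wt g_0\in\rho_\circ\CI(\wt M)$, i.e.\ it vanishes to order one at $M_\circ$ and, crucially, to order two at $\hat M$ (relative to the naive $\hat\rho^{-2}$ threshold), which is exactly what is needed for the $\hat M$-corrections to start at order $\eps^2$ and hence for the index set $\hat\cE\subset(3+\N_0)\times\N_0$ claimed in Theorem~\usref{ThmM}. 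Throughout, the key structural fact is that the leading-order error at each stage lies in $\ker(\delta_g\sfG_g)$ (resp.\ $\ker(\delta_{\hat g_b}\sfG_{\hat g_b})$) by the second Bianchi identity, which is precisely the solvability condition for the linearized equations.

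Concretely, I would proceed as follows. \textbf{Far-field step at $M_\circ$:} given an error of class $\eps^k\rho_\circ\hat\rho^{\ell}\CI(\wt M)$, solve $(D_g\Ric-\Lambda)h=f$ with $f=-\eps^{-k}(\rho_\circ^{-1}\Err)|_{M_\circ}\in\hat\rho^{\ell-2}\CI(M_\circ)$, where $f\in\ker(\delta_g\sfG_g)$. Using Lemma~\usref{LemmaBgE} and Lemma~\usref{LemmaBgEb}, this is an edge/b-problem on $M_\circ=[M;\cC]$: first solve in Taylor series at $\pa M_\circ$ via the indicial family of $\wh{D_{\ubar g}\Ric}(0)$ (the content of~\S\usref{SRic}, Theorems~\usref{ThmRicUniq} and~\usref{ThmRicSolv}, which controls the logarithmic factors and hence the $+$ in $\cE\subset(1,0)_+\cup(\dots)$), then solve away the remaining infinite-order-vanishing smooth error on $M$ by a gauge-fixed linearized Einstein equation with initial data supported in $\cU^\circ$—here the no-KID hypothesis (Definition~\usref{DefMData}\eqref{ItMDataKID}, Remark~\usref{RmkMKIDs}) and the linearized-constraint solvability of \cite{ChruscielDelayMapping} (Proposition~\usref{PropAcConstr}) give the support control of point~(2), and \S\usref{SsELin} handles the passage from the gauge-fixed solution to a genuine solution of~\eqref{EqELinNonFixed}. \textbf{Near-field step at $\hat M$:} solve $\wh{D_{\hat g_b}\Ric}(0)h=f^1$ parametrically in $t$, again first to infinite order at $\rho_\circ=|\hat x|^{-1}=0$ and then applying the (gauge-fixed) zero-energy resolvent from \cite{HaefnerHintzVasyKerr,AnderssonHaefnerWhitingMode}; but now the operator has a $7$-dimensional cokernel (dual pure gauge solutions, \S\usref{SsAh0}), which must be eliminated by \emph{modulation}: adiabatic translations $\hat c(t)$ of the black-hole center and sub-leading modulations $\dot b(t)$ of the Kerr parameters, exploiting that a $\hat t^j$-polynomial correction contributes $D_{\hat g_b}\Ric(\tfrac{\hat t^j}{j!}h_j)$ to the error and can move it off the cokernel (Theorems~\usref{ThmAhKCoker}, \usref{ThmAhPhgSolv}, Proposition~\usref{PropFhM1}). \textbf{Iteration and summation:} after one full cycle the error improves by one order at both boundary faces, so the linear problems stabilize; an asymptotic (Borel) sum of $\wt g_0$ and all corrections yields a log-smooth total family $\wt g$ with $\Err\in\CIdot(\wt M)=\eps^\infty\CI$. \textbf{Cauchy surface step:} finally, correct the initial data at $\wt X$ by an $\cO(\eps^\infty)$ tensor so that the constraints hold exactly, via the contraction-mapping argument of \cite{HintzGlueID} (\S\usref{SIVP}, point~\eqref{ItIPfC}-style reasoning), and build the Taylor series of $\wt g$ at $\wt X$ in a fixed $(3+1)$-splitting; this gives the vanishing of~\eqref{EqM} at $\wt X$.

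The part requiring the most care—and the main obstacle—is the \emph{bookkeeping of the generalized Taylor expansions during the first few steps}: one must track the logarithmic terms at $\hat M$ and $M_\circ$ and the precise algebraic form of the leading error coefficients (which spherical-harmonic/representation types appear), because (i) this is what pins down the claimed index sets $\hat\cE\subset(3+\N_0)\times\N_0$ and $\cE\subset(1,0)_+\cup((3+\N_0)\times\N_0)$ and the quadratic-vanishing statement in point~(3), and (ii) the nonlinear (quadratic and cubic) self-interactions of the $\cO(\eps)$ correction terms feed back into the $\cO(\eps)$-level error at $\hat M$ and must be shown to be harmless for the modulation procedure. In particular, verifying that at the \emph{first} $\hat M$-correction step only modulations corresponding to infinitesimal rotations of the Kerr axis (which are pure gauge, Lemma~\usref{LemmaFhM1AxisLie}, Corollary~\usref{CorFhM1CokerParam}) are needed, and that these together with the quadratic-in-$\hat t$ translational modulations~\eqref{EqIPfNCorr} span the relevant $7$-dimensional obstruction, is the delicate computation; the later steps ($k\ge 3$) are structurally identical and only require crude size estimates on the nonlinear terms. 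This careful early-stage analysis is carried out in~\S\S\usref{SsFMc1}, \usref{SsFhM2}, \usref{SsFhM3}, with the overall iteration assembled in~\S\usref{SF}.
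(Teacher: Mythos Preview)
Your proposal is correct and follows essentially the same route as the paper: the two-step scheme (formal solution at $\eps=0$ via alternating $M_\circ$/$\hat M$ corrections with modulation, then the $\wt X$ correction) and your identification of the delicate early-stage bookkeeping exactly mirror the proofs of Theorems~\usref{ThmFh} and~\usref{ThmIVP} assembled in~\S\S\usref{SF}--\usref{SIVP}. The only point you underplay slightly is the explicit re-centering pullbacks $\Phi_{j,\hat c,\dot\bha}$ after each $\hat M$-step (Lemma~\usref{LemmaFhM1CombProp}, Lemma~\usref{LemmaFhM2Further}), which are what preserve the quadratic-vanishing statement in point~(3) and keep the $\hat M$-model equal to the fixed $\hat g_b$ throughout the iteration.
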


The set $\wt K$ which we excise lies inside the interior of the small Kerr black hole glued along $\hat M$, and is indeed disjoint from the event horizon at $\eps=0$. See Figure~\ref{FigM}.

\begin{figure}[!ht]
\centering
\includegraphics{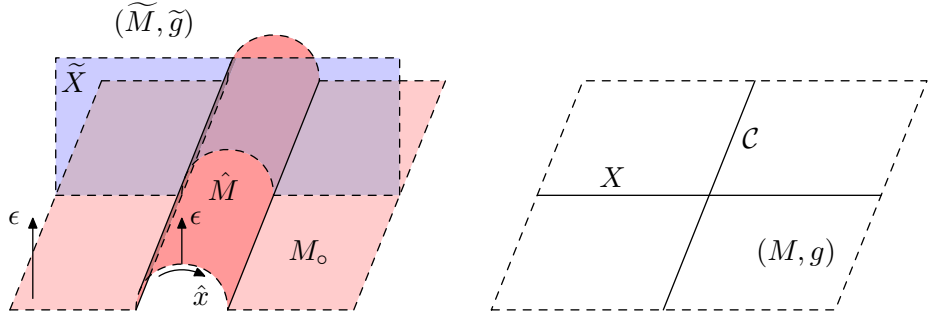}
\caption{Illustration of Theorem~\ref{ThmM}. \textit{On the left:} the total family $\wt g$ solves the Einstein vacuum equations~\eqref{EqM} to infinite order at $\eps=0$ (the union of $\hat M$ and $M_\circ$) and at $\wt X$. A set $\wt K=\{\hat r=|\hat x|\leq\bhm\}$ in the interior of the small Kerr black hole is excised (not shown here). \textit{On the right:} the background spacetime $(M,g)$, and timelike geodesic $\cC$ along which we want to glue in a small Kerr black hole, and a spacelike hypersurface $X$; the lift of $[0,1)_\eps\times X$ to $\wt M$ is $\wt X$ on the left.}
\label{FigM}
\end{figure}

As outlined in~\S\ref{SsIPf}, the proof of Theorem~\ref{ThmM} consists of two steps.

\begin{enumerate}
\myitem{ItMFormalI}{(I)}{I} Construction of a formal solution $\wt g_\infty$ at $\eps=0$; that is, $\wt g_\infty$ is a total family with respect to $g$ and with $\hat M$-model $\hat g$ which satisfies
  \[
    \Ric(\wt g_\infty)-\Lambda\wt g_\infty\in\eps^\infty\CI(\wt M\setminus\wt K^\circ;S^2\wt T^*\wt M).
  \]
  This forms the main part of the paper and is accomplished in~\S\S\ref{SRic}--\ref{SF} following some preliminary calculations in~\S\ref{SMk}; see Theorem~\ref{ThmFh}. 
\myitem{ItMFormalII}{(II)}{II} Correction of $\wt g_\infty$ to $\wt g=\wt g_\infty+\wt h$ where $\wt h$ vanishes to infinite order at $\eps=0$, and so that $\wt g$ satisfies the Einstein vacuum equations also to infinite order at $\wt X$. This step is the subject of~\S\ref{SIVP}; it is considerably easier and can be read independently. See Theorem~\ref{ThmIVP}.
\end{enumerate}

The combination of Theorems~\ref{ThmFh} and \ref{ThmIVP} proves Theorem~\ref{ThmM}.

\section{Geometric differential operators on Minkowski space}
\label{SMk}

In this section, we compute the explicit form of various geometric operators on Minkowski space which are related to the Einstein equations. A key concept which we recall here as well is the decomposition of tensors on $\Sph^2$ into scalar and vector type components.

\subsection{Spherical harmonics}
\label{SsMkY}

Geometric differential operators on the unit sphere $(\Sph^2,\slg)$ are denoted with a slash, so $\slg$ is the standard metric, $\sld$ is the exterior derivative, $\sldelta=\delta_\slg$ is the (negative) divergence, $\sldelta^*$ the symmetric gradient, $\sltr=\tr_\slg$ is the trace, and $\slDelta=\Delta_\slg$ the non-negative (tensor) Laplacian. For $l\in\N_0$, we define the $(2 l+1)$-dimensional space of spherical harmonics of degree $l$ on $\Sph^2$ by
\[
  \scalspace_l = \ker\bigl(\slDelta-l(l+1)\bigr) \subset \CI(\Sph^2).
\]

\begin{definition}[1-form and symmetric 2-tensor spherical harmonics]
\label{DefMkY}
  Let $l\in\N$.
  \begin{enumerate}
  \item{\rm (1-forms.)} We say that a 1-form $\omega\in\CI(\Sph^2;T^*\Sph^2)$ is of \emph{scalar type $l$} if $\omega=\sld\scal$ for some $\scal\in\scalspace_l$.\footnote{We exclude the case $l=0$ here, since $\scalspace_0$ consists of constants.} We say that $\omega$ is of \emph{vector type $l$} if $\omega=\slstar\sld\scal$, where $\slstar$ is the Hodge star operator on $(\Sph^2,\slg)$; we denote by
  \[
    \vectspace_l = \{\slstar\sld\scal\colon\scal\in\scalspace_l\} \subset \CI(\Sph^2;T^*\Sph^2)
  \]
  the space of all vector type $l$ 1-forms.
  \item{\rm (Symmetric 2-tensors.)} We say that $h\in\CI(\Sph^2;S^2 T^*\Sph^2)$ is of \emph{scalar type $l$} if $h=\sldelta_0^*\sld\scal+\scal'\slg$ with $\scal,\scal'\in\scalspace_l$; here $\sldelta_0^*=\sldelta^*+\half\slg\sldelta$ is the trace-free part of the symmetric gradient. We say that $h$ is of \emph{vector type $l$} if $h=\sldelta^*\vect$ for some $\vect\in\vectspace_l$.
  \end{enumerate}
\end{definition}

For brevity, we call elements of $\scalspace_l$ also \emph{$\rms l$ functions}; similarly, \emph{$\rmv l$ 1-forms} are 1-forms of vector type $l$, and likewise for scalar type 1-forms and for symmetric 2-tensors on $\Sph^2$.

We say that a function, 1-form, or tensor is of \emph{pure type} if it is of scalar type $l$ or vector type $l$ for some $l\in\N_0$. Recall that since $\sld$ has injective principal symbol, every $\omega\in\CI(\Sph^2;T^*\Sph^2)$ can be uniquely decomposed as $\omega=\sld u+v$ where $u\in\CI(\Sph^2)$ and $v\in\CI(\Sph^2;T^*\Sph^2)\cap\ker\sld^*$, i.e.\ $\sldelta v=0$; this means $\sld\slstar v=0$, and since $\Sph^2$ has trivial first cohomology, we can further write $\slstar v=\sld v'$ for $v'\in\CI(\Sph^2)$. (In total, $\omega=\sld u-\slstar\sld v'$.) Expanding $u,v'$ into spherical harmonics, we thus conclude that every smooth 1-form on $\Sph^2$ can be expanded into a (rapidly convergent) sum of 1-forms of pure type.

We have a similar decomposition of symmetric 2-tensors $h\in\CI(\Sph^2;S^2 T^*\Sph^2)$: using that $\sldelta_0^*$ is elliptic, and recalling that there do not exist nontrivial divergence- and trace-free symmetric 2-tensors on $\Sph^2$ \cite{HiguchiSpherical}, we can write $h=u\slg+\sldelta_0^*\omega$ with $u\in\CI(\Sph^2)$ and $\omega\in\CI(\Sph^2;T^*\Sph^2)$; decomposing $u$ and $\omega$ into their pure type components shows that $h$ can be written as a sum of pure type tensors as well. We remark that the case $l=1$ is special in that
\[
  \sldelta_0^*\sld\scal=0\quad\forall\,\scal\in\scalspace_1,\qquad
  \sldelta^*\vect=0\quad\forall\,\vect\in\vectspace_1.
\]
The second identity is a re-statement of the fact that the elements of $\vectspace_1$ are (dual to) rotations on $\Sph^2$ and thus are Killing 1-forms. The first identity only needs to be checked for $\scal$ equal to the height function on $\Sph^2\subset\R^3$, in which case $\slg^{-1}(\sld\scal,-)$ is related, via stereographic projection from the south pole, to the radial vector field on $\R^2$, which is a conformal Killing vector field indeed.

\begin{definition}[Projections onto pure types]
\label{DefMkYProj}
  Given a function, 1-form, or symmetric 2-tensor $q$ on $\Sph^2$, we denote by
  \[
    \pi_{\rms l}(q),\qquad \pi_{\rmv l}(q)
  \]
  its scalar type $l$, resp.\ vector type $l$ part.
\end{definition}

\begin{example}[Some projections]
\label{ExMkYProj}
  Regarding $\Sph^2$ as the unit sphere in $\R_x^3$, write $\omega^j=\frac{x^j}{|x|}$. If $u\in\CI(\Sph^2)$, then $\pi_{\rms 0}(u)=\frac{1}{4\pi}\int_{\Sph^2}u\,\dd\slg$ and $\pi_{\rms 1}(u)=\sum_{j=1}^3(\frac{3}{4\pi}\int_{\Sph^2} u\omega^j\,\dd\slg)\omega^j$, where we use that $\frac{1}{4\pi}\int_{\Sph^2}\omega^j\omega^l\,\dd\slg=\frac{1}{3}\delta^{j l}$ (Kronecker delta). For $\omega\in\CI(\Sph^2;T^*\Sph^2)$, we have
  \[
    \pi_{\rms 1}(\omega) = \sum_{j=1}^3 \left(\frac{1}{4\pi}\int_{\Sph^2} \la\omega,\sld\omega^j\ra\,\dd\slg\right) \frac32 \sld\omega^j;
  \]
  the factor $\frac32$ here is the reciprocal of $\frac{1}{4\pi}\la\sld\omega^j,\sld\omega^j\ra_{L^2(\Sph^2;T^*\Sph^2)}=\frac{1}{4\pi}\la\sldelta\sld\omega^j,\omega^j\ra_{L^2(\Sph^2)}=\frac23$.
\end{example}

We next discuss these notions in terms of the representation theory of $O(3)=SO(3)\times(\Z/2\Z)$. The representation of $SO(3)$ on $\scalspace_l$ via pullback of functions on $SO(3)$ is the unique (up to isomorphism) complex $(2 l+1)$-dimensional representation $\rho_l$. Note moreover that elements of $\scalspace_l$ are even, resp.\ odd with respect to the antipodal map $-I$ for even, resp.\ odd $l$. On the other hand, $-I$ reverses orientation, thus $(-I)\circ\slstar=-\slstar\circ(-I)$. Therefore,
\[
  \text{$O(3)$ acts (via pullback) on}\quad \begin{cases} \scalspace_l \\ \vectspace_l \end{cases}\ \text{as the representation}\quad\begin{cases} \rho_l \otimes (-1)^l, & l\geq 0, \\ \rho_l\otimes(-1)^{l+1}, & l\geq 1. \end{cases}
\]
We then recall that $\rho_l\otimes\rho_{l'}\cong\rho_{|l-l'|}\oplus\cdots\oplus\rho_{l+l'}$, and $\rho_l\otimes_s\rho_l\cong\rho_{2 l}\oplus\rho_{2 l-2}\oplus\cdots\oplus\rho_0$. For example, $\vectspace_1\otimes_s\vectspace_1\subset\CI(\Sph^2;T^*\Sph^2)$, as an $SO(3)$-representation, is isomorphic to $\rho_0\oplus\rho_2$; and since it is even under $-I$, we must have
\begin{equation}
\label{EqMkYV1V1}
  \vectspace_1\otimes_s\vectspace_1\cong\scalspace_0\oplus\scalspace_2
\end{equation}
as $O(3)$-representations. That is, if $\vect\in\vectspace_1$, then $\vect\otimes_s\vect$ is the sum of a scalar type $0$ symmetric 2-tensor (i.e.\ a constant multiple of $\slg$) and a scalar type $2$ symmetric 2-tensor. Explicitly, for $\vect=\pa_\phi^\flat=\sin^2\theta\,\dd\phi$, one finds $\vect\otimes_s\vect=\frac13\slg+(Y\slg+\sldelta_0^*\sld Y)$ where $Y=\half\sin^2\theta-\frac13\in\scalspace_2$. The `vector type $0$' representation $\rho_0\otimes(-1)$, given by $O(3)\ni A\mapsto(v\mapsto(\det A)v)$, $v\in\vectspace_0:=\C$, is not realized by tensors on $\Sph^2$. Thus, for example, while $\scalspace_1\otimes\vectspace_1\cong\vectspace_0\oplus\scalspace_1\oplus\vectspace_2$ as representations, the space $\{f\omega\colon f\in\scalspace_1,\ \omega\in\vectspace_1\}\subset\CI(\Sph^2;T^*\Sph^2)$ is isomorphic to $\scalspace_1\oplus\vectspace_2$. Explicitly, the map $\scalspace_1\otimes\vectspace_1\ni(f,\omega)\mapsto f\omega$ has 1-dimensional kernel spanned by $\sum \omega^j\otimes\slstar\sld\omega^j$ since $\sum\omega^j\slstar\sld\omega^j=\half\slstar\sld\sum(\omega^j)^2$ and $\sum(\omega^j)^2=1$.

\begin{lemma}[Identities for spherical harmonics] 
\label{LemmaMkYId}
  Let $l\in\N_0$, and $\scal\in\scalspace_l$. Then
  \begin{alignat}{3}
  \label{EqMkYIds1}
    \sldelta(\sld\scal) &= l(l+1)\scal, &\quad \sldelta^*(\sld\scal)&=\sldelta_0^*\sld\scal-\frac{l(l+1)}{2}\slg\scal, &\quad \slDelta(\sld\scal)&=(l(l+1)-1)\scal, \\
  \label{EqMkYIds2}
    \slDelta(\slg\scal) &= l(l+1)\slg\scal, &\quad \sldelta(\sldelta_0^*\sld\scal)&=\frac{l(l+1)-2}{2}\sld\scal, &\quad \slDelta(\sldelta_0^*\sld\scal)&=(l(l+1)-4)\sldelta_0^*\sld\scal.
  \end{alignat}
  For $l\in\N$ and $\vect\in\vectspace_l$, we have
  \begin{equation}
  \label{EqMkYIdv}
    \slDelta\vect = (l(l+1)-1)\vect,\quad
    \sldelta\sldelta^*\vect = \frac{l(l+1)-2}{2}\vect,\quad
    \slDelta\sldelta^*\vect = (l(l+1)-4)\sldelta^*\vect.
  \end{equation}
\end{lemma}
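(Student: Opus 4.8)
The plan is to reduce every identity to three inputs: the scalar eigenvalue equation $\slDelta\scal=l(l+1)\scal$ (with $\slDelta=\nabla^*\nabla$ the connection Laplacian), two Weitzenb\"ock-type identities valid because $(\Sph^2,\slg)$ is Einstein with $\Ric_\slg=\slg$ and has constant curvature $1$, and the fact that there are no nontrivial transverse--traceless symmetric $2$-tensors on $\Sph^2$ \cite{HiguchiSpherical}. As preliminaries I would record: on functions $\slDelta=\sldelta\sld$, so $\sldelta\sld\scal=l(l+1)\scal$ (the first identity in \eqref{EqMkYIds1}); the purely algebraic relation $\sldelta^*=\sldelta_0^*-\tfrac12\slg\sldelta$, whose evaluation on $\sld\scal$ together with $\sldelta\sld\scal=l(l+1)\scal$ gives the second identity in \eqref{EqMkYIds1}; and $\sldelta(\slg f)=-\sld f$, $\sltr(\slg f)=2f$. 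Since $\slg$ is parallel, $\slDelta(\slg\scal)=\slg\,\slDelta\scal=l(l+1)\slg\scal$, the first identity in \eqref{EqMkYIds2}.

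Next I would handle the $1$-form statements. The Weitzenb\"ock formula on $1$-forms reads $\sld\sldelta+\sldelta\sld=\slDelta+\Ric_\slg=\slDelta+1$; since the Hodge--de~Rham Laplacian $\sld\sldelta+\sldelta\sld$ commutes with $\sld$ and with $\slstar$, both $\sld\scal$ and $\vect=\slstar\sld\scal$ are Hodge eigen-$1$-forms with eigenvalue $l(l+1)$, hence $\slDelta(\sld\scal)=(l(l+1)-1)\sld\scal$ and $\slDelta\vect=(l(l+1)-1)\vect$, giving the third identity in \eqref{EqMkYIds1} and the first in \eqref{EqMkYIdv}. The other Bochner identity I need is $2\sldelta\sldelta^*\omega=\slDelta\omega+\sld\sldelta\omega-\omega$ on $\Sph^2$ (again using $\Ric_\slg=\slg$); applying it to $\omega=\sld\scal$, combining with $\sldelta\sld\scal=l(l+1)\scal$, $\sldelta(\slg\scal)=-\sld\scal$ and the decomposition $\sldelta_0^*\sld\scal=\sldelta^*\sld\scal+\tfrac{l(l+1)}{2}\slg\scal$ gives the middle identity $\sldelta\sldelta_0^*\sld\scal=\tfrac{l(l+1)-2}{2}\sld\scal$ in \eqref{EqMkYIds2}; applying it to $\omega=\vect$ and using $\sldelta\vect=0$ gives $\sldelta\sldelta^*\vect=\tfrac{l(l+1)-2}{2}\vect$, the middle identity in \eqref{EqMkYIdv}.

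For the two remaining identities, $\slDelta(\sldelta_0^*\sld\scal)=(l(l+1)-4)\sldelta_0^*\sld\scal$ and $\slDelta(\sldelta^*\vect)=(l(l+1)-4)\sldelta^*\vect$, I would first prove the commutation relation $\slDelta(\sldelta h)=\sldelta(\slDelta h)+3\,\sldelta h+2\,\sld\sltr h$ for symmetric $2$-tensors $h$ on $\Sph^2$, by contracting the Ricci identity and substituting the constant-curvature form $R_{abcd}=\slg_{ac}\slg_{bd}-\slg_{ad}\slg_{bc}$. Then I apply this with the trace-free tensor $h=\sldelta_0^*\sld\scal$ (respectively $h=\sldelta^*\vect=\sldelta_0^*\vect$, which is trace-free since $\sldelta\vect=0$): the $\sld\sltr h$ term vanishes, and using the middle identities above together with $\slDelta(\sld\scal)=(l(l+1)-1)\sld\scal$ (resp.\ $\slDelta\vect=(l(l+1)-1)\vect$) a short computation gives $\sldelta\big(\slDelta h-(l(l+1)-4)h\big)=0$. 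Since $\slDelta h-(l(l+1)-4)h$ is also trace-free (as $\slg$ is parallel, $\slDelta$ preserves $\sltr$), it is transverse--traceless, hence vanishes by \cite{HiguchiSpherical}; this yields the last identities in \eqref{EqMkYIds2} and \eqref{EqMkYIdv}. The cases $l\in\{0,1\}$ are trivial since $\sld\scalspace_0=0$, $\sldelta_0^*\sld\scalspace_1=0$, and $\sldelta^*\vectspace_1=0$.

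The main obstacle is the commutator relation on symmetric $2$-tensors: it is the only place where one must genuinely expand the Ricci identity and track curvature contractions, rather than manipulate the scalar eigenvalue equation and the $1$-form Weitzenb\"ock formulas formally. Everything else, including all of the $1$-form identities and the passage from $\sldelta$-identities to $\slDelta$-identities via Higuchi's theorem, is routine once $\Ric_\slg=\slg$ and the absence of transverse--traceless symmetric $2$-tensors on $\Sph^2$ are in hand.
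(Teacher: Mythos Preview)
Your argument is correct, and for the scalar and $1$-form identities it is essentially identical to the paper's proof (same Weitzenb\"ock input $\sld\sldelta+\sldelta\sld=\slDelta+1$ on $1$-forms, same Bochner identity $2\sldelta\sldelta^*=\slDelta+\sld\sldelta-1$).

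The one genuine difference is in the last two identities, $\slDelta(\sldelta_0^*\sld\scal)$ and $\slDelta(\sldelta^*\vect)$. The paper computes the commutator on the \emph{other} side: a direct index computation on a constant curvature $n$-manifold gives $\slDelta\sldelta^*=\sldelta^*(\slDelta-(n+1)K)-2K\slg\sldelta$, hence on $\Sph^2$ one has $\slDelta\sldelta^*=\sldelta^*(\slDelta-3)-2\slg\sldelta$; applying this to $\vect$ (with $\sldelta\vect=0$) or to $\sld\scal$ (followed by a short manipulation) yields the eigenvalue $l(l+1)-4$ immediately. Your formula $\slDelta\sldelta=\sldelta\slDelta+3\sldelta+2\sld\sltr$ is exactly the adjoint of the paper's, so the curvature computation you identify as the ``main obstacle'' is the same one; the difference is that the paper's version gives the eigenvalue directly, while yours only pins down $\sldelta$ of the putative eigen-equation and then needs Higuchi's theorem to finish. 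Both are fine---Higuchi is already used elsewhere in the paper---but the paper's route is more economical since it avoids that extra appeal.
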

\begin{proof}
  The first two identities in~\eqref{EqMkYIds1} and the first identity in~\eqref{EqMkYIds2} follow directly from the definitions and $[\slDelta,\slg]=0$. The third identity in~\eqref{EqMkYIds1} follows from the fact that the Hodge Laplacian and the tensor Laplacian on 1-forms on a Riemannian manifold $(\cM,g)$ are related via $\Delta_g+\Ric(g)=\dd\delta_g+\delta_g\dd$; but $\Ric(\slg)=\slg$ acts as the identity operator on $\CI(\Sph^2;T^*\Sph^2)$, so $\slDelta\sld=(\sld\sldelta+\sldelta\sld-1)\sld=\sld(\slDelta-1)$ on functions. Since $[\slDelta,\slstar]=0$, this also implies the first identity in~\eqref{EqMkYIdv}.

  For a 1-form $\omega$ on a Riemannian manifold $(\cM,g)$, we have
  \begin{align*}
    (2\delta_g\delta_g^*\omega)_\mu &= -2(\delta_g^*\omega)_{\mu\nu;}{}^\nu = -\omega_{\mu;\nu}{}^\nu - \omega_{\nu;\mu}{}^\nu = -\omega_{\mu;\nu}{}^\nu-\omega_{\nu;}{}^\nu{}_\mu + g^{\nu\lambda}(\omega_{\nu;\lambda\mu}-\omega_{\nu;\mu\lambda}) \\
      &= \bigl((\Delta_g+\dd\delta_g)\omega\bigr)_\mu + g^{\nu\lambda}R^\rho{}_{\nu\lambda\mu}\omega_\rho = \bigl((\Delta_g+\dd\delta_g-\Ric(g))\omega\bigr)_\mu.
  \end{align*}
  On $(\Sph^2,\slg)$, this implies $\sldelta\sldelta^*=\half(\slDelta+\sld\sldelta-1)$, which in view of $\sldelta\vect=0$ proves the second identity in~\eqref{EqMkYIdv}. The second identity in~\eqref{EqMkYIds2} follows similarly from
  \[
    \sldelta(\sldelta_0^*\sld\scal) = \sldelta\sldelta^*\sld\scal + \sldelta\Bigl(\frac{l(l+1)}{2}\slg\scal\Bigr) = \frac12(2 l(l+1)-2)\sld\scal - \frac{l(l+1)}{2}\sld\scal.
  \]

  Finally, for $\omega\in\CI(M;T^*M)$ on an $n$-dimensional Riemannian manifold $(\cM,g)$ with constant sectional curvature $K$, so $R^\rho{}_{\mu\nu\lambda}=K(g_{\mu\lambda}\delta_\nu^\rho-g_{\mu\nu}\delta_\lambda^\rho)$, we compute
  \begin{align*}
    2(\Delta_g\delta_g^*\omega)_{\mu\nu} &= -g^{\kappa\lambda}(\omega_{\mu;\nu\kappa\lambda}+\omega_{\nu;\mu\kappa\lambda}) \\
      &= -g^{\kappa\lambda}\bigl(\omega_{\mu;\kappa\lambda\nu}+\omega_{\nu;\kappa\lambda\mu} + R^\rho{}_{\mu\nu\lambda}\omega_{\rho;\kappa}+R^\rho{}_{\kappa\nu\lambda}\omega_{\mu;\rho} + R^\rho{}_{\nu\mu\lambda}\omega_{\rho;\kappa}+R^\rho{}_{\kappa\mu\lambda}\omega_{\nu;\rho} \\
      &\qquad\qquad + ((R^\rho{}_{\mu\nu\kappa}+R^\rho{}_{\nu\mu\kappa})\omega_\rho)_{;\lambda}\bigr) \\
      &= 2(\delta_g^*\Delta_g\omega)_{\mu\nu} - K g^{\kappa\lambda}\bigl(g_{\mu\lambda}\omega_{\nu;\kappa}-g_{\mu\nu}\omega_{\lambda;\kappa} + g_{\kappa\lambda}\omega_{\mu;\nu}-g_{\kappa\nu}\omega_{\mu;\lambda} \\
      &\hspace{9.2em} + g_{\nu\lambda}\omega_{\mu;\kappa}-g_{\nu\mu}\omega_{\lambda;\kappa} + g_{\kappa\lambda}\omega_{\nu;\mu}-g_{\kappa\mu}\omega_{\nu;\lambda}\bigr) \\
      &\qquad - K\bigl(g_{\mu\kappa}\omega_{\nu;}{}^\kappa-g_{\mu\nu}\omega_{\kappa;}{}^\kappa + g_{\nu\kappa}\omega_{\mu;}{}^\kappa-g_{\nu\mu}\omega_{\kappa;}{}^\kappa\bigr) \\
      &= 2\Bigl(\bigl(\delta_g^*(\Delta_g-(n+1)K)-2 K g\delta_g\bigr)\omega\Bigr)_{\mu\nu}.
  \end{align*}
  On the 2-sphere, with $n=2$ and $K=1$, this gives
  \[
    \slDelta\sldelta^* = \sldelta^*(\slDelta-3)-2\slg\sldelta.
  \]
  Since $\sldelta\vect=0$, this implies the third identity in~\eqref{EqMkYIdv}, and after a short calculation also the last identity in~\eqref{EqMkYIds2}.
\end{proof}

Using the representation theory of $O(3)$, or by direct computation, one can check that if $E=\ul\R,T^*\Sph^2,S^2 T^*\Sph^2$ and $u,v\in\CI(\Sph^2;E)$ are of different pure types, then $\la u,v\ra_{L^2(\Sph^2;E)}=0$ where we use the volume density and fiber inner product induced by $\slg$ to define the $L^2$-inner product. As an example of an explicit check, we have for $\scal\in\scalspace_l$ and $\vect\in\vectspace_{l'}$, $l,l'\in\N$, the identity $\la\sldelta_0^*\sld\scal,\sldelta^*\vect\ra=\la\sldelta^*\sld\scal,\sldelta^*\vect\ra=\la\sld\scal,\sldelta\sldelta^*\vect\ra=\half(l'(l'+1)-2)\la\sld\scal,\vect\ra$ (using~\eqref{EqMkYIdv}), which vanishes after another integration by parts since $\sldelta\vect=0$.

\subsection{Operators on Minkowski space}
\label{SsMk}

We write $\ubar g=-\dd t^2+\dd x^2=-\dd t^2+\dd r^2+r^2\slg$ for the Minkowski metric on $\R_t\times\R^3_x$, and
\[
  \ubar\delta=\delta_{\ubar g},\qquad
  \ubar\delta^*=\delta^*_{\ubar g},\qquad
  \ul\tr=\tr_{\ubar g},\qquad
  \ubar\sfG=\sfG_{\ubar g}=I-\half\ubar g\ul\tr,\qquad
  \ubar\Box=\Box_{\ubar g}.
\]
Polar coordinates on $\R^3$ are denoted $r=|x|$, $\omega=\frac{x}{|x|}\in\Sph^2$; we write $\rho=r^{-1}$ for the inverse radial coordinate. In $r>0$, we introduce the double null coordinates\footnote{One can equally well perform all computations in $(t,r)$-coordinates, and in fact in the present paper some calculations would be slightly simplified. We use double null coordinates $(x^0,x^1)$ and the related coordinates $(t_*,r)$ here, as these are more commonly used in related works on spectral theory and microlocal analysis on asymptotically flat spacetimes.}
\[
  x^0 = t+r = t_*+2 r,\qquad
  x^1 = t-r =: t_*,
\]
in terms of which we have
\begin{alignat*}{2}
  \ubar g&= -\dd x^0\,\dd x^1 + r^2\slg &&
    = -\dd t_*^2 - 2\,\dd t_*\,\dd r + r^2\slg, \\
  \ubar g^{-1}&=-4\pa_0\otimes_s\pa_1 + r^{-2}\slg^{-1} &&
    = -2\pa_{t_*}\otimes_s\pa_r + \pa_r^2 + r^{-2}\slg^{-1}.
\end{alignat*}
We write $x^a$ ($a=2,3$) for coordinates on $\Sph^2$, and use the letters $a,b,c,d\in\{2,3\}$ for spherical indices. The Christoffel symbols of $\ubar g$ in the coordinates $x^0,x^1,x^a$ all vanish, with the exception of
\begin{subequations}
\begin{equation}
\label{EqMkChristoffel1}
  \ubar\Gamma^c_{0 b}=\half r^{-1}\delta_b^c, \quad
  \ubar\Gamma^c_{1 b}=-\half r^{-1}\delta_b^c, \quad
  \ubar\Gamma^0_{a b}=-r\slg_{a b}, \quad
  \ubar\Gamma^1_{a b}=r\slg_{a b}, \quad
  \ubar\Gamma^c_{a b}=\slGamma^c_{a b}.
\end{equation}
In particular, this gives
\begin{equation}
\label{EqMkChristoffel2}
  \ubar g^{\mu\nu}\ubar\Gamma^0_{\mu\nu}=-2 r^{-1},\quad
  \ubar g^{\mu\nu}\ubar\Gamma^1_{\mu\nu}=2 r^{-1},\quad
  \ubar g^{\mu\nu}\ubar\Gamma^c_{\mu\nu}=r^{-2}\slg^{a b}\slGamma^c_{a b}.
\end{equation}
\end{subequations}

We introduce the bundle splittings
\begin{equation}
\label{EqMkBundleSplit}
\begin{split}
  T^*\R^4 &= \la\dd x^0\ra \oplus \la\dd x^1\ra \oplus r T^*\Sph^2, \\
  S^2 T^*\R^4 &= \la(\dd x^0)^2\ra \oplus \la 2\dd x^0\,\dd x^1\ra \oplus (2\dd x^0\otimes_s r T^*\Sph^2) \\
    &\qquad \oplus \la(\dd x^1)^2\ra \oplus (2 \dd x^1\otimes_s r T^*\Sph^2) \oplus r^2 S^2 T^*\Sph^2.
\end{split}
\end{equation}
(That is, we write a covector $\omega$ at a point in $\R^4\setminus r^{-1}(0)$ as $\omega=\omega_0\,\dd x^0+\omega_1\,\dd x^1+r\slomega$ where $\slomega\in T^*\Sph^2$.) In these splittings, the fiber inner products on $T^*\R^4$, resp.\ $S^2 T^*\R^4$ induced by $\ubar g$ take the form
\begin{equation}
\label{EqMkMinkInner}
  \begin{pmatrix} 0 & -2 & 0 \\ -2 & 0 & 0 \\ 0 & 0 & \slg^{-1} \end{pmatrix},\quad\text{resp.}\quad
  \begin{pmatrix} 0 & 0 & 0 & 4 & 0 & 0 \\ 0 & 8 & 0 & 0 & 0 & 0 \\ 0 & 0 & 0 & 0 & -4\slg^{-1} & 0 \\ 4 & 0 & 0 & 0 & 0 & 0 \\ 0 & 0 & -4\slg^{-1} & 0 & 0 & 0 \\ 0 & 0 & 0 & 0 & 0 & \slg^{-1}_2 \end{pmatrix},
\end{equation}
where $\slg^{-1}_2$ is the fiber inner product on $S^2 T^*\Sph^2$. Tensor calculations in the splittings~\eqref{EqMkBundleSplit} require careful bookkeeping of $r$-weights.

\begin{definition}[$r$-weights]
\label{DefMkRweight}
  For $N\in\N$ and $\mu_1,\ldots,\mu_N\in\{0,1,2,3\}$, define
  \[
    s(\mu_1,\ldots,\mu_N) := \#\bigl\{i\in\{1,\ldots,N\} \colon \mu_i \in \{2,3\} \bigr\}.
  \]
  Given a tensor $T$ with components $T_{\mu_1\ldots\mu_N}{}^{\nu_1\ldots\nu_K}$, we then set
  \[
    T_{\ol{\mu_1}\ldots\ol{\mu_N}}{}^{\ol{\nu_1}\ldots\ol{\nu_K}} := r^{s(\nu_1,\ldots,\nu_K)-s(\mu_1,\ldots,\mu_N)}T_{\mu_1\ldots\mu_N}{}^{\nu_1\ldots\nu_K}.
  \]
  We similarly define the weighted Christoffel symbols $\ubar\Gamma_{\bar\mu\bar\nu}^{\bar\kappa} := r^{s(\kappa)-s(\mu,\nu)}\ubar\Gamma_{\mu\nu}^\kappa$, which are thus all zero except for
  \begin{equation}
  \label{EqMkRweightGamma}
    \ubar\Gamma^{\bar c}_{0\bar b} = \half r^{-1}\delta_b^c,\quad
    \ubar\Gamma^{\bar c}_{1\bar b} = -\half r^{-1}\delta_b^c,\quad
    \ubar\Gamma^0_{\bar a\bar b}=-r^{-1}\slg_{a b}, \quad
    \ubar\Gamma^1_{\bar a\bar b}=r^{-1}\slg_{a b}, \quad
    \ubar\Gamma^{\bar c}_{\bar a\bar b} = r^{-1}\slGamma^c_{a b}.
  \end{equation}
\end{definition}

For example, for $\omega=\omega_0\,\dd x^0+\omega_1\,\dd x^1+r\slomega$, we have $\omega_{\bar 0}=\omega_0$, $\omega_{\bar 1}=\omega_1$, and $\omega_{\bar a}=\slomega_a$.

\begin{lemma}[Form of geometric operators on Minkowski space]
\label{LemmaMk}
  We work in the bundle splittings~\eqref{EqMkBundleSplit} and in the coordinates $(t_*,\rho,\omega)=(t-|x|,\frac{1}{|x|},\frac{x}{|x|})$.
  \begin{enumerate}
  \item\label{ItMk1Ops} Acting on 1-forms, we have
    \begin{align*}
      \ubar\Box &= -2\pa_{t_*}\rho(\rho\pa_\rho-1)+\wh{\ubar\Box}(0),\qquad \wh{\ubar\Box}(0)=\rho^2\left(-(\rho\pa_\rho)^2+\rho\pa_\rho+\slDelta+\begin{pmatrix} 1 & -1 & -\sldelta \\ -1 & 1 & \sldelta \\ -2\sld & 2\sld & 1 \end{pmatrix} \right), \\
      \ubar\delta^* &= \begin{pmatrix} 0 & 0 & 0 \\ \half & 0 & 0 \\ 0 & 0 & 0 \\ 0 & 1 & 0 \\ 0 & 0 & \half \\ 0 & 0 & 0 \end{pmatrix}\pa_{t_*} + \wh{\ubar\delta^*}(0),\qquad \wh{\ubar\delta^*}(0)=\rho\begin{pmatrix} -\half\rho\pa_\rho & 0 & 0 \\ \tfrac14\rho\pa_\rho & -\tfrac14\rho\pa_\rho & 0 \\ \half\sld & 0 & -\tfrac14(\rho\pa_\rho+1) \\ 0 & \half\rho\pa_\rho & 0 \\ 0 & \half\sld & \tfrac14(\rho\pa_\rho+1) \\
      \slg & -\slg & \sldelta^* \end{pmatrix}.
    \end{align*}
    Here, $\slDelta$ is the block-diagonal operator all of whose diagonal entries are equal to the respective (tensor) Laplacian on the standard 2-sphere.
  \item\label{ItMk2Ops} Acting on symmetric 2-tensors, we have $\ul\tr=(0,-4,0,0,0,\sltr)$,
    \begin{align*}
      \ubar\Box &= -2\pa_{t_*}\rho(\rho\pa_\rho-1) + \wh{\ubar\Box}(0), \\ 
        \wh{\ubar\Box}(0)&= \rho^2\left(-(\rho\pa_\rho)^2+\rho\pa_\rho+\slDelta+\begin{pmatrix} 2 & -2 & -2\sldelta & 0 & 0 & -\half\sltr \\ -1 & 2 & \sldelta & -1 & -\sldelta & \half\sltr \\ -2\sld & 2\sld & 3 & 0 & -2 & -\sldelta \\ 0 & -2 & 0 & 2 & 2\sldelta & -\half\sltr \\ 0 & -2\sld & -2 & 2\sld & 3 & \sldelta \\ -2\slg & 4\slg & -4\sldelta^* & -2\slg & 4\sldelta^* & 2 \end{pmatrix} \right), \\
      \ubar\sfG &= \begin{pmatrix} 1 & 0 & 0 & 0 & 0 & 0 \\ 0 & 0 & 0 & 0 & 0 & \tfrac14\sltr \\ 0 & 0 & 1 & 0 & 0 & 0 \\ 0 & 0 & 0 & 1 & 0 & 0 \\ 0 & 0 & 0 & 0 & 1 & 0 \\ 0 & 2\slg & 0 & 0 & 0 & \sfG_\slg \end{pmatrix}, \\
      \ubar\delta &= \begin{pmatrix} 2 & 0 & 0 & 0 & 0 & 0 \\ 0 & 2 & 0 & 0 & 0 & 0 \\ 0 & 0 & 2 & 0 & 0 & 0 \end{pmatrix} \pa_{t_*} + \wh{\ubar\delta}(0), \\
      \wh{\ubar\delta}(0) &= \rho\begin{pmatrix} \rho\pa_\rho-2 & -\rho\pa_\rho+2 & \sldelta & 0 & 0 & \half\sltr \\ 0 & \rho\pa_\rho-2 & 0 & -\rho\pa_\rho+2 & \sldelta & -\half\sltr \\ 0 & 0 & \rho\pa_\rho-3 & 0 & -\rho\pa_\rho+3 & \sldelta \end{pmatrix}.
    \end{align*}
  \end{enumerate}
\end{lemma}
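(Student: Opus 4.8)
All the stated formulas follow by direct computation from the (weighted) Christoffel symbols \eqref{EqMkChristoffel1}--\eqref{EqMkRweightGamma}, using $\nabla_\mu\,\dd x^\kappa=-\ubar\Gamma^\kappa_{\mu\nu}\,\dd x^\nu$ and the standard coordinate expressions for the operators in question (see \S\ref{SsELin}): on $1$-forms $\omega$ and symmetric $2$-tensors $h$,
\[
  (\ubar\delta^*\omega)_{\mu\nu}=\tfrac12(\nabla_\mu\omega_\nu+\nabla_\nu\omega_\mu),\qquad
  (\ubar\delta h)_\mu=-\ubar g^{\nu\lambda}\nabla_\nu h_{\mu\lambda},\qquad
  \ubar\Box=-\ubar g^{\mu\nu}\nabla_\mu\nabla_\nu,
\]
while $\ul\tr h=\ubar g^{\mu\nu}h_{\mu\nu}$ and $\ubar\sfG=I-\tfrac12\ubar g\ul\tr$ are purely algebraic. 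The $r$-weight bookkeeping of Definition~\ref{DefMkRweight} is what makes this manageable: once all tensor components are written in the barred normalization adapted to the splitting~\eqref{EqMkBundleSplit}, every covariant derivative becomes $\rho$ times a differential operator in $t_*$ and $\omega$ whose coefficients are the $\rho$-homogeneous weighted symbols~\eqref{EqMkRweightGamma}, so that the radial behaviour decouples from the spherical directions.

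The plan is to proceed in the following order. \textbf{(a)} Read off $\ul\tr$ and $\ubar\sfG$ directly from~\eqref{EqMkMinkInner}: in barred components only the $2\,\dd x^0\,\dd x^1$ and $r^2 S^2 T^*\Sph^2$ blocks contribute to the trace, giving $\ul\tr=(0,-4,0,0,0,\sltr)$, and then $\ubar\sfG=I-\tfrac12\ubar g\ul\tr$ yields the stated matrix (the $2\slg$ and $\tfrac14\sltr$ entries coming from the off-diagonal $\dd x^0\,\dd x^1$ slot of $\ubar g$). \textbf{(b)} For $\ubar\delta^*$ on $1$-forms and $\ubar\delta$ on symmetric $2$-tensors, compute the covariant derivatives of the frame covectors $\dd x^0,\dd x^1,r\slomega$ and their symmetric products via~\eqref{EqMkRweightGamma}; the contributions of the null time derivative $\pa_{x^0}+\pa_{x^1}=\pa_{t_*}$ assemble into the constant matrices multiplying $\pa_{t_*}$, and the remaining terms—involving $\rho\pa_\rho$, the weighted symbols, and $\sld,\sldelta,\sldelta^*,\sltr$ on the $\Sph^2$ factor—assemble into $\wh{\ubar\delta^*}(0)$, resp.\ $\wh{\ubar\delta}(0)$. \textbf{(c)} For $\ubar\Box$, use $\ubar g^{-1}=-2\pa_{t_*}\otimes_s\pa_r+\pa_r^2+r^{-2}\slg^{-1}$ together with $r=\rho^{-1}$, $\pa_r=-\rho^2\pa_\rho$; the scalar part of $-\ubar g^{\mu\nu}\nabla_\mu\nabla_\nu$ gives $-2\pa_{t_*}\rho(\rho\pa_\rho-1)+\rho^2(-(\rho\pa_\rho)^2+\rho\pa_\rho+\slDelta)$, and the additional constant matrix inside $\wh{\ubar\Box}(0)$ comes from the terms quadratic in, and differentiating, the weighted Christoffel symbols, which by~\eqref{EqMkRweightGamma} enter only at order $\rho^2$ and produce both the displayed zeroth-order entries and the first-order-on-$\Sph^2$ entries $\sld,\sldelta,\sldelta^*,\slg,\sltr$ coupling the blocks of~\eqref{EqMkBundleSplit}.

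The main obstacle is not conceptual but purely one of error-free bookkeeping—above all the $6\times 6$ matrix of operators in $\wh{\ubar\Box}(0)$ on symmetric $2$-tensors. To guard against errors I would use the following internal consistency checks. First, $\ubar\Box$ must commute with the $O(3)$-action and hence preserve the scalar/vector type decomposition of Definition~\ref{DefMkY}; combined with the identities of Lemma~\ref{LemmaMkYId}, this pins down the relative coefficients of $\sld,\sldelta,\sldelta^*,\slg,\sltr$ in the matrices. Second, the linearized second Bianchi identity~\eqref{EqELin2ndBianchi}, i.e.\ $\ubar\delta\,\ubar\sfG\,(D_{\ubar g}\Ric-\Lambda)=0$ with $D_{\ubar g}\Ric-\Lambda$ assembled from~\eqref{EqELinDRic} (on Minkowski, $\sR_{\ubar g}$ vanishes since the Riemann tensor does), gives a further nontrivial relation among the computed operators. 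Third, $\wh{\ubar\Box}(0)$, $\wh{\ubar\delta}(0)$, $\wh{\ubar\delta^*}(0)$ must be compatible with the formal adjoint relations of \S\ref{SsELin} relative to the natural b-densities and the fiber inner products~\eqref{EqMkMinkInner}. Each of these is straightforward to verify on the claimed formulas.
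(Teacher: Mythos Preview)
Your plan is correct and follows essentially the same route as the paper: direct computation in the barred normalization using the weighted Christoffel symbols~\eqref{EqMkRweightGamma}. Two shortcuts in the paper are worth noting, however. First, rather than computing $\ubar\delta$ on symmetric $2$-tensors from scratch, the paper obtains it as the formal adjoint of $\ubar\delta^*$ with respect to the volume density $r^2|\dd x^0\,\dd x^1\,\dd\slg|$ and the fiber inner products~\eqref{EqMkMinkInner}; you mention adjointness only as a consistency check, but using it as the \emph{definition} here halves the work. Second, for $\ubar\Box$ on $1$-forms the paper avoids a direct second-covariant-derivative computation entirely by invoking the Weitzenb\"ock-type identity $\Box_g=2\delta_g\sfG_g\delta_g^*+\Ric(g)$; since $\Ric(\ubar g)=0$, this gives $\ubar\Box=2\ubar\delta\ubar\sfG\ubar\delta^*$, which one then evaluates by composing the already-computed operators. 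For $\ubar\Box$ on symmetric $2$-tensors the paper, like you, computes directly (first the covariant derivative $h_{\bar\mu\bar\nu;\bar\kappa}$, then the second contraction), so your plan and the paper's coincide there. Your proposed $O(3)$-equivariance and Bianchi checks are reasonable safeguards, though the paper does not record them.
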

\begin{proof}
  The Minkowski metric and dual metric take the form $\ubar g=(0,-\half,0,0,0,\slg)^T$ and $\ubar g^{-1}=(0,-2,0,0,0,\slg^{-1})$ (using the dual splitting). This implies $\ul\tr=(0,-4,0,0,0,\sltr)$. Moreover, the passage from $(x^0,x^1)$-coordinates to $(t_*,r)$- or $(t_*,\rho)$-coordinates is given by
  \begin{equation}
  \label{EqMkpa01}
    \pa_0 = \half\pa_r = -\half\rho^2\pa_\rho,\qquad
    \pa_1 = \pa_{t_*} - \half\pa_r = \pa_{t_*} + \half\rho^2\pa_\rho.
  \end{equation}
  On functions, the expressions~\eqref{EqMkpa01} and \eqref{EqMkChristoffel1}--\eqref{EqMkChristoffel2} thus give
  \begin{equation}
  \label{EqMkBox0}
  \begin{split}
    \ubar\Box&=-\ubar g^{\mu\nu}(\pa_\mu\pa_\nu-\ubar\Gamma^\kappa_{\mu\nu}\pa_\kappa) = 4\pa_0\pa_1-2 r^{-1}\pa_0+2 r^{-1}\pa_1 + r^{-2}\slDelta \\
      &= -2\rho\pa_{t_*}(\rho\pa_\rho-1) + \rho^2\bigl(-(\rho\pa_\rho)^2+\rho\pa_\rho+\slDelta\bigr).
  \end{split}
  \end{equation}
  
  We have $(\ubar\delta^*\omega)_{\bar\mu\bar\nu}=r^{-s(\mu,\nu)}\half(\pa_\mu(r^{s(\nu)}\omega_{\bar\nu})+\pa_\nu(r^{s(\mu)}\omega_{\bar\mu}))-\ubar\Gamma_{\bar\mu\bar\nu}^{\bar\kappa}\omega_{\bar\kappa}$, and therefore
  \[
    \ubar\delta^* = \begin{pmatrix} \pa_0 & 0 & 0 \\ \half\pa_1 & \half\pa_0 & 0 \\ \half r^{-1}\sld & 0 & \half(r^{-1}\pa_0 r-r^{-1}) \\ 0 & \pa_1 & 0 \\ 0 & \half r^{-1}\sld & \half(r^{-1}\pa_1 r+r^{-1}) \\ r^{-1}\slg & -r^{-1}\slg & r^{-1}\sldelta^* \end{pmatrix},
  \]
  which produces the stated expression upon using~\eqref{EqMkpa01}. The divergence operator on symmetric 2-tensors can be computed as the formal adjoint of $\ubar\delta^*$ with respect to the $L^2$ inner product with volume density $r^2|\dd\ubar x^0\dd\ubar x^1\dd\slg|$ and fiber inner products~\eqref{EqMkMinkInner}. This gives
  \[
    \ubar\delta=\begin{pmatrix} 2 r^{-2}\pa_1 r^2 & 2 r^{-2}\pa_0 r^2 & r^{-1}\sldelta & 0 & 0 & \half r^{-1}\sltr \\ 0 & 2 r^{-2}\pa_1 r^2 & 0 & 2 r^{-2}\pa_0 r^2 & r^{-1}\sldelta & -\half r^{-1}\sltr \\ 0 & 0 & 2(r^{-1}\pa_1 r-r^{-1}) & 0 & 2(r^{-1}\pa_0 r+r^{-1}) & r^{-1}\sldelta \end{pmatrix},
  \]
  and thus the stated expression upon plugging in~\eqref{EqMkpa01}.
  
  The wave operator on 1-forms can be computed by recalling that for any metric $g$ one has $\Box_g=2\delta_g\sfG_g\delta_g^*+\Ric(g)$; using that $\Ric(\ubar g)=0$ and $\Ric(\slg)=\slg$, one can thus compute $\ubar\Box=2\ubar\delta\ubar\sfG\ubar\delta^*$ using the already known expressions for $\ubar\delta$, $\ubar\sfG$, and $\ubar\delta^*$.
  
  We compute the action of the wave operator on a symmetric 2-tensor $h$ in two steps.

  \pfsubstep{Step 1.}{Covariant derivative of a symmetric 2-tensor.} By direct calculation using the formula
  \[
    h_{\bar\mu\bar\nu;\bar\kappa} = r^{-s(\mu,\nu,\kappa)}\pa_\kappa\bigl(r^{s(\mu,\nu)}h_{\bar\mu\bar\nu}\bigr) - \ubar\Gamma^{\bar\rho}_{\bar\mu\bar\kappa}h_{\bar\rho\bar\nu} - \ubar\Gamma^{\bar\rho}_{\bar\nu\bar\kappa}h_{\bar\mu\bar\rho},
  \]
  one finds that the components $h_{\bar\mu\bar\nu;\bar\kappa}$ of the covariant derivative $\ubar\nabla h=\nabla^{\ubar g}h$ are given by
    \begin{alignat*}{3}
      h_{0 0;0}&=\pa_0 h_{0 0}, &\quad h_{0 0;1}&=\pa_1 h_{0 0}, &\quad h_{0 0;\bar c}&=r^{-1}\pa_c h_{0 0}-r^{-1}h_{0\bar c}, \\
      h_{0 1;0}&=\pa_0 h_{0 1}, &\quad h_{0 1;1}&=\pa_1 h_{0 1}, &\quad h_{0 1;\bar c}&=r^{-1}\pa_c h_{0 1}+\half r^{-1}(h_{0\bar c}-h_{1\bar c}), \\
      h_{0\bar b;0}&=\pa_0 h_{0\bar b}, &\quad h_{0\bar b;1}&=\pa_1 h_{0\bar b}, &\quad h_{0\bar b;\bar c}&=r^{-1}\slnabla_c h_{0\bar b}+r^{-1}(h_{0 0}-h_{0 1})\slg_{b c}-\half r^{-1}h_{\bar b\bar c}, \\
      h_{1 1;0}&=\pa_0 h_{1 1}, &\quad h_{1 1;1}&=\pa_1 h_{1 1}, &\quad h_{1 1;\bar c}&=r^{-1}\pa_c h_{1 1}+r^{-1}h_{1\bar c}, \\
      h_{1\bar b;0}&=\pa_0 h_{1\bar b}, &\quad h_{1\bar b;1}&=\pa_1 h_{1\bar b}, &\quad h_{1\bar b;\bar c}&=r^{-1}\slnabla_c h_{1\bar b} + r^{-1}(h_{0 1}-h_{1 1})\slg_{b c}+\half r^{-1}h_{\bar b\bar c}, \\
      h_{\bar a\bar b;0}&=\pa_0 h_{\bar a\bar b}, &\quad h_{\bar a\bar b;1}&=\pa_1 h_{\bar a\bar b}, &\quad h_{\bar a\bar b;\bar c}&=r^{-1}\slnabla_c h_{\bar a\bar b}+r^{-1}(h_{0\bar a}-h_{1\bar a})\slg_{b c} + r^{-1}(h_{0\bar b}-h_{1\bar b})\slg_{a c}.
    \end{alignat*}
  
  \pfsubstep{Step 2.}{Wave operator.} In order to compute $\ubar\Box h$, we use the formula
  \begin{align}
    (\ubar\Box h)_{\bar\mu\bar\nu} &= -g^{\bar\kappa\bar\lambda}h_{\bar\mu\bar\nu;\bar\kappa\bar\lambda} \nonumber\\
      &= -g^{\bar\kappa\bar\lambda}\bigl[r^{-s(\mu,\nu,\kappa,\lambda)}\pa_\lambda\bigl(r^{s(\mu,\nu,\kappa)}h_{\bar\mu\bar\nu;\bar\kappa}\bigr) - \ubar\Gamma^{\bar\rho}_{\bar\kappa\bar\lambda} h_{\bar\mu\bar\nu;\bar\rho} - \ubar\Gamma^{\bar\rho}_{\bar\mu\bar\lambda}h_{\bar\rho\bar\nu;\bar\kappa} - \ubar\Gamma^{\bar\rho}_{\bar\nu\bar\lambda}h_{\bar\mu\bar\rho;\bar\kappa} \bigr] \nonumber\\
  \label{EqMkBox0Formula}
  \begin{split}
      &= 2 \Bigl( r^{-s(\mu,\nu)}\pa_0(r^{s(\mu,\nu)}h_{\bar\mu\bar\nu;1}) + r^{-s(\mu,\nu)}\pa_1(r^{s(\mu,\nu)}h_{\bar\mu\bar\nu;0}) - r^{-1}h_{\bar\mu\bar\nu;0} + r^{-1}h_{\bar\mu\bar\nu;1} \\
      &\qquad\qquad - \ubar\Gamma^{\bar\rho}_{1 \bar\mu}h_{\bar\rho\bar\nu;0}-\ubar\Gamma^{\bar\rho}_{0 \bar\mu}h_{\bar\rho\bar\nu;1} - \ubar\Gamma^{\bar\rho}_{1\bar\nu}h_{\bar\mu\bar\rho;0}-\ubar\Gamma^{\bar\rho}_{0\bar\nu}h_{\bar\mu\bar\rho;1}\Bigr) + (\cS h)_{\bar\mu\bar\nu},
  \end{split}
  \end{align}
  where we define
  \[
    (\cS h)_{\bar\mu\bar\nu} = - \slg^{c d}r^{-1}\pa_c h_{\bar\mu\bar\nu;\bar d} + \slg^{c d}(\ubar\Gamma^{\bar\rho}_{\bar\mu\bar c}h_{\bar\rho\bar\nu;\bar d}+\ubar\Gamma^{\bar\rho}_{\bar\nu\bar c}h_{\bar\mu\bar\rho;\bar d}+\ubar\Gamma^{\bar e}_{\bar c\bar d}h_{\bar\mu\bar\nu;\bar e}).
  \]
  Since $\ubar\Gamma^{\bar\rho}_{1\bar\mu}=-\half r^{-1}s(\mu)\delta_\mu^\rho=-\ubar\Gamma^{\bar\rho}_{0\bar\mu}$, the derivatives falling on $r^{s(\mu,\nu)}$ in the big parenthesis cancel the terms from the Christoffel symbols; the big parenthesis in~\eqref{EqMkBox0Formula} thus evaluates to
  \begin{equation}
  \label{EqMkBox201}
    (4\pa_0\pa_1-2 r^{-1}\pa_0+2 r^{-1}\pa_1)h_{\bar\mu\bar\nu}=-2\rho\pa_{t_*}(\rho\pa_\rho-1)+\rho^2\bigl(-(\rho\pa_\rho)^2+\rho\pa_\rho\bigr)h_{\bar\mu\bar\nu},
  \end{equation}
  cf.\ \eqref{EqMkBox0}. It remains to evaluate the action of $\cS$. Using~\eqref{EqMkRweightGamma}, we compute
  \begin{align*}
    (\cS h)_{0 0} &=-\slg^{c d}r^{-1}\slnabla_c h_{0 0;\bar d} + \slg^{c d}r^{-1}h_{0\bar c;\bar d}, \\
    (\cS h)_{0 1} &=-\slg^{c d}r^{-1}\slnabla_c h_{0 1;\bar d} + \half\slg^{c d}r^{-1}(h_{1\bar c;\bar d}-h_{0\bar c;\bar d}), \\
    (\cS h)_{0\bar b} &=-\slg^{c d}r^{-1}\slnabla_c h_{0\bar b;\bar d} + r^{-1}(h_{0 1;\bar b}-h_{0 0;\bar b})+\half\slg^{c d}r^{-1}h_{\bar b\bar c;\bar d}, \\
    (\cS h)_{1 1} &=-\slg^{c d}r^{-1}\slnabla_c h_{1 1;\bar d} - \slg^{c d}r^{-1}h_{1\bar c;\bar d}, \\
    (\cS h)_{1\bar b} &=-\slg^{c d}r^{-1}\slnabla_c h_{1\bar b;\bar d} + r^{-1}(h_{1 1;\bar b}-h_{0 1;\bar b}) - \half\slg^{c d}r^{-1}h_{\bar b\bar c;\bar d}, \\
    (\cS h)_{\bar a\bar b} &=-\slg^{c d}r^{-1}\slnabla_c h_{\bar a\bar b;\bar d} + r^{-1}(h_{1\bar a;\bar b}+h_{1\bar b;\bar a}-h_{0\bar a;\bar b}-h_{0\bar b;\bar a}).
  \end{align*}
  Using the expressions for $h_{\bar\mu\bar\nu;\bar\kappa}$ given above, and denoting spherical indices by `$\bullet$', this gives
  \begin{align*}
    (r^2\cS h)_{0 0} &= \slDelta h_{0 0} + 2(h_{0 0}-h_{0 1}) - 2\sldelta h_{0\bullet} - \half\sltr h_{\bullet\bullet}, \\
    (r^2\cS h)_{0 1} &= \slDelta h_{0 1} - h_{0 0}+2 h_{0 1}-h_{1 1} + \sldelta h_{0\bullet} - \sldelta h_{1\bullet} + \half\sltr h_{\bullet\bullet}, \\
    (r^2\cS h)_{0\bullet} &= \slDelta h_{0\bullet} - 2\,\sld h_{0 0} + 2\,\sld h_{0 1} + 3 h_{0\bullet} - 2 h_{1\bullet} - \sldelta h_{\bullet\bullet}, \\
    (r^2\cS h)_{1 1} &= \slDelta h_{1 1} - 2 h_{0 1}+2 h_{1 1} + 2\sldelta h_{1\bullet} - \half\sltr h_{\bullet\bullet}, \\
    (r^2\cS h)_{1\bullet} &= \slDelta h_{1\bullet} - 2\,\sld h_{0 1}+2\,\sld h_{1 1}-2 h_{0\bullet}+3 h_{1\bullet}+\sldelta h_{\bullet\bullet}, \\
    (r^2\cS h)_{\bullet\bullet} &= \slDelta h_{\bullet\bullet} + (-2 h_{0 0}+4 h_{0 1}-2 h_{1 1})\slg - 4\sldelta^*h_{0\bullet} + 4\sldelta^*h_{1\bullet}+2 h_{\bullet\bullet}.
  \end{align*}
  Together with~\eqref{EqMkBox201}, this verifies the form of $\ubar\Box$ on symmetric 2-tensors. The proof of Lemma~\ref{LemmaMk} is complete.
\end{proof}

\begin{cor}[Indicial family of linearized Ricci at zero frequency]
\label{CorMkDiffDRic}
  Regarding $\R^3=t^{-1}(0)\subset\R^4$, define $\wh{D_{\ubar g}\Ric}(0)\in\Diff^2(\R^3;S^2 T^*_{\R^3}\R^4)$ to be the restriction of $D_{\ubar g}\Ric$ to $t$-translation invariant symmetric 2-tensors on $\R^4$. In the coordinates $\rho=r^{-1}=|x|^{-1}$, $\omega=\frac{x}{|x|}$, and in the bundle splitting~\eqref{EqMkYSplit}, the operator $\rho^{-2}\wh{D_{\ubar g}\Ric}(0)$ is then dilation-invariant on $[0,\infty)_\rho\times\Sph^2_\omega$, and its indicial family\footnote{This is a family of second order differential operators on $\Sph^2$ acting on sections of $\ul\R\oplus\ul\R\oplus T^*\Sph^2\oplus\ul\R\oplus T^*\Sph^2\oplus S^2 T^*\Sph^2$. Conceptually more accurately, it is a family of operators on $\pa(\ol{\R^3}\setminus\{0\})=\pa\ol{\R^3}$ acting on sections of the restriction to $\pa\ol{\R^3}$ of the bundle on $\ol{\R^3}\setminus\{0\}$ defined by continuous extension of~\eqref{EqMkBundleSplit}.} $N(\rho^{-2}\wh{D_{\ubar g}\Ric}(0),\lambda)=\rho^{-\lambda-2}\wh{D_{\ubar g}\Ric}(0)\rho^\lambda$ (i.e.\ defined with respect to $\rho$) satisfies
  \begin{align*}
    &N\bigl(\rho^{-2}\wh{D_{\ubar g}\Ric}(0),\lambda\bigr) \\
    &\ =\begin{reduce2}\begin{pmatrix}
              \frac12\slDelta & \lambda & \frac12(\lambda{-}1)\sldelta & 0 & 0 & {-}\frac18\lambda(\lambda{-}1)\sltr \\
              {-}\frac14\lambda(\lambda{-}1) & {-}\frac12\lambda(\lambda{+}1){+}\frac12\slDelta & {-}\frac14(\lambda{-}1)\sldelta & {-}\frac14\lambda(\lambda{-}1) & \frac14(\lambda{-}1)\sldelta & \frac18\lambda(\lambda{-}1)\sltr \\
              {-}\frac12\lambda\sld & {-}\frac12(\lambda{+}2)\sld & {-}\frac14(\lambda^2{-}\lambda{+}2){+}\frac12\sldelta\sld & 0 & {-}\frac14(\lambda{-}2)(\lambda{+}1) & \frac14\lambda(\sldelta{+}\sld\sltr) \\
              0 & \lambda & 0 & \frac12\slDelta & {-}\frac12(\lambda{-}1)\sldelta & {-}\frac18\lambda(\lambda{-}1)\sltr \\
              0 & \frac12(\lambda{+}2)\sld & {-}\frac14(\lambda{-}2)(\lambda{+}1) & \frac12\lambda\sld & {-}\frac14(\lambda^2{-}\lambda{+}2){+}\frac12\sldelta\sld & {-}\frac14\lambda(\sldelta{+}\sld\sltr) \\
              {-}(\lambda{-}1)\slg & {-}2\slg{+}2\sldelta^*\sld & {-}(\lambda{-}1)\sldelta^*{-}\slg\sldelta & {-}(\lambda{-}1)\slg & (\lambda{-}1)\sldelta^*{+}\slg\sldelta & {-}\frac12(\lambda{+}1)(\lambda{-}2){+}\frac12\slDelta \\ &&&&& {+}\frac12(\lambda{-}2)\slg\sltr{-}\sldelta^*\sldelta\sfG_\slg
            \end{pmatrix}\end{reduce2}, \\
    &\pa_\lambda N\bigl(\rho^{-2}\wh{D_{\ubar g}\Ric}(0),\lambda\bigr) \\
    &\ =\begin{pmatrix}
        0 & 1 & \half\sldelta & 0 & 0 & {-}\tfrac18(2\lambda{-}1)\sltr \\
        {-}\tfrac14(2\lambda{-}1) & {-}\half(2\lambda{+}1) & {-}\tfrac14\sldelta & {-}\tfrac14(2\lambda{-}1) & \tfrac14\sldelta & \tfrac18(2\lambda{-}1)\sltr \\
        {-}\half\sld & {-}\half\sld & {-}\tfrac14(2\lambda{-}1) & 0 & {-}\tfrac14(2\lambda{-}1) & \tfrac14(\sldelta{+}\sld\sltr) \\
        0 & 1 & 0 & 0 & {-}\half\sldelta & {-}\tfrac18(2\lambda{-}1)\sltr \\
        0 & \half\sld & {-}\tfrac14(2\lambda{-}1) & \half\sld & {-}\tfrac14(2\lambda{-}1) & {-}\tfrac14(\sldelta{+}\sld\sltr) \\
        {-}\slg & 0 & {-}\sldelta^* & {-}\slg & \sldelta^* & {-}\half(2\lambda{-}1{-}\slg\sltr)
      \end{pmatrix}, \\
    &\pa_\lambda N\bigl(\rho^{-1}[D_{\ubar g}\Ric,t_*]\ftrans(0),\lambda\bigr) \\
    &\ =\begin{pmatrix}
              0 & 0 & 0 & 0 & 0 & 0 \\
              -1 & -1 & 0 & 0 & 0 & \tfrac14\sltr \\
              0 & 0 & -\half & 0 & 0 & 0 \\
              0 & 0 & 0 & 0 & 0 & -\half\sltr \\
              0 & 0 & -1 & 0 & -\half & 0 \\
              0 & 0 & 0 & 0 & 0 & -1
            \end{pmatrix}.
  \end{align*}
\end{cor}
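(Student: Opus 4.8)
The plan is to reduce Corollary~\ref{CorMkDiffDRic} entirely to the formulas already assembled in Lemma~\ref{LemmaMk}, so that essentially no new geometry is needed---only bookkeeping. First I would recall the standard identity
\[
  D_{\ubar g}\Ric-\Lambda = \tfrac12\ubar\Box - \ubar\delta^*\ubar\delta\ubar\sfG + \sR_{\ubar g} - \Lambda
\]
from~\eqref{EqELinDRic}, specialize to $\Lambda=0$, and observe that $\sR_{\ubar g}=0$ since $\Riem(\ubar g)=0$ and $\Ric(\ubar g)=0$. Thus $D_{\ubar g}\Ric=\tfrac12\ubar\Box-\ubar\delta^*\ubar\delta\ubar\sfG$, and the three factors $\ubar\Box$ (on symmetric $2$-tensors), $\ubar\sfG$, $\ubar\delta$, $\ubar\delta^*$ are given explicitly by Lemma~\ref{LemmaMk}, each split into a $\pa_{t_*}$-part and a $t_*$-independent (i.e.\ zero-energy) part $\wh{(\cdot)}(0)$. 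To obtain $\wh{D_{\ubar g}\Ric}(0)$, I would restrict to $t$-translation-invariant tensors: since $t=\tfrac12(x^0+x^1)$ and the coordinates near $r>0$ are $(t_*,\rho,\omega)$ with $t_*=x^1$, a $t$-invariant tensor is one on which $\pa_{t_*}$ acts as $\pa_r=-\rho^2\pa_\rho$; equivalently, I compose the full operator with the substitution $\pa_{t_*}\mapsto-\rho^2\pa_\rho$ acting on $r$-homogeneous tensors. Concretely, $\wh{D_{\ubar g}\Ric}(0) = \tfrac12\wh{\ubar\Box}(0) - \wh{\ubar\delta^*}(0)\,\wh{\ubar\delta}(0)\,\ubar\sfG$ plus the cross-terms coming from the $\pa_{t_*}$-coefficient matrices of $\ubar\delta^*$ and $\ubar\delta$ under the replacement $\pa_{t_*}\mapsto-\rho^2\pa_\rho$; I would carry this out matrix-block by matrix-block in the splitting~\eqref{EqMkBundleSplit}.

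Next, I would extract the $\rho^2$-prefactor. From Lemma~\ref{LemmaMk} every entry of $\wh{\ubar\Box}(0)$, $\wh{\ubar\delta^*}(0)$, and $\wh{\ubar\delta}(0)$ carries an explicit power of $\rho$ (one each for the first-order operators, two for $\wh{\ubar\Box}(0)$), so the composition $\wh{D_{\ubar g}\Ric}(0)$ lies in $\rho^2$ times a dilation-invariant operator on $[0,\infty)_\rho\times\Sph^2_\omega$, as claimed. The indicial family is then obtained by the mechanical substitution $\rho\pa_\rho\mapsto\lambda$ in $\rho^{-2}\wh{D_{\ubar g}\Ric}(0)$, using $N(P,\lambda)=\rho^{-\lambda}P\rho^\lambda$ for dilation-invariant $P$ and the fact that $N(PQ,\lambda)=N(P,\lambda)N(Q,\lambda)$ for the composition (with appropriate shifts absorbed into the $\rho$-weights, which I would track carefully). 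This reduces the first displayed formula to evaluating products of the $6\times 6$ matrices of differential operators on $\Sph^2$ from Lemma~\ref{LemmaMk}(\usref{ItMk2Ops})---with $\rho\pa_\rho$ replaced by $\lambda$---and adding them according to $\tfrac12\wh{\ubar\Box}(0)^{\rm indicial} - \wh{\ubar\delta^*}(0)^{\rm indicial}\,\wh{\ubar\delta}(0)^{\rm indicial}\,\ubar\sfG - (\text{cross-terms})$. The operators $\sldelta\sld$, $\sldelta^*\sld$, $\sldelta^*\sldelta\sfG_\slg$ appearing in the answer arise precisely from these matrix products and are left in that form (one can simplify them on pure-type components via Lemma~\ref{LemmaMkYId}, but that is not needed here).

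For the second formula, $\pa_\lambda N(\rho^{-2}\wh{D_{\ubar g}\Ric}(0),\lambda)$, I would simply differentiate the matrix entries of the first formula in $\lambda$; since each entry is a polynomial of degree $\le 2$ in $\lambda$ with operator coefficients, this is immediate. For the third formula, the commutator $[D_{\ubar g}\Ric,t_*]$ has order one (the top-order term of $D_{\ubar g}\Ric$ is $\tfrac12\ubar\Box$, whose principal symbol is scalar, so commuting with $t_*=x^1$ picks out $\partial_{x^0}=\tfrac12\pa_r=-\tfrac12\rho^2\pa_\rho$ contributions plus the first-order $\pa_{t_*}$-coefficient matrices of $\ubar\delta^*$ and $\ubar\delta$); equivalently $[D_{\ubar g}\Ric,t_*]\ftrans(0)$ is read off from the $\pa_{t_*}$-linear part of the operator, i.e.\ from the coefficient matrices multiplying $\pa_{t_*}$ in $\tfrac12\ubar\Box$, $\ubar\delta^*$, and $\ubar\delta$ in Lemma~\ref{LemmaMk}. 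Taking the $\rho^{-1}$-rescaled indicial family and differentiating in $\lambda$ then yields the stated constant matrix. The main obstacle is purely computational: correctly composing the $6\times 6$ operator-valued matrices while keeping the $r$-weights (encoded in Definition~\ref{DefMkRweight}) and the $\pa_{t_*}\mapsto-\rho^2\pa_\rho$ substitution consistent across all blocks---there is genuine risk of sign and weight errors in the mixed scalar/vector spherical blocks---but no conceptual difficulty, and the $O(3)$-type decomposition of~\S\ref{SsMkY} provides a strong consistency check (each pure-type sector must close under the resulting operator).
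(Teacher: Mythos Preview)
Your overall strategy---reduce to the decomposition $D_{\ubar g}\Ric=\tfrac12\ubar\Box-\ubar\delta^*\ubar\delta\,\ubar\sfG$, pass to indicial families via $\rho\pa_\rho\mapsto\lambda$, and differentiate---is exactly the paper's, and the parts concerning the second and third displayed formulas are essentially right. But there is a genuine error in how you propose to extract the zero energy operator.

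You claim that ``a $t$-invariant tensor is one on which $\pa_{t_*}$ acts as $\pa_r=-\rho^2\pa_\rho$,'' and that this produces cross-terms from the $\pa_{t_*}$-coefficient matrices of $\ubar\delta^*$ and $\ubar\delta$. This is incorrect. The coordinates in Lemma~\ref{LemmaMk} are $(t_*,\rho,\omega)$ with $t_*=t-r$ and $\rho=r^{-1}$; since $r$ depends only on $x$, the coordinate vector field $\pa_{t_*}$ at fixed $\rho,\omega$ is precisely $\pa_t$ at fixed $r,\omega$. Hence a tensor whose components (in the frame~\eqref{EqMkBundleSplit}) are $t$-independent has $t_*$-independent components, and $\wh{D_{\ubar g}\Ric}(0)$ is obtained simply by \emph{dropping} all $\pa_{t_*}$-terms. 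There are no cross-terms: since all operators in Lemma~\ref{LemmaMk} are stationary (their $\pa_{t_*}$-coefficients and their hatted parts are both $t_*$-independent), the zero energy operator of the composition is the composition of the zero energy operators,
\[
  \wh{D_{\ubar g}\Ric}(0)=\tfrac12\wh{\ubar\Box}(0)-\wh{\ubar\delta^*}(0)\,\wh{\ubar\delta}(0)\,\ubar\sfG,
\]
and thus (tracking the $\rho$-weight shift you allude to)
\[
  N\bigl(\rho^{-2}\wh{D_{\ubar g}\Ric}(0),\lambda\bigr)=\tfrac12 N\bigl(\rho^{-2}\wh{\ubar\Box}(0),\lambda\bigr)-N\bigl(\rho^{-1}\wh{\ubar\delta^*}(0),\lambda+1\bigr)\,N\bigl(\rho^{-1}\wh{\ubar\delta}(0),\lambda\bigr)\,\ubar\sfG.
\]
If you had carried out your proposed substitution $\pa_{t_*}\mapsto-\rho^2\pa_\rho$ you would have obtained a different (wrong) indicial family. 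A smaller slip: commuting with $t_*=x^1$ picks out the $\pa_{x^1}=\pa_{t_*}$-coefficient, not $\pa_{x^0}$; your later sentence (``read off from the $\pa_{t_*}$-linear part'') is the correct one, and the paper writes this out as $[D_{\ubar g}\Ric,t_*]\ftrans(0)=\tfrac12\wh{[\ubar\Box,t_*]}(0)-[\ubar\delta^*,t_*]\wh{\ubar\delta}(0)\ubar\sfG-\wh{\ubar\delta^*}(0)[\ubar\delta,t_*]\ubar\sfG$. With these corrections, your proof coincides with the paper's.
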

\begin{proof}
  One obtains $\wh{D_{\ubar g}\Ric}(0)$ from $D_{\ubar g}\Ric$ (expressed in $t_*,\rho,\omega$ coordinates) by dropping all derivatives in $t_*$. We then have
  \[
    N\bigl(\rho^{-2}\wh{D_{\ubar g}\Ric}(0),\lambda\bigr) = \frac12 N\bigl(\rho^{-2}\wh{\ubar\Box}(0),\lambda\bigr) - N\bigl(\rho^{-1}\wh{\ubar\delta^*}(0),\lambda+1\bigr) N\bigl(\rho^{-1}\wh{\ubar\delta}(0),\lambda\bigr) \ubar\sfG.
  \]
  The first half of the Lemma then follows by differentiation in $\lambda$ and evaluation using the expressions in Lemma~\ref{LemmaMk}. To prove the second half, we note that
  \begin{equation}
  \label{EqMkDiffDRicComm}
    [D_{\ubar g}\Ric,t_*]\ftrans(0) = \frac12\wh{[\ubar\Box,t_*]}(0) - [\ubar\delta^*,t_*]\wh{\ubar\delta}(0)\ubar\sfG - \wh{\ubar\delta^*}(0)[\ubar\delta,t_*]\ubar\sfG.
  \end{equation}
  Since $[\ubar\delta,t_*]$ is homogeneous of degree $0$, this implies
  \begin{align*}
    &\pa_\lambda N\bigl(\rho^{-1}[D_{\ubar g}\Ric,t_*]\ftrans(0),\lambda\bigr) \\
    &\quad = \frac12\pa_\lambda N\bigl(\rho^{-1}\wh{[\ubar\Box,t_*]}(0),\lambda\bigr) - [\ubar\delta^*,t_*]\circ\pa_\lambda N(\rho^{-1}\wh{\ubar\delta}(0),\lambda)\circ\ubar\sfG \\
    &\quad \hspace{11.4em} - \pa_\lambda N(\rho^{-1}\wh{\ubar\delta^*}(0),\lambda)\circ[\ubar\delta,t_*]\circ\ubar\sfG.
  \end{align*}
  A computation using Lemma~\ref{LemmaMk} gives the stated result.
\end{proof}

Using polar coordinates $x=r\omega$, $r>0$, $\omega\in\Sph^2$, we can decompose tensors on $\R^3\setminus\{0\}$ into spherical harmonics. In the case of functions $u=u(r,\omega)$, this amounts to projecting $u(r,-)$ to $\scalspace_l$ for each $r>0$ and $l\in\N_0$; in this manner, we can write $u\in\CI(\R^3\setminus\{0\})$ as a (rapidly converging) series $\sum_{l\in\N_0} u_l(r)$ where $u_l\in\CI((0,\infty);\scalspace_l)$. In the case of 1-forms and symmetric 2-tensors, we split
\begin{equation}
\label{EqMkYSplit}
  T^*\R^3 = \la\dd r\ra \oplus r T^*\Sph^2,\qquad
  S^2 T^*\R^3 = \la\dd r^2\ra \oplus (2\dd r\otimes_s r T^*\Sph^2) \oplus r^2 S^2 T^*\Sph^2
\end{equation}
over $r>0$. (We note that the splittings~\eqref{EqMkYSplit} extend to smooth splittings of $\Tsc^*\ol{\R^3}$ and $S^2\,\Tsc^*\ol{\R^3}$ away from $r=0$.) A 1-form $\omega\in\CI(\R^3\setminus\{0\};T^*\R^3)$ is then given by $\omega=(u,r\slomega)$ where $u\in\CI(\R^3\setminus\{0\})$ and $\slomega\in\CI((0,\infty);\CI(\Sph^2;T^*\Sph^2))$, and we may split $u$ and similarly $\slomega$ into its pure type components as above. The symmetric 2-tensor case is completely analogous.

Further generalizing this to time-translation-invariant sections of $T^*\R^4$ and $S^2 T^*\R^4$ in the bundle splitting~\eqref{EqMkBundleSplit}, we can split the restriction to a single coordinate 2-sphere (say $\{t=0,|x|=1\}$) into pure types as follows:
\begin{subequations}
\begin{enumerate}
\item scalar type $0$: for $a,b,c,d\in\C$,
  \begin{equation}
  \label{EqMkYSplits0}
    \text{1-forms $(a,b,0)^T$,\qquad symmetric 2-tensors $(a,b,0,c,0,d\slg)^T$}
  \end{equation}
  (i.e.\ $a\,\dd x^0+b\,\dd x^1$ and $a(\dd x^0)^2+2 b\,\dd x^0\,\dd x^1+c(\dd x^1)^2+d\cdot r^2\slg$);
\item scalar type $1$:  for $a,b,c,d,e,f\in\C$ and $\scal\in\scalspace_1$,
  \begin{equation}
  \label{EqMkYSplits1}
    \text{1-forms $(a\scal,b\scal,c\,\sld\scal)$,\qquad symmetric 2-tensors $(a\scal,b\scal,c\,\sld\scal,d\scal,e\,\sld\scal,f\scal\slg)$;}
  \end{equation}
\item scalar type $l\geq 2$: for $a,b,c,d,e,f,h\in\C$ and $\scal\in\scalspace_l$,
  \begin{equation}
  \label{EqMkYSplitsl}
    \text{1-forms $(a\scal,b\scal,c\,\sld\scal)$,\qquad symmetric 2-tensors $(a\scal,b\scal,c\,\sld\scal,d\scal,e\,\sld\scal,f\scal\slg+h\sldelta_0^*\sld\scal)$;}
  \end{equation}
\item vector type $1$:  for $a,b\in\C$ and $\vect\in\vectspace_1$,
  \begin{equation}
  \label{EqMkYSplitv1}
    \text{1-forms $(0,0,a\vect)^T$,\qquad symmetric 2-tensors $(0,0,a\vect,0,b\vect,0)^T$;}
  \end{equation}
\item vector type $l\geq 2$: for $a,b,c\in\C$ and $\vect\in\vectspace_l$,
  \begin{equation}
  \label{EqMkYSplitvl}
    \text{1-forms $(0,0,a\vect)^T$,\qquad symmetric 2-tensors $(0,0,a\vect,0,b\vect,c\sldelta^*\vect)^T$.}
  \end{equation}
\end{enumerate}
\end{subequations}

\begin{rmk}[Action on pure type tensors]
\label{RmkMkYAction}
  Importantly, the operators in Lemma~\ref{LemmaMk} and Corollary~\ref{CorMkDiffDRic} not only preserve pure types; they also map pure type tensors of the above forms \emph{with fixed} $\scal$ or $\vect$ into pure type tensors of the same type \emph{with the same $\scal$, $\vect$}. Therefore, for example, the restriction of $\wh{\ubar\delta^*}(0)$ to scalar type $l\geq 2$ tensors is given by a $7\times 3$ matrix of differential operators in $r$ (or $\rho=r^{-1}$), formally obtained from Lemma~\ref{LemmaMk} by replacing $\sld$ by $1$, $\slg$ by $(1,0)^T$, and $\sldelta^*$ by $(-\half l(l+1),1)^T$.
\end{rmk}

We shall use such decompositions mostly to analyze the behavior of tensors on $\R^3$ (arising as stationary tensors on $\R^4$) as $r\to\infty$ or $r\to 0$.

\begin{definition}[Leading order type]
\label{DefMkYType}
  Let $\rho=r^{-1}$ on $\R^3\setminus\{0\}$. Let $E\to\ol{\R^3}\setminus\{0\}$ be the pullback along the projection $(\rho,\omega)\mapsto\omega$ of a direct sum of the bundles $\ul\R,T^*\Sph^2,S^2 T^*\Sph^2\to\Sph^2$. Let $u\in\CI(\ol{\R^3}\setminus\{0\};E)$. Then we say that $u$ is of scalar (resp.\ vector) type $l$ modulo $\rho^k\CI$ if there exists $u_0\in\CI(\ol{\R^3}\setminus\{0\};E)$ which on each coordinate sphere is of scalar (resp.\ vector) type $l$ and for which $u-u_0\in\rho^k\CI$. If this holds for $k=1$, we say that $u$ is of scalar (resp.\ vector) type \emph{to leading order at infinity}.
\end{definition}

\section{Linearized Ricci curvature operator on Minkowski space}
\label{SRic}

We continue using the notation $\ubar g=-\dd t^2+\dd r^2+r^2\slg$ from~\S\ref{SsMk}. In order to prepare solving the linearized Einstein equations with right hand sides which are polyhomogeneous at $\pa M_\circ$ or $\pa\hat M$ (cf.\ \S\ref{SssIPfF}), we now study the mapping properties of the indicial family
\begin{equation}
\label{EqRic}
  N\bigl(r^2\wh{D_{\ubar g}\Ric}(0),\lambda\bigr),\qquad \lambda\in\C,
\end{equation}
defined with respect to $r$, i.e.\ $N(r^2\wh{D_{\ubar g}\Ric}(0),\lambda)=r^{2-\lambda}\wh{D_{\ubar g}\Ric}(0)r^\lambda$ acting on stationary (in $t$) and dilation-invariant (in $r$ relative to the splitting~\eqref{EqMkBundleSplit}) symmetric 2-tensors.

\begin{rmk}[Indicial family with respect to $\rho=r^{-1}$]
\label{RmkRicRho}
  If one defines the indicial family using $\rho$ (as done in Corollary~\ref{CorMkDiffDRic}), then the parameter $\lambda$ gets replaced by $-\lambda$.
\end{rmk}

In the bundle splittings~\eqref{EqMkBundleSplit}, the operator $N\bigl(r^2\wh{D_{\ubar g}\Ric}(0),\lambda)$ is of class $\Diff^2(\Sph^2;\ul\R\oplus\ul\R\oplus T^*\Sph^2\oplus\ul\R\oplus T^*\Sph^2\oplus S^2 T^*\Sph^2)$. More accurately, it is the indicial family of $r^2\wh{D_{\ubar g}\Ric}(0)$ at the front face $\ff\cong\Sph^2$ of $[\R^3;\{0\}]$, and thus acts on sections of the pullback of $S^2 T^*_{(0,0)}\R^4\to\{0\}\subset\R^3$ along the blow-down map $\ubar\upbeta\colon[\R^3;\{0\}]\to\R^3$. We mostly consider its action on pure type tensors, and write
\[
  N_\bullet(r^2\wh{D_{\ubar g}\Ric}(0),\lambda)
\]
for the restriction of $N(r^2\wh{D_{\ubar g}\Ric}(0),\lambda)$ to tensors of pure type $\bullet\in\{\rms l,\rmv l\}$; expressed in terms of the bases~\eqref{EqMkYSplits0}--\eqref{EqMkYSplitvl}, this is a square matrix whose coefficients are quadratic functions of $\lambda\in\C$. We similarly write $N_\bullet(r\wh{\ubar\delta^*}(0),\lambda)$ and $N_\bullet(r\wh{\ubar\delta\ubar\sfG}(0),\lambda)$. Finally, we write
\[
  \CI_\bullet(\ff;\ubar\upbeta^*T^*_{(0,0)}\R^4) \subset \CI(\ff;\ubar\upbeta^*T^*_{(0,0)}\R^4)
\]
for the (finite-dimensional) subspace consisting of pure type $\bullet$ 1-forms; similarly for symmetric 2-tensors.

Concretely, we shall characterize the extent to which the range of~\eqref{EqRic} is smaller than the kernel of $N(r\wh{\ubar\delta\ubar\sfG}(0),\lambda-2)$ (cf.\ the linearized second Bianchi identity $\ubar\delta\ubar\sfG\circ D_{\ubar g}\Ric=0$), or the kernel is bigger than the range of $N(r\wh{\ubar\delta^*}(0),\lambda+1)$ (cf.\ the identity $D_{\ubar g}\Ric\circ\ubar\delta^*=0$). In other words, we will study the extent to which the complex
\begin{equation}
\label{EqRicSeq}
\begin{split}
  0 &\to \CI(\ff;\ubar\upbeta^*T^*_{(0,0)}\R^4) \xra{N(r\wh{\ubar\delta^*}(0),\lambda+1)} \CI(\ff;\ubar\upbeta^*S^2 T^*_{(0,0)}\R^4) \\
    &\hspace{2em} \xra{N(r^2\wh{D_{\ubar g}\Ric}(0),\lambda)} \CI(\ff;\ubar\upbeta^*S^2 T^*_{(0,0)}\R^4) \xra{N(r\wh{\ubar\delta\ubar\sfG}(0),\lambda-2)} \CI(\ff;\ubar\upbeta^*T^*_{(0,0)}\R^4) \to 0
\end{split}
\end{equation}
is exact, and precisely characterize the failure of exactness when it occurs.

For notational brevity, we shall henceforth write $\sfG=\ubar\sfG$ and
\begin{equation}
\label{EqRicNotation}
  N(D\Ric,\lambda) := N(r^2\wh{D_{\ubar g}\Ric}(0),\lambda),\quad
  N(\delta^*,\lambda) := N(r\wh{\ubar\delta^*}(0),\lambda),\quad
  N(\delta\sfG,\lambda) := N(r\wh{\ubar\delta\ubar\sfG}(0),\lambda).
\end{equation}
Note that the adjoint of $D_{\ubar g}\Ric\circ\ubar\delta^*=0$ is $0=\ubar\delta\circ\ubar\sfG D_{\ubar g}\Ric\,\ubar\sfG=\ubar\delta\ubar\sfG\circ D_{\ubar g}\Ric\,\ubar\sfG$. Thus, the first two (nontrivial) arrows in~\eqref{EqRicSeq} are essentially the formal adjoints of the final two arrows. This implies a certain duality between the two middle homology groups. Moreover, we shall see that $\ker N(D\Ric,\lambda)/\ran N(\delta^*,\lambda+1)$ can be nontrivial, which means that we can have $D_{\ubar g}\Ric(r^\lambda h_0)=0$ (with $h_0\in\CI(\ubar\upbeta^*S^2 T^*_{(0,0)}\R^4)$) but $r^\lambda h_0\neq\ubar\delta^*(r^{\lambda+1}\omega_0)$. One may then worry that one can have $D_{\ubar g}\Ric(r^\lambda(\log r)h_0+r^\lambda h_1)=0$, and yet the argument is not pure gauge to leading order, and so on, which would imply that quasi-homogeneous linearized metric perturbations can have arbitrarily high powers of $\log r$, even after quotienting out by pure gauge solutions. However, already the existence of $h_1$ turns out to force $h_0$ to lie in the range of $N(\delta^*,\lambda+1)$. We proceed to discuss the underlying linear algebra of such `restricted kernels' (and the dual `generalized ranges') in an abstract setting in~\S\ref{SsLA}.

In~\S\ref{SsRicMS}, we then turn to the mode stability of the Minkowski metric at zero energy on homogeneous (with respect to spatial dilations) tensors in $r>0$. The dual problem of solving the equation $N(D\Ric,\lambda)h=f$ for homogeneous $f\in\ker N(\delta\sfG,\lambda-2)$ is considered in~\S\ref{SsRicR}. These two sections consider individual pure types, and thus only require finite-dimensional linear algebra; in~\S\ref{SsRicLarge}, we study all but finitely many pure types (depending on the value of $\lambda\in\C$) in one go by means of a gauge-fixed linearized Einstein operator.

We then combine these ingredients in~\S\ref{SsRicSolv} to give a complete description of the nullspace, modulo pure gauge, of $D_{\ubar g}\Ric$ on quasi-homogeneous tensors in $r>0$, and of its range inside of $\ker\ubar\delta\ubar\sfG$.

\subsection{Linear algebra of restricted kernels and extended ranges}
\label{SsLA}

Consider two fi\-nite-\-di\-men\-sion\-al complex vector spaces $V,W$, each equipped with a nondegenerate (but not necessarily positive definite) sesquilinear inner product, and a holomorphic family of linear maps
\[
  A(\lambda) \colon V\to W,\qquad \lambda\in\Omega,
\]
where $\Omega\subseteq\C$ is open and nonempty. For $j\in\N_0$ and $\lambda_0\in\Omega$, define
\begin{equation}
\label{EqLASpace}
\begin{split}
  K_j(A,\lambda_0) &:= \biggl\{ \omega(\lambda)=\sum_{k=0}^j (\lambda-\lambda_0)^{-k-1}\omega_k \colon \omega_k\in V,\ A(\lambda)\omega(\lambda)\ \text{is holomorphic at}\ \lambda_0 \biggr\}, \\
  K_{[j]}(A,\lambda_0) &:= \left\{ \omega_j \colon \exists\,\omega(\lambda)\in K_j(A,\lambda_0)\ \text{with leading term}\ (\lambda-\lambda_0)^{-j-1}\omega_j \right\}, \\
  R_{[j]}(A,\lambda_0) &:= \biggl\{ (A(\lambda)\omega(\lambda))|_{\lambda=\lambda_0} \colon \omega(\lambda)=\sum_{k=0}^j (\lambda-\lambda_0)^{-k}\omega_k,\ \omega_k\in V, \\
    &\quad\hspace{12em} A(\lambda)\omega(\lambda)\ \text{is holomorphic at}\ \lambda_0 \biggr\}.
\end{split}
\end{equation}
Define $K^*_j(A^*,\lambda_0)$, $K^*_{[j]}(A^*,\lambda_0)$, $R^*_{[j]}(A^*,\lambda_0)$ analogously by replacing $A(\lambda)$ by $A(\lambda)^*$ and $\lambda-\lambda_0$, `holomorphic' by $\bar\lambda-\ol{\lambda_0}$, `anti-holomorphic'. We note that
\begin{align*}
  &\ran A(\lambda_0) = R_{[0]}(A,\lambda_0) \subseteq R_{[1]}(A,\lambda_0) \subseteq\cdots, \\
  \cdots \subseteq\ &K_{[1]}(A,\lambda_0) \subseteq K_{[0]}(A,\lambda_0) = \ker A(\lambda_0),
\end{align*}
hence the nomenclature of `generalized range' and `restricted kernel'.

\begin{lemma}[Duality]
\label{LemmaLADuality}
  For $\lambda_0\in\C$ and for all $j\in\N_0$, we have $K_{[j]}(A,\lambda_0)=R^*_{[j]}(A^*,\lambda_0)^\perp$ and $K^*_{[j]}(A^*,\lambda_0)=R_{[j]}(A,\lambda_0)^\perp$.
\end{lemma}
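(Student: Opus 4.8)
The plan is to establish both identities by a direct Laurent-series manipulation paired with the definition of the adjoint inner product, and then observe that the second identity follows from the first by symmetry (swapping the roles of $A$ and $A^*$, using $(A^*)^* = A$).

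First I would fix $\lambda_0$ and $j$, and write out the two subspaces whose orthogonality I must prove. An element $\omega_j \in K_{[j]}(A,\lambda_0)$ means there is a Laurent tail $\omega(\lambda) = \sum_{k=0}^j (\lambda-\lambda_0)^{-k-1}\omega_k$ with $A(\lambda)\omega(\lambda)$ holomorphic at $\lambda_0$. An element $\psi_j \in R^*_{[j]}(A^*,\lambda_0)$ is of the form $\psi_j = (A(\lambda)^*\psi(\lambda))|_{\lambda=\lambda_0}$ for some $\psi(\lambda) = \sum_{k=0}^j (\lambda-\lambda_0)^{-k}\psi_k$ with $A(\lambda)^*\psi(\lambda)$ anti-holomorphic (equivalently, in the appropriate convention, holomorphic after conjugation) at $\lambda_0$. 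The key computation is to consider the scalar pairing $\langle A(\lambda)\omega(\lambda), \psi(\lambda)\rangle_W = \langle \omega(\lambda), A(\lambda)^*\psi(\lambda)\rangle_V$ as a meromorphic function of $\lambda$ near $\lambda_0$, and to extract its residue (i.e.\ the coefficient of $(\lambda-\lambda_0)^{-1}$). On the left side, $A(\lambda)\omega(\lambda)$ is holomorphic while $\psi(\lambda)$ has a pole of order at most $j$; I would expand both in powers of $(\lambda-\lambda_0)$ and read off that the residue equals $\sum_{k=0}^j \langle c_{k}, \psi_k\rangle_W$ where $c_k$ is the order-$k$ Taylor coefficient of the holomorphic function $A(\lambda)\omega(\lambda)$. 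By a telescoping/reindexing identity for the Cauchy product of a holomorphic series with a truncated principal part, this residue collapses to $\langle \omega_j, \psi_j\rangle$ up to a sign/conjugation bookkeeping — more precisely, it equals the pairing of the top Laurent coefficient $\omega_j$ of $\omega$ with the value $\psi_j = (A^*\psi)|_{\lambda_0}$. Symmetrically, computing the residue from the right side $\langle \omega(\lambda), A(\lambda)^*\psi(\lambda)\rangle_V$ — where now $A(\lambda)^*\psi(\lambda)$ is holomorphic and $\omega(\lambda)$ has a pole of order at most $j+1$ — gives the same scalar. Setting the two computations equal forces $\langle \omega_j, \psi_j\rangle = 0$, which is exactly the statement $K_{[j]}(A,\lambda_0) \subseteq R^*_{[j]}(A^*,\lambda_0)^\perp$.

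For the reverse inclusion I would count dimensions. The maps $\omega(\lambda) \mapsto \omega_j$ and $\psi(\lambda)\mapsto \psi_j$ exhibit $K_{[j]}$ and $R^*_{[j]}$ as images of the spaces $K_j(A,\lambda_0)$, $R^*$-analogue, and one can compute $\dim K_{[j]} + \dim R^*_{[j]} = \dim V$ (for the first identity; $\dim W$ would be the relevant count if the arrows went the other way, but here the bookkeeping of which space the top coefficient lives in must be tracked carefully). This dimension count is itself a purely algebraic fact about a holomorphic family of matrices: writing $A(\lambda)$ in Smith-type normal form near $\lambda_0$ (or using the theory of the local Laurent structure / partial multiplicities of $A$ at $\lambda_0$), the order-$j$ restricted kernel and the order-$j$ generalized range of the adjoint have complementary dimensions by the very structure of the local Smith form. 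Combining the orthogonality already proved with this dimension equality yields the identity. The second identity, $K^*_{[j]}(A^*,\lambda_0) = R_{[j]}(A,\lambda_0)^\perp$, is then obtained by applying the first identity to the family $A^*(\lambda)$ in place of $A(\lambda)$ (legitimate since $A^*$ is again holomorphic — or anti-holomorphic in the convention used, which only affects conjugations — and $(A^*)^* = A$), taking orthogonal complements, and using nondegeneracy of the inner products so that $U^{\perp\perp} = U$ for any subspace $U$.

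I expect the main obstacle to be the sign/conjugation bookkeeping in the residue computation: the definitions of $K^*_j$, $R^*_{[j]}$ replace $\lambda - \lambda_0$ by $\bar\lambda - \overline{\lambda_0}$ and `holomorphic' by `anti-holomorphic', so one must be scrupulous about whether $\langle A(\lambda)\omega(\lambda), \psi(\lambda)\rangle$ should be treated as a function of $\lambda$ or of $\bar\lambda$ when extracting residues, and the sesquilinearity of the inner products means a conjugate-linear substitution $\mu = \bar\lambda$ is involved. The cleanest way around this is probably to first prove the statement for real-linear (equivalently, to pass to the underlying real structure, or to note that $A(\lambda)^*$ depends holomorphically on $\bar\lambda$ so $\lambda \mapsto \overline{\langle \omega(\lambda), A(\lambda)^*\psi(\lambda)\rangle}$ is genuinely holomorphic) and then carry the conjugates through symbolically. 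The residue/telescoping identity and the Smith-form dimension count are both standard once the setup is pinned down, so the real work is entirely in getting these conventions straight; I would state the key residue identity as a short standalone sublemma to isolate it.
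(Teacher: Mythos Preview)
Your argument for the inclusion $K_{[j]}(A,\lambda_0)\subseteq R^*_{[j]}(A^*,\lambda_0)^\perp$ has a concrete error. You extract the residue (the coefficient of $(\lambda-\lambda_0)^{-1}$) of the meromorphic function $\langle A(\lambda)\omega(\lambda),\psi(\lambda)\rangle=\langle\omega(\lambda),A(\lambda)^*\psi(\lambda)\rangle$, and then assert that ``setting the two computations equal forces $\langle\omega_j,\psi_j\rangle=0$.'' But the two sides are equal \emph{by definition of the adjoint}, so matching them yields no information; and the residue does not collapse to $\langle\omega_j,(A^*\psi)|_{\lambda_0}\rangle$ via any telescoping---it is genuinely a sum $\sum_{k}\langle\omega_k,d_k\rangle$ of several terms. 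The coefficient you actually want is that of $(\lambda-\lambda_0)^{-j-1}$: on the $\langle A\omega,\psi\rangle$ side it vanishes because $A\omega$ is holomorphic and $\psi$ has (anti-)pole order at most $j$, while on the $\langle\omega,A^*\psi\rangle$ side only the term $k=j$, $m=0$ contributes, giving exactly $\langle\omega_j,(A^*\psi)|_{\lambda_0}\rangle$. Equivalently, take $\lim_{\lambda\to\lambda_0}(\lambda-\lambda_0)^{j+1}\langle\omega(\lambda),A(\lambda)^*\psi(\lambda)\rangle$; this is precisely the paper's limit computation. Once this is fixed, your conjugation worry evaporates: the sesquilinearity turns $(\bar\lambda-\bar\lambda_0)^{-k}$ in the second slot into $(\lambda-\lambda_0)^{-k}$, so the pairing is honestly meromorphic in $\lambda$.

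For the reverse inclusion, your Smith normal form dimension count is correct and is exactly the paper's second proof: after conjugating by holomorphically invertible $S(\lambda),T(\lambda)$ one has $T A S=\operatorname{diag}((\lambda-\lambda_0)^{j_1},\ldots)$, and then $K_{[j]}=\operatorname{span}\{e_k:j_k>j\}$ and $R^*_{[j]}=\operatorname{span}\{e_k:j_k\le j\}$ are visibly complementary. The paper also gives a first proof which avoids Smith form entirely, arguing by induction on $j$ (simultaneously for both identities): if $\omega\perp R^*_{[j]}$ then in particular $\omega\perp R^*_{[j-1]}$, so $\omega\in K_{[j-1]}$ by induction; one then pairs the value $(A(\lambda)\omega(\lambda))|_{\lambda_0}$ against all of $K^*_{[j-1]}(A^*,\lambda_0)$ and uses the inductive hypothesis again to place it in $R_{[j-1]}(A,\lambda_0)$, from which membership in $K_{[j]}$ follows by a one-line Laurent manipulation. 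Either route works; yours (once the coefficient is corrected) is a hybrid of the two.
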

\begin{proof}
  This is elementary for $j=0$. We thus consider $j\geq 1$. We only explicitly discuss the proof of $K_{[j]}(A,\lambda_0)=R^*_{[j]}(A^*,\lambda_0)^\perp$, since the proof of the second part is completely analogous. We give two proofs of this result. 

  \pfstep{First proof.} Let $\omega(\lambda)=\sum_{k=0}^j (\lambda-\lambda_0)^{-k-1}\omega_j\in K_j(A,\lambda_0)$, so $\omega_k\in K_{[j]}(A,\lambda_0)$, and let $h^*(\lambda)=\sum_{k=0}^j (\bar\lambda-\ol{\lambda_0})^{-k}h_k^*$ be such that $\omega^*:=(A(\lambda)^*h^*(\lambda))|_{\lambda=\lambda_0}\in R^*_{[j]}(A^*,\lambda_0)$. Then
  \[
    \la\omega_k,\omega^*\ra = \lim_{\lambda\to\lambda_0} \la (\lambda-\lambda_0)^{j+1}\omega(\lambda), A(\lambda)^*h^*(\lambda)\ra = \lim_{\lambda\to\lambda_0} (\lambda-\lambda_0)\la A(\lambda)\omega(\lambda),(\bar\lambda-\ol{\lambda_0})^j h^*(\lambda)\ra
  \]
  vanishes since the pairing remains bounded as $\lambda\to\lambda_0$. This establishes the inclusion $K_{[j]}(A,\lambda_0)\subseteq R^*_{[j]}(A^*,\lambda_0)^\perp$.

  It remains to show that $R^*_{[j]}(A^*,\lambda_0)^\perp\subseteq K_{[j]}(A,\lambda_0)$. Arguing by induction (simultaneously also for the second part of the Lemma), suppose that $\omega\in R^*_{[j]}(A^*,\lambda_0)^\perp$. Then $\omega\in R^*_{[j-1]}(A^*,\lambda_0)^\perp=K_{[j-1]}(A,\lambda_0)$ by the inductive hypothesis, so there exists $\omega(\lambda)=\sum_{k=0}^{j-1}(\lambda-\lambda_0)^{-k-1}\omega_k\in K_{j-1}(A,\lambda_0)$ with $\omega_{j-1}=\omega$; and for all $h^*(\lambda)=\sum_{k=0}^j(\bar\lambda-\ol{\lambda_0})^{-k}h^*_k$ for which $A(\lambda)^*h^*(\lambda)$ is anti-holomorphic at $\lambda=\lambda_0$, we have
  \begin{align*}
    0 &= \lim_{\lambda\to\lambda_0} \la\omega, A(\lambda)^*h^*(\lambda) \ra \\
      &= \lim_{\lambda\to\lambda_0} \big\la (\lambda-\lambda_0)^j\omega(\lambda),A(\lambda)^*h^*(\lambda)\big\ra \\
      &= \lim_{\lambda\to\lambda_0} \big\la A(\lambda)\omega(\lambda), (\bar\lambda-\ol{\lambda_0})^j h^*(\lambda) \big\ra \\
      &= \big\la (A(\lambda)\omega(\lambda))|_{\lambda=\lambda_0}, h_j^* \big\ra.
  \end{align*}
  The space of all leading order terms $h_j^*$ of such $h^*(\lambda)$ is $K^*_{[j-1]}(A^*,\lambda_0)$. Thus, using the inductive hypothesis (for the second part of the Lemma), we deduce that
  \[
    (A(\lambda)\omega(\lambda))|_{\lambda=\lambda_0}\in K^*_{[j-1]}(A^*,\lambda_0)^\perp=R_{[j-1]}(A,\lambda_0)
  \]
  can be written as $(A(\lambda)\omega^\flat(\lambda))|_{\lambda=\lambda_0}$ where $\omega^\flat(\lambda)=\sum_{k=0}^{j-1}(\lambda-\lambda_0)^{-k}\omega_k^\flat$. This means that $A(\lambda)(\omega(\lambda)-\omega_\flat(\lambda))$ vanishes at $\lambda=\lambda_0$, and therefore
  \begin{equation}
  \label{EqLA2HDiff}
  \begin{split}
    &(\lambda-\lambda_0)^{-1}\omega(\lambda) - (\lambda-\lambda_0)^{-1}\omega^\flat(\lambda) \\
    &\qquad = (\lambda-\lambda_0)^{-j-1}\omega + \sum_{k=0}^{j-2} (\lambda-\lambda_0)^{-k-2}\omega_k - \sum_{k=0}^{j-1} (\lambda-\lambda_0)^{-k-1}\omega_k^\flat
  \end{split}
  \end{equation}
  gets mapped by $A(\lambda)$ into a function which is holomorphic at $\lambda=\lambda_0$. This means that~\eqref{EqLA2HDiff} lies in $K_j(A,\lambda_0)$, and therefore $\omega\in K_{[j]}(A,\lambda_0)$. This completes the proof.

  \pfstep{Second proof.} If $T(\lambda),S(\lambda)$ are invertible linear maps which are defined and holomorphic near $\lambda=\lambda_0$, then $K_j(A,\lambda_0)=\{S(\lambda)\omega(\lambda)\colon \omega\in K_j(T A S,\lambda_0)\}$ and therefore
  \[
    K_{[j]}(A,\lambda_0) = S(\lambda_0) K_{[j]}(T A S,\lambda_0).
  \]
  Similarly, $R^*_{[j]}(A^*,\lambda)=(S(\lambda_0)^*)^{-1}R^*_{[j]}(S^*A^*T^*,\lambda_0)$. Therefore, it suffices to prove the Lemma for $T A S$ in place of $A$. Fixing orthonormal bases $\{e_1,\ldots,e_N\}$ of $V$ and $\{f_1,\ldots,f_{N'}\}$ of $W$, we can then choose $T,S$ so that $T A S$ is in Smith normal form: in the case that $N'=\dim W\geq N=\dim V$, this means that
  \[
    T(\lambda)A(\lambda)S(\lambda) = \begin{pmatrix} D(\lambda) \\ 0 \end{pmatrix},\qquad
    D(\lambda)=\diag((\lambda-\lambda_0)^{j_1},\ldots,(\lambda-\lambda_0)^{j_N}),
  \]
  where $0\leq j_1\leq\cdots\leq j_N$, and thus $S(\lambda)^*A(\lambda)^*T(\lambda)^*=(D(\lambda)^*,0)$. Therefore,
  \[
    K_{[j]}(T A S,\lambda_0)=\mathspan\{ e_k\colon j_k>j\},\qquad
    R^*_{[j]}((T A S)^*,\lambda_0)=\mathspan\{ e_k \colon j_k\leq j \}.
  \]
  The case that $N'\leq N$ is similar. This implies the claim.
\end{proof}

Let now $Z$ be a further finite-dimensional complex vector space equipped with a nondegenerate sesquilinear inner product. Suppose
\[
  A(\lambda) \colon V\to W,\qquad
  B(\lambda) \colon W\to Z,\qquad \lambda\in\Omega,
\]
where $\Omega\subseteq\C$ is open and nonempty, are holomorphic families of linear maps with $B(\lambda)\circ A(\lambda)=0$ for all $\lambda\in\Omega$. In other words, we have a family of complexes
\[
  V \mathrel{\mathop{\myrightleftarrows{1cm}}^{A(\lambda)}_{A(\lambda)^*}} W \mathrel{\mathop{\myrightleftarrows{1cm}}^{B(\lambda)}_{B(\lambda)^*}} Z,\qquad \lambda\in\Omega,
\]
with the top row holomorphic and the bottom row antiholomorphic.

\begin{prop}[Nondegenerate pairings]
\label{PropLA2}
  Let $i,j\in\N_0$, $\lambda_0\in\Omega$. Then $R_{[j]}(A,\lambda_0)\subset K_{[i]}(B,\lambda_0)$ and $R^*_{[i]}(B^*,\lambda_0)\subset K^*_{[j]}(A^*,\lambda_0)$. Furthermore, the inner product $K_{[i]}(B,\lambda_0)\times K^*_{[j]}(A^*,\lambda_0)\ni (h,h^*)\mapsto\la h,h^*\ra$ induces a nondegenerate sesquilinear pairing
  \begin{equation}
  \label{EqLA2Pair}
    \bigl( K_{[i]}(B,\lambda_0) / R_{[j]}(A,\lambda_0) \bigr) \times \bigl( K^*_{[j]}(A^*,\lambda_0) / R^*_{[i]}(B^*,\lambda_0) \bigr) \to \C.
  \end{equation}
\end{prop}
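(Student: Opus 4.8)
The strategy is to reduce everything to two facts already available: the duality statement in Lemma~\ref{LemmaLADuality}, and the complex condition $B(\lambda)\circ A(\lambda)=0$ (equivalently $A(\lambda)^*\circ B(\lambda)^*=0$). First I would prove the two inclusions $R_{[j]}(A,\lambda_0)\subseteq K_{[i]}(B,\lambda_0)$ and $R^*_{[i]}(B^*,\lambda_0)\subseteq K^*_{[j]}(A^*,\lambda_0)$. By symmetry (replace $A,B$ by $B^*,A^*$ and swap roles), it suffices to do the first. Take $h=(A(\lambda)\omega(\lambda))|_{\lambda=\lambda_0}\in R_{[j]}(A,\lambda_0)$ where $\omega(\lambda)=\sum_{k=0}^j(\lambda-\lambda_0)^{-k}\omega_k$ and $A(\lambda)\omega(\lambda)$ is holomorphic at $\lambda_0$. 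Then $B(\lambda)A(\lambda)\omega(\lambda)=0$ identically, yet also $B(\lambda)A(\lambda)\omega(\lambda)=B(\lambda)\bigl(h+\cO(\lambda-\lambda_0)\bigr)$; more to the point, consider $\psi(\lambda):=(\lambda-\lambda_0)^{-1}\cdot\bigl(A(\lambda)\omega(\lambda)-h\bigr)$, which is holomorphic at $\lambda_0$, and set $\tilde\omega(\lambda):=(\lambda-\lambda_0)^{-1}h+\psi(\lambda)$; I would verify that for suitable indexing this exhibits $h$ as an element of $K_{[i]}(B,\lambda_0)$ for every $i\geq 0$, because $B(\lambda)\tilde\omega(\lambda)=(\lambda-\lambda_0)^{-1}B(\lambda)A(\lambda)\omega(\lambda)=0$ is (trivially) holomorphic. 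The cleanest route is in fact to note that $R_{[j]}(A,\lambda_0)\subseteq\ker B(\lambda_0)=K_{[0]}(B,\lambda_0)$ directly (evaluate $B(\lambda)A(\lambda)\omega(\lambda)=0$ at $\lambda_0$ after clearing the pole), and then observe that any element of $\ker B(\lambda_0)$ that arises this way can be fed into a Laurent tail annihilated by $B$ to arbitrary order, giving membership in all $K_{[i]}$. I'll write this out carefully with the pole-order bookkeeping matching the definition~\eqref{EqLASpace}.

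With the inclusions in hand, the pairing~\eqref{EqLA2Pair} is \emph{well-defined}: the sesquilinear form $\la h,h^*\ra$ on $K_{[i]}(B,\lambda_0)\times K^*_{[j]}(A^*,\lambda_0)$ descends to the quotient because $R_{[j]}(A,\lambda_0)$ pairs trivially with $K^*_{[j]}(A^*,\lambda_0)$ and $K_{[i]}(B,\lambda_0)$ pairs trivially with $R^*_{[i]}(B^*,\lambda_0)$ — both of these are immediate from Lemma~\ref{LemmaLADuality}, which says precisely $K_{[j]}(A,\lambda_0)=R^*_{[j]}(A^*,\lambda_0)^\perp$ and $K^*_{[i]}(A^*,\lambda_0)=R_{[i]}(A,\lambda_0)^\perp$ (applied with $A$ replaced by $B$ for the second identity), hence $\la R_{[j]}(A,\lambda_0),\,K^*_{[j]}(A^*,\lambda_0)\ra=0$ since $R_{[j]}(A,\lambda_0)\perp K^*_{[j]}(A^*,\lambda_0)$? — here I need to be careful: Lemma~\ref{LemmaLADuality} gives $K_{[j]}(A,\lambda_0)=R^*_{[j]}(A^*,\lambda_0)^\perp$, which is what I want only after checking $R_{[j]}(A,\lambda_0)\subseteq K^*_{[j]}(A^*,\lambda_0)^\perp$; but $K^*_{[j]}(A^*,\lambda_0)^\perp=R_{[j]}(A,\lambda_0)$ is exactly the second identity of Lemma~\ref{LemmaLADuality} read backwards. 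So the vanishing is built in. Thus the form on the quotients is well-defined.

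For \emph{nondegeneracy}: suppose $[h]\in K_{[i]}(B,\lambda_0)/R_{[j]}(A,\lambda_0)$ pairs trivially with all of $K^*_{[j]}(A^*,\lambda_0)$. Then $h\in K^*_{[j]}(A^*,\lambda_0)^\perp=R_{[j]}(A,\lambda_0)$ by Lemma~\ref{LemmaLADuality}, i.e.\ $[h]=0$. The same argument on the other factor, using $K_{[i]}(B,\lambda_0)^\perp=R^*_{[i]}(B^*,\lambda_0)$ (Lemma~\ref{LemmaLADuality} applied to $B$), shows nondegeneracy in the second variable. One subtlety I want to double-check is that the orthogonal complements are taken inside the \emph{right} ambient spaces ($W$ for the $B$-side, $V$ for the $A^*$-side) and that $K_{[i]}(B,\lambda_0)$ and $K^*_{[j]}(A^*,\lambda_0)$ are honestly subspaces of $W$, not of some larger Laurent-series space — by the leading-term description in~\eqref{EqLASpace} they are subspaces of $W$ and $V$ respectively, so the pairing is genuinely the restriction of the given inner products.

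\textbf{Anticipated main obstacle.} The only place requiring real care is the first inclusion $R_{[j]}(A,\lambda_0)\subseteq K_{[i]}(B,\lambda_0)$ with the correct pole-order accounting: an element of $R_{[j]}$ is defined via a Laurent tail of order $j$ for $A$, and I must produce, from $B\circ A=0$, a Laurent tail of order $i$ for $B$ realizing the same leading vector — and this must work for \emph{all} $i,j\geq 0$ simultaneously. The resolution is that $B(\lambda)(A(\lambda)\omega(\lambda))\equiv 0$ is stronger than holomorphicity, so the tail $(\lambda-\lambda_0)^{-1}h+(\text{holomorphic correction})$ is killed by $B(\lambda)$ outright, hence lies in $K_i(B,\lambda_0)$ for every $i$; I'll just need to phrase this so it matches Definition~\eqref{EqLASpace} verbatim. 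Everything else is a direct citation of Lemma~\ref{LemmaLADuality} plus elementary linear algebra in finite-dimensional inner product spaces.
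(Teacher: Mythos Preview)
Your proof is correct and follows the paper's approach: the inclusions come from $B\circ A\equiv 0$ (the paper's one-line version is simply $B(\lambda)\bigl((\lambda-\lambda_0)^{-i-1}A(\lambda)\omega(\lambda)\bigr)=0$, which immediately exhibits $(A(\lambda)\omega(\lambda))|_{\lambda_0}$ as the leading term of an element of $K_i(B,\lambda_0)$), and both well-definedness and nondegeneracy of the quotient pairing follow directly from Lemma~\ref{LemmaLADuality}. One small slip to correct when you write it up: $K^*_{[j]}(A^*,\lambda_0)\subset W$ (the domain of $A^*\colon W\to V$), not $V$, so both factors of the pairing live in $W$.
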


The special case $i=j=0$ is the non-degeneracy of the pairing
\[
  \bigl(\ker B(\lambda)/\ran A(\lambda)\bigr) \times \bigl(\ker A(\lambda)^* / \ran B(\lambda)^* \bigr) \to \C,
\]
which is elementary.

\begin{rmk}[Interpretation]
\label{RmkLAInt}
  If $A,B$ depend polynomially on $\lambda$, then $K_{[0]}(B,\lambda_0)$ is equal to the space of all $h\in W$ so that the $W$-valued function $r\mapsto r^{\lambda_0}h$ on $(0,\infty)_r$ satisfies $B(r\pa_r)(r^{\lambda_0}h)=0$. The stronger membership $h\in K_{[1]}(B,\lambda_0)$ implies that there exists $h_1\in W$ so that
  \begin{equation}
  \label{EqLAInt}
    B(r\pa_r)(r^{\lambda_0}(\log r)h+r^{\lambda_0}h_1)=0;
  \end{equation}
  and so on. For $i=1$, $j=0$, the first factor in~\eqref{EqLA2Pair} is trivial if and only if all $\log r$ leading order terms (such as $h$ in~\eqref{EqLAInt}) of quasi-homogeneous (of degree $\lambda_0$) elements in $\ker B(r\pa_r)$ lie in the range of $A(\lambda_0)$. In this case, the second factor is then also trivial, which means that every $h^*\in W$ with $A^*(r\pa_r)(r^{\ol{\lambda_0}}h^*)=0$ can be written as $h^*=B(r\pa_r)^*(r^{\ol{\lambda_0}}(\log r)f_0^*+r^{\ol{\lambda_0}}f_1^*)$ for some $f_0^*\in\ker B(\lambda_0)^*$ and $f_1^*\in Z$. (Here, $B(r\pa_r)^*$ is defined using the b-density $|\frac{\dd r}{r}|$ on $(0,\infty)$.)
\end{rmk}

\begin{proof}[Proof of Proposition~\usref{PropLA2}]
  If $\omega(\lambda)=\sum_{k=0}^j (\lambda-\lambda_0)^{-k}\omega_k$ is such that $A(\lambda)\omega(\lambda)$ is holomorphic, then the identity
  \[
    B(\lambda)((\lambda-\lambda_0)^{-i-1}A(\lambda)\omega(\lambda))=B(\lambda)A(\lambda)((\lambda-\lambda_0)^{-i-1}\omega(\lambda))=0
  \]
  shows that $(A(\lambda)\omega(\lambda))|_{\lambda_0}\in R_{[j]}(A,\lambda_0)$ lies in $K_{[i]}(B,\lambda_0)$. Similarly, one shows that the second quotient space in~\eqref{EqLA2Pair} is well-defined. The non-degeneracy follows from Lemma~\ref{LemmaLADuality}.
\end{proof}

Besides the case $i=0$, $j=0$, we only need the case $i=1$, $j=0$ (or $i=0$, $j=1$) in our application. We rephrase this as follows (see also Remark~\ref{RmkLAInt}).

\begin{cor}[Nondegenerate pairings, special case]
\label{CorLASpecial}
  The inner product on $W$ induces a nondegenerate sesquilinear pairing between the spaces
  \begin{equation}
  \label{EqLASpecial1}
    \{ h\in \ker B(\lambda_0)  \colon B'(\lambda_0)h \in \ran B(\lambda_0) \} / \ran A(\lambda_0)
  \end{equation}
  and
  \begin{equation}
  \label{EqLASpecial2}
    \ker A(\lambda_0)^* / \{ B(\lambda_0)^*f_0^* + B'(\lambda_0)^*f_1^* \colon f_0^*\in Z,\ f_1^*\in\ker B(\lambda_0)^* \}.
  \end{equation}
\end{cor}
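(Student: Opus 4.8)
\textbf{Proof plan for Corollary~\ref{CorLASpecial}.}
The plan is to specialize Proposition~\ref{PropLA2} to the indices $i=1$, $j=0$ and to unwind the definitions~\eqref{EqLASpace} of the spaces $K_{[1]}(B,\lambda_0)$, $R_{[0]}(A,\lambda_0)$, $K^*_{[0]}(A^*,\lambda_0)$, and $R^*_{[1]}(B^*,\lambda_0)$ into the more concrete descriptions appearing in~\eqref{EqLASpecial1}--\eqref{EqLASpecial2}. First I would record the trivial identifications: $R_{[0]}(A,\lambda_0)=\ran A(\lambda_0)$ and $K^*_{[0]}(A^*,\lambda_0)=\ker A(\lambda_0)^*$ directly from the definitions (taking $j=0$, so the Laurent tail has a single term). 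With these, Proposition~\ref{PropLA2} already yields a nondegenerate pairing between $K_{[1]}(B,\lambda_0)/\ran A(\lambda_0)$ and $\ker A(\lambda_0)^*/R^*_{[1]}(B^*,\lambda_0)$, so the only remaining task is to match $K_{[1]}(B,\lambda_0)$ with the numerator of~\eqref{EqLASpecial1} and $R^*_{[1]}(B^*,\lambda_0)$ with the denominator of~\eqref{EqLASpecial2}.

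For the first identification: an element of $K_{[1]}(B,\lambda_0)$ is, by definition, the leading coefficient $h=\omega_1$ of a Laurent expansion $\omega(\lambda)=(\lambda-\lambda_0)^{-2}\omega_1+(\lambda-\lambda_0)^{-1}\omega_0$ for which $B(\lambda)\omega(\lambda)$ is holomorphic at $\lambda_0$. Expanding $B(\lambda)=B(\lambda_0)+(\lambda-\lambda_0)B'(\lambda_0)+O((\lambda-\lambda_0)^2)$ and collecting the $(\lambda-\lambda_0)^{-2}$ and $(\lambda-\lambda_0)^{-1}$ coefficients, holomorphy is equivalent to the two conditions $B(\lambda_0)\omega_1=0$ and $B'(\lambda_0)\omega_1+B(\lambda_0)\omega_0=0$. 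The first says $h=\omega_1\in\ker B(\lambda_0)$, and the second says $B'(\lambda_0)h=-B(\lambda_0)\omega_0\in\ran B(\lambda_0)$ for a suitable choice of $\omega_0$; conversely, given any $h\in\ker B(\lambda_0)$ with $B'(\lambda_0)h\in\ran B(\lambda_0)$, one picks $\omega_0$ with $B(\lambda_0)\omega_0=-B'(\lambda_0)h$ and obtains a valid $\omega(\lambda)\in K_1(B,\lambda_0)$ with leading term $h$. Hence $K_{[1]}(B,\lambda_0)=\{h\in\ker B(\lambda_0)\colon B'(\lambda_0)h\in\ran B(\lambda_0)\}$, which is exactly the numerator of~\eqref{EqLASpecial1}.

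For the dual identification I would run the analogous computation with $B(\lambda)^*$ in place of $A(\lambda)$ and $i=1$: an element of $R^*_{[1]}(B^*,\lambda_0)$ is $(B(\lambda)^*h^*(\lambda))|_{\lambda=\lambda_0}$ where $h^*(\lambda)=(\bar\lambda-\ol{\lambda_0})^{-1}h_1^*+h_0^*$ and $B(\lambda)^*h^*(\lambda)$ is antiholomorphic at $\lambda_0$; the antiholomorphy of the $(\bar\lambda-\ol{\lambda_0})^{-1}$ coefficient forces $B(\lambda_0)^*h_1^*=0$, i.e.\ $h_1^*\in\ker B(\lambda_0)^*$, and then the value at $\lambda_0$ is $B'(\lambda_0)^*h_1^*+B(\lambda_0)^*h_0^*$ with $h_0^*\in Z$ arbitrary. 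This is precisely the denominator set in~\eqref{EqLASpecial2}. Substituting all four identifications into the conclusion of Proposition~\ref{PropLA2} gives the asserted nondegenerate pairing. The only mild subtlety — and the place to be slightly careful — is the bookkeeping of which Laurent coefficient is the "leading" one and the sign/shift conventions relating $R^*_{[i]}$ to the conditions on $h^*(\lambda)$; once the definitions in~\eqref{EqLASpace} are taken verbatim, each step is a one-line expansion, so there is no real obstacle beyond this unwinding.
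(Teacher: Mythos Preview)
Your proposal is correct and follows essentially the same approach as the paper's own proof: both identify \eqref{EqLASpecial1} with $K_{[1]}(B,\lambda_0)/R_{[0]}(A,\lambda_0)$ and \eqref{EqLASpecial2} with $K^*_{[0]}(A^*,\lambda_0)/R^*_{[1]}(B^*,\lambda_0)$ by expanding the holomorphy/antiholomorphy conditions to low order, and then invoke Proposition~\ref{PropLA2} with $i=1$, $j=0$. The only cosmetic difference is the order of presentation (you state the trivial identifications first and then invoke the proposition, while the paper works out both nontrivial identifications before appealing to the pairing).
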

\begin{proof}
  Note that if $h\in\ker B(\lambda_0)$, then $v\in W$ solves $B'(\lambda_0)h=B(\lambda_0)v$ if and only if $B(\lambda)((\lambda-\lambda_0)^{-2}h-(\lambda-\lambda_0)^{-1}v)$ is holomorphic at $\lambda_0$; thus, such a vector $v$ exists if and only if $h\in K_{[1]}(B,\lambda_0)$. Thus,~\eqref{EqLASpecial1} is the space $K_{[1]}(B,\lambda_0)/R_{[0]}(A,\lambda_0)$.

  Similarly, if given $f_0^*,f_1^*\in Z$ we define $f^*$ by $f^*(\lambda)=(\bar\lambda-\ol{\lambda_0})^{-1} f_1^*+f_0^*$, then
  \[
    B(\lambda)^*f^*(\lambda)=(\bar\lambda-\ol{\lambda_0})^{-1}B(\lambda_0)^*f_1^* + \bigl(B'(\lambda_0)^*f_1^* + B(\lambda_0)^* f_0^*\bigr) + \cO(|\bar\lambda-\ol{\lambda_0}|)
  \]
  is anti-holomorphic at $\lambda=\lambda_0$ if and only if $f_1^*\in\ker B(\lambda_0)^*$. Thus,~\eqref{EqLASpecial2} is equal to the space $K_{[0]}(A^*,\lambda_0)/R_{[1]}(B^*,\lambda_0)$. The claim now follows from Proposition~\ref{PropLA2}.
\end{proof}

\subsection{Mode stability at zero energy for the indicial family}
\label{SsRicMS}

We study the kernel of $N(D\Ric,\lambda)$ modulo pure gauge solutions, i.e.\ modulo the range of $N(\delta^*,\lambda+1)$. This is a variant on the classical problem of mode stability for the linearized Einstein vacuum equations (see e.g.\ \cite{ReggeWheelerSchwarzschild,KodamaIshibashiMaster}, \cite[\S8]{HaefnerHintzVasyKerr}), here at the Minkowski metric and at zero energy but with the caveat that we are considering stationary metric perturbations in $r>0$ which are homogeneous of degree $\lambda$ with respect to spatial dilations $(r,\omega)\mapsto(c r,\omega)$, $c>0$ (and thus typically singular at $r=0$). Since $N(D\Ric,\lambda)$ preserves pure type tensors in the strong sense explained in Remark~\ref{RmkMkYAction}, one can specialize this problem further and study the mode stability individually for each pure type.\footnote{The converse is not automatic; that is, suppose, say, $\lambda\in\C$ is such that $\ker N_\bullet(D\Ric,\lambda)=\ran N_\bullet(\delta^*,\lambda+1)$ for all $\bullet\in\{\rms l,\,\rmv l\}$. Then we can only conclude that $\ker N(D\Ric,\lambda)=\ran N(\delta^*,\lambda+1)$ if we restrict to the subspace of $\CI(\ff;\ubar\upbeta^*T^*_{(0,0)}\R^4)$ consisting of tensors with finite pure type support, i.e.\ only finitely many projections onto pure type tensors are nonzero. We show in~\S\ref{SsRicLarge}---necessarily using analytic tools---how to remove this restriction.} We identify $\ker N_{\rms 1}(D\Ric,\lambda)\subset\CI_{\rms 1}(\ff;\ubar\upbeta^*S^2 T^*_{(0,0)}\R^4)$ with the subspace of $\C^4$ which is the kernel of the $4\times 4$ matrix obtained by expressing $N(D\Ric,\lambda)$ in the splitting~\eqref{EqMkYSplits1} for any fixed $0\neq\scal\in\scalspace_1$ (see Remark~\ref{RmkMkYAction}); similarly for other pure types. Explicit expressions are given below, see e.g.\ \eqref{EqRicMSs1DRic}.

\begin{prop}[Mode stability at zero energy for the indicial family]
\label{PropRicMS}
  Let $\lambda\in\C$.
  \begin{enumerate}
  \item\label{ItRicMSs0}{\rm (Scalar type $0$.)} For $\lambda\neq -1,0$, we have $\ker N_{\rms 0}(D\Ric,\lambda)=\ran N_{\rms 0}(\delta^*,\lambda+1)$. For $\lambda=-1$, in the notation of~\eqref{EqLASpace}, the quotient $K_{\rms 0,[0]}(D\Ric,-1)/R_{\rms 0,[0]}(\delta^*,0)=\ker N_{\rms 0}(D\Ric,-1)/\ran N_{\rms 0}(\delta^*,0)$ is 2-dimensional, whereas (in terms of~\eqref{EqMkYSplits0})
    \begin{align*}
      &K_{\rms 0,[0]}(D\Ric,-1) / R_{\rms 0,[1]}(\delta^*,0) \\
      &\quad = \ker N_{\rms 0}(D\Ric,-1) / \bigl(\ran N_{\rms 0}(\delta^*,0) + \ran_{\ker N_{\rms 0}(\delta^*,0)} \pa_\lambda N_{\rms 0}(\delta^*,0)\bigr) \\
      &\quad= \mathspan\{ (1,0,1,0)^T \}
    \end{align*}
    is 1-dimensional (in coordinates spanned by the equivalence class of $r^{-1}((\dd x^0)^2+(\dd x^1)^2)=\frac{2}{r}(\dd t^2+\dd r^2)$). For $\lambda=0$ finally,
    \[
      K_{\rms 0,[0]}(D\Ric,0) / R_{\rms 0,[0]}(\delta^*,1) = \mathspan\{ (1,1,1,0)^T \}
    \]
    (in coordinates, $(1,1,1,0)^T$ corresponds to $4\,\dd t^2$).
  \item\label{ItRicMSs1}{\rm (Scalar type $1$.)} For $\lambda\neq -2,-1,1$, we have $\ker N_{\rms 1}(D\Ric,\lambda)=\ran N_{\rms 1}(\delta^*,\lambda+1)$. For the exceptional values of $\lambda$, we have, in terms of~\eqref{EqMkYSplits1}\footnote{As explained above,~\eqref{EqRicMSs1m2} means that in the splitting~\eqref{EqMkYSplits1}, the space of symmetric 2-tensors of the form $(-\scal,0,\dd\scal,-\scal,-\dd\scal,0)^T$, $\scal\in\scalspace_1$, projects isomorphically onto the quotient space.}
    \begin{align}
    \label{EqRicMSs1m2}
      K_{\rms 1,[0]}(D\Ric,-2) / R_{\rms 1,[0]}(\delta^*,-1) &= \mathspan\{ (-1,0,1,-1,-1,0)^T \}, \\
    \label{EqRicMSs11}
      K_{\rms 1,[0]}(D\Ric,1) / R_{\rms 1,[0]}(\delta^*,2) &= \mathspan\{ (0,0,0,2,-1,0)^T \};
    \end{align}
    finally,
    \begin{equation}
    \label{EqRicMSs1m1}
      K_{\rms 1,[0]}(D\Ric,-1)/R_{\rms 1,[0]}(\delta^*,0)=\mathspan\Bigl\{\pa_\lambda N_{\rms 1}(\delta^*,0)\Bigl(\frac12,-\frac12,1\Bigr)^T\Bigr\}
    \end{equation}
    (in coordinates $\ubar\delta^*((\log r)\dd(r\scal))=\frac{1}{r}\dd r\otimes_s\dd(r\scal)$, $\scal\in\scalspace_1$), so
    \begin{equation}
    \label{EqRicMSs1m1Triv}
    \begin{split}
      &K_{\rms 1,[0]}(D\Ric,-1) / R_{\rms 1,[1]}(\delta^*,0) \\
      &\quad = \ker N_{\rms 1}(D\Ric,-1) / \bigl(\ran N_{\rms 1}(\delta^*,0) + \ran_{\ker N_{\rms 1}(\delta^*,0)} \pa_\lambda N_{\rms 1}(\delta^*,0)\bigr) \\
      &\quad= \{0\}.
    \end{split}
    \end{equation}
  \item\label{ItRicMSsl}{\rm (Scalar type $l\geq 2$.)} For $\lambda\neq -l-1,l$, we have $\ker N_{\rms l}(D\Ric,\lambda)=\ran N_{\rms l}(\delta^*,\lambda+1)$. For $\lambda=-l-1,l$, we have
    \begin{align*}
      K_{\rms l,[0]}(D\Ric,\lambda) / R_{\rms l,[0]}(\delta^*,\lambda+1) = \mathspan\{ (1,0,0,1,0,2,0)^T \}.
    \end{align*}
  \item\label{ItRicMSv1}{\rm (Vector type $l=1$.)} For $\lambda\neq -2,0,1$, we have $\ker N_{\rmv 1}(D\Ric,\lambda)=\ran N_{\rmv 1}(\delta^*,\lambda+1)$. Furthermore,
    \[
      K_{\rmv 1,[0]}(D\Ric,\lambda) / R_{\rmv 1,[0]}(\delta^*,\lambda+1) = \mathspan\{ (1,1)^T \},\qquad \lambda=-2,1.
    \]
    (In coordinates, $(1,1)^T$ is $4 r^\lambda\,\dd t\otimes_s r\vect$, $\vect\in\vectspace_1$.) Finally, for $\lambda=0$, the space
    \begin{equation}
    \label{EqRicMSv10}
      K_{\rmv 1,[0]}(D\Ric,0)/R_{\rmv 1,[0]}(\delta^*,1) = \mathspan \bigl\{ \pa_\lambda N_{\rmv 1}(\delta^*,1)(1,-1)^T \bigr\}
    \end{equation}
    (in coordinates $\ubar\delta^*(2 r\cdot r\vect)$, $\vect\in\vectspace_1$) is 1-dimensional, whereas
    \begin{equation}
    \label{EqRicMSv10Triv}
    \begin{split}
      &K_{\rmv 1,[0]}(D\Ric,0) / R_{\rmv 1,[1]}(\delta^*,1) \\
      &\quad = \ker N_{\rmv 1}(D\Ric,0) / \bigl(\ran N_{\rmv 1}(\delta^*,1) + \ran_{\ker N_{\rmv 1}(\delta^*,1)} \pa_\lambda N_{\rmv 1}(\delta^*,1)\bigr) \\
      &\quad = \{0\}.
    \end{split}
    \end{equation}
  \item\label{ItRicMSvl}{\rm (Vector type $l\geq 2$.)} For $\lambda\neq -l-1,l$, we have $\ker N_{\rmv l}(D\Ric,\lambda)=\ran N_{\rmv l}(\delta^*,\lambda+1)$. Furthermore,
    \[
      K_{\rmv l,[0]}(D\Ric,\lambda) / R_{\rmv l,[0]}(\delta^*,\lambda+1) = \mathspan\{ (1,1,0)^T \},\qquad \lambda=-l-1,l.
    \]
  \end{enumerate}
\end{prop}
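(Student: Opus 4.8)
Here is my plan for proving Proposition~\ref{PropRicMS}.

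\textbf{General strategy.} The proof is a case-by-case analysis over the pure types $\rms 0$, $\rms 1$, $\rms l$ ($l\geq 2$), $\rmv 1$, $\rmv l$ ($l\geq 2$). For each pure type $\bullet$, the operator $N_\bullet(D\Ric,\lambda)$ is a square matrix of quadratic polynomials in $\lambda$ acting on a finite-dimensional space (of dimension $4$, $6$, $7$, $2$, $3$ respectively, per the bases \eqref{EqMkYSplits0}--\eqref{EqMkYSplitvl}). Concretely, I would first assemble these matrices from the formula $N(D\Ric,\lambda)=\tfrac12 N(r^2\wh{\ubar\Box}(0),\lambda)-N(\delta^*,\lambda+1)N(\delta\sfG,\lambda)$ together with Lemma~\ref{LemmaMk} and Corollary~\ref{CorMkDiffDRic} (keeping in mind Remark~\ref{RmkRicRho}: the $\rho$-indicial family there has $\lambda\mapsto-\lambda$), using Remark~\ref{RmkMkYAction} to reduce each operator acting on pure type tensors to a numerical matrix (replace $\sld\mapsto 1$, $\slg\mapsto(1,0)^T$, $\sldelta^*\mapsto(-\tfrac12 l(l+1),1)^T$, $\sldelta\mapsto l(l+1)$ or $\tfrac12(l(l+1)-2)$ as appropriate, $\slDelta\mapsto l(l+1)$ or its shifts from Lemma~\ref{LemmaMkYId}). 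Then for each $\lambda$ I compute $\ker N_\bullet(D\Ric,\lambda)$ and $\ran N_\bullet(\delta^*,\lambda+1)$ directly.

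\textbf{The comparison $\ker\supseteq\ran$ and the generic case.} The inclusion $\ran N_\bullet(\delta^*,\lambda+1)\subseteq\ker N_\bullet(D\Ric,\lambda)$ is automatic from the identity $D_{\ubar g}\Ric\circ\ubar\delta^*=0$ (valid because $\ubar g$ is Ricci-flat), passed through the indicial family. So the content is: (i) show the reverse inclusion holds for all $\lambda$ outside the stated exceptional set, and (ii) at each exceptional $\lambda$, compute the dimension of the quotient and exhibit an explicit representative. For (i), the clean way is to observe that $\dim\ran N_\bullet(\delta^*,\lambda+1)$ is constant (equal to the generic rank of $\ubar\delta^*$ on that pure type) except at finitely many $\lambda$ where $N_\bullet(\delta^*,\lambda+1)$ drops rank—these are exactly the $\lambda$ where $\ubar\delta^*(r^{\lambda+1}\cdot(\text{pure type }\bullet))$ has a nontrivial kernel, i.e.\ $r^{\lambda+1}\omega_0$ is a (homogeneous) Killing 1-form of Minkowski space, which happens only for the boosts/rotations/translations/scalings, forcing $\lambda\in\{-2,-1,0,1\}$ (and, for $\rms l$, $\rmv l$ with $l\ge 2$, at $\lambda=-l-1$ via the conformal-type obstruction and $\lambda=l$ dually). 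One then checks $\ker N_\bullet(D\Ric,\lambda)$ has the same dimension as $\ran N_\bullet(\delta^*,\lambda+1)$ for generic $\lambda$ by a rank computation (the determinant of $N_\bullet(D\Ric,\lambda)$ restricted to a complement of the generic pure-gauge space is a polynomial in $\lambda$ whose roots one identifies with the exceptional values). I would present this as: compute $\det$ (or the relevant minors) of the numerical matrices as explicit polynomials in $\lambda$, read off the roots, and match them to the exceptional list.

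\textbf{Exceptional values and the `restricted kernel' statements.} At each exceptional $\lambda$ I exhibit an explicit nontrivial element of $\ker N_\bullet(D\Ric,\lambda)$ not in $\ran N_\bullet(\delta^*,\lambda+1)$—these are the vectors written out in \eqref{EqRicMSs1m2}, \eqref{EqRicMSs11}, etc.—and verify it lies in the kernel by direct matrix multiplication, then check linear independence from the pure-gauge space. The genuinely delicate statements are the vanishing claims \eqref{EqRicMSs1m1Triv} and \eqref{EqRicMSv10Triv} (and the analogous $1$-dimensional-vs-$2$-dimensional dichotomy at $\lambda=-1$ in scalar type $0$): here one must distinguish $K_{\bullet,[0]}(D\Ric,\lambda)/R_{\bullet,[0]}(\delta^*,\lambda+1)$ from $K_{\bullet,[0]}(D\Ric,\lambda)/R_{\bullet,[1]}(\delta^*,\lambda+1)$, i.e.\ quotient also by $\ran_{\ker N_\bullet(\delta^*,\lambda+1)}\pa_\lambda N_\bullet(\delta^*,\lambda+1)$ (the `logarithmic' pure gauge terms $\ubar\delta^*((\log r)r^{\lambda+1}\omega_0)$ with $\omega_0\in\ker N_\bullet(\delta^*,\lambda+1)$). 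The identities \eqref{EqRicMSs1m1} and \eqref{EqRicMSv10} assert that the single extra kernel element \emph{is} such a logarithmic pure gauge term—so I would verify this by explicitly computing $\pa_\lambda N_\bullet(\delta^*,\lambda+1)$ applied to the relevant element of $\ker N_\bullet(\delta^*,\lambda+1)$ (using the $\pa_\lambda$-formulas in Corollary~\ref{CorMkDiffDRic}) and checking it matches the kernel generator.

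\textbf{Main obstacle.} The conceptual content is light; the real work—and the place errors creep in—is the bookkeeping of $r$-weights and the sign/coefficient conventions in the large matrices from Lemma~\ref{LemmaMk} and Corollary~\ref{CorMkDiffDRic}, and correctly implementing Remark~\ref{RmkMkYAction}'s substitution rules for each pure type (scalar type $1$ is special because $\sldelta_0^*\sld\scal=0$, and vector type $1$ because $\sldelta^*\vect=0$, which shrink the relevant spaces and must be handled separately). So I expect the main obstacle to be purely computational: carrying out the finite-dimensional linear algebra for all five pure types without arithmetic slips, and in particular getting the $\lambda=-1$ (scalar and vector) cases right, where the subtlety is whether the extra kernel dimension is `genuinely new' (scalar type $0$: yes, contributing the $\tfrac2r(\dd t^2+\dd r^2)$ direction) or `only logarithmic pure gauge' (scalar type $1$ and vector type $1$: no new direction survives the $R_{[1]}$ quotient). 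I would organize the write-up by stating the five numerical matrices $N_\bullet(D\Ric,\lambda)$ once (citing the computation from the earlier sections), then dispatching each item \eqref{ItRicMSs0}--\eqref{ItRicMSvl} with a short explicit calculation.
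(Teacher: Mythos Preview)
Your overall strategy is correct and will work: this proposition really is finite-dimensional linear algebra once the matrices are assembled, and your plan to verify the exceptional cases by exhibiting explicit kernel elements and checking the logarithmic pure gauge corrections via $\pa_\lambda N_\bullet(\delta^*,\lambda+1)$ is exactly right.

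There is one point of confusion in your heuristic for locating the exceptional $\lambda$. You attribute the exceptional values for $\rms l,\rmv l$ with $l\geq 2$ to $N_\bullet(\delta^*,\lambda+1)$ dropping rank (``conformal-type obstruction''). In fact the paper shows $N_{\rms l}(\delta^*,\lambda+1)$ and $N_{\rmv l}(\delta^*,\lambda+1)$ are \emph{injective for all} $\lambda\in\C$ when $l\geq 2$; the exceptional $\lambda=-l-1,l$ arise instead from the kernel of $D\Ric$ becoming strictly larger than the (constant-dimensional) range of $\delta^*$. This does not break your plan, since computing the matrices directly will reveal this anyway, but your stated mechanism is wrong.

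The main methodological difference from the paper: for $\rms 1$, $\rms l$ ($l\geq 2$), and $\rmv l$ ($l\geq 2$), the paper does not compute kernels of the full $6\times 6$, $7\times 7$, $3\times 3$ matrices directly. Instead it first writes down an explicit basis $h_1^*,\ldots$ for the annihilator of $\ran N_\bullet(\delta^*,\lambda+1)$ (these are the ``gauge-invariant quantities'' \`a la Kodama--Ishibashi), then expresses $N_\bullet(D\Ric,\lambda)$ as a map from the much smaller space of gauge-invariants (dimension $3$, $4$, $2$ respectively) into the tensor space. This collapses the determinant/rank analysis to checking linear independence of $3$ or $4$ columns, which is done by inspection of one or two rows. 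Your brute-force approach (determinants of the full matrices) gives the same answer but is more error-prone for $\rms l$, $l\geq 2$, where a $7\times 7$ polynomial determinant is unwieldy; the gauge-invariant reduction is what makes the computation tractable by hand.
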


When the quotient of $\ker N_\bullet(D\Ric,\lambda)$ by $\ran N_\bullet(\delta^*,\lambda+1)$ is nontrivial, this indicates the \emph{possibility} that there exists a homogeneous metric perturbation which is not pure gauge; in the $\rms 0$, $\rms 1$, and $\rmv 1$ cases, the quotient can contain nontrivial elements which are nonetheless pure gauge if the gauge potential is permitted to contain an additional logarithmic term. But even if one quotients out by the generalized range of $N(\delta^*,-)$, there remain certain nontrivial quotient spaces for all pure types. Certain ones have simple interpretations which will play an important role in~\S\ref{SAh}:
\begin{itemize}
\item for $\rms 0$ and $\lambda=-1$: linearized mass perturbations---compare $\frac{2}{r}(\dd t^2+\dd r^2)$ in part~\eqref{ItRicMSs0} with the first line in expression~\eqref{EqGKLot};
\item for $\rms 1$ and $\lambda=-2$: deformation tensors of translations on a mass $\neq 0$ Schwarzschild spacetime (which are \emph{not} deformation tensors on Minkowski space however). This is the reason for the choice $(-1,0,1,-1,-1,0)^T$ of basis in~\eqref{EqRicMSs1m2}: relative to the scalar type $1$ function $\scal=\scal(\hat c)=\hat c\cdot\frac{x}{|x|}$, this (times $r^{-2}$) is the leading order part of $h_{(2,0),\hat c}=\frac12\cL_{\hat c\cdot\pa_{\hat x}}\hat g_{2,0}$ where $\hat g_{2,0}$ is the mass $2$ Schwarzschild metric (though expressed in $x$-coordinates here), cf.\ \eqref{EqAhKBoostLot} and \eqref{EqAhKCokerCOMh2}; and
\item for $\rmv 1$ and $\lambda=-2$: linearized angular momentum perturbations---compare the tensor $4 r^{-2}\,\dd t\otimes_s r\vect$ with the last line in~\eqref{EqGKLot} (which is the only term in large square parentheses which is of vector type $1$, cf.\ the discussion after~\eqref{EqAhKCokerKPf3}).
\end{itemize}

\begin{proof}[Proof of Proposition~\usref{PropRicMS}]
  Since $N_\bullet(D\Ric,\lambda)$ and $N_\bullet(\delta^*,\lambda+1)$ are (finite-dimensional) matrices, this can in be verified by direct computation. For most pure types, we shall proceed in a mildly conceptual fashion by mimicking some of the arguments in \cite{KodamaIshibashiMaster} as in \cite[\S8]{HaefnerHintzVasyKerr}. We use Lemmas~\ref{LemmaMkYId} and \ref{LemmaMk}, Corollary~\ref{CorMkDiffDRic}, and the splittings~\eqref{EqMkYSplits0}--\eqref{EqMkYSplitvl} to obtain the expressions for the matrices below. (Recall that $\lambda$ is the power of $r=\rho^{-1}$ here, rather than of $\rho$ in Corollary~\ref{CorMkDiffDRic}.)

  \pfstep{Part~\eqref{ItRicMSs0}: $\rms 0$ tensors.} We note the explicit expressions
  \begin{align}
    N_{\rms 0}(\delta^*,\lambda+1)&=\frac14\begin{pmatrix} 2(\lambda+1) & 0 \\ -\lambda-1 & \lambda+1 \\ 0 & -2(\lambda+1) \\ 4 & -4 \end{pmatrix}, \nonumber\\
  \label{EqRicMSs0DRic}
    N_{\rms 0}(D\Ric,\lambda) &= \frac14\begin{pmatrix} 0 & -4\lambda & 0 & -\lambda(\lambda+1) \\ -\lambda(\lambda+1) & -2\lambda(\lambda-1) & -\lambda(\lambda+1) & \lambda(\lambda+1) \\ 0 & -4\lambda & 0 & -\lambda(\lambda+1) \\ 4(\lambda+1) & -8 & 4(\lambda+1) & -2(\lambda+1)(\lambda+2) \end{pmatrix}, \nonumber\\
    N_{\rms 0}(\delta\sfG,\lambda-2)&=\frac12\begin{pmatrix} -2\lambda & 4 & 0 & \lambda \\ 0 & -4 & 2\lambda & -\lambda \end{pmatrix}
  \end{align}
  For $\lambda\neq 0$, $N_{\rms 0}(\delta\sfG,\lambda-2)$ is surjective and thus has 2-dimensional kernel; since $N_{\rms 0}(D\Ric,\lambda)$ maps a supplementary subspace of $\ker N_{\rms 0}(D\Ric,\lambda)$ injectively into this kernel, we conclude that $\dim\ran N_{\rms 0}(D\Ric,\lambda)=4-\dim\ker N_{\rms 0}(D\Ric,\lambda)\leq 2$. But for $\lambda\neq-1,0$, the dimension of $\ran N_{\rms 0}(D\Ric,\lambda)$ is \emph{at least} 2. Thus, it must equal to $2$. Since for $\lambda\neq -1$, $\ran N_{\rms 0}(\delta^*,\lambda+1)$ is a 2-dimensional subspace of $\ker N_{\rms 0}(D\Ric,\lambda)$, we conclude that for $\lambda\neq -1,0$ we have $\ker N_{\rms 0}(D\Ric,\lambda)=\ran N_{\rms 0}(\delta^*,\lambda+1)$.

  For $\lambda=-1$, we have $\ker N_{\rms 0}(\delta^*,\lambda+1)=\mathspan\{(1,1)^T\}$ (in coordinates, this is $\ubar\delta^*(2\dd t)=0$), so $\ran N_{\rms 0}(\delta^*,\lambda+1)|_{\lambda=-1}=\mathspan\{(0,0,0,1)^T\}$ is a 1-dimensional subspace of
  \begin{equation}
  \label{EqRicMSs0m1}
    N_{\rms 0}(D\Ric,-1)=\mathspan\{(1,0,0,0)^T,(0,0,1,0)^T,(0,0,0,1)^T\}.
  \end{equation}
  We note that the range of $\pa_\lambda N_{\rms 0}(\delta^*,\lambda+1)|_{\lambda=-1}\colon(1,1)^T\mapsto\half(1,0,-1,0)^T$ (in coordinates, this is the equation $\ubar\delta^*(2\log r\,\dd t)=2 r^{-1}\dd t\otimes_s\dd r=\half r^{-1}((\dd x^0)^2-(\dd x^1)^2)$) spans a 1-dimensional subspace in the quotient $\ker N_{\rms 0}(D\Ric,-1)/\ran N_{\rms 0}(\delta^*,0)$, and a complementary subspace is spanned by $(1,0,1,0)^T$.

  For $\lambda=0$ finally, the quotient of
  \[
    \ker N_{\rms 0}(D\Ric,\lambda)=\mathspan\{(1,0,0,1)^T,(-1,0,1,0)^T,(2,1,0,0)^T\}
  \]
  by $\ran N_{\rms 0}(\delta^*,\lambda+1)|_{\lambda=0}=\mathspan\{(2,-1,0,4)^T,(0,1,-2,-4)^T\}$ is spanned e.g.\ by the image of $(1,1,1,0)^T$ in the quotient.

  \pfstep{Part~\eqref{ItRicMSs1}: $\rms 1$ tensors.} For $\lambda\neq -1$, the annihilator of the range of
  \[
    N_{\rms 1}(\delta^*,\lambda+1) = \frac14\begin{pmatrix} 2(\lambda+1) & 0 & 0 \\ -\lambda-1 & \lambda+1 & 0 \\ 2 & 0 & \lambda \\ 0 & -2(\lambda+1) & 0 \\ 0 & 2 & -\lambda \\ 4 & -4 & -4 \end{pmatrix}
  \]
  is spanned by
  \begin{alignat*}{6}
    h^*_1&=(1&&,\ 2&&,\ 0&&,\ 1&&,\ 0&&,\ 0), \\
    h^*_2&=\Bigl(-\frac{2}{\lambda+1}&&,\ {-}\frac{2}{\lambda+1}&&,\ 1&&,\ 0&&,\ 1&&,\ 0\Bigr), \\
    h^*_3&=\Bigl(-\frac{1}{\lambda+1}&&,\ \frac{\lambda}{\lambda+1}&&,\ 1&&,\ 0&&,\ 0&&,\ \frac{\lambda}{4}\Bigr).
  \end{alignat*}
  Thus, for $h\in\CI_{\rms 1}(\ff;\ubar\upbeta^*S^2 T^*_{(0,0)}\R^4)$, the value of $N(D\Ric,\lambda)h$ only depends on the gauge-invariant quantities $(X,Y,Z):=(h^*_1(h),h^*_2(h),h^*_3(h))$. Since $(X,Y,Z)=(1,0,0)$, $(0,1,0)$, $(0,0,1)$ for $h=(0,0,0,1,0,0)^T$, $(0,0,0,0,1,0)^T$, $(-\lambda-1,0,0,\lambda+1,-2,0)^T$, respectively, one computes using
  \begin{equation}
  \label{EqRicMSs1DRic}
    N_{\rms 1}(D\Ric,\lambda) = \frac12
      \begin{reduce}
      \openbigpmatrix{2pt}
        2 & {-}2\lambda & {-}2(\lambda{+}1) & 0 & 0 & {-}\frac12\lambda(\lambda{+}1) \\
        {-}\frac12\lambda(\lambda{+}1) & {-}(\lambda{-}2)(\lambda{+}1) & \lambda{+}1 & {-}\frac12\lambda(\lambda{+}1) & {-}\lambda{-}1 & \frac12\lambda(\lambda{+}1) \\
        \lambda & \lambda{-}2 & {-}\frac12(\lambda^2{+}\lambda{+}2) & 0 & {-}\frac12(\lambda{-}1)(\lambda{+}2) & {-}\frac12\lambda \\
        0 & {-}2\lambda & 0 & 2 & 2(\lambda{+}1) & {-}\frac12\lambda(\lambda{+}1) \\
        0 & {-}\lambda{+}2 & {-}\frac12(\lambda{-}1)(\lambda{+}2) & {-}\lambda & {-}\frac12(\lambda^2{+}\lambda{+}2) & \frac12\lambda \\
        2(\lambda{+}1) & {-}8 & {-}2(\lambda{+}3) & 2(\lambda{+}1) & 2(\lambda{+}3) & {-}\lambda(\lambda{+}3)
      \closebigpmatrix
      \end{reduce}
  \end{equation}
  the matrix of $N(D\Ric,\lambda)$ in terms of $(X,Y,Z)$ (for the columns) and the splitting~\eqref{EqMkYSplits1} (for the rows) to be
  \[
    N_{\rms 1}(D\Ric,\lambda) =
      \frac12
      \begin{pmatrix}
        0 & 0 & -2(\lambda+1) \\
        -\frac12\lambda(\lambda+1) & -\lambda-1 & 2(\lambda+1) \\
        0 & -\frac12(\lambda-1)(\lambda+2) & -2 \\
        2 & 2(\lambda+1) & -2(\lambda+1) \\
        -\lambda & -\frac12(\lambda^2+\lambda+2) & 2 \\
        2(\lambda+1) & 2(\lambda+3) & -4(\lambda+3)
      \end{pmatrix}.
  \]
  Since $\lambda\neq -1$, the third column is linearly independent from the first two (cf.\ the $(1,3)$ entry). The second column is $\lambda+1$ times the first (cf.\ the fourth row) if and only if $\lambda=-2,1$. We conclude that for $\lambda\neq -2,-1,1$, $N_{\rms 1}(D\Ric,\lambda)h=0$ implies $(X,Y,Z)=0$ and thus $h\in\ran N_{\rms 1}(\delta^*,\lambda+1)$. For $\lambda=-2$, resp.\ $\lambda=1$, the kernel of $N_{\rms 1}(D\Ric,\lambda)$ modulo $\ran N_{\rms 1}(\delta^*,\lambda+1)$ is spanned by $h$ with $(X,Y,Z)=(1,1,0)$, resp.\ $(X,Y,Z)=(2,-1,0)$ e.g.\ $h=\half(1,0,-1,1,1,0)^T$, resp.\ $h=(0,0,0,2,-1,0)^T$.

  For $\lambda=-1$, $N_{\rms 1}(\delta^*,\lambda+1)$ has 1-dimensional kernel spanned by $(\frac12,-\frac12,1)^T$ (corresponding in coordinates to the fact that translation 1-forms $\dd(r\scal)$, $\scal\in\scalspace_1$, are Killing 1-forms on Minkowski space). One computes
  \begin{align*}
    \ran N_{\rms 1}(\delta^*,\lambda+1)|_{\lambda=-1} &= \mathspan\{ (0,0,1,0,0,2)^T, (0,0,1,0,1,0)^T \}, \\
    \ker N_{\rms 1}(D\Ric,\lambda)|_{\lambda=-1} &= \mathspan\{ (0,0,1,0,0,2)^T, (0,0,1,0,1,0)^T, (1,-1,1,1,-1,0)^T \},
  \end{align*}
  so the quotient space is 1-dimensional; but
  \[
    4\pa_\lambda N_{\rms 1}(\delta^*,\lambda+1)\Bigl(\frac12,-\frac12,1\Bigr)^T=(1,-1,1,1,-1,0)^T
  \]
  (in coordinates $\ubar\delta^*((\log r)\dd(r\scal))=\frac{1}{r}\dd r\otimes_s\dd(r\scal)$) spans this quotient.

  \pfstep{Part~\eqref{ItRicMSsl}: $\rms l$ tensors, $l\geq 2$.} Now
  \[
    N_{\rms l}(\delta^*,\lambda+1)
      =\frac14\begin{pmatrix}
         2(\lambda+1) & 0 & 0 \\
         -\lambda-1 & \lambda+1 & 0 \\
         2 & 0 & \lambda \\
         0 & -2(\lambda+1) & 0 \\
         0 & 2 & -\lambda \\
         4 & -4 & -2 l(l+1) \\
         0 & 0 & 4
       \end{pmatrix}
  \]
  is injective for all $\lambda\in\C$. The annihilator of its range is spanned by\footnote{While we do not attempt to exactly parallel the arguments in~\cite[\S8.1.1]{HaefnerHintzVasyKerr} here, we do remark that the choice of $h^*_4$ (which in particular tests the spherical pure trace and trace free parts of a symmetric 2-tensor) is inspired by the definition of the gauge-invariant quantity $J$ in~\cite[Equation~(8.8)]{HaefnerHintzVasyKerr}, and similarly the choice of $(h^*_1,h^*_2,h^*_3)$ (which tests for the non-spherical part of a symmetric 2-tensor) is inspired by $\wt F$ in the reference.}
  \begin{alignat*}{8}
    h^*_1&=\Bigl(1&&,\ 0&&,\ -\lambda-1&&,\ 0&&,\ 0&&,\ 0&&,\ \frac{\lambda(\lambda+1)}{4}\Bigr), \\
    h^*_2&=\Bigl(0&&,\ 1&&,\ \frac{\lambda+1}{2} &&,\ 0&&,\ -\frac12(\lambda+1)&&,\ 0&&,\ -\frac{\lambda(\lambda+1)}{4}\Bigr), \\
    h^*_3&=\Bigl(0&&,\ 0&&,\ 0&&,\ 1&&,\ \lambda+1&&,\ 0&&,\ \frac{\lambda(\lambda+1)}{4}\Bigr), \\
    h^*_4&=\Bigl(0&&,\ 0&&,\ -2&&,\ 0&&,\ 2&&,\ 1&&,\ \frac{l(l+1)}{2}+\lambda\Bigr).
  \end{alignat*}
  Given $h\in\CI_{\rms l}(\ff;\ubar\upbeta^*S^2 T^*_{(0,0)}\R^4)$, the vector $(X,Y,Z,J):=(h_1^*(h),h_2^*(h),h_3^*(h),h_4^*(h))$ is thus gauge-invariant. Noting that $(1,0,0,0,0,0,0)^T$, $(0,1,0,0,0,0,0)^T$, $(0,0,0,1,0,0,0)^T$, $(0,0,0,0,0,1,0)^T$ is a dual basis (i.e.\ the matrix of evaluations against $h_1^*,\ldots,h_4^*$ is the identity matrix), the action of
    \begin{align}
    \label{EqRicMSslDRic}
    &2 N_{\rms l}(D\Ric,\lambda) \\
    &\begin{reduce2}=
      \openbigpmatrix{3pt}
        l(l{+}1) & {-}2\lambda & {-}l(l{+}1)(\lambda{+}1) & 0 & 0 & {-}\frac12\lambda(\lambda{+}1) & 0 \\
        {-}\frac12\lambda(\lambda{+}1) & -(\lambda{-}l{-}1)(\lambda{+}l) & \frac12 l(l{+}1)(\lambda{+}1) & {-}\frac12\lambda(\lambda{+}1) & {-}\frac12 l(l{+}1)(\lambda{+}1) & \frac12\lambda(\lambda{+}1) & 0 \\
        \lambda & \lambda{-}2 & {-}\frac12(\lambda^2{+}\lambda{+}2) & 0 & {-}\frac12(\lambda{-}1)(\lambda{+}2) & {-}\frac{\lambda}{2} & {-}\frac14(l{-}1)(l{+}2)\lambda \\
        0 & {-}2\lambda & 0 & l(l{+}1) & l(l{+}1)(\lambda{+}1) & {-}\frac12\lambda(\lambda{+}1) & 0 \\
        0 & {-}\lambda{+}2 & {-}\frac12(\lambda{-}1)(\lambda{+}2) & {-}\lambda & {-}\frac12(\lambda^2{+}\lambda{+}2) & \frac{\lambda}{2} & \frac14(l{-}1)(l{+}2)\lambda \\
        2(\lambda{+}1) & {-}2(l^2{+}l{+}2) & {-}l(l{+}1)(\lambda{+}3) & 2(\lambda{+}1) & l(l{+}1)(\lambda{+}3) & {-}(\lambda{-}l{+}1)(\lambda{+}l{+}2) & \frac12(l{-}1)l(l{+}1)(l{+}2) \\
        0 & 4 & 2(\lambda{+}1) & 0 & {-}2(\lambda{+}1) & 0 & {-}\lambda(\lambda{+}1)
      \closebigpmatrix
      \end{reduce2} \nonumber
  \end{align}
  on $h$ is thus given, in terms of $(X,Y,Z,J)$ (for the columns) and the splitting~\eqref{EqMkYSplitsl} (for the rows), by the first, second, fourth, and sixth columns, so
  \[
    2 N_{\rms l}(D\Ric,\lambda) = 
      \openbigpmatrix{2pt}
        l(l{+}1) & {-}2\lambda & 0 & {-}\frac12\lambda(\lambda{+}1) \\
        {-}\frac12\lambda(\lambda{+}1) & -(\lambda{-}l{-}1)(\lambda{+}l) & {-}\frac12\lambda(\lambda{+}1) & \frac12\lambda(\lambda{+}1)  \\
        \lambda & \lambda{-}2 & 0 & {-}\frac{\lambda}{2} \\
        0 & {-}2\lambda & l(l{+}1) & {-}\frac12\lambda(\lambda{+}1) \\
        0 & {-}\lambda{+}2 & {-}\lambda & \frac{\lambda}{2} \\
        2(\lambda{+}1) & {-}2(l^2{+}l{+}2) & 2(\lambda{+}1) & {-}(\lambda{-}l{+}1)(\lambda{+}l{+}2) \\
        0 & 4 & 0 & 0
      \closebigpmatrix.
  \]
  We need to determine for which $\lambda\in\C$ this matrix has a nontrivial nullspace. Note that the second column is linearly independent from the span of the other three columns which we denote $r_1,r_3,r_4$. When $\lambda=0$, then $r_1$ is linearly independent of $\mathspan\{r_3,r_4\}$ (cf.\ the $(1,1)$ entry), and $r_3,r_4$ are linearly independent as well (cf.\ the fourth row). When $\lambda\neq 0$, then $a r_1+b r_3+2 c r_4=0$ implies $(b-c)\lambda=0$ (fifth row), so $b=c$, and furthermore $b l(l+1)=c\lambda(\lambda+1)$ (fourth row), so $\lambda=-l-1,l$. And indeed for $\lambda=-l-1,l$, tensors with $(X,Y,Z,J)=(1,0,1,2)$ (for either value of $\lambda$) lie in the kernel of $N_{\rms l}(D\Ric,\lambda)$.

  \pfstep{Part~\eqref{ItRicMSv1}: $\rmv 1$ tensors.} We have
  \begin{equation}
  \label{EqRicMSv1DRic}
    N_{\rmv 1}(\delta^*,\lambda+1) = \frac14 \begin{pmatrix} \lambda \\ -\lambda \end{pmatrix},\qquad
    N_{\rmv 1}(D\Ric,\lambda) = -\frac{(\lambda-1)(\lambda+2)}{4} \begin{pmatrix} 1 & 1 \\ 1 & 1 \end{pmatrix}.
  \end{equation}
  For $\lambda\neq -2,1$, $\ker N_{\rmv 1}(D\Ric,\lambda)$ is therefore 1-dimensional and spanned by $(1,-1)^T$, which for $\lambda\neq 0$ spans the range of $N_{\rmv 1}(\delta^*,\lambda+1)$. Turning to the exceptional values, note that for $\lambda=-2,1$, the quotient space $\ker N_{\rmv 1}(D\Ric,\lambda)/\ran N_{\rmv 1}(\delta^*,\lambda+1)$ is spanned by $\{(1,1)^T\}$. (Since $N_{\rmv 1}(\delta^*,\lambda+1)$ is injective, quotienting out by the generalized range gives the same space.) For $\lambda=0$, the quotient space is 1-dimensional (and spanned by $(1,-1)^T$), but in view of $\ran(\pa_\lambda N_{\rmv 1}(\delta^*,\lambda+1))=\mathspan\{(1,-1)^T\}$ the quotient by $R_{\rmv 1,[1]}(N_{\rmv 1}(\delta^*,\lambda+1))$ is trivial.

  \pfstep{Part~\eqref{ItRicMSvl}: $\rmv l$ tensors, $l\geq 2$.} Similarly to the scalar type $l\geq 2$ case, the operator
  \[
    N_{\rmv l}(\delta^*,\lambda+1) = \frac14 \begin{pmatrix} \lambda \\ -\lambda \\ 4 \end{pmatrix}
  \]
  is now always injective; the annihilator of its range is spanned by $h_1^*=(1,0,-\frac14\lambda)^T$, $h_2^*=(0,1,\frac14\lambda)$. A dual basis being $(1,0,0)$, $(0,1,0)$, we can compute the action of
  \begin{equation}
  \label{EqRicMSvlDRic}
    -4 N_{\rmv l}(D\Ric,\lambda) = \begin{pmatrix} {-}2(l^2{+}l{-}1){+}\lambda(\lambda{+}1) & (\lambda{-}1)(\lambda{+}2) & \frac12(l{-}1)(l{+}2)\lambda \\ (\lambda{-}1)(\lambda{+}2) & {-}2(l^2{+}l{-}1){+}\lambda(\lambda{+}1) & {-}\frac12(l{-}1)(l{+}2)\lambda \\ {-}4(\lambda{+}1) & 4(\lambda{+}1) & 2\lambda(\lambda{+}1) \end{pmatrix}
  \end{equation}
  on a tensor $h$ in terms of $(X,Y)=(h_1^*(h),h_2^*(h))$ (for the columns) to be given by
  \[
    -4 N_{\rmv l}(D\Ric,\lambda) = \begin{pmatrix} {-}2(l^2{+}l{-}1){+}\lambda(\lambda{+}1) & (\lambda{-}1)(\lambda{+}2) \\ (\lambda{-}1)(\lambda{+}2) & {-}2(l^2{+}l{-}1){+}\lambda(\lambda{+}1) \\ {-}4(\lambda{+}1) & 4(\lambda{+}1) \end{pmatrix}
  \]
  For $\lambda=-1$, this has trivial nullspace (using $l^2+l-1\neq\pm 1$). For $\lambda\neq -1$, the only possible nontrivial linear combination of the two columns which gives the zero vector is (up to scalar multiples) their sum (cf.\ the third row), i.e.\ $(X,Y)=(1,1)$ (corresponding e.g.\ to $h=(1,1,0)^T$). This indeed vanishes if $0=-2(l^2+l-1)+\lambda(\lambda+1)+(\lambda-1)(\lambda+2)=2(\lambda-l)(\lambda+l+1)$, which has the solutions $\lambda=-l-1,l$. This implies the claim.
\end{proof}

What this result does not yet treat is the generalized nullspace of $N(D\Ric,\lambda)$, i.e.\ the possibility and structure of solutions of $D_{\ubar g}\Ric(\sum_{j=0}^k r^\lambda (\log r)^j h_j)=0$ with $k\geq 1$. Note that for such solutions, the leading order term $h_k$ necessarily lies in $\ker N(D\Ric,\lambda)$. We will prove in Proposition~\ref{PropRicK} that leading order terms for $k\geq 1$ are always pure gauge; this is thus a result on the `semi-simplicity modulo pure gauge' for the (generalized) nullspace of $N(D\Ric,\lambda)$.

In this analysis of the generalized nullspace of $N(D\Ric,\lambda)$, we will use duality and pairing arguments. The fiber inner products are induced by the Minkowski metric. Making explicit the volume density used for the definition of adjoints (and recalling that the Minkowskian volume density is $r^2\dd r$ times the standard density on $\Sph^2$), we then note that
\begin{equation}
\label{EqRicAdj}
\begin{alignedat}{2}
  N(D\Ric,\lambda)^* &:= N(r^2\wh{D_{\ubar g}\Ric}(0),\lambda)^* &&= \bigl(r^{-\lambda} r^2 \wh{D_{\ubar g}\Ric}(0) r^\lambda\bigr)^{*,|\frac{\dd r}{r}|} \\
    &= r^3 \bigl(r^{-\lambda} r^2 \wh{D_{\ubar g}\Ric}(0) r^\lambda\bigr)^{*,|r^2\dd r|}r^{-3} &&= r^3 \bigl( r^{\bar\lambda} \wh{D_{\ubar g}\Ric}(0)^* r^2 r^{-\bar\lambda}\bigr) r^{-3} \\
    &= r^{1+\bar\lambda} r^2\wh{D_{\ubar g}\Ric}(0)^* r^{-1-\bar\lambda} &&= N\bigl(r^2 \ubar\sfG\wh{D_{\ubar g}\Ric}(0)\ubar\sfG, -1-\bar\lambda\bigr) \\
    &= \sfG\circ N(D\Ric,-1-\bar\lambda)\circ\sfG,
\end{alignedat}
\end{equation}
and similarly
\begin{equation}
\label{EqRicAdjDel}
\begin{alignedat}{2}
  N(\delta^*,\lambda+1)^* &= N(\delta,-3-\bar\lambda) &&= N(\delta\sfG,-3-\bar\lambda)\circ\sfG, \\
  N(\delta\sfG,\lambda-2)^* &= \sfG\circ N(\delta^*,-\bar\lambda) &&= N(\sfG\delta^*,-\bar\lambda).
\end{alignedat}
\end{equation}
Thus, Proposition~\ref{PropLA2} asserts that for every pure type $\bullet\in\{\rms l,\,\rmv l\}$, the pairings
\begin{equation}
\label{EqRicPair}
\begin{split}
  &\bigl( K_{\bullet,[j]}(\delta\sfG,\lambda-2) / R_{\bullet,[i]}(D\Ric,\lambda) \bigr) \times \bigl( K_{\bullet,[i]}(D\Ric,-1-\bar\lambda) / R_{\bullet,[j]}(\delta^*,-\bar\lambda) \bigr) \\
  &\qquad \ni (h,h^*) \mapsto \la h,\sfG h^*\ra \in \C
\end{split}
\end{equation}
are non-degenerate.

\begin{prop}[Restricted kernels are pure gauge]
\fakephantomsection
\label{PropRicK}
  \begin{enumerate}
  \item\label{ItRicKs0}{\rm (Scalar type $0$.)} We have
    \begin{subequations}
    \begin{align}
    \label{EqRicKs0m1}
    \begin{split}
      K_{\rms 0,[1]}(D\Ric,-1) &= \bigl\{ h\in\ker N_{\rms 0}(D\Ric,-1) \colon \pa_\lambda N_{\rms 0}(D\Ric,-1)h \in \ran N_{\rms 0}(D\Ric,0) \bigr\} \\
        &= R_{\rms 0,[1]}(\delta^*,0),
    \end{split} \\
    \label{EqRicKs00}
      K_{\rms 0,[1]}(D\Ric,0) &= R_{\rms 0,[0]}(\delta^*,1).
    \end{align}
    \end{subequations}
    Moreover,
    \begin{equation}
    \label{EqRicKs0m1R0}
      K_{\rms 0,[1]}(D\Ric,-1)/R_{\rms 0,[0]}(\delta^*,0) = \mathspan\{ \pa_\lambda N(\delta^*,0)(1,1)^T \}
    \end{equation}
    (in local coordinates, $\pa_\lambda N(\delta^*,0)(1,1)^T$ is $\ubar\delta^*(2(\log r)\dd t)=2 r^{-1}\,\dd r\otimes_s\dd t$).
  \item\label{ItRicKs1}{\rm (Scalar type $1$.)} We have
    \begin{subequations}
    \begin{align}
    \label{EqRicKs1m2}
      K_{\rms 1,[1]}(D\Ric,-2) &= R_{\rms 1,[0]}(\delta^*,-1), \\
    \label{EqRicKs11}
      K_{\rms 1,[1]}(D\Ric,1) &= R_{\rms 1,[0]}(\delta^*,2).
    \end{align}
    \end{subequations}
  \item\label{ItRicKsl}{\rm (Scalar type $l\geq 2$.)} We have $K_{\rms l,[1]}(D\Ric,\lambda)=R_{\rms l,[0]}(\delta^*,\lambda+1)$ for $\lambda=-l-1,l$.
  \item\label{ItRicKv1}{\rm (Vector type $1$.)} We have $K_{\rmv 1,[1]}(D\Ric,\lambda)=R_{\rmv l,[0]}(\delta^*,\lambda+1)$ for $\lambda=-2,1$.
  \item\label{ItRicKvl}{\rm (Vector type $l\geq 2$.)} We have $K_{\rmv l,[1]}(D\Ric,\lambda)=R_{\rmv l,[0]}(\delta^*,\lambda+1)$ for $\lambda=-l-1,l$.
  \end{enumerate}
\end{prop}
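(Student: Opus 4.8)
The statement to be proved, Proposition~\ref{PropRicK}, asserts that for every exceptional value of $\lambda$ identified in Proposition~\ref{PropRicMS}, the \emph{restricted kernel} $K_{\bullet,[1]}(D\Ric,\lambda)$ coincides with the (generalized) range $R_{\bullet,[0]}(\delta^*,\lambda+1)$ of the pure gauge map, with the single extra generator $\pa_\lambda N(\delta^*,0)(1,1)^T$ appearing in the $\rms 0$, $\lambda=-1$ case when one only quotients by the honest range $R_{\rms 0,[0]}(\delta^*,0)$. The inclusion $R_{\bullet,[0]}(\delta^*,\lambda+1)\subseteq K_{\bullet,[1]}(D\Ric,\lambda)$ is the easy half: it follows from Proposition~\ref{PropLA2} (the general inclusion $R_{[j]}(A,\lambda_0)\subseteq K_{[i]}(B,\lambda_0)$ applied with $A=\delta^*$, $B=D\Ric$ shifted in $\lambda$, using $D_{\ubar g}\Ric\circ\ubar\delta^*=0$), or even more simply by the observation that if $\delta^*(r^{\lambda+1}\omega_0)=r^\lambda h_0$ then $D\Ric$ annihilates $r^\lambda h_0$ identically, so $r^\lambda h_0$ certainly belongs to the restricted kernel. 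So the content is the reverse inclusion $K_{\bullet,[1]}(D\Ric,\lambda)\subseteq R_{\bullet,[0]}(\delta^*,\lambda+1)$ (modulo the extra $\rms 0$ term).

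The plan is to derive the reverse inclusion purely from the duality pairing~\eqref{EqRicPair}, avoiding any direct construction of logarithmic solutions. Applying Proposition~\ref{PropLA2}, or rather its packaged form Corollary~\ref{CorLASpecial}, with $i=1$, $j=0$ and with the identifications~\eqref{EqRicAdj}--\eqref{EqRicAdjDel} for the adjoints, gives a nondegenerate pairing between $K_{\bullet,[1]}(D\Ric,\lambda)/R_{\bullet,[0]}(\delta^*,\lambda+1)$ and the space $K_{\bullet,[0]}(\delta\sfG,(-1-\bar\lambda)-2)/R_{\bullet,[1]}(D\Ric,-1-\bar\lambda)$ — equivalently, by~\eqref{EqRicAdjDel}, $\ker N_\bullet(\sfG\delta^*,-\bar\lambda)$ appears on the dual side. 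Thus the restricted kernel modulo pure gauge is trivial (for the non-$\rms 0$, or non-$\lambda=-1$, cases) if and only if the dual quotient is trivial, i.e.\ every homogeneous-degree-$(-1-\bar\lambda)$ element of $\ker N_\bullet(\delta\sfG,\cdot)$ already lies in the range of $N_\bullet(D\Ric,-1-\bar\lambda)$. Concretely: for each exceptional $\lambda$ one computes the (small, explicit) matrices $N_\bullet(D\Ric,-1-\lambda)$ and $N_\bullet(\delta\sfG,-3-\lambda)$ — these are already written out in the proof of Proposition~\ref{PropRicMS}, e.g.\ \eqref{EqRicMSs0DRic}, \eqref{EqRicMSs1DRic}, \eqref{EqRicMSslDRic}, \eqref{EqRicMSv1DRic}, \eqref{EqRicMSvlDRic} together with the corresponding $N_\bullet(\delta^*,\cdot)$ and $N_\bullet(\delta\sfG,\cdot)$ blocks — and checks that $\dim\ker N_\bullet(\delta\sfG,-3-\lambda)=\dim\ran N_\bullet(D\Ric,-1-\lambda)$, which by the rank–nullity count $\dim\ran N_\bullet(D\Ric,\mu)=\dim(\text{ambient})-\dim\ker N_\bullet(D\Ric,\mu)$ and the (elementary $i=j=0$) pairing amounts to checking that the zeroth-order cokernel of $N_\bullet(D\Ric,\cdot)$ at the dual point is spanned by the expected dual pure gauge tensors. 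For $\rms l$ and $\rmv l$ with $l\ge 2$ this is where the gauge-invariant quantities $(X,Y,Z,J)$ (resp.\ $(X,Y)$) from the proof of Proposition~\ref{PropRicMS} do the work: one verifies that the $1$-dimensional kernel of $N_\bullet(D\Ric,\lambda)$ mod pure gauge, whose generator was exhibited there, is \emph{not} in the restricted kernel by pairing it against the corresponding dual generator and getting a nonzero number. For the $\rms 0$, $\rms 1$, $\rmv 1$ cases at the `middle' exceptional values ($\lambda=-1$ for $\rms 0$ and $\rms 1$, $\lambda=0$ for $\rmv 1$) one uses instead the already-established facts~\eqref{EqRicMSs1m1Triv} and \eqref{EqRicMSv10Triv}: the nontrivial element of $\ker N_\bullet(D\Ric,\lambda)$ mod $\ran N_\bullet(\delta^*,\lambda+1)$ is \emph{itself} of the form $\pa_\lambda N_\bullet(\delta^*,\cdot)(\cdot)$, i.e.\ it lies in $R_{\bullet,[1]}(\delta^*,\cdot)$, so that $K_{\bullet,[1]}(D\Ric,\lambda)/R_{\bullet,[1]}(\delta^*,\cdot)$ is already trivial, which forces $K_{\bullet,[1]}(D\Ric,\lambda)\subseteq R_{\bullet,[1]}(\delta^*,\cdot)$; and in the $\rms 0$, $\lambda=-1$ case one keeps track that $R_{\rms 0,[1]}(\delta^*,0)=R_{\rms 0,[0]}(\delta^*,0)\oplus\mathspan\{\pa_\lambda N(\delta^*,0)(1,1)^T\}$, yielding the formula~\eqref{EqRicKs0m1R0}. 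The $\rms 0$, $\lambda=0$ identity~\eqref{EqRicKs00} and the $\rms 1$ identities~\eqref{EqRicKs1m2}--\eqref{EqRicKs11} at the `outer' exceptional values are handled like the $l\ge 2$ cases, via the explicit $(X,Y,Z)$-matrices and a single nonvanishing pairing against the dual pure gauge tensor.

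The main obstacle I anticipate is bookkeeping rather than conceptual: one must be careful with the several shifts of $\lambda$ (the $r$- versus $\rho$-convention of Remark~\ref{RmkRicRho}, the $\lambda\mapsto-1-\bar\lambda$ and $\lambda\mapsto-3-\bar\lambda$ shifts in~\eqref{EqRicAdj}--\eqref{EqRicAdjDel}, and the $\lambda+1$, $\lambda-2$ offsets built into the complex~\eqref{EqRicSeq}), and with the $r$-weight conventions of Definition~\ref{DefMkRweight} that enter the fiber inner products~\eqref{EqMkMinkInner} used to form all the adjoints. A secondary subtlety is that in the $\rms 0$, $\rms 1$, $\rmv 1$ cases a single application of the $i=1$, $j=0$ pairing is not by itself conclusive — one genuinely needs the refined facts~\eqref{EqRicMSs1m1Triv}, \eqref{EqRicMSv10Triv} (and their $\rms 0$ analogue from part~\eqref{ItRicMSs0} of Proposition~\ref{PropRicMS}) that the surviving quotient element is a $\log$-enhanced pure gauge solution, so that the restricted kernel collapses into $R_{\bullet,[1]}(\delta^*,\cdot)$. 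Everything else reduces to finite-dimensional linear algebra with matrices already displayed in the excerpt, so no new analytic input is required; the gauge-invariant-quantity reformulation from the proof of Proposition~\ref{PropRicMS} should make each individual verification short.
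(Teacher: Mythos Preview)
Your eventual strategy --- show that the generator $h$ of the ($1$-dimensional, by Proposition~\ref{PropRicMS}) quotient $K_{\bullet,[0]}(D\Ric,\lambda)/R_{\bullet,[\cdot]}(\delta^*,\lambda+1)$ fails to lie in $K_{\bullet,[1]}$ by computing a single nonzero pairing against a dual generator --- is correct and is exactly what the paper does. The paper's route is more direct than yours, however: rather than passing to the $(\delta\sfG, D\Ric)$ side of the duality~\eqref{EqRicPair}, it derives the characterization (labeled~\eqref{EqRicKChar}) that $h\in K_{\bullet,[1]}(D\Ric,\lambda)$ if and only if $\langle \pa_\lambda N_\bullet(D\Ric,\lambda)h,\sfG h^*\rangle=0$ for all $h^*$ in $K_{\bullet,[0]}(D\Ric,-1-\bar\lambda)/R_{\bullet,[0]}(\delta^*,-\bar\lambda)$, after first observing that $\pa_\lambda N_\bullet(D\Ric,\lambda)h\in K_{\bullet,[0]}(\delta\sfG,\lambda-2)$ automatically so that the $i=j=0$ pairing applies. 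Since both quotients are $1$-dimensional with explicit generators from Proposition~\ref{PropRicMS}, each case reduces to evaluating one inner product using the displayed matrices and the fiber inner product~\eqref{EqMkMinkInner}; e.g.\ $32\pi$ for $\rms 0$ at $\lambda=-1$, $-48\pi$ for $\rms 1$ at $\lambda=-2$, $16(2l+1)\pi$ for $\rms l$ ($l\geq 2$), etc.

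Two points in your proposal are off. First, the dimension claim $\dim\ker N_\bullet(\delta\sfG,-3-\lambda)=\dim\ran N_\bullet(D\Ric,-1-\lambda)$ is false at the exceptional values: the dual point $-1-\bar\lambda$ is itself exceptional, and by the $i=j=0$ pairing together with Proposition~\ref{PropRicMS} these dimensions differ by $1$ there. What you actually need is that the \emph{generalized} range $R_{\bullet,[1]}(D\Ric,-1-\bar\lambda)$ fills the gap, which is the content of the pairing computation, not a rank count. Second, the ``middle'' exceptional values $\lambda=-1$ for $\rms 1$ and $\lambda=0$ for $\rmv 1$ do not appear in the statement of the Proposition and need no treatment here --- for those, Proposition~\ref{PropRicMS} already gives $K_{\bullet,[0]}=R_{\bullet,[1]}$, which a fortiori contains $K_{\bullet,[1]}$. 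Only the $\rms 0$, $\lambda=-1$ case genuinely requires $R_{[1]}$ rather than $R_{[0]}$ on the right; the paper handles it by the same pairing computation with $N=1$ in~\eqref{EqRicKChar}, and~\eqref{EqRicKs0m1R0} then follows since $R_{\rms 0,[1]}(\delta^*,0)/R_{\rms 0,[0]}(\delta^*,0)$ is spanned by $\pa_\lambda N(\delta^*,0)(1,1)^T$.
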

\begin{proof}
  Let $h\in K_{\bullet,[0]}(D\Ric,\lambda)=\ker N_\bullet(D\Ric,\lambda)$. In order to find a condition whether $h\in K_{\bullet,[1]}(D\Ric,\lambda)$, i.e.\ $\pa_\lambda N_\bullet(D\Ric,\lambda)h\in\ran N_\bullet(D\Ric,\lambda)$, note that
  \[
    \pa_\lambda N_\bullet(D\Ric,\lambda)h\in K_{\bullet,[0]}(\delta\sfG,\lambda-2)
  \]
  since on $\ker N_\bullet(D\Ric,\lambda)$ we have
  \[
    N_\bullet(\delta\sfG,\lambda-2)\circ\pa_\lambda N_\bullet(D\Ric,\lambda)=-\pa_\lambda N_\bullet(\delta\sfG,\lambda-2)\circ N_\bullet(D\Ric,\lambda)=0.
  \]
  Thus, we have a well-defined linear map
  \begin{equation}
  \label{EqRicKPair}
    \bigl(K_{\bullet,[0]}(D\Ric,-1-\bar\lambda) / R_{\bullet,[0]}(\delta^*,-\bar\lambda) \bigr) \ni h^* \mapsto \la \pa_\lambda N_\bullet(D\Ric,\lambda)h,\sfG h^*\ra.
  \end{equation}
  By~\eqref{EqRicPair}, this is the zero map if and only if $\pa_\lambda N_\bullet(D\Ric,\lambda)h\in R_{\bullet,[0]}(D\Ric,\lambda)$, i.e.\ $h\in K_{\bullet,[1]}(D\Ric,\lambda)$. Proposition~\ref{PropRicMS} describes the spaces of inputs $h^*$ in the pairing~\eqref{EqRicKPair}.

  Now,~\eqref{EqRicKPair} depends only on the equivalence class of $h$ in $K_{\bullet,[0]}(D\Ric,\lambda)/R_{\bullet,[N]}(\delta^*,\lambda+1)$ where $N\in\N_0$ is arbitrary: this follows from the fact that for $\omega(\mu+1)=\sum_{j=0}^N(\mu-\lambda)^{-j-1}\omega_j$ such that $N_\bullet(\delta^*,\mu+1)\omega(\mu+1)$ is holomorphic at $\mu=\lambda$ we have
  \begin{align*}
    &\big\la \pa_\mu N_\bullet(D\Ric,\mu) \bigl( N_\bullet(\delta^*,\mu+1)\omega(\mu+1) \bigr), \sfG h^*\big\ra|_{\mu=\lambda} \\
    &\quad = -\big\la N_\bullet(D\Ric,\lambda) \pa_\mu\bigl( N_\bullet(\delta^*,\mu+1)\omega(\mu+1) \bigr), \sfG h^*\big\ra|_{\mu=\lambda} \\
    &\quad = \big\la \pa_\lambda\bigl( N_\bullet(\delta^*,\lambda+1)\omega(\lambda+1) \bigr), \sfG N_\bullet(D\Ric,-1-\bar\lambda)h^*\big\ra|_{\mu=\lambda} \\
    &\quad = 0.
  \end{align*}
  We have thus shown that
  \begin{equation}
  \label{EqRicKChar}
  \begin{split}
    &K_{\bullet,[1]}(D\Ric,\lambda) / R_{\bullet,[N]}(\delta^*,\lambda+1) \\
    &\quad = \Bigl\{ [h] \in K_{\bullet,[0]}(D\Ric,\lambda) / R_{\bullet,[N]}(\delta^*,\lambda+1) \colon \la \pa_\lambda N_\bullet(D\Ric,\lambda)h,\sfG h^*\ra=0 \\
    &\quad \hspace{14em} \forall h^*\in K_{\bullet,[0]}(D\Ric,-1-\bar\lambda) / R_{\bullet,[0]}(\delta^*,-\bar\lambda) \Bigr\}.
  \end{split}
  \end{equation}
  We use this characterization to show that $K_{\bullet,[1]}(D\Ric,\lambda)=R_{\bullet,[N]}(\delta^*,\lambda+1)$ for various values of $\bullet$, $\lambda$, and $N$.

  \pfstep{Part~\eqref{ItRicKs0}: $\rms 0$ tensors.} Consider first~\eqref{EqRicKs0m1}, and $\lambda=-1$ in~\eqref{EqRicKChar}. By Proposition~\ref{PropRicMS}\eqref{ItRicMSs0} it suffices to check that for
  \[
    h:=(1,0,1,0)^T\in K_{\rms 0,[0]}(D\Ric,-1),\qquad
    h^*=(1,1,1,0)^T\in K_{\rms 0,[0]}(D\Ric,0),
  \]
  we have $\la\pa_\lambda N_{\rms 0}(D\Ric,-1)h,\sfG h^*\ra\neq 0$. The matrix of the fiber inner product on $\rms 0$ sections of $\ubar\upbeta^*S^2 T^*_{(0,0)}\R^4$ tensors can be read off from~\eqref{EqMkMinkInner} and is
  \[
    \begin{pmatrix} 0 & 0 & 4 & 0 \\ 0 & 8 & 0 & 0 \\ 4 & 0 & 0 & 0 \\ 0 & 0 & 0 & 2 \end{pmatrix}.
  \]
  Using the expression~\eqref{EqRicMSs0DRic}, we thus compute
  \begin{align*}
    &\big\la \pa_\lambda N_{\rms 0}(D\Ric,-1)(1,0,1,0)^T, \sfG(1,1,1,0)^T \big\ra_{L^2(\Sph^2)} \\
    &\qquad = \Big\la \Bigl(0,\frac12,0,2\Bigr)^T, (1,0,1,2)^T \Big\ra_{L^2(\Sph^2)} = 32\pi \neq 0.
  \end{align*}

  The statement~\eqref{EqRicKs00} follows from the same calculation: we now take $h=(1,1,1,0)^T$ to span $K_{\rms 0,[0]}(D\Ric,0)/R_{\rms 0,[0]}(\delta^*,1)$; for $h^*=(1,0,1,0)^T\in K_{\rms 0,[0]}(D\Ric,-1)$, we then have
  \[
    \la\pa_\lambda N_{\rms 0}(D\Ric,0)h,\sfG h^*\ra=\la\sfG h,\pa_\lambda N_{\rms 0}(D\Ric,-1)h^*\ra = 32\pi \neq 0.
  \]
  (That is, the roles of $h,h^*$ and $0,-1$ are reversed compared to the first computation.)

  To prove~\eqref{EqRicKs0m1R0}, note that $R_{\rms 0,[0]}(\delta^*,0)$ is a 1-codimensional subspace of $R_{\rms 0,[1]}(\delta^*,0)=K_{\rms 0,[1]}(D\Ric,-1)$, with the quotient spanned by $\pa_\lambda N(\delta^*,0)(1,1)^T$; see the computations following~\eqref{EqRicMSs0m1}.

  \pfstep{Part~\eqref{ItRicKs1}: $\rms 1$ tensors.} Since for $\lambda=-2$ we have $-1-\bar\lambda=1$, we calculate with $h=(-1,0,1,-1,-1,0)^T$ and $h^*=(0,0,0,2,-1,0)^T$ (see~\eqref{EqRicMSs1m2}--\eqref{EqRicMSs11}), and using~\eqref{EqRicMSs1DRic} and the fiber inner product on $\rms 1$ tensors (from~\eqref{EqMkMinkInner})
  \[
    \begin{pmatrix} 0 & 0 & 0 & 4 & 0 & 0 \\ 0 & 8 & 0 & 0 & 0 & 0 \\ 0 & 0 & 0 & 0 & -8 & 0 \\ 4 & 0 & 0 & 0 & 0 & 0 \\ 0 & 0 & -8 & 0 & 0 & 0 \\ 0 & 0 & 0 & 0 & 0 & 2 \end{pmatrix}
  \]
  the pairing
  \begin{align*}
    &\la\pa_\lambda N_{\rms 1}(D\Ric,-2)h,\sfG h^*\ra_{L^2(\Sph^2)} \\
    &\qquad = \Big\la \Bigl(-1,-\frac12,-\frac12,-1,\frac12,-4\Bigr)^T, (0,0,0,2,-1,0)^T \Big\ra_{L^2(\Sph^2)} = -48\pi \neq 0.
  \end{align*}
  This gives~\eqref{EqRicKs1m2}; and~\eqref{EqRicKs11} follows from the same calculation with $h,h^*$ and $-2,1$ interchanged.

  \pfstep{Part~\eqref{ItRicKsl}: $\rms l$ tensors, $l\geq 2$.} Using $h=(1,0,0,1,0,2,0)^T=h^*$ from Proposition~\ref{PropRicMS}\eqref{ItRicMSsl}, the expression~\eqref{EqRicMSslDRic}, and the fiber inner product (from~\eqref{EqMkMinkInner}, using~\eqref{EqMkYSplitsl} and Lemma~\ref{LemmaMkYId})
  \[
    \begin{pmatrix}
      0 & 0 & 0 & 4 & 0 & 0 & 0 \\
      0 & 8 & 0 & 0 & 0 & 0 & 0 \\
      0 & 0 & 0 & 0 & -4 l(l+1) & 0 & 0 \\
      4 & 0 & 0 & 0 & 0 & 0 & 0 \\
      0 & 0 & -4 l(l+1) & 0 & 0 & 0 & 0 \\
      0 & 0 & 0 & 0 & 0 & 2 & 0 \\
      0 & 0 & 0 & 0 & 0 & 0 & \frac12(l-1)l(l+1)(l+2)
    \end{pmatrix},
  \]
  we compute
  \begin{align*}
    &\la\pa_\lambda N_{\rms l}(D\Ric,-l-1)h,\sfG h^*\ra_{L^2(\Sph^2)} \\
    &\qquad = \Big\la \Bigl(l+\frac12,0,0,l+\frac12,0,2 l+1,0\Bigr)^T, (1,1,0,1,0,0,0)^T \Big\ra_{L^2(\Sph^2)} = 16(2 l+1)\pi \neq 0.
  \end{align*}
  This implies the claims in part~\eqref{ItRicKsl}.

  \pfstep{Part~\eqref{ItRicKv1}: $\rmv 1$ tensors.} For $l=1$ and $\lambda=-2$, we use Proposition~\ref{PropRicMS}\eqref{ItRicMSv1} and take $h=(1,1)^T=h^*$. The fiber inner product being
  \[
    \begin{pmatrix} 0 & -8 \\ -8 & 0 \end{pmatrix},
  \]
  we find, using~\eqref{EqRicMSv1DRic},
  \[
    \la \pa_\lambda N_{\rmv 1}(D\Ric,-2)h,\sfG h^*\ra_{L^2(\Sph^2)} = \Big\la \Bigl(\frac32,\frac32\Bigr)^T, (1,1)^T \Big\ra  = -96\pi \neq 0.
  \]
  The case $\lambda=1$ follows from the same calculation.

  \pfstep{Part~\eqref{ItRicKvl}: $\rmv l$ tensors, $l\geq 2$.} This is similar to the case $l=1$: now $h=(1,1,0)^T=h^*$, and using~\eqref{EqRicMSvlDRic} and the fiber inner product
  \[
    \begin{pmatrix} 0 & -4 l(l+1) & 0 \\ -4 l(l+1) & 0 & 0 \\ 0 & 0 & \frac12(l-1)l(l+1)(l+2) \end{pmatrix},
  \]
  we find
  \begin{align*}
    \la \pa_\lambda N_{\rmv l}(D\Ric,-l-1)h,\sfG h^*\ra_{L^2(\Sph^2)} &= \Big\la \Bigl(l+\frac12,l+\frac12,0\Bigr)^T, (1,1,0)^T \Big\ra_{L^2(\Sph^2)} \\
      &= -16 l(l+1)(2 l+1)\pi \neq 0.
  \end{align*}
  This completes the proof.
\end{proof}

\subsection{Generalized range}
\label{SsRicR}

Let $\bullet\in\{\rms l,\,\rmv l\}$. Having studied the (restricted) kernel of $N(D\Ric,\lambda)$, we are now interested in \emph{solving} the equation $N_\bullet(D\Ric,\lambda)h=f$, or more generally $\pa_\lambda N_\bullet(D\Ric,\lambda)h+N_\bullet(D\Ric,\lambda)h_1=f$ (necessarily with $h\in\ker N_\bullet(D\Ric,\lambda)$). A necessary condition for $f\in R_{\bullet,[i]}(D\Ric,\lambda)$ for any $i\in\N_0$ is $f\in\ker N_\bullet(\delta\sfG,\lambda-2)=K_{\bullet,[0]}(\delta\sfG,\lambda-2)$. The extent to which the converse fails can be read off from the non-degeneracy of the pairings~\eqref{EqRicPair} (with $j=0$).

\begin{prop}[Solvability for the indicial family]
\label{PropRicR}
  Let $\lambda\in\C$.
  \begin{enumerate}
  \item\label{ItRicRs0}{\rm (Scalar type $0$.)} For $\lambda\neq -1,0$, we have
    \[
      K_{\rms 0,[0]}(\delta\sfG,\lambda-2) = \ker N_{\rms 0}(\delta\sfG,\lambda-2) = \ran N_{\rms 0}(D\Ric,\lambda) = R_{\rms 0,[0]}(D\Ric,\lambda).
    \]
    Furthermore,
    \begin{align*}
      K_{\rms 0,[0]}(\delta\sfG,-3) &= R_{\rms 0,[1]}(D\Ric,-1) \\
        &= \{ \pa_\lambda N_{\rms 0}(D\Ric,-1)h + N_{\rms 0}(D\Ric,-1)h_1 \colon h\in\ker N_{\rms 0}(D\Ric,-1) \}.
    \end{align*}
    Moreover, we have an isomorphism\footnote{The right hand side is $\la \pa_\lambda N_{\rms 0}(\delta\sfG,-2)f,(\frac12,\frac12)^T\ra$ in terms of~\eqref{EqMkYSplits0}.}
    \begin{equation}
    \label{EqRicRs0Iso}
      K_{\rms 0,[0]}(\delta\sfG,-2) / R_{\rms 0,[1]}(D\Ric,0) \ni f \mapsto \la \pa_\lambda N(\delta\sfG,-2)f, \dd t \ra_{L^2(\Sph^2)} \in \C.
    \end{equation}
  \item\label{ItRicRs1}{\rm (Scalar type $1$.)} We have
    \begin{align*}
      K_{\rms 1,[0]}(\delta\sfG,\lambda-2) &= R_{\rms 1,[0]}(D\Ric,\lambda),\qquad \lambda\neq -2,0,1, \\
      K_{\rms 1,[0]}(\delta\sfG,\lambda-2) &= R_{\rms 1,[1]}(D\Ric,\lambda),\qquad \lambda=-2,1.
    \end{align*}
    Moreover, we have an isomorphism\footnote{The right hand side, for fixed $\scal\in\scalspace_1$, is $\la \pa_\lambda N_{\rms 1}(\delta\sfG,-2)f,(\frac12,-\frac12,1)^T\ra$ in terms of~\eqref{EqMkYSplits1}.}
    \begin{equation}
    \label{EqRicRs1Iso}
      K_{\rms 1,[0]}(\delta\sfG,-2) / R_{\rms 1,[0]}(D\Ric,0) \ni f \mapsto \Bigl(\scalspace_1\ni\scal \mapsto \la \pa_\lambda N(\delta\sfG,-2)f, \dd(r\scal) \ra_{L^2(\Sph^2)} \Bigr) \in \scalspace_1^*.
    \end{equation}
  \item\label{ItRicRsl}{\rm (Scalar type $l\geq 2$.)} We have
    \begin{align*}
      K_{\rms l,[0]}(\delta\sfG,\lambda-2) &= R_{\rms l,[0]}(D\Ric,\lambda),\qquad \lambda\neq-l-1,l, \\
      K_{\rms l,[0]}(\delta\sfG,\lambda-2) &= R_{\rms l,[1]}(D\Ric,\lambda),\qquad \lambda=-l-1,l.
    \end{align*}
  \item\label{ItRicRv1}{\rm (Vector type $1$.)} We have
    \begin{align*}
      K_{\rmv l,[0]}(\delta\sfG,\lambda-2) &= R_{\rmv l,[0]}(D\Ric,\lambda),\qquad \lambda\neq-2,1, \\
      K_{\rmv l,[0]}(\delta\sfG,\lambda-2) &= R_{\rmv l,[1]}(D\Ric,\lambda),\qquad \lambda=-2,1.
    \end{align*}
    Moreover, we have an isomorphism\footnote{The right hand side, for fixed $\vect\in\vectspace_1$, is $\la \pa_\lambda N_{\rmv 1}(\delta\sfG,1)f,(\frac12,-\frac12)^T\ra$ in terms of~\eqref{EqMkYSplits1}.}
    \begin{equation}
    \label{EqRicRv1Iso}
      K_{\rmv 1,[0]}(\delta\sfG,-3) / R_{\rmv 1,[0]}(D\Ric,-1) \ni f \mapsto \Bigl(\vectspace_1\ni\vect \mapsto \la \pa_\lambda N(\delta\sfG,-3)f, r^2\vect \ra_{L^2(\Sph^2)} \Bigr) \in \vectspace_1^*.
    \end{equation}
  \item\label{ItRicRvl}{\rm (Vector type $l\geq 2$.)} We have
    \begin{align*}
      K_{\rmv l,[0]}(\delta\sfG,\lambda-2) &= R_{\rmv l,[0]}(D\Ric,\lambda),\qquad \lambda\neq-l-1,l, \\
      K_{\rmv l,[0]}(\delta\sfG,\lambda-2) &= R_{\rmv l,[1]}(D\Ric,\lambda),\qquad \lambda=-l-1,l.
    \end{align*}
  \end{enumerate}
\end{prop}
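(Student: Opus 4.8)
\textit{Proof proposal.} The entire statement follows from the abstract duality framework of~\S\ref{SsLA} applied to the complexes~\eqref{EqRicSeq}, fed with the computations of (restricted) kernels modulo pure gauge in Propositions~\ref{PropRicMS} and~\ref{PropRicK}. The link is provided by the adjoint identities~\eqref{EqRicAdj}--\eqref{EqRicAdjDel}: together with the indicial versions of the linearized second Bianchi identities, $N(\delta\sfG,\lambda-2)N(D\Ric,\lambda)=0$ and $N(D\Ric,\lambda)N(\delta^*,\lambda+1)=0$, they exhibit the complex $S^2 T^*\xrightarrow{N(D\Ric,\lambda)}S^2 T^*\xrightarrow{N(\delta\sfG,\lambda-2)}T^*$ as---after conjugation by $\sfG$ and the change of parameter $\lambda\mapsto-1-\bar\lambda$---the adjoint of the complex $T^*\xrightarrow{N(\delta^*,\mu+1)}S^2 T^*\xrightarrow{N(D\Ric,\mu)}S^2 T^*$. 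Proposition~\ref{PropLA2} then yields, for each pure type $\bullet\in\{\rms l,\rmv l\}$ and all $i\in\N_0$, the nondegenerate pairing~\eqref{EqRicPair} with $j=0$; in particular, $K_{\bullet,[0]}(\delta\sfG,\lambda-2)/R_{\bullet,[i]}(D\Ric,\lambda)$ is dual to $K_{\bullet,[i]}(D\Ric,-1-\bar\lambda)/R_{\bullet,[0]}(\delta^*,-\bar\lambda)$, the latter being precisely a middle homology group of the complex~\eqref{EqRicSeq} at parameter $-1-\bar\lambda$.

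With this reduction, the proof proceeds by localizing $\mu:=-1-\bar\lambda$ relative to the finite exceptional set of the pure type $\bullet$. First one takes $i=0$: by Proposition~\ref{PropRicMS}, the dual space $K_{\bullet,[0]}(D\Ric,\mu)/R_{\bullet,[0]}(\delta^*,\mu+1)$ vanishes except when $\mu$ lies in the pure type's short list of exceptional values; reading off which $\lambda$ this allows produces exactly the lists appearing in the statement (e.g.\ $\lambda\in\{-2,0,1\}$ for $\rms 1$, $\lambda\in\{-l-1,l\}$ for $\rms l$ and $\rmv l$ with $l\geq 2$), and gives $K_{\bullet,[0]}(\delta\sfG,\lambda-2)=R_{\bullet,[0]}(D\Ric,\lambda)$ off the exceptional set. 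At the ``log-removable'' exceptional values---$\lambda=-2,1$ for $\rms 1$, $\lambda=-l-1,l$ for $\rms l$ and $\rmv l$ ($l\geq2$), which are permuted among themselves by $\lambda\mapsto-1-\lambda$---one instead takes $i=1$: the dual space $K_{\bullet,[1]}(D\Ric,\mu)/R_{\bullet,[0]}(\delta^*,\mu+1)$ is trivial by Proposition~\ref{PropRicK}\,\eqref{ItRicKs1}/\eqref{ItRicKsl}/\eqref{ItRicKvl}, whence $K_{\bullet,[0]}(\delta\sfG,\lambda-2)=R_{\bullet,[1]}(D\Ric,\lambda)$; the explicit description of $R_{\bullet,[1]}$ as $\{\pa_\lambda N(D\Ric,\lambda)h+N(D\Ric,\lambda)h_1:h\in\ker N_\bullet(D\Ric,\lambda)\}$ is just the definition~\eqref{EqLASpace} unwound. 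The equality $K_{\rms 0,[0]}(\delta\sfG,-3)=R_{\rms 0,[1]}(D\Ric,-1)$ is the same argument with $i=1$, invoking $K_{\rms 0,[1]}(D\Ric,0)=R_{\rms 0,[0]}(\delta^*,1)$ from~\eqref{EqRicKs00}.

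The cases that require genuine work are the authentic cokernels, occurring where $\mu=-1-\bar\lambda$ is the exceptional value at which $N(\delta^*,\mu+1)$ has a nontrivial Killing nullspace ($\mu=-1$ for $\rms 0$ and $\rms 1$, $\mu=0$ for $\rmv 1$): these are $\lambda=0$ for $\rms 0$ (i.e.\ $\delta\sfG$ at $-2$), $\lambda=0$ for $\rms 1$, and $\lambda=-1$ for $\rmv 1$. Here Proposition~\ref{PropRicMS}---via~\eqref{EqRicMSs1m1}, \eqref{EqRicMSv10}, and the computation following~\eqref{EqRicMSs0m1}---shows the dual quotient is one-dimensional for each fixed spherical harmonic, spanned by $\pa_\lambda N_\bullet(\delta^*,\mu+1)$ applied to the Killing coefficient; hence the left-hand quotient is one-dimensional as well, and letting the spherical harmonic vary gives the isomorphisms onto $\C$, $\scalspace_1^*$, and $\vectspace_1^*$. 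The main obstacle is then the explicit identification of the functional: one must match the abstract pairing $f\mapsto\la f,\sfG h^*\ra$, with $h^*=\pa_\lambda N_\bullet(\delta^*,\mu+1)\omega^*$ ($\omega^*$ a Killing coefficient such as $(\tfrac12,-\tfrac12,1)^T$ or $(1,-1)^T$), against the concrete expression $f\mapsto\la\pa_\lambda N(\delta\sfG,\lambda-2)f,\omega^*\ra_{L^2(\Sph^2)}$ of~\eqref{EqRicRs0Iso}, \eqref{EqRicRs1Iso}, \eqref{EqRicRv1Iso}. This is carried out by transferring the $\pa_\lambda$ across the pairing using the adjoint identity $N(\delta^*,\lambda+1)^*=N(\delta\sfG,-3-\bar\lambda)\circ\sfG$ and the holomorphy argument already used in the computation~\eqref{EqRicKChar} in the proof of Proposition~\ref{PropRicK}; the resulting proportionality constant is nonzero, which one sees either from the nondegeneracy of~\eqref{EqRicPair} together with one-dimensionality, or by direct evaluation of the matrices of Lemma~\ref{LemmaMk} and Corollary~\ref{CorMkDiffDRic} on the relevant basis vectors, exactly as in the pairing computations in the proof of Proposition~\ref{PropRicK}.
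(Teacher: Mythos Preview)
Your proposal is correct and follows essentially the same approach as the paper: both use the nondegenerate pairing~\eqref{EqRicPair} (with $j=0$) to dualize the solvability question to the (restricted) kernel-modulo-pure-gauge results of Propositions~\ref{PropRicMS}--\ref{PropRicK}, and both identify the explicit functionals in~\eqref{EqRicRs0Iso}, \eqref{EqRicRs1Iso}, \eqref{EqRicRv1Iso} via the adjoint identity $(\sfG\circ\pa_\lambda N(\delta^*,\cdot))^*=\pa_\lambda N(\delta\sfG,\cdot)$. One small citation point: for the scalar type~$0$ isomorphism~\eqref{EqRicRs0Iso} (where $i=1$ is needed since the quotient involves $R_{\rms 0,[1]}(D\Ric,0)$), the directly relevant input is~\eqref{EqRicKs0m1R0} from Proposition~\ref{PropRicK} rather than the computation following~\eqref{EqRicMSs0m1}---though the latter does feed into the proof of the former.
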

\begin{proof}
  Those cases in Propositions~\ref{PropRicMS}--\ref{PropRicK} in which the (restricted) kernel $K_{\bullet,[i]}(D\Ric,\lambda)$, $i\in\N_0$, is equal to the range $R_{\bullet,[0]}(\delta^*,\lambda+1)$ imply, via~\eqref{EqRicPair} with $j=0$, most of the Proposition.

  The isomorphism~\eqref{EqRicRs0Iso} follows from~\eqref{EqRicKs0m1R0} using~\eqref{EqRicPair} (with $\lambda=0$, $i=1$, $j=0$) and the fact that $(\sfG\circ\pa_\lambda N(\delta^*,0))^*=\pa_\lambda N(\delta\sfG,-2)$. Similarly, the isomorphism~\eqref{EqRicRs1Iso} follows from~\eqref{EqRicMSs1m1} using~\eqref{EqRicPair} (with $\lambda=0$, $i=0$, $j=0$); and~\eqref{EqRicRv1Iso} follows from~\eqref{EqRicMSv10} using~\eqref{EqRicPair} (with $\lambda=-1$, $i=0$, $j=0$).
\end{proof}

In~\eqref{EqRicRs0Iso}, \eqref{EqRicRs1Iso}, and~\eqref{EqRicRv1Iso}, the generalized ranges $R_{\bullet,[i]}(D\Ric,-)$ have already stabilized, i.e.\ they remain unchanged if one increases $i$ further.\footnote{The argument is as follows. Let $j\in\N_0$. If there exists $i$ such that $R_{[j]}(D\Ric,\lambda)=K_{[i]}(\delta\sfG,\lambda-2)$, then $R_{[j']}(D\Ric,\lambda)=R_{[j]}(D\Ric,\lambda)$ for all $j'\geq j$ since $R_{[j']}(D\Ric,\lambda)\supset R_{[j]}(D\Ric,\lambda)$ is always contained in $K_{[i]}(\delta\sfG,\lambda-2)$ by Proposition~\ref{PropLA2}. Note now that by~\eqref{EqRicPair}, the assumption on $i$ here is equivalent to $R_{[i]}(\delta^*,-\bar\lambda)=K_{[j]}(D\Ric,-1-\bar\lambda)$. This holds in the $\rms 0$ case with $\lambda=0$, $i=1$, $j=1$ by~\eqref{EqRicKs0m1}, in the $\rms 1$ case with $\lambda=0$, $j=0$, $i=1$ by~\eqref{EqRicMSs1m1Triv}, and in the $\rmv 1$ case with $\lambda=-1$, $j=0$, $i=1$ by~\eqref{EqRicMSv10Triv}.} Thus, these isomorphisms capture the full extent to which $r^2\wh{D_{\ubar g}\Ric}(0)h=r^{\lambda-2}f$, $f\in\ker N_\bullet(\delta\sfG,\lambda-2)$, does not have a solution $h$ which is quasi-homogeneous of degree $\lambda$. Requiring $f$ to lie in the \emph{restricted} kernel $K_{\bullet,[1]}(\delta\sfG,\lambda-2)$ \emph{does} guarantee solvability:

\begin{lemma}[Solvability for restricted right hand sides]
\label{LemmaRicRRestr}
  We have
  \begin{align*}
    K_{\rms 0,[1]}(\delta\sfG,-2) &= R_{\rms 0,[1]}(D\Ric,0), \\
    K_{\rms 1,[1]}(\delta\sfG,-2) &= R_{\rms 1,[0]}(D\Ric,0), \\
    K_{\rmv 1,[1]}(\delta\sfG,-3) &= R_{\rmv 1,[0]}(D\Ric,-1).
  \end{align*}
\end{lemma}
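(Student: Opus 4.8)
The plan is to derive this lemma as a corollary of the duality machinery in~\S\ref{SsLA}, specifically Proposition~\ref{PropLA2} and the pairing~\eqref{EqRicPair}, combined with the ``restricted kernels are pure gauge'' statements already established in Proposition~\ref{PropRicK}. The three identities each concern a restricted kernel $K_{\bullet,[1]}(\delta\sfG,\lambda-2)$, and I want to show it is no bigger than the (generalized) range of $D\Ric$ at the dual exponent. The key structural observation is that $K_{\bullet,[1]}(\delta\sfG,\lambda-2)$ always contains $R_{\bullet,[j]}(D\Ric,\lambda)$ for any $j$ (by Proposition~\ref{PropLA2} applied to the complex with $A=D\Ric$, $B=\delta\sfG$), so only the \emph{reverse} inclusion is at issue, and that reverse inclusion is governed by the non-degeneracy of a finite-dimensional pairing.

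Concretely, for each of the three cases I would set $\bullet$ and $\lambda$ appropriately: $(\bullet,\lambda)=(\rms 0,0)$, $(\rms 1,0)$, and $(\rmv 1,-1)$, so that $\lambda-2 = -2,-2,-3$ respectively. Then I invoke~\eqref{EqRicPair} with $j=0$ and $i=1$, which says the pairing
\[
  \bigl( K_{\bullet,[1]}(\delta\sfG,\lambda-2) / R_{\bullet,[0]}(D\Ric,\lambda) \bigr) \times \bigl( K_{\bullet,[0]}(D\Ric,-1-\bar\lambda) / R_{\bullet,[1]}(\delta^*,-\bar\lambda) \bigr) \to \C
\]
is non-degenerate. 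For $\lambda=0$ we have $-1-\bar\lambda=-1$, and for $\lambda=-1$ we have $-1-\bar\lambda=0$. Now Proposition~\ref{PropRicMS}\eqref{ItRicMSs1} (equation~\eqref{EqRicMSs1m1Triv}) shows $K_{\rms 1,[0]}(D\Ric,-1)/R_{\rms 1,[1]}(\delta^*,0)=\{0\}$, and Proposition~\ref{PropRicMS}\eqref{ItRicMSv1} (equation~\eqref{EqRicMSv10Triv}) shows $K_{\rmv 1,[0]}(D\Ric,0)/R_{\rmv 1,[1]}(\delta^*,1)=\{0\}$; the vanishing of the second factor forces, by non-degeneracy, the vanishing of the first, i.e. $K_{\rms 1,[1]}(\delta\sfG,-2)=R_{\rms 1,[0]}(D\Ric,0)$ and $K_{\rmv 1,[1]}(\delta\sfG,-3)=R_{\rmv 1,[0]}(D\Ric,-1)$. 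This disposes of the second and third identities. For the first identity, the dual side is $K_{\rms 0,[0]}(D\Ric,-1)/R_{\rms 0,[1]}(\delta^*,0)$, which by Proposition~\ref{PropRicMS}\eqref{ItRicMSs0} is $1$-dimensional, spanned by $(1,0,1,0)^T$; so I need to identify $R_{\rms 0,[1]}(D\Ric,0)$ rather than $R_{\rms 0,[0]}(D\Ric,0)$, which matches: using $i=1$, $j=1$ in~\eqref{EqRicPair} pairs $K_{\rms 0,[1]}(\delta\sfG,-2)/R_{\rms 0,[1]}(D\Ric,0)$ against $K_{\rms 0,[1]}(D\Ric,-1)/R_{\rms 0,[1]}(\delta^*,0)$, and Proposition~\ref{PropRicK}\eqref{ItRicKs0} (equation~\eqref{EqRicKs0m1}) gives $K_{\rms 0,[1]}(D\Ric,-1)=R_{\rms 0,[1]}(\delta^*,0)$, so this dual quotient is trivial, forcing $K_{\rms 0,[1]}(\delta\sfG,-2)=R_{\rms 0,[1]}(D\Ric,0)$.

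The main obstacle, and the one point requiring care, is lining up \emph{exactly} which generalized range $R_{\bullet,[i]}(D\Ric,\lambda)$ and which restricted kernel of $\delta^*$ appear on the two sides of~\eqref{EqRicPair} for the given $(i,j)$, since the indices $i$ and $j$ play asymmetric roles and an off-by-one error would produce a false statement. I would double-check this against the stabilization remark following Proposition~\ref{PropRicR} (footnote on the stabilization of $R_{\bullet,[i]}(D\Ric,-)$), which already verifies $R_{\rms 0,[1]}(\delta^*,0)=K_{\rms 0,[1]}(D\Ric,-1)$, $R_{\rms 1,[1]}(\delta^*,0)=K_{\rms 1,[0]}(D\Ric,-1)$, and $R_{\rmv 1,[1]}(\delta^*,1)=K_{\rmv 1,[0]}(D\Ric,0)$ — precisely the three inputs I need. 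A secondary, purely bookkeeping point is to confirm that the complex $(D\Ric,\delta\sfG)$ and its adjoint $(\sfG\delta^*, \sfG D\Ric\sfG)$ at the dual exponents are exactly the $(A,B)$ and $(B^*,A^*)$ of Proposition~\ref{PropLA2}; this is recorded in~\eqref{EqRicAdj}--\eqref{EqRicAdjDel}, so no new computation is needed. Once the index matching is nailed down, each of the three equalities is a one-line consequence, with no spherical-harmonic computation beyond what Propositions~\ref{PropRicMS} and~\ref{PropRicK} already supply.
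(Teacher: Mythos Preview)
Your proposal is correct and takes essentially the same approach as the paper, which simply cites the non-degeneracy of~\eqref{EqRicPair} together with~\eqref{EqRicKs0m1}, \eqref{EqRicMSs1m1Triv}, and~\eqref{EqRicMSv10Triv}. One tiny slip: when you say ``$j=0$ and $i=1$'' you are using the $(i,j)$ convention of Proposition~\ref{PropLA2} rather than that of~\eqref{EqRicPair} (where the roles are swapped), but since you write out the pairing explicitly and correctly, this causes no harm.
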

\begin{proof}
  This follows by duality from the non-degeneracy of~\eqref{EqRicPair} and the equalities~\eqref{EqRicKs0m1}, \eqref{EqRicMSs1m1Triv}, and \eqref{EqRicMSv10Triv}.
\end{proof}

\subsection{Kernel and range for pure types with large \texorpdfstring{$l$}{l}}
\label{SsRicLarge}

If the equation $N(D\Ric,\lambda)h=f\in\ker N(\delta\sfG,\lambda)$, with $f$ smooth, can be solved for each pure type separately, the existence of a smooth solution $h$ cannot directly be obtained by summing the infinitely many individual pure type solutions constructed above due to possible convergence issues. Instead, we work with a gauge-fixed version of the linearized Einstein equation in this and the next section. Concretely, we consider
\begin{equation}
\label{EqRicLargeL}
  L := D_{\ubar g}\Ric + \ubar\delta^*\circ\ubar\delta\ubar\sfG,
\end{equation}
which by~\eqref{EqELinDRic} is equal to $\frac12\ubar\Box$ where $\ubar\Box$ is the tensor wave operator, i.e.\ the scalar wave operator on each coefficient of a symmetric 2-tensor in the bundle trivialization induced by the coordinates $t,x$. In polar coordinates $x=r\omega$, the zero energy operator of the scalar wave operator is the spatial Laplacian, $\wh{\ubar\Box}(0)=-\pa_r^2-\frac{2}{r}\pa_r+r^{-2}\slDelta$, and therefore
\[
  N\bigl(r^2\wh{\ubar\Box}(0),\lambda\bigr) = -\lambda(\lambda+1) + \slDelta.
\]
Restricted to scalar type $l$ functions, this is multiplication by $-\lambda(\lambda+1)+l(l+1)=-(\lambda-l)(\lambda+l+1)$, with roots $\lambda=-l-1,l$. In particular, if $\lambda$ is fixed, then the restriction $N_{\geq l}(r^2\wh{\ubar\Box}(0),\lambda)$ to functions whose scalar type $\leq l$ components vanish is injective when $|\lambda|>l+1$. But since $N(r^2\wh{\ubar\Box}(0),\lambda)$ has Fredholm index $0$ since $\slDelta$ is elliptic, this implies the invertibility of $N_{\geq l}(r^2\wh{\ubar\Box}(0),\lambda)$.

Note next that $\dd t$ is of scalar type $0$, and $\dd x^j$, $j=1,2,3$, is of scalar type $1$; and therefore $\dd z^\mu\otimes_s\dd z^\nu$ (where $z=(z^\mu)_{\mu=0,\ldots,3}=(t,x^1,x^2,x^3)$) is a sum of tensors of scalar or vector type at most $2$. This allows us to pass from the $\dd z^\mu\otimes_s\dd z^\nu$ bundle splitting of $S^2 T^*\R^4$ to the splitting~\eqref{EqMkBundleSplit}. We have thus shown:

\begin{lemma}[Invertibility of the indicial family of the gauge-fixed linearized Einstein operator]
\label{LemmaRicLarge}
  Fix $\lambda\in\C$. Then for large enough $l_0\in\N_0$, the restriction $N_{\geq l_0}(r^2\hat L(0),\lambda)$ of $N(r^2\hat L(0),\lambda)\in\Diff^2(\ff;\ubar\upbeta^*S^2 T^*_{(0,0)}\R^4)$ to the space of smooth tensors all of whose scalar type $l$ and vector type $l$ components vanish for $l<l_0$ is invertible; and also the indicial operators $N(r^2\wh{\ubar\Box}(0),\lambda\pm 1)$ of the tensor wave operator on 1-forms are invertible. (Concretely, one may take $l_0>|\lambda|+3$.)
\end{lemma}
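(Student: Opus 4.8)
The statement has two parts: the invertibility of $N_{\geq l_0}(r^2\hat L(0),\lambda)$ on symmetric 2-tensors, and the invertibility of $N(r^2\wh{\ubar\Box}(0),\lambda\pm 1)$ on 1-forms. I would treat both in the same way, starting from the identification (established just above the statement, via~\eqref{EqRicLargeL} and~\eqref{EqELinDRic}) that $L=\tfrac12\ubar\Box$, where $\ubar\Box$ acts componentwise as the scalar wave operator in the coordinate trivialization $\dd z^\mu\otimes_s\dd z^\nu$. The key input is the computation already carried out in the surrounding text: on scalar type $l$ functions, $N(r^2\wh{\ubar\Box}(0),\lambda)$ acts as multiplication by $-(\lambda-l)(\lambda+l+1)$, hence is invertible whenever $|\lambda|>l+1$. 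So the first step is to record that $N_{\geq l}(r^2\wh{\ubar\Box}(0),\lambda)$, the restriction of the \emph{scalar} indicial family to functions with vanishing scalar type $\leq l$ components, is injective for $|\lambda|>l+1$; and since $N(r^2\wh{\ubar\Box}(0),\lambda)=-\lambda(\lambda+1)+\slDelta$ with $\slDelta$ elliptic on $\Sph^2$, it has Fredholm index $0$, so injectivity on the (closed, $N$-invariant) subspace of tensors with scalar type $\geq l$ forces invertibility there.

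The second step is to transfer this from the coordinate splitting to the splitting~\eqref{EqMkBundleSplit}. Here I would use the observation, stated in the text, that $\dd t$ is scalar type $0$, each $\dd x^j$ is scalar type $1$, hence every $\dd z^\mu\otimes_s\dd z^\nu$ is a finite sum of scalar and vector type components of type at most $2$ (by the Clebsch--Gordan decomposition $\rho_1\otimes_s\rho_1\cong\rho_0\oplus\rho_2$ etc., recalled in~\S\ref{SsMkY}). Therefore: if a section of $S^2 T^*\R^4$ has all pure type $<l_0$ components vanishing in the splitting~\eqref{EqMkBundleSplit}, then in the coordinate trivialization it has all scalar/vector type $<l_0-2$ components vanishing (the change of splitting is a fixed, $\lambda$-independent, type-degree $\leq 2$ perturbation and does not mix types across more than two degrees). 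Conversely the coordinate-to-frame change~\eqref{EqMkBundleSplit} is invertible at each point of $\ff\cong\Sph^2$, so it intertwines the two descriptions up to a bounded, pure-type-degree-shifting isomorphism. Since $N(r^2\hat L(0),\lambda)$ preserves pure types in the strong sense of Remark~\ref{RmkMkYAction}, restricting to the $\geq l_0$ block in the splitting~\eqref{EqMkBundleSplit} corresponds, under this isomorphism, to the $\geq l_0-2$ block of the scalar-wave indicial family (componentwise). Taking $l_0>|\lambda|+3$ then ensures $|\lambda|>(l_0-2)+1$, so Step 1 applies componentwise and gives invertibility of $N_{\geq l_0}(r^2\hat L(0),\lambda)$.

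The third (easiest) step is the 1-form statement: $\ubar\Box$ on 1-forms also acts componentwise as the scalar wave operator in the $\dd z^\mu$ trivialization, so $N(r^2\wh{\ubar\Box}(0),\lambda\pm 1)$ is, in that trivialization, the diagonal operator $-(\lambda\pm 1)(\lambda\pm 1+1)+\slDelta$, which is invertible precisely when $(\lambda\pm 1)$ is not of the form $l$ or $-l-1$ with $l\in\N_0$ a spherical eigenvalue index --- and in any case, when restricted to the full space (all spherical modes), it is invertible as soon as $-(\lambda\pm 1)(\lambda\pm 1+1)$ avoids $\{l(l+1):l\in\N_0\}$; for $l_0>|\lambda|+3$ this is automatic since $|\lambda\pm 1|<l_0-1$ rules out nothing by itself, so here I should instead note that the claimed invertibility of the 1-form indicial operators is asserted without a lower-mode restriction and hence requires $\lambda$ to avoid a discrete set. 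Re-reading the statement, the intended reading is that these are invertible \emph{for the fixed generic $\lambda$ under consideration} (the surrounding analysis in~\S\ref{SRic} has already excluded the finitely many bad $\lambda$), and I would simply cite that: $N(r^2\wh{\ubar\Box}(0),\mu)=-\mu(\mu+1)+\slDelta$ is invertible iff $-\mu(\mu+1)\notin\spec(\slDelta)=\{l(l+1):l\in\N_0\}$, i.e.\ iff $\mu\notin\N_0\cup(-1-\N_0)$, which holds for $\mu=\lambda\pm1$ under the standing genericity hypothesis on $\lambda$.

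\textbf{Main obstacle.} The only real subtlety --- and the step I expect to need the most care --- is the bookkeeping in Step 2: making precise that the (fixed, $\lambda$-independent) change of bundle trivialization between $\dd z^\mu\otimes_s\dd z^\nu$ and the splitting~\eqref{EqMkBundleSplit} shifts pure-type degrees by at most $2$, so that the $\geq l_0$ block in one picture is sandwiched between the $\geq l_0-2$ and $\geq l_0$ blocks in the other, and hence the explicit scalar-wave eigenvalue computation transfers. Everything else is either a direct invocation of the componentwise structure $L=\tfrac12\ubar\Box$ or the elementary Fredholm-index-$0$-plus-injectivity argument; no further computation with the matrices of Lemma~\ref{LemmaMk} or Corollary~\ref{CorMkDiffDRic} is needed.
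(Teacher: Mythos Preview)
Your Steps~1--2 are exactly the paper's argument (the two paragraphs immediately preceding the lemma): identify $L=\tfrac12\ubar\Box$ as the componentwise scalar wave operator in the Cartesian frame $\dd z^\mu\otimes_s\dd z^\nu$, compute the scalar indicial eigenvalue $-(\lambda-m)(\lambda+m+1)$ on mode $m$, and pass to the polar splitting~\eqref{EqMkBundleSplit} using that the coordinate differentials are of type at most~$1$ (hence their symmetric products of type at most~$2$).

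Two corrections are needed, however.

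\emph{Inequality direction.} Your Step~1 says $N_{\geq l}$ is injective when $|\lambda|>l+1$; this is false (take $\lambda=l+2$: on mode $m=l+2$ the eigenvalue vanishes). The correct condition is $l_0>|\lambda|$: for every $m\geq l_0$ one then has $|m|,|m+1|>|\lambda|$, so $\lambda\notin\{m,-m-1\}$. Your sentence ``Taking $l_0>|\lambda|+3$ then ensures $|\lambda|>(l_0-2)+1$'' is self-contradictory; what $l_0>|\lambda|+3$ actually buys is $(l_0-2)>|\lambda|+1>|\lambda|$, and \emph{that} is the correct input for the componentwise argument after the type-$2$ shift. (The paper's preparatory text contains the same slip; the lemma's stated bound $l_0>|\lambda|+3$ is the right one.)

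\emph{The 1-form claim.} There is no ``standing genericity hypothesis on $\lambda$''; the lemma is asserted for every $\lambda\in\C$, so your proposed resolution does not work. The 1-form assertion should be read with the same $\geq l_0$ restriction --- this is precisely how it is invoked in the proof of Corollary~\ref{CorRicLarge}, where the 1-form $\eta=N(\ubar\delta\ubar\sfG,\lambda)h$ and the gauge potential $\omega$ are built from $h\in\CI_{\geq l_0}$ and hence are themselves of type $\geq l_0$. With that restriction the same argument applies: in the Cartesian frame $\dd z^\mu$ the type shift is at most~$1$, so $l_0-1>|\lambda\pm 1|$ suffices, and this is implied by $l_0>|\lambda|+3$.
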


\begin{cor}[Kernel modulo pure gauge; range]
\label{CorRicLarge}
  Fix $\lambda\in\C$, and let $l_0\in\N_0$ be as in Lemma~\usref{LemmaRicLarge} (e.g.\ $l_0>|\lambda|+3$). We use the notation~\eqref{EqRicNotation}.
  \begin{enumerate}
  \item\label{ItRicLargeKer}{\rm (Kernel modulo pure gauge.)} If $h\in\CI_{\geq l_0}(\ff;\ubar\upbeta^*S^2 T^*_{(0,0)}\R^4)$ (i.e.\ $h$ is smooth with vanishing scalar and vector type $l$ components for all $l<l_0$) satisfies $N(D\Ric,\lambda)h=0$, then there exists $\omega\in\CI_{\geq l_0}(\ff;\ubar\upbeta^*T^*_{(0,0)}\R^4)$ with $N(\delta^*,\lambda+1)\omega=h$.
  \item\label{ItRicLargeRan}{\rm (Range.)} If $f\in\CI_{\geq l_0}(\ff;\ubar\upbeta^*S^2 T^*_{(0,0)}\R^4)$ satisfies $N(\delta\sfG,\lambda-2)f=0$, then there exists $h\in\CI_{\geq l_0}(\ff;\ubar\upbeta^*S^2 T^*_{(0,0)}\R^4)$ with $N(D\Ric,\lambda)h=f$.
  \end{enumerate}
\end{cor}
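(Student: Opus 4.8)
The plan is to reduce both assertions to the gauge-fixed operator $L=D_{\ubar g}\Ric+\ubar\delta^*\ubar\delta\ubar\sfG=\tfrac12\ubar\Box$ from~\eqref{EqRicLargeL} together with the invertibility statements of Lemma~\ref{LemmaRicLarge}. I work throughout at the level of indicial families with respect to $r$, regarding $N(D\Ric,\lambda)$, $N(\delta^*,\lambda+1)$, $N(\delta\sfG,\lambda-2)$ and the indicial families of $\ubar\Box$ (on symmetric $2$-tensors and on $1$-forms) as $D_{\ubar g}\Ric$, $\ubar\delta^*$, $\ubar\delta\ubar\sfG$, $\ubar\Box$ acting on dilation-homogeneous stationary tensors. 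Since all of these operators lie in dilation-invariant classes, the operator identities $D_{\ubar g}\Ric\circ\ubar\delta^*=0$, $\ubar\delta\ubar\sfG\circ D_{\ubar g}\Ric=0$ (from~\eqref{EqELin2ndBianchi}), $\ubar\delta\ubar\sfG\ubar\delta^*=\tfrac12\ubar\Box$ on $1$-forms (valid since $\Ric(\ubar g)=0$), and $L=D_{\ubar g}\Ric+\ubar\delta^*\ubar\delta\ubar\sfG$ all descend verbatim to the indicial families; explicitly, with the index shifts of the complex~\eqref{EqRicSeq}, $N(D\Ric,\lambda)N(\delta^*,\lambda+1)=0$, $N(\delta\sfG,\lambda-2)N(D\Ric,\lambda)=0$, $N(\delta\sfG,\lambda-2)N(\delta^*,\lambda+1)=\tfrac12 N(r^2\wh{\ubar\Box}(0),\lambda+1)$ on $1$-forms, and $N(r^2\hat L(0),\lambda)=N(D\Ric,\lambda)+N(\delta^*,\lambda+1)N(\delta\sfG,\lambda-2)$. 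Finally, $\ubar\delta^*$, $\ubar\delta\ubar\sfG$, $D_{\ubar g}\Ric$ and $\ubar\Box$ all preserve pure types (Remark~\ref{RmkMkYAction}), hence preserve the subspace $\CI_{\geq l_0}$ of tensors with vanishing scalar and vector type $l$ components for $l<l_0$.

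\emph{Part~\eqref{ItRicLargeKer}.} Given $h\in\CI_{\geq l_0}(\ff;\ubar\upbeta^*S^2 T^*_{(0,0)}\R^4)$ with $N(D\Ric,\lambda)h=0$, put $\zeta:=N(\delta\sfG,\lambda-2)h\in\CI_{\geq l_0}(\ff;\ubar\upbeta^*T^*_{(0,0)}\R^4)$. By Lemma~\ref{LemmaRicLarge} the $1$-form indicial operator $N(r^2\wh{\ubar\Box}(0),\lambda+1)$ is invertible on $\CI_{\geq l_0}$, so I may solve $N(r^2\wh{\ubar\Box}(0),\lambda+1)\omega=2\zeta$ for $\omega\in\CI_{\geq l_0}$, and set $h':=h-N(\delta^*,\lambda+1)\omega\in\CI_{\geq l_0}$. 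Then $N(\delta\sfG,\lambda-2)h'=\zeta-N(\delta\sfG,\lambda-2)N(\delta^*,\lambda+1)\omega=\zeta-\tfrac12 N(r^2\wh{\ubar\Box}(0),\lambda+1)\omega=0$, while $N(D\Ric,\lambda)h'=N(D\Ric,\lambda)h-0=0$; hence $N(r^2\hat L(0),\lambda)h'=N(D\Ric,\lambda)h'+N(\delta^*,\lambda+1)N(\delta\sfG,\lambda-2)h'=0$. Since $N_{\geq l_0}(r^2\hat L(0),\lambda)$ is invertible (Lemma~\ref{LemmaRicLarge}) and $h'\in\CI_{\geq l_0}$, we get $h'=0$, i.e.\ $h=N(\delta^*,\lambda+1)\omega$ with $\omega$ of the required type.

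\emph{Part~\eqref{ItRicLargeRan}.} Given $f\in\CI_{\geq l_0}$ with $N(\delta\sfG,\lambda-2)f=0$, use invertibility of $N_{\geq l_0}(r^2\hat L(0),\lambda)$ to solve $N(r^2\hat L(0),\lambda)h=f$ with $h\in\CI_{\geq l_0}$, so $N(D\Ric,\lambda)h+N(\delta^*,\lambda+1)\bigl(N(\delta\sfG,\lambda-2)h\bigr)=f$. Applying $N(\delta\sfG,\lambda-2)$ and using $N(\delta\sfG,\lambda-2)N(D\Ric,\lambda)=0$ together with $N(\delta\sfG,\lambda-2)N(\delta^*,\lambda+1)=\tfrac12 N(r^2\wh{\ubar\Box}(0),\lambda+1)$ gives $\tfrac12 N(r^2\wh{\ubar\Box}(0),\lambda+1)\bigl(N(\delta\sfG,\lambda-2)h\bigr)=N(\delta\sfG,\lambda-2)f=0$. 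As $N(\delta\sfG,\lambda-2)h\in\CI_{\geq l_0}$ and $N(r^2\wh{\ubar\Box}(0),\lambda+1)$ is invertible on $\CI_{\geq l_0}$ by Lemma~\ref{LemmaRicLarge}, we conclude $N(\delta\sfG,\lambda-2)h=0$, hence $N(D\Ric,\lambda)h=N(r^2\hat L(0),\lambda)h=f$.

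\emph{Main obstacle.} No genuine analytic difficulty remains: the only substantive input is Lemma~\ref{LemmaRicLarge} (invertibility of the gauge-fixed and of the $1$-form wave indicial operators on high pure types), and the rest is formal manipulation. The two points requiring mild care are (i) verifying that the algebraic identities among $D_{\ubar g}\Ric$, $\ubar\delta^*$, $\ubar\delta\ubar\sfG$, $\ubar\Box$ — in particular the linearized second Bianchi identity and $\ubar\delta\ubar\sfG\ubar\delta^*=\tfrac12\ubar\Box$ — hold at the indicial-family level with the precise index shifts of~\eqref{EqRicSeq}, which is automatic since all operators involved have dilation-invariant indicial forms; and (ii) checking that every tensor produced along the way ($\zeta$, $\omega$, $h'$, $h$) stays in $\CI_{\geq l_0}$, which follows because all four operators preserve pure types.
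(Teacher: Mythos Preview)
Your approach is precisely the paper's, but the index shifts in your indicial-family identities are off, and as written the identities are false. The point is that in $L=D_{\ubar g}\Ric+\ubar\delta^*\ubar\delta\ubar\sfG$, the gauge term $\ubar\delta\ubar\sfG$ acts on the \emph{input} $r^\lambda h$, not on the degree-$(\lambda-2)$ output of $D_{\ubar g}\Ric$. Tracking homogeneities ($\wh{\ubar\delta\ubar\sfG}(0)$ and $\wh{\ubar\delta^*}(0)$ each lower degree by $1$), one finds
\[
  N(r^2\hat L(0),\lambda)=N(D\Ric,\lambda)+N(\delta^*,\lambda-1)\,N(\delta\sfG,\lambda),
\]
not $N(\delta^*,\lambda+1)N(\delta\sfG,\lambda-2)$; and the composition $\ubar\delta\ubar\sfG\circ\ubar\delta^*=\tfrac12\ubar\Box$ descends to
\[
  N(\delta\sfG,\mu)\,N(\delta^*,\mu+1)=\tfrac12 N\bigl(r^2\wh{\ubar\Box}(0),\mu+1\bigr)\qquad(\mu\in\C),
\]
so $N(\delta\sfG,\lambda-2)N(\delta^*,\lambda+1)$ is \emph{not} $\tfrac12 N(r^2\wh{\ubar\Box}(0),\lambda+1)$. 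Since $N(\delta\sfG,\mu)$ is a genuine (degree $1$) polynomial in $\mu$ (cf.\ the explicit form of $\wh{\ubar\delta}(0)$ in Lemma~\ref{LemmaMk}), these are not merely notational slips.

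With the correct shifts your argument becomes exactly the paper's: in part~\eqref{ItRicLargeKer} set $\zeta:=N(\delta\sfG,\lambda)h$, solve $\tfrac12 N(r^2\wh{\ubar\Box}(0),\lambda+1)\omega=-\zeta$, and put $h':=h+N(\delta^*,\lambda+1)\omega$; then $N(\delta\sfG,\lambda)h'=0$ and $N(D\Ric,\lambda)h'=0$, so $N(r^2\hat L(0),\lambda)h'=0$ and $h'=0$. In part~\eqref{ItRicLargeRan}, after solving $N(r^2\hat L(0),\lambda)h=f$, apply $N(\delta\sfG,\lambda-2)$ and use $N(\delta\sfG,\lambda-2)N(\delta^*,\lambda-1)=\tfrac12 N(r^2\wh{\ubar\Box}(0),\lambda-1)$ to get $\tfrac12 N(r^2\wh{\ubar\Box}(0),\lambda-1)\bigl(N(\delta\sfG,\lambda)h\bigr)=0$; invertibility at $\lambda-1$ (Lemma~\ref{LemmaRicLarge}) gives $N(\delta\sfG,\lambda)h=0$, hence $N(D\Ric,\lambda)h=f$. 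Note that part~\eqref{ItRicLargeRan} uses the $1$-form wave operator at $\lambda-1$, not $\lambda+1$; both are covered by Lemma~\ref{LemmaRicLarge}.
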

\begin{proof}
  We begin with part~\eqref{ItRicLargeRan}. We may solve
  \[
    N(r^2\hat L(0),\lambda)h = f
  \]
  with $h\in\CI_{\geq l_0}$ since $N_{\geq l_0}(r^2\hat L(0),\lambda)$ is invertible. Applying $N(\delta\sfG,\lambda-2)$ to this equation and using the definition~\eqref{EqRicLargeL} of $L$ as well as $f\in\ker N(\delta\sfG,\lambda-2)$ implies $N(r^2\wh{\ubar\delta\ubar\sfG}(0)\circ\wh{\ubar\delta^*}(0),\lambda-1)\eta=0$ where $\eta:=N(\ubar\delta\ubar\sfG,\lambda)h$. But $\ubar\delta\ubar\sfG\circ\ubar\delta^*=\half\ubar\Box$ (tensor wave operator on 1-forms), and therefore $\eta=0$ by Lemma~\ref{LemmaRicLarge}. This implies $N(r^2\wh{D_{\ubar g}\Ric}(0)h)=f$, as desired.

  For part~\eqref{ItRicLargeKer}, we first claim that there exists $\omega\in\CI_{\geq l_0}(\ff;\ubar\upbeta^*S^2 T^*_{(0,0)}\R^4)$ so that $h':=h+N(\delta^*,\lambda+1)\omega$ satisfies the gauge condition $N(\delta\sfG,\lambda)(h+N(\delta^*,\lambda+1)\omega)=0$. To verify this, note that $\omega$ needs to satisfy the equation
  \[
    \frac12 N(r^2\wh{\ubar\Box}(0),\lambda+1)\omega = -N(\delta\sfG,\lambda)h \in \CI_{\geq l_0}(\ff;\ubar\upbeta^*T^*_{(0,0)}\R^4),
  \]
  which does have a solution. But then $N(r^2\hat L(0),\lambda)h'=0$ implies $h'=0$ by Lemma~\ref{LemmaRicLarge}. This gives $h=-N(\delta^*,\lambda+1)\omega$. The proof is complete.
\end{proof}

\subsection{Solvability and uniqueness at \texorpdfstring{$r=0$}{the spatial origin} for quasi-ho\-mo\-ge\-neous tensors}
\label{SsRicSolv}

We can now prove the two main theorems of this section. We write $\pi_\ff(r,\omega)=\omega$ for the projection $[\R^3;\{0\}]=[0,\infty)_r\times\Sph^2\to\Sph^2$. 

\begin{thm}[Quasi-homogeneous nullspace modulo pure gauge]
\label{ThmRicUniq}
  Let $z\in\C$, $k\in\N_0$. Consider a stationary solution
  \[
    h(r,\omega) = \sum_{j=0}^k \frac{1}{j!}r^z(\log r)^j\pi_\ff^*h_j(\omega),\qquad h_j\in\CI(\ff;\ubar\upbeta^*S^2 T^*_{(0,0)}\R^4),
  \]
  of the equation $\wh{D_{\ubar g}\Ric}(0)h=0$. Then there exists a stationary 1-form
  \[
    \omega=\sum_{j=0}^{k+k'}\frac{1}{j!}r^{z+1}(\log r)^j\pi_\ff^*\omega_j(\omega),\qquad \omega_j\in\CI(\ff;\ubar\upbeta^*T^*_{(0,0)}\R^4),
  \]
  so that $h=\ubar\delta^*\omega$ if $z\notin\Z$ in which case we can take $k'=0$. Otherwise:
  \begin{enumerate}
  \item if $z\in\{-l-1,l\}$ where $l\geq 1$: there exists $\omega$, with $k'=0$, so that
    \begin{equation}
    \label{EqRicUniqEq}
      h-\ubar\delta^*\omega=r^z\pi_\ff^* h',\qquad h'\in\CI(\ff;\ubar\upbeta^*S^2 T^*_{(0,0)}\R^4),
    \end{equation}
    with $h'\in\sum_{\bullet=\rms l,\rmv l}\ker N_\bullet(D\Ric,z)$;
  \item if $z=-1$, set $k'=1$ (unless $h_{\rms 0,k}=0$ and $h_{\rms 1,k}=0$, in which case $k'=0$ works); if $z=0$, set $k'=1$ (unless $h_{\rmv 1,k}=0$, in which case $k'=0$ works): then there exists $\omega$ so that~\eqref{EqRicUniqEq} holds with $h'\in\ker N_{\rms 0}(D\Ric,z)$.\footnote{Thus, if $h$ does not have any scalar type $0$ components, then we can find $\omega$ so that $h'=0$, so $h$ is pure gauge.}
  \end{enumerate}
  One can moreover choose the 1-forms $\omega_j$ to depend continuously on $(h_0,\ldots,h_k)$ for fixed $z\in\C$.
\end{thm}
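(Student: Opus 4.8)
The plan is to recast the problem through the calculus of~\S\ref{SsLA}--\S\ref{SsRicLarge}. Writing $\frac{1}{j!}r^z(\log r)^j\pi_\ff^*h_j=\Res_{\lambda=z}\bigl((\lambda-z)^{-j-1}r^\lambda\pi_\ff^*h_j\bigr)$ and recalling $\wh{D_{\ubar g}\Ric}(0)(r^\lambda g)=r^{\lambda-2}N(D\Ric,\lambda)g$, the hypothesis $\wh{D_{\ubar g}\Ric}(0)h=0$ says precisely that $\lambda\mapsto N(D\Ric,\lambda)\tilde h(\lambda)$ is holomorphic at $z$, where $\tilde h(\lambda):=\sum_{j=0}^k(\lambda-z)^{-j-1}h_j$; in the language of~\eqref{EqLASpace} this is $\tilde h\in K_k(D\Ric,z)$, so $h_j\in K_{[j]}(D\Ric,z)$, and in particular $h_k\in K_{[k]}(D\Ric,z)\subseteq K_{[1]}(D\Ric,z)$ when $k\geq 1$. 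Producing $\omega$ then amounts to finding a meromorphic $\tilde\omega(\lambda)=\sum_j(\lambda-z)^{-j-1}\omega_j$ of pole order $\leq k+k'+1$ with $N(\delta^*,\lambda+1)\tilde\omega(\lambda)-\tilde h(\lambda)$ holomorphic at $z$ up to a log-free residual term of the claimed type. Since $N(D\Ric,\lambda)$ preserves pure types strongly (Remark~\ref{RmkMkYAction}), I would split $h=h^{<l_0}+h^{\geq l_0}$, with $l_0$ as in Lemma~\ref{LemmaRicLarge} (say $l_0>|z|+3$, which also removes all exceptional pure types of Proposition~\ref{PropRicMS} from the $l\geq l_0$ block), treating $h^{\geq l_0}$ analytically and each of the finitely many pure types in $h^{<l_0}$ by the finite-dimensional linear algebra of~\S\ref{SsRicMS}.

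For $h^{\geq l_0}$ I would argue by downward induction on $\log r$-depth. If the top coefficient $h_m^{\geq l_0}$ of a depth-$m$ nullspace element satisfies $N(D\Ric,z)h_m^{\geq l_0}=0$, then Corollary~\ref{CorRicLarge}\eqref{ItRicLargeKer} supplies $\omega_m^{\geq l_0}\in\CI_{\geq l_0}$ with $N(\delta^*,z+1)\omega_m^{\geq l_0}=h_m^{\geq l_0}$; subtracting $\ubar\delta^*\bigl(\tfrac{1}{m!}r^{z+1}(\log r)^m\pi_\ff^*\omega_m^{\geq l_0}\bigr)$ removes the top coefficient and, since $D_{\ubar g}\Ric\circ\ubar\delta^*=0$ identically, leaves a depth-$(m-1)$ nullspace element supported in the $l\geq l_0$ types. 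Because $N_{\geq l_0}(r^2\hat L(0),\lambda)$ and the two $1$-form indicial operators in Lemma~\ref{LemmaRicLarge} are invertible for all $\lambda$ near $z$, no logarithms beyond those already present appear, and after $k+1$ steps $h^{\geq l_0}=\ubar\delta^*\omega^{\geq l_0}$ with $\omega^{\geq l_0}$ of log-degree $\leq k$.

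For each pure type $\bullet\in\{\rms l,\rmv l\}$ with $l<l_0$ I would run the analogous downward induction using Propositions~\ref{PropRicMS}, \ref{PropRicK} and~\ref{PropLA2}. If $k\geq 1$ the top coefficient lies in $K_{\bullet,[1]}(D\Ric,z)$; for $z$ non-exceptional for $\bullet$ this is contained in $\ker N_\bullet(D\Ric,z)=\ran N_\bullet(\delta^*,z+1)$ by Proposition~\ref{PropRicMS}, while for the exceptional values Proposition~\ref{PropRicK} (supplemented by~\eqref{EqRicKs00} and~\eqref{EqRicMSv10Triv} for the $z=0$ cases it does not list, and by $K_{[1]}\subseteq K_{[0]}$) identifies $K_{\bullet,[1]}(D\Ric,z)$ with $R_{\bullet,[0]}(\delta^*,z+1)$ in every case except $\rms 0$ and $\rms 1$ at $z=-1$ and $\rmv 1$ at $z=0$, where it is $R_{\bullet,[1]}(\delta^*,z+1)$ — i.e.\ one extra power of $\log r$ in the gauge potential is needed exactly there. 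At a recursion step $m<k$ such an extra log enlarges $\omega$ only to log-degree $m+1\leq k$, so it is only the top step $m=k$ that can force log-degree $k+1$; this is the source of the proviso giving $k'=1$ precisely when the relevant $\rms 0,\rms 1$ (resp.\ $\rmv 1$) component of $h_k$ is nonzero. Subtracting gauge terms and recursing reduces to $k=0$, where $h_0\in\ker N_\bullet(D\Ric,z)$ and Proposition~\ref{PropRicMS} reads off $\ker N_\bullet(D\Ric,z)/\ran N_\bullet(\delta^*,z+1)$: trivial for $z\notin\Z$ (whence $h'=0$, $k'=0$, and no logs are introduced); a line of the stated pure type for $z\in\{-l-1,l\}$, $l\geq 1$; and for $z\in\{-1,0\}$, reduced via~\eqref{EqRicMSs1m1Triv} and~\eqref{EqRicMSv10Triv} (absorbing the $\rms 1$, resp.\ $\rmv 1$, contributions into the extra-log potential) to a line spanned by a scalar type $0$ tensor — the linearized mass at $z=-1$, and $4\,\dd t^2$ at $z=0$. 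Assembling $\omega$ from all $l<l_0$ types and from $\omega^{\geq l_0}$, the residuals combine to the asserted $h'$. Continuity of $\omega_j$ in $(h_0,\dots,h_k)$ for fixed $z$ is then immediate, since at every step $\omega$ is obtained by applying $h$-independent linear maps (fixed $z$-dependent matrix inverses for $l<l_0$; $N_{\geq l_0}(r^2\hat L(0),\cdot)^{-1}$ and its $\lambda$-derivatives for $l\geq l_0$) to $(h_0,\dots,h_k)$, together with the continuous $L^2$-orthogonal splitting into pure-type components of degree $<l_0$ and $\geq l_0$.

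The main obstacle is the bookkeeping where the three mechanisms meet: the $\Z$-dependent gap between $\ker N_\bullet(D\Ric,z)$ and the pure-gauge range (Proposition~\ref{PropRicMS}), the possible extra power of $\log r$ in the gauge potential (the $R_{[0]}$ versus $R_{[1]}$ dichotomy of~\S\ref{SsLA}, i.e.\ Proposition~\ref{PropRicK}, responsible for $k'$), and the tail $l\geq l_0$ which cannot be treated pure type by pure type and must go through the gauge-fixed operator of Lemma~\ref{LemmaRicLarge}. The point to verify carefully is that, after subtracting both the small-$l$ and large-$l$ gauge terms order by order in $\log r$, exactly one log-free $r^z$ term survives, lying in $\ker N_{\rms l}(D\Ric,z)\oplus\ker N_{\rmv l}(D\Ric,z)$ (or in $\ker N_{\rms 0}(D\Ric,z)$ for $z\in\{-1,0\}$), with no cross-contamination between exceptional types; the residue reformulation together with the $K_{[j]}$/$R_{[j]}$ calculus of Lemmas~\ref{LemmaLADuality}--\ref{PropLA2} is what keeps this consistent.
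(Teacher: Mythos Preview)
Your proposal is correct and follows essentially the same approach as the paper: split $h$ into its pure-type components, handle the finitely many types $l<l_0$ individually via the finite-dimensional linear algebra of Propositions~\ref{PropRicMS} and~\ref{PropRicK} (together with~\eqref{EqRicMSs1m1Triv} and~\eqref{EqRicMSv10Triv}), treat the $l\geq l_0$ tail via Corollary~\ref{CorRicLarge}, and reduce the log-depth by one at each step. Your framing via the residue calculus and the $K_{[j]}/R_{[j]}$ spaces of~\S\ref{SsLA} is somewhat more abstract than the paper's case-by-case treatment organized by the value of $z$, but the underlying mechanism is identical.
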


Thus, for non-integer $z$, the metric perturbation $h$ is always pure gauge, whereas for integer $z$ one can always add a pure gauge term to $h$ so as to eliminate all terms involving $(\log r)^j$ with $j\geq 1$.

\begin{proof}[Proof of Theorem~\usref{ThmRicUniq}]
  \pfstep{Step 1. Analysis for individual pure types.}

  \pfsubstep{(1.i)}{$z\notin\Z$.} Fix a pure type $\bullet$. Note that $h_k\in K_{\bullet,[0]}(D\Ric,z)=R_{\bullet,[0]}(\delta^*,z+1)$ (using Proposition~\ref{PropRicMS}). Therefore, we can write $h_k=N(\delta^*,z+1)\omega_k$. The remaining error
  \begin{equation}
  \label{EqRicUniqErr}
    h - \ubar\delta^*\bigl((\log r)^k\omega_k\bigr) = \sum_{j=0}^{k-1}\frac{1}{j!}r^z(\log r)^j\pi_\ff^* h_j - [\ubar\delta^*,(\log r)^k](r^{z+1}\pi_\ff^*\omega_k)
  \end{equation}
  still lies in the kernel of $\wh{D_{\ubar g}\Ric}(0)$, but the largest power of $\log r$ is reduced by $1$. Iterating this argument until $k=0$ (in which case the remaining error vanishes) produces the desired gauge potential $\omega$.

  \pfsubstep{(1.ii)}{$z=-l-1,l$ with $l\geq 1$.} The same arguments apply to the pure gauge $\bullet\notin\{\rms l,\,\rmv l\}$ part of $h$ by Proposition~\ref{PropRicMS}. Consider next the $\rms l$ part $h_{\rms l}$ of $h$. If $k=0$, we may simply take $\omega=0$. For $k\geq 1$ on the other hand, note that $h_{\rms l,k}$ (the $\rms l$ part of $h_k$) lies in $K_{\rms l,[k]}(D\Ric,z)\subset K_{\rms l,[1]}(D\Ric,z)$; but $K_{\rms l,[1]}(D\Ric,z)=R_{\rms l,[0]}(\delta^*,z+1)$ by Proposition~\ref{PropRicK} (part~\eqref{ItRicKs1} for $l=1$, and part~\eqref{ItRicKsl} for $l\geq 2$). Therefore, we can write $h_{\rms l,k}=N(\delta^*,z+1)\omega_k$. The computation~\eqref{EqRicUniqErr} again applies and reduces the largest logarithmic exponent $k$ by $1$, until one reaches $k=0$ in which case the remaining error $h'$, which has no logarithmic terms anymore, can no longer be solved away. The same reasoning applies in the $\rmv l$ case, now using parts~\eqref{ItRicKv1} and \eqref{ItRicKvl} of Proposition~\ref{PropRicK}.

  \pfsubstep{(1.iii)}{$z=-1$.} Proposition~\ref{PropRicMS} shows that we only need to consider the scalar type $0$ and $1$ cases. Consider first the $\rms 0$ case. If $k=0$, we take $\omega=0$. For $k\geq 1$, we have $h_{\rms 0,k}\in K_{\rms 0,[1]}(D\Ric,-1)=R_{\rms 0,[1]}(\delta^*,0)$, where we use Proposition~\ref{PropRicK}\eqref{ItRicKs0}. Therefore, we can find $\omega_{k+1}\in\ker N(\delta^*,0)$ and $\omega_k$ so that
  \[
    r^{-1} h_{\rms 0,k}=\ubar\delta^*\bigl((\log r)\omega_{k+1}+\omega_k\bigr).
  \]
  Using $\ubar\delta^*\circ(\log r)^{k+1}=(\log r)^{k+1}\ubar\delta^*+(k+1)(\log r)^k[\ubar\delta^*,\log r]$, we thus have
  \begin{align*}
    &\frac{1}{k!}r^{-1}(\log r)^k h_{\rms 0,k} - \ubar\delta^*\Bigl( \frac{1}{(k+1)!}(\log r)^{k+1}\omega_{k+1} + \frac{1}{k!}(\log r)^k\omega_k \Bigr) \\
    &\quad = \frac{1}{k!}(\log r)^k \Bigl( r^{-1} h_{\rms 0,k} - \ubar\delta^*\bigl( (\log r)\omega_{k+1} \bigr) + \omega_k \bigr) \Bigr) - \frac{1}{k!}[\ubar\delta^*,(\log r)^k]\omega_k \\
    &\quad = - \frac{1}{k!}[\ubar\delta^*,(\log r)^k]\omega_k,
  \end{align*}
  which has one power of $\log r$ less than $h$. We can thus eliminate all logarithmic terms of $h$ until we are left with a stationary error term $h'$.

  In the $\rms 1$ case, the same arguments apply for \emph{all} $k\geq 0$ by virtue of~\eqref{EqRicMSs1m1Triv}. Thus, $h_{\rms 1}$ is pure gauge in this case.

  \pfsubstep{(1.iv)}{$z=0$.} We only need to consider the scalar type $0$ and vector type $1$ cases in view of Proposition~\ref{PropRicMS}. In the $\rms 0$ case, for $k\geq 1$ we can use~\eqref{EqRicKs00} to solve away all logarithmic terms (i.e.\ $k\geq 1$) as in step~\textbf{(1.i)} until we are left with a tensor $h'$ without logarithmic terms. In the $\rmv 1$ case on the other hand, we use~\eqref{EqRicMSv10Triv} to solve away \emph{all} terms using the same arguments as in step~\textbf{(1.iii)}.

  \pfstep{Step 2. Analysis for all pure types with large $l$ simultaneously.} Let $l_0\in\N_0$ be as in Corollary~\ref{CorRicLarge} with $\lambda=z$. We may replace $h$ by its projection $h_{\geq l_0}$ off all spaces of scalar and vector type $l<l_0$ tensors. Since $h_k\in\ker N(D\Ric,z)$, we can thus pick $\omega_k\in\CI_{\geq l_0}(\ff;\ubar\upbeta^* T^*_{(0,0)}\R^4)$ with $N(\delta^*,z+1)\omega_k=h_k$. Via~\eqref{EqRicUniqErr}, we can then eliminate the term $r^z(\log r)^k\pi_\ff^*h_k$ (at the expense of causing changes to lower order terms). Iterating until $k=0$ finishes the construction of $\omega$.

  Finally, the continuous dependence of $\omega_j$ on $(h_0,\ldots,h_k)$ is guaranteed for the projections to scalar and vector type $l\geq l_0$ tensors, whereas for the finitely many remaining pure types the finite-dimensionality of all function spaces involved implies, by linear algebra, that one can choose the pure type $\rms l$ or $\rmv l$, $l<l_0$, parts of $\omega_j$ to depend linearly (and thus automatically continuously) on the corresponding pure type parts of $(h_0,\ldots,h_k)$.
\end{proof}

\begin{thm}[Solvability with quasi-homogeneous forcing]
\label{ThmRicSolv}
  Let $z\in\C$, $k\in\N_0$. Consider the stationary tensor\footnote{This means that the components of $f(r,\omega)$ with respect to the bundle trivialization induced by the coordinates $t,x$ are homogeneous of degree $z$ with respect to dilations $(r,\omega)\mapsto(\lambda r,\omega)$ when $k=0$, and quasi-homogeneous when $k\geq 1$.}
  \[
    f(r,\omega) = \sum_{j=0}^k \frac{1}{j!} r^{z-2}(\log r)^j \pi_\ff^*f_j(\omega), \qquad f_j\in\CI(\ff;\ubar\upbeta^*S^2 T^*_{(0,0)}\R^4),
  \]
  on $\R^3\setminus\{0\}$. Write $f_{\bullet,j}$ for the pure type $\bullet\in\{\rms l,\,\rmv l\}$ part of $f_j$. Suppose that $\wh{\ubar\delta\ubar\sfG}(0)f=0$. Fix a cutoff function $\chi\in\CIc([0,\infty)_r)$ which equals $1$ near $r=0$. Consider the following possibilities.
  \begin{enumerate}
  \item $z\notin\Z$: set $k'=0$.
  \item\label{ItRicSolvl} $z\in\{-l-1,l\}$ where $l\geq 1$: set $k'=0$ if $f_{\rms l,k}=0$ and $f_{\rmv l,k}=0$, and $k'=1$ otherwise.
  \item\label{ItRicSolvm1} $z=-1$: writing $\la-,-\ra$ for $L^2$-pairings on $\R^3$ with volume density $r^2|\dd r\,\dd\slg|$ and fiber inner products induced by the Minkowski metric, assume that\footnote{An example is $\vect=\pa_\phi^\flat=\sin^2\theta\,\dd\phi$, so $r^2\vect=x^1\,\dd x^2-x^2\,\dd x^1$.}
    \begin{equation}
    \label{EqRicSolvv1Pair}
      \big\la [\ubar\delta\ubar\sfG,\chi]f,r^2\vect\big\ra=0 \quad \forall\,\vect\in\vectspace_1,
    \end{equation}
    Set $k'=0$ if $f_{\rms 0,k}=0$, and $k'=1$ otherwise.
  \item\label{ItRicSolv0} $z=0$: assume that
    \begin{align}
    \label{EqRicSolvs0Pair}
      \big\la [\ubar\delta\ubar\sfG,\chi]f,\dd t\big\ra&=0, \\
    \label{EqRicSolvs1Pair}
      \big\la [\ubar\delta\ubar\sfG,\chi]f,\dd(r\scal)\big\ra&=0 \quad \forall\,\scal\in\scalspace_1.
    \end{align}
    Set $k'=0$ if $f_{\rms 0,k}=0$, and $k'=1$ otherwise.
  \end{enumerate}
  Then there exist $h_j\in\CI(\ff;\ubar\upbeta^*S^2 T^*_{(0,0)}\R^4)$, $j=0,\ldots,k+k'$, so that, in $r>0$,
  \begin{equation}
  \label{EqRicSolv}
    \wh{D_{\ubar g}\Ric}(0)h=f,\qquad
    h(r,\omega)=\sum_{j=0}^{k+k'}\frac{1}{j!}r^z(\log r)^j\pi_\ff^*h_j(\omega),
  \end{equation}
  and so that $(h_0,\ldots,h_{k+k'})$ depends linearly and continuously on $(f_0,\ldots,f_k)$. When $k'=1$, then $h_{k+1}\in\ker N(D\Ric,z)$.
\end{thm}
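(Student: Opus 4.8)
# Proof Proposal for Theorem (Solvability with quasi-homogeneous forcing)

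The plan is to reduce the equation $\wh{D_{\ubar g}\Ric}(0)h=f$ to the indicial family problem studied in~\S\ref{SsRicR} by iterating in the powers of $\log r$, handling finitely many small-$l$ pure types via the explicit linear algebra of Propositions~\ref{PropRicMS}--\ref{PropRicR} and Lemma~\ref{LemmaRicRRestr}, and the remaining pure types with $l\geq l_0$ (for $l_0$ as in Corollary~\ref{CorRicLarge} with $\lambda=z$) in one go via the gauge-fixed operator. First I would note that $\wh{D_{\ubar g}\Ric}(0)$ acts on a quasi-homogeneous tensor $r^z(\log r)^j\pi_\ff^*v$ by $r^{z-2}$ times a polynomial-in-$\log r$ expression whose leading $(\log r)^j$ coefficient is $N(D\Ric,z)v$ and whose subleading coefficients involve $\pa_\lambda N(D\Ric,z)$, $\pa_\lambda^2 N(D\Ric,z)$, etc.—this is the standard b-calculus computation underlying~\eqref{EqLAInt} and~\eqref{EqRicKChar}. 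Thus solving~\eqref{EqRicSolv} amounts to a triangular system: the top coefficient $h_{k+k'}$ must satisfy $N(D\Ric,z)h_{k+k'}=\text{(leading term)}$, and once it is chosen the commutator terms feed into the equations for $h_{k+k'-1},\dots,h_0$, which are of the same type with modified (but still admissible) right-hand sides.

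The key input is that the relevant right-hand sides lie in the appropriate (generalized) ranges. The hypothesis $\wh{\ubar\delta\ubar\sfG}(0)f=0$ gives, upon extracting coefficients, that $f_k\in\ker N(\delta\sfG,z-2)=K_{[0]}(\delta\sfG,z-2)$ and that the lower $f_j$ lie in the corresponding \emph{restricted} kernels $K_{[k-j]}(\delta\sfG,z-2)$ (this is exactly the content of the duality set-up in~\S\ref{SsLA} applied to $N(\delta\sfG,-)$). For $z\notin\Z$, or for $z\in\{-l-1,l\}$, $l\geq 2$, or for the non-exceptional pure types at exceptional $z$, Proposition~\ref{PropRicR} gives $K_{[0]}(\delta\sfG,z-2)=R_{[0]}(D\Ric,z)$ pure-type-by-pure-type for $l<l_0$, and Corollary~\ref{CorRicLarge}\eqref{ItRicLargeRan} handles $l\geq l_0$; so each coefficient equation is solvable with $k'=0$. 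For the genuinely obstructed cases—$z\in\{-l-1,l\}$ with the $\rms l$ or $\rmv l$ component of $f_k$ nonzero, and $z=-1,0$—Proposition~\ref{PropRicR} shows $K_{[0]}(\delta\sfG,z-2)=R_{[1]}(D\Ric,z)$ (resp.\ the isomorphisms~\eqref{EqRicRs0Iso}, \eqref{EqRicRs1Iso}, \eqref{EqRicRv1Iso} identify the obstruction), so to solve $N(D\Ric,z)h=f_k$ one must allow $h$ to carry one extra $\log r$, i.e.\ set $k'=1$ and put $f_k$ in the \emph{generalized} range; the pairing conditions~\eqref{EqRicSolvv1Pair}, \eqref{EqRicSolvs0Pair}, \eqref{EqRicSolvs1Pair} are precisely the statement that the obstruction functionals from those isomorphisms vanish on $f$, via the boundary-pairing computation of Lemma~\ref{LemmaBgBdyPair} applied to $[\ubar\delta\ubar\sfG,\chi]f$. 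For the subleading coefficients $f_j$, $j<k$, one uses instead Lemma~\ref{LemmaRicRRestr}: membership in the restricted kernel $K_{[1]}(\delta\sfG,z-2)$ (which the Bianchi hypothesis furnishes for those coefficients) guarantees solvability \emph{without} an extra logarithm, so no further growth is incurred.

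I would then assemble the solution iteratively: pick $h_{k+k'}$ solving the top equation (using $k'=0$ or $1$ as dictated above), compute the induced commutator contributions to the degree-$(k+k'-1)$ equation, observe that the new right-hand side still lies in the requisite (restricted) kernel of $N(\delta\sfG,z-2)$—this is automatic because $\wh{\ubar\delta\ubar\sfG}(0)$ annihilates the already-constructed part of $h$ and $f$ itself—and continue down to $j=0$; after $j=0$ the error vanishes identically, giving~\eqref{EqRicSolv}. Linearity and continuity of $(h_0,\dots,h_{k+k'})$ in $(f_0,\dots,f_k)$ follow because for $l\geq l_0$ the inverse of the gauge-fixed indicial operator from Lemma~\ref{LemmaRicLarge} is a fixed bounded map, while for the finitely many $l<l_0$ all spaces are finite-dimensional and one may select the pure-type parts of the $h_j$ to depend linearly on those of the $f_j$. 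The claim that $h_{k+1}\in\ker N(D\Ric,z)$ when $k'=1$ is immediate from the construction, since the degree-$(k+1)$ coefficient was introduced solely to absorb an obstruction and there is no degree-$(k+2)$ term feeding into its equation, so it solves the homogeneous indicial equation. The main obstacle, and the step requiring the most care, is verifying that the pairing hypotheses~\eqref{EqRicSolvv1Pair}--\eqref{EqRicSolvs1Pair} really do match the abstract obstruction functionals of the isomorphisms in Proposition~\ref{PropRicR}—i.e.\ identifying $\la[\ubar\delta\ubar\sfG,\chi]f,\cdot\rangle$ with $\la\pa_\lambda N(\delta\sfG,z-2)f_k,\cdot\rangle$ up to nonzero constants—which is exactly where Lemma~\ref{LemmaBgBdyPair} is invoked and where one must track the $r$-weights and fiber inner products from~\eqref{EqMkMinkInner} carefully; once that identification is in hand, and combined with~\eqref{EqRicAdj}--\eqref{EqRicAdjDel} to relate the dual pure-type bases in Proposition~\ref{PropRicMS} to the pairing targets $r^2\vect$, $\dd t$, $\dd(r\scal)$, the rest is bookkeeping.
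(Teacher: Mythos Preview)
Your proposal is correct and follows essentially the same approach as the paper: reduce to the indicial family, handle small $l$ via Propositions~\ref{PropRicMS}--\ref{PropRicR} and Lemma~\ref{LemmaRicRRestr}, large $l$ via Corollary~\ref{CorRicLarge}, and identify the pairing hypotheses with the obstruction functionals via Lemma~\ref{LemmaBgBdyPair} (the paper packages the iteration using the Laurent series $\tilde f(\lambda)=\sum_j(\lambda-z)^{-j-1}f_j$ rather than direct $\log r$ bookkeeping, but this is equivalent). One sharpening: in the $z=-1,0$ cases for the pure types carrying pairing obstructions (e.g.\ $\rmv 1$ at $z=-1$), the paper applies Lemma~\ref{LemmaRicRRestr} at the \emph{top} coefficient whenever $k\geq 1$ (since Bianchi gives $f_k\in K_{[k]}\subset K_{[1]}$ automatically), then shows the pairing condition \emph{propagates} to the reduced error $\tilde e$ and is only invoked at the final $k=0$ step---your ordering has this reversed, and your identification of the pairing with $\la\pa_\lambda N(\delta\sfG,z-2)f_k,\cdot\ra$ is only literally correct for $k=0$ (for general $k$ it is $\la(N(\delta\sfG,\lambda-2)\tilde f(\lambda))|_{\lambda=z},\cdot\ra$), so that paragraph should be reorganized accordingly.
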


The conditions~\eqref{EqRicSolvv1Pair}, \eqref{EqRicSolvs0Pair}, and \eqref{EqRicSolvs1Pair} only depend on $f_{\rmv 1}$, $f_{\rms 0}$, and $f_{\rms 1}$, respectively. Moreover, the pairings in~(7.36)--(7.38) are independent of the choice of $\chi$; indeed, the difference of any two such cutoffs is a function $\psi\in\CIc((0,\infty))$, and thus there are no boundary terms in the integration by parts computation $\la[\ubar\delta\ubar\sfG,\psi]f,\dd t\ra=\la\ubar\sfG \psi f,\ubar\delta^*\dd t\ra-\la \psi\ubar\delta\ubar\sfG f,\dd t\ra=0-0=0$ (similarly for the other two pairings).

\begin{proof}[Proof of Proposition~\usref{ThmRicSolv}]
  Let us write
  \[
    \tilde f(\lambda,\omega)=\sum_{j=0}^k(\lambda-z)^{-j-1}f_j(\omega),\qquad
    \tilde h(\lambda,\omega)=\sum_{j=0}^{k+k'}(\lambda-z)^{-j-1}h_j(\omega),
  \]
  so that $f(r,\omega)=\Res_{\lambda=z}(r^{\lambda-2}\tilde f(\lambda,\omega))$, while $h(r,\omega)=\Res_{\lambda=z}(r^\lambda\tilde h(\lambda,\omega))$ is the solution of~\eqref{EqRicSolv} which we seek. In terms of $\tilde f,\tilde h$, equation~\eqref{EqRicSolv} is equivalent to
  \begin{equation}
  \label{EqRicSolvInd}
    N(D\Ric,\lambda)\tilde h(\lambda) = \tilde f(\lambda) + {\rm hol.},
  \end{equation}
  where we write ${\rm hol.}$ for a $\lambda$-dependent tensor which is holomorphic at $\lambda=z$. This follows as in the proof of Lemma~\ref{LemmaBgBdyPair}. Indeed, multiplying~\eqref{EqRicSolvInd} by $r^{\lambda-2}$, integrating along a small circle around $\lambda=z$, and recalling our shorthand notation $N(D\Ric,\lambda)=N(r^2\wh{D_{\ubar g}\Ric}(0),\lambda)$ gives~\eqref{EqRicSolv}; conversely, we can pull the action of $r^2\wh{D_{\ubar g}\Ric}(0)$ on $h(r,\omega)=\frac{1}{2\pi i}\oint_z r^\lambda\tilde h(\lambda,\omega)\,\dd\lambda$ under the integral sign where it acts on $\tilde h$ via $N(D\Ric,\lambda)$, and~\eqref{EqRicSolvInd} follows.

  Similarly, we shall use that $\wh{\ubar\delta\ubar\sfG}(0)f=0$ is equivalent to
  \begin{equation}
  \label{EqRicSolvNdelG}
    N(\delta\sfG,\lambda-2)\tilde f(\lambda) = {\rm hol.}
  \end{equation}
  The conditions~\eqref{EqRicSolvv1Pair}, \eqref{EqRicSolvs0Pair}, and \eqref{EqRicSolvs1Pair} are moreover equivalent to
  \begin{align}
  \label{EqRicSolvv1Pair2}
      \big\la N(r\wh{\ubar\delta\ubar\sfG}(0),\lambda-2)\tilde f(\lambda),r\vect\big\ra_{L^2(\Sph^2)}\big|_{\lambda=-1}&=0 \quad \forall\,\vect\in\vectspace_1, \\
  \label{EqRicSolvs0Pair2}
      \big\la N(r\wh{\ubar\delta\ubar\sfG}(0),\lambda-2)\tilde f(\lambda),\dd t\ra_{L^2(\Sph^2)}\big|_{\lambda=0}&=0, \\
  \label{EqRicSolvs1Pair2}
      \big\la N(r\wh{\ubar\delta\ubar\sfG}(0),\lambda-2)\tilde f(\lambda),\dd(r\scal)\ra_{L^2(\Sph^2)}\big|_{\lambda=0}&=0 \quad \forall\,\scal\in\scalspace_1,
  \end{align}
  respectively; this is a special case of Lemma~\ref{LemmaBgBdyPair} (for $X=[\R^3;\{0\}]$, $w=-3$, $\alpha=0$, $L=r\wh{\ubar\delta\ubar\sfG}(0)$, and $z=-3,-2,-2$).

  \pfstep{Step 1. Individual solvability for pure types.} Fix a pure type $\bullet$, and replace $\tilde f$ by its pure type $\bullet$ part $\tilde f_\bullet$.

  \pfsubstep{(1.i)}{$z\notin\Z$.} Proposition~\ref{PropRicR} gives $K_{\bullet,[0]}(\delta\sfG,z-2)=R_{\bullet,[0]}(D\Ric,z)$. If $k=0$ and thus $f_0\in\ker N_\bullet(\delta\sfG,z-2)$, this produces $h_0$ so that~\eqref{EqRicSolvInd} holds. For general $k\in\N$, we have $f_k\in\ker N_\bullet(\delta\sfG,z-2)$; choosing $h_k$ with $N_\bullet(D\Ric,z)h_k=f_k$, we get
  \begin{equation}
  \label{EqRicSolvErr}
    N_\bullet(D\Ric,\lambda)\bigl((\lambda-z)^{-k-1}h_k\bigr) = \tilde f(\lambda) + \tilde e(\lambda),\qquad
    \tilde e(\lambda)=\sum_{j=0}^{k-1}(\lambda-z)^{-j-1}e_j,
  \end{equation}
  where thus $\tilde e(\lambda)$ is more regular by one power of $\lambda-z$; and since $N_\bullet(\delta\sfG,\lambda-2)$ annihilates the left hand side identically, we find using~\eqref{EqRicSolvNdelG} that
  \begin{equation}
  \label{EqRicSolvErr2}
    N_\bullet(\delta\sfG,\lambda-2)\tilde e(\lambda)\ \text{is holomorphic at}\ \lambda=z.
  \end{equation}
  Having thus reduced $k$ by $1$, the solvability of~\eqref{EqRicSolvInd} follows by induction on $k$.

  \pfsubstep{(1.ii)}{$z\in\{-l-1,l\}$, $l\geq 1$.} If $\bullet\neq\rms l,\rmv l$ (or even if $\bullet\in\{\rms l,\,\rmv l\}$ but $\tilde f_{\rms l}=0$ and $\tilde f_{\rmv l}=0$), then the arguments from step \textbf{(1.i)} apply without change. If $\bullet\in\{\rms l,\,\rmv l\}$ and $\tilde f_\bullet\neq 0$, then since $f_k\in\ker N_\bullet(\delta\sfG,z-2)$, Proposition~\ref{PropRicR} (part~\eqref{ItRicRs1}, resp.\ \eqref{ItRicRv1} for $l=1$, and part~\eqref{ItRicRsl}, resp.\ \eqref{ItRicRvl} for $l\geq 2$) produces $h_{k+1}\in\ker N_\bullet(D\Ric,z)$ and $h_k$ so that $f_k=\pa_\lambda N_\bullet(D\Ric,z)h_{k+1}+N_\bullet(D\Ric,z)h_k$; in other words,
  \[
    N_\bullet(D\Ric,\lambda)\bigl((\lambda-z)^{-k-2}h_{k+1} + (\lambda-z)^{-k-1}h_k \bigr) = \tilde f(\lambda) + \tilde e(\lambda)
  \]
  where, as in part \textbf{(1.i)}, $\tilde e$ has one power of $(\lambda-z)^{-1}$ less than $\tilde f$, and $N(\delta\sfG,\lambda-2)\tilde e(\lambda)$ is holomorphic at $\lambda=z$. An inductive argument finishes the proof also in this case.

  \pfsubstep{(1.iii)}{$z=-1$.} If $\bullet\neq\rms 0,\rmv 1$, the arguments from step \textbf{(1.i)} apply. For $\bullet=\rms 0$, the arguments from step \textbf{(1.ii)} apply in view of Proposition~\ref{PropRicR}\eqref{ItRicRs0}.

  For $\bullet=\rmv 1$ finally, consider first the case $k=0$. Since $f_0\in\ker N_{\rmv 1}(\delta\sfG,-3)$, we obtain $(N_{\rmv 1}(\delta\sfG,\lambda-2)\tilde f(\lambda))|_{\lambda=-1}=\pa_\lambda N_{\rmv 1}(\delta\sfG,-3)f_0$, and thus~\eqref{EqRicSolvv1Pair2} reads $\la\pa_\lambda N(\delta\sfG,-3)f_0,r\vect\ra=0$, $\vect\in\vectspace_1$. In view of the isomorphism~\eqref{EqRicRv1Iso}, this implies that we can write $f_0=N(D\Ric,-1)h_0$.

  For $k\geq 1$, the argument is different: we now have $f_k\in K_{\rmv 1,[k]}(\delta\sfG,-3)\subset K_{\rmv 1,[1]}(\delta\sfG,-3)$ and thus $f_k=N_{\rmv 1}(D\Ric,-1)h_k$ for some $h_k$ by Lemma~\ref{LemmaRicRRestr}. Thus, we again have~\eqref{EqRicSolvErr}--\eqref{EqRicSolvErr2} (with $z=-1$). Moreover, applying $N_{\rmv 1}(\delta\sfG,\lambda-2)$ to~\eqref{EqRicSolvErr} and taking the inner product with $r\vect$ gives
  \[
    0 = \big\la N_{\rmv 1}(\delta\sfG,\lambda-2)\tilde f(\lambda),r\vect\big\ra + \big\la N_{\rmv 1}(\delta\sfG,\lambda-2)\tilde e(\lambda),r\vect\big\ra,
  \]
  and thus
  \begin{equation}
  \label{EqRicSolvErrv1}
    \big\la N_{\rmv 1}(\delta\sfG,\lambda-2)\tilde e(\lambda),r\vect\big\ra\big|_{\lambda=-1} = 0.
  \end{equation}
  Therefore, $\tilde e$ has one power of $(\lambda+1)^{-1}$ less than $\tilde f$ while satisfying the same assumptions, and an inductive argument finishes the proof in this case. (If $k-1=0$, one applies the first part of the argument, whereas for $k-1\geq 1$ one repeats the second part.)

  \pfsubstep{(1.iv)}{$z=0$.} Since the scalar type $1$ and vector type $1$ cases in Proposition~\ref{PropRicR} (specifically \eqref{EqRicRs1Iso} and \eqref{EqRicRv1Iso}) and Lemma~\ref{LemmaRicRRestr} are completely analogous, up to replacing $z=0$ by $z=-1$, as far as the orders of (restricted) kernels and (generalized) ranges are concerned, the scalar type $1$ case follows by the same arguments as the vector type $1$ case.

  The treatment of the scalar type $0$ case is a combination of steps \textbf{(1.ii)} and \textbf{(1.iii)}. In the case $k=0$, the tensor $f_0\in\ker N(\delta\sfG,-2)$ lies in the kernel of the map~\eqref{EqRicRs0Iso} and thus can be written as $f_0=\pa_\lambda N_{\rms 0}(D\Ric,0)h_1+N_{\rms 0}(D\Ric,0)h_0$ where $h_1\in\ker N_{\rms 0}(D\Ric,0)$, which gives~\eqref{EqRicSolvInd} for $\tilde h(\lambda)=\lambda^{-2}h_1+\lambda^{-1}h_0$. On the other hand, if $k\geq 1$, then $f_k\in K_{\rms 0,[1]}(\delta\sfG,-2)=R_{\rms 0,[1]}(D\Ric,0)$ by Lemma~\ref{LemmaRicRRestr}, and thus we obtain $h_{k+1}\in\ker N_{\rms 0}(D\Ric,0)$ and $h_k$ so that
  \[
    N_{\rms 0}(D\Ric,\lambda)\bigl(\lambda^{-k-2}h_{k+1}+\lambda^{-k-1}h_k\bigr) = \tilde f(\lambda) + \tilde e(\lambda),
  \]
  where $\tilde e$ is as in~\eqref{EqRicSolvErr}--\eqref{EqRicSolvErr2}; and the arguments leading to~\eqref{EqRicSolvErrv1} apply \textit{mutatis mutandis} to give
  \[
    \big\la N_{\rms 0}(\delta\sfG,\lambda-2)\tilde e(\lambda),\dd t\big\ra\big|_{\lambda=0} = 0.
  \]
  By induction, we can solve away $\tilde e(\lambda)$ to finish the proof.

  \pfstep{Step 2. Simultaneous solvability for all pure types with large $l$.} With $z$ fixed, there exists $l_0\in\N_0$ so that Corollary~\ref{CorRicLarge} applies (with $\lambda=z$). Thus, for the projection $f_{\geq l_0,k}$ of $f_k$ off the space of scalar and vector type $l<l_0$ tensors, we have $N(\delta\sfG,z-2)f_{\geq l_0,k}=0$, whence there exists a solution $h_k\in\CI_{\geq l_0}(\ff;\ubar\upbeta^*S^2 T^*_{(0,0)}\R^4)$ of $N(D\Ric,z)h_k=f_{\geq l_0,k}$. Repeating the argument from step \textbf{(1.i)} then reduces the task to one where $k$ is reduced by $1$. Induction finishes the construction of $h$.

  The linear continuous dependence of $(h_0,\ldots,h_{k+k'})$ on $(f_0,\ldots,f_k)$ can be arranged by the same argument as in the proof of Theorem~\ref{ThmRicUniq}.
\end{proof}

\begin{rmk}[Necessity of solvability conditions]
\label{RmkAcBdy0Opm2}
  Proposition~\ref{PropRicR} and the comments following it imply that the conditions~\eqref{EqRicSolvv1Pair}--\eqref{EqRicSolvs1Pair} are necessary for the solvability of equation~\eqref{EqRicSolv} for $z=-1,0$ regardless of the value of $k'\in\N_0$. Tensors $f$ violating condition~\eqref{EqRicSolvs1Pair} (thus $z=0$) arise rather directly later on. An explicit example is the scalar type $1$ tensor $f=r^{-2}f_0$, $f_0=(-2,-\half,\half,-2,-\half,4)^T$ (in terms of~\eqref{EqMkYSplits1} with $0\neq\scal\in\scalspace_1$ fixed), for which one can check by direct computation that $f_0\in\ker N(\delta\sfG,-2)$ whereas the pairing~\eqref{EqRicSolvs1Pair} (for the same choice of $\scal$) evaluates to $24\pi\|\scal\|_{L^2(\Sph^2)}^2\neq 0$. See~\S\ref{SsAcGW} for the origin of this example.
\end{rmk}

\section{Linear analysis on \texorpdfstring{$M_\circ$}{the blown-up background spacetime}}
\label{SAc}

We use the setup and notation of~\S\ref{SM}. In this section, we solve the linearized Einstein vacuum equations with nontrivial right hand side $f\in\ker\delta_g\sfG_g$ on $M_\circ$. We shall only study the case that $f=\cO(|x|^{-2+\delta})$ for some $\delta>0$; in practice, we will in fact only encounter log-smooth $f$ (which thus have leading order behavior $|x|^{-1}(\log|x|)^k$). Importantly, Theorem~\ref{ThmRicSolv} is applicable to each term in the polyhomogeneous expansion of such $f$ at $x=0$ since for exponents $z\in\C$ with $z>0$, there are no further necessary conditions for solvability (since~\eqref{EqRicSolvv1Pair}--\eqref{EqRicSolvs1Pair} only enter for $z=0,-1$).

We only need to use here that $(M,g)$ is globally hyperbolic and satisfies $\Ric(g)-\Lambda g=0$, and that the initial data set $(X,\gamma,k)$ has no KIDs in the precompact connected smoothly bounded open neighborhood $\cU^\circ\subset X$ of $\fp\in X\cap\cC$. The curve $\cC=c(I)\subset M$ can be any smooth inextendible timelike curve, and we recall the blow-down map $\upbeta_\circ\colon M_\circ=[M;\cC]\to M$.

\begin{thm}[Solvability of the linearized Einstein vacuum equations at $M_\circ$]
\label{ThmAc}
  Let $\hat\cF\subset\C\times\N_0$ be an index set with $\Re\hat\cF>-2$. Set $\hat\cE=\{(z+j+2,l)\colon (z,k)\in\hat\cF,\ j\in\N_0,\ l\leq k+j+1\}$.\footnote{This index set is not sharp for those $(z,k)\in\hat\cF$ with $z\notin\Z$. Since in this paper all exponents $z$ will be integers, we content ourselves with the possibly oversized index set $\hat\cE$ here.} If $f\in\cA_\phg^{\hat\cF}(M_\circ;\upbeta_\circ^*S^2 T^*M)$, with $\supp f$ contained in the domain of influence of a compact subset of $\cU^\circ$, satisfies $\delta_g\sfG_g f=0$, then there exists
  \begin{equation}
  \label{EqAch}
    h = h_\sharp + \upbeta_\circ^*h_\flat,\qquad h_\sharp\in\cA_\phg^{\hat\cE}(M_\circ;\upbeta_\circ^*S^2 T^*M_\circ),\quad h_\flat\in\CI(M;S^2 T^*M),
  \end{equation}
  with the following properties:
  \begin{enumerate}
  \item\label{ItAcSolv} on $(M_\circ)^\circ$, we have
    \begin{equation}
    \label{EqAc}
      (D_g\Ric-\Lambda)h=f;
    \end{equation}
  \item\label{ItAcSupp} $h$ vanishes near $X\setminus\cU^\circ$, or equivalently $\supp h\cap\upbeta_\circ^*X\Subset\upbeta_\circ^*\cU^\circ$, and in fact $\supp h$ is contained in the domain of influence of a compact subset of $\cU^\circ$;
  \item\label{ItAchflat} if $\cC$ is a geodesic: $h_\flat$ vanishes quadratically at $\cC$, i.e.\ its coefficients in smooth coordinates on $M$ near $\cC$ vanish quadratically at $\cC$.
  \end{enumerate}
\end{thm}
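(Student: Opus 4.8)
The plan is to solve \eqref{EqAc} in two stages, first formally at $\cC$ and then away from $\cC$, as outlined in \S\ref{SssIPfF} and Remarks~\ref{RmkIPfhsharp1}--\ref{RmkIPfhsharp2}.

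\textbf{Step 1: the singular part $h_\sharp$.}
First I would construct $h_\sharp$ as a polyhomogeneous solution of \eqref{EqAc} modulo $\upbeta_\circ^*\CI(M)$, i.e.\ modulo a source vanishing to infinite order at $\cC$. Identifying a collar neighborhood of $\cC$ with a neighborhood of the zero section of $N\cC$ and using Fermi normal coordinates $(t,x)$ (so $g=\ubar g+\cO(|x|^2)$), the operator $r^2(D_g\Ric-\Lambda)$ lifts under $\upbeta_\circ$ to an edge operator $L_\eop\in\Diffe^2(M_\circ;\dots)$ by Lemma~\ref{LemmaBgE}, whose b-normal operator at $\pa M_\circ$ is, by Lemma~\ref{LemmaBgEb} and the equality $g|_\cC=\ubar g$, the fiberwise operator $r^2\wh{D_{\ubar g}\Ric}(0)$ depending parametrically on $t\in I$. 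The source $f$ has a polyhomogeneous expansion at $\pa M_\circ$ with index set $\hat\cF$, and since $\Re\hat\cF>-2$ all exponents $z$ appearing satisfy $\Re z>-2$; in particular (as noted in the opening paragraph of \S\ref{SAc}) none of them equals $-1$ or $0$ among the integers $\geq -1$ that actually occur, so the exceptional cases of Theorem~\ref{ThmRicSolv} requiring pairing conditions never arise. The compatibility condition $\delta_g\sfG_g f=0$ implies, upon restricting to $\pa M_\circ$ and using $N_{M_\circ}(\delta_g\sfG_g)=\upbeta_\circ^*(\wh{\ubar\delta\ubar\sfG}(0))$ (up to the $\cO(|x|^2)$ discrepancy of $g$ from $\ubar g$, which only contributes to even lower order and is handled in the iteration), that the leading coefficient of $f$ lies in $\ker\wh{\ubar\delta\ubar\sfG}(0)$; Theorem~\ref{ThmRicSolv} then produces, for each term in the expansion, a quasi-homogeneous solution of $\wh{D_{\ubar g}\Ric}(0)h_0=f_0$ with controlled logarithmic powers and with $h_0$ depending continuously on $f_0$. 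One then iterates order by order: at each step one solves the model equation against the current leading coefficient of the error (which automatically lies in $\ker\wh{\ubar\delta\ubar\sfG}(0)$ by the argument above applied to the error, which is again divergence-free modulo higher order), subtracts the corresponding term from $h_\sharp$, and improves the error by one order. The $t$-derivatives of lower-order coefficients enter the right-hand side at later stages (as in Remark~\ref{RmkIPfhsharp1}), but always parametrically in $t$ and only through finitely many already-constructed coefficients, so smoothness in $t$ is preserved. Borel summing the resulting formal series yields $h_\sharp\in\cA_\phg^{\hat\cE}(M_\circ;\upbeta_\circ^*S^2 T^*M_\circ)$ with $(D_g\Ric-\Lambda)h_\sharp-f\in\upbeta_\circ^*\CIdot_{\cC}(M)$, i.e.\ it lifts a smooth tensor on $M$ vanishing to infinite order at $\cC$; the index set $\hat\cE$ is exactly the one obtained by shifting $\hat\cF$ up by $2$ and allowing the extra logarithmic powers produced by the nonprincipal terms.

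\textbf{Step 2: the smooth part $h_\flat$.}
Set $f_\flat:=f-(D_g\Ric-\Lambda)h_\sharp$, a smooth symmetric $2$-tensor on $M$ (vanishing to infinite order at $\cC$), still satisfying $\delta_g\sfG_g f_\flat=0$ (since $\delta_g\sfG_g f=0$ and $\delta_g\sfG_g\circ(D_g\Ric-\Lambda)=0$ by \eqref{EqELin2ndBianchi}), and with support in the domain of influence of a compact subset of $\cU^\circ$. I would then solve $(D_g\Ric-\Lambda)h_\flat=f_\flat$ by the procedure recalled in \S\ref{SsELin}: choose linearized initial data $(\dot\gamma,\dot k)$ at $X$ solving $D_{(\gamma,k)}P(\dot\gamma,\dot k)=(\sfG_g f_\flat)(\nu,-)$ together with $\delta_g\sfG_g h|_X-\theta=0$, which is possible with $\supp(\dot\gamma,\dot k)\Subset\cU^\circ$ because $(X,\gamma,k)$ has no KIDs in $\cU^\circ$ and the underdetermined-elliptic solvability theory for the linearized constraints with support control (Proposition~\ref{PropAcConstr}, going back to \cite{ChruscielDelayMapping}) applies---here one uses that $(\sfG_g f_\flat)(\nu,-)$ is supported in $\cU^\circ$, and that the vanishing of $P_1$-component integrated against any KID is automatic in the absence of KIDs. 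Then solve the gauge-fixed wave equation \eqref{EqELinFixed} with the corresponding Cauchy data $(h_0,h_1)$; the gauge $1$-form $\eta=\delta_g\sfG_g h-\theta$ satisfies $\Box^\cC_g\eta=0$ with vanishing Cauchy data, hence $\eta\equiv 0$, so $h_\flat$ solves the ungauged equation. Finite speed of propagation gives that $\supp h_\flat$ is contained in the domain of influence of a compact subset of $\cU^\circ$, and since $f_\flat$ vanishes near $X\setminus\cU^\circ$ we get $\supp h_\flat\cap X\Subset\cU^\circ$; combined with the same support property of $h_\sharp$ (which is automatic from that of $f$ by the order-by-order construction being local near $\cC$ and the $\cU^\circ$-support of $f$), this yields property~\eqref{ItAcSupp}.

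\textbf{Step 3: the geodesic refinement.}
For \eqref{ItAchflat}, when $\cC$ is a geodesic I would re-examine Step 1: by Lemma~\ref{LemmaGLFermi} the Fermi normal coordinates can be chosen so that $g-\ubar g$ vanishes quadratically at $\cC$, and one should check that the lowest exponent $z$ actually occurring in the expansion of $h_\sharp$ is $\geq 1$ (which it is, since $\hat\cE\subset(1+\Re\hat\cF+2-\dots)$, concretely $\Re\hat\cE\geq\Re\hat\cF+2>0$, and in our applications $\hat\cF$ has $\Re\hat\cF\geq -1$ so $h_\sharp=\cO(|x|)$; to get quadratic vanishing of the \emph{smooth} part one instead argues that the $\cO(|x|)$ coefficient of $h_\sharp$ can be absorbed). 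More precisely: decompose $h_\sharp$ as the sum of its genuinely polyhomogeneous-with-logs part and a remainder lying in $\upbeta_\circ^*\CI(M)$; the latter, call it $h_\sharp^{\mathrm{sm}}$, can be arranged to vanish quadratically at $\cC$ by subtracting its value and first normal derivative at $\cC$ and folding the corresponding smooth correction---which solves the homogeneous linearized equation modulo lower order---into $h_\flat$ via an adjustment of the Cauchy data (still supported in $\cU^\circ$, still compatible with the constraints since the correction is smooth and we are free in choosing $(\dot\gamma,\dot k)$ up to the nullspace). Then redefine $h_\flat$ to include $h_\sharp^{\mathrm{sm}}$ minus its quadratically-vanishing-part; the leftover is pushed into $h_\sharp$.

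\textbf{Main obstacle.}
The crux is Step 1: proving that the formal iteration at $\cC$ closes up, i.e.\ that at \emph{every} order the accumulated error's leading coefficient lies in $\ker\wh{\ubar\delta\ubar\sfG}(0)$ so that Theorem~\ref{ThmRicSolv} applies, and that the logarithmic powers stay within the announced index set $\hat\cE$ rather than growing without bound. The first point follows from propagating the identity $\delta_g\sfG_g(\text{error})=0$ through the iteration (using \eqref{EqELin2ndBianchi} and the normal-operator compatibility of $\delta_g\sfG_g$ with $\wh{\ubar\delta\ubar\sfG}(0)$, together with the fact that the $\cO(|x|^2)$ error of $g$ relative to $\ubar g$ only feeds into strictly lower-order terms). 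The second is exactly the content of the logarithm-counting in Theorems~\ref{ThmRicUniq} and \ref{ThmRicSolv}: the only mechanism producing new logs is hitting an exceptional exponent $z\in\{-l-1,l\}$, and since the exponents that arise are bounded below by $\Re\hat\cF>-2$ and increase by integer steps, only finitely many such resonances are encountered in any fixed range of $\Re z$, each contributing at most one extra power; tracking this bookkeeping carefully (and the parametric $t$-dependence, which can raise log powers via $t$-differentiation of lower coefficients) is the technically delicate part, but it is entirely governed by the finite-dimensional linear algebra of \S\ref{SsLA}--\S\ref{SsRicSolv}.
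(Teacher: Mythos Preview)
Your Steps 1 and 2 are essentially the paper's argument: Proposition~\ref{PropAcBdy} constructs $h_\sharp$ by the order-by-order iteration you describe (your observation that the exceptional pairing conditions of Theorem~\ref{ThmRicSolv} at $z=-1,0$ never arise, since the \emph{solution} exponents are $>0$ when $\Re\hat\cF>-2$, is exactly the point made at the opening of \S\ref{SAc}), and Proposition~\ref{PropAcTrue} solves for $h_\flat$ via the gauge-fixed initial value problem with constraint data produced by Proposition~\ref{PropAcConstr}.

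Your Step 3, however, is confused about what needs correcting and misses the actual mechanism. Since $\Re\hat\cE>0$, the polyhomogeneous piece $h_\sharp$ already vanishes at $\cC$; it is $h_\flat\in\CI(M;S^2T^*M)$ whose $0$th and $1$st order Taylor coefficients at $\cC$ are the obstruction. Your proposal to ``fold a smooth correction into $h_\flat$ via an adjustment of the Cauchy data'' cannot work: the Cauchy data live on the hypersurface $X$, and adjusting them at the single point $\fp=X\cap\cC$ does not force $h_\flat$ to vanish along the entire curve $\cC$ (the equation $(D_g\Ric-\Lambda)h_\flat=f_\flat$ with $f_\flat$ vanishing to infinite order at $\cC$ does not propagate vanishing along $\cC$). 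The paper instead modifies $h_\flat$ by a \emph{pure gauge} term: one replaces $h_\flat$ by $h_\flat-\delta_g^*(\hat\chi\omega)$ for a suitable $\omega\in\CI(M;T^*M)$ supported near $\cC$, which leaves \eqref{EqAcTrue} unchanged since $(D_g\Ric-\Lambda)\circ\delta_g^*=0$. The point is that when $\cC$ is a geodesic (so $g=\ubar g+\cO(|x|^2)$ in Fermi normal coordinates), every smooth symmetric $2$-tensor on $M$ agrees at $\cC$, modulo $\cO(|x|^2)$, with $\delta_g^*\omega$ for some explicit $\omega$ (built from antiderivatives in $t$ of the Taylor coefficients of $h_\flat$); the paper gives both this explicit construction and a cleaner conceptual argument via Fermi normal coordinates for the perturbed metric $g+sh_\flat$, yielding $\omega$ as the dual of the infinitesimal generator of the resulting family of coordinate changes.
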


\begin{rmk}[Weight at $M_\circ$]
\label{RmkAcWeight}
  Note that $f,h$ are locally integrable at $\cC$ with respect to the lift of a smooth positive density on $M$ to $M_\circ$ (such as $|\dd g|$), and thus they can be extended uniquely from $M\setminus\cC=(M_\circ)^\circ$ to $L^1_\loc$-distributions $E f,E h$ on $M$. The difference $(D_g\Ric-\Lambda)(E h)-E f$, which is supported at $\cC$, must vanish identically by homogeneity considerations, since both summands are $\lesssim|x|^{-2+\delta}$ near $x=0$ where $\delta\in(0,\min\Re\hat\cF+2)$. Since we always have the distributional equality $\delta_g\sfG_g(D_g\Ric-\Lambda)(E h)=0$ by the linearized second Bianchi identity, a necessary condition for the solvability of~\eqref{EqAc} on $M\setminus\cC$ (with $|h|\lesssim|x|^{-1+\delta}$ for some $\delta>0$) is $\delta_g\sfG_g E f=0$ (in the distributional sense on $M$); and this equation indeed holds not only in $(M_\circ)^\circ$ by assumption, but indeed globally by homogeneity considerations since $\delta_g\sfG_g E f$ is polyhomogeneous with degrees $>-3$ (which excludes $\delta$-distributions at $\cC$). If one drops the assumption $\Re\hat\cF>-2$, then one may have $\delta_g\sfG_g E f\neq 0$ even though $\delta_g\sfG_g f=0$. See also Remark~\ref{RmkAcBdy0Opm2} and~\S\ref{SsAcGW} below.
\end{rmk}

The proof of Theorem~\ref{ThmAc} will be given in~\S\S\ref{SsAcBdy}--\ref{SsAcTrue}. In~\S\ref{SsAcBdy}, we first find a formal solution $h_\sharp$ at $\pa M_\circ$, i.e.\ $h_\sharp$ satisfies~\eqref{EqAc} to infinite order at $\pa M_\circ$. We then correct the formal solution to a true solution by solving an initial value problem (with carefully chosen initial data) on the blow-down $M$ of $M_\circ$ in~\S\ref{SsAcTrue}. Constraints which our sharp solvability theory imposes on the conditions under which our gluing construction can succeed at all are discussed in~\S\ref{SsAcGW}.

\subsection{Formal solution near \texorpdfstring{$\pa M_\circ$}{the blown-up geodesic}}
\label{SsAcBdy}

For this part, we only need to assume that $\Ric(g)-\Lambda g$ vanishes to infinite order at $\cC$. Recall the Fermi normal coordinates $(t,x)$ near $\cC\subset M$ and parameterize $\cC$ by $c\colon t\mapsto(t,0)$. Write
\[
  \ubar g=-\dd t^2+\dd x^2
\]
for the Minkowski metric in these coordinates. (Thus, $g-\ubar g$ vanishes at $x=0$ by Lemma~\ref{LemmaGLFermi}.) Introduce $r=|x|$ and $\omega=\frac{x}{|x|}\in\Sph^2$ in order to pass to $M_\circ=[M;\cC]$. The coordinates $(t,r,\omega)$ are valid in a collar neighborhood
\begin{subequations}
\begin{equation}
\label{EqAcBdyN}
  \cN=\{(t,r,\omega)\colon t\in I,\ 0\leq r<r_0(t)\}
\end{equation}
of $\pa M_\circ$, where we recall that $r_0\in\CI(I)$ satisfies $0<r_0<\half$. Let
\begin{equation}
\label{EqAcBdyNCutoff}
  \hat\chi\in\CIc(\cN),\qquad \hat\chi=1\ \text{near}\ \pa\cN=r^{-1}(0)
\end{equation}
\end{subequations}
be a cutoff function. We identify $\cN$ with a collar neighborhood of the zero section in ${}^+N\cC$.

Since $D_g\Ric-\Lambda\in\Diff^2(M;S^2 T^*M)$ by~\eqref{EqELinDRic}, Lemma~\ref{LemmaBgE} gives $\upbeta_\circ^*(D_g\Ric-\Lambda)\in r^{-2}\Diffe^2(M_\circ;\upbeta_\circ^*S^2 T^*M)$, and Lemma~\ref{LemmaBgEb} allows us to compute the b-normal operator. To wit, the principal symbol of $D_g\Ric-\Lambda$ over a point $p\in\cC$ only depends on $g(p)$, which in our local coordinates is equal to the Minkowski metric $\ubar g$. Thus, the b-normal operator of $\upbeta_\circ^*r^2(D_g\Ric-\Lambda)$, restricted to a level set $t^{-1}(t_0)$, $t_0\in I$, is the same as that of $\upbeta_\circ^*r^2 D_{\ubar g}\Ric$. The latter is a vertical operator which, restricted to a fiber $t^{-1}(t_0)$, $t_0\in\R$, of ${}^+N\cC$, acts on sections of the pullback of $S^2 T^*_{c(t_0)}M\cong S^2 T^*_{(t_0,0)}(\R\times\R^3)$ along the projection
\[
  \pi\colon{}^+N\cC\to\cC.
\]
It is the zero energy operator of $r^2 D_{\ubar g}\Ric$, i.e.\ obtained from $r^2 D_{\ubar g}\Ric$, regarded as a differential operator on $\R_t\times\R^3_x$, by dropping all $\pa_t$ (or more precisely $r\pa_t$) derivatives. It is thus independent of $t_0\in I$. We denote this zero energy operator by
\begin{equation}
\label{EqAcBdy0EnOp}
  \wh{D_{\ubar g}\Ric}(0) \in \Diff^2(\R^3;S^2 T^*_{(t_0,0)}\R^4);
\end{equation}
it is the unique operator with $D_{\ubar g}\Ric(h)|_{t^{-1}(t_0)}=\wh{D_{\ubar g}\Ric}(0)(h|_{t^{-1}(t_0)})$ for all $t$-invariant $h$. Since $\ubar g$ is homogeneous of degree $2$ with respect to dilations $(t_0+t',x)\mapsto(t_0+\lambda t',\lambda x)$, the operator $D_{\ubar g}\Ric$ is homogeneous of degree $-2$, and thus so is $\wh{D_{\ubar g}\Ric}(0)$ with respect to dilations $x\mapsto\lambda x$. Therefore,
\begin{equation}
\label{EqAcBdyNormOp}
  \upbeta_\circ^*(r^2\wh{D_{\ubar g}\Ric}(0)) \in {}^\vee\Diff_{\bop,\rm I}^2({}^+N\cC;\pi^*S^2 T^*_\cC M).
\end{equation}
(This also follows from the explicit expression in Corollary~\ref{CorMkDiffDRic}.) In summary:

\begin{lemma}[b-normal operator at $\pa M_\circ$]
\label{LemmaAcBdyNormOp}
  We have
  \[
    \upbeta_\circ^*(D_g\Ric-\Lambda) \in r^{-2}\Diffe^2(M_\circ;\upbeta_\circ^*S^2 T^*M) \subset r^{-2}\Diffb^2(M_\circ;\upbeta_\circ^*S^2 T^*M).
  \]
  Moreover, if we identify $S^2 T^* M\cong \pi^*S^2 T^*_\cC M$ over $\cN$, then
  \begin{equation}
  \label{EqAcBdyNormOpDiff}
    \upbeta_\circ^*(D_g\Ric-\Lambda) - \hat\chi \upbeta_\circ^*\wh{D_{\ubar g}\Ric}(0) \hat\chi \in r^{-1}\Diffb^2(M_\circ;\upbeta_\circ^*S^2 T^*M).
  \end{equation}
\end{lemma}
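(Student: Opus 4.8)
\textbf{Proof of Lemma~\usref{LemmaAcBdyNormOp} (plan).}
The plan is to deduce both assertions directly from the edge‑operator machinery of~\S\usref{SssBgE}, together with the observation that the principal part of $D_g\Ric-\Lambda$ along $\cC$ is governed entirely by the value of $g$ on $\cC$.

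First I would establish the membership. By~\eqref{EqELinDRic} the operator $D_g\Ric-\Lambda$ lies in $\Diff^2(M;S^2 T^*M)$, i.e.\ it has smooth coefficients on $M$. Applying Lemma~\usref{LemmaBgE} with the $1$‑dimensional submanifold $\cC$ and $E=F=S^2 T^*M$ gives $\upbeta_\circ^*\bigl(r^2(D_g\Ric-\Lambda)\bigr)\in\Diffe^2(M_\circ;\upbeta_\circ^*S^2 T^*M)$; dividing by the defining function $r^2$ of $\pa M_\circ$ yields $\upbeta_\circ^*(D_g\Ric-\Lambda)\in r^{-2}\Diffe^2(M_\circ;\upbeta_\circ^*S^2 T^*M)$. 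Since $\Ve(M_\circ)\subset\Vb(M_\circ)$ one has $\Diffe^2(M_\circ)\subset\Diffb^2(M_\circ)$, which gives the stated inclusion into $r^{-2}\Diffb^2$.

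Next I would identify the b‑normal operator. By Lemma~\usref{LemmaBgEb}, the b‑normal operator $\wh{L_\eop}(0)$ of $\upbeta_\circ^*(r^2 L)$, where $L:=D_g\Ric-\Lambda$, lies in ${}^\vee\Diff_{\bop,\rm I}^2([N\cC;o];\pi^*S^2 T^*_\cC M)$, and its restriction to each fiber $[N_p\cC;o]$, $p\in\cC$, depends only on the restriction of the principal symbol $\sigma_2(L)$ to $\ann T_p\cC\subset T_p^*M$. The second‑order part of $L$ is $\tfrac12\Box_g-\delta_g^*\delta_g\sfG_g$ (the $\Lambda$‑term is of order $0$), and each of the symbols $\sigma_2(\Box_g)$, $\sigma_1(\delta_g^*)$, $\sigma_1(\delta_g)$, $\sigma_0(\sfG_g)$ at a point $p$ is built polynomially from $g(p)$ and $g(p)^{-1}$, with no derivatives of $g$ entering; hence $\sigma_2(L)|_p=\sigma_2(D_{\ubar g}\Ric)|_p$ for every $p\in\cC$, using that $g(p)=\ubar g|_p$ in Fermi normal coordinates (Lemma~\usref{LemmaGLFermi}). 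Consequently $\wh{L_\eop}(0)$ agrees, fiber by fiber and hence as an operator, with the b‑normal operator of $\upbeta_\circ^*(r^2 D_{\ubar g}\Ric)$; and since $\ubar g$ is $t$‑translation invariant and homogeneous of degree $2$ under the dilations $(t,x)\mapsto(t,\lambda x)$, this latter operator is obtained from $r^2 D_{\ubar g}\Ric$ by deleting all $\pa_t$‑derivatives, i.e.\ it equals $\upbeta_\circ^*\bigl(r^2\wh{D_{\ubar g}\Ric}(0)\bigr)$, cf.\ \eqref{EqAcBdy0EnOp}--\eqref{EqAcBdyNormOp}; here I use the identification of $S^2 T^*M$ with $\pi^*S^2 T^*_\cC M$ over the collar $\cN$, which is compatible with the pullback‑bundle structure by Remark~\usref{RmkBgEbPullback}.

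Finally I would invoke the general statement of~\S\usref{SssBgE} (recorded just before Example~\usref{ExBgEb}): $\upbeta_\circ^*(r^2 L)-\hat\chi\,\wh{L_\eop}(0)\,\hat\chi\in r\Diffb^2(M_\circ;\upbeta_\circ^*S^2 T^*M)$. Substituting $\wh{L_\eop}(0)=\upbeta_\circ^*(r^2\wh{D_{\ubar g}\Ric}(0))$ and multiplying on the left by $r^{-2}$---which commutes with the multiplication operator $\hat\chi$ and satisfies $r^{-2}\circ\bigl(r^2\wh{D_{\ubar g}\Ric}(0)\bigr)=\wh{D_{\ubar g}\Ric}(0)$---turns the left‑hand side into $\upbeta_\circ^*(D_g\Ric-\Lambda)-\hat\chi\,\upbeta_\circ^*\wh{D_{\ubar g}\Ric}(0)\,\hat\chi$ and the right‑hand side into $r^{-2}\cdot r\Diffb^2=r^{-1}\Diffb^2$, which is exactly~\eqref{EqAcBdyNormOpDiff}. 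No essential difficulty is expected here: the only points requiring care are the bundle identification near $\cC$ (handled by Remark~\usref{RmkBgEbPullback}) and the verification that $\sigma_2(D_g\Ric-\Lambda)$ at $p\in\cC$ depends on nothing beyond $g(p)$, which is immediate from~\eqref{EqELinDRic}; alternatively the whole computation can be carried out by hand from the explicit formula for $N(\rho^{-2}\wh{D_{\ubar g}\Ric}(0),\lambda)$ in Corollary~\usref{CorMkDiffDRic}.
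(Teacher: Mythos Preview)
Your proposal is correct and follows essentially the same approach as the paper: both apply Lemma~\usref{LemmaBgE} for the membership, then use Lemma~\usref{LemmaBgEb} together with the fact that $\sigma_2(D_g\Ric-\Lambda)|_p$ depends only on $g(p)=\ubar g|_p$ to identify the b-normal operator, and finally invoke the general statement preceding Example~\usref{ExBgEb}. One small slip: $\ubar g$ is not homogeneous under $(t,x)\mapsto(t,\lambda x)$---the paper uses the spacetime dilations $(t_0+t',x)\mapsto(t_0+\lambda t',\lambda x)$ to conclude that $D_{\ubar g}\Ric$ (and hence $\wh{D_{\ubar g}\Ric}(0)$) is homogeneous of degree $-2$---but this does not affect your argument since the identification of the b-normal operator already follows from Lemma~\usref{LemmaBgEb}.
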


Identifying the fibers of ${}^+N\cC$ over different points on $\cC$ by means of their trivializations given by Fermi normal coordinates, and similarly for the fibers of $T^*_\cC M$, the restriction of the operator~\eqref{EqAcBdyNormOp} to ${}^+N_{c(t_0)}\cC$ is independent of $t_0\in I$. We thus mainly need to study
\begin{equation}
\label{EqAcBdy0OpMink}
  \ubar\upbeta^*\bigl(r^2\wh{D_{\ubar g}\Ric}(0)\bigr) \in \Diff_{\bop,\rm I}^2\bigl([\R^3;\{0\}];\pi_\ff^*\ubar\upbeta^*S^2 T^*_{(0,0)}\R^4\bigr),
\end{equation}
where $\ubar\upbeta\colon[\R^3;\{0\}]=[0,\infty)_r\times\Sph^2_\omega\to\R^3$ is the blow-down map, $\pi_\ff(r,\omega)=\omega$ is as in Theorem~\ref{ThmRicSolv}, and where, as before, we regard $\R^3=t^{-1}(0)\subset\R^4$. The solvability theory for this operator, for quasi-homogeneous right hand sides, was analyzed in Theorem~\ref{ThmRicSolv}.

\begin{lemma}[Formal solution at $\cC$ with restricted polyhomogeneous right hand side]
\label{LemmaAcBdyFormal}
  Let $z\in\C\setminus(-2-\N_0)$, and let $\hat\cF\subset(z+\N_0)\times\N_0$ be an index set. Let $\hat\cE=\{(w+j+2,l) \colon (w,k)\in\hat\cF,\ j\in\N_0,\ l\leq k+j+1\}$. Let
  \[
    f \in \cA_\phg^{\hat\cF}(M_\circ;\upbeta_\circ^*S^2 T^*M),\qquad \delta_g\sfG_g f\in\CIdot(M_\circ;\upbeta_\circ^*T^*M),
  \]
  i.e.\ $\delta_g\sfG_g f$ vanishes to infinite order at $\pa M_\circ$. Then there exists $h\in\cA_\phg^{\hat\cE}(M_\circ;\upbeta_\circ^*S^2 T^*M)$, with support contained in any fixed neighborhood of $\pa M_\circ$, so that
  \begin{equation}
  \label{EqAcBdyFormal}
    (D_g\Ric-\Lambda)(h)=f + f_\flat,\qquad f_\flat\in\CIdot(M_\circ;\upbeta_\circ^*S^2 T^*M).
  \end{equation}
\end{lemma}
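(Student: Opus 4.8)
The plan is to solve~\eqref{EqAcBdyFormal} iteratively, order by order in the polyhomogeneous expansion at $\pa M_\circ$, with each step governed by the model solvability theory of Theorem~\ref{ThmRicSolv} applied with smooth parametric dependence on the point of $\cC$. \emph{Setup.} Work in the collar $\cN$ of $\pa M_\circ$ with Fermi normal coordinates $(t,r,\omega)$, and abbreviate $P=\upbeta_\circ^*(D_g\Ric-\Lambda)$. By Lemma~\ref{LemmaAcBdyNormOp}, near $\pa M_\circ$ we have $r^2 P=\upbeta_\circ^*\bigl(r^2\wh{D_{\ubar g}\Ric}(0)\bigr)+rR$ with $R\in\Diffb^2$, where $\upbeta_\circ^*(r^2\wh{D_{\ubar g}\Ric}(0))\in\Diff_{\bop,\rm I}^2$ is a $t$-parametrized family of operators that are dilation-invariant in $r$ and involve no $\pa_t$-derivatives, with indicial family $N(r^2\wh{D_{\ubar g}\Ric}(0),\lambda)$ as studied in~\S\ref{SRic}. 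Likewise, by Lemmas~\ref{LemmaBgE} and~\ref{LemmaBgEb}, $r\delta_g\sfG_g\in\Diffb^1(M_\circ;\upbeta_\circ^*S^2 T^*M,\upbeta_\circ^*T^*M)$ has b-normal operator $\upbeta_\circ^*(r\wh{\ubar\delta\ubar\sfG}(0))$ (again $t$-parametrized, dilation-invariant, $\pa_t$-free), so $r\delta_g\sfG_g=\upbeta_\circ^*(r\wh{\ubar\delta\ubar\sfG}(0))+rR'$ with $R'\in\Diffb^1$ near $\pa M_\circ$; in particular, for polyhomogeneous $q$ with leading exponent $w$ at $\pa M_\circ$ and leading block $q_w$, the tensor $\delta_g\sfG_g q$ has leading exponent $w-1$ with leading block $\wh{\ubar\delta\ubar\sfG}(0)q_w$ (the correction $r^{-1}\cdot rR'q$ has exponent $\geq w$, hence is strictly lower order). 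Finally, linearizing the identity $\delta_g\sfG_g(\Ric(g)-\Lambda g)=0$ (valid for all metrics, cf.\ \eqref{EqELin2ndBianchi}) and using that $\Ric(g)-\Lambda g$ vanishes to infinite order at $\cC$ shows that $\delta_g\sfG_g\circ(D_g\Ric-\Lambda)$ maps polyhomogeneous tensors into tensors vanishing to infinite order at $\pa M_\circ$.

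\emph{Iteration.} Set $e^{(0)}=f$ and suppose inductively that $e^{(n)}\in\cA_\phg(M_\circ;\upbeta_\circ^*S^2 T^*M)$ has exponents $\geq z+n$ at $\pa M_\circ$ and $\delta_g\sfG_g e^{(n)}\in\CIdot$. Put $w=z+n$ and let $e^{(n)}_w(t;r,\omega)=\sum_j\tfrac1{j!}r^w(\log r)^j\pi_\ff^*\phi_j(t,\cdot)$ be the leading block, with $\phi_j$ depending smoothly on $t$. By the Setup, $\delta_g\sfG_g e^{(n)}\in\CIdot$ forces $\wh{\ubar\delta\ubar\sfG}(0)\bigl(e^{(n)}_w\bigr)=0$ for every $t$; and since $w\geq z$ with $z\notin-2-\N_0$, the forcing exponent $w$ is never $-2$ or $-3$, so Theorem~\ref{ThmRicSolv}, with the role of ``$z$'' there played by $w+2$, applies with no additional pairing conditions. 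Applying it for each fixed $t$, and using the linear continuous dependence of the solution on the data, yields a $t$-smooth family $h^{(n)}(t;r,\omega)=\sum_j\tfrac1{j!}r^{w+2}(\log r)^j\pi_\ff^*\psi_j(t,\cdot)$ with $\wh{D_{\ubar g}\Ric}(0)h^{(n)}=e^{(n)}_w$ in $r>0$, carrying at most one more power of $\log r$ than $e^{(n)}_w$. Set $\tilde h^{(n)}=\hat\chi h^{(n)}\in\cA_\phg(M_\circ;\upbeta_\circ^*S^2 T^*M)$ and $e^{(n+1)}=e^{(n)}-(D_g\Ric-\Lambda)\tilde h^{(n)}$. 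Using $r^2 P-\upbeta_\circ^*(r^2\wh{D_{\ubar g}\Ric}(0))\in r\Diffb^2$ near $\pa M_\circ$ and that $[D_g\Ric-\Lambda,\hat\chi]$ is supported away from $\pa M_\circ$, the $r^w$-block of $e^{(n+1)}$ cancels, so $e^{(n+1)}$ has exponents $\geq z+n+1$; and $\delta_g\sfG_g e^{(n+1)}\in\CIdot$ by the second-Bianchi remark in the Setup. Tracking the logarithmic powers---each step raising the top power by at most one---gives that all $\tilde h^{(n)}$ lie in $\cA_\phg^{\hat\cE}$, the stated index set $\hat\cE$ being deliberately non-sharp (cf.\ the footnote to the statement).

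\emph{Conclusion.} Let $h\in\cA_\phg^{\hat\cE}(M_\circ;\upbeta_\circ^*S^2 T^*M)$ be an asymptotic sum of the $\tilde h^{(n)}$, which exists since $\cA_\phg$ is closed under asymptotic summation; then $h$ is supported in the prescribed neighborhood of $\pa M_\circ$. The tensor $(D_g\Ric-\Lambda)h-f$ is polyhomogeneous and, by construction, vanishes to infinite order at $\pa M_\circ$; hence it lies in $\CIdot(M_\circ;\upbeta_\circ^*S^2 T^*M)$, which is the desired $f_\flat$ in~\eqref{EqAcBdyFormal}. The main obstacle in executing this plan is the bookkeeping: aligning the weight conventions of $P$, of its zero-energy model $\wh{D_{\ubar g}\Ric}(0)$, and of $\delta_g\sfG_g$; verifying that the hypothesis $\delta_g\sfG_g f\in\CIdot$ descends at each stage to exactly the constraint $\wh{\ubar\delta\ubar\sfG}(0)e^{(n)}_w=0$ needed to invoke Theorem~\ref{ThmRicSolv}; and controlling the accumulation of logarithmic factors so that the full family indeed lies in $\cA_\phg^{\hat\cE}$.
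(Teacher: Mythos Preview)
Your proposal is correct and follows essentially the same approach as the paper's proof: extract the leading block of the error, observe that the $\delta_g\sfG_g$-constraint forces $\wh{\ubar\delta\ubar\sfG}(0)$ to annihilate it, apply Theorem~\ref{ThmRicSolv} with smooth parametric dependence on $t\in I$ to solve for the corresponding term of $h$, use Lemma~\ref{LemmaAcBdyNormOp} to gain one order at $\pa M_\circ$, propagate the Bianchi constraint via the linearized identity (using that $\Ric(g)-\Lambda g$ vanishes to infinite order at $\cC$), and asymptotically sum. The paper spends a bit more space on the $t$-smoothness argument (via difference quotients) and on the explicit bookkeeping of the index sets $\hat\cE_\ell,\hat\cF_m$, but your treatment of these points is adequate.
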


The equations $\delta_g\sfG_g f=0$ and $D_g\Ric(h)=f$ here are to be understood as equalities in $M\setminus\cC$ (or more precisely as equalities of extendible distributions on $M_\circ$).

\begin{proof}[Proof of Lemma~\usref{LemmaAcBdyFormal}]
  We identify the collar neighborhood $\cN$ using the polar coordinates $(t,r,\omega)$ associated with Fermi normal coordinates $(t,x)$ with a neighborhood of $I\times\{0\}\times\Sph^2\subset I\times[\R^3;\{0\}]=I\times[0,\infty)\times\Sph^2$. Let $\hat\chi\in\CIc(\cN)$ be a cutoff as in~\eqref{EqAcBdyN}--\eqref{EqAcBdyNCutoff}. We use the notation introduced in~\eqref{EqAcBdy0OpMink}. If $k\in\N_0$ is such that $(z,k)\in\hat\cF$ but $(z,k+1)\notin\hat\cF$, there exist $f_0,\ldots,f_k\in\CI(I;\CI(\Sph^2;\ubar\upbeta^*S^2 T^*_{(0,0)}\R^4))$ so that
  \[
    f - \hat\chi f^{(0)} \in \cA_\phg^{\hat\cF_0}(M_\circ;\upbeta_\circ^*S^2 T^*M),\qquad
    f^{(0)}:=\sum_{j=0}^k r^z(\log r)^j\pi_\ff^*f_j,
  \]
  where $\hat\cF_0=\hat\cF\setminus\{(z,l)\in\hat\cF\}=\{(w,l)\in\hat\cF\colon w\in z+1+\N_0\}\subset(z+1+\N_0)\times\N_0$. We now apply the operator $\upbeta_\circ^*(\delta_g\sfG_g)$ to $f-\hat\chi f^{(0)}\in\cA_\phg^{\hat\cF_0}$ and use Lemma~\ref{LemmaBgEb}; this implies that
  \[
    \hat\chi \wh{\ubar\delta\ubar G}(0)f^{(0)} \in \cA_\phg^{\hat\cF_0-1}(M_\circ;\upbeta_\circ^*T^*M). 
  \]
  Since $\wh{\ubar\delta\ubar\sfG}(0)$ is homogeneous of degree $-1$, this implies $\wh{\ubar\delta\ubar\sfG}(0)f^{(0)}=0$. Therefore, we are in the setting of Theorem~\ref{ThmRicSolv} with smooth parametric dependence on $t\in I$. This gives $k'\in\{0,1\}$ and $h_j(t)\in\CI(\Sph^2;\ubar\upbeta^*S^2 T^*_{(0,0)}\R^4)$, $j=0,\ldots,k+k'$, for each $t\in I$ so that
  \[
    \wh{D_{\ubar g}\Ric}(0)(h^{(0)}(t))=f^{(0)}(t),\qquad h^{(0)}(t)=\sum_{j=0}^{k+k'} r^{z+2}(\log r)^j\pi_\ff^*h_j(t).
  \]
  In view of the continuous and linear dependence of $(h_0(t),\ldots,h_{k+k'}(t))$ on $(f_0(t),\ldots,f_k(t))$, we first conclude that the $h_j(t)$ are continuous in $t$. Moreover, the solution of
  \begin{equation}
  \label{EqAcBdyFormalC1}
    \wh{D_{\ubar g}\Ric}(0)(h^{(1),\eps}(t))=\eps^{-1}\bigl(f^{(0)}(t+\eps)-f^{(0)}(t)\bigr)
  \end{equation}
  is on the one hand given by $\eps^{-1}(h^{(0)}(t+\eps)-h^{(0)}(t))$, and on the other hand converges as $\eps\searrow 0$ to the solution (as produced by Theorem~\ref{ThmRicSolv}) of $\wh{D_{\ubar g}\Ric}(0)(h^{(1)}(t))=\pa_t f^{(0)}(t)$. Since $h^{(1)}(t)$ is continuous in $t$, this shows that $h^{(0)}$ is $\cC^1$ in $t$. Iterating this argument implies $h_j\in\CI(I;\CI(\Sph^2;\ubar\upbeta^*S^2 T^*_{(0,0)}\R^4))$, $j=0,\ldots,k+k'$.

  By Lemma~\ref{LemmaAcBdyNormOp} and recalling the notation for index sets from~\S\ref{SsBgPhg}, we conclude that
  \[
    f' := f - (D_g\Ric-\Lambda)(\hat\chi h^{(0)}) \in \cA_\phg^{\hat\cF_0\cup(z+1,k+k')}(M_\circ;\upbeta_\circ^*T^*M).
  \]
  (Note that the cosmological constant produces terms which are of lower order even compared to the terms arising from~\eqref{EqAcBdyNormOpDiff}.) That is, we have solved away $f$ to leading order at $\pa M_\circ$, at the expense of (possibly) an additional logarithmic factor of the error $f'$. Note furthermore that the linearized second Bianchi identity ensures that
  \[
    \delta_g\sfG_g f' = \delta_g\sfG_g f - \delta_g\sfG_g(D_g\Ric-\Lambda)(\hat\chi h^{(0)}) \in \CIdot(M_\circ;\upbeta_\circ^*T^*M)
  \]
  still. (We use here the validity of the nonlinear Einstein vacuum equations $\Ric(g)-\Lambda g=0$ in Taylor series at $\cC$.)

  Since $z+1\in\C\setminus(-\N_0-2)$ still, we may repeat this procedure. Proceeding inductively, we thus obtain a sequence $\hat\cE_\ell\subset((z+2)+\ell+\N_0)\times\N_0$, $\ell\in\N_0$, of index sets (with $\hat\cE_0=(z+2,k+k')$ in the above notation) and $h^{(\ell)}\in\cA_\phg^{\hat\cE_\ell}(M_\circ;\upbeta_\circ^*S^2 T^*M)$ so that
  \[
    f - (D_g\Ric-\Lambda)\left(\hat\chi\sum_{\ell=0}^m h^{(\ell)}\right) \in \cA_\phg^{\hat\cF_m}(M_\circ;\upbeta_\circ^*S^2 T^*M),
  \]
  where $\hat\cF_m$ is an index set with $\hat\cF_m\subset(z+(m+1)+\N_0)\times\N_0$ (and with $\hat\cF_0$ defined previously). More precisely, the construction gives
  \begin{align*}
    \hat\cE_{m+1}&\subset\{((z+2)+(m+1)+j,l)\colon j\in\N_0,\ l\leq\max\{k+1\colon (z+(m+1),k)\in\hat\cF_{m+1}\}\}, \\
    \hat\cF_{m+1}&\subset\{(z+(m+2)+j,l) \colon j\in\N_0,\ l\leq\max\{k\colon((z+2)+(m+1),k)\in\hat\cE_{m+1}\}\} \\
      &\qquad \cup \{ (z+j,l)\in\hat\cF_m\colon j\geq m+2 \}.
  \end{align*}
  Thus, $\hat\cF_{m+1}$ removes the leading order terms $(z+(m+1),k)$ of $\hat\cF_m$ but adds the contribution from the leading order terms $((z+2)+(m+1),k+k')$ of $\hat\cE_{m+1}$ (with $k'\leq 1$ here replaced by $1$ simply, which may be lossy but acceptably so), while $\hat\cE_{m+1}$ picks up the leading order terms of $\hat\cF_{m+1}$ and (when $k'=1$) adds a logarithm. Setting $\hat\cE=\bigcup_{\ell\in\N_0}\hat\cE_\ell$ (which has the stated description), we may thus take $h\in\cA_\phg^{\hat\cE}$ to be an asymptotic sum of the $\hat\chi h^{(\ell)}$ over $\ell\in\N_0$. We can arrange for the desired support property of $h$ since multiplying $h$ by a smooth function which is equal to $1$ in a neighborhood of $\pa M_\circ$ preserves the conclusion~\eqref{EqAcBdyFormal}.
\end{proof}

\begin{rmk}[Control of logarithmic terms]
\label{RmkAcBdyFormalLog}
  Since the first step of the proof is an application of Theorem~\ref{ThmRicSolv}, the refined statements made there imply also refinements about $h$ and its index set: if all leading order terms of $f$, corresponding to elements $(z,k)\in\hat\cF$ with $\Re z=\min\Re\hat\cF$, are of a particular type (depending on the value of $z$), then one may take $\hat\cE$ to have the same leading order part as $\hat\cF$; more precisely, $\hat\cE=\{(w+2,k)\in\hat\cF\}\cup\{(w+2+j,l)\colon(w,k)\in\hat\cF,\ w\in z+1+\N_0,\ j\in\N_0,\ l\leq k+j+1\}$.
\end{rmk}

\begin{prop}[Formal solution at $\cC$]
\label{PropAcBdy}
  Under the assumptions of Theorem~\usref{ThmAc}, and using the notation of the Theorem, there exists $h\in\cA_\phg^{\hat\cE}(M_\circ;\upbeta_\circ^*S^2 T^*M)$, with support contained in any fixed neighborhood of $\pa M_\circ$, so that
  \begin{equation}
  \label{EqAcBdy}
    (D_g\Ric-\Lambda)(h) = f+f_\flat,\qquad f_\flat\in\CIdot(M_\circ;\upbeta_\circ^*S^2 T^*M).
  \end{equation}
\end{prop}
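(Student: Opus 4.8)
The plan is to bootstrap from Lemma~\ref{LemmaAcBdyFormal} by peeling off the polyhomogeneous expansion of $f$ at $\pa M_\circ$ term by term. The only obstacle to applying that lemma directly is that it requires $z\notin-2-\N_0$ for each exponent $(z,k)$ appearing in $\hat\cF$; since $\Re\hat\cF>-2$ this hypothesis is satisfied for the leading term, but after one application of the lemma the error $f_\flat$ is only known to vanish to infinite order at $\pa M_\circ$, and we must still feed its (smooth, vanishing) profile back in. So the first step is to observe that after Lemma~\ref{LemmaAcBdyFormal} has been applied once, the remaining error $f_\flat\in\CIdot(M_\circ;\upbeta_\circ^*S^2 T^*M)$ vanishes to infinite order at $\pa M_\circ$; this is exactly the claimed conclusion~\eqref{EqAcBdy}, so in fact a single application of Lemma~\ref{LemmaAcBdyFormal} essentially suffices, modulo checking the bookkeeping of the index set.

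Concretely, I would proceed as follows. First, decompose $\hat\cF$ according to residues mod $\Z$; since all exponents in this paper are integers (and in any case $\Re\hat\cF>-2$ rules out the exceptional values $z=0,-1$ insofar as they could obstruct solvability — note $0>-2$ but the necessary conditions~\eqref{EqRicSolvv1Pair}--\eqref{EqRicSolvs1Pair} are vacuous for exponents with $\Re z>0$, while for the single possible exponent $z=0$ or $z=-1$ in $\hat\cF$ one would need to invoke those conditions; however the hypothesis $\delta_g\sfG_g f=0$ globally, together with Remark~\ref{RmkAcWeight}, supplies precisely these). Second, apply Lemma~\ref{LemmaAcBdyFormal} with this $\hat\cF$: it produces $h\in\cA_\phg^{\hat\cE}(M_\circ;\upbeta_\circ^*S^2 T^*M)$, with support in any prescribed neighborhood of $\pa M_\circ$, satisfying $(D_g\Ric-\Lambda)(h)=f+f_\flat$ with $f_\flat\in\CIdot(M_\circ;\upbeta_\circ^*S^2 T^*M)$. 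The index set $\hat\cE$ produced there is exactly the one named in the statement of Theorem~\ref{ThmAc} (and hence in Proposition~\ref{PropAcBdy}), namely $\hat\cE=\{(z+j+2,l)\colon(z,k)\in\hat\cF,\ j\in\N_0,\ l\leq k+j+1\}$, as one reads off from the inductive construction of the $\hat\cE_\ell,\hat\cF_m$ in the proof of that lemma. Third, verify the hypotheses of Lemma~\ref{LemmaAcBdyFormal} are met: we only need $\Ric(g)-\Lambda g$ to vanish to infinite order at $\cC$, which holds since $g$ solves the Einstein vacuum equations exactly; and we need $\delta_g\sfG_g f$ to vanish to infinite order at $\pa M_\circ$, which follows from the stronger hypothesis $\delta_g\sfG_g f=0$ (on $M\setminus\cC$, hence as an identity of extendible distributions on $M_\circ$).

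The only genuinely nontrivial point is the one about the exceptional exponents $z\in\{-1,0\}$: Theorem~\ref{ThmRicSolv}, on which Lemma~\ref{LemmaAcBdyFormal} rests, requires the pairing conditions~\eqref{EqRicSolvv1Pair}, \eqref{EqRicSolvs0Pair}, \eqref{EqRicSolvs1Pair} for $z=-1,0$ respectively. If $\hat\cF$ contains such an exponent, I would argue these conditions are automatically satisfied: the condition $\delta_g\sfG_g f=0$ on $M\setminus\cC$, combined with $\Re\hat\cF>-2$, forces (by the homogeneity argument in Remark~\ref{RmkAcWeight}) that the extension $E f$ of $f$ across $\cC$ still satisfies $\delta_g\sfG_g E f=0$ as a distribution on $M$, with no $\delta$-distribution contribution at $\cC$ (since $\delta_g\sfG_g E f$ is polyhomogeneous with degrees $>-3$); restricting to the normal operator at $\pa M_\circ$ and translating into the language of $\tilde f(\lambda)$ as in the proof of Theorem~\ref{ThmRicSolv}, the vanishing of $\wh{\ubar\delta\ubar\sfG}(0)$ applied to the relevant homogeneous pieces of $f$ gives exactly~\eqref{EqRicSolvv1Pair2}--\eqref{EqRicSolvs1Pair2}, which are equivalent to the required pairing conditions. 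Hence Lemma~\ref{LemmaAcBdyFormal} applies without further hypotheses, and the asymptotic sum it produces is the desired $h$. Since multiplying $h$ by a cutoff equal to $1$ near $\pa M_\circ$ changes $(D_g\Ric-\Lambda)(h)$ only by a term in $\CIdot(M_\circ;\upbeta_\circ^*S^2 T^*M)$, we may arrange $\supp h$ to lie in any fixed neighborhood of $\pa M_\circ$, completing the proof.
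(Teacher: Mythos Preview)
Your proposal has a genuine gap: Lemma~\ref{LemmaAcBdyFormal} cannot be applied to an arbitrary index set $\hat\cF$. Its hypothesis explicitly requires $\hat\cF\subset(z+\N_0)\times\N_0$ for a \emph{single} $z$, i.e.\ the index set must lie in one residue class modulo $\Z$. You mention the decomposition ``according to residues mod $\Z$'' but then immediately write ``apply Lemma~\ref{LemmaAcBdyFormal} with this $\hat\cF$'', meaning the full index set; that is not a valid invocation. The paper's proof decomposes $\hat\cF=\bigsqcup_j\hat\cF_j$ with $\hat\cF_j\subset(z_j+\N_0)\times\N_0$ and $z_i-z_j\notin\Z$ for $i\neq j$, writes $f$ as an asymptotic sum $f\sim\sum_j f_j$ with $f_j\in\cA_\phg^{\hat\cF_j}$, and then---this is the step you are missing entirely---checks that each piece individually satisfies $\delta_g\sfG_g f_j\in\CIdot$. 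This is not automatic from $\delta_g\sfG_g f=0$: one argues that $\delta_g\sfG_g f_j$ on the one hand lies in $\cA_\phg^{\hat\cF_j-1}$, and on the other hand is asymptotically equal to $-\sum_{i\neq j}\delta_g\sfG_g f_i$, hence also polyhomogeneous with index sets in the other $\Z$-classes; since these are disjoint, the piece must vanish to infinite order. Only then can Lemma~\ref{LemmaAcBdyFormal} be applied to each $f_j$ and the results asymptotically summed.

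Your lengthy discussion of the obstruction conditions~\eqref{EqRicSolvv1Pair}--\eqref{EqRicSolvs1Pair} at $z=-1,0$ is a red herring based on a variable mismatch. The parameter $z$ in Theorem~\ref{ThmRicSolv} is the exponent of the \emph{solution} $h$; the forcing there has leading behavior $r^{z-2}$. In Lemma~\ref{LemmaAcBdyFormal} the forcing $f^{(0)}$ has leading behavior $r^w$ with $w$ an exponent in $\hat\cF$, so the $z$ of Theorem~\ref{ThmRicSolv} equals $w+2$. Since $\Re\hat\cF>-2$ gives $\Re w>-2$, one has $\Re z>0$, and the exceptional cases $z=-1,0$ of Theorem~\ref{ThmRicSolv} simply never occur in this application. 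No appeal to Remark~\ref{RmkAcWeight} or to global distributional identities is needed; the hypothesis $\Re\hat\cF>-2$ already sidesteps these obstructions entirely (this is precisely the observation in the paragraph preceding Theorem~\ref{ThmAc}).
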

\begin{proof}
  Write $\hat\cF$ as the disjoint union of index sets $\hat\cF_j\subset(z_j+\N_0)\times\N_0$, $j\in J$, with the property that $z_i-z_j\notin\Z$ whenever $i\neq j$; necessarily $\min\Re\hat\cF_j\nearrow\infty$ as $j\to\infty$ (in case the set $J$ is infinite). This decomposition can be effected by taking $z_0\in\C$ so that $(z_0,0)\in\hat\cF$ and $\Re z_0=\min\Re\hat\cF$ and then defining $\hat\cF_0=\{(z,k)\in\hat\cF\colon z-z_0\in\N_0\}$; thus $\hat\cF\setminus\hat\cF_0$ is still an index set, and if it is non-empty we may repeat this process. The fact that $\hat\cF$ is an index set ensures that the complex numbers $z_0,z_1,\ldots$ one selects in this process satisfy $\Re z_j\nearrow\infty$, unless the process stops after finitely many steps.

  We can then write $f\in\cA_\phg^{\hat\cF}(M_\circ;\upbeta_\circ^*S^2 T^*M)$ as an asymptotic sum of $f_j\in\cA_\phg^{\hat\cF_j}$, $j\in J$. Lemma~\ref{LemmaBgE} gives
  \[
    \delta_g\sfG_g\colon\cA_\phg^{\hat\cF_j}(M_\circ;\upbeta_\circ^*S^2 T^*M)\to\cA_\phg^{\hat\cF_j-1}(M_\circ;\upbeta_\circ^*T^*M).
  \]
  Thus $\delta_g\sfG_g f_j$ is polyhomogeneous with index set $\hat\cF_j-1$; but $\delta_g\sfG_g f_j$ is at the same time also an asymptotic sum of polyhomogeneous distributions with index sets $\hat\cF_i$, $i\neq j$, since $\delta_g\sfG_g f_j\sim\delta_g\sfG_g f-\sum_{i\neq j}\delta_g\sfG_g f_i\sim-\sum_{i\neq j}\delta_g\sfG_g f_i$. Therefore,
  \[
    \delta_g\sfG_g f_j\in\CIdot(M_\circ;\upbeta_\circ^*T^*M),\qquad j\in J.
  \]

  We apply Lemma~\ref{LemmaAcBdyFormal} to each $f_j$, $j\in J$, separately. We obtain index sets $\hat\cE_j\subset(z_j+2+\N_0)\times\N_0$ and symmetric 2-tensors $h_j\in\cA_\phg^{\hat\cE_j}(M_\circ;\upbeta^*S^2 T^*M)$ so that $(D_g\Ric-\Lambda)h_j-f_j\in\CIdot$. Let now $\hat\cE=\bigcup_{j\in J}\hat\cE_j$, and take $h\in\cA_\phg^{\hat\cE}(M_\circ;\upbeta_\circ^*S^2 T^*M)$ to be an asymptotic sum of all $h_j$. Then~\eqref{EqAcBdy} holds. Multiplying $h$ by a smooth function which is identically $1$ near $\pa M_\circ$ and supported in the desired neighborhood of $\pa M_\circ$ furthermore ensures the desired support property of $h$.
\end{proof}

Proposition~\ref{PropAcBdy} remains valid (and Theorem~\ref{ThmAc} can be similarly extended), with the same proof, under the weaker assumption that $\hat\cF\cap((-2-\N_0)\times\N_0)=\emptyset$. We shall however only use the stated version in the solution of the gluing problem.

\subsection{True solution; proof of Theorem~\usref{ThmAc}}
\label{SsAcTrue}

With $\hat\cF$, $\hat\cE$, and $f\in\cA_\phg^{\hat\cF}(M_\circ;\upbeta_\circ^*S^2 T^*M)$ as in Theorem~\ref{ThmAc}, we now denote by $h_\sharp\in\cA_\phg^{\hat\cE}(M_\circ;\upbeta_\circ^*S^2 T^*M)$ the formal solution of~\eqref{EqAc} given by Proposition~\ref{PropAcBdy}, which we arrange to satisfy $\supp h_\sharp\cap\upbeta_\circ^{-1}(X)\Subset\upbeta_\circ^{-1}(\cU^\circ)$. Thus,
\begin{equation}
\label{EqAcTruefflat}
  (D_g\Ric-\Lambda)h_\sharp = f + f_\flat,\qquad f_\flat\in\CIdot(M_\circ;\upbeta_\circ^*S^2 T^*M).
\end{equation}
Due to the support properties of $f$ and $h_\sharp$, we also have $\supp f_\flat\cap\upbeta_\circ^{-1}(X)\Subset\upbeta_\circ^{-1}(\cU^\circ)$. Now, for any bundle $E\to M$, the space $\CIdot(M_\circ;\upbeta_\circ^*E)$ is equal to the space of lifts under $\upbeta_\circ$ of all smooth sections of $E\to M$ which vanish to infinite order at $\cC$. Thus, we may `blow down' $\pa M_\circ$ and regard $f_\flat$ as an element
\[
  f_\flat \in \CI(M;S^2 T^*M),\qquad \supp f_\flat\cap X\Subset\cU^\circ,
\]
which vanishes to infinite order at $\cC$ (although we will not use this final property). Applying the linearized second Bianchi identity to~\eqref{EqAcTruefflat} (and using that $\Ric(g)-\Lambda g=0$) implies that $\delta_g\sfG_g f_\flat=0$. The next result uses the full set of assumptions spelled out before the statement of Theorem~\ref{ThmAc} (in particular the absence of KIDs):

\begin{prop}[Solving away the trivial error]
\label{PropAcTrue}
  Suppose $f_\flat\in\CI(M;S^2 T^*M)$ satisfies $\delta_g\sfG_g f_\flat=0$, and $\supp f_\flat$ is contained in the domain of influence of a compact subset of $\cU^\circ$. Then there exists $h_\flat\in\CI(M;S^2 T^*M)$ with
  \begin{equation}
  \label{EqAcTrue}
    (D_g\Ric-\Lambda)h_\flat=f_\flat
  \end{equation}
  and so that $\supp h_\flat$ is contained in the domain of influence of a compact subset of $\cU^\circ$.
\end{prop}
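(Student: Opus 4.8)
The plan is to reduce equation~\eqref{EqAcTrue} to a gauge-fixed hyperbolic problem with carefully chosen initial data, following the scheme reviewed in~\S\ref{SsELin}. Concretely, I would seek $h_\flat$ in a generalized harmonic gauge $\ubar\delta_g\sfG_g h_\flat-\theta=0$, so that $h_\flat$ solves the gauge-fixed linearized Einstein equation
\[
  \Bigl(\tfrac12\Box_g+\sR_g-\Lambda\Bigr)h_\flat-\delta_g^*\theta=f_\flat,
\]
with Cauchy data $(h_0,h_1)$ at $X$ and with a suitably chosen 1-form $\theta\in\CI(M;T^*M)$. As explained in~\S\ref{SsELin}, the gauge 1-form $\eta=\delta_g\sfG_g h_\flat-\theta$ then satisfies the decoupled wave equation $\Box^\cC_g\eta=0$, and the two requirements to be arranged at $X$ are: (i) the linearized constraint equation $D_{(\gamma,k)}P(\dot\gamma,\dot k)=(\sfG_g f_\flat)(\nu,-)$, where $(\dot\gamma,\dot k)$ are the linearized first and second fundamental forms induced by $(h_0,h_1)$; and (ii) $\eta|_X=0$, which is achieved by defining $\theta$ to be any 1-form on $M$ whose value at $X$ equals $(\delta_g\sfG_g h_\flat)|_X$ (this depends only on $(h_0,h_1)$). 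Once (i) and (ii) hold, finite speed of propagation for the gauge-fixed wave equation and for $\Box^\cC_g$ gives both $\eta\equiv 0$ (hence~\eqref{EqAcTrue}) and the support statement, since the initial data can be taken supported in a compact subset of $\cU^\circ$ and $\supp f_\flat$ lies in the domain of influence of such a subset.

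The one genuinely nontrivial ingredient is solving the linearized constraints~(i) on $X$ with control on supports. Here I would invoke the solvability theory for the linearized constraint operator $D_{(\gamma,k)}P$ going back to Chru\'sciel--Delay (cf.\ Proposition~\ref{PropAcConstr} and \cite{ChruscielDelayMapping}): since $\sfG_g f_\flat$ is smooth and, by the support hypothesis on $f_\flat$, its restriction to $X$ is supported in a compact subset of $\cU^\circ$, and since $(X,\gamma,k)$ has no KIDs in $\cU^\circ$, one can find $(\dot\gamma,\dot k)$ solving $D_{(\gamma,k)}P(\dot\gamma,\dot k)=(\sfG_g f_\flat)(\nu,-)$ with $\supp(\dot\gamma,\dot k)$ contained in a compact subset of $\cU^\circ$. (Strictly, one applies this on the slightly shrunk subset of $\cU^\circ$ at distance $>\delta$ from $\pa\ol{\cU^\circ}$, which still has no KIDs for small $\delta>0$, cf.\ Remark~\ref{RmkMKIDs}.) From $(\dot\gamma,\dot k)$ one builds Cauchy data $(h_0,h_1)$ via the collar-neighborhood construction of Remark~\ref{RmkEIDh0h1}; these can be chosen compactly supported in $\cU^\circ$ as well.

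With $(h_0,h_1)$ and $\theta$ fixed, the remaining steps are routine: well-posedness of the forcing problem for the gauge-fixed operator $\tfrac12\Box_g+\sR_g-\Lambda$ on the globally hyperbolic spacetime $(M,g)$ yields a smooth solution $h_\flat$; applying $\sfG_g$ to the PDE and pairing with $\nu$ at $X$ shows, using (i), that $(\sfG_g\delta_g^*\eta)(\nu,-)=0$ at $X$, which together with $\eta|_X=0$ gives $\nabla_\nu\eta|_X=0$; then $\Box^\cC_g\eta=0$ with trivial Cauchy data forces $\eta\equiv0$, so $h_\flat$ solves~\eqref{EqAcTrue}. Finally, the support of $h_\flat$ is controlled by the domains of influence of the (compactly-in-$\cU^\circ$ supported) data $(h_0,h_1)$, $\theta$, and of $\supp f_\flat$, all of which are contained in the domain of influence of a compact subset of $\cU^\circ$. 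The main obstacle is thus entirely the constraint-solvability step~(i), which is precisely where the no-KIDs hypothesis on $\cU^\circ$ is used; everything else is standard hyperbolic theory.
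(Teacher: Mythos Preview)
Your proposal is correct and follows essentially the same approach as the paper: solve the linearized constraints on $X$ using Proposition~\ref{PropAcConstr} (which is where the no-KIDs assumption enters), build Cauchy data via Remark~\ref{RmkEIDh0h1}, choose $\theta$ extending $(\delta_g\sfG_g h_\flat)|_X$, solve the gauge-fixed wave equation~\eqref{EqAcTrueIVP}, and conclude $\eta\equiv 0$ and the support property via finite speed of propagation. The paper also notes that this is a variant of the main result of \cite{HintzLinEin}.
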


We shall need:

\begin{prop}[Solving the linearized constraints equations]
\label{PropAcConstr}
  Let $X,\gamma,k,\cU^\circ$ be as above; that is, $P(\gamma,k;\Lambda)=0$ (see~\eqref{EqMConstraints}), and $(D_{(\gamma,k)}P)^*$ has trivial kernel on the space of sections of\footnote{The isomorphism is given by $T^*_X M\ni\omega\mapsto(\omega(\nu),\omega|_{T X})$ where $\nu$ is the future unit normal of $X$.} $\ul\R\oplus T^*X\cong T^*_X M$ over the non-empty smoothly bounded connected precompact open subset $\cU^\circ$. Then for all $\omega\in\CIc(\cU^\circ;T_X^*M)$, there exist $\dot\gamma,\dot k\in\CIc(\cU^\circ;S^2 T^*X)$ so that $D_{(\gamma,k)}P(\dot\gamma,\dot k;\Lambda)=\omega$.
\end{prop}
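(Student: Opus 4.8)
The statement asserts surjectivity of the linearized constraints map $D_{(\gamma,k)}P$ onto compactly supported data, localized inside $\cU^\circ$. This is the ``Corvino-type'' underdetermined elliptic gluing for the constraint equations, and the plan is to deduce it from the standard solvability theory for overdetermined-elliptic operators via duality, exactly in the spirit of \cite{CorvinoScalar,ChruscielDelayMapping,HintzGlueID}. I would first recall that $D_{(\gamma,k)}P \colon \CI(\cU;\,S^2 T^*X\oplus S^2 T^*X)\to\CI(\cU;\,\ul\R\oplus T^*X)$, where $\cU=\ol{\cU^\circ}$, has injective principal symbol on a suitable conic set (it is part of an elliptic system once one recalls that $P_1$ is a scalar second-order operator in $\gamma$ and $P_2$ is first order), and that its formal adjoint $(D_{(\gamma,k)}P)^*$ is \emph{overdetermined elliptic}. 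The hypothesis is precisely that $\ker(D_{(\gamma,k)}P)^*=0$ on $\cU^\circ$ (no KIDs), and by unique continuation for overdetermined elliptic operators (cf.\ the discussion in~\S\ref{SsECE}) this kernel is trivial even after shrinking $\cU^\circ$ slightly; this will be the mechanism that produces \emph{compactly supported} solutions rather than merely ones defined on all of $\cU^\circ$.

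**Key steps.** \emph{Step 1:} Choose weights and a slightly larger domain. Fix a smooth weight function $\phi>0$ on $\cU^\circ$ vanishing to high order at $\pa\cU^\circ$ (e.g.\ $\phi=\operatorname{dist}(\cdot,\pa\cU^\circ)^N$ for large $N$), and work in weighted Sobolev spaces $H^s_\phi(\cU^\circ)$ as in \cite{CorvinoScalar}. \emph{Step 2:} Set up the operator $\cL := D_{(\gamma,k)}P\circ\phi^2\circ (D_{(\gamma,k)}P)^*$ acting on sections of $\ul\R\oplus T^*X$ over $\cU^\circ$. This is a (formally self-adjoint) elliptic operator of the appropriate order, and using the weighted coercivity estimate $\|u\|_{H^s_\phi}\lesssim \|\cL u\|_{H^{s'}_{\phi^{-1}}} + \|u\|_{L^2}$ together with the triviality of the kernel of $(D_{(\gamma,k)}P)^*$ (hence of $\cL$) on the weighted space, one obtains that $\cL$ is an isomorphism between the relevant weighted Sobolev spaces. \emph{Step 3:} Given $\omega\in\CIc(\cU^\circ;T^*_X M)\cong\CIc(\cU^\circ;\ul\R\oplus T^*X)$, solve $\cL u = \omega$, obtain $u$ with controlled weighted regularity, and set $(\dot\gamma,\dot k):=\phi^2(D_{(\gamma,k)}P)^* u$. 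Then $D_{(\gamma,k)}P(\dot\gamma,\dot k;\Lambda)=\omega$ by construction. \emph{Step 4:} Verify that $(\dot\gamma,\dot k)$ is actually smooth (elliptic regularity, since $\omega$ is smooth and the coefficients are smooth) and compactly supported in $\cU^\circ$: the weight $\phi^2$ forces $(\dot\gamma,\dot k)$ to vanish to high order at $\pa\cU^\circ$, and a bootstrap/improvement argument (shrinking $N$ is not needed; rather one uses that near $\pa\cU^\circ$ the equation $\cL u=0$ holds so $u$ and hence $\phi^2(D_{(\gamma,k)}P)^*u$ extends smoothly by zero) upgrades this to genuine compact support. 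This last improvement is the standard Corvino argument and can be cited from \cite[proof of Theorem~1]{CorvinoScalar} or the analogous statement in \cite{ChruscielDelayMapping}; alternatively one invokes directly the formulation in \cite[Proposition~A.1 or similar]{HintzGlueID} if available.

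**Main obstacle.** The genuinely delicate point is \emph{Step 4}, the passage from ``vanishing to high order at $\pa\cU^\circ$'' to ``compactly supported'': this is where the no-KIDs hypothesis, propagated to a collar of $\pa\cU^\circ$, is essential, and where one must be careful that the weighted solvability theory is set up with the correct pair of weights so that $\phi^2(D_{(\gamma,k)}P)^*u$ lands in $\CIc$. I would handle this by the by-now-standard device: prove solvability on a slightly larger domain $\cU^\circ\Subset\cU^\circ_1$ (still KID-free, as guaranteed by the persistence statement recalled in Remark~\ref{RmkMKIDs} and~\S\ref{SsECE}), with a weight vanishing at $\pa\cU^\circ_1$; then $\omega$ being supported in $\cU^\circ$ and the unique solvability on $\cU^\circ_1$ force the solution to vanish identically in $\cU^\circ_1\setminus(\text{slightly larger than }\supp\omega)$ by a unique-continuation-from-the-boundary argument for the overdetermined elliptic operator $(D_{(\gamma,k)}P)^*$ applied to $u$ where $\cL u=0$. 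Everything else (the weighted elliptic estimates, the isomorphism property, elliptic regularity) is routine and can be quoted wholesale from \cite{CorvinoScalar,ChruscielDelayMapping}. In fact, since the paper elsewhere (see the remark after Corollary~\ref{CorGLInitial} and the citation pattern around Proposition~\ref{PropAcConstr}) already relies on \cite{ChruscielDelayMapping,HintzGlueID} for exactly this solvability statement, the cleanest proof is simply to reduce to the cited results: verify that the hypotheses there (smoothly bounded connected precompact $\cU^\circ$, triviality of the adjoint kernel) are met, and conclude. I would write the proof as a short reduction to \cite[the relevant theorem]{ChruscielDelayMapping} (and \cite{CorvinoScalar} for the original localized gluing), spelling out only the identification $T^*_X M\cong\ul\R\oplus T^*X$ via the future unit normal and the observation that $(D_{(\gamma,k)}P)^*$ having trivial kernel on $\cU^\circ$ is, by the persistence of the no-KIDs condition, the precise hypothesis those references require.
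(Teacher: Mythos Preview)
Your proposal is essentially correct and matches the paper's approach: reduce to the standard Corvino--Chru\'sciel--Delay solvability theory for the linearized constraints under the no-KIDs hypothesis, citing \cite{ChruscielDelayMapping}. The paper's proof is a three-line sketch of exactly this.

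There is one point where your detailed plan goes astray. In the ``main obstacle'' paragraph you propose to pass to a \emph{larger} domain $\cU^\circ\Subset\cU^\circ_1$ and then argue compact support via unique continuation for $u$ where $\cL u=0$. This is the wrong direction, and the unique continuation step is dubious ($\cL$ is determined elliptic, not overdetermined). What the paper does---and what you should do---is pass to a \emph{smaller} domain $\cU^\circ_\delta\subset\cU^\circ$ (the complement of a $\delta$-collar of $\pa\cU^\circ$), with $\delta$ small enough that $\supp\omega\subset\cU^\circ_\delta$ and the no-KIDs condition still holds there (this is precisely what Remark~\ref{RmkMKIDs} guarantees: persistence under \emph{shrinking}, not enlarging). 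One then solves on $\cU^\circ_\delta$ with exponential weights \`a la \cite{ChruscielDelayMapping}; the resulting $(\dot\gamma,\dot k)$ decay exponentially at $\pa\cU^\circ_\delta$, hence extend by zero to smooth tensors on $X$ with support in $\ol{\cU^\circ_\delta}\Subset\cU^\circ$. No unique continuation argument is needed for this step; the compact support inside $\cU^\circ$ comes for free from having worked on the strictly smaller domain.
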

\begin{proof}
  This is a standard result, see e.g.\ \cite[\S3]{ChruscielDelayMapping}. Coercive a priori estimates for $(D_{(\gamma,k)}P)^*$ on function spaces allowing for exponential growth at the boundary of the complement $\cU^\circ_\delta\subset\cU^\circ$ of a $\delta$-neighborhood of $\pa\ol{\cU^\circ}$, with $\delta>0$ so small that $\omega\in\CIc(\cU^\circ_\delta)$ and $(X,\gamma,k)$ has no KIDs on $\cU^\circ_\delta$ still, imply the solvability of $D_{(\gamma,k)}P(\dot\gamma,\dot k)=\omega$ with smooth $\dot\gamma,\dot k$ which are exponentially decaying at $\pa\cU^\circ_\delta$; the extension of $\dot\gamma,\dot k$ by $0$ to $X\setminus\cU^\circ_\delta$ furnishes the desired solution.
\end{proof}

\begin{proof}[Proof of Proposition~\usref{PropAcTrue}]
  This is a variant of the main result of \cite{HintzLinEin}; we recall the argument for the sake of completeness. Following the strategy outlined in~\S\ref{SsELin}, we seek $h_\flat$ as the solution of an initial value problem for the gauge-fixed linearized Einstein equations. Thus, if $\nu\in\CI(X;T_X M)$ denotes the future unit normal at $X$, we solve
  \begin{equation}
  \label{EqAcTrueIVP}
    \begin{cases}
      (D_g\Ric-\Lambda)h_\flat + \delta_g^*(\delta_g\sfG_g h_\flat-\theta) = f_\flat & \text{in}\ M, \\
      (h_\flat,\nabla_\nu h_\flat) = (h_0,h_1) & \text{at}\ X,
    \end{cases}
  \end{equation}
  for carefully chosen $\theta\in\CIc(M;T^*M)$ and $h_0,h_1\in\CIc(\cU^\circ;S^2 T^*_X M)$; the task is to select $\theta,h_0,h_1$ so that $h_\flat$ also solves~\eqref{EqAcTrue}. As demonstrated in~\S\ref{SsELin}, it suffices to arrange
  \begin{equation}
  \label{EqAcTrueID}
    D_{(\gamma,k)}P(\dot\gamma,\dot k;\Lambda) = (\sfG_g f_\flat)(\nu,-)
  \end{equation}
  at $X$, where $(\dot\gamma,\dot k)$ are the linearized initial data induced by $(h_0,h_1)$. Since $(\sfG_g f_\flat)(\nu,-)\in\CIc(\cU^\circ;T_X^*M)$, and due to the absence of KIDs in $\cU^\circ$, we may apply Proposition~\ref{PropAcConstr} to obtain the existence of $\dot\gamma,\dot k\in\CIc(\cU^\circ;S^2 T^*X)$ solving~\eqref{EqAcTrueID}. Choosing $h_0,h_1$ as in Remark~\ref{RmkEIDh0h1}, we subsequently let $\theta$ be any smooth extension of $(\delta_g\sfG_g(h_0+s h_1))|_X\in\CIc(\cU^\circ;T^*_X M)$ to a 1-form on $M$ (see~\eqref{EqEIDeta}). With these data in place, we solve~\eqref{EqAcTrueIVP}. Then $h_\flat$ satisfies~\eqref{EqAcTrue}. (We recall the argument: since $\eta:=\delta_g\sfG_g h_\flat-\theta\in\CI(M;T^*M)$ vanishes at $X$, so does its normal derivative since $(\sfG_g\delta_g^*\eta)(\nu,-)=0$ due to~\eqref{EqAcTrueID}, and thus we have $\eta=0$ everywhere since $\delta_g\sfG_g\delta_g^*\eta=0$.) The support property of $h$ follows from finite speed of propagation. The proof is complete.
\end{proof}

\begin{proof}[Proof of Theorem~\usref{ThmAc}]
  In view of equations~\eqref{EqAcTruefflat} and~\eqref{EqAcTrue}, the tensor $h:=h_\sharp-\upbeta_\circ^*h_\flat$ solves the equation~\eqref{EqAc} and is indeed of the form~\eqref{EqAch}.

  When $\cC$ is a geodesic, it remains to arrange property~\eqref{ItAchflat} by exploiting the fact that we may add to $h_\flat$ any linearized pure gauge term, i.e.\ a tensor of the form $\delta_g^*\omega$ where we will be able to take $\omega\in\CI(M;T^*M)$ to be supported in an arbitrarily small neighborhood of $\cC$, so $\delta_g^*\omega$ satisfies the same support properties as $h_\flat$ in Proposition~\ref{PropAcTrue}. To arrange for simple vanishing, note that in Fermi normal coordinates $(t,x)$ around $\cC$, the restriction of $h_\flat$ to $\cC$ takes the form
  \[
    h_\flat(t,0) = a_{0 0}(t)\,\dd t^2 + 2 a_{0 k}(t)\,\dd t\,\dd x^k + a_{j k}(t)\,\dd x^j\,\dd x^k,\qquad a_{\mu\nu}\in\CI(I),
  \]
  where $a_{j k}=a_{k j}$. Pick $A_{0 0}\in\CI(I)$ with $A_{0 0}'=a_{0 0}$. Since $g$ agrees with the Minkowski metric $\ubar g$ at $\cC$ modulo error terms vanishing quadratically at $x=0$, we have
  \[
    \delta_g^*\dd t\equiv 0,\qquad \delta_g^*\dd x^j\equiv 0
  \]
  modulo tensors vanishing simply at $x=0$. Therefore,\footnote{Compare this with Theorem~\ref{ThmRicUniq}: the scalar type $1$ tensor $\dd t\,\dd x^k$, and the sum $\dd x^j\,\dd x^k$ of scalar type $0$ and $2$ tensors both lie in $\ker\wh{D_{\ubar g}\Ric}(0)$ or indeed in $\ker N(D\Ric,0)$ in the notation of~\S\ref{SRic}, and are thus pure gauge. Lastly, while $\dd t^2\in\ker N_{\rms 0}(D\Ric,0)$, this is \emph{not} in the range of $N_{\rms 0}(\delta^*,1)$ (cf.\ Proposition~\ref{PropRicMS}\eqref{ItRicMSs0}). Rather, it is the symmetric gradient of a \emph{non-stationary} 1-form, $\dd t^2=\delta_{\ubar g}^*(t\,\dd t)$. This explains the appearance of $A_{0 0}=\int a_{0 0}\,\dd t$.}
  \begin{alignat*}{2}
    \delta_g^*\bigl(A_{0 0}(t)\,\dd t\bigr) &= a_{0 0}(t)\,\dd t^2 + A_{0 0}(t)\delta_g^*\dd t &&\equiv a_{0 0}(t)\,\dd t^2, \\
    \delta_g^*\bigl(a_{0 k}(t)x^k\,\dd t\bigr) &= a_{0 k}(t)\,\dd t\,\dd x^k + a_{0 k}'(t)x^k\,\dd t^2+a_{0 k}(t)x^k\delta_g^*\dd t &&\equiv a_{0 k}(t)\,\dd t\,\dd x^k, \\
    \delta_g^*\bigl(a_{j k}(t)x^j\,\dd x^k\bigr) &= a_{j k}(t)\,\dd x^j\,\dd x^k + a_{j k}'(t)x^j\,\dd t\,\dd x^k + a_{j k}(t)x^j\delta_g^*\dd x^k &&\equiv  a_{j k}(t)\,\dd x^j\,\dd x^k.
  \end{alignat*}
  If we subtract from $h_\flat$ the tensor $\delta_g^*(\hat\chi\omega)$ where $\hat\chi\in\CI(M)$ is equal to $1$ near $\cC$ and supported in a small neighborhood of $\cC$, and with $\omega=A_{0 0}(t)\,\dd t+2 A_{0 k}(t)\,\dd x^k+a_{j k}(t)x^j\,\dd x^k$, then the new $h_\flat$ vanishes simply at $\cC$ and still satisfies the support condition of Proposition~\ref{PropAcTrue}.

  This can be improved to quadratic vanishing by similarly explicit means: modulo tensors vanishing quadratically at $x=0$, we have
  \begin{align*}
    a(t)x^j\,\dd t^2&\equiv\delta_g^*\bigl(A_1(t)x^j\,\dd t-A_2(t)\,\dd x^j\bigr), \\
    2 a(t)x^j\,\dd t\,\dd x^k&\equiv\delta_g^*\bigl(a(t)x^j x^k\,\dd t+A_1(t)(x^j\,\dd x^k-x^k\,\dd x^j)\bigr), \\
    2 a(t)x^j\,\dd x^k\,\dd x^\ell&\equiv\delta_g^*\bigl(a(t) x^j x^k\,\dd x^\ell+a(t)x^\ell(x^j\,\dd x^k-x^k\,\dd x^j)\bigr),
  \end{align*}
  where $A_1(t)=\int a(t)\,\dd t$ and $A_2(t)=\int A_1(t)\,\dd t$.

  An alternative, and conceptually cleaner, argument proceeds as follows. Consider for $s\in\R$ the symmetric 2-tensor $g+s h_\flat$; on any fixed compact subset of $M$, this is a Lorentzian metric when $s$ is sufficiently small. Fix $p\in\cC$ and a future timelike vector $v\in T_p\cC$. Denote by $\cC_s$ the geodesic for
  \[
    g_s:=g+s h_\flat
  \]
  with initial condition $(p,\frac{v}{\|v\|_{g_s}})$. Then we have Fermi normal coordinates $(t_s,x_s)$ around $\cC_s$ which are equal to $(t,x)$ for $s=0$; by the construction in the proof of Lemma~\ref{LemmaGLFermi}, these coordinates can be chosen to depend smoothly on $s$ near $0$ and are defined near any fixed precompact subset of $\cC$ when $s$ is sufficiently small. Denote by $\phi_s\colon\cD_s\subset\R\times\R^3\to M$ the corresponding coordinate chart $(T,X)\mapsto(t_s,x_s)=(T,X)$; here, if $J\Subset I$, then the domain $\cD_s$ contains a neighborhood of $J\times\{0\}$ when $s$ is small. By definition,
  \[
    \phi_s^*g_s = -\dd T^2 + \dd X^2 + G'_s(T,X;\dd T,\dd X),
  \]
  where $G'_s$ depends smoothly on $s$ and vanishes quadratically at $X=0$. Consider now the map
  \[
    \Phi_s := \phi_s\circ\phi_0^{-1} \colon \cD_0\cap\phi_0(\cD_s) \to M
  \]
  which is a diffeomorphism onto its image near $\phi_0(J\times\{0\})\subset M$ for small enough $s$, and which satisfies $\Phi_0=\Id$. Then $\Phi_0^*g_0=\Id^*g=g$ and
  \[
    \Phi_s^*(g+s h_\flat) = \Phi_s^*g_s=(\phi_0^{-1})^*(-\dd T^2+\dd X^2+G'_s)=-\dd t^2+\dd x^2+g'_s(t,x;\dd t,\dd x)
  \]
  (using the Fermi normal coordinates around $\cC$ for the expression on the right), where $g'_s$ is smooth in $s$ and vanishes quadratically at $x=0$; we express this as $g'_s=\cO(|x|^2)$. Differentiating at $s=0$ gives
  \[
    \cO(|x|^2) = \frac{\dd}{\dd s}(\Phi_s^*g)\Big|_{s=0} + \Phi_0^*h_\flat = \cL_V g + h_\flat,
  \]
  where the smooth vector field $V$ is defined in a neighborhood $\cN\subset\cC$ by $V(q)=\frac{\dd}{\dd s}\Phi_s(q)|_{s=0}$, $q\in\cN$. If $\hat\chi\in\CI(M)$ is supported in $\cN$, equal to $1$ near $\cC$, and has support sufficiently close to $\cC$, then we may replace $h_\flat$ by $h_\flat+\cL_{\hat\chi V}g$ (which is $\cO(|x|^2)$ near $\cC=\{x=0\}$) without affecting properties~\eqref{ItAcSolv}--\eqref{ItAcSupp} in Theorem~\ref{ThmAc}. This completes the proof.
\end{proof}

\subsection{Necessary conditions for gluing: I. A result of Gralla--Wald revisited}
\label{SsAcGW}

The material in this section is of supplementary character: it is not used in the rest of the paper and may therefore be skipped at first reading, but it sheds further light on the analysis in~\S\S\ref{SsAcBdy}--\ref{SsAcTrue}.

We shall show that the gluing construction (for polyhomogeneous total families $\wt g$ on $\wt M$, with the $\hat M$-model being a Kerr metric with nonzero mass) in this paper is only possible if $\cC$ is a geodesic; in this sense we recover, in our setting, the geodesic hypothesis. Furthermore, the rescaled mass of the glued black hole must be constant along $\cC$. Our arguments are closely related to those given by Gralla--Wald in \cite[\S{III}]{GrallaWaldSelfForce}; in present terminology, they exploit an obstruction to solving the linearized Einstein equations in the polyhomogeneous category with right hand sides having borderline $r^{-2}$ growth (which is precisely what is \emph{excluded} in Theorem~\ref{ThmAc}). Further necessary conditions are derived in~\S\ref{SsAhNec}.

Let $\cC=c(I)$, $I\subseteq\R$, be \emph{any} inextendible embedded smooth timelike curve (not necessarily a geodesic). Fix a smooth family of mass parameters $0<\bhm\in\CI(I)$, and consider the naively glued metric $\wt g_0$ produced by Lemma~\ref{LemmaGKNaive} for any choice of $\bha\in\CI(I;\R^3)$. Consider the leading order error of the Einstein vacuum equations for $\wt g_0$ at $M_\circ$, defined by
\begin{equation}
\label{EqAcGWError}
  f:=\Bigl(\eps^{-1}\bigl(\Ric(\wt g_0)-\Lambda\wt g_0\bigr)\Bigr)\Big|_{M_\circ}\in r^{-2}\CI(M_\circ;\upbeta_\circ^*S^2 T^*M).
\end{equation}
While by virtue of the (nonlinear) second Bianchi identity for $\wt g_0$ we have $\delta_g\sfG_g f=0$ in $(M_\circ)^\circ=M\setminus\cC$, it is no longer necessarily the case that for the unique $L^1_\loc$-extension $E f\in L^1_\loc(M;S^2 T^*M)$ of $f$ we still have $\delta_g\sfG_g(E f)=0$ in the distributional sense on $M$; rather, $\delta_g\sfG_g(E f)$ may be a $\delta$-distribution with support in $\cC$. As discussed in Remark~\ref{RmkAcWeight}, the nonvanishing of $\delta_g\sfG_g(E f)$ \emph{at $\pa M_\circ$} would render the solvability of $(D_g\Ric-\Lambda)h=f$ \emph{in the set $(M_\circ)^\circ$ with $|h|\lesssim|x|^{-1+\delta}$} impossible. (The upper bound on $h$ demanded here ensures that the correction $\eps h$ to $\wt g_0$ near $M_\circ$ does not change the $\hat M$-model metrics.) We proceed to make this explicit.

Working in Fermi normal coordinates $(t,x)$ around $\cC$ and with $t$, $r=|x|$, $\omega=\frac{x}{|x|}\in\Sph^2$ on $M_\circ$ near $\pa M_\circ$, we write $\wt g_0$ in the region $\rho_\circ:=\frac{\eps}{r}\lesssim 1$ (i.e.\ in a collar neighborhood of $M_\circ\subset\wt M$) as
\[
  \wt g_0(\rho_\circ,t,r,\omega) \equiv g(t,r,\omega) + \rho_\circ g_1(t,r,\omega) \bmod \rho_\circ^2\CI(\wt M;S^2\wt T^*\wt M),
\]
where $\Ric(g)-\Lambda g=0$ and $g_1(t,r,\omega)\in\CI(M_\circ;\upbeta_\circ^*S^2 T^*M)$. The restriction to $\hat M$, locally given by $r=0$, is $g(t,0,\omega)+\rho_\circ g_1(t,0,\omega)\bmod\rho_\circ^2\CI(\hat M;S^2\wt T^*_{\hat M}\wt M)$. Applying the map $\sfe$ from~\eqref{EqGffStatExt} to this gives the Kerr metric, which (independently of the choice of $\bha$ in Lemma~\ref{LemmaGKNaive}) modulo $\cO(\rho_\circ^2)$-terms is equal to the mass $\bhm(t)$ Schwarzschild metric
\[
  -\dd\hat t^2 + \dd\hat r^2 + \hat r^2\slg + \frac{2\bhm(t)}{\hat r}(\dd\hat t^2+\dd\hat r^2) + \cO(\hat r^{-2}),
\]
where $\hat r=\rho_\circ^{-1}$, in view of~\eqref{EqGKLot}. Since $\sfe(g(t,0,\omega))=-\dd\hat t^2+\dd\hat r^2+\hat r^2\slg$, we thus find
\begin{equation}
\label{EqAcGWhm}
  \rho_\circ g_1(t,0,\omega) = 2\bhm(t)\rho_\circ(\dd t^2+\dd r^2) = \eps g_{1,\bhm(t)},\qquad g_{1,\bhm(t)}:=\frac{2\bhm(t)}{r}(\dd t^2+\dd r^2).
\end{equation}
Since the error $\rho_\circ g_1(t,r,\omega)-\rho_\circ g_1(t,0,\omega)$ is of class $\rho_\circ r\CI=\eps\CI$, we conclude that
\begin{equation}
\label{EqAcGWMetric}
  \wt g_0(\rho_\circ,t,r,\omega)=g(t,r,\omega)+\eps g_{1,\bhm(t)}+\eps g_1'\bmod\rho_\circ^2\CI,\qquad g_1'\in\CI(M_\circ;\upbeta_\circ^*S^2 T^*M).
\end{equation}

\begin{rmk}[Leading order behavior]
\label{RmkAcGWLot}
  For $\wt g_0$ as in~\eqref{EqAcGWMetric}, but for general $g_{1,\bhm}\sim r^{-1}$, one has $f\sim r^{-3}$; the fact that $f\sim r^{-2}$ for the specific $g_{1,\bhm}$ in~\eqref{EqAcGWhm} is due to $g_{1,\bhm}\in\ker\upbeta_\circ^*\wh{D_{\ubar g}\Ric}(0)$, where we use the notation of Corollary~\ref{CorMkDiffDRic} and Lemma~\ref{LemmaAcBdyNormOp}.
\end{rmk}

Plugging~\eqref{EqAcGWMetric} into~\eqref{EqAcGWError}, we have
\[
  f = D_g(\Ric-\Lambda)(g_{1,\bhm(t)}+g_1')\quad\text{in}\ (M_\circ)^\circ.
\]
Denoting by $E$ the extension operator $E\colon L^1_\loc(M_\circ;|\dd g|)\to L^1_\loc(M;|\dd g|)$, the linearized second Bianchi identity implies
\begin{equation}
\label{EqAcGWdelGEf}
  \delta_g\sfG_g(E f) = \delta_g\sfG_g\Bigl(E D_g(\Ric-\Lambda)(g_{1,\bhm(t)}+g_1') - D_g(\Ric-\Lambda)(E g_{1,\bhm(t)}+E g_1')\Bigr).
\end{equation}
The big parenthesis is supported in $\cC$. Since $g_1'\in\cA^0(M_\circ)$ and therefore $E g_1'\in H_\loc^{\frac32-}(M)$, the term $E D_g(\Ric-\Lambda)g_1'-D_g(\Ric-\Lambda)(E g_1')\in H_\loc^{-\frac12-}(M)$, which is supported in $\cC$ (with $\codim\cC=3$), must vanish. For the same reason, in the evaluation of $E D_g(\Ric-\Lambda)(g_{1,\bhm})-D_g(\Ric-\Lambda)(E g_{1,\bhm})$ we may drop $\Lambda$ and replace $g$ by its restriction to $\cC$, i.e.\ by the Minkowski metric; moreover, since $t$-derivatives preserve the conormal order (unlike $r$-derivatives, which increase the strength of the singularity at $r=0$ by $1$ order), we may fix $\bhm$ to be equal to the constant value $\bhm(t_0)$ when computing~\eqref{EqAcGWdelGEf} at $t^{-1}(t_0)$. Since $\wh{D_{\ubar g}\Ric}(0)g_{1,\bhm(t_0)}=0$ in $r\neq 0$, only the term $D_{\ubar g}\Ric(E g_{1,\bhm(t_0)})$ in parentheses in~\eqref{EqAcGWdelGEf} is possibly nonzero. The key calculation is thus:

\begin{lemma}[Calculation for the linearized Einstein equation]
\label{LemmaAcGWSchw}
  Let $\ubar g=-\dd t^2+\dd x^2$, and write $r=|x|$. Recall $\ubar\sfG\colon h\mapsto h-\frac12\ubar g\tr_{\ubar g}h$. In the sense of distributional sections of $S^2 T^*\R^4\to\R^4$, we then have
  \begin{equation}
  \label{EqAcGWSchw}
    \ubar\sfG D_{\ubar g}\Ric\Bigl(\frac{2\bhm}{r}(\dd t^2+\dd r^2)\Bigr) = 8\bhm\pi\delta(x)\,\dd t^2.
  \end{equation}
\end{lemma}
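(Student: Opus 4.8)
The plan is to compute the left-hand side of~\eqref{EqAcGWSchw} by exploiting the fact that $\frac{2\bhm}{r}(\dd t^2+\dd r^2)$ is, away from $r=0$, the linearization of the Schwarzschild metric of mass $2\bhm$ in the appropriate coordinates, so that $D_{\ubar g}\Ric$ applied to it vanishes in $r>0$; hence the distributional expression $D_{\ubar g}\Ric(E g_{1,\bhm})$ is supported at $\{x=0\}$, and the task is purely to identify its coefficient. First I would fix the scalar-type $0$ reduction: since $\frac{2\bhm}{r}(\dd t^2+\dd r^2)$ is $t$-translation invariant and spherically symmetric, I may pass to the zero-energy operator $\wh{D_{\ubar g}\Ric}(0)$ restricted to $\rms 0$ symmetric $2$-tensors, i.e.\ work with the finite matrix $N_{\rms 0}(D\Ric,\lambda)$ from~\eqref{EqRicMSs0DRic} with $\lambda=-1$ (and likewise $N_{\rms 0}(\delta\sfG,\lambda-2)$ for the trace-reversed divergence). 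In the splitting~\eqref{EqMkYSplits0}, the tensor $\frac{2}{r}(\dd t^2+\dd r^2)=\frac{1}{r}\big((\dd x^0)^2+(\dd x^1)^2\big)$ corresponds (up to the constant $\bhm$) to the coordinate vector $(1,0,1,0)^T$ at homogeneity $\lambda=-1$, which by Proposition~\ref{PropRicMS}\eqref{ItRicMSs0} lies in $\ker N_{\rms 0}(D\Ric,-1)$ but not in $\ran N_{\rms 0}(\delta^*,0)$ — exactly the borderline obstruction referred to in Remark~\ref{RmkAcGWLot}.

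The core computation I would carry out is the distributional pairing: for a test symmetric $2$-tensor $\psi\in\CIc(\R^4;S^2 T^*\R^4)$, write
\[
  \big\la \ubar\sfG D_{\ubar g}\Ric(E g_{1,\bhm}),\psi\big\ra = \big\la E g_{1,\bhm}, (D_{\ubar g}\Ric)^*\ubar\sfG\psi\big\ra = \lim_{\delta\searrow 0}\int_{r>\delta} \big\la g_{1,\bhm},(D_{\ubar g}\Ric)^*\ubar\sfG\psi\big\ra\,|\dd\ubar g|,
\]
and integrate by parts twice, picking up boundary terms on the sphere $\{r=\delta\}$. Because $g_{1,\bhm}$ is homogeneous of degree $-1$ in $r$ and $(D_{\ubar g}\Ric)^*$ is homogeneous of degree $-2$, all the boundary terms scale like $\delta^{0}$ and therefore contribute a finite, $\delta$-independent limit; the only surviving term comes from the commutator $[D_{\ubar g}\Ric,\chi_\delta]$ where $\chi_\delta$ cuts off near $r=0$, and this is precisely the kind of boundary pairing computed abstractly in Lemma~\ref{LemmaBgBdyPair}. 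Concretely, I would apply that lemma on $X=[\R^3;\{0\}]$ with $L=r^2\wh{D_{\ubar g}\Ric}(0)$, $w=-3$, $\alpha=0$, exponent $z=-1$, using that $g_{1,\bhm}$ (restricted to a coordinate sphere) is the leading term $r^{-1}h_0$ with $h_0\leftrightarrow 2\bhm(1,0,1,0)^T$, and a dual pairing element in $\ker N_{\rms 0}(D\Ric,-1)^*=\ker(\sfG N_{\rms 0}(D\Ric,0)\sfG)$; the identity~\eqref{EqRicAdj} with $\lambda=-1$ gives $N(D\Ric,-1)^*=\sfG N(D\Ric,0)\sfG$, whose kernel contains (the $\sfG$-transform of) $(1,1,1,0)^T\leftrightarrow 4\,\dd t^2$ by Proposition~\ref{PropRicMS}\eqref{ItRicMSs0}. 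Pairing $\pa_\lambda N_{\rms 0}(D\Ric,-1)h_0$ against this element — exactly the computation $\la\pa_\lambda N_{\rms 0}(D\Ric,-1)(1,0,1,0)^T,\sfG(1,1,1,0)^T\ra_{L^2(\Sph^2)}=32\pi$ already performed in the proof of Proposition~\ref{PropRicK}\eqref{ItRicKs0} — together with the factor $2\bhm$ and normalization bookkeeping (the $L^2(\Sph^2)$ volume $4\pi$, the fiber-inner-product matrix from~\eqref{EqMkMinkInner}, and the coefficient of $\dd t^2$ in the test tensor) yields the coefficient $8\bhm\pi$ of $\delta(x)\,\dd t^2$.

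The main obstacle I anticipate is purely bookkeeping: getting every normalization constant right — the Minkowski volume density $r^2\,\dd r\,\dd\slg$, the non–positive-definite fiber inner product on $S^2 T^*\R^4$ in the null-frame splitting~\eqref{EqMkBundleSplit}, the relation between $\dd t^2$ and its components $(1,1,1,0)^T$, and the constant $\int_0^\infty x\chi'(x)\,\frac{\dd x}{x}=1$ from Lemma~\ref{LemmaBgBdyPair} — so that the final answer is precisely $8\bhm\pi\delta(x)\,\dd t^2$ rather than some rescaling thereof. A useful sanity check, which I would include, is to verify the \emph{form} of the answer independently: the output must be a $t$-invariant, $SO(3)$-invariant distributional symmetric $2$-tensor supported at $\{x=0\}$ of homogeneity $-3$ under $x\mapsto\lambda x$ (hence a constant multiple of $\delta(x)$ times a fixed element of $S^2 T^*_{(0,0)}\R^4$), and it must lie in $\ker\ubar\delta$ (since $\ubar\delta\ubar\sfG D_{\ubar g}\Ric=0$ holds distributionally as noted in Remark~\ref{RmkAcWeight}); among $\rms 0$ tensors, $\delta(x)\,\dd t^2$ is the unique such object up to scale (this is why it must be $\dd t^2$ and not, say, $\dd t^2+c\,\dd x^2$), which pins down the answer up to the single constant that the boundary pairing then fixes to be $8\bhm\pi$. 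An entirely parallel, more hands-on alternative is to simply compute $\ubar\Box(\frac{1}{r})=4\pi\delta(x)$ on $\R^3$ and feed this through the identity $D_{\ubar g}\Ric=\frac12\ubar\Box-\ubar\delta^*\ubar\delta\ubar\sfG+\sR_{\ubar g}$ from~\eqref{EqELinDRic} (with $\sR_{\ubar g}=0$), tracking which of the resulting $\delta$-distributions and their derivatives survive after applying $\ubar\sfG$; I would present whichever of the two is cleaner.
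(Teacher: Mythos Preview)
Your proposal is correct, and both routes you sketch do lead to the result. The paper's own proof is precisely your ``entirely parallel, more hands-on alternative'': it works directly in Cartesian coordinates, splits $\frac{1}{|x|}(\dd t^2+\dd r^2)=\frac{1}{|x|}\dd t^2+\sum_{j,k}\frac{x^j x^k}{|x|^3}\dd x^j\,\dd x^k$, applies the formula $D_{\ubar g}\Ric=\frac12(\pa_t^2-\pa_x^2)-\ubar\delta^*\ubar\delta$ together with $\pa_x^2\frac{1}{|x|}=-4\pi\delta(x)$, obtains $D_{\ubar g}\Ric(\tfrac{1}{|x|}(\dd t^2+\dd r^2))=2\pi\delta(x)(\dd t^2+\dd x^2)$, and then applies $\ubar\sfG$ (using $\ubar\sfG(\dd t^2+\dd x^2)=2\,\dd t^2$) to finish.

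Your primary route---the boundary pairing via Lemma~\ref{LemmaBgBdyPair} and the indicial family computations of~\S\ref{SRic}---is a genuinely different argument. The paper is aware of the connection: a footnote in the proof of Theorem~\ref{ThmAhKCoker} (attached to the calculation following~\eqref{EqAhKCokerKPf2}) spells out exactly this, writing the distributional pairing as $\int_{\Sph^2}\la\pa_\lambda N(\wh{D_{\ubar g}\Ric}(0),-1)(\dd t^2+\dd r^2)(\omega),h^*(0)\ra\,\dd\slg(\omega)$ and integrating the $\omega$-dependent tensor $\frac12(\dd t^2+2\,\dd x^2-3\,\dd r^2)$ over $\Sph^2$ to recover $2\pi(\dd t^2+\dd x^2)$. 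So your approach is not only valid but is the conceptual explanation the paper gives for \emph{why} the direct computation comes out as it does. One caution: the $32\pi$ you cite from Proposition~\ref{PropRicK}\eqref{ItRicKs0} is a pairing against the \emph{specific} dual element $\sfG(1,1,1,0)^T$, not against an arbitrary test tensor at the origin; to extract the coefficient of $\delta(x)\,\dd t^2$ cleanly you either run the averaging over $\Sph^2$ as in the footnote, or you first invoke your sanity check (which is correct: among rotation-invariant $S^2 T^*_0\R^4$-valued multiples of $\delta(x)$, only $\dd t^2$ lies in $\ker\ubar\delta$) and then specialize to a single well-chosen $h^*$.

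The trade-off is clear: the Cartesian computation is short, elementary, and self-contained, avoiding the null-frame and fiber-inner-product bookkeeping you correctly flag as the main hazard; the boundary pairing route is more conceptual, reuses machinery already built in~\S\ref{SRic}, and makes transparent that the $\delta$-contribution is the obstruction dual to the nontrivial class in $K_{\rms 0,[0]}(D\Ric,-1)/R_{\rms 0,[1]}(\delta^*,0)$---which is exactly the perspective needed later in~\S\ref{SsAh0}.
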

\begin{proof}
  It suffices to prove this for $2\bhm=1$. Denote the expression on the left in~\eqref{EqAcGWSchw} by $T$. Note that $\ubar\sfG(\dd t^2+\dd r^2)=\dd t^2+\dd r^2$; since $\dd r=\frac{1}{|x|}r\,\dd r=\sum_{j=1}^3\frac{x^j}{|x|}\dd x^j$, we thus have
  \[
    \ubar\sfG T = D_{\ubar g}\Ric\Bigl(\frac{1}{|x|}(\dd t^2+\dd r^2)\Bigr)= \Bigl(\frac12(\pa_t^2-\pa_x^2)-\ubar\delta^*\ubar\delta\Bigr)\Bigl(\frac{1}{|x|}\dd t^2+\sum_{j,k}\frac{x^j x^k}{|x|^3}\dd x^j\,\dd x^k\Bigr)
  \]
  in view of the formula~\eqref{EqELinDRic} for $D_{\ubar g}\Ric$. Using $\pa_x^2\frac{1}{|x|}=-4\pi\delta(x)$ and $\ubar\delta(|x|^{-1}\dd t^2)=0$, the contribution of $\frac{1}{|x|}\dd t^2$ to $\ubar\sfG T$ equals $2\pi\delta(x)\,\dd t^2$. We further compute
  \begin{align*}
    &\Bigl(-\frac12\sum_\ell\pa_\ell^2 - \ubar\delta^*\ubar\delta\Bigr)\Bigl(\sum_{j,k}\frac{x^j x^k}{|x|^3}\dd x^j\,\dd x^k\Bigr) \\
    &\qquad = \frac12\sum_{j,k,\ell} \biggl[\pa_\ell^2\Bigl(\pa_j\frac{x^k}{|x|}-\frac{\delta_j^k}{|x|}\Bigr) + 2\pa_j\pa_\ell\frac{x^\ell x^k}{|x|^3} \biggr]\dd x^j\,\dd x^k \\
    &\qquad= \sum_j 2\pi\delta(x)\,(\dd x^j)^2 + \frac12\sum_{j,k,\ell} \biggl[ \pa_\ell^2\pa_j\frac{x^k}{|x|} - 2\pa_j\pa_\ell\Bigl(\pa_\ell\frac{x^k}{|x|}-\frac{\delta_\ell^k}{|x|}\Bigr) \biggr]\dd x^j\,\dd x^k \\
    &\qquad= 2\pi\delta(x)\,\dd x^2 + \frac12\sum_{j,k,\ell}\Bigl[ \bigl(-\pa_\ell^2\pa_j x^k + 2\delta_\ell^k\pa_\ell\pa_j\bigr)|x|^{-1}\Bigr]\dd x^j\,\dd x^k.
  \end{align*}
  But $\sum_\ell(-\pa_\ell^2\pa_j x^k+2\delta_\ell^k\pa_\ell\pa_j)=\sum_\ell\pa_j(-\pa_\ell^2 x^k+2\pa_k)=-\pa_j x^k\sum_\ell\pa_\ell^2$ annihilates $|x|^{-1}$. We conclude that $\ubar\sfG T=2\pi\delta(x)(\dd t^2+\dd x^2)$. Applying $\ubar\sfG$ to this and using $\ubar\sfG(\dd t^2+\dd x^2)=2\,\dd t^2$ gives $T=4\pi\delta(x)\,\dd t^2$, as claimed.
\end{proof}

\begin{cor}[Equations of motion for the stress-energy tensor]
\label{CorAcGWEOM}
  The $L^1_\loc(M)$-extension $E f$ of the tensor $f$ defined by~\eqref{EqAcGWError} satisfies
  \begin{equation}
  \label{EqAcGWEOM}
    \delta_g\sfG_g(E f)=8\pi\bigl(-\bhm(t)(\nabla_{\pa_t}\pa_t)^\flat + \bhm'(t)\pa_t^\flat\bigr)\delta(x).
  \end{equation}
  (Cf.\ \cite[Equations~(47) and (50)]{GrallaWaldSelfForce}.)
\end{cor}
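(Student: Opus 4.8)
\emph{Proof proposal.} The plan is to transfer the computation to the level of distributions on $M$ via the linearized second Bianchi identity~\eqref{EqELin2ndBianchi} and then apply Lemma~\ref{LemmaAcGWSchw}. Let $E$ denote the $L^1_\loc$-extension of Remark~\usref{RmkAcWeight}. Starting from the identity $f=D_g(\Ric-\Lambda)(g_{1,\bhm(t)}+g_1')$, valid on $(M_\circ)^\circ=M\setminus\cC$, with $g_{1,\bhm(t)}=\frac{2\bhm(t)}{r}(\dd t^2+\dd r^2)$ as in~\eqref{EqAcGWhm}--\eqref{EqAcGWMetric} and $g_1'\in\CI(M_\circ;\upbeta_\circ^*S^2 T^*M)$, I would set
\[
  R := D_g(\Ric-\Lambda)\bigl(E(g_{1,\bhm(t)}+g_1')\bigr) - E f,
\]
which is a distribution supported on $\cC$ because its two summands agree on $M\setminus\cC$. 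Applying $\delta_g\sfG_g$ and using $\delta_g\sfG_g\circ(D_g\Ric-\Lambda)=0$ (valid since $\Ric(g)-\Lambda g=0$) gives $\delta_g\sfG_g(E f)=-\delta_g\sfG_g(R)$, so the task reduces to identifying $R$ and then applying $\delta_g\sfG_g$ to it.

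For the identification of $R$, the $g_1'$-contribution vanishes by regularity: $E g_1'\in H^{3/2-}_\loc(M)$, hence $D_g(\Ric-\Lambda)(E g_1')$ differs from $E$ of its restriction to $M\setminus\cC$ by an element of $H^{-1/2-}_\loc(M)$ supported on the codimension-$3$ submanifold $\cC$, which must vanish. For the $g_{1,\bhm(t)}$-term, write $g_{1,\bhm(t)}=\bhm(t)g_{1,1}$ with $g_{1,1}=\frac{2}{r}(\dd t^2+\dd r^2)$ independent of $t$. A degree-counting argument shows that the only distributional contribution at $\cC$ comes from the purely spatial, second-order part of the Minkowski model operator applied to $E g_{1,1}$: the $\pa_t$-derivatives in $D_{\ubar g}\Ric$ only hit $\bhm(t)$ and produce terms of degree $\geq-2$ in $x$, hence in $L^1_\loc$; the cosmological and curvature ($\sR_g$) terms are of lower order in $r$; and the metric correction $g-\ubar g=-2\Gamma_{j 0 0}(t,0)x^j\,\dd t^2+\cO(|x|^2)$ (Lemma~\ref{LemmaGLFermi}) is $\cO(|x|)$, so when it multiplies the $\delta(x)$-component of $\pa^2(E g_{1,1})$ the result vanishes by $|x|\delta(x)=0$. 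Therefore $R=\bhm(t)\cdot D_{\ubar g}\Ric(E g_{1,1})$ (using that $g_{1,1}$ is $t$-independent, so $D_{\ubar g}\Ric$ on it is its zero energy operator $\wh{D_{\ubar g}\Ric}(0)$, which annihilates it away from $\cC$), and by Lemma~\ref{LemmaAcGWSchw} with mass parameter $1$ together with $\ubar\sfG^2=I$ in dimension $4$ this equals $8\pi\bhm(t)\,\ubar\sfG(\delta(x)\,\dd t^2)$.

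It remains to apply $\delta_g\sfG_g$ to $R$. Since $R$ is supported where $x=0$, at which point $g$ agrees with the Minkowski metric $\ubar g$ and $\sfG_g$ acts fiberwise, $\sfG_g(R)=8\pi\bhm(t)\,\delta(x)\,\dd t^2$. Then, using $(\delta_g h)_\mu=-h_{\mu\nu;}{}^\nu$, the Fermi normal coordinate identities along $\cC$ (namely $g(t,0)=\diag(-1,1,1,1)$, $\pa_i g_{0 j}(t,0)=0$ and hence $\pa_i g^{0 j}(t,0)=0$, and $\Gamma_{\mu 0 0}(t,0)=(\nabla_{\pa_t}\pa_t)_\mu|_{c(t)}$ from Lemma~\ref{LemmaGLFermi}), and the distributional identity $x^k\pa_j\delta(x)=-\delta^k_j\delta(x)$ to pick out the surviving contribution of the first-order Taylor coefficients of the metric, one computes
\[
  \delta_g\bigl(\bhm(t)\,\delta(x)\,\dd t^2\bigr) = \Bigl(\bhm'(t)\,\dd t + \tfrac12\bhm(t)\,\pa_j g_{0 0}(t,0)\,\dd x^j\Bigr)\delta(x).
\]
Rewriting via $\pa_t^\flat|_{c(t)}=-\dd t$ and $(\nabla_{\pa_t}\pa_t)^\flat|_{c(t)}=-\tfrac12\pa_j g_{0 0}(t,0)\,\dd x^j$ and collecting the overall constant yields~\eqref{EqAcGWEOM}.

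The main obstacle is the distributional bookkeeping in the second step: one must argue carefully that, among the many terms produced by $D_g(\Ric-\Lambda)$ acting on $E(g_{1,\bhm(t)}+g_1')$, only the purely spatial second-order part of the Minkowski model with frozen mass creates a genuine $\delta(x)$-distribution, all other pieces being $L^1_\loc$ and hence absorbed into $E f$. The $\cO(|x|)$ correction $g-\ubar g$ is the delicate point: it produces no $\delta(x)$ as a multiplier, but it re-enters through the $g$-Christoffels in the final application of $\delta_g$, and this mechanism—together with the $t$-derivative of $\bhm(t)$—is exactly what generates the two terms $-\bhm(\nabla_{\pa_t}\pa_t)^\flat$ and $\bhm'\pa_t^\flat$; pinning down the signs and the constant $8\pi$ requires the explicit output of Lemma~\ref{LemmaAcGWSchw}.
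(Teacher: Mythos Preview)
The proposal is correct and follows essentially the same route as the paper's proof: both use the identity~\eqref{EqAcGWdelGEf} (your $R$ is exactly the negative of the paper's ``big parenthesis''), discard the $g_1'$-contribution by a Sobolev-regularity-versus-codimension argument, reduce the remaining $g_{1,\bhm}$-term to the Minkowski model with frozen mass (your degree-counting is the same as the paper's remark that $t$-derivatives preserve the conormal order while $r$-derivatives do not), apply Lemma~\ref{LemmaAcGWSchw}, and then compute $\delta_g$ of the resulting $\bhm(t)\,\delta(x)\,\dd t^2$ in Fermi normal coordinates. Your explicit bookkeeping of which pieces of $D_g(\Ric-\Lambda)$ can produce a genuine $\delta(x)$ and which are absorbed into $L^1_\loc$ is a slightly more detailed version of what the paper sketches in the paragraph preceding Lemma~\ref{LemmaAcGWSchw}.
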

\begin{proof}
  Lemma~\ref{LemmaAcGWSchw} implies (see also Remark~\ref{RmkAcGWLot}), via~\eqref{EqAcGWdelGEf}, that
  \[
    \delta_g\sfG_g(E f) = 8\pi\delta_g\bigl(\bhm(t)\delta(x)\,\dd t^2\bigr) = 8\pi\bhm(t)\delta_g\bigl(\delta(x)\,\dd t^2\bigr) - 8\pi\bhm'(t)\delta(x)\iota_{\nabla t}(\dd t^2).
  \]
  But since in the Fermi normal coordinates $(t,x)$ all Christoffel symbols vanish at $x=0$ except possibly for $\Gamma_{0 0}^j=\Gamma_{0 j}^0$ ($j=1,2,3$), we have $\delta_g(\delta(x)\,\dd t^2)=-\Gamma^0_{0 j}\delta(x)\,\dd x^j=-\delta(x)(\nabla_{\pa_t}\pa_t)^\flat$.
\end{proof}

To conclude, if it is possible to find $\wt h\in\CI(\wt M;S^2\wt T^*\wt M)$ (or more generally $\wt h\in\cA_\phg^{\hat\cE,\cE}$ where $\Re\hat\cE>-1$ and $\min\Re(\cE\setminus\{(0,0)\})>0$, so that $\eps\wt h$ has vanishing restriction to $\hat M$ and vanishes simply at $M_\circ$) with the property that for $\wt g_1=\wt g_0+\eps\wt h$, the error $\Ric(\wt g_1)-\Lambda\wt g_1$ vanishes to \emph{more than one order} at $M_\circ$, then we showed above that necessarily $\delta_g\sfG_g(E f)=0$; and Corollary~\ref{CorAcGWEOM} shows that this is equivalent to the requirement that $\cC$ be a geodesic, and $\bhm(t)$ must be a constant. (Indeed, since $t\mapsto c(t)$ is an arc-length parameterization of $\cC$, we have $\nabla_{\pa_t}\pa_t\perp\pa_t$, and therefore the vanishing of~\eqref{EqAcGWEOM} requires the vanishing of $\nabla_{\pa_t}\pa_t$ and $\bhm'(t)$ separately.)

We finally note that if $\cC$ is not a geodesic, then the homogeneous $r^{-2}$ leading order term of $f$ in~\eqref{EqAcGWError} at $\pa M_\circ$ gives rise to the example in Remark~\ref{RmkAcBdy0Opm2} as follows: if $g=(-1-2\Gamma_{j 0 0}x^j)\dd t^2+\sum_{j=1}^3(\dd x^j)^2$, say, with $\scal:=\sum_{j=1}^3\Gamma_{j 0 0}\frac{x^j}{|x|}\neq 0$ (cf.\ \eqref{EqGLFermi}), then for $f=D_g\Ric(g_{1,\bhm})$ with $\bhm\neq 0$, a calculation gives $\bhm^{-1}r^2 f=(-5\scal\,\dd t^2-3\scal\,\dd r^2+2\,\dd r\otimes_s r\sld\scal+4\scal r^2\slg)+o(1)$ as $r\to 0$, and this is not equal to $r^2(D_{\ubar g}\Ric-\Lambda)\tilde h$ in $r>0$ for any $\tilde h=\cO(|x|^{-1+\delta})$. (The expression in Remark~\ref{RmkAcBdy0Opm2} arises by expressing this in terms of~\eqref{EqMkBundleSplit} and \eqref{EqMkYSplits1}.) On the other hand, when $\cC$ is a geodesic, the fact that $g$ agrees to second order with the Minkowski metric along $\cC$ in Fermi normal coordinates lets one pick $\wt g_0$ in such a way that the $r^{-2}$-leading order term of $f$ vanishes at $\pa M_\circ$, i.e.\ one has $f\in r^{-1}\CI$ in~\eqref{EqAcGWError} (see Lemma~\ref{LemmaFErr0} below); and then $\delta_g\sfG_g f=0$ automatically implies $\delta_g\sfG_g(E f)=0$ by homogeneity considerations.

\section{Linear analysis on \texorpdfstring{$\hat M$}{the front face of the total gluing spacetime}}
\label{SAh}

We continue using the setup and notation of~\S\ref{SM}. We fix subextremal Kerr parameters
\[
  b=(\bhm,\bha)\in\R\times\R^3,
\]
and recall the Kerr metric $\hat g_b$ from Definition~\ref{DefGK}, defined on the compactified spacetime manifold $\hat M_b$ with compactified Cauchy surface $\hat X_b$ (see Definition~\ref{DefGKCpt}). Recall also the linearized Kerr metrics $\hat g'_b(\dot b)$ from Definition~\ref{DefGKLin}. The interior $\hat M_b^\circ$ of $\hat M_b$ is a subset of $\R^4=\R_{\hat t}\times\R^3_{\hat x}$, and indeed is the complement of $\{|\hat x|\leq\bhm\}$, while $\hat X_b^\circ=\hat t^{-1}(0)\subset\hat M_b^\circ$. Writing $|\dd\hat g_b|=|\dd\hat t||\dd\hat g_b|_{\hat X_b}|$, we have
\begin{equation}
\label{EqAh0Density}
  |\dd\hat g_b|_{\hat X_b}| \in \CI(\hat X_b;\Omegasc\hat X_b).
\end{equation}
With respect to this density, we have $\cA^\alpha(\hat X_b)\subset L^1(\hat X_b)$ if and only if $\alpha>3$. We write $\la\cdot,\cdot\ra$ for the $L^2$-inner product of sections of tensor bundles (arising as restrictions of tensor bundles on $\hat M$ to $\hat X_b$) on $\hat X_b^\circ$ with respect to the volume density $|\dd\hat g_b|_{\hat X_b}|$ and the fiber inner product induced by $\hat g_b$. We denote by
\begin{equation}
\label{EqAhTstar}
  \hat t_* := \hat t-T_*(\hat x),\qquad
  T_*\in\CI(\R^3),\quad T_*(\hat x)=\hat r=|\hat x|,\ \hat r\geq 10\bhm,
\end{equation}
a function which for $\hat r\geq 10\bhm$ is equal to the null coordinate $\hat t-|\hat x|$ on Minkowski space which was used in~\S\ref{SMk}. Finally, write
\begin{equation}
\label{EqAhKSchwParam}
  b_0=(\bhm,0)
\end{equation}
for the parameters of a Schwarzschild black hole with the same mass.

Motivated by~\S\ref{SssIPfN} (and also Proposition~\ref{PropELTAcc}\eqref{ItELTAccMcirc}), we shall study the solvability properties of the equation $D_{\hat g_b}\Ric(h)=f$ for stationary $h,f$; thus, we study
\begin{equation}
\label{EqAhKEqn}
  \wh{D_{\hat g_b}\Ric}(0)(h) = f,\qquad \wh{\delta_{\hat g_b}\sfG_{\hat g_b}}(0)f=0,
\end{equation}
on $\hat X_b^\circ$.

\subsection{Cokernel of the zero energy operator}
\label{SsAh0}

In equation~\eqref{EqAhKEqn}, we shall only consider $f\in\cA^2(\hat X_b;S^2\,\Ttsc^*_{\hat X_b}\hat M_b)=\rho_\circ^2\cA^0(\hat X_b;S^2\,\Ttsc^*_{\hat X_b}\hat M_b)$, i.e.\ $f$ has at least inverse quadratic decay as $\rho_\circ:=\la\hat x\ra^{-1}\searrow 0$, where we recall that $\rho_\circ$ is a boundary defining function of $\hat X_b$ (and also of the lift of the boundary of $\ol{\R^4_{\hat t,\hat x}}$ to $\hat M_b$). We first note:

\begin{lemma}[Necessary condition for solvability]
\label{LemmaAh0Nec}
  Let $f\in\cA^\alpha(\hat X_b;S^2\,\Ttsc^*_{\hat X_b}\hat M_b)$, $\alpha\geq 2$. If there exists a solution $h\in\cA^{-1+\delta}(\hat X_b;S^2\,\Ttsc^*_{\hat X_b}\hat M_b)$, $\delta>0$, of $\wh{D_{\hat g_b}\Ric}(0)(h)=f$, then
  \begin{equation}
  \label{EqAh0Nec}
    \la f,\wh{\sfG_{\hat g_b}\delta_{\hat g_b}^*}(0)\omega\ra = 0
  \end{equation}
  for all $\omega\in\sD'(\hat X_b^\circ)$ so that $\wh{\sfG_{\hat g_b}\delta_{\hat g_b}^*}(0)\omega\in\cA^{2-\delta+}(\hat X_b)+\sE'(\hat X_b^\circ)$ has support disjoint from $\hat r=\bhm$.
\end{lemma}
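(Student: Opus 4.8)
The plan is to prove~\eqref{EqAh0Nec} by a boundary pairing (integration by parts) argument, using two inputs: the hypothesis $f=\wh{D_{\hat g_b}\Ric}(0)h$, and the soft algebraic fact that stationary dual pure gauge tensors lie in the kernel of the formal adjoint of the zero energy linearized Ricci operator. For the latter: write $L:=\wh{D_{\hat g_b}\Ric}(0)$ and $v^*:=\wh{\sfG_{\hat g_b}\delta_{\hat g_b}^*}(0)\omega$. Recall from~\S\ref{SsELin} that for the Ricci-flat metric $\hat g_b$ one has $D_{\hat g_b}\Ric\circ\delta_{\hat g_b}^*=0$ (pure gauge tensors are in the kernel of the linearized Einstein operator, with $\Lambda=0$ here), and from~\eqref{EqELinDRicAdj} that $(D_{\hat g_b}\Ric)^*=\sfG_{\hat g_b}\circ D_{\hat g_b}\Ric\circ\sfG_{\hat g_b}$; since $\hat g_b$ lives on a $4$-dimensional manifold, $\sfG_{\hat g_b}$ is an involution, so $(D_{\hat g_b}\Ric)^*\circ(\sfG_{\hat g_b}\delta_{\hat g_b}^*)=\sfG_{\hat g_b}D_{\hat g_b}\Ric\,\delta_{\hat g_b}^*=0$. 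As $\sfG_{\hat g_b}$ is an algebraic ($0$th order) operator, it commutes with passing to zero energy; hence $L^*v^*=0$ on $\hat X_b^\circ$ as a distribution, for any $\omega\in\sD'(\hat X_b^\circ)$.

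Granting this, the conclusion follows once the boundary terms produced in integrating the second order operator $L$ by parts from $h$ onto $v^*$ are seen to vanish, for then
\[
  \la f,v^*\ra = \la L h,v^*\ra = \la h,L^*v^*\ra = 0 .
\]
There are two ends of $\hat X_b$ to control. At the inner boundary $\{\hat r=\bhm\}$—which lies strictly between the Cauchy and event horizons, so that $\hat g_b$ is smooth and the conormal tensor $h\in\cA^{-1+\delta}(\hat X_b;S^2\,\Ttsc^*_{\hat X_b}\hat M_b)$ is smooth in the interior near it—the contribution vanishes identically, because by hypothesis $\supp v^*$ is at positive distance from the compact set $\{\hat r=\bhm\}$: one simply performs the integration by parts in $\{\hat r\geq\bhm+\eta\}$ for suitable $\eta>0$, where $v^*$ and all its derivatives vanish near the cut $\hat r=\bhm+\eta$. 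At infinity I would insert a cutoff $\chi_\eps=\chi(\rho_\circ/\eps)$, $\chi\equiv 0$ on $\{\rho_\circ\le 1\}$ and $\chi\equiv 1$ near $\infty$, and write the remaining boundary term as $\lim_{\eps\searrow 0}\la[L,\chi_\eps]h,v^*\ra$, which is exactly the situation treated by Lemma~\ref{LemmaBgBdyPair}.

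The core estimate is the vanishing of this last limit, and it is the part I expect to be the real work (and the reason for the strict ``$2-\delta+$'' in the hypothesis on $v^*$). Here $L=\wh{D_{\hat g_b}\Ric}(0)\in\rho_\circ^2\Diffb^2(\hat X_b;S^2\,\Ttsc^*_{\hat X_b}\hat M_b)$ (geometric operators of scattering metrics are weighted b-operators; cf.\ Corollary~\ref{CorMkDiffDRic} and the discussion around~\eqref{EqGVfhatMSpec0}), and by~\eqref{EqAh0Density} the volume density equals $\rho_\circ^{-3}$ times a b-density; so in the notation of Lemma~\ref{LemmaBgBdyPair} one has $\alpha=-2$, $w=3$, and the weight dual to $h\sim\rho_\circ^{-1+\delta}$ is $\rho_\circ^{-(-1+\delta)+\alpha+w}=\rho_\circ^{2-\delta}$, i.e.\ the critical threshold for the combined decay order of $h$ and $v^*$ at $\rho_\circ=0$ is $\alpha+w=1$. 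The assumption $v^*\in\cA^{2-\delta+}(\hat X_b;S^2\,\Ttsc^*_{\hat X_b}\hat M_b)+\sE'(\hat X_b^\circ)$ says precisely that $v^*$ decays strictly faster than $\rho_\circ^{2-\delta}$ at infinity (the $\sE'$ summand being compactly supported in the interior, hence irrelevant there), so the combined order is $1+\eps_0$ for some $\eps_0>0$, and the pairing is strictly subcritical. Concretely, $[L,\chi_\eps]=\rho_\circ^2[P,\chi_\eps]$ with $P\in\Diffb^2$ and $[P,\chi_\eps]$ a first order b-operator with $O(1)$ coefficients supported in $\{\rho_\circ\sim\eps\}$, so $[L,\chi_\eps]h=O(\rho_\circ^{1+\delta})$ there; pairing against $v^*=O(\rho_\circ^{2-\delta+\eps_0})$ over $\{\rho_\circ\sim\eps\}$ (whose b-measure is $O(1)$, hence $\rho_\circ^{-3}$-weighted measure $O(\eps^{-3})$) gives $|\la[L,\chi_\eps]h,v^*\ra|\lesssim\eps^{\eps_0}\to 0$.

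A minor point to settle is that $v^*$ may be a genuine distribution (through the $\sE'$ summand) rather than a function; but $h$ and $f=L h$ are smooth in the region relevant for that summand—a compact subset of $\hat X_b^\circ\setminus\{\hat r=\bhm\}$—so all pairings are well defined and the manipulations are justified by the same cutoff approximation used above. Finally, I would remark that the hypotheses on $v^*$ are tailored to the dual pure gauge tensors $\sfG_{\hat g_b}\delta_{\hat g_b}^*\omega$ (associated with asymptotic translations and rotations, suitably localized) used in~\S\ref{SsAh0} to exhibit the cokernel of~\eqref{EqAhKEqn}, so that Lemma~\ref{LemmaAh0Nec} will indeed apply to them.
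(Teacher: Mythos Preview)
Your proof is correct and follows essentially the same approach as the paper: integrate $\la L h,v^*\ra$ by parts, justified by the weight count $(-1+\delta)+2+(2-\delta+\eps)=3+\eps>3$, and use that $(D_{\hat g_b}\Ric)^*\circ\sfG_{\hat g_b}\delta_{\hat g_b}^*=0$. The paper's own argument is a one-line version of yours---it simply asserts the weight inequality and the adjoint identity---whereas you spell out the cutoff limit and the treatment of the inner boundary and distributional summand explicitly; all of this is fine.
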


We allow for $\omega$ to be a distribution since the 1-forms $\omega$ which arise in the solvability theory for $\wh{D_{\hat g_b}\Ric}(0)$ are singular at the event horizon; see \cite[Proposition~9.1]{HaefnerHintzVasyKerr}, \cite[Theorem~7.5]{AnderssonHaefnerWhitingMode}, and Theorem~\ref{ThmAhPhgSolv} below. (By elliptic regularity, $\omega$ is smooth where $\wh{\sfG_{\hat g_b}\delta_{\hat g_b}^*}(0)\omega$ is.) We will show in Theorem~\ref{ThmAhPhg} below that~\eqref{EqAh0Nec} is also \emph{sufficient} for the solvability in the stated function space.

\begin{proof}[Proof of Lemma~\usref{LemmaAh0Nec}]
  Since $\wh{D_{\hat g_b}\Ric}(0)\in\rho_\circ^2\Diffb^2(\hat X_b;S^2\,\Ttsc^*_{\hat X_b}\hat M_b)$, integration by parts in
  \[
    \big\la\wh{D_{\hat g_b}\Ric}(0)h,\wh{\sfG_{\hat g_b}\delta_{\hat g_b}^*}(0)\omega\big\ra=\big\la h,\wh{D_{\hat g_b}\Ric}(0)^*\wh{\sfG_{\hat g_b}\delta_{\hat g_b}^*}(0)\omega\big\ra
  \]
  is justified since $(-1+\delta)+2+(2-\delta+\eps)=3+\eps>3$ when $\eps>0$. It then remains to recall that the kernel of $(D_{\hat g_b}\Ric)^*=\sfG_{\hat g_b}\circ D_{\hat g_b}\Ric\circ\sfG_{\hat g_b}$ contains all symmetric 2-tensors in the range of $\sfG_{\hat g_b}\delta_{\hat g_b}^*$.
\end{proof}

On the other hand, if $f\in\cA^\alpha(\hat X_b)$, $\alpha\geq 2$, satisfies $\wh{\delta_{\hat g_b}\sfG_{\hat g_b}}(0)f=0$ instead of~\eqref{EqAh0Nec}, then the integration by parts in
\begin{equation}
\label{EqAhKIBP}
  0 = \big\la\wh{\delta_{\hat g_b}\sfG_{\hat g_b}}(0)f,\omega\big\ra = \big\la f,\wh{\sfG_{\hat g_b}\delta_{\hat g_b}^*}(0)\omega\big\ra
\end{equation}
is only justified when $\omega\in\cA^\beta(\hat X_b)+\sE'(\hat X_b^\circ)$, $\supp\omega\cap\hat r^{-1}(\bhm)=\emptyset$, with $\alpha+1+\beta>3$, i.e.\ $\beta>2-\alpha$, which for $\alpha=2$ requires $\beta>0$ and thus $\omega$ to decay (as a 3sc-1-form, i.e.\ when expressed in the frame $\dd\hat t$, $\dd\hat x$) as $|\hat x|\to\infty$. But since the Kerr metric is asymptotically flat, it possesses approximate Killing 1-forms $\omega$ which do not decay (i.e.\ $\beta\leq 0$) and nonetheless fit the assumptions of Lemma~\ref{LemmaAh0Nec} since their symmetric gradients do decay, as we discuss below. Thus, the necessary condition~\eqref{EqAh0Nec} for solvability of $\wh{D_{\hat g_b}\Ric}(0)h=f$ with $h\in\cA^{-1+\delta}$ is strictly stronger than $\wh{\delta_{\hat g_b}\sfG_{\hat g_b}}(0)f=0$.\footnote{The equation $\wh{\delta_{\hat g_b}\sfG_{\hat g_b}}(0)f=0$ satisfied by $f$ imposes restrictions on the behavior of $f$ as $|\hat x|\to\infty$ which may cause the boundary terms at infinity in the integration by parts in~\eqref{EqAhKIBP} to vanish even when $\omega$ does not decay. That such a cancellation typically does not occur is a consequence of Theorem~\ref{ThmAhKCoker} below: the (stationary) tensors in~\eqref{EqAhKModulation} lie in the kernel of $\wh{\delta_{\hat g_b}\sfG_{\hat g_b}}(0)$ but are not orthogonal to certain (quadratically decaying) tensors $\sfG_{\hat g_b}\delta_{\hat g_b}^*\omega$ by Theorem~\ref{ThmAhKCoker}.}

\begin{definition}[Approximate Killing 1-forms]
\label{DefAhK}
  On $\R^4=\R_{\hat t}\times\R^3_{\hat x}$, $\hat x=(\hat x^1,\hat x^2,\hat x^3)$, we define the 1-forms
  \[
    \omega_0 := \dd\hat t, \qquad
    \omega_j := \dd\hat x^j, \qquad
    \omega_{j k} := \hat x^j\,\dd\hat x^k-\hat x^k\,\dd\hat x^j,
  \]
  where $j,k=1,2,3$ and $j\neq k$.
\end{definition}

Thus, $\omega_\mu\in\CI(\hat X_b;\Ttsc^*_{\hat X_b}\hat M_b)$ for $\mu=0,\ldots,3$, and $\omega_{j k}\in\rho_\circ^{-1}\CI(\hat X_b;\Ttsc^*_{\hat X_b}\hat M_b)$. Moreover, $\omega_0$ is of scalar type $0$, $\omega_j$ is of scalar type $1$, and $\omega_{j k}$ is of vector type $1$.

\begin{lemma}[Deformation tensors]
\label{LemmaAhKdef}
  For $\omega=\omega_0,\omega_j,\omega_{j k}$, we have
  \begin{equation}
  \label{EqAhKdef}
    \delta_{\hat g_b}^*\omega\in\rho_\circ^2\CI(\hat X_b;S^2\,\Ttsc^*_{\hat X_b}\hat M_b).
  \end{equation}
\end{lemma}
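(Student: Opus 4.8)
The statement is that the deformation tensors $\delta_{\hat g_b}^*\omega$, for $\omega$ equal to any of the approximate Killing 1-forms $\omega_0 = \dd\hat t$, $\omega_j = \dd\hat x^j$, $\omega_{j k} = \hat x^j\,\dd\hat x^k - \hat x^k\,\dd\hat x^j$ of Definition~\ref{DefAhK}, decay quadratically (as smooth 3sc-tensors, i.e.\ in the frame $\dd\hat t,\dd\hat x$) as $\rho_\circ = \la\hat x\ra^{-1}\to 0$. The underlying reason is simply that all of these 1-forms are \emph{exact} Killing 1-forms for the Minkowski metric $\hat{\ubar g} = -\dd\hat t^2 + \dd\hat x^2$ on $\R^4$: $\omega_0$ and $\omega_j$ generate time and space translations, and $\omega_{j k}$ generates a spatial rotation, so $\delta_{\hat{\ubar g}}^*\omega = 0$ identically for each of them. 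The plan is therefore to write $\hat g_b = \hat{\ubar g} + (\hat g_b - \hat{\ubar g})$, use the decay of $\hat g_b - \hat{\ubar g}$ from~\eqref{EqGKCpt3sc}, and track how the Christoffel symbols of $\hat g_b$ differ from those of $\hat{\ubar g}$ (which, in the coordinates $\hat t,\hat x$, vanish).

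First I would recall the coordinate formula $(\delta_g^*\omega)_{\mu\nu} = \tfrac12(\pa_\mu\omega_\nu + \pa_\nu\omega_\mu) - \Gamma(g)_{\mu\nu}^\kappa\omega_\kappa$, valid in the global coordinate chart $(\hat t,\hat x)$ on $\hat M_b^\circ\subset\R^4$. For $\omega\in\{\omega_0,\omega_j,\omega_{j k}\}$ the symmetrized-derivative part $\tfrac12(\pa_\mu\omega_\nu + \pa_\nu\omega_\mu)$ vanishes identically: the components of $\omega_0,\omega_j$ are constants, and the components of $\omega_{j k}$ are linear in $\hat x$ with an antisymmetric coefficient matrix, so their symmetrized gradient is zero. (This is precisely the statement that these are Killing 1-forms of $\hat{\ubar g}$, since $\delta_{\hat{\ubar g}}^* = \tfrac12\cL$ on the relevant tensors.) Hence $\delta_{\hat g_b}^*\omega = -\Gamma(\hat g_b)_{\mu\nu}^\kappa\omega_\kappa\,\dd\hat z^\mu\otimes_s\dd\hat z^\nu$. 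Since the Christoffel symbols of $\hat{\ubar g}$ vanish in these coordinates, $\Gamma(\hat g_b)_{\mu\nu}^\kappa = \Gamma(\hat g_b)_{\mu\nu}^\kappa - \Gamma(\hat{\ubar g})_{\mu\nu}^\kappa$ is a first-order expression in $\hat g_b - \hat{\ubar g}$: explicitly $\Gamma(\hat g_b)_{\mu\nu}^\kappa = \tfrac12\hat g_b^{\kappa\lambda}\bigl(\pa_\mu(\hat g_b)_{\lambda\nu} + \pa_\nu(\hat g_b)_{\lambda\mu} - \pa_\lambda(\hat g_b)_{\mu\nu}\bigr)$, and each $\pa(\hat g_b)_{\mu\nu} = \pa\bigl((\hat g_b)_{\mu\nu} - (\hat{\ubar g})_{\mu\nu}\bigr)$.

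The decay bookkeeping is then the heart of the argument, but it is routine given the earlier results. By~\eqref{EqGKCpt3sc}, $\hat g_b - \hat{\ubar g}\in\rho_\circ\CI(\hat X_b; S^2\,\Ttsc^*_{\hat X_b}\hat M_b)$; applying a 3sc-vector field (equivalently $\pa_{\hat x^j}$, which is a smooth section of the 3sc-tangent bundle on $\hat X_b$, cf.\ the discussion after Definition~\ref{DefGffBdl} and Lemma~\ref{LemmaGLExtDer}) produces an element of $\rho_\circ^2\CI$, since geometric operators associated with scattering metrics are weighted b-operators — more concretely, each $\pa_{\hat x^j}$ gains one factor of $\rho_\circ$ when acting on coefficients that are smooth functions of $\rho_\circ$ and $\omega$. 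Thus $\pa_\mu(\hat g_b - \hat{\ubar g})_{\lambda\nu}\in\rho_\circ^2\CI$, and hence $\Gamma(\hat g_b)_{\mu\nu}^\kappa\in\rho_\circ^2\CI$, using also that $\hat g_b^{\kappa\lambda}\in\CI(\hat X_b)$ is bounded with all derivatives. For $\omega = \omega_0$ or $\omega_j$, whose components are bounded (in fact constant), we immediately get $\delta_{\hat g_b}^*\omega\in\rho_\circ^2\CI(\hat X_b; S^2\,\Ttsc^*_{\hat X_b}\hat M_b)$. For $\omega = \omega_{j k}$, the components of $\omega$ grow like $\rho_\circ^{-1}$, i.e.\ $\omega_{j k}\in\rho_\circ^{-1}\CI$; but one must be slightly more careful here, and I would argue using the better decay $\hat g_b - \hat g_{b_0}\in\rho_\circ^2\CI$ from~\eqref{EqGKCpt3sc} together with the explicit rotational symmetry of the Schwarzschild metric $\hat g_{b_0}$: the 1-form $\omega_{j k}$ is a genuine Killing 1-form of $\hat g_{b_0}$ (rotations are Schwarzschild isometries), so $\delta_{\hat g_{b_0}}^*\omega_{j k} = 0$ identically, and therefore $\delta_{\hat g_b}^*\omega_{j k} = \delta_{\hat g_b}^*\omega_{j k} - \delta_{\hat g_{b_0}}^*\omega_{j k}$ is again a first-order expression in $\hat g_b - \hat g_{b_0}\in\rho_\circ^2\CI$, namely $-\bigl(\Gamma(\hat g_b)_{\mu\nu}^\kappa - \Gamma(\hat g_{b_0})_{\mu\nu}^\kappa\bigr)(\omega_{j k})_\kappa$, with $\Gamma(\hat g_b) - \Gamma(\hat g_{b_0})\in\rho_\circ^3\CI$ (one derivative on $\rho_\circ^2\CI$ gives $\rho_\circ^3\CI$), so that multiplying by $(\omega_{j k})_\kappa\in\rho_\circ^{-1}\CI$ still lands in $\rho_\circ^2\CI$. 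The main (and essentially only) obstacle is making sure this last cancellation for the rotational generator is set up correctly — i.e.\ invoking the Schwarzschild comparison rather than the Minkowski comparison so that the borderline $\rho_\circ^{-1}$ growth of $\omega_{j k}$ is absorbed — but once that observation is in place the estimate~\eqref{EqAhKdef} follows for all three types of $\omega$, completing the proof.
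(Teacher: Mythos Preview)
Your proof is correct and follows essentially the same approach as the paper: compare $\delta_{\hat g_b}^*$ to $\delta_{\hat{\ubar g}}^*$ for $\omega_0,\omega_j$ (using $\hat g_b-\hat{\ubar g}\in\rho_\circ\CI$, so the Christoffel symbol difference is $\rho_\circ^2\CI$), and to $\delta_{\hat g_{b_0}}^*$ for $\omega_{j k}$ (using the stronger $\hat g_b-\hat g_{b_0}\in\rho_\circ^2\CI$ and the fact that rotations are exact Schwarzschild isometries). The paper phrases this as a statement about the operator classes $\wh{\delta_{\hat g_b}^*}(0)-\wh{\delta_{\hat{\ubar g}}^*}(0)\in\rho_\circ^2\Diffb^0$ and $\wh{\delta_{\hat g_b}^*}(0)-\wh{\delta_{\hat g_{b_0}}^*}(0)\in\rho_\circ^3\Diffb^0$, but the content is identical to your explicit coordinate computation.
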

\begin{proof}
  First, note that $\wh{\delta_{\hat g_b}^*}(0)\in\rho_\circ\Diffb^1(\hat X_b;\Ttsc^*_{\hat X_b}\hat M_b,S^2\,\Ttsc^*_{\hat X_b}\hat M_b)$ differs from $\wh{\delta_{\ubar g}^*}(0)$ by a term of class $\rho_\circ^2\Diffb^0$ and from $\wh{\delta_{\hat g_{b_0}}^*}(0)$ by a term of class $\rho_\circ^3\Diffb^0$ since the metrics $\hat g_b$ and $\ubar g$, resp.\ $\hat g_{b_0}$ differ by $\rho_\circ\CI$, resp.\ $\rho_\circ^2\CI$ as sections of $S^2\,\Ttsc^*_{\hat X_b}\hat M_b$, see~\eqref{EqGKCpt3sc}. (Indeed, this implies that the Christoffel symbols in the $(\hat t,\hat x)$-coordinates differ by $\rho_\circ^2\CI$, resp.\ $\rho_\circ^3\CI$.) Since $\omega\in\ker\delta_{\ubar g}^*$, the membership~\eqref{EqAhKdef} follows directly for $\omega=\omega_0,\omega_j$, while for the more growing 1-forms $\omega=\omega_{j k}$ one notes the stronger vanishing $\delta_{\hat g_{b_0}}^*\omega=0$.
\end{proof}

Lemma~\ref{LemmaAhKdef} provides us with a $7$-dimensional space of 1-forms to which Lemma~\ref{LemmaAh0Nec} applies, but not~\eqref{EqAhKIBP} when $f\in\cA^2$; in this sense, the cokernel has (at least) $7$ dimensions more than what the condition $\wh{\delta_{\hat g_b}\sfG_{\hat g_b}}(0)f=0$ imposes.

We next discuss Lorentz boosts. Rather than considering $\hat t\,\dd\hat x^j-\hat x^j\,\dd\hat t$, we will use 1-forms which are better adapted to the Kerr metric:

\begin{definition}[Translations and boosts on Kerr]
\label{DefAhKBoosts}
  Set\footnote{Thus, $\omega_{b,j}\equiv\omega_j\bmod\rho_\circ\CI(\hat X_b;\Ttsc^*_{\hat X_b}\hat M_b)$ in view of~\eqref{EqGKCpt3sc}.}
  \[
    \omega_{b,j} := \pa_{\hat x^j}^\flat = \hat g_b(\pa_{\hat x^j},-).
  \]
  We define $h_{b,j}$, $\breve h_{b,j}$, and $\breve h_{*,b,j}$ using the $\hat t_*$-coordinate from~\eqref{EqAhTstar} by\footnote{See~\eqref{EqAhKBoostTerm1} and \eqref{EqAhKBoostTerm2} below for the explicit expressions.}
  \[
    h_{b,j} := \delta_{\hat g_b}^*\omega_{b,j},\qquad
    \delta_{\hat g_b}^*(\hat t\pa_{\hat x^j}^\flat+\hat x^j\pa_{\hat t}^\flat) = \hat t h_{b,j}+\breve h_{b,j} = \hat t_* h_{b,j} + \breve h_{*,b,j}.
  \]
  For $\hat c\in\R^3$, we moreover let
  \[
    \omega_{b,\hat c} = \sum_{j=1}^3 \hat c_j\omega_{b,j},\qquad
    h_{b,\hat c} = \sum_{j=1}^3 \hat c_j h_{b,j},\qquad
    \breve h_{b,\hat c} = \sum_{j=1}^3 \hat c_j \breve h_{b,j},\qquad
    \breve h_{*,b,\hat c} = \sum_{j=1}^3 \hat c_j \breve h_{*,b,j}.
  \]
\end{definition}

Writing $e_1=(1,0,0)^T$, $e_2=(0,1,0)^T$, $e_3=(0,0,1)^T$ for the standard basis of $\R^3$, we have
\begin{equation}
\label{EqAhKTrans}
  h_{b,j}=\frac12\cL_{\pa_{\hat x^j}}\hat g_b,\qquad
  h_{b,\hat c} = \frac12\cL_{\hat c\cdot\pa_{\hat x}}\hat g_b.
\end{equation}

\begin{lemma}[Leading order terms of boosts]
\label{LemmaAhKBoostLot}
  For $\hat c\in\R^3$, set $\scal(\hat c):=\hat c\cdot\frac{\hat x}{|\hat x|}\in\scalspace_1\subset\CI(\Sph^2)$. We have $h_{b,\hat c}\in\rho_\circ^2\CI(\hat X_b;\Ttsc^*_{\hat X_b}\hat M_b)$ and $\breve h_{b,\hat c},\breve h_{*,b,\hat c}\in\rho_\circ\CI(\hat X_b;\Ttsc^*_{\hat X_b}\hat M_b)$, and
  \begin{alignat}{2}
  \label{EqAhKBoostLot}
    h_{b,\hat c} &\equiv  \frac{\bhm}{\hat r^2}\Bigl( -\scal(\hat c)(\dd\hat t^2+\dd\hat r^2) + 2\,\dd\hat r\otimes_s\hat r\,\sld\scal(\hat c) \Bigr) &&\bmod \rho_\circ^3\CI, \\
  \label{EqAhKBoostLotBreve}
    \breve h_{*,b,\hat c} &\equiv \frac{\bhm}{\hat r}\Bigl( \scal(\hat c)(-\dd\hat t^2+4\,\dd\hat t\,\dd\hat r-\dd\hat r^2) + 2(\dd\hat t+\dd\hat r)\otimes_s\hat r\,\sld\scal(\hat c) \Bigr) &&\bmod \rho_\circ^2\CI.
  \end{alignat}
\end{lemma}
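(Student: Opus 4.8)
The plan is to compute the symmetric gradient $\delta_{\hat g_b}^*$ applied to the relevant $1$-forms directly, using the asymptotic expansion~\eqref{EqGKLot} of $\hat g_b$ and keeping track of $\rho_\circ$-weights via the bundle splitting~\eqref{EqMkBundleSplit}. The $1$-forms $\omega_{b,j} = \pa_{\hat x^j}^\flat$ satisfy $\omega_{b,j} = \omega_j + \cO(\rho_\circ)$ (as a $3$sc-$1$-form, using~\eqref{EqGKCpt3sc}), and since $\omega_j = \dd\hat x^j \in \ker\delta_{\ubar g}^*$, the leading contribution to $h_{b,j} = \delta_{\hat g_b}^*\omega_{b,j}$ comes from two sources: the difference $\delta_{\hat g_b}^* - \delta_{\ubar g}^*$ acting on $\omega_j$, and $\delta_{\ubar g}^*$ acting on $\omega_{b,j}-\omega_j$. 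Both are governed by the $\hat r^{-1}$-term $\hat r^{-1}[2\bhm(\dd\hat t^2+\dd\hat r^2)]$ of $\hat g_b$ in~\eqref{EqGKLot}; since the Christoffel symbols of $\hat g_b$ in $(\hat t,\hat x)$-coordinates differ from those of $\ubar g$ by $\cO(\rho_\circ^2)$, one sees $h_{b,j} \in \rho_\circ^2\CI$, and the $\rho_\circ^2$-coefficient is computed by a finite calculation. A clean way to organize this: first note $h_{b,\hat c} = \frac12\cL_{\hat c\cdot\pa_{\hat x}}\hat g_b$ by~\eqref{EqAhKTrans}, so one only needs the Lie derivative of the $\hat r^{-1}$-term of~\eqref{EqGKLot} along the constant vector field $\hat c\cdot\pa_{\hat x}$ (the Lie derivative of the flat part and of the $\cO(\hat r^{-2})$ part being $\cO(\rho_\circ^3)$, resp.\ contributing to the remainder). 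Since $\hat c\cdot\pa_{\hat x}(\hat r^{-1}) = -\hat r^{-2}\scal(\hat c)$ and $\hat c\cdot\pa_{\hat x}$ annihilates $\dd\hat t$, while $\cL_{\hat c\cdot\pa_{\hat x}}(\dd\hat r^2) = 2\,\dd\hat r\otimes_s\dd(\hat c\cdot\pa_{\hat x}(\hat r)) = 2\,\dd\hat r\otimes_s\dd\scal(\hat c)$, and $\dd\scal(\hat c) = \hat r^{-1}\cdot\hat r\,\sld\scal(\hat c) + \cO(\rho_\circ^2)$-terms, one arrives at~\eqref{EqAhKBoostLot}. I would double-check the $\dd\hat t^2$ coefficient: $\cL_{\hat c\cdot\pa_{\hat x}}(\hat r^{-1}\dd\hat t^2) = -\hat r^{-2}\scal(\hat c)\,\dd\hat t^2$, giving the stated $-\scal(\hat c)\,\dd\hat t^2$; and the $\dd\hat r^2$-term receives $-\hat r^{-2}\scal(\hat c)\,\dd\hat r^2$ from differentiating the coefficient, so the total $\dd\hat r^2$-coefficient is $-\scal(\hat c)$, consistent with~\eqref{EqAhKBoostLot}, while $\cL_{\hat c\cdot\pa_{\hat x}}$ acting on the metric coefficient $2\bhm\hat r^{-1}$ of $\dd\hat r^2$ produces the cross term $2\bhm\hat r^{-1}\cdot 2\,\dd\hat r\otimes_s\dd\scal(\hat c)$, accounting for the $2\,\dd\hat r\otimes_s\hat r\,\sld\scal(\hat c)$ piece.

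For the boost part, I would compute $\delta_{\hat g_b}^*(\hat t\,\omega_{b,j} + \hat x^j\pa_{\hat t}^\flat)$ using the Leibniz rule: $\delta_{\hat g_b}^*(\hat t\,\omega_{b,j}) = \hat t\,\delta_{\hat g_b}^*\omega_{b,j} + \dd\hat t\otimes_s\omega_{b,j}$ and $\delta_{\hat g_b}^*(\hat x^j\pa_{\hat t}^\flat) = \hat x^j\,\delta_{\hat g_b}^*(\pa_{\hat t}^\flat) + \dd\hat x^j\otimes_s\pa_{\hat t}^\flat$. Since $\pa_{\hat t}$ is Killing for $\hat g_b$, $\delta_{\hat g_b}^*(\pa_{\hat t}^\flat) = 0$, so the full expression equals $\hat t\,h_{b,j} + (\dd\hat t\otimes_s\omega_{b,j} + \dd\hat x^j\otimes_s\pa_{\hat t}^\flat)$, identifying $\breve h_{b,j} = \dd\hat t\otimes_s\omega_{b,j} + \dd\hat x^j\otimes_s\pa_{\hat t}^\flat$. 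Using $\omega_{b,j} = \dd\hat x^j + \cO(\rho_\circ)$ and $\pa_{\hat t}^\flat = -\dd\hat t + \cO(\rho_\circ)$, the leading part of $\breve h_{b,j}$ is $\cO(\rho_\circ)$, coming entirely from the $\hat r^{-1}$-corrections in $\omega_{b,j}$ and $\pa_{\hat t}^\flat$: explicitly $\omega_{b,j} - \dd\hat x^j = 2\bhm\hat r^{-1}(\text{projection of }\dd\hat r\text{ onto }\dd\hat x^j)$ and $\pa_{\hat t}^\flat + \dd\hat t = 2\bhm\hat r^{-1}\dd\hat t$, from~\eqref{EqGKLot}. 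Then passing from $\hat t$ to $\hat t_* = \hat t - T_*(\hat x)$: since $\hat t = \hat t_* + \hat r$ for $\hat r\geq 10\bhm$, one has $\hat t\,h_{b,j} = \hat t_*\,h_{b,j} + \hat r\,h_{b,j}$, and $\hat r\,h_{b,j}\in\rho_\circ\CI$ (as $h_{b,j}\in\rho_\circ^2\CI$); so $\breve h_{*,b,j} = \breve h_{b,j} + T_*(\hat x)\,h_{b,j}$, and its $\rho_\circ$-leading term combines the $\rho_\circ$-piece of $\breve h_{b,j}$ with $\hat r\cdot(\rho_\circ^2\text{-coefficient of }h_{b,j})$. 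Summing these contributions and expressing everything via $\scal(\hat c)$ and $\sld\scal(\hat c)$ yields~\eqref{EqAhKBoostLotBreve}; the coefficients $-\dd\hat t^2 + 4\,\dd\hat t\,\dd\hat r - \dd\hat r^2$ and the cross terms should be verified by carefully collecting the $\dd\hat t^2$, $\dd\hat t\,\dd\hat r$, $\dd\hat r^2$, and $\dd\hat t\otimes_s\sld\scal$, $\dd\hat r\otimes_s\sld\scal$ contributions.

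The main obstacle is purely computational: the bookkeeping of $\rho_\circ$-weights and the correct projection of flat-coordinate objects ($\dd\hat x^j$, $\hat x^j$, $\pa_{\hat x^j}$) onto the $(\dd\hat t, \dd\hat r, \hat r\sld)$-splitting in which the answer is phrased. In particular one must be careful that $\dd\hat x^j\otimes_s\dd\hat x^k$ summed against $\hat c_j\hat c_k$ decomposes into its $\dd\hat r^2$ and $\hat r\,\sld\scal$-type and pure-trace $\hat r^2\slg$ components (the latter of which will drop out of the stated leading-order formulas, landing in the remainder), and that the identity $\hat c\cdot\dd\hat x = \scal(\hat c)\,\dd\hat r + \hat r\,\sld\scal(\hat c)$ (exact on $\R^3$) is used to convert. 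No conceptual difficulty arises: everything reduces to differentiating the explicit expansion~\eqref{EqGKLot} and applying the Leibniz rule for $\delta_{\hat g_b}^*$, together with the Killing property of $\pa_{\hat t}$ and the fact that $\delta^*_{\hat g_b}$ differs from $\delta^*_{\ubar g}$ by an $\cO(\rho_\circ^2)$ zeroth-order term. Continuous (indeed smooth) dependence on $b$ near fixed subextremal parameters is inherited from the smooth dependence in~\eqref{EqGKLot}.
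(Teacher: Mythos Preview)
Your proposal is correct and follows essentially the same route as the paper's proof: both use $h_{b,\hat c}=\frac12\cL_{\hat c\cdot\pa_{\hat x}}\hat g_b$ and compute the Lie derivative of the $\hat r^{-1}$-term of~\eqref{EqGKLot}, then obtain $\breve h_{b,j}=\dd\hat t\otimes_s\omega_{b,j}+\dd\hat x^j\otimes_s\pa_{\hat t}^\flat$ (which the paper rewrites as $\dd\hat t\otimes_s(\omega_{b,j}-\dd\hat x^j)+(\pa_{\hat t}^\flat+\dd\hat t)\otimes_s\dd\hat x^j$) via the Leibniz rule and the Killing property of $\pa_{\hat t}$, and finally use $\breve h_{*,b,j}=T_* h_{b,j}+\breve h_{b,j}$ with the explicit leading corrections $\omega_{b,j}-\dd\hat x^j\equiv\frac{2\bhm}{\hat r}\frac{\hat x^j}{\hat r}\dd\hat r$ and $\pa_{\hat t}^\flat+\dd\hat t\equiv\frac{2\bhm}{\hat r}\dd\hat t$.
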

\begin{proof}
  In the computation of $h_{b,j}=\frac12\cL_{\pa_{\hat x^j}}\hat g_b$ modulo $\rho_\circ^3\CI$, contributions to $\hat g_b$ of class $\rho_\circ^2\CI$ do not matter since $\pa_{\hat x^j}\in\Vsc(\hat X_b)=\rho_\circ\Vb(\hat X_b)$. Since $\hat{\ubar g}=-\dd\hat t^2+\dd\hat x^2$ (see Definition~\ref{DefGKModel}) is translation-invariant, we thus conclude from~\eqref{EqGKLot} that
  \[
    h_{b,j} \equiv \frac12\cL_{\pa_{\hat x^j}}\Bigl(\frac{2\bhm}{\hat r}(\dd\hat t^2+\dd\hat r^2)\Bigr) = \frac{\bhm}{\hat r^2}\Bigl( -(\pa_{\hat x^j}\hat r)(\dd\hat t^2+\dd\hat r^2) + 2\,\dd\hat r\otimes_s(\hat r\cL_{\pa_{\hat x^j}}\dd\hat r) \Bigr)
  \]
  modulo $\rho_\circ^3\CI$. Since $\cL_{\pa_{\hat x^j}}\dd\hat r=\dd(\pa_{\hat x^j}\hat r)=\dd\frac{\hat x^j}{\hat r}$, this gives~\eqref{EqAhKBoostLot}.

  For the computation of $\breve h_{b,j}$ and $\breve h_{*,b,j}$, we use the formula
  \begin{equation}
  \label{EqAhKBoostTerm1}
    \delta_{\hat g_b}^*(\hat t\pa_{\hat x^j}^\flat+\hat x^j\pa_{\hat t}^\flat) = \hat t h_{b,j} + \breve h_{b,j},\qquad \breve h_{b,j}=\dd\hat t\otimes_s(\omega_{b,j}-\dd\hat x^j) + (\pa_{\hat t}^\flat+\dd\hat t)\otimes_s\dd\hat x^j;
  \end{equation}
  the $\hat t$-independent term $\breve h_{b,j}$ thus lies in $\rho_\circ\CI$. Using the $\hat t_*$-coordinate, we similarly have
  \begin{equation}
  \label{EqAhKBoostTerm2}
    \breve h_{*,b,j}=T_* h_{b,j} + \breve h_{b,j} \in\rho_\circ\CI;
  \end{equation}
  recall then that $T_*=\hat r$ for large $\hat r$. More precisely,
  \[
    \omega_{b,j}-\dd\hat x^j \equiv \iota_{\pa_{\hat x^j}}\Bigl(\frac{2\bhm}{\hat r}(\dd\hat t^2+\dd\hat r^2)\Bigr) = \frac{2\bhm}{\hat r}\frac{\hat x^j}{\hat r}\dd\hat r
  \]
  modulo $\rho_\circ^2\CI$, further
  \begin{equation}
  \label{EqAhKBoostpat}
    -\pa_{\hat t}^\flat\equiv\Bigl(1-\frac{2\bhm}{\hat r}\Bigr)\dd\hat t
  \end{equation}
  modulo $\rho_\circ^2\CI$ by~\eqref{EqGKLot}, and thus $\pa_{\hat t}^\flat+\dd\hat t\equiv\frac{2\bhm}{\hat r}\dd\hat t$. Plugging this into~\eqref{EqAhKBoostTerm1}--\eqref{EqAhKBoostTerm2} gives~\eqref{EqAhKBoostLotBreve}.
\end{proof}

The next result demonstrates that there is indeed a 7-dimensional space of obstructions for the solvability of $\wh{D_{\hat g_b}\Ric}(0)h=f\in\cA^2\cap\ker\wh{\delta_{\hat g_b}\sfG_{\hat g_b}}(0)$, and at the same time gives us the means to project off this 7-dimensional space (which is thus a subspace of the quotient of $\ker_{\cA^\alpha}\wh{\delta_{\hat g_b}\sfG_{\hat g_b}}(0)$, $\alpha\geq 2$, by the subspace of elements in the range of $\wh{D_{\hat g_b}\Ric}(0)$ on $\cA^{-1+\delta}$) via modulation of the Kerr and center of mass parameters. Write\footnote{The scaling is chosen such that the coefficients of $\vect(\fq)$ in the splitting~\eqref{EqMkYSplit}, using $\hat r=|\hat x|$ in place of $r$, are independent of $\hat r$.}
\begin{equation}
\label{EqAhKVect1form}
  \vect(\fq):=\Bigl(\fq\times\frac{\hat x}{|\hat x|}\Bigr)\cdot\frac{\dd\hat x}{|\hat x|}\in\vectspace_1,\qquad \fq\in\R^3,
\end{equation}
and recall $\scal(\fq)=\fq\cdot\frac{\hat x}{|\hat x|}\in\scalspace_1$ from Lemma~\usref{LemmaAhKBoostLot}.

\begin{thm}[Eliminating the cokernel]
\label{ThmAhKCoker}
  Suppose $\omega_0^*,\omega_j^*,\omega_{j k}^*$ are stationary 1-forms on $\hat M_b$ so that, for some $\delta>0$, the differences $\tilde\omega_\mu=\omega_\mu^*-\omega_\mu$, $\mu=0,1,2,3$, and $\tilde\omega_{j k}=\omega_{j k}^*-\omega_{j k}$, $1\leq j\neq k\leq 3$, lie in the space $\cA^\delta(\hat X_b;\Ttsc^*_{\hat X_b}\hat M_b) + \sE'(\hat X_b^\circ;T_{\hat X_b^\circ}\hat M_b^\circ)$, and so that $\supp\omega_\mu^*$, $\supp\omega_{j k}^*$ are disjoint from $\hat r^{-1}(\bhm)\subset\hat X_b$. Let
  \begin{align*}
    \cK^*_{b,\rm COM}&:=\mathspan\,\bigl\{\sfG_{\hat g_b}\delta_{\hat g_b}^*\omega_j^* \colon j=1,2,3 \bigr\}, \\
    \cK^*_{b,\rm Kerr}&:=\mathspan\,\bigl\{\sfG_{\hat g_b}\delta_{\hat g_b}^*\omega_0^*,\ \ \sfG_{\hat g_b}\delta_{\hat g_b}^*\omega_{j k}\colon 1\leq j<k\leq 3\bigr\}, \\
    \cK^*_{b,\rm tot}&:=\cK^*_{b,\rm COM}\oplus\cK^*_{b,\rm Kerr}.
  \end{align*}
  For $\bullet\in\{{\rm COM},{\rm Kerr},{\rm tot}\}$, write $(\cK^*_{b,\bullet})^*:=\cL(\cK^*_{b,\bullet},\R)$ for the dual space. Define the linear maps\footnote{The first arguments of the $L^2(\hat X_b^\circ;S^2 T^*_{\hat X_b^\circ}\hat M_b^\circ)$-pairings here are described in Lemma~\ref{LemmaAhKModulation} below.}
  \begin{alignat}{2}
  \label{EqAhKCokerC}
    \ell_{b,\rm COM} &\colon \R^3 \to (\cK^*_{b,\rm tot})^*,&\qquad
    \ell_{b,\rm COM}(\hat c) &\colon h^* \mapsto \bigg\la D_{\hat g_b}\Ric\biggl(\frac{\hat t_*^2}{2}h_{b,\hat c}+\hat t_*\breve h_{*,b,\hat c}\biggr), h^*\bigg\ra, \\
  \label{EqAhKCokerK}
    \ell_{b,\rm Kerr} &\colon \R^4 \to (\cK^*_{b,\rm tot})^*,&\qquad
    \ell_{b,\rm Kerr}(\dot b) &\colon h^* \mapsto \big\la D_{\hat g_b}\Ric(\hat t_*\hat g_b'(\dot b)),h^*\big\ra.
  \end{alignat}
  Denote by $\pi_{b,\rm COM}\colon(\cK^*_{b,\rm tot})^*\to(\cK^*_{b,\rm COM})^*$ and $\pi_{b,\rm Kerr}\colon(\cK^*_{b,\rm tot})^*\to(\cK^*_{b,\rm Kerr})^*$ the projection maps.\footnote{That is, $\pi_{b,\rm COM}$ restricts the domain of definition of a linear functional to $\cK^*_{b,\rm COM}$; this is the adjoint of the inclusion $\cK_{b,\rm COM}^*\hra\cK_{b,\rm tot}^*$. Likewise for $\pi_{b,\rm Kerr}$.} Then:
  \begin{enumerate}
  \item\label{ItAhKCokerIndep} the maps $\ell_{b,\rm COM}$ and $\ell_{b,\rm Kerr}$ are independent of the lower order terms $\tilde\omega_\mu$, $\tilde\omega_{j k}$;
  \item\label{ItAhKCokerKerrPK} $\pi_{b,\rm Kerr}\circ\ell_{b,\rm Kerr}\colon\R^4\to(\cK^*_{b,\rm Kerr})^*$ is an isomorphism;
  \item\label{ItAhKCokerKerrPC} $\pi_{b,\rm COM}\circ\ell_{b,\rm Kerr}=0$;
  \item\label{ItAhKCokerCOM} $\pi_{b,\rm COM}\circ\ell_{b,\rm COM}\colon\R^3\to(\cK_{b,\rm COM}^*)^*$ is an isomorphism.
  \item\label{ItAhCokerCOMKerr} $\pi_{b,\rm Kerr}\circ\ell_{b,\rm COM}=0$.
  \end{enumerate}
  Explicitly, writing\footnote{We commit this abuse of notation only for better readability of the table below. We consider it an acceptable abuse due to part~\eqref{ItAhKCokerIndep}.} $\dd\hat t$, $\hat r^2\vect(\fq)$, $\sld(\hat r\scal(\fq))$ for the 1-form in $\mathspan\{\omega_\mu^*,\omega_{j k}^*\}$ with leading order term $\dd\hat t$, $\hat r^2\vect(\fq)$, $\sld(\hat r\scal(\fq))=\fq\cdot\dd\hat x$, respectively, we have, with $b=(\bhm,\bha)$,
  \begin{align*}
    &\hspace{4.5em}\hspace{4em}\begin{minipage}{10.5em}\centering $\cK_{b,\rm Kerr}^*$ \end{minipage} \hspace{0.8em} \begin{minipage}{6.7em}\centering $\cK_{b,\rm COM}^*$\end{minipage} \\[-0.2em]
    &\hspace{4.5em}\hspace{4em}\overbrace{\hspace{10.5em}} \hspace{0.8em} \overbrace{\hspace{6.7em}} \\[-0.5em]
    &\begin{minipage}{4.5em}{
      \vspace{1.3em}
      \begin{alignat*}{2}
        &\ell_{b,\rm Kerr} &\ &\Big\{ \\[0.1em]
        &\ell_{b,\rm COM} &\ &\,\{
      \end{alignat*}
    }\end{minipage}
    \begin{array}{c||c|c|c}
        & \sfG_{\hat g_b}\delta_{\hat g_b}^*\dd\hat t & \sfG_{\hat g_b}\delta_{\hat g_b}^*(\hat r^2\vect(\fq)) & \sfG_{\hat g_b}\delta_{\hat g_b}^*(\sld(\hat r\scal(\fq))) \\
      \hline
      \hline
      (\dot\bhm,0) & -8\pi\dot\bhm & -8\pi(\fq\cdot\bha)\dot\bhm & 0 \\
      (0,\dot\bha) & 0 & -8\pi\bhm(\fq\cdot\dot\bha) & 0 \\
      \hline
      \hat c & 0 & 0 & 4\pi\bhm(\fq\cdot\hat c).
    \end{array}
  \end{align*}
  Finally:
  \begin{enumerate}
  \setcounter{enumi}{5}
  \item\label{ItAhKCokerNoC} $\la D_{\hat g_b}\Ric(\hat t_*h_{b,\hat c}),h^*\ra=0$ for all $h^*\in\cK_{b,\rm tot}^*$;
  \item\label{ItAhCokerTstar} the map $\ell_{b,\rm Kerr}$ is unchanged when replacing $\hat t_*$ by $\hat t$, or indeed by any function $\hat t'$ for which $\hat t_*-\hat t'\in\cA^{-1}(\hat X_b)$. The same is true for the map $\ell_{b,\rm COM}$ if in addition one replaces $\breve h_{*,b,\hat c}$ by $\breve h'_{b,\hat c}=\breve h_{*,b,\hat c}+(\hat t_*-\hat t')h_{b,\hat c}$ (so that $\hat t_* h_{b,\hat c}+\breve h_{*,b,\hat c}=\hat t'h_{b,\hat c}+\breve h'_{b,\hat c}$), so in particular when using $\hat t$ and $\breve h_{b,\hat c}$ in place of $\hat t_*$ and $\breve h_{*,b,\hat c}$.
  \end{enumerate}
\end{thm}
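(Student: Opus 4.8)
The plan is to reduce every assertion to a computation of boundary terms at the sphere at infinity $\rho_\circ^{-1}(0)$, exploiting the two identities that hold for free: the linearized second Bianchi identity $\delta_{\hat g_b}\sfG_{\hat g_b}\circ D_{\hat g_b}\Ric=0$ (valid since $\Ric(\hat g_b)=0$), and $(D_{\hat g_b}\Ric)^*\circ\sfG_{\hat g_b}\delta_{\hat g_b}^*=0$. For each tensor $h$ appearing in the first slot of the pairings in~\eqref{EqAhKCokerC}--\eqref{EqAhKCokerK} one first checks that $D_{\hat g_b}\Ric(h)$ is stationary: since $D_{\hat g_b}\Ric$ commutes with $\pa_{\hat t}$, this follows from the fact that $\pa_{\hat t}h$ lies in the kernel of $D_{\hat g_b}\Ric$, being either pure gauge (cf.\ Definition~\ref{DefAhKBoosts}) or a linearized Kerr metric (cf.\ \eqref{EqGKLinRic}). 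Together with polyhomogeneity on $\hat X_b$ this makes the pairing against $h^*=\sfG_{\hat g_b}\delta_{\hat g_b}^*\omega^*$ well-defined. Inserting a cutoff $\chi_\eps=\chi(\rho_\circ/\eps)$ and computing $\lim_{\eps\searrow 0}\la[D_{\hat g_b}\Ric,\chi_\eps]h,h^*\ra$ (no inner boundary term appears at $\hat r^{-1}(\bhm)$ since $\omega^*$ is supported away from it), then repeatedly integrating by parts—moving $D_{\hat g_b}\Ric$ onto $\sfG_{\hat g_b}\delta_{\hat g_b}^*\omega^*$ and using both displayed identities—shows that the pairing equals a universal bilinear expression, integrated over $\Sph^2$, in finitely many terms of the asymptotic expansions at $\rho_\circ=0$ of $h$ and of $\omega^*$. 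This is the boundary-pairing mechanism of Lemma~\ref{LemmaBgBdyPair}, with the extra wrinkle that $h$ grows polynomially in $\hat t_*$; since on $\hat X_b$ one has $\hat t_*=-\hat r$ near infinity, these factors are genuine $\rho_\circ^{-1}$-weights and must be tracked.

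Parts~\eqref{ItAhKCokerIndep}, \eqref{ItAhKCokerNoC}, and~\eqref{ItAhCokerTstar} then follow quickly. Replacing $\omega^*$ by $\omega^*+\tilde\omega$ with $\tilde\omega\in\cA^\delta+\sE'$ changes $\sfG_{\hat g_b}\delta_{\hat g_b}^*\omega^*$ by a tensor in $\cA^{1+\delta}+\sE'$, which paired against the decaying stationary tensor $D_{\hat g_b}\Ric(h)$ produces no boundary contribution and hence integrates by parts cleanly to $\la h,(D_{\hat g_b}\Ric)^*(\cdot)\ra=0$; this gives~\eqref{ItAhKCokerIndep}. For~\eqref{ItAhCokerTstar}, replacing $\hat t_*$ by $\hat t'$ changes $h$ (for $\ell_{b,\mathrm{Kerr}}$, resp.\ $\ell_{b,\mathrm{COM}}$ with the compensating change of $\breve h_{*,b,\hat c}$) only by $D_{\hat g_b}\Ric$ of a tensor admitting the same clean integration by parts, as one sees from~\eqref{EqGKLinRic}, from $\hat t'h_{b,\hat c}+\breve h'_{b,\hat c}$ being pure gauge, and from commutator bounds for $[D_{\hat g_b}\Ric,\hat t_*-\hat t']$. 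Part~\eqref{ItAhKCokerNoC} is again clean integration by parts: $\hat t_* h_{b,\hat c}+\breve h_{*,b,\hat c}$ is pure gauge, so $D_{\hat g_b}\Ric(\hat t_* h_{b,\hat c})=-\wh{D_{\hat g_b}\Ric}(0)\breve h_{*,b,\hat c}$ is stationary and, by Lemma~\ref{LemmaAhKBoostLot} and $\wh{D_{\hat g_b}\Ric}(0)\in\rho_\circ^2\Diffb^2$, lies in $\cA^3(\hat X_b)$; paired against any $h^*\in\cK^*_{b,\mathrm{tot}}\subset\cA^2$ and moving $\wh{D_{\hat g_b}\Ric}(0)$ over, all boundary terms vanish in the limit and one lands on $(D_{\hat g_b}\Ric)^*h^*=0$.

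The vanishing entries of the table—hence parts~\eqref{ItAhKCokerKerrPC} and~\eqref{ItAhCokerCOMKerr}, and the zero entries within the $\cK^*_{b,\mathrm{Kerr}}$ block—follow from spherical harmonic type decomposition. The relevant leading asymptotics have definite pure type: $h_{b,\hat c}$ and $\breve h_{*,b,\hat c}$ are of scalar type $1$ (Lemma~\ref{LemmaAhKBoostLot}); $\pa_\bhm\hat g_b$ has scalar type $0$ and $2$ parts, while $\hat g_b'(0,\dot\bha)$ contributes a vector type $1$ piece at order $\rho_\circ^2$ carrying the angular momentum (read off \eqref{EqGKLot}); and $\dd\hat t$, $\hat r^2\vect(\fq)$, $\sld(\hat r\scal(\fq))=\fq\cdot\dd\hat x$ are of scalar type $0$, vector type $1$, scalar type $1$, respectively. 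Since at the orders which enter the boundary integral the metric is Minkowskian modulo spherically symmetric corrections, $D_{\hat g_b}\Ric$ and $\sfG_{\hat g_b}\delta_{\hat g_b}^*$ preserve these types, and the $O(3)$-representation facts of~\S\ref{SsMkY} (e.g.\ that $\scalspace_1\otimes\vectspace_1$ realized on tensors contains no vector type $1$ part) show the boundary integrand pairs tensors of distinct type, hence vanishes. The surviving nonzero entries are then evaluated explicitly: $-8\pi$ is the linearized-mass boundary pairing, the same computation as Lemma~\ref{LemmaAcGWSchw} and Corollary~\ref{CorAcGWEOM}; $-8\pi(\fq\cdot\bha)\dot\bhm$ and $-8\pi\bhm(\fq\cdot\dot\bha)$ arise by pairing vector type $1$ leading terms of $D_{\hat g_b}\Ric(\hat t_*\hat g'_b(\dot b))$ against $\hat r^2\vect(\fq)$, using~\eqref{EqGKLot} and the vector type $1$, $\lambda=-2$ analysis of~\S\ref{SRic} (linearized angular momentum, $4\hat r^{-2}\,\dd\hat t\otimes_s\hat r\vect$); and $4\pi\bhm(\fq\cdot\hat c)$ is the scalar type $1$, $\lambda=-2$ pairing (deformation tensors of translations on Schwarzschild), computed from~\eqref{EqAhKBoostLot}--\eqref{EqAhKBoostLotBreve} and the formulas of~\S\ref{SMk}. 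Parts~\eqref{ItAhKCokerKerrPK} and~\eqref{ItAhKCokerCOM} are then immediate: the $4\times 4$ matrix of $\pi_{b,\mathrm{Kerr}}\circ\ell_{b,\mathrm{Kerr}}$ is triangular with diagonal $(-8\pi,-8\pi\bhm,-8\pi\bhm,-8\pi\bhm)$ and the $3\times 3$ matrix of $\pi_{b,\mathrm{COM}}\circ\ell_{b,\mathrm{COM}}$ is $4\pi\bhm$ times the identity, both invertible since $\bhm\neq 0$.

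The main obstacle is the first paragraph: rigorously localizing the pairings to a single boundary term at $\rho_\circ=0$. This requires controlling the function spaces of $D_{\hat g_b}\Ric(h)$ and of $h^*$ (including the distributional $\sE'$ part and the non-decay of $\omega^*$) precisely enough to license all the integrations by parts, and carefully tracking the $\rho_\circ^{-1}$-weights produced by the $\hat t_*$-powers, together with the bookkeeping of exactly which orders of the Kerr and linearized-Kerr asymptotic expansions survive into each boundary integral. Once the boundary integrals have been isolated, their evaluation is a finite (if lengthy) computation running entirely on the apparatus of~\S\S\ref{SsMkY}--\ref{SRic} and mirroring the mass computation of~\S\ref{SsAcGW}.
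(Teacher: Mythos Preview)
Your overall strategy---reduce each pairing to a boundary integral at $\rho_\circ=0$ via commutator identities and the adjoint relations $\delta_{\hat g_b}\sfG_{\hat g_b}\circ D_{\hat g_b}\Ric=0$, $(D_{\hat g_b}\Ric)^*\circ\sfG_{\hat g_b}\delta_{\hat g_b}^*=0$, then read off the result from spherical harmonic types and the explicit Kerr asymptotics---is exactly the paper's approach, and your treatment of parts~\eqref{ItAhKCokerIndep}, \eqref{ItAhKCokerNoC}, \eqref{ItAhCokerTstar} is correct (your argument for~\eqref{ItAhKCokerNoC} via $D_{\hat g_b}\Ric(\hat t_* h_{b,\hat c})=-\wh{D_{\hat g_b}\Ric}(0)\breve h_{*,b,\hat c}\in\cA^3$ is in fact slightly cleaner than what the paper writes).

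There is, however, a gap in your type argument for the vanishing entries against the \emph{rotation} cokernel elements (part~\eqref{ItAhCokerCOMKerr}, and also the $(\dot\bhm,0)$-vs-rotation entry). You assert that ``at the orders which enter the boundary integral the metric is Minkowskian modulo spherically symmetric corrections, so $\sfG_{\hat g_b}\delta_{\hat g_b}^*$ preserves types.'' But for $\omega^*\sim\hat r^2\vect(\fq)\in\rho_\circ^{-1}\CI$, the leading $\rho_\circ^2$ term of $h^*=\sfG_{\hat g_b}\delta_{\hat g_b}^*\omega^*$ receives a contribution from $(\delta_{\hat g_b}^*-\delta_{\hat g_{b_0}}^*)\omega^*$: the operator difference is of class $\rho_\circ^3\Diffb^0$, which acting on $\rho_\circ^{-1}$ lands precisely in $\rho_\circ^2$. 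This contribution comes from the genuinely non-spherically-symmetric part of Kerr and does \emph{not} a priori have a definite pure type, so your orthogonality claim is not yet justified. The paper closes this gap in two ways: when $\fq\parallel\bha$ (or $\bha=0$) the rotation is an exact isometry, so $h^*$ is compactly supported and no boundary term arises; when $\fq\perp\bha$, one observes that $\delta_{\hat g_b}^*(\hat r^2\vect(\fq))$ is a linearized Kerr metric $\hat g_b'(0,\dot\bha)$ (Lemma~\ref{LemmaFhM1AxisLie}), whose $\hat r^{-2}$ coefficient, read off from~\eqref{EqGKLot}, is a sum of scalar type $0$, scalar type $2$, and vector type $1$ pieces---crucially with \emph{no} scalar type $1$ component---so the pairing against the scalar type $1$ tensor $\mathring h_{b,j}$ still vanishes. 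The same subtlety resurfaces in the $\ell_{b,\rm Kerr}(\dot\bhm,0)$-vs-rotation entry, where the boundary pairing is sensitive to the subleading (vector type $1$) part of $\hat g_b'(\dot\bhm,0)$; this is what produces the $-8\pi(\fq\cdot\bha)\dot\bhm$ rather than zero. You should make explicit which orders of the Kerr expansion enter each boundary integral and supply the missing type analysis at those orders.
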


\begin{lemma}[Modulation terms]
\label{LemmaAhKModulation}
  We have
  \begin{equation}
  \label{EqAhKModulation}
    D_{\hat g_b}\Ric\biggl(\frac{\hat t_*^2}{2}h_{b,\hat c}+\hat t_*\breve h_{*,b,\hat c}\biggr),\quad
    D_{\hat g_b}\Ric(\hat t_*\hat g_b'(\dot b)) \in \rho_\circ^2\CI\bigl(\hat X_b;S^2\,\Ttsc^*_{\hat X_b}\hat M_b\bigr).
  \end{equation}
\end{lemma}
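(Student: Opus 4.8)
The plan is to prove both memberships by the same device. Although neither tensor on the left of~\eqref{EqAhKModulation} is stationary (the first being quadratic, the second linear, in $\hat t_*$), applying $D_{\hat g_b}\Ric$ to either one returns a \emph{stationary} tensor, which can then be estimated directly on $\hat X_b$. I would work throughout in the coordinates $(\hat t_*,\hat x)$ and use the decomposition
\[
  D_{\hat g_b}\Ric = \cR_0 + \cR_1\pa_{\hat t_*} + \cR_2\pa_{\hat t_*}^2,\qquad \cR_0 = \wh{D_{\hat g_b}\Ric}(0),
\]
where the $\cR_j$ are differential operators in $\hat x$ only. The structural input I would record first is that $\cR_0\in\rho_\circ^2\,{}^\vee\Diffb^2(\hat X_b;S^2\,\Ttsc^*_{\hat X_b}\hat M_b)$, while the subleading coefficients \emph{gain} a weight: $\cR_1\in\rho_\circ\,{}^\vee\Diffb^1$ and $\cR_2\in\CI(\hat X_b;\End)$. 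This weight gain is exactly the point of the null-adapted time function $\hat t_*$ of~\eqref{EqAhTstar}; it follows from the explicit Minkowski formulas (Lemma~\ref{LemmaMk}, Corollary~\ref{CorMkDiffDRic}), where already for $\ubar\Box$ the $\pa_{t_*}$-coefficient is $-2\rho(\rho\pa_\rho-1)\in\rho\Diffb^1$ rather than a nonzero constant, together with the asymptotic flatness~\eqref{EqGKCpt3sc} of $\hat g_b$ (the Kerr corrections being of strictly lower order).

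I would then record two more ingredients: $D_{\hat g_b}\Ric(\hat g_b'(\dot b))=0$, which is~\eqref{EqGKLinRic}; and $D_{\hat g_b}\Ric(u_1)=0$, where $u_1:=\hat t_* h_{b,\hat c}+\breve h_{*,b,\hat c}=\delta_{\hat g_b}^*\bigl(\sum_j\hat c_j(\hat t\pa_{\hat x^j}^\flat+\hat x^j\pa_{\hat t}^\flat)\bigr)$ is pure gauge (Definition~\ref{DefAhKBoosts} and the discussion in~\S\ref{SsELin}). For the second tensor, set $v:=\hat t_*\hat g_b'(\dot b)$. Since $\hat g_b'(\dot b)$ is stationary, $\cR_0\hat g_b'(\dot b)=D_{\hat g_b}\Ric(\hat g_b'(\dot b))=0$, and therefore
\[
  D_{\hat g_b}\Ric(v)=\cR_0 v+\cR_1\hat g_b'(\dot b)=\hat t_*\cR_0\hat g_b'(\dot b)+\cR_1\hat g_b'(\dot b)=\cR_1\hat g_b'(\dot b);
\]
this is stationary (both factors are), and lies in $\rho_\circ^2\CI$ because $\hat g_b'(\dot b)\in\rho_\circ\CI$ by~\eqref{EqGKLinSize} and $\cR_1\in\rho_\circ\,{}^\vee\Diffb^1$.

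For the first tensor, expanding $0=D_{\hat g_b}\Ric(u_1)=\cR_0 u_1+\cR_1 h_{b,\hat c}$ in powers of $\hat t_*$ (using $u_1=\hat t_* h_{b,\hat c}+\breve h_{*,b,\hat c}$) yields the two identities $\cR_0 h_{b,\hat c}=0$ and $\cR_0\breve h_{*,b,\hat c}=-\cR_1 h_{b,\hat c}$. For $u:=\tfrac{\hat t_*^2}{2}h_{b,\hat c}+\hat t_*\breve h_{*,b,\hat c}=\hat t_* u_1-\tfrac{\hat t_*^2}{2}h_{b,\hat c}$ one has $\pa_{\hat t_*}u=u_1$, $\pa_{\hat t_*}^2 u=h_{b,\hat c}$, and $\cR_0 u=\hat t_*\cR_0 u_1-\tfrac{\hat t_*^2}{2}\cR_0 h_{b,\hat c}=-\hat t_*\cR_1 h_{b,\hat c}$, whence
\[
  D_{\hat g_b}\Ric(u)=\cR_0 u+\cR_1 u_1+\cR_2 h_{b,\hat c}=-\hat t_*\cR_1 h_{b,\hat c}+\bigl(\hat t_*\cR_1 h_{b,\hat c}+\cR_1\breve h_{*,b,\hat c}\bigr)+\cR_2 h_{b,\hat c}=\cR_1\breve h_{*,b,\hat c}+\cR_2 h_{b,\hat c}.
\]
By Lemma~\ref{LemmaAhKBoostLot}, $h_{b,\hat c}\in\rho_\circ^2\CI$ and $\breve h_{*,b,\hat c}\in\rho_\circ\CI$, so the right-hand side lies in $\rho_\circ^2\CI(\hat X_b;S^2\,\Ttsc^*_{\hat X_b}\hat M_b)$, which completes the proof.

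The only genuinely nontrivial point is the structural input of the first paragraph, namely the weight gain $\cR_1\in\rho_\circ\,{}^\vee\Diffb^1$ (and $\cR_2\in\CI(\hat X_b;\End)$). With a ``straight'' time coordinate one would only have $\cR_1\in\rho_\circ^{-1}\,{}^\vee\Diffb^1$, so that $\cR_1\breve h_{*,b,\hat c}\in\CI$ and the conclusion would fail by two orders; the improvement is precisely why $\hat t_*$ is introduced in~\eqref{EqAhTstar}, and it must be extracted from the Minkowski computations of~\S\ref{SMk} together with the asymptotic flatness of $\hat g_b$. Everything else is the algebraic bookkeeping above.
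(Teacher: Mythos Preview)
Your proof is correct and follows essentially the same route as the paper's. Your decomposition $D_{\hat g_b}\Ric=\cR_0+\cR_1\pa_{\hat t_*}+\cR_2\pa_{\hat t_*}^2$ is equivalent to the paper's commutator formulation: on stationary tensors one has $\cR_1=[D_{\hat g_b}\Ric,\hat t_*]\ftrans(0)$ and $2\cR_2=[[D_{\hat g_b}\Ric,\hat t_*],\hat t_*]$, and both arguments arrive at the identical expression $\cR_1\breve h_{*,b,\hat c}+\cR_2 h_{b,\hat c}$ (this is the paper's~\eqref{EqAhKCoker2}) for the first tensor and $\cR_1\hat g_b'(\dot b)$ (the paper's~\eqref{EqAhKCoker}) for the second. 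The key structural input $\cR_1\in\rho_\circ\Diffb^1$, $\cR_2\in\Diffb^0$ is the same in both.

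One correction to your final paragraph: the claim that with the ``straight'' coordinate $\hat t$ one would only have $\cR_1\in\rho_\circ^{-1}\Diffb^1$ is not right. The membership $\cR_1\in\rho_\circ\Diffb^1$ holds equally in the $\hat t$-decomposition; it follows from $D_{\hat g_b}\Ric\in\rho_\circ^2\Difftb^2$ via~\eqref{EqGVfhatMSpecn0Diff} (applied to $\rho_\circ^{-2}D_{\hat g_b}\Ric$), and the paper explicitly argues via $\hat t_*\equiv\hat t\bmod\rho_\circ^{-1}\CI$ that the $\hat t$- and $\hat t_*$-commutators have the same weights. (See also Theorem~\ref{ThmAhKCoker}\eqref{ItAhCokerTstar}, which shows the resulting pairings are unchanged under $\hat t_*\leadsto\hat t$.) This misattribution does not affect your proof, since your actual claim about $\cR_1$ is correct and your justification via Lemma~\ref{LemmaMk} and asymptotic flatness is valid; but the role of $\hat t_*$ here is not to produce the weight gain.
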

\begin{proof}
  Since $h_{b,\hat c}$ and $\hat t_*h_{b,\hat c}+\breve h_{*,b,\hat c}$ lie in $\ker D_{\hat g_b}\Ric$, we have
  \begin{align}
    D_{\hat g_b}\Ric\Bigl(\frac{\hat t_*^2}{2}h_{b,\hat c}+\hat t_*\breve h_{*,b,\hat c}\Bigr) &= \frac{\hat t_*^2}{2}D_{\hat g_b}\Ric(h_{b,\hat c}) + \frac12\hat t_*[D_{\hat g_b}\Ric,\hat t_*]h_{b,\hat c} + \frac12[D_{\hat g_b}\Ric,\hat t_*](\hat t_* h_{b,\hat c}) \nonumber\\
      &\quad\qquad + \hat t_* D_{\hat g_b}\Ric(\breve h_{*,b,\hat c}) + [D_{\hat g_b}\Ric,\hat t_*]\breve h_{*,b,\hat c} \nonumber\\
      &= \hat t_*\bigl( [D_{\hat g_b}\Ric,\hat t_*]h_{b,\hat c} + D_{\hat g_b}\Ric(\breve h_{*,b,\hat c})\bigr) + \frac12\bigl[ [D_{\hat g_b}\Ric,\hat t_*], \hat t_*\bigr]h_{b,\hat c} \nonumber\\
      &\quad\qquad + [D_{\hat g_b}\Ric,\hat t_*]\breve h_{*,b,\hat c} \nonumber\\
  \label{EqAhKCoker2}
      &= \frac12\bigl[ [ D_{\hat g_b}\Ric, \hat t_* ], \hat t_* \bigr] h_{b,\hat c} + [D_{\hat g_b}\Ric,\hat t_*]\breve h_{*,b,\hat c}.
  \end{align}
  Since $h_{b,\hat c}$ and $\breve h_{*,b,\hat c}$ are stationary, we can replace the operators acting on them by their zero energy operators. Since $\hat t_*\equiv\hat t\bmod\rho_\circ^{-1}\CI$, the (zero energy operator of the) double commutator lies in $\Diffb^0(\hat X_b)$ and the zero energy operator of the commutator lies in $\rho_\circ\Diffb^1$; this uses~\eqref{EqGVfhatMSpecn0Diff}. Since $h_{b,\hat c}\in\rho_\circ^2\CI$ and $\breve h_{*,b,\hat c}\in\rho_\circ\CI$ by Lemma~\ref{LemmaAhKBoostLot}, this verifies the first membership in~\eqref{EqAhKModulation}.

  Similarly, in view of~\eqref{EqGKLinSize} we have
  \begin{equation}
  \label{EqAhKCoker}
    D_{\hat g_b}\Ric(\hat t_*\hat g_b'(\dot b)) = [D_{\hat g_b}\Ric,\hat t_*]\hat g_b'(\dot b) \in \rho_\circ^2\CI.\qedhere
  \end{equation}
\end{proof}

\begin{proof}[Proof of Theorem~\usref{ThmAhKCoker}]
  Since the first argument in the pairing in~\eqref{EqAhKCokerC} lies in $\rho_\circ^2\CI$ by Lemma~\ref{LemmaAhKModulation}, its inner product with the tensors in $\cK_{b,\rm COM}^*$, which have at least inverse quadratic decay as $\hat r\to\infty$, are well-defined; this shows that $\ell_{b,\rm COM}$ is well-defined. Similarly, the well-definedness of $\ell_{b,\rm Kerr}$ follows from the second membership in~\eqref{EqAhKModulation}.

  \pfstep{Part~\eqref{ItAhKCokerIndep}} will be proved in the course of the computations for parts~\eqref{ItAhKCokerKerrPK}--\eqref{ItAhCokerCOMKerr}, and will be seen to be due to the fact that all pairings can be rewritten as boundary pairings (i.e.\ $L^2$-inner products only involving the leading order terms of various tensors at $\pa\hat X_b$).

  \pfstep{Parts~\eqref{ItAhKCokerKerrPK}--\eqref{ItAhKCokerKerrPC}: computation of $\ell_{b,\rm Kerr}$.} For each $h^*$, we will rewrite the inner product~\eqref{EqAhKCokerK} as a boundary pairing. We begin with preliminary computations. Let $\chi\in\CI([0,\infty))$ be identically $0$ on $[0,1]$ and identically $1$ on $[2,\infty)$, and let $\chi_\eps=\chi(\rho_\circ/\eps)$, $\rho_\circ=\la\hat x\ra^{-1}$. Then, by~\eqref{EqAhKCoker}, writing $h^*=\sfG_{\hat g_b}\delta_{\hat g_b}^*\omega^*$, and recalling~\eqref{EqELinDRicAdj}, we have
  \begin{align}
    \la D_{\hat g_b}\Ric(\hat t_*\hat g'_b(\dot b)),h^*\ra &= \big\la [ D_{\hat g_b}\Ric, \hat t_*]\hat g'_b(\dot b), h^* \big\ra \nonumber\\
      &=-\big\la \hat g'_b(\dot b), [(D_{\hat g_b}\Ric)^*,\hat t_*]\sfG_{\hat g_b}\delta_{\hat g_b}^*\omega^* \big\ra \nonumber\\
  \label{EqAhKCokerCalc}
      &= -\big\la \hat g_b'(\dot b), \sfG_{\hat g_b}[D_{\hat g_b}\Ric,\hat t_*]\delta_{\hat g_b}^*\omega^* \big\ra.
  \end{align}
  The integration by parts does not produce any boundary terms: at infinity, this is due to the sufficient decay of the terms involved (namely, $[D_{\hat g_b}\Ric,\hat t_*]\ftrans(0)\in\rho_\circ\Diffb^1$, $\hat g'_b(\dot b)\in\rho_\circ\CI$, and $h^*\in\rho_\circ^2\CI(\hat X_b)+\cE'(\hat X_b^\circ)$), while near the inner boundary $\hat r=\hat\bhm$ of $\hat X_b$, we use the vanishing of $\omega^*$. Since $D_{\hat g_b}\Ric\circ\delta_{\hat g_b}^*=0$, we further have
  \[
    [D_{\hat g_b}\Ric,\hat t_*]\delta_{\hat g_b}^*\omega^* = D_{\hat g_b}\Ric(\hat t_*\delta_{\hat g_b}^*\omega^*) = -D_{\hat g_b}\Ric([\delta_{\hat g_b}^*,\hat t_*]\omega^*) = -D_{\hat g_b}\Ric(\dd\hat t_*\otimes_s\omega^*).
  \]
  We can thus further rewrite~\eqref{EqAhKCokerCalc} as
  \begin{equation}
  \label{EqAhCokerCalc2}
    \big\la\hat g_b'(\dot b),\sfG_{\hat g_b}\wh{D_{\hat g_b}\Ric}(0)(\dd\hat t_*\otimes_s\omega^*)\big\ra = \big\la\hat g_b'(\dot b),\wh{D_{\hat g_b}\Ric}(0)^*\sfG_{\hat g_b}(\dd\hat t_*\otimes_s\omega^*)\big\ra.
  \end{equation}
  (Retracing the calculations thus far back to~\eqref{EqAhKCokerCalc}, the second argument here has $\cO(\rho_\circ^3)$ decay, and thus the inner product is well-defined.) Inserting $\chi_\eps$ in the left factor and taking the limit $\eps\searrow 0$, we can integrate by parts and obtain
  \begin{equation}
  \label{EqAhCokerCalc3}
  \begin{split}
    \ell_{b,\rm Kerr}(\dot b)(h^*) &= \lim_{\eps\searrow 0} \big\la \chi_\eps\hat g'_b(\dot b),\wh{D_{\hat g_b}\Ric}(0)^*\sfG_{\hat g_b}(\dd\hat t_*\otimes_s\omega^*)\big\ra \\
      &= \lim_{\eps\searrow 0}\big\la \bigl[ \wh{D_{\hat g_b}\Ric}(0),\chi_\eps \bigr] \hat g'_b(\dot b), \sfG_{\hat g_b}(\dd\hat t_*\otimes_s\omega^*) \big\ra.
  \end{split}
  \end{equation}

  \pfsubstep{(i)}{Pairings with deformation tensors of (approximate) spacetime translations $\omega_0^*,\omega_j^*$.} If $\hat g'_b(\dot b)\in\rho_\circ^2\CI$---which is the case if and only if $\dot b=(0,\dot\bha)$---and $\omega^*=\cO(1)$---which holds for $\omega_\mu^*$, $\mu=0,1,2,3$---then we can integrate by parts directly in~\eqref{EqAhCokerCalc2} (cf.\ the discussion following~\eqref{EqAh0Density}), and thus the inner product evaluates to $0$. This shows that
  \begin{equation}
  \label{EqAhKCokerKPf1}
    \ell_{b,\rm Kerr}(0,\dot\bha) ( h^* ) = 0,\qquad \dot\bha\in\R^3,\ h^*\in\cK_{b,\rm COM}^*\oplus \mathspan\{\sfG_{\hat g_b}\delta_{\hat g_b}^*\omega_0^*\}.
  \end{equation}

  For all other combinations of $\dot b$ and $h^*$, we use the expression~\eqref{EqAhCokerCalc3}. Modifications of $\omega^*$ which have compact support or lie in $\cA^\delta$, $\delta>0$, do not affect the limit (since they do not affect~\eqref{EqAhCokerCalc2}), and therefore we may now replace $\omega^*$ by one of the 1-forms $\omega_0$, $\omega_j$, $\omega_{j k}$. Consider first $h^*=\sfG_{\hat g_b}\delta_{\hat g_b}^*\omega^*$ with $\omega^*\in\mathspan\{\omega_\mu^*\colon\mu=0,1,2,3\}$ as in~\eqref{EqAhKCokerKPf1} and $\dot b=(\dot\bhm,0)$, thus $\omega^*\in\CI$ and $\hat g'_b(\dot b)\in\rho_\circ\CI$. Then we may replace $\hat g_b$ by the Minkowski metric $\hat{\ubar g}\equiv\hat g_b\bmod\rho_\circ\CI$, and Lemma~\ref{LemmaBgBdyPair} thus gives
  \begin{equation}
  \label{EqAhKCokerKPf2}
  \begin{split}
    &\ell_{b,\rm Kerr}(\dot\bhm,0)(h^*) = \big\la\pa_\lambda N(\rho_\circ^{-2}\wh{D_{\hat{\ubar g}}\Ric}(0),1)h_{(1)},h^*_{(0)}\big\ra_{L^2(\pa\hat X_b)}, \\
    &\qquad
      h_{(1)}:=\bigl(\rho_\circ^{-1}\hat g_b'(\dot\bhm,0)\bigr)|_{\pa\hat X_b}=2\dot\bhm(\dd\hat t^2+\dd\hat r^2),\quad
      h^*_{(0)}:=\bigl(\sfG_{\hat{\ubar g}}(\dd\hat t_*\otimes_s\omega^*)\bigr)|_{\pa\hat X_b}.
  \end{split}
  \end{equation}
  Here, we use the standard volume density on $\pa\hat X_b\cong\Sph^2$ and the fiber inner product on $S^2\,\Ttsc^*_{\pa\hat X_b}\hat M_b$ induced by $\hat{\ubar g}$; and we used~\eqref{EqGKLot}. Note that $h_{(1)}$ is of scalar type $0$. Thus, if $\omega^*=\omega_j$ for $j=1,2,3$, then $\omega^*$ is of scalar type $1$, and thus so is $h^*_{(0)}$; therefore, $\ell_{b,\rm Kerr}(\dot b)(h^*)=0$ in this case. Together with~\eqref{EqAhKCokerKPf1}, this proves part~\eqref{ItAhKCokerKerrPC}. In the remaining case $\omega^*=\omega_0=\dd\hat t$, we need to perform a calculation. Write $\hat x^0=\hat t+\hat r$ and $\hat x^1=\hat t-\hat r$ (which equals $\hat t_*$ near $\hat r=\infty$); in the bundle splittings~\eqref{EqMkBundleSplit} and using Lemma~\ref{LemmaMk}, we then find
  \begin{alignat*}{2}
    h_{(1)} &= \dot\bhm\bigl((\dd\hat x^0)^2+(\dd\hat x^1)^2\bigr) &&= \dot\bhm(1,0,0,1,0,0)^T, \\
    h^*_{(0)} &= \sfG_{\hat{\ubar g}}\Bigl(\dd\hat x^1\otimes_s\frac12(\dd\hat x^0+\dd\hat x^1)\Bigr) &&= \frac12(0,0,0,1,0,\slg)^T.
  \end{alignat*}
  Corollary~\ref{CorMkDiffDRic} with $\lambda=1$ gives $\pa_\lambda N(\rho_\circ^{-2}\wh{D_{\hat{\ubar g}}\Ric}(0),1)h_{(1)}=\dot\bhm(0,-\half,0,0,0,-2\slg)^T$.\footnote{This is $-\dd\hat x^0\,\dd\hat x^1-2\hat r^2\slg=-\dd t^2+\dd\hat r^2-2\hat r^2\slg=-\dd t^2-2\,\dd\hat x^2+3\,\dd\hat r^2$. Lemma~\ref{LemmaAcGWSchw} is an averaged version of this computation: the distributional pairing computed there is, for $h^*\in\CIc(\R^3;S^2 T^*\R^4)$, given by $\la D_{\ubar g}\Ric(\frac{1}{r}(\dd t^2+\dd r^2)),h^*\ra=\la\frac{1}{r}(\dd t^2+\dd r^2)h,(D_{\ubar g}\Ric)^*h^*\ra=\lim_{\eps\searrow 0}\la[D_{\ubar g}\Ric,\chi(\frac{r}{\eps})]\frac{1}{r}(\dd t^2+\dd r^2),h^*\ra=\int_{\Sph^2}\la\pa_\lambda N(\wh{D_{\ubar g}\Ric}(0),-1)(\dd t^2+\dd r^2)(\omega),h^*(0)\ra\,\dd\slg(\omega)$, where the indicial operator is defined in terms of $r$ (as compared to $\hat r^{-1}$ in the current section). The first argument is $\half(\dd t^2+2\,\dd x^2-3\,\dd r^2)$, as computed above. (Here, the 1-form $\dd r$ depends on $\omega$, and should correctly be regarded as a section of the pullback of $T^*\R^4$ to $[\R^4;x^{-1}(0)]$ over the front face $r^{-1}(0)$.) Note that the test `function' (symmetric 2-tensor) $h^*$ here is smooth across $x=0$, and thus we may integrate $\half(\dd t^2+2\,\dd x^2-3\,\dd r^2)$ in $\omega$, which gives $2\pi(\dd t^2+\dd x^2)$ since $\vol(\Sph^2)=4\pi$, $\dd r=\sum_{j=1}^3\frac{x^j}{|x|}\dd x^j$, and $\int_{\Sph^2} (\dd r^2)\,\dd\omega=\sum_{j,k=1}^3(\int\frac{x^j x^k}{|x|^2}\,\dd\omega)\,\dd x^j\,\dd x^k=\frac{4\pi}{3}\sum_{j=1}^3(\dd x^j)^2=\frac{4\pi}{3}\dd x^2$.} Upon using~\eqref{EqMkMinkInner}, we finally obtain from~\eqref{EqAhKCokerKPf2} the result
  \[
    \ell_{b,\rm Kerr}(\dot\bhm,0)\bigl(\sfG_{\hat g_b}\delta_{g_b}^*\omega_0^*\bigr) = -4\pi\dot\bhm\la\slg,\slg\ra_{\slg^{-1}_2} = -8\pi\dot\bhm.
  \]

  \pfsubstep{(ii)}{Pairings with deformation tensors of (approximate) spatial rotations $\omega_{j k}^*$.} We now consider $\omega^*\in\mathspan\{\omega^*_{j k}\}$ and $h^*=\sfG_{\hat g_b}\delta_{\hat g_b}^*\omega^*$. We first evaluate~\eqref{EqAhCokerCalc3} for $\dot b=(0,\dot\bha)$. Since $g'_b(\dot b)\in\rho_\circ^2\CI$ and $\omega^*\in\rho_\circ^{-1}\CI$ near $\hat r=\infty$, this can again be written as a boundary pairing; thus, only the leading order term of $\omega^*$ at $\pa\hat X_b$ matters. Recalling~\eqref{EqAhKVect1form}, let $\fq\in\R^3$ be such that
  \begin{equation}
  \label{EqAhKCokerAxis}
    \omega^* - \hat r^2\vect(\fq)\in\cA^\delta(\hat X_b)+\sE'(\hat X_b^\circ).
  \end{equation}
  Then
  \begin{equation}
  \label{EqAhKCokerKPf3}
    \ell_{b,\rm Kerr}(0,\dot\bha)(h^*) = \big\la\pa_\lambda N(\rho_\circ^{-2}\wh{D_{\hat{\ubar g}}\Ric}(0),2)h_{(2)},h^*_{(-1)}\big\ra_{L^2(\pa\hat X_b)},
  \end{equation}
  where now
  \begin{alignat*}{2}
    h_{(2)}&:=\bigl(\rho_\circ^{-2}\hat g_b'(0,\dot\bha)\bigr)|_{\pa\hat X_b}, \\
    h^*_{(-1)}&:=\bigl(\rho_\circ\sfG_{\hat{\ubar g}}(\dd\hat t_*\otimes_s\omega^*)\bigr)|_{\pa\hat X_b}&&=\dd\hat x^1\otimes_s \hat r\vect(\fq) = \half(0,0,0,0,\vect(\fq),0)^T
  \end{alignat*}
  in the splitting~\eqref{EqMkBundleSplit}. One can compute $h_{(2)}$ using~\eqref{EqGKLot}. Since $h_{(-1)}^*$ is of vector type $1$, we may replace $h_{(2)}$ by its vector type $1$ part $h_{(2),\rmv 1}$. Now, $\bha\cdot\frac{\hat x}{|\hat x|}=\scal(\bha)$ is a function of scalar type $1$, and hence its square is a sum of scalar type $0$ and $2$ functions; therefore, the same is true for its linearization in $\bha$. Similarly, $(\bha\times\frac{\hat x}{|\hat x|})\cdot\dd\hat x=\hat r\vect(\bha)$ is a 1-form of vector type $1$, and hence (the linearization of) its symmetric square is a sum of symmetric 2-tensors of scalar types 0 and 2, as discussed around~\eqref{EqMkYV1V1}. Thus, only the final term in the large square brackets in~\eqref{EqGKLot} contributes to $h_{(2),\rmv 1}$; we therefore obtain
  \begin{align*}
    h_{(2),\rmv 1}&=-4\bhm\,\dd\hat t\otimes_s\Bigl(\Bigl(\dot\bha\times\frac{\hat x}{|\hat x|}\Bigr)\cdot\dd\hat x\Bigr) = -2\bhm(\dd\hat x^0+\dd\hat x^1)\otimes_s\hat r\vect(\dot\bha) \\
      &= -\bhm(0,0,\vect(\dot\bha),0,\vect(\dot\bha),0)^T.
  \end{align*}
  Corollary~\ref{CorMkDiffDRic} gives $\pa_\lambda N(\rho_\circ^{-2}\wh{D_{\hat{\ubar g}}\Ric}(0),2)h_{(2),\rmv 1}=\frac{3\bhm}{2}(0,0,\vect(\dot\bha),0,\vect(\dot\bha),0)^T$, and therefore\footnote{This is most easily computed by writing $\vect(\fq)=\slstar\sld\scal(\fq)$ and noting that $\la\vect(\dot\bha),\vect(\fq)\ra=\la\sld\scal(\dot\bha),\sld\scal(\fq)\ra=\la\sldelta\sld\scal(\dot\bha),\scal(\fq)\ra=2\int_{\Sph^2}(\dot\bha\cdot\hat x)(\fq\cdot\hat x)\,\dd\slg=\frac{8\pi}{3}(\dot\bha\cdot\fq)$.}
  \[
    \ell_{b,\rm Kerr}(0,\dot\bha)\bigl(\sfG_{\hat g_b}\delta_{\hat g_b}^*\omega^*\bigr) = \int_{\Sph^2} -4\Big\la\frac{3\bhm}{2}\vect(\dot\bha),\frac12\vect(\fq)\Big\ra_{\slg^{-1}}\,\dd\slg = -8\pi\bhm(\fq\cdot\dot\bha).
  \]

  The computations thus far already imply part~\eqref{ItAhKCokerKerrPK}. In order to finish the proof of part~\eqref{ItAhKCokerIndep} (and to finish the evaluation of the first two rows of the table in the statement of the Theorem), we still need to compute $\ell_{b,\rm Kerr}(\dot\bhm,0)(h^*)$ for $h^*=\sfG_{\hat g_b}\delta_{\hat g_b}^*\omega^*$, with $\omega^*$ as in~\eqref{EqAhKCokerAxis}. In~\eqref{EqAhCokerCalc3}, we may replace $\omega^*$ by $\hat r^2\vect(\fq)$; since $\hat g'_b(\dot\bhm,0)\in\rho_\circ\CI$ and $\hat r^2\vect(\fq)\in\rho_\circ^{-1}\CI$, the limit~\eqref{EqAhCokerCalc3} is sensitive also to subleading order terms. We may thus only replace $\sfG_{\hat g_b}$ by $\sfG_{\hat g_{b_0}}$ (see~\eqref{EqAhKSchwParam}), and $\wh{D_{\hat g_b}\Ric}(0)$ by $\wh{D_{\hat g_{b_0}}\Ric}(0)$. But then $\sfG_{\hat g_{b_0}}(\dd\hat t_*\otimes\hat r^2\vect(\fq))$ is of vector type $1$, and therefore we only need to compute the vector type $1$ part of $\hat g'_b(\dot\bhm,0)$ modulo $\rho_\circ^3\CI$, which in view of~\eqref{EqGKLot} is given by
  \[
    -4\hat r^{-2}\dot\bhm\,\dd\hat t\otimes_s\hat r\vect(\bha),
  \]
  and in particular has an additional order of vanishing relative to $\hat g'_b((\dot\bhm,0))$ itself. The calculation is now the same as the one above, with $\bhm,\dot\bha$ replaced by $\dot\bhm,\bha$. Therefore,
  \[
    \ell_{b,\rm Kerr}(\dot\bhm,0)\bigl(\sfG_{\hat g_b}\delta_{\hat g_b}^*\omega^*\bigr) = -8\pi\dot\bhm(\fq\cdot\bha).
  \]

  \pfstep{Part~\eqref{ItAhKCokerCOM}: computation of $\pi_{b,\rm COM}\circ\ell_{b,\rm COM}$.} We record the identity
  \begin{align*}
    \frac12\bigl[ [D_{\hat g_b}\Ric,\hat t_*],\hat t_*\bigr]\circ\delta_{\hat g_b}^* &= \frac12 D_{\hat g_b}\Ric\circ\hat t_*^2\circ\delta_{\hat g_b}^* - \hat t_* D_{\hat g_b}\Ric\circ\hat t_*\circ\delta_{\hat g_b}^* \\
      &=-\frac12 D_{\hat g_b}\Ric\circ[\delta_{\hat g_b}^*,\hat t_*^2] + \hat t_* D_{\hat g_b}\Ric\circ[\delta_{\hat g_b}^*,\hat t_*] \\
      &= -D_{\hat g_b}\Ric\circ\hat t_*\circ[\delta_{\hat g_b}^*,\hat t_*] + \hat t_* D_{\hat g_b}\Ric\circ[\delta_{\hat g_b}^*,\hat t_*] \\ 
      &= -\bigl[D_{\hat g_b}\Ric,\hat t_*\bigr]\circ[\delta_{\hat g_b}^*,\hat t_*].
  \end{align*}
  For $h_{b,\hat c}$ and $\breve h_{*,b,\hat c}$ as in Lemma~\ref{LemmaAhKBoostLot}, and for $h^*=\sfG_{\hat g_b}\delta_{\hat g_b}^*\omega^*$ with $\omega^*=\sum_{j=1}^3\fq_j\omega_j^*$, $\fq\in\R^3$, we then compute, using $[D_{\hat g_b}\Ric,\hat t_*]\ftrans(0)^*=-[(D_{\hat g_b}\Ric)^*,\hat t_*]\ftrans(0)=-\sfG_{\hat g_b}[D_{\hat g_b}\Ric,\hat t_*]\ftrans(0)\sfG_{\hat g_b}$:
  \begin{align*}
    &\ell_{b,\rm COM}(\hat c)(h^*) \\
    &\quad = \bigg\la \frac12\bigl[ [ D_{\hat g_b}\Ric,\hat t_*], \hat t_*\bigr]h_{b,\hat c} + [D_{\hat g_b}\Ric,\hat t_*]\ftrans(0)\breve h_{*,b,\hat c}, \sfG_{\hat g_b}\delta_{\hat g_b}^*\omega^* \bigg\ra \\
    &\quad= \bigg\la h_{b,\hat c}, \frac12\sfG_{\hat g_b}\bigl[ [D_{\hat g_b}\Ric,\hat t_*], \hat t_*\bigr]\delta_{\hat g_b}^*\omega^* \bigg\ra - \Big\la \breve h_{*,b,\hat c}, \sfG_{\hat g_b}[D_{\hat g_b}\Ric,\hat t_*]\ftrans(0)\delta_{\hat g_b}^*\omega^*\Big\ra \\
    &\quad= -\Big\la h_{b,\hat c},\sfG_{\hat g_b}[ D_{\hat g_b}\Ric,\hat t_* ]\ftrans(0)\bigl( [\delta_{\hat g_b}^*,\hat t_*]\omega^*\bigr) \Big\ra + \Big\la \breve h_{*,b,\hat c},\sfG_{\hat g_b} \wh{D_{\hat g_b}\Ric}(0)\bigl([\delta_{\hat g_b}^*,\hat t_*]\omega^*\bigr) \Big\ra \\
    &\quad= \lim_{\eps\searrow 0}\Big\la [D_{\hat g_b}\Ric,\hat t_*]\ftrans(0)(\chi_\eps h_{b,\hat c}) + \wh{D_{\hat g_b}\Ric}(0)(\chi_\eps\breve h_{*,b,\hat c}), \sfG_{\hat g_b}[\delta_{\hat g_b}^*,\hat t_*]\omega^* \Big\ra \\
    &\quad= \lim_{\eps\searrow 0}\Big\la \bigl[ [D_{\hat g_b}\Ric,\hat t_*]\ftrans(0),\chi_\eps \bigr]h_{b,\hat c} + [\wh{D_{\hat g_b}\Ric}(0),\chi_\eps]\breve h_{*,b,\hat c}, \sfG_{\hat g_b}(\dd\hat t_*\otimes_s\omega^*) \Big\ra.
  \end{align*}
  Here, we used that $0=D_{\hat g_b}\Ric(\hat t_* h_{b,\hat c}+\breve h_{*,b,\hat c})=[D_{\hat g_b}\Ric,\hat t_*]h_{b,\hat c}+D_{\hat g_b}\Ric(\breve h_{*,b,\hat c})$ to obtain the final expression. Since the second argument in this pairing lies in $\CI(\hat X_b;S^2\,\Ttsc^*_{\hat X_b}\hat M_b)$, whereas $h_{b,\hat c},\wh{D_{\hat g_b}\Ric}(0)$ have weight $\rho_\circ^2$ and $[D_{\hat g_b}\Ric,\hat t_*]\ftrans(0)$, $\breve h_{*,b,\hat c}$ have weight $\rho_\circ$, this is a boundary pairing which only depends on the leading order terms of the operators and tensors involved; thus, we can replace $\hat g_b$ and $\omega^*$ by $\hat{\ubar g}$ and $\dd(\hat r\scal(\fq))=(\half\scal(\fq),-\half\scal(\fq),\sld\scal(\fq))^T$ in the splitting~\eqref{EqMkBundleSplit}. By Lemma~\ref{LemmaAhKBoostLot} and using Lemma~\ref{LemmaMk}, we then have
  \begin{alignat}{2}
  \label{EqAhKCokerCOMh2}
    h_{(2)} &:= (\rho_\circ^{-2}h_{b,\hat c})|_{\pa\hat X_b} &&=  \frac{\bhm}{2} \bigl( -\scal(\hat c), 0, \sld\scal(\hat c), -\scal(\hat c), -\sld\scal(\hat c), 0 \bigr)^T, \\
    \breve h_{(1)} &:= (\rho_\circ^{-1}\breve h_{*,b,\hat c})|_{\pa\hat X_b} &&= \frac{\bhm}{2} \bigl( \scal(\hat c), 0, 2\,\sld\scal(\hat c), -3\scal(\hat c), 0, 0 \bigr)^T, \nonumber\\
    h^*_{(0)} &:= \bigl(\sfG_{\hat g_b}(\dd\hat t_*\otimes_s\omega^*)\bigr)|_{\pa\hat X_b} &&=\Bigl( 0, 0, 0, -\frac12\scal(\fq), \frac12\sld\scal(\fq), \frac12\scal(\fq)\slg \Bigr)^T. \nonumber
  \end{alignat}
  Using Corollary~\ref{CorMkDiffDRic} and Lemma~\ref{LemmaMkYId} as well as the fiber inner product~\eqref{EqMkMinkInner}, we thus find (again using $\la\scal(\hat c),\scal(\fq)\ra=\frac{4\pi}{3}(\hat c\cdot\fq)$)
  \begin{align*}
    &\ell_{b,\rm COM}(\hat c)\bigl(\sfG_{\hat g_b}\delta_{\hat g_b}^*\omega^*\bigr) \\
    &\quad = \Big\la\pa_\lambda N\bigl(\rho_\circ^{-1}[D_{\hat{\ubar g}}\Ric,\hat t_*\bigr]\ftrans(0),2\bigr)h_{(2)} + \pa_\lambda N\bigl(\rho_\circ^{-2}\wh{D_{\hat{\ubar g}}\Ric}(0),1\bigr)\breve h_{(1)},h^*_{(0)}\Big\ra_{L^2(\pa\hat X_b)} \\
    &\quad = \frac{\bhm}{2}\Big\la \Bigl(0,\scal(\hat c),-\frac12\sld\scal(\hat c),0,-\frac12\sld\scal(\hat c),0\Bigr)^T \\
    &\quad\hspace{5em} + \Bigl(2\scal(\hat c),-\frac12\scal(\hat c),-\sld\scal(\hat c),0,-2\,\sld\scal(\hat c),4\scal(\hat c)\slg\Bigr)^T, \\
    &\quad\hspace{15em} \Bigl(0,0,0,-\frac12\scal(\fq),\frac12\sld\scal(\fq),\frac12\scal(\fq)\slg\Bigr)^T \Big\ra_{L^2(\Sph^2)} \\
    &\quad = \frac{\bhm}{2}\Bigl(-4\la\scal(\hat c),\scal(\fq)\ra + 3\la\sld\scal(\hat c),\sld\scal(\hat q)\ra + 2\la\scal(\hat c)\slg,\scal(\fq)\slg\ra\Bigr) \\
    &\quad = 4\pi\bhm(\hat c\cdot\fq).
  \end{align*}

  \pfstep{Part~\eqref{ItAhCokerCOMKerr}.} When $\omega^*=\omega_0^*$, then the above calculation shows that $\ell_{b,\rm COM}(\sfG_{\hat g_b}\delta_{\hat g_b}^*\omega^*)=0$ since the leading order term of $\omega_0^*$ is of scalar type $0$ and thus orthogonal to scalar type $1$ tensors on $\pa\hat X_b$. An alternative argument proceeds as follows. Write
  \begin{equation}
  \label{EqAhKCokerwtcL}
    \wt\cL(V) := \cL_V(\cdot);
  \end{equation}
  acting on a fixed tensor, this is a first order differential operator acting on $V$. We have $2 h_{b,j}=\wt\cL(\pa_{\hat x^j})\hat g_b$ and $2\breve h_{*,b,j}=([\wt\cL,\hat t_*](\pa_{\hat x^j})+\wt\cL(\hat x^j\pa_{\hat t_*}))\hat g_b$ (so that $\wt\cL(\hat t_*\pa_{\hat x_j}+\hat x^j\pa_{\hat t_*})=2(\hat t_* h_{b,j}+\breve h_{*,b,j})$). We then compute
  \begin{align*}
    \wt\cL\Bigl(\frac{\hat t_*^2}{2}\pa_{\hat x^j} + \hat t_*\hat x^j\pa_{\hat t_*}\Bigr) &= \frac{\hat t_*^2}{2}\wt\cL(\pa_{\hat x^j}) + \hat t_*[\wt\cL,\hat t_*](\pa_{\hat x^j}) + \frac12\bigl[ [\wt\cL,\hat t_*], \hat t_*\bigr](\pa_{\hat x^j}) \\
      &\quad\qquad + \hat t_*\wt\cL(\hat x^j\pa_{\hat t_*}) + [\wt\cL,\hat t_*](\hat x^j\pa_{\hat t_*});
  \end{align*}
  the third term is a double commutator and thus vanishes. Applying this operator to $\hat g_b$ gives
  \[
    \cL_{\frac{\hat t_*^2}{2}\pa_{\hat x^j}+\hat t_*\hat x^j\pa_{\hat t_*}}\hat g_b = 2\Bigl(\frac{\hat t_*^2}{2}h_{b,j} + \hat t_*\breve h_{b,j}\Bigr) + [\wt\cL,\hat t_*](\hat x^j\pa_{\hat t_*})\hat g_b,
  \]
  and therefore
  \begin{align*}
    &D_{\hat g_b}\Ric\Bigl(\frac{\hat t_*^2}{2}h_{b,j} + \hat t_*\breve h_{b,j}\Bigr) = \wh{D_{\hat g_b}\Ric}(0)\mathring{h}_{b,j}, \\
    &\qquad \mathring{h}_{b,j}:=-\frac12[\wt\cL,\hat t_*](\hat x^j\pa_{\hat t_*})\hat g_b = -\hat g_b(\hat x^j\pa_{\hat t_*},-)\otimes_s\dd\hat t_* \in \rho_\circ^{-1}\CI(\hat X_b;S^2\,\Ttsc^*_{\hat X_b}\hat M_b).
  \end{align*}
  Here, we use that $[\wt\cL,f](V)h=2 h(V,-)\otimes_s\dd f$ when $f$ is a smooth function, $V$ a vector field, and $h$ a symmetric 2-tensor; and since $\mathring{h}_{b,j}$ is stationary, the action of $D_{\hat g_b}\Ric$ on it is given by $\wh{D_{\hat g_b}\Ric}(0)\mathring{h}_{b,j}$ indeed. Thus for $h^*\in\ker\wh{D_{\hat g_b}\Ric}(0)^*$ which near $\pa\hat X_b$ lie in $\cA^{2+\delta}$ where $\delta>0$, integration by parts shows that
  \begin{equation}
  \label{EqAhKCokerInnerRing}
    \big\la \wh{D_{\hat g_b}\Ric}(0)\mathring{h}_{b,j},h^*\big\ra_{L^2(\hat X_b)}
  \end{equation}
  vanishes. This applies in particular to $h^*=\sfG_{\hat g_b}\delta_{\hat g_b}\omega^*$ when $\omega^*=\omega_0^*$ and also when $\omega^*$ has leading order term $\hat r^2\vect(\fq)$ where $\fq\in\R^3$, with the restriction $\fq\in\R\bha$ when $\bha\neq 0$, since in these cases $h^*$ is compactly supported in $\hat X_b^\circ$; thus $\ell_{b,\rm COM}(h^*)=0$.

  When $h^*$ has a $\rho_\circ^2$ leading order term, then the inner product~\eqref{EqAhKCokerInnerRing} can instead be rewritten as a boundary pairing in the usual manner; since the leading order term of $\mathring{h}_{b,j}$ is of scalar type $1$, this pairing only involves the scalar type $1$ part of the $\rho_\circ^2$ leading order term of $h^*$. (When $h^*\in\cK_{b,\rm COM}^*$, one can use this to compute $\pi_{b,\rm COM}\circ\ell_{b,\rm COM}$ again.) For $h^*=\sfG_{\hat g_b}\delta_{\hat g_b}^*\omega^*\in\cK_{b,\rm Kerr}^*$ where $\omega^*$ has leading order term $\hat r^2\vect(\fq)$, we claim that the scalar type $1$ part of $(\rho_\circ^{-2}h^*)_{\pa\hat X_b}$ vanishes. Since $h^*$ now depends linearly on $\fq$, we only need to consider the case $\fq\perp\bha\neq 0$ (the cases $\bha=0$ or $\bha\neq 0$, $\fq\in\R\bha$ having been discussed above). In this case, $\delta_{\hat g_b}^*(\hat r^2\vect(\fq))$ is the Lie derivative of $\hat g_b$ along a rotation vector field along an axis perpendicular to $\bha$, and thus it is a linearized Kerr metric $\hat g_b'(0,\dot\bha)$ for some $\dot\bha$. (See also Lemma~\ref{LemmaFhM1AxisLie} below.) But the $\hat r^{-2}$ part of~\eqref{EqGKLot} is the sum of scalar type $0$ and $2$ and vector type $1$ tensors; its scalar type $1$ part vanishes. Thus,~\eqref{EqAhKCokerInnerRing} vanishes also in this case.

  \pfstep{Part~\eqref{ItAhKCokerNoC}.} Repeating the calculations leading to~\eqref{EqAhCokerCalc3} for $h_{b,\hat c}\in\rho_\circ^2\CI$ in place of $g'_b(\dot b)$, the conclusion is clear when $\omega^*\in\CI$ near infinity (as in this case one can drop the regularizer $\chi_\eps$ in~\eqref{EqAhCokerCalc3}). When $\omega^*$ is an asymptotic rotation, then the limit in~\eqref{EqAhCokerCalc3} is a boundary pairing which evaluates to $0$ since the $\rho_\circ^2$ leading order term of $h_{b,\hat c}$ is of scalar type $1$, whereas the $\rho_\circ^{-1}$ leading order term of $\omega^*$ is of vector type $1$.

  \pfstep{Part~\eqref{ItAhCokerTstar}.} Recall that $h^*$, near $\pa\hat X_b$, is conormal with weight $2$ at $\pa\hat X_b$. If we set $T':=\hat t_*-\hat t'\in\cA^{-1}$, then $T'\hat g'_b(\dot b)\in\cA^0$, and hence we can integrate by parts to conclude that
  \[
    \la D_{\hat g_b}\Ric(T'\hat g'_b(\dot b)),h^*\ra = \big\la \wh{D_{\hat g_b}\Ric}(0)(T'\hat g'_b(\dot b)),h^*\big\ra = \big\la T'\hat g'_b(\dot b), \wh{D_{\hat g_b}\Ric}(0)^*h^* \big\ra = 0.
  \]
  Similarly, setting $\breve h'_{b,\hat c}=T'h_{b,\hat c}+\breve h_{*,b,\hat c}$ and using $\hat t_*=\hat t'+T'$, we have
  \[
    \frac{\hat t_*^2}{2}h_{b,\hat c}+\hat t_*\breve h_{*,b,\hat c} = \frac{\hat t'{}^2}{2}h_{b,\hat c} + \hat t'\breve h'_{b,\hat c} + \breve h'',\qquad
    \breve h'' := \frac{T'{}^2}{2}h_{b,\hat c} + T'\breve h_{*,b,\hat c} \in \cA^0.
  \]
  Thus, $\la D_{\hat g_b}\Ric(\breve h''),h^*\ra=0$ as before. This completes the proof.
\end{proof}

\subsection{Polyhomogeneous solutions of the zero energy problem}
\label{SsAhPhg}

The necessary condition for the solvability of the zero energy problem for the linearized Ricci operator of Lemma~\ref{LemmaAh0Nec} is also sufficient:

\begin{thm}[Polyhomogeneous solutions of the zero energy problem]
\label{ThmAhPhg}
  Let $\omega_0^*,\omega_j^*,\omega_{j k}^*$ be stationary 1-forms on $\hat M_b$ as in Theorem~\usref{ThmAhKCoker}, i.e.\ their supports are disjoint from $\hat r^{-1}(\bhm)\subset\hat X_b$ and $\omega_\mu^*-\omega_\mu,\omega_{j k}^*-\omega_{j k}\in\cA^\delta(\hat X_b;\Ttsc^*_{\hat X_b}\hat M_b)+\cE'(\hat X_b^\circ;T_{\hat X_b^\circ}\hat M_b^\circ)$ in the notation of Definition~\usref{DefAhK}. Let $\cF\subset\C\times\N_0$ be an index set with $\Re\cF>1$. Set
  \[
    \cE=\bigl\{ (z+j-2,l) \colon (z,k)\in\cF,\ j\in\N_0,\ l\leq k+j+1 \bigr\}.
  \]
  Suppose $f\in\cA_\phg^\cF(\hat X_b;S^2\,\Ttsc^*_{\hat X_b}\hat M_b)$ satisfies $\wh{\delta_{\hat g_b}\sfG_{\hat g_b}}(0)f=0$ and
  \begin{equation}
  \label{EqAhPhgCoker}
    \la f, \wh{\sfG_{\hat g_b}\delta_{\hat g_b}^*}(0)\omega^*\ra_{L^2(\hat X_b)} = 0,\qquad \omega^*\in\{\omega_\mu^*,\ \omega_{j k}^*\colon 0\leq \mu\leq 3,\ 1\leq j\neq k\leq 3 \}.
  \end{equation}
  Then there exists a solution $h\in\cA_\phg^\cE(\hat X_b;S^2\,\Ttsc^*_{\hat X_b}\hat M_b)$ of
  \begin{equation}
  \label{EqAhPhg}
    \wh{D_{\hat g_b}\Ric}(0)(h)=f.
  \end{equation}
\end{thm}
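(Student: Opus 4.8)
The plan is to prove Theorem~\ref{ThmAhPhg} in two stages, mirroring the two-step scheme used on $M_\circ$ in Theorem~\ref{ThmAc} (construct a formal solution at $\pa\hat X_b$, then correct it to a true solution), but with the roles of the boundary hypersurfaces swapped: here the "interior" analysis is the elliptic solvability of a gauge-fixed zero energy operator on the asymptotically Euclidean manifold $\hat X_b$, rather than a hyperbolic evolution.

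\textbf{Step 1: Formal solution at $\pa\hat X_b$.} First I would produce $h_\sharp\in\cA_\phg^\cE(\hat X_b;S^2\,\Ttsc^*_{\hat X_b}\hat M_b)$ solving~\eqref{EqAhPhg} to infinite order at $\pa\hat X_b$. Since $\wh{D_{\hat g_b}\Ric}(0)\in\rho_\circ^2\Diffb^2(\hat X_b;S^2\,\Ttsc^*_{\hat X_b}\hat M_b)$ and, by~\eqref{EqGKCpt3sc}, its b-normal operator at $\pa\hat X_b$ agrees with that of $\rho_\circ^2\wh{D_{\ubar g}\Ric}(0)$ (whose indicial family is $N(D\Ric,\lambda)$ from~\S\ref{SRic}, up to the substitution $\lambda\mapsto-\lambda$ of Remark~\ref{RmkRicRho}), I would peel off the expansion of $f$ term by term. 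For each exponent $z$ appearing in $\cF$ — all with $\Re z>1$ — one invokes Theorem~\ref{ThmRicSolv}: the integrability condition $\wh{\delta_{\hat g_b}\sfG_{\hat g_b}}(0)f=0$ forces each homogeneous coefficient of $f$ into $\ker N(\delta\sfG,\cdot)$, and since the exceptional values $z=0,-1$ of Theorem~\ref{ThmRicSolv} do \emph{not} occur here, solvability holds with no further conditions (possibly picking up one extra logarithmic power, which accounts for the "$l\leq k+j+1$" in the definition of $\cE$). Iterating, with the linearized second Bianchi identity (and the fact that $\hat g_b$ solves $\Ric=0$) ensuring the error stays divergence-free to infinite order, and taking an asymptotic sum, yields $h_\sharp$ with $\wh{D_{\hat g_b}\Ric}(0)h_\sharp=f+f_\flat$ where $f_\flat\in\CIdot(\hat X_b)=\rho_\circ^\infty\CI$ and, by the second Bianchi identity, $\wh{\delta_{\hat g_b}\sfG_{\hat g_b}}(0)f_\flat=0$. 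I would also check that $f_\flat$ continues to satisfy the orthogonality conditions~\eqref{EqAhPhgCoker}: this follows because $f-f_\flat=f-\wh{D_{\hat g_b}\Ric}(0)h_\sharp$, $h_\sharp\in\cA^{-1+\delta}$, so integration by parts against $\wh{\sfG_{\hat g_b}\delta_{\hat g_b}^*}(0)\omega^*$ is justified and $\wh{D_{\hat g_b}\Ric}(0)^*\wh{\sfG_{\hat g_b}\delta_{\hat g_b}^*}(0)=0$.

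\textbf{Step 2: Correcting the rapidly decaying error.} It remains to solve $\wh{D_{\hat g_b}\Ric}(0)h_\flat=f_\flat$ with $f_\flat\in\rho_\circ^\infty\CI\cap\ker\wh{\delta_{\hat g_b}\sfG_{\hat g_b}}(0)$ and satisfying~\eqref{EqAhPhgCoker}, obtaining $h_\flat\in\cA_\phg^\cE$ (in fact with a possibly better index set, but $\cE$ suffices). Following the gauge-fixing strategy of~\S\ref{SsELin}, I would pass to the gauge-fixed operator, whose zero energy version is (a conjugate of) the Fredholm index $0$ operator studied in \cite[Theorem~4.3]{HaefnerHintzVasyKerr}; its cokernel, restricted to divergence-free inputs, is spanned by the dual pure gauge tensors $\sfG_{\hat g_b}\delta_{\hat g_b}^*\omega$ with $\omega$ an asymptotic Killing/boost 1-form, and the obstruction to solvability is exactly the $7$-dimensional space of pairings appearing in~\eqref{EqAhPhgCoker} — this is where Lemma~\ref{LemmaAhKdef}, Lemma~\ref{LemmaAhKBoostLot}, and the mode-stability analysis of~\S\ref{SsRicMS} (Propositions~\ref{PropRicMS}--\ref{PropRicK}) enter, ruling out any further cokernel beyond pure gauge. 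Granted the hypotheses~\eqref{EqAhPhgCoker}, $f_\flat$ lies in the range, so one obtains a solution $h_\flat$ of the gauge-fixed equation with the stated polyhomogeneity and decay; one then checks that the gauge 1-form $\eta=\wh{\delta_{\hat g_b}\sfG_{\hat g_b}}(0)h_\flat$ vanishes, using that it solves the decoupled zero energy equation for $\wh{\Box^\cC_{\hat g_b}}(0)$ and the divergence-freeness of $f_\flat$, together with an injectivity statement for that scalar-type operator at zero energy (from the same references, or via Lemma~\ref{LemmaRicLarge}-type arguments at the level of the wave operator on 1-forms). Then $h:=h_\sharp+h_\flat\in\cA_\phg^\cE$ solves~\eqref{EqAhPhg}.

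\textbf{Main obstacle.} The genuinely hard part is Step~2: establishing that the cokernel of the zero energy operator of the gauge-fixed linearized Einstein operator on subextremal Kerr, when restricted to divergence-free sources and the relevant weighted function spaces, is \emph{exactly} the $7$-dimensional span of dual pure gauge tensors — i.e.\ that there are no extra obstructions hiding at the threshold weights. This is precisely the content one extracts from \cite{HaefnerHintzVasyKerr} and \cite{AnderssonHaefnerWhitingMode}, combined with the finite-dimensional mode-by-mode analysis of~\S\ref{SsRicMS}--\ref{SsRicR} to handle the asymptotic model and the matching of the Fredholm cokernel with the explicit approximate Killing 1-forms. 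By contrast, Step~1 is essentially bookkeeping: an iteration built on the single input Theorem~\ref{ThmRicSolv}, which is already available, with the only subtlety being the careful tracking of logarithmic exponents (hence the somewhat lossy but adequate index set $\cE$).
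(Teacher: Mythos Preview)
Your two–step architecture (formal solution at $\pa\hat X_b$, then correction of a Schwartz remainder) matches the paper's Lemmas~\ref{LemmaAhPhgFormal} and~\ref{LemmaAhPhgSchwartz}. However, you have the role of the hypothesis~\eqref{EqAhPhgCoker} inverted. In Step~1 you claim that ``the exceptional values $z=0,-1$ of Theorem~\ref{ThmRicSolv} do \emph{not} occur here''; this is false. Terms in the polyhomogeneous expansion of $f$ are of the form $\rho_\circ^w(\log\rho_\circ)^k=\hat r^{-w}(\ldots)$ with $(w,k)\in\cF$ and $\Re w>1$; matching against Theorem~\ref{ThmRicSolv}'s convention $f\sim r^{z-2}$ gives $z=2-w$, so $\Re z<1$ and the values $z=0,-1$ (corresponding to $w=2,3$) \emph{do} arise. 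It is precisely for these that the orthogonality~\eqref{EqAhPhgCoker} is needed: via the boundary pairing of Lemma~\ref{LemmaBgBdyPair}, the pairings $\la f,\wh{\sfG_{\hat g_b}\delta_{\hat g_b}^*}(0)\omega^*\ra$ reduce at $\pa\hat X_b$ to exactly the conditions~\eqref{EqRicSolvv1Pair} (for $z=-1$, $\omega^*=\omega_{jk}^*$) and~\eqref{EqRicSolvs0Pair}--\eqref{EqRicSolvs1Pair} (for $z=0$, $\omega^*=\omega_\mu^*$). Without using~\eqref{EqAhPhgCoker} in Step~1, the formal solution at $\pa\hat X_b$ cannot be constructed in general.

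Conversely, Step~2 is \emph{easier} than you suggest: for $f_\flat\in\CIdot$, the condition~\eqref{EqAhPhgCoker} is an automatic consequence of $\wh{\delta_{\hat g_b}\sfG_{\hat g_b}}(0)f_\flat=0$ (integrate by parts --- rapid decay of $f_\flat$ justifies this against any $\omega^*$ regardless of growth). The paper's Lemma~\ref{LemmaAhPhgSchwartz} thus assumes only divergence-freeness; the input from \cite{HaefnerHintzVasyKerr,AnderssonHaefnerWhitingMode} (packaged as Theorem~\ref{ThmAhPhgSolv}) gives the cokernel of the gauge-fixed zero energy operator $\hat L_b(0)$ as dual-pure-gauge, and orthogonality to it follows from Schwartz decay plus $\ker\wh{\delta_{\hat g_b}\sfG_{\hat g_b}}(0)$. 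One further point you omit: the gauge-fixed solve only yields $h_\flat\in\cA^\eta$ for some $\eta>0$; to land in $\rho_\circ\CI\subset\cA_\phg^\cE$, the paper puts $h_\flat$ into a modified gauge (Theorem~\ref{ThmAhPhgSolv}\eqref{ItAhPhgSolvGauge}), deduces polyhomogeneity from elliptic b-theory for $\hat L(0)$, and then strips off the non-integer and logarithmic terms in its expansion by adding pure gauge tensors via Theorem~\ref{ThmRicUniq}.
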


In the notation of Theorem~\ref{ThmAhKCoker}, condition~\eqref{EqAhPhgCoker} is equivalent to the requirement that
\begin{equation}
\label{EqAhPhgEquiv}
  \la f,-\ra_{L^2(\hat X_b)} \in (\cK_{b,\rm tot}^*)^*
\end{equation}
be equal to $0$, or equivalently $\la f,-\ra\in(\cK_{b,\rm COM}^*)^*=0$ and $\la f,-\ra_{L^2(\hat X_b)} \in (\cK_{b,\rm Kerr}^*)^*=0$.

\begin{rmk}[Non-uniqueness]
\label{RmkAhPhgNonuniq}
  One can of course add to $h$ any pure gauge solution $\wh{\delta_{\hat g_b}^*}(0)\omega$ without invalidating~\eqref{EqAhPhg}. In our gluing construction, we will only exploit the existence of a 7-dimensional kernel spanned by
  \[
    \hat g'_b(\dot b),\quad \dot b=(\dot\bhm,\dot\bha)\in\R\times\R^3; \qquad
    h_{b,\hat c},\quad \hat c\in\R^3.
  \]
\end{rmk}

The proof of Theorem~\ref{ThmAhPhg} proceeds in two steps: the construction of a formal solution near infinity, and solving away the remaining error.

\begin{lemma}[Step 1: formal solution at infinity]
\label{LemmaAhPhgFormal}
  Under the assumptions and using the notation of Theorem~\usref{ThmAhPhg}, there exists $h\in\cA_\phg^\cE(\hat X_b;S^2\,\Ttsc^*_{\hat X_b}\hat M_b)$ so that
  \[
    \wh{D_{\hat g_b}\Ric}(0)h = f + f_\flat,\qquad f_\flat\in\CIdot(\hat X_b;S^2\,\Ttsc^*_{\hat X_b}\hat M_b).
  \]
  If $f$ depends smoothly on a parameter lying in an open subset of $\R^k$, $k\in\N$,\footnote{That is, each term in its polyhomogeneous expansion at $\pa\hat X_b$ is a smooth function of the product of $\pa\hat X_b$ and the parameter space.} then we can find $h$ depending smoothly on this parameter as well.
\end{lemma}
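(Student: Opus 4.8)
The plan is to construct the formal solution $h$ iteratively, peeling off the leading-order term of the error at $\pa\hat X_b$ at each step. The key is that $\wh{D_{\hat g_b}\Ric}(0) \in \rho_\circ^2\Diffb^2$, and its normal operator at $\pa\hat X_b$ is the same as that of $\rho_\circ^2$ times the zero energy operator $\wh{D_{\ubar g}\Ric}(0)$ of the \emph{Minkowski} metric, with parameter $\lambda$ now a power of $\rho_\circ=\la\hat x\ra^{-1}$ (so $\lambda=-z$ in the notation of Remark~\ref{RmkRicRho}). Concretely, using Lemma~\ref{LemmaBgEb} (applied with $\cC$ the puncture $\{0\}$ inside $\ol{\R^3}$, via the radial compactification), the action of $\wh{D_{\hat g_b}\Ric}(0)$ on a quasi-homogeneous tensor $\rho_\circ^z(\log\rho_\circ)^k \pi_\infty^* q$ modulo one order lower is governed by the indicial family $N(\rho_\circ^{-2}\wh{D_{\ubar g}\Ric}(0),-z)$ from Corollary~\ref{CorMkDiffDRic}. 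Thus the task of solving away the leading term of $f$ is exactly the problem analyzed in Theorem~\ref{ThmRicSolv}, after swapping $r \leftrightarrow \rho_\circ$ and $\lambda \leftrightarrow -\lambda$.

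First I would set up the induction: writing $f \sim \sum_{(z,k)\in\cF} \rho_\circ^z(\log\rho_\circ)^k f_{(z,k)}$, I pick the leading term(s) with $\Re z = \min\Re\cF > 1$. The Minkowski indicial family is invertible for $\lambda = -z$ away from the exceptional integer values $\lambda \in \{-l-1, l\}$ and $\lambda \in \{-1, 0\}$ (Theorem~\ref{ThmRicSolv}); since $\Re z > 1$, i.e.\ $\Re\lambda < -1$, we are safely outside the dangerous values $\lambda = -1, 0$ where the solvability conditions~\eqref{EqRicSolvv1Pair}--\eqref{EqRicSolvs1Pair} would enter. The only exceptional values we might hit are $\lambda = -l-1$ for $l \geq 1$, i.e.\ $z = l+1 \geq 2$, and there Theorem~\ref{ThmRicSolv}\eqref{ItRicSolvl} still produces a solution, at the cost of one extra power of $\log\rho_\circ$ (hence the $l \leq k+j+1$ in the definition of $\cE$). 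The compatibility hypothesis $\wh{\ubar\delta\ubar\sfG}(0) f_{(z,k)} = 0$ needed to invoke Theorem~\ref{ThmRicSolv} follows from $\wh{\delta_{\hat g_b}\sfG_{\hat g_b}}(0) f = 0$ by the same normal-operator/homogeneity argument as in the proof of Lemma~\ref{LemmaAcBdyFormal}: applying $\wh{\delta_{\hat g_b}\sfG_{\hat g_b}}(0) \in \rho_\circ\Diffb^1$ to $f$ and reading off the leading coefficient forces the indicial equation on $f_{(z,k)}$.

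Then I would iterate. Having solved away the leading term via $h^{(0)} = \rho_\circ^{z-2}(\cdots)$ (note the $-2$ shift since $\wh{D_{\hat g_b}\Ric}(0)$ lowers the decay order by $2$, which is why $\cE$ is built by shifting $\cF$ down by $2$), the new error $f' = f - \wh{D_{\hat g_b}\Ric}(0) h^{(0)}$ lies in $\cA_\phg$ with index set whose leading exponent has strictly larger real part, and it still satisfies $\wh{\delta_{\hat g_b}\sfG_{\hat g_b}}(0) f' = 0$ by construction — crucially using that the \emph{exact} Kerr metric solves $\Ric(\hat g_b) = 0$, so the linearized second Bianchi identity $\wh{\delta_{\hat g_b}\sfG_{\hat g_b}}(0) \circ \wh{D_{\hat g_b}\Ric}(0) = 0$ holds. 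Moreover $f'$ still satisfies the cokernel conditions~\eqref{EqAhPhgCoker}: the correction $\wh{D_{\hat g_b}\Ric}(0) h^{(0)}$ is in the range of $\wh{D_{\hat g_b}\Ric}(0)$ acting on $\cA^{z-2}$ with $z - 2 > -1$, so it pairs to zero against the dual pure gauge tensors $\wh{\sfG_{\hat g_b}\delta_{\hat g_b}^*}(0)\omega^*$ by the integration-by-parts argument of Lemma~\ref{LemmaAh0Nec} (the decay $z-2 > -1$ together with the weight $2$ of $\wh{\sfG_{\hat g_b}\delta_{\hat g_b}^*}(0)\omega^*$ — or its compact support — gives total order $> 3$, justifying the pairing). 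This is the one subtle point: I must verify that peeling off $h^{(0)}$ does not reintroduce an obstruction at a later stage, so that at every exceptional exponent where Theorem~\ref{ThmRicSolv} demands an extra pairing condition, that condition is automatically met. Since all later exceptional exponents $z = l+1$ are $\geq 2 > 1$, they correspond to $\lambda = -l-1 \notin \{-1,0\}$ and carry no pairing obstruction at all — so in fact the only role of~\eqref{EqAhPhgCoker} is at the very first step(s), via Remark~\ref{RmkAcWeight}-type homogeneity considerations ruling out $\delta$-type inconsistencies. Taking an asymptotic sum $h \sim \sum_\ell h^{(\ell)}$ (Borel summation across the discrete exponent set, exactly as in Proposition~\ref{PropAcBdy}) produces $h \in \cA_\phg^\cE$ with $\wh{D_{\hat g_b}\Ric}(0) h - f \in \CIdot(\hat X_b)$. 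Finally, for the parameter dependence: the solution operator in Theorem~\ref{ThmRicSolv} is linear and continuous in the data $(f_0,\dots,f_k)$, and the same $\cC^k$-regularity bootstrap used in the proof of Lemma~\ref{LemmaAcBdyFormal} (differencing in the parameter and passing to the limit) upgrades this to smooth dependence on the parameter; since at each step only finitely many pure types are exceptional and all function spaces there are finite-dimensional, one can choose the exceptional-type components of $h^{(\ell)}$ to depend linearly (hence smoothly) on those of $f$, and for the high-$l$ tail one uses Corollary~\ref{CorRicLarge}. The main obstacle — really the only place care is needed — is tracking that the cokernel conditions~\eqref{EqAhPhgCoker} are preserved under the iteration and suffice to get past the first step; everything after that is the same polyhomogeneous bookkeeping as in~\S\ref{SsAcBdy}.
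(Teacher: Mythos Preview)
Your overall strategy is right and matches the paper's: peel off leading terms using Theorem~\ref{ThmRicSolv}, verify Bianchi on the leading part via the normal operator at $\pa\hat X_b$, iterate, and Borel-sum. But your exponent bookkeeping has a shift-by-2 error that causes you to miss the one genuinely nontrivial point.

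You write the leading term as $\rho_\circ^z$ with $\Re z>1$, set $\lambda=-z$, and conclude $\Re\lambda<-1$ avoids the dangerous values $\lambda=-1,0$. This is wrong. The parameter in Theorem~\ref{ThmRicSolv} is the power of $r=\hat r$ appearing in the \emph{solution} $h$, not in $f$. Since $\wh{D_{\hat g_b}\Ric}(0)\in\rho_\circ^2\Diffb^2$, a source $f\sim\rho_\circ^z$ requires $h\sim\rho_\circ^{z-2}=\hat r^{2-z}$; thus the theorem's parameter is $z_{\rm Thm}=2-z$, not $-z$. With $\Re z>1$ you only get $\Re z_{\rm Thm}<1$, so $z_{\rm Thm}=0$ (your $z=2$) and $z_{\rm Thm}=-1$ (your $z=3$) are very much in range. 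At precisely those exponents Theorem~\ref{ThmRicSolv} demands the pairing conditions~\eqref{EqRicSolvs0Pair}--\eqref{EqRicSolvs1Pair} and~\eqref{EqRicSolvv1Pair}. (Similarly your ``exceptional values $z=l+1$'' should be $z=l+3$.)

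So the step you dismissed is the crux: one must show that the cokernel hypothesis~\eqref{EqAhPhgCoker} implies the Minkowski-level conditions~\eqref{EqRicSolvv1Pair}--\eqref{EqRicSolvs1Pair} for the leading term when $z_{\rm Thm}\in\{-1,0\}$. The paper does this by inserting a radial cutoff $\chi_\eps$ and passing to the boundary pairing (Lemma~\ref{LemmaBgBdyPair}): with $f\in\rho_\circ^2\CI+\ldots$ and $\omega^*\in\{\omega_\mu^*,\omega_{jk}^*\}$ one gets a nontrivial boundary term exactly when the leading powers match, i.e.\ $z_{\rm Thm}=0$ against $\omega_\mu^*$ and $z_{\rm Thm}=-1$ against $\omega_{jk}^*$, and the vanishing of that boundary term is precisely~\eqref{EqAhPhgCoker} reduced to the Minkowski indicial level. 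Your argument that~\eqref{EqAhPhgCoker} is preserved under the iteration is correct and is needed (those exponents need not arise at the very first step), but you must actually \emph{use} the preserved condition at $z=2,3$; it does not come for free.
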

\begin{proof}
  Fix $\chi_\circ\in\CI(\hat X_b)$ to be equal to $1$ near $\pa\hat X_b$ and $0$ near $\hat r=\bhm$. Let $(2-z,k)\in\cF$ be such that $\Re(2-z)=\min\Re\cF$ and $(2-z,k+1)\notin\cF$. (Since $\Re(2-z)>1$, we have $\Re z<1$.) Then for some $f_j\in\CI(\pa\hat X_b;S^2\,\Ttsc^*_{\pa\hat X_b}\hat M_b)$, $j=0,\ldots,k$,
  \[
    f - \chi_\circ\sum_{j=0}^k \hat r^{z-2}(\log\hat r)^j f_j(\omega) \in \cA_\phg^{\cF'}(\hat X_b),\qquad \cF':=\cF\setminus\{(2-z,j)\colon j=0,\ldots,k\},
  \]
  where $\hat r=|\hat x|$ and $\omega=\frac{\hat x}{|\hat x|}$. Applying the operator
  \[
    \wh{\delta_{\hat g_b}\sfG_{\hat g_b}}(0)\equiv\wh{\delta_{\hat{\ubar g}}\sfG_{\hat{\ubar g}}}(0)\bmod\la\hat r\ra^{-2}\Diffb^1(\hat X_b;S^2\,\Ttsc^*_{\hat X_b}\hat M_b,\Ttsc^*_{\hat X_b}\hat M_b)
  \]
  to this equation, and identifying $f_j$ with an element of $\CI(\ff;\ubar\upbeta^*S^2 T^*_{(0,0)}\R^4)$ in the notation of Theorem~\ref{ThmRicSolv} (where we recall that $\ff=\Sph^2$ is the front face of $[\R^3;\{0\}]$), we obtain
  \[
    \wh{\ubar\delta\ubar\sfG}(0)\ubar f = 0,\qquad \ubar f = \ubar f(r,\omega) := \sum_{j=0}^k r^{z-2}(\log r)^j f_j(\omega).
  \]

  Consider assumption~\eqref{EqAhPhgCoker}. Let $\chi\in\CI([0,\infty))$ be equal to $0$ on $[0,1]$ and equal to $1$ near $\infty$, and set $\chi_\eps=\chi(\frac{\hat r^{-1}}{\eps})$ for $0<\eps\ll 1$. Then
  \[
    \big\la f,\wh{\sfG_{\hat g_b}\delta_{\hat g_b}^*}(0)\omega^* \big\ra = \lim_{\eps\searrow 0} \big\la \bigl[ \wh{\delta_{\hat g_b}\sfG_{\hat g_b}}(0), \chi_\eps \bigr]f, \omega^* \big\ra.
  \]
  When $z=0$ and $\omega^*=\omega_\mu^*$, or $z=-1$ and $\omega^*=\omega_{j k}^*$, we get a nontrivial boundary pairing, and indeed we may replace $\hat g_b$ by $\hat{\ubar g}$, and $f$ and $\omega^*$ by their respective leading order terms $\sum_{j=0}^k \hat r^{z-2}(\log\hat r)^j f_j$ and $\omega_\mu$ or $\omega_{j k}$. We thus deduce that when $z=-1$, resp.\ $z=0$, then~\eqref{EqRicSolvv1Pair}, resp.\ \eqref{EqRicSolvs0Pair}--\eqref{EqRicSolvs1Pair} holds for $\ubar f$ in place of $f$. We now distinguish four cases:

  \begin{enumerate}
  \item If $z\notin\Z$, then Theorem~\ref{ThmRicSolv} produces $h_0,\ldots,h_k\in\CI(\ff;\ubar\upbeta^*S^2 T^*_{(0,0)}\R^4)$ so that
    \[
      \wh{D_{\ubar g}\Ric}(0)\Biggl(\sum_{j=0}^k r^z(\log r)^j h_j\Biggr) = \ubar f,
    \]
    and therefore
    \begin{equation}
    \label{EqAhPhgFormalStep}
      f' := f - \wh{D_{g_b}\Ric}(0)\Biggl(\chi_\circ\sum_{j=0}^k\hat r^z(\log\hat r)^j h_j\Biggr) \in \cA_\phg^{\cF'}(\hat X_b).
    \end{equation}
    Furthermore, $f'$ satisfies the same assumptions as $f$, with~\eqref{EqAhPhgCoker} following via integration by parts as in~\eqref{EqAhKIBP}.
  \item When $z=0$, then as argued above, Theorem~\ref{ThmRicSolv}\eqref{ItRicSolv0} applies and produces $h_0$, $\ldots$, $h_{k+1}\in\CI(\ff;\ubar\upbeta^*S^2 T^*_{(0,0)}\R^4)$ with $\wh{D_{\ubar g}\Ric}(0)\sum_{j=0}^{k+1}r^z(\log r)^j h_j=\ubar f$; analogously to~\eqref{EqAhPhgFormalStep}, we can use this to solve away the term in the asymptotic expansion of $f$ corresponding to $(2-z,f)\in\cF$.
  \item The case $z=-1$ is analogous to the case $z=0$: we now use Theorem~\ref{ThmRicSolv}\eqref{ItRicSolvm1}.
  \item Finally, when $z\in\Z$, $z\leq -2$, we can solve away the term in the asymptotic expansion of $f$ corresponding to $(2-z,f)\in\cF$ using Theorem~\ref{ThmRicSolv}\eqref{ItRicSolvl}.
  \end{enumerate}

  Repeating this procedure a finite number of times, we can, for any $N$, find $h_{(N)}\in\cA_\phg^\cE(\hat X_b;S^2\,\Ttsc^*_{\hat X_b}\hat M_b)$ so that $h_{(N+1)}-h_{(N)}\in\cA^{N-1}$ and
  \[
    \wh{D_{\hat g_b}\Ric}(0)h_{(N)}=f+f_{\flat,(N)},\qquad
    f_{\flat,(N)}\in(\cA_\phg^\cF\cap\cA^N)(\hat X_b;S^2\,\Ttsc^*_{\hat X_b}\hat M_b).
  \]
  Taking $h$ to be an asymptotic sum $h\sim h_{(1)}+\sum_{N\in\N} (h_{(N+1)}-h_{(N)})$ near $\pa\hat X_b$ finishes the proof for a single $f$.

  When $f=f(q)$ depends smoothly on a parameter $q\in U\subset\R^k$, then the above construction produces $h_{(N)}$ which depend smoothly on $q$ (due to the linear and continuous dependence of the solutions in Theorem~\ref{ThmRicSolv}, see also the analogous arguments around~\eqref{EqAcBdyFormalC1}). Asymptotically summing $h_{(1)}+\sum_{N\in\N}(h_{(N+1)}-h_{(N)})$ on $U\times\hat X_b$ produces the desired $h=h(q)$.
\end{proof}

\begin{lemma}[Step 2: polyhomogeneous solutions for Schwartz forcing]
\label{LemmaAhPhgSchwartz}
  For all $f_\flat\in\CIdot(\hat X_b;S^2\,\Ttsc^*_{\hat X_b}\hat M_b)$ with $\wh{\delta_{\hat g_b}\sfG_{\hat g_b}}(0)f_\flat=0$, there exists $h_\flat\in\rho_\circ\CI(\hat X_b;S^2\,\Ttsc^*_{\hat X_b}\hat M_b)$ (where $\rho_\circ=\la\hat x\ra^{-1}$) solving $\wh{D_{\hat g_b}\Ric}(0)h_\flat=f_\flat$. If $f_\flat$ depends smoothly on a finite-dimensional parameter, then we can find $h_\flat$ depending smoothly on this parameter as well.
\end{lemma}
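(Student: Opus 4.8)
\textbf{Proof strategy for Lemma~\ref{LemmaAhPhgSchwartz}.} The plan is to pass to a gauge-fixed operator which is Fredholm of index $0$ at zero energy, solve the gauge-fixed equation for the Schwartz source $f_\flat$, and then check that the gauge-fixing 1-form produced by the procedure actually vanishes, so that one obtains a genuine solution of $\wh{D_{\hat g_b}\Ric}(0)h_\flat=f_\flat$. Concretely, following the references \cite{HaefnerHintzVasyKerr,AnderssonHaefnerWhitingMode} cited in the text, one works with a gauge-fixed version $\wh{L_b}(0)$ of the linearized Ricci operator at zero frequency—schematically $\wh{L_b}(0)=\wh{D_{\hat g_b}\Ric}(0)+\wh{\delta_{\hat g_b}^*\widetilde C_b}(0)$ for a suitable (zeroth-order-modified, to remove the cokernel of the plain wave-type operator) gauge operator—which by \cite[Theorem~4.3]{HaefnerHintzVasyKerr} can be regarded as a Fredholm operator of index $0$ between appropriate weighted b-Sobolev (or conormal) spaces on $\hat X_b$ with weight just below the natural $\rho_\circ$ threshold. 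Its kernel and cokernel are finite-dimensional and were computed in \cite{HaefnerHintzVasyKerr,AnderssonHaefnerWhitingMode}: the cokernel is spanned by the dual pure gauge tensors $\sfG_{\hat g_b}\delta_{\hat g_b}^*\omega^*$ with $\omega^*$ ranging over the approximate Killing 1-forms of Definition~\ref{DefAhK}, i.e.\ exactly the $7$-dimensional space $\cK_{b,\rm tot}^*$ appearing in Theorem~\ref{ThmAhKCoker}.

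First I would record that $f_\flat\in\CIdot(\hat X_b;S^2\,\Ttsc^*_{\hat X_b}\hat M_b)$ automatically satisfies the cokernel conditions for $\wh{L_b}(0)$: since $\wh{\delta_{\hat g_b}\sfG_{\hat g_b}}(0)f_\flat=0$ by hypothesis and $f_\flat$ is Schwartz, the pairings $\la f_\flat,\sfG_{\hat g_b}\delta_{\hat g_b}^*\omega^*\ra$ can be integrated by parts with no boundary terms at infinity (and $\omega^*$ is supported away from $\hat r=\bhm$), giving $\la f_\flat,\sfG_{\hat g_b}\delta_{\hat g_b}^*\omega^*\ra=\la\wh{\delta_{\hat g_b}\sfG_{\hat g_b}}(0)f_\flat,\omega^*\ra=0$ for all $\omega^*\in\{\omega_\mu,\omega_{j k}\}$; this is the analogue of Lemma~\ref{LemmaAh0Nec}, here with no obstruction since $f_\flat$ decays rapidly. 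Hence $f_\flat$ lies in the range of the Fredholm operator $\wh{L_b}(0)$, and one obtains $h_\flat$ with $\wh{L_b}(0)h_\flat=f_\flat$; elliptic regularity at $\pa\hat X_b$ (the gauge-fixed operator is elliptic, being a tensor wave operator plus lower order, and $f_\flat$ is Schwartz) together with the a priori weight improves $h_\flat$ to $h_\flat\in\rho_\circ\CI(\hat X_b;S^2\,\Ttsc^*_{\hat X_b}\hat M_b)$, i.e.\ it has a full polyhomogeneous (in fact classical $\rho_\circ\CI$) expansion at infinity. Next I would verify that the gauge 1-form $\eta:=\wh{\widetilde C_b}(0)h_\flat$ vanishes: applying $\wh{\delta_{\hat g_b}\sfG_{\hat g_b}}(0)$ to $\wh{L_b}(0)h_\flat=f_\flat$ and using the linearized second Bianchi identity $\wh{\delta_{\hat g_b}\sfG_{\hat g_b}}(0)\circ\wh{D_{\hat g_b}\Ric}(0)=0$ together with $\wh{\delta_{\hat g_b}\sfG_{\hat g_b}}(0)f_\flat=0$ shows that $\eta$ solves a homogeneous equation $\wh{\Box_b^\cC}(0)\eta=0$ (the zero energy operator of a wave operator on 1-forms, modified by the zeroth-order term in $\widetilde C_b$); since $\eta$ inherits polyhomogeneity and decay from $h_\flat$, and since this operator has trivial kernel in the relevant decaying function space—this is part of the mode stability input in \cite{HaefnerHintzVasyKerr,AnderssonHaefnerWhitingMode}—we conclude $\eta=0$, hence $\wh{D_{\hat g_b}\Ric}(0)h_\flat=\wh{L_b}(0)h_\flat=f_\flat$, as desired.

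For the parameter dependence: when $f_\flat=f_\flat(q)$ depends smoothly on $q$ in an open subset of $\R^N$, the Fredholm inverse of $\wh{L_b}(0)$ (a bounded operator onto a complement of the fixed finite-dimensional cokernel, with values in a fixed complement of the finite-dimensional kernel) depends only on the fixed metric $\hat g_b$, so $h_\flat(q)$ is obtained by applying this fixed bounded operator to $f_\flat(q)$ and therefore depends smoothly (indeed linearly-continuously) on $q$; the vanishing-of-$\eta$ argument is unaffected. The main obstacle is not any new analysis but the bookkeeping of which precisely gauge-fixed operator and which function spaces to invoke so that (a) the index is $0$, (b) the cokernel is exactly $\cK_{b,\rm tot}^*$ (so that Schwartz divergence-free sources are automatically in the range), and (c) the gauge 1-form equation has no nontrivial decaying solutions; all three are available in \cite[\S4, \S9]{HaefnerHintzVasyKerr} and \cite[\S7]{AnderssonHaefnerWhitingMode}, and the only real work is translating their (Kerr, Schwarzschild) statements into the present conormal/polyhomogeneous framework at $\pa\hat X_b$, together with the boundary-pairing sign checks that the Schwartz source genuinely annihilates $\cK_{b,\rm tot}^*$.
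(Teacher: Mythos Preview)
Your overall strategy---solve a gauge-fixed zero-energy equation for the Schwartz source, then verify the gauge 1-form vanishes---is precisely what the paper does (this is packaged as Theorem~\ref{ThmAhPhgSolv}\eqref{ItAhPhgSolvRic}), and your arguments for orthogonality to the cokernel and for $\eta=0$ are correct. However, there is a genuine gap at the point where you assert that elliptic regularity gives $h_\flat\in\rho_\circ\CI$ directly. Elliptic b-theory for $\hat L(0)$ near $\pa\hat X_b$ only yields polyhomogeneity with \emph{some} index set $\cE'$ with $\Re\cE'>0$: the indicial roots of the Minkowski model are the positive integers, but the $\rho_\circ$-size Kerr correction to $\hat L(0)$ feeds each term $\rho_\circ^j h_j$ back into the equation at level $\rho_\circ^{j+1}$, and when that forcing has nonzero projection onto the kernel of $N(\rho_\circ^{-2}\hat L(0),j{+}1)$ you acquire a $\rho_\circ^{j+1}\log\rho_\circ$ term. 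Nothing in the Fredholm or elliptic input rules this out.

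The paper closes this gap with an additional pure-gauge correction step that you omit. After obtaining $h\in\cA_\phg^{\cE'}$ (still solving the ungauged equation $\wh{D_{\hat g_b}\Ric}(0)h=f_\flat$), one observes that each quasi-homogeneous piece $\sum_j\hat r^{-z}(\log\hat r)^j h_j$ of $h$ lies in $\ker\wh{D_{\hat{\ubar g}}\Ric}(0)$ (since $f_\flat$ is Schwartz). Theorem~\ref{ThmRicUniq} then guarantees that all the logarithmic pieces ($j\geq 1$) are pure gauge: one can find a polyhomogeneous 1-form $\omega_{(z)}$ so that $h-\wh{\delta_{\hat g_b}^*}(0)\omega_{(z)}$ has only a $\hat r^{-z}$ term (no logarithm) at that level. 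Asymptotically summing these gauge corrections gives the final $h_\flat=h+\wh{\delta_{\hat g_b}^*}(0)\omega\in\rho_\circ\CI$. This appeal to the Minkowski mode analysis of \S\ref{SRic} is the missing ingredient. (A minor aside: the cokernel of $\wh{L_b}(0)$ is not literally the $7$-dimensional $\cK_{b,\rm tot}^*$; but every cokernel element has the form $\sfG_{\hat g_b}\delta_{\hat g_b}^*\omega^*$ with $\omega^*\in\cA^{-1}$ near infinity and vanishing near $\hat r=\bhm$, so your integration-by-parts argument still goes through.)
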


Note that for $f_\flat$ as in this Lemma, condition~\eqref{EqAhPhgCoker} follows from $\wh{\delta_{\hat g_b}\sfG_{\hat g_b}}(0)f_\flat=0$. We will deduce Lemma~\ref{LemmaAhPhgSchwartz} from the existence of a solution $h_\flat$ which is conormal at $\hat X_b$ and decays at \emph{some} positive rate; the latter is a nontrivial statement about the perturbation theory of subextremal Kerr black holes.

\begin{thm}[Solvability on spaces of conormal tensors]
\label{ThmAhPhgSolv}
  Recall $b=(\bhm,\bha)$. There exists $\eta>0$ so that the following statements hold.
  \begin{enumerate}
  \item\label{ItAhPhgSolvRic}{\rm (Linearized Einstein equation.)} If\footnote{That is, $f$ is a stationary symmetric 2-tensor on the complement of the spatial $\bhm$-ball in $\R_{\hat t}\times\R_{\hat x}^3$ whose coefficients with respect to the standard coordinate differentials are rapidly vanishing as $|\hat x|\to\infty$.} $f\in\CIdot(\hat X_b;S^2\,\Ttsc^*_{\hat X_b}\hat M_b)$ satisfies $\delta_{\hat g_b}\sfG_{\hat g_b}f=0$, then there exists a solution $h\in\cA^\eta(\hat X_b;S^2\,\Ttsc^*_{\hat X_b}\hat M_b)$, of\footnote{In other words, $h$ is a stationary solution of $D_{\hat g_b}\Ric(h)=f$.} $\wh{D_{\hat g_b}\Ric}(0)(h)=f$; and $h$ can be taken to depend continuously and linearly on $f$.
  \item\label{ItAhPhgSolvGauge}{\rm (Gauge potential wave operator.)} There exists\footnote{This means that $E^\Ups$, expressed in the frame $\dd\hat t$, $\dd\hat x$, is a $10\times 4$ matrix whose entries are smooth functions of $|\hat x|^{-1}$, $\frac{\hat x}{|\hat x|}$ which are independent of $\hat t$ and vanish at $|\hat x|^{-1}=0$.}
  \[
    E^\Ups\in\rho_\circ\CI\bigl(\hat X_b;\Hom(S^2\,\Ttsc^*_{\hat X_b}\hat M_b,\Ttsc^*_{\hat X_b}\hat M_b)\bigr)
  \]
  so that for all $\theta\in\cA^{1+\eta}(\hat X_b;\Ttsc^*_{\hat X_b}\hat M_b)$ there exists $\omega\in\cA^{-1+\eta'}(\hat X_b;\Ttsc^*_{\hat X_b}\hat M_b)$, $\eta'\in(0,\eta)$, depending linearly and continuously on $\theta$, with
  \[
    (\delta_{\hat g_b}+E^\Ups)\sfG_{\hat g_b}\delta_{\hat g_b}^*\omega=\theta.
  \]
  \end{enumerate}
\end{thm}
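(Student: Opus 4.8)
The plan is to reduce both parts of Theorem~\ref{ThmAhPhgSolv} to the Fredholm index~$0$ theory for a gauge-fixed version of the linearized Ricci curvature operator around Kerr at zero frequency, as developed in \cite{HaefnerHintzVasyKerr} (see in particular \cite[Theorem~4.3]{HaefnerHintzVasyKerr}) and refined, in the form needed for asymptotically flat spaces, in \cite{AnderssonHaefnerWhitingMode}. Concretely, I would introduce the gauge-fixed operator $L_b:=\wh{D_{\hat g_b}\Ric}(0)+\wh{\delta_{\hat g_b}^*(\delta_{\hat g_b}\sfG_{\hat g_b}+E^\Ups\sfG_{\hat g_b})}(0)$ (or a suitable variant thereof, with $E^\Ups$ a zeroth-order term chosen below), which by~\eqref{EqELinDRic} is, up to a zeroth-order perturbation, $\tfrac12$ times the tensor wave operator's zero energy operator, hence elliptic as an element of $\rho_\circ^2\Diffb^2(\hat X_b;S^2\,\Ttsc^*_{\hat X_b}\hat M_b)$. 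The point of the additive term $E^\Ups$ is exactly the one familiar from \cite{HaefnerHintzVasyKerr}: it is a stationary first-order (in fact zeroth-order after the $\rho_\circ$ weight is accounted for) correction to the divergence gauge, engineered so that $L_b$, viewed between appropriate weighted b-Sobolev (or conormal) spaces $\cA^{-1+\eta'}\to\cA^{1+\eta'}$, is invertible; the existence of such an $E^\Ups$, together with the absence of growing/stationary obstructions within the relevant weight window for subextremal Kerr, is the substantive input I would import, and it is the source of the parameter $\eta$.

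For part~\eqref{ItAhPhgSolvGauge} I would then simply observe that the operator $(\delta_{\hat g_b}+E^\Ups)\sfG_{\hat g_b}\delta_{\hat g_b}^*$ is, at zero energy, a second-order b-elliptic operator on $1$-forms (its principal part being $\tfrac12$ the scalar wave operator on each component, by the same computation as in Lemma~\ref{LemmaMk}), and I would choose $E^\Ups$ so that the resulting operator has trivial kernel and cokernel on $\cA^{-1+\eta'}\to\cA^{1+\eta}$; this is again a statement about indicial roots of the scalar Laplacian on $\R^3$ (roots at $\lambda=0,-1$ in the $\rho_\circ$ convention, cf.\ Lemma~\ref{LemmaRicLarge} and the discussion in~\S\ref{SsRicLarge}) perturbed by the Kerr metric and by $E^\Ups$. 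Given invertibility, one solves $(\delta_{\hat g_b}+E^\Ups)\sfG_{\hat g_b}\delta_{\hat g_b}^*\omega=\theta$ for $\theta\in\cA^{1+\eta}$ with $\omega\in\cA^{-1+\eta'}$ depending linearly and continuously on $\theta$. The one subtlety is that one must choose the \emph{same} $E^\Ups$ in both parts, which forces one to verify that a single zeroth-order modification simultaneously makes the gauge-fixed linearized Einstein operator and the gauge-potential wave operator Fredholm-invertible on the stated spaces; this is exactly the setup of \cite[\S\S4, 9]{HaefnerHintzVasyKerr} and \cite{AnderssonHaefnerWhitingMode}, and I would cite it rather than reprove it.

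For part~\eqref{ItAhPhgSolvRic} I would argue as in~\S\ref{SsELin}: given $f\in\CIdot(\hat X_b;S^2\,\Ttsc^*_{\hat X_b}\hat M_b)$ with $\delta_{\hat g_b}\sfG_{\hat g_b}f=0$, solve $L_b h = f$ with $h\in\cA^\eta$ (possible by the invertibility of $L_b$; note $f\in\CIdot\subset\cA^{1+\eta}$ after absorbing the $\rho_\circ^2$ weight of $L_b$, so the solution has weight $\cA^{-1+\eta}$, and in fact $\cA^\eta$ by the improved decay of $f$ together with the indicial analysis, since the worst indicial behavior of $L_b$ at $\pa\hat X_b$ is at exponents $-1$ and $0$ and $f$ vanishes to infinite order). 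The gauge $1$-form $\eta^\Ups:=(\delta_{\hat g_b}+E^\Ups)\sfG_{\hat g_b}h$ then satisfies the decoupled equation $\wh{(\delta_{\hat g_b}+E^\Ups)\sfG_{\hat g_b}\delta_{\hat g_b}^*}(0)\eta^\Ups=\wh{(\delta_{\hat g_b}+E^\Ups)\sfG_{\hat g_b}}(0)f=0$ by the (linearized) second Bianchi identity $\delta_{\hat g_b}\sfG_{\hat g_b}D_{\hat g_b}\Ric=0$ together with the fact that $E^\Ups\sfG_{\hat g_b}f$ contributes a term which is itself controlled — here one must check that the relevant composite lies in the kernel of the gauge-potential operator, which one arranges by the same $E^\Ups$-invertibility as in part~\eqref{ItAhPhgSolvGauge}, so that $\eta^\Ups\in\cA^{-1+\eta'}$ and $\eta^\Ups=0$. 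Hence $\wh{D_{\hat g_b}\Ric}(0)h=f$, as desired; continuous linear dependence on $f$ is inherited from the bounded inverse of $L_b$. The main obstacle is purely bookkeeping: arranging a single zeroth-order term $E^\Ups\in\rho_\circ\CI$ that makes \emph{all three} operators ($L_b$, the gauge-potential operator, and the combination appearing in the decoupled equation for $\eta^\Ups$) simultaneously invertible on the chosen weighted spaces, and correctly tracking the indicial weights so that the asserted decay rates $\eta$, $\eta'$, $1+\eta$ come out consistently; the analytic heavy lifting — mode stability and Fredholm theory for subextremal Kerr at zero energy — is quoted wholesale from \cite{HaefnerHintzVasyKerr,AnderssonHaefnerWhitingMode}.
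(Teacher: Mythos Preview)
Your overall strategy—pass to a gauge-fixed operator, solve, then show the gauge 1-form vanishes—matches the paper's. But there is a genuine gap in your decoupling step for part~\eqref{ItAhPhgSolvRic}, and the paper actually takes a cleaner route that sidesteps the bookkeeping you flag as the main obstacle.

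The gap is in your derivation of the equation for $\eta^\Ups$. The linearized second Bianchi identity is $\delta_{\hat g_b}\sfG_{\hat g_b}\circ D_{\hat g_b}\Ric=0$; it does \emph{not} say $(\delta_{\hat g_b}+E^\Ups)\sfG_{\hat g_b}\circ D_{\hat g_b}\Ric=0$. So applying the \emph{modified} divergence to $L_b h=f$ leaves a residual term $E^\Ups\sfG_{\hat g_b}D_{\hat g_b}\Ric(h)$ on the left, and $E^\Ups\sfG_{\hat g_b}f\neq 0$ on the right (there is no reason for the latter to vanish). Your claimed equation $(\delta_{\hat g_b}+E^\Ups)\sfG_{\hat g_b}\delta_{\hat g_b}^*\eta^\Ups=0$ does not follow. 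If instead you apply the \emph{unmodified} $\delta_{\hat g_b}\sfG_{\hat g_b}$, you get $\wh{\Box_{\hat g_b}}(0)\eta^\Ups=0$—the unmodified wave equation—which has a one-dimensional kernel, so invertibility of the modified gauge-potential operator is irrelevant here. You would then need the separate fact that this kernel element $\omega_b$ has a nontrivial $\hat r^{-1}$ leading term and hence does not lie in $\cA^{1+\eta}$; since $\eta^\Ups$ does, you conclude $\eta^\Ups=0$.

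This is in fact what the paper does, but with a further simplification: for part~\eqref{ItAhPhgSolvRic} it uses the \emph{unmodified} $L_b=2(D_{\hat g_b}\Ric+\delta_{\hat g_b}^*\delta_{\hat g_b}\sfG_{\hat g_b})$, which has a nontrivial cokernel. The crucial input from \cite{AnderssonHaefnerWhitingMode,HaefnerHintzVasyKerr} is that every element of this cokernel has the form $\sfG_{\hat g_b}\delta_{\hat g_b}^*\omega^*$ with $\omega^*\in\cA^{-1}$ vanishing in the black hole interior; integration by parts (justified by the rapid decay of $f$ and the support of $\omega^*$) then gives $\langle f,\sfG_{\hat g_b}\delta_{\hat g_b}^*\omega^*\rangle=\langle\delta_{\hat g_b}\sfG_{\hat g_b}f,\omega^*\rangle=0$. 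So $f$ is automatically orthogonal to the cokernel, one can solve $\wh{L_b}(0)h=f$, and the gauge 1-form $\theta=\delta_{\hat g_b}\sfG_{\hat g_b}h$ satisfies $\wh{\Box_{\hat g_b}}(0)\theta=0$ with $\theta\in\cA^{>1}$, forcing $\theta=0$ as above. The term $E^\Ups$ is used \emph{only} for part~\eqref{ItAhPhgSolvGauge}, where the one-dimensional kernel and cokernel of $\wh{\Box_{\hat g_b}}(0)$ on 1-forms are removed by a rank-one perturbation. Thus there is no need to arrange a single $E^\Ups$ making three operators simultaneously invertible; parts~\eqref{ItAhPhgSolvRic} and~\eqref{ItAhPhgSolvGauge} are decoupled, and your ``main obstacle'' dissolves.
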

\begin{proof}
  Define weighted b-Sobolev spaces $\Hb^{s,\alpha}(\ol{\R^3})=\la\hat x\ra^{-\alpha}\Hb^s(\ol{\R^3})$ on $\ol{\R_{\hat x}^3}$ using the Euclidean volume density $|\dd\hat x|$, and write $\Hbext^{s,\alpha}(\hat X_b)$ for the space of restrictions of elements of $\Hb^{s,\alpha}(\ol{\R^3})$ to $\hat X_b$. We have $\Hbext^{\infty,-\frac32+\alpha}(\hat X_b)\subset\cA^\alpha(\hat X_b)\subset\bigcap_{\alpha'<\alpha}\Hbext^{\infty,-\frac32+\alpha'}(\hat X_b)$.

  Part~\eqref{ItAhPhgSolvGauge} is stated in \cite[Remark~10.14]{HaefnerHintzVasyKerr} in the slowly rotating case $|\frac{\bha}{\bhm}|\ll 1$; using \cite[Theorem~5.1]{AnderssonHaefnerWhitingMode}, the same argument applies in the general subextremal range. In brief, for $E^\Ups=0$, the zero energy operator of the tensor wave operator $2\delta_{\hat g_b}\sfG_{\hat g_b}\delta_{\hat g_b}^*=\Box_{\hat g_b}$ acting on 1-forms is Fredholm of index $0$ as a map
  \begin{equation}
  \label{EqAhPhgHypoBox1}
  \begin{split}
    \wh{\Box_{\hat g_b}}(0) &\colon \bigl\{ \omega\in\Hbext^{s,\ell}(\hat X_b;\Ttsc^*_{\hat X_b}\hat M_b) \colon \wh{\Box_{\hat g_b}}(0)\omega\in\Hbext^{s-1,\ell+2}(\hat X_b;\Ttsc^*_{\hat X_b}\hat M_b) \bigr\} \\
      &\quad\hspace{15em} \to \Hbext^{s-1,\ell+2}(\hat X_b;\Ttsc^*_{\hat X_b}\hat M_b)
  \end{split}
  \end{equation}
  for all sufficiently large $s$ and for $\ell\in(-\frac32,-\frac12)$, with 1-dimensional kernel and cokernel spanned by explicit 1-forms $\omega_b$, $\omega_b^*=\delta(\hat r-\hat r_{\bhm,\bha})\,\dd\hat r$, respectively; see \cite[Theorem~7.1]{HaefnerHintzVasyKerr} for the very slowly rotating case $|\frac{\bha}{\bhm}|\ll 1$ and \cite[Theorem~5.1]{AnderssonHaefnerWhitingMode} for the full subextremal range. Since, by inspection, $\omega_b$ is not a linear combination of the 1-forms dual to time translations and rotations (around the axis of symmetry when $\bha\neq 0$), we have $\omega_b\notin\ker\delta_{\hat g_b}^*$. Therefore, we can select $E_1^\Ups$ so that $\la E_1^\Ups\sfG_{\hat g_b}\delta_{\hat g_b}^*\omega_b,\omega_b^*\ra_{L^2(\hat X_b)}\neq 0$. For the choice $E^\Ups:=c E_1^\Ups$ then, with $c\neq 0$ sufficiently small, the operator $((\delta_{\hat g_b}+E^\Ups)\sfG_{\hat g_b}\delta_{\hat g_b}^*)\ftrans(0)$ is invertible between the spaces in~\eqref{EqAhPhgHypoBox1}. By elliptic b-theory near $\pa\hat X_b$, this operator remains surjective for all $\ell\leq-\frac32$ which avoid a discrete set.

  We now turn to part~\eqref{ItAhPhgSolvRic}. Define the gauge-fixed linearized Einstein operator
  \[
    L_b:=2(D_{\hat g_b}\Ric+\delta_{\hat g_b}^*\delta_{\hat g_b}\sfG_{\hat g_b}),
  \]
  which is equal to $\Box_{\hat g_b}$ modulo lower order terms. By \cite[Theorem~7.5]{AnderssonHaefnerWhitingMode} (see \cite[Proposition~9.1]{HaefnerHintzVasyKerr} and the description \cite[(9.3a)--(9.3c)]{HaefnerHintzVasyKerr} for the slowly rotating case), all elements of the cokernel of $\wh{L_b}(0)$ are of the form $\sfG_{\hat g_b}\delta_{\hat g_b}^*\omega^*$ where $\omega^*$ is stationary, conormal of class $\cA^{-1}$ near $\pa\hat X_b$, and vanishes in the black hole interior (so in particular near $\hat r=\bhm$). Since $f$ is rapidly vanishing, we have
  \[
    \big\la f,\wh{\sfG_{\hat g_b}\delta_{\hat g_b}^*}(0)\omega^*\big\ra = \big\la \wh{\delta_{\hat g_b}\sfG_{\hat g_b}}(0)f,\omega^*\big\ra = 0
  \]
  via integration by parts. Therefore, there exists
  \[
    h\in\bigcap_{s\in\R,\ \ell\in(-\frac32,-\frac12)}\Hbext^{s,\ell}(\hat X_b;S^2\,\Ttsc^*_{\hat X_b}\hat M_b) \subset \bigcap_{\alpha<1}\cA^\alpha(\hat X_b;S^2\,\Ttsc^*_{\hat X_b}\hat M_b)
  \]
  solving $\wh{L_b}(0)h=f$. Applying $\wh{\delta_{\hat g_b}\sfG_{\hat g_b}}(0)$ to this implies
  \[
    \wh{\Box_{\hat g_b}}(0)\theta=0,\qquad \theta := \wh{\delta_{\hat g_b}\sfG_{\hat g_b}}(0)h \in \bigcap_{\alpha<1}\cA^{\alpha+1}(\hat X_b;\Ttsc^*_{\hat X_b}\hat M_b).
  \]
  The description of the kernel of $\wh{\Box_{\hat g_b}}(0)$ in \cite[Theorem~5.1]{AnderssonHaefnerWhitingMode}---concretely, the fact that $\omega_b$, having a nonzero $\hat r^{-1}$ leading order term at $\hat r=\infty$, does not lie in $\cA^{\alpha+1}(\hat X_b;\Ttsc^*_{\hat X_b}\hat M_b)$ for $\alpha>0$---implies that $\theta=0$, and therefore we in fact have $\wh{D_{\hat g_b}\Ric}(0)h=f$, as desired.
\end{proof}

\begin{proof}[Proof of Lemma~\usref{LemmaAhPhgSchwartz}]
  Let $h\in\cA^\eta(\hat X_b;S^2\,\Ttsc^*_{\hat X_b}\hat M_b)$, $\eta>0$, be the solution provided by Theorem~\ref{ThmAhPhgSolv}\eqref{ItAhPhgSolvRic}. We first find $\omega\in\cA^{-1+\eta/2}(\hat X_b;\Ttsc^*_{\hat X_b}\hat M_b)$ to arrange the gauge condition
  \[
    h + \delta_{\hat g_b}^*\omega \in \ker\bigl((\delta_{\hat g_b}+E^\Ups)\sfG_{\hat g_b}\bigr);
  \]
  this is possible by Theorem~\ref{ThmAhPhgSolv}\eqref{ItAhPhgSolvGauge}. Replacing $h$ by $h+\delta_{\hat g_b}^*\omega$, we therefore have
  \[
    \wh{D_{\hat g_b}\Ric}(0)h=f_\flat,\qquad
    (\delta_{\hat g_b}+E^\Ups)\sfG_{\hat g_b}h=0,
  \]
  and therefore
  \[
    \hat L(0)h=0,\qquad L := 2\bigl(D_{\hat g_b}\Ric + \delta_{\hat g_b}^*(\delta_{\hat g_b}+E^\Ups)\sfG_{\hat g_b}\bigr).
  \]
  But $L\in\rho_\circ^2\Difftb^2(\hat M_b;S^2\,\Ttsc^*\hat M_b)$ is equal to $\Box_{\hat g_b}$ modulo $\rho_\circ^3\Difftb^2$, and therefore $\hat L(0)\in\rho_\circ^2\Diffb^2(\hat X_b;S^2\,\Ttsc^*_{\hat X_b}\hat M_b)$ is elliptic near $\pa\hat X_b$ as a b-differential operator. By standard elliptic b-theory (locally near $\pa\hat X_b$), $h$ is necessarily polyhomogeneous with some index set $\cE'\subset\C\times\N_0$ satisfying $\Re\cE'>0$.

  We now show how to add to $h\in\cA_\phg^{\cE'}(\hat X_b;S^2\,\Ttsc^*_{\hat X_b}\hat M_b)$ a further pure gauge solution $\delta_{\hat g_b}^*\omega$, with $\omega=o(|\hat x|)$ stationary and polyhomogeneous, so as to obtain the desired solution $h_\flat=h+\delta_{\hat g_b}^*\omega\in\rho_\circ\CI(\hat X_b;S^2\,\Ttsc^*_{\hat X_b}\hat M_b)$. To this end, consider $(z,k)\in\cE'$ with $\Re z=\min\Re\cE'$ and $(z,k+1)\notin\cE'$; then the term
  \[
    \sum_{j=0}^k \hat r^{-z}(\log\hat r)^j h_j
  \]
  in the polyhomogeneous expansion of $h$ at $\pa\hat X_b=\{\hat r^{-1}=0\}$ lies in $\ker\wh{D_{\hat{\ubar g}}\Ric}(0)$. Theorem~\ref{ThmRicUniq} thus produces a stationary 1-form
  \[
    \omega_{(z)}=\chi_\circ\sum_{j=0}^{k+k'}\hat r^{-z}(\log\hat r)^j \omega_j,\qquad k'\in\{0,1\},\quad \omega_j\in\CI(\pa\hat X_b;\Ttsc^*_{\pa\hat X_b}\hat M_b),
  \]
  so that the only term in the polyhomogeneous expansion of $h-\delta_{\hat g_b}^*\omega_{(z)}$ involving $\hat r^{-z}$ is $\hat r^{-z}h'$ where, in fact, $h'=0$ when $z\notin\Z$. An asymptotic summation argument as in the proof of Lemma~\ref{LemmaAhPhgFormal} produces the desired $\omega$.
\end{proof}

\begin{proof}[Proof of Theorem~\usref{ThmAhPhg}]
  Apply Lemma~\ref{LemmaAhPhgFormal} to $f$ to get a formal solution $h_\sharp$. Then apply Lemma~\ref{LemmaAhPhgSchwartz} to $-f_\flat$ to produce $h_\flat$. The desired solution is $h=h_\sharp+h_\flat$.
\end{proof}

\subsection{Necessary conditions for gluing: II. A result of D'Eath revisited}
\label{SsAhNec}

In this supplementary section, we show how the results of~\S\ref{SsAh0} imply additional necessary conditions, beyond those of~\S\ref{SsAcGW}, for a total spacetime family with Kerr models along $\cC$ to solve the Einstein vacuum equations.

\begin{prop}[Necessary conditions on $\cC$ and the Kerr parameters]
\label{PropAhNec}
  Suppose that $\wt g$ is a $(\hat\cE,\cE)$-smooth total family (relative to $(M,g,\cC)$) with $\Re\hat\cE>1$, and the $\hat M_p$-model of $\wt g$ is a Kerr metric with parameters $b=(\bhm,\bha)\in\CI(I;\R\times\R^3)$ so that $|\bha(p)|<\bhm(p)$ for all $p\in\cC$, where $\bha$ is defined relative to a fixed choice of Fermi normal coordinates along $\cC$. If $\Ric(\wt g)-\Lambda\wt g$ vanishes to leading and subleading order at $\hat M$ (i.e.\ has weight $>-1$ at $\hat M$ as a conormal section of $S^2\wt T^*\wt M$), then
  \[
    \text{$\cC$ is a geodesic, $\bhm$ is constant, and $\bha$ is parallel along $\cC$ (i.e.\ constant).}
  \]
\end{prop}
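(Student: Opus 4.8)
The plan is to read off necessary conditions on $\cC$ and $b$ from the condition that the leading and subleading error terms of $\Ric(\wt g)-\Lambda\wt g$ vanish at $\hat M$, using the cokernel obstructions computed in Theorem~\usref{ThmAhKCoker}. Fix $p=c(t_0)\in\cC$, pass to Fermi normal coordinates $(t,x)$ along $\cC$ and the fast coordinates $\hat t=\frac{t-t_0}{\eps}$, $\hat x=\frac{x}{\eps}$ near $\hat M_p$; recall from Corollary~\usref{CorGLCurvature} that $\sfe((\eps^2\Ric(\wt g))|_{\hat M})=\Ric(\hat g)$ and that, since the $\hat M_p$-model $\hat g_p$ is the Kerr metric $\hat g_{b(p)}$, we automatically have $\Ric(\hat g_p)=0$, so the error vanishes to leading order at $\hat M$ and $\Err:=\Ric(\wt g)-\Lambda\wt g$ has weight $\geq -1$ at $\hat M$. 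The hypothesis is that actually $\Err$ has weight $>-1$, i.e.\ the $\rho_\circ^{-1}$ (equivalently $\eps^{-1}$ in the fast picture) coefficient of $\eps^2\sfe\Err$ at $\hat M$ vanishes.

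First I would use Proposition~\usref{PropELTAcc}\eqref{ItELTAccMhat}, together with the Taylor expansion of the adiabatic family $\wt g$ in $\eps$ around $t_0$ as described in~\S\usref{SssIPfN}, to identify this $\eps^{-1}$ coefficient. Writing $\sfe\wt g = \hat g_{b(t_0)} + \eps\,\hat t\,\pa_t(\text{model}) + (\text{lower order at }\hat M) + \cO(\eps^2)$ and expanding $b(t)=b(t_0)+\eps\hat t\,b'(t_0)+\dots$, the first variation in $t$ of the Kerr model is, modulo pure gauge (Lie derivatives along the $\eps$-correction of $\cC$ inside $\wt M$ and along changes of Fermi frame as in Lemma~\usref{LemmaGLFermi}), a combination of: $\hat t\,\hat g'_{b(t_0)}(b'(t_0))$ coming from the variation of the Kerr parameters; and $\hat t$ times the deformation tensor $h_{b,\hat c}$ of a spatial translation corresponding to the acceleration of $\cC$, i.e.\ $\hat c \leftrightarrow \nabla_{\pa_t}\pa_t|_{t_0}$ (this is exactly the mechanism behind Corollary~\usref{CorAcGWEOM}, now read at $\hat M$ rather than $M_\circ$). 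Thus the $\eps^{-1}$ coefficient of $\eps^2\sfe\Err$ at $\hat M_p$ is, up to terms in the range of $\wh{D_{\hat g_b}\Ric}(0)$ and pure gauge, equal to $D_{\hat g_b}\Ric\bigl(\hat t\,\hat g'_b(b'(t_0))\bigr) + D_{\hat g_b}\Ric\bigl(\hat t\,h_{b,\hat c}\bigr)$ plus possibly the dual-pure-gauge-irrelevant pieces.

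The decisive step is then to pair this coefficient against the dual pure gauge tensors $h^*\in\cK^*_{b,\rm tot}$ from Theorem~\usref{ThmAhKCoker}. By Lemma~\usref{LemmaAh0Nec} (applied with the weights $\hat\cE$, $\cE$ as in the statement, noting $\Re\hat\cE>1$ guarantees the relevant integrations by parts are legitimate), the vanishing of the $\eps^{-1}$ error coefficient forces all these pairings to vanish. Now Theorem~\usref{ThmAhKCoker}\eqref{ItAhKCokerNoC} tells us $\la D_{\hat g_b}\Ric(\hat t\,h_{b,\hat c}),h^*\ra=0$ identically, so the translation/acceleration term contributes nothing to these pairings — which means the pairings see \emph{only} the parameter-variation term $D_{\hat g_b}\Ric(\hat t\,\hat g'_b(\dot b))$ with $\dot b=b'(t_0)$. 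But by parts~\eqref{ItAhKCokerKerrPK} and \eqref{ItAhKCokerKerrPC}, the map $\dot b\mapsto \pi_{b,\rm Kerr}\circ\ell_{b,\rm Kerr}(\dot b)$ is an isomorphism $\R^4\to(\cK^*_{b,\rm Kerr})^*$; hence the vanishing of these pairings forces $b'(t_0)=0$, i.e.\ $\bhm$ and $\bha$ are constant (equivalently, $\bha$ is parallel along $\cC$ since Fermi transport is parallel transport along a geodesic — though geodesy of $\cC$ still has to be extracted). To get that $\cC$ is a geodesic, I would instead invoke the analogue of the $M_\circ$ computation: the obstruction in~\S\usref{SsAcGW} (Corollary~\usref{CorAcGWEOM}) already shows that if $\Err$ has weight $>-1$ at $M_\circ$ then $\nabla_{\pa_t}\pa_t=0$ and $\bhm'=0$; and the hypothesis here, weight $>-1$ at $\hat M$, combined with the matching/consistency of $\wt g$ at the corner $\pa M_\circ=\pa\hat M$, propagates to give weight $>-1$ at $M_\circ$ as well, so the geodesic hypothesis follows from~\S\usref{SsAcGW}. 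Alternatively, and perhaps more cleanly, one keeps track of the full $\eps^{-1}$ coefficient: the acceleration enters through the $h_{b,\hat c}$ term, and although it is invisible to $\cK^*_{b,\rm tot}$ by part~\eqref{ItAhKCokerNoC}, it is \emph{not} in the range of $\wh{D_{\hat g_b}\Ric}(0)$ on the relevant conormal space unless $\hat c=0$ — this is the content needed from the $\rms 1$, $z=-2$ analysis (cf.\ Proposition~\usref{PropRicMS}\eqref{ItRicMSs1} and the identification of $h_{b,\hat c}$ there with $\frac12\cL_{\hat c\cdot\pa_{\hat x}}\hat g_{b_0}$ via~\eqref{EqAhKBoostLot}), forcing $\hat c=\nabla_{\pa_t}\pa_t=0$.

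I expect the main obstacle to be the bookkeeping in the second paragraph: isolating, among all the $\cO(\eps)$ corrections to $\wt g$ (genuine dynamical corrections, Fermi-frame ambiguities, the $\eps$-reparametrization of $\cC$ inside $\wt M$, and quadratic self-interactions), precisely the pure-gauge-modulo pieces, so that the $\eps^{-1}$ error coefficient at $\hat M$ is unambiguously $D_{\hat g_b}\Ric(\hat t\,\hat g'_b(b'(t_0)) + \hat t\,h_{b,\nabla_{\pa_t}\pa_t})$ modulo range. Once that normal form is in hand, the conclusion is essentially immediate from Theorem~\usref{ThmAhKCoker}. The argument is the exact ``front-face'' counterpart of the ``far-field'' computation in~\S\usref{SsAcGW}, and indeed recovers, in our geometric-microlocal framework, the classical claims of D'Eath \cite{DEathSmallBHDynamics} that the small black hole moves on a geodesic with rigidly transported spin and fixed mass; see also Proposition~\usref{PropINec}.
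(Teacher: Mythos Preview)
Your identification of the subleading ($\eps^{-1}$) error at $\hat M$ is incorrect, and this causes the geodesic argument to collapse. Concretely: in the paper's setup (see~\eqref{EqAhNecMetric}--\eqref{EqAhNechm1}), one has
\[
  \sfe\wt g(\eps,t,\hat x)=\hat g_{b(t)}(\hat x)+\eps\,\hat h(t,\hat x)+\cO(\eps^{1+\delta}),
\]
where $\hat h\in\cA_\phg^{(\N_0\cup\cE)-1}$ is \emph{growing} at infinity: its leading term is $\hat r\,\hat h_{(-1)}$ with $\hat h_{(-1)}=-2\sum_j\Gamma_{j 0 0}(t,0)\tfrac{\hat x^j}{|\hat x|}\,\dd\hat t^2$, coming from the linear-in-$x$ term of $g$ in Fermi normal coordinates (Lemma~\ref{LemmaGLFermi}). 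This is \emph{not} $\hat t\,h_{b,\hat c}$ for any $\hat c$; note $h_{b,\hat c}\in\rho_\circ^2\CI$ \emph{decays}. The actual subleading equation at $\hat M_{c(t_0)}$ is~\eqref{EqAhNecLinRic}, namely $[D_{\hat g_b}\Ric,\hat t]\hat g'_b(\dot b)+\wh{D_{\hat g_b}\Ric}(0)\hat h=0$.

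Your invocation of part~\eqref{ItAhKCokerNoC} is then self-defeating: that item says $D_{\hat g_b}\Ric(\hat t\,h_{b,\hat c})$ pairs to zero with \emph{every} $h^*\in\cK^*_{b,\rm tot}$, including the COM elements---so if the acceleration really entered this way, the cokernel pairings could never detect it. What the paper does instead is pair~\eqref{EqAhNecLinRic} against $h^*=\sfG_{\hat g_b}\delta_{\hat g_b}^*(\psi\,\omega_{b,\hat c})\in\rho_\circ^2\CI$; the $\hat g'_b(\dot b)$-piece vanishes by~\eqref{ItAhKCokerKerrPC}, but the $\hat h$-piece does \emph{not} integrate by parts cleanly (the weights are borderline: $(-1)+2+2=3$), and the resulting boundary pairing involving $\hat h_{(-1)}$ evaluates to $16\pi\bhm(\fq\cdot\hat c)$ with $\fq=-\tfrac12(\Gamma_{j 0 0})_j$. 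This forces $\fq=0$, i.e.\ $\cC$ is a geodesic. Only then does $\hat h_{(-1)}=0$ give $\hat h\in\cA^{-1+\delta}$, at which point the pairings with the mass and rotation elements of $\cK^*_{b,\rm Kerr}$ can be integrated by parts (now $(-1+\delta)+2+2>3$), isolating the $\hat g'_b(\dot b)$ contribution and yielding $\dot\bhm=0$, $\dot\bha=0$.

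Your two fallback routes for the geodesic condition also fail: the claim that weight $>-1$ at $\hat M$ propagates to $M_\circ$ via corner matching is unfounded (the weights at the two faces are independent), so~\S\ref{SsAcGW} cannot be invoked; and $D_{\hat g_b}\Ric(\hat t\,h_{b,\hat c})$ \emph{is} in the range of $\wh{D_{\hat g_b}\Ric}(0)$ on $\cA^{-1+\delta}$ (it lies in $\rho_\circ^2\CI\cap\ker\wh{\delta_{\hat g_b}\sfG_{\hat g_b}}(0)$ and is orthogonal to $\cK^*_{b,\rm tot}$ by~\eqref{ItAhKCokerNoC}, so Theorem~\ref{ThmAhPhg} applies), contradicting your ``not in range unless $\hat c=0$'' claim.
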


This explains why in Lemma~\ref{LemmaGKNaive} we do not need to consider maps of the form $\Phi_p(\hat t,\hat x)=(\hat t,A(p)\hat x)$ where $A(p)\in O(3)$ is not the identity. The essence of the statement of Proposition~\ref{PropAhNec} appears already in work of D'Eath \cite{DEathSmallBHDynamics}.

\begin{proof}[Proof of Proposition~\usref{PropAhNec}]
  Let $\delta>0$ be such that $\min\Re\hat\cE>1+\delta$ and $\min\Re\cE>\delta$. We denote the Fermi normal coordinates around $\cC$ by $(t,x)$, and denote by $I\ni t\mapsto c(t)\in\cC$ an arc-length parameterization of $\cC$. Regarding a neighborhood of $\hat M\subset\wt M$ as a subset of the lift of $\eps=0$ in $[[0,1)\times\hat M;\{0\}\times\pa\hat M]$ via Lemma~\ref{LemmaGRel}, and using the fiber coordinates $\hat t=\dd t(-)$ and $\hat x=\dd x(-)$ on $T_\cC M$, we thus have
  \begin{equation}
  \label{EqAhNecMetric}
    (\sfe\wt g)(\eps,t,\hat x) = \hat g_{b(t)}(\hat x) + \eps\hat h(t,\hat x) + \cO(\eps^{1+\delta})
  \end{equation}
  for bounded $\hat x$, where we regard $\hat g_{b(t)}$ and $\hat h$ as symmetric 2-tensors in $\dd\hat t$, $\dd\hat x$, and the $\cO(\eps^{1+\delta})$ error term is such a symmetric 2-tensor whose coefficients are conormal on $\wt M$ with weight $1+\delta$ at $\hat M$ (and with weight $0$ at $M_\circ$); here, writing $\rho_\circ=\la\hat x\ra^{-1}$, the subleading term is
  \begin{equation}
  \label{EqAhNechat}
    \hat h\in \cA_{\phg,\rm I}^{(\N_0\cup\cE)-1}(\breve T_\cC M;S^2\,{}^{\tscop,\vee}T^*\breve T_\cC M)
  \end{equation}
  in the notation of Lemma~\ref{LemmaGffStatExt}, i.e.\ its coefficients lie in $\CI(I;\cA_{\phg,\rm I}^{(\N_0\cup\cE)-1}(\ol{\R^3_{\hat x}}))$. Moreover, since $\frac{\rho_\circ}{\eps}=\hat\rho^{-1}$ on $\wt M$, with $\hat\rho|_{M_\circ}=r$, we have
  \begin{equation}
  \label{EqAhNechm1}
    \hat h_{(-1)} := (\rho_\circ\hat h)|_{\pa\hat M}=\sfe(\hat\rho^{-1}(\wt g-\hat g_{b(t)}))|_{\pa M_\circ} = -2\sum_{j=1}^3\Gamma_{j 0 0}(t,0)\frac{\hat x^j}{|\hat x|}\,\dd\hat t^2
  \end{equation}
  by Lemma~\ref{LemmaGLFermi}. Therefore, $\hat h_{(-1)}=0$ if and only if $\cC$ is a geodesic.
  
  We now expand~\eqref{EqAhNecMetric} around $t=t_0$, and identify $\hat t=\frac{t-t_0}{\eps}$ at $T_{c(t_0)}M$; this gives
  \begin{equation}
  \label{EqAhNecMetric2}
    \hat g_b(\hat x) + \eps\bigl( \hat t\hat g_b'(\dot b) + \hat h(\hat x) \bigr) + \cO(\eps^2),\qquad b=b(t_0),\ \dot b=b'(t_0),\ \hat h(\hat x)=\hat h(t_0,\hat x).
  \end{equation}
  
  Since the cosmological constant $\Lambda$ only gives a $\cO(\eps^2)$ contribution to $\sfe(\eps^2(\Ric(\wt g)-\Lambda\wt g))=\Ric(\hat g)+\cO(\eps)$ at $\hat M$, the validity of the Einstein vacuum equations $\Ric(\wt g)-\Lambda\wt g$ also to subleading order at $\hat M$ is equivalent to~\eqref{EqAhNecMetric2} being Ricci-flat modulo $\cO(\eps^{1+\delta})$ errors, and thus to the validity of the equation
  \begin{equation}
  \label{EqAhNecLinRic}
    D_{\hat g_b}\Ric\bigl(\hat t\hat g'_b(\dot b)+\hat h\bigr) = \bigl[D_{\hat g_b}\Ric,\hat t\bigr]\hat g'_b(\dot b) + \wh{D_{\hat g_b}\Ric}(0)\hat h = 0
  \end{equation}
  for $\hat h(t_0,\hat x)$ with~\eqref{EqAhNechat}--\eqref{EqAhNechm1}. Since $\hat x^j\,\dd\hat t^2\in\ker D_{\hat{\ubar g}}\Ric$, we have $\wh{D_{\hat g_b}\Ric}(0)(\hat h)\in\cA^{1+\delta}$.
  
  Let $\psi\in\CI(\hat X_b)$ be equal to $1$ near $\pa\hat X_b$ and equal to $0$ near $\hat r^{-1}(\bhm)$. For $\hat c\in\R^3$, set $\omega^*=\psi\omega_{b,\hat c}$ in the notation of Definition~\ref{DefAhKBoosts}. We then integrate~\eqref{EqAhNecLinRic} against the tensor
  \[
    h^*=\sfG_{\hat g_b}\delta_{\hat g_b}^*\omega^*\in\rho_\circ^2\CI;
  \]
  the leading order term of $h^*$ at $\pa\hat X_b$ is $\hat r^{-2}h_{(2)}$, where $h_{(2)}$ is given by~\eqref{EqAhKCokerCOMh2}. The contribution of $\hat g'_b(\dot b)$ vanishes by Theorem~\ref{ThmAhKCoker}\eqref{ItAhKCokerKerrPC}. Furthermore, since for $\hat h'\in\cA^{-1+\delta}$, $\delta>0$, we can integrate in parts to obtain $\la\wh{D_{\hat g_b}\Ric}(0)\hat h',h^*\ra=0$, only the leading order term $\hat r\hat h_{(-1)}$ of $\hat h$ enters in the calculation of $\la\wh{D_{\hat g_b}\Ric}(0)\hat h,h^*\ra$. In the splitting~\eqref{EqMkBundleSplit}, we have
  \[
    \hat h_{(-1)}=(\scal(\fq),\scal(\fq),0,\scal(\fq),0,0)^T
  \]
  where $\fq:=-\half(\Gamma_{j 0 0}(t_0,0))_{j=1,2,3}$. Therefore, inserting a localizer $\chi_\eps=\chi(\frac{\rho_\circ}{\eps})$, where $\chi\in\CI([0,\infty))$ vanishes near $0$ and is equal to $1$ on $[1,\infty)$, we have, using Corollary~\ref{CorMkDiffDRic}, formula~\eqref{EqMkMinkInner}, equation~\eqref{EqAhKCokerCOMh2}, and $h^*\in\ker\wh{D_{\hat g_b}\Ric}(0)^*$,
  \begin{align*}
    \big\la\wh{D_{\hat g_b}\Ric}(0)\hat h,h^*\big\ra &= -\lim_{\eps\searrow 0} \big\la \bigl[\wh{D_{\hat g_b}\Ric}(0),\chi_\eps\bigr]\hat h,h^*\big\ra \\
      &= -\big\la\pa_\lambda N\bigl(\rho_\circ^{-2}\wh{D_{\hat{\ubar g}}\Ric}(0),-1\bigr)\hat h_{(-1)},h_{(2)}\big\ra_{L^2(\Sph^2)} \\
      &= -\frac{\bhm}{2}\big\la \bigl(\scal(\fq),2\scal(\fq),-\sld\scal(\fq),\scal(\fq),\sld\scal(\fq),-2\scal(\fq)\slg\bigr)^T, \\
      &\quad\hspace{8em} \bigl(-\scal(\hat c),0,\sld\scal(\hat c),-\scal(\hat c),-\sld\scal(\hat c),0\bigr)^T \big\ra \\
      &= 16\pi\bhm(\fq\cdot\hat c).
  \end{align*}
  We conclude that equation~\eqref{EqAhNecLinRic} can hold only if this pairing vanishes for all $\hat c$, which forces $\fq=0$ and thus $\Gamma_{j 0 0}(t_0,0)=0$. Since $t_0\in I$ was arbitrary, we conclude that $\hat h_{(-1)}=0$, and thus $\cC$ must be a geodesic; this re-proves, from the perspective of $\hat M$, what we had demonstrated already in~\S\ref{SsAcGW} following \cite{GrallaWaldSelfForce}.
  
  Next, from equation~\eqref{EqAhNecLinRic}, where we now have $\hat h\in\cA^{-1+\delta}$ for $0<\delta<\min\Re\cE$, we can extract information also about the derivative
  \[
    \dot b=(\dot\bhm,\dot\bha)
  \]
  of the Kerr parameters along $\cC$. First, we integrate against $\sfG_{\hat g_b}\delta_{\hat g_b}^*\omega^*$ with $\omega^*=\psi\,\dd t$; the contribution from $\hat h$ vanishes upon integrating by parts, whereas by Theorem~\ref{ThmAhKCoker}, the term involving $\hat g'_b(\dot b)$ contributes $-8\pi\dot\bhm$. Therefore, $\dot\bhm=0$, and we conclude that the black hole mass must be constant along $\cC$; we had previously deduced this from the perspective of $M_\circ$ using Corollary~\ref{CorAcGWEOM}.
  
  Finally, we constrain $\dot\bha$ by integrating~\eqref{EqAhNecLinRic} against $h^*(\fq)=\sfG_{\hat g_b}\delta_{\hat g_b}^*\omega^*\in\rho_\circ^2\CI$ where $\omega^*(\fq)=\psi\hat r^2\vect(\fq)$ in the notation of~\eqref{EqAhKVect1form}. By Theorem~\ref{ThmAhKCoker}, the contribution from $\hat g_b'(\dot b)$ is $-8\pi\bhm(\fq\cdot\dot\bha)$. On the other hand, we can integrate by parts to conclude that $\la\wh{D_{\hat g_b}\Ric}(0)\hat h,h^*(\fq)\ra=0$ (see Lemma~\ref{LemmaAh0Nec}). Therefore, we must have $\fq\cdot\dot\bha=0$ for all $\fq\in\R^3$, and thus $\dot\bha=0$, as claimed.
\end{proof}

\section{Construction of the formal solution \texorpdfstring{at $\eps=0$}{near vanishing gluing parameter}}
\label{SF}

In this section, we complete step~\eqref{ItMFormalI} of the proof of Theorem~\ref{ThmM}. We use the notation of the Theorem as introduced in Definition~\ref{DefMData} and the constructions following it. In particular, $(M,g)$ is globally hyperbolic, the metric $g$ solves the Einstein vacuum equations
\begin{equation}
\label{EqFMcEinstein}
  \Ric(g)=\Lambda g,
\end{equation}
the curve $\cC\subset M$ is a timelike geodesic arc-length parameterized by $c\colon I\subset\R\to M$, and we fix subextremal black hole parameters
\[
  b=(\bhm,\bha),\qquad \bhm>0.
\]
Without loss of generality (due to the $O(3)$ freedom in Lemma~\ref{LemmaGLFermi}), we may assume that
\begin{equation}
\label{EqFAngMomNorm}
  \bha=a e_3,\qquad a:=|\bha|,
\end{equation}
where $e_3\in\R^3$ is the third standard basis vector. We fix Fermi normal coordinates $(t,x)$ around $\cC$, and write $r=|x|$. We moreover set $\hat t=\dd t$, $\hat x=\dd x$ on $T_\cC M$, which we moreover identify with the coordinates $\hat t=\frac{t-t_0}{\eps}$ and $\hat x=\frac{x}{\eps}$ on the front face of $[\wt M;\hat M_p]$. Fix cutoff functions
\begin{equation}
\label{EqFCutoffs}
  \hat\chi,\chi_\circ\in\CI(\wt M)
\end{equation}
to collar neighborhoods of $\hat M,M_\circ$, respectively, as in~\eqref{EqGRelCutoffs}; we demand that $\supp\hat\chi$ is contained in the domain of influence of a compact subset of $\cU^\circ$.

With $\wt K$ defined near $\hat M$ by $\wt K=\{(\eps,t,x)\colon|\frac{x}{\eps}|\leq\bhm\}$ (cf.\ item~\eqref{ItMFamily} in~\S\ref{SM}), recall the $(\emptyset,\N_0+1)$-smooth total family
\[
  \wt g_0 \in \wt\upbeta^*\CI(M;S^2 T^*M) + \cA_\phg^{\N_0,\N_0+1}(\wt M\setminus\wt K^\circ;S^2\wt T^*\wt M)
\]
defined by Lemma~\ref{LemmaGKNaive}\eqref{ItGKNaiveGeod}. Concretely, we define $\wt g$ near $\cC$ as in the proof of Lemma~\ref{LemmaGKNaive} (albeit using slightly different notation here) by
\begin{equation}
\label{EqFwtgNaive}
  \wt g_0(t_0,x) = -\dd t^2+\dd x^2 + \wt\upbeta^*g'(t_0,x;\dd t,\dd x) + \hat\chi(t_0,x)\hat g_{1,b}\Bigl(\frac{|x|}{\eps},\omega;\dd t,\dd x\Bigr),
\end{equation}
where $g'\in\CI(M;S^2 T^*M)$ vanishes quadratically at $x=0$, and $\hat g_{1,b}=\hat g_b-\hat{\ubar g}$ in the notation of Definition~\ref{DefGKModel}, so
\begin{equation}
\label{EqFwtgNaiveg1b}
  \hat g_{1,b}(\hat r,\omega;\dd\hat t,\dd\hat x)=\frac{2\bhm}{\hat r}(\dd\hat t^2+\dd\hat r^2)+\cO(\hat r^{-2}).
\end{equation}

By Corollary~\ref{CorGLCurvature}, we have
\begin{equation}
\label{EqFErr0}
  \Err_0 := \Ric(\wt g_0) - \Lambda\wt g_0 \in \hat\rho^{-1}\rho_\circ\CI(\wt M\setminus\wt K^\circ;S^2\wt T^*\wt M),
\end{equation}
and $\supp\Err_0$ is contained in the domain of influence of a compact subset of $\cU^\circ$. In fact, $\Err_0$ has an additional order of vanishing at $\hat M$, as we proceed to show; that is, $\Err_0\in\rho_\circ\CI$. The following result is well-known, see e.g.\ \cite[\S8.5]{PoissonPointParticleReview}; we phrase it in general spacetime dimension $1+n$ and give a proof for completeness. (We do not need to require $g$ to solve the Einstein equations for this result.)

\begin{lemma}[Metric in Fermi normal coordinates]
\label{LemmaFMc1Metric}
  Use Fermi normal coordinates $(t,x)=z=(z^0,\ldots,z^n)$ around the timelike geodesic $\cC\subset M$; write $i,j,k,l$ for indices between $1$ and $n$. Then the metric $g$ on $M$ takes the form $g=-\dd t^2+\dd x^2+r^2 g_{(2)}+\cO(|x|^3)$, where $\cO(|x|^3)$ stands for a smooth symmetric 2-tensors whose coefficients in the frame $\pa_t,\pa_{x^j}$ ($j=1,\ldots,n$) vanish cubically at $x=0$, and where
  \begin{align*}
    (g_{(2)})_{0 0} &= -R_{0\ell 0 m}|_{(t,0)}\frac{x^\ell}{r}\frac{x^m}{r}, \\
    (g_{(2)})_{j 0} &= -\frac23 R_{j\ell 0 m}|_{(t,0)}\frac{x^\ell}{r}\frac{x^m}{r}, \\
    (g_{(2)})_{j k} &= -\frac13 R_{j\ell k m}|_{(t,0)}\frac{x^\ell}{r}\frac{x^m}{r}.
  \end{align*}
  Here, $R_{\kappa\lambda\mu\nu}|_{(t,0)}$ denotes the coefficients of the Riemann curvature tensor of $g$ at $(t,0)\in\cC$. That is,
  \begin{equation}
  \label{EqFMc1Metric}
    g_{(2)} = -\frac{x^\ell}{r}\frac{x^m}{r}\Bigl( R_{0\ell 0 m}\dd t^2 + \frac43 R_{j\ell 0 m}\dd t\,\dd x^j + \frac13 R_{j\ell k m}\dd x^j\,\dd x^k\Bigr).
  \end{equation}
\end{lemma}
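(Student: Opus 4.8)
\textbf{Proof sketch for Lemma~\ref{LemmaFMc1Metric}.}
The plan is to use the defining properties of Fermi normal coordinates together with a Taylor expansion of $g$ at $\cC$, and to pin down the quadratic coefficients by evaluating the geodesic equation for the coordinate rays $s\mapsto(t_0,s x_0)$ and comparing with the known Taylor expansion of the metric of a Riemannian/Lorentzian manifold in Riemann normal coordinates along a geodesic. First I would recall from Lemma~\ref{LemmaGLFermi} that in these coordinates the metric agrees with $\ubar g=-\dd t^2+\dd x^2$ along $\cC$ together with the vanishing of all Christoffel symbols $\Gamma_{\mu i j}(t,0)$ with at least two spatial indices (and $\Gamma_{i00}=-\tfrac12\pa_i g_{00}$); hence $g=-\dd t^2+\dd x^2+r^2 g_{(2)}+\cO(|x|^3)$ where the first-order-in-$x$ terms are already known to be absent from the spatial block and appear in $g_{00}$ only through $-2\Gamma_{i00}(t,0)x^i$, which by the geodesic hypothesis for $\cC$ vanishes since $\Gamma_{i00}(t,0)=g(\nabla_{\pa_t}\pa_t,\pa_i)=0$. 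So one only needs to compute the $x^\ell x^m$ coefficients of $g_{\mu\nu}$.

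The cleanest route is the following. Fix $t_0$ and consider the $n$-dimensional submanifold $N_{t_0}=\{(t_0,x)\}$; the rays $s\mapsto(t_0,sx_0)$ are $g$-geodesics by construction, so the restriction $g|_{N_{t_0}}$ (which is $\dd x^2+r^2(g_{(2)})_{jk}\dd x^j\,\dd x^k+\cO(|x|^3)$) is a metric in Riemann normal coordinates centered at $(t_0,0)$, for which the standard expansion gives $(g|_{N_{t_0}})_{jk}=\delta_{jk}-\tfrac13 R^{N}_{j\ell k m}x^\ell x^m+\cO(|x|^3)$, where $R^N$ is the curvature of $g|_{N_{t_0}}$ at the origin. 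One then relates $R^N$ to the spacetime curvature $R$ via the Gauss equation; because the second fundamental form of $N_{t_0}\subset M$ at $(t_0,0)$ vanishes — this follows from $\Gamma_{i j}^0(t,0)=0$, which is part of the Fermi normal coordinate conditions — the Gauss equation reduces to $R^N_{j\ell k m}|_0=R_{j\ell k m}|_{(t_0,0)}$, yielding the stated formula for $(g_{(2)})_{jk}$. For the mixed components $(g_{(2)})_{j0}$ and $(g_{(2)})_{00}$ one argues similarly but must track the $t$-direction: the curves $s\mapsto(t_0,sx_0)$ being geodesics forces, through the geodesic equation $\ddot z^\mu+\Gamma^\mu_{\kappa\lambda}\dot z^\kappa\dot z^\lambda=0$ expanded to first order in $s$, the vanishing of the symmetrized first $x$-derivatives $\pa_{(\ell}\Gamma^\mu_{m)i}|_0$ contracted into $x^\ell x^m x^i$; combined with the transport condition~\eqref{EqGLFermiTransp} (which fixes the antisymmetric part $\pa_i g_{0j}-\pa_j g_{0i}=0$ at $\cC$ and its analogue one derivative out) this expresses the quadratic terms of $g_{0\mu}$ in terms of $\pa^2 g$ at $\cC$, hence in terms of $R_{\kappa\lambda\mu\nu}|_{(t_0,0)}$ by the formula $R_{\kappa\mu\nu\lambda}=\pa_\nu\Gamma_{\kappa\mu\lambda}-\pa_\lambda\Gamma_{\kappa\mu\nu}$ evaluated at a point where all $\Gamma$ vanish. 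Carrying out this bookkeeping produces the factors $\tfrac43$ and $1$ in~\eqref{EqFMc1Metric}.

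A more self-contained alternative, which I would probably present, is to expand the defining equation of Fermi normal coordinates directly. Writing the coordinate chart as $(t,x)\mapsto\exp_{c(t)}(x^j V_j(t))$ with $V_j$ the Fermi-transported frame from the proof of Lemma~\ref{LemmaGLFermi}, one computes $g_{\mu\nu}(t,x)=g(\pa_\mu,\pa_\nu)$ by expanding the exponential map and the Jacobi fields along the spatial geodesic; the second-order Taylor coefficients of Jacobi fields are governed by $R$, and the Fermi transport condition for the $V_j$ (namely $\nabla_{c'}V_j\parallel c'$) is exactly what is needed to kill the would-be linear-in-$x$ terms and to produce the asymmetric distribution $1:\tfrac43:\tfrac13$ of curvature among the $\dd t^2$, $\dd t\,\dd x$, $\dd x^2$ blocks. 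The only genuine subtlety — and the step I expect to be the main obstacle — is the careful treatment of the mixed $\dd t\,\dd x^j$ components: one must combine the Jacobi equation for spatial geodesics with the Fermi transport law for the temporal direction, and the coefficient $\tfrac43$ (rather than, say, $\tfrac23$ which is what one gets for a parallel-transported frame) is precisely the signature of Fermi transport along a non-null geodesic. Everything else is a routine, if slightly tedious, Taylor expansion, and I would relegate it to a direct computation referencing e.g.\ \cite[\S8.5]{PoissonPointParticleReview}.
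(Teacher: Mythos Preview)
Your proposal is broadly on track, and in fact your ``alternative'' route via Jacobi fields is exactly what the paper does. The paper treats all three blocks uniformly by computing the Taylor expansion of $f(s)=\la J(s),J(s)\ra$ for Jacobi fields $J$ along the spatial geodesics $\gamma(s)=(t_0,s x)$, choosing three different variations $\gamma_q$: namely $\gamma_q(s)=(t_0,s(x+q v))$ (giving $J(0)=0$, $J'(0)=v$) for $(g_{(2)})_{jk}$, $\gamma_q(s)=(t_0+q,s x)$ (giving $J(0)=\pa_t$, $J'(0)=0$) for $(g_{(2)})_{00}$, and $\gamma_q(s)=(t_0+q,s(x+q v))$ (giving $J(0)=\pa_t$, $J'(0)=v$) for $(g_{(2)})_{j0}$. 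In each case one reads off the relevant quadratic coefficient from $f'',f''',f''''$ at $s=0$, with the Jacobi equation $J''=R J$ supplying the curvature. Your first approach (Riemann normal coordinates on the slice $N_{t_0}$ plus Gauss equation) is a perfectly valid and arguably cleaner way to get the spatial block, but it does not extend as directly to the mixed components; the paper's Jacobi field approach has the advantage of being uniform.

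There is one conceptual slip in your sketch worth flagging. You write that the coefficient $\tfrac43$ ``rather than $\tfrac23$ which is what one gets for a parallel-transported frame'' is ``precisely the signature of Fermi transport along a non-null geodesic.'' But here $\cC$ \emph{is} a geodesic, so Fermi--Walker transport and parallel transport coincide (see the proof of Lemma~\ref{LemmaGLFermi}: the $V_i$ are defined by parallel transport). The factor $\tfrac43$ versus $\tfrac23$ is purely the symmetrization $g_{(2)}=\cdots+2(g_{(2)})_{j0}\,\dd t\,\dd x^j+\cdots$; the component $(g_{(2)})_{j0}$ itself carries $\tfrac23$, as in the statement. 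So there is no distinction between transport laws to worry about, and if your computation for the mixed block was relying on such a distinction to produce the right constant, you should revisit it --- the correct source of the $\tfrac23$ is the third and fourth derivatives of $f(s)$ in the Jacobi field computation (or, in your Christoffel-symbol language, the relation $R_{0 j\ell m}=\pa_\ell\Gamma_{0 j m}-\pa_m\Gamma_{0 j\ell}$ at $\cC$ combined with the symmetrization in $\ell,m$ coming from $x^\ell x^m$).
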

\begin{proof}
  Consider a Jacobi field $J=J(s)$ along a geodesic $\gamma\colon s\mapsto\gamma(s)$. Denoting by $R$ the endomorphism of $T_\gamma M$ given by $R V=\Riem(g)(\gamma',V)\gamma'$, the Jacobi equation reads $J''=R J$, where we denote covariant differentiation along $\gamma$ by a prime. Note that $\la R V,W\ra=\la V,R W\ra$ for all $V,W$, and therefore $R$ and all its derivatives along $\gamma$ are symmetric. Define now the function $f(s)=\la J(s),J(s)\ra$, where $\la-,-\ra$ is the inner product given by $g(\gamma(s))$. Then
  \begin{align*}
    f&=|J|^2, \\
    f'&=2\la J,J'\ra, \\
    f''&=2|J'|^2+2\la J,R J\ra, \\
    f'''&=8\la J',R J\ra+2\la J,R'J\ra, \\
    f''''&=8|R J|^2+12\la J',R'J\ra+8\la J',R J'\ra+2\la J,R''J\ra.
  \end{align*}

  If $J(0)=0$ and $J'(0)=v\in T_{c(t)}M$, we have
  \[
    f(0)=f'(0)=0,\quad
    f''(0)=2|v|^2,\quad
    f'''(0)=0,\quad
    f''''(0)=8\la v,R v\ra,
  \]
  which implies $f(s)=s^2|v|^2+\frac13\la\Riem(g)(\gamma',v)\gamma',v\ra s^4+\cO(s^5)$. In Fermi normal coordinates $(t,x)$, the curves $\gamma_q(s)=(t,s(x+q v))$ (for fixed $t\in\R$ and $x\in\R^n$, and defined for small $s$) are geodesics for all $q$; therefore, $J(s)=\pa_q\gamma_q(s)|_{q=0}=s v$ is a Jacobi field along $\gamma_0(s)=(t,s x)$, and we obtain
  \[
    g(t,s x)_{j k}v^j v^k = \delta_{j k}v^j v^k+\frac{s^2}{3}\big\la\Riem(g)|_{(t,0)}(x^m\pa_m,v^k\pa_k)x^\ell\pa_\ell,v^j\pa_j\big\ra + \cO(s^3),
  \]
  which gives $r^2(g_{(2)})_{j k}(t,x)=\frac13 R_{j\ell m k}|_{(t,0)}x^\ell x^m+\cO(|x|^3)$.

  Next, we consider the family of geodesics $\gamma_q(s)=(t+q,s x)$, whose variation vector field $J(s)=\pa_t$ satisfies $J(0)=\pa_t$ and $J'(0)=\nabla_{\gamma_0'(0)}\pa_t=0$. Therefore,
  \[
    f(0)=-1,\quad
    f'(0)=0,\quad
    f''(0)=2\la\pa_t,R\pa_t\ra,
  \]
  which implies $f(s)=g(t,s x)_{0 0}=-1+s^2\la\Riem(g)|_{(t,0)}(x^m\pa_m,\pa_t)x^\ell\pa_\ell,\pa_t\ra+\cO(s^3)$ and thus the stated expression for $(g_{(2)})_{0 0}$.

  Finally, for $\gamma_q(s)=(t+q,s(x+q v))$, we have $J(s)=\pa_t+s v$, so $J(0)=\pa_t$ and $J'(0)=v$, and we compute
  \[
    f(0)=-1,\quad f'(0)=0,\quad f''(0)=2|v|^2+2\la\pa_t,R\pa_t\ra,\quad f'''(0)=8\la v,R\pa_t\ra + 2\la\pa_t,R'\pa_t\ra.
  \]
  In the resulting Taylor expansion of $g(t,s x)(\pa_t+s v,\pa_t+s v)$, we consider the coefficient of $s^3$, which is the sum of a term of schematic form $x^2 v$ and another term of the form $x^3$. The $x^2 v$ term is $2 r^2(g_{(2)})_{0 j}v^j$, which must equal the $x^2 v$ term of the $s^3$-coefficient of $f(s)$; the latter is $\frac{8}{3!}s^3\la v,R\pa_t\ra=\frac{4}{3}s^3\la\Riem(g)|_{(t,0)}(x^m\pa_m,\pa_t)x^\ell\pa_\ell,v^j\pa_j\ra$. This completes the proof.
\end{proof}

The components of the tensor $g_{(2)}$ in~\eqref{EqFMc1Metric} are dilation-invariant in $x$. In order to capture the fact that $g_{(2)}$ should rightfully live at $r=|x|=0$, we consider the blow-up
\[
  [\R_t\times\R^n_x;\R_t\times\{0\}] = \R_t\times\bigl([0,\infty)_r\times\Sph^{n-1}_\omega\bigr),
\]
with blow-down map $\upbeta\colon(t,r,\omega)\mapsto(t,r\omega)$, and identify $T^*_{(t,x)}\R^4\cong T^*_{(t,0)}\R^4$ via the trivialization by coordinate differentials. Write $\ff:=\R_t\times\{0\}\times\Sph^{n-1}_\omega\subset\R\times[0,\infty)\times\Sph^{n-1}$ for the front face and $\cC:=\R_t\times\{0\}\subset\R\times\R^n$ for the curve $x=0$. We can then regard
\begin{equation}
\label{EqFMc1MetricSpace}
  g_{(2)} \in \CI(\ff;\upbeta^*S^2 T^*_\cC\R^{n+1}).
\end{equation}

Specializing now the case $n=3$, we may split $\upbeta^*T_{(t,0)}\R^4$ and its symmetric second tensor power as in~\eqref{EqMkBundleSplit}, where $\dd x^0=\dd t+\dd r$ and $\dd x^1=\dd t-\dd r$. (Concretely, $g_{(2)}$ is now a smooth (in $t$) family of sections of a bundle over $\Sph^2$ which is isomorphic to $\ul\R\oplus\ul\R\oplus T^*\Sph^2\oplus\ul\R\oplus T^*\Sph^2\oplus S^2 T^*\Sph^2$, equipped with the fiber inner product induced by the Minkowski metric.) In its spherical harmonic decomposition, we next determine the projections to various pure types. (The metric $g$ does not need to satisfy the Einstein equations in the following result either.)

\begin{lemma}[Pure type components of $g_{(2)}$]
\label{LemmaFMc1Pure}
  We assume that $\dim M=n+1=4$ and define $g_{(2)}$ by~\eqref{EqFMc1Metric}, regarded as an element of~\eqref{EqFMc1MetricSpace}. Defining curvature components in the coordinates $(t,x)\in\R\times\R^3$, we have
  \begin{subequations}
  \begin{align}
  \label{ItFMc1Pures1}
    \pi_{\rms 1}(g_{(2)}) &= -\frac23\Ric(g)_{0 j}\,\dd t\,r\sld\omega^j; \\
  \label{ItFMc1Purev1}
    \pi_{\rmv 1}(g_{(2)}) &= 0.
  \end{align}
  \end{subequations}
  Here $R_g=\tr_g\Ric(g)$ is the scalar curvature of $g$ at $(t,0)\in\cC$. In particular, if $\Ric(g)=\Lambda g$, then $\pi_{\rms 1}(g_{(2)})=0$.
\end{lemma}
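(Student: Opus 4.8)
The statement is a direct extraction of the pure-type content of the explicit formula~\eqref{EqFMc1Metric} from Lemma~\ref{LemmaFMc1Metric}, combined with the symmetries of the Riemann tensor. The plan is to compute $\pi_{\rms 1}(g_{(2)})$ and $\pi_{\rmv 1}(g_{(2)})$ term by term using the three blocks
\[
  (g_{(2)})_{0 0} = -R_{0\ell 0 m}\omega^\ell\omega^m,\qquad
  (g_{(2)})_{0 j} = -\tfrac23 R_{j\ell 0 m}\omega^\ell\omega^m,\qquad
  (g_{(2)})_{j k} = -\tfrac13 R_{j\ell k m}\omega^\ell\omega^m,
\]
where $\omega^j = x^j/|x|$ and all curvature components are evaluated at $(t,0)\in\cC$. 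First I would recall from Example~\ref{ExMkYProj} that for a function $u\in\CI(\Sph^2)$ the scalar type $1$ part is $\pi_{\rms 1}(u)=\sum_j\bigl(\tfrac{3}{4\pi}\int_{\Sph^2}u\,\omega^j\,\dd\slg\bigr)\omega^j$, and that $\int_{\Sph^2}\omega^a\omega^b\omega^c\,\dd\slg=0$ by oddness of the integrand under $\omega\mapsto-\omega$; hence quadratic expressions $\omega^\ell\omega^m$ in the $\omega^j$ have vanishing scalar type $1$ component when regarded as functions on $\Sph^2$, and their expansion into spherical harmonics involves only $\rms 0$ and $\rms 2$ pieces. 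Similarly $\omega^\ell\omega^m\,\sld\omega^j$ and related expressions need to be decomposed using the representation-theoretic bookkeeping of~\S\ref{SsMkY} (in particular $\scalspace_1\otimes\scalspace_2\cong\scalspace_1\oplus\scalspace_2\oplus\scalspace_3$ and the identities in Example~\ref{ExMkYProj}).

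For the $\rms 1$ claim, the key observation is that the only way a scalar type $1$ contribution can survive in~\eqref{EqFMc1Metric} is through the $\dd t\otimes_s r\,\sld(\cdot)$ block, which according to the splitting~\eqref{EqMkYSplits1} carries a scalar type $1$ component of the form $c\,\dd t\otimes_s r\sld\scal$, $\scal\in\scalspace_1$; the $(g_{(2)})_{00}$ and $(g_{(2)})_{jk}$ blocks, being built from $\omega^\ell\omega^m$ (resp.\ $\omega^\ell\omega^m\slg$ and $\omega^\ell\omega^m\sldelta_0^*\sld(\cdot)$-type tensors upon decomposing the frame $\dd x^j\,\dd x^k$), contribute only $\rms 0$ and $\rms 2$. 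Thus $\pi_{\rms 1}(g_{(2)})$ is governed entirely by $(g_{(2)})_{0 j}=-\tfrac23 R_{j\ell 0 m}\omega^\ell\omega^m$, and I need to extract its scalar type $1$ part as a 1-form-valued (in the spherical indices) object. Using $\pi_{\rms 1}\bigl(\omega^\ell\omega^m\bigr)_{\text{paired against }\sld\omega^j}$ and the symmetry $R_{j\ell 0 m}=R_{0 m j\ell}$, together with the contraction identity $R_{j\ell 0 m}\delta^{\ell m}=-\Ric(g)_{j 0}$ (sign depending on the curvature convention~\eqref{EqELinDRic} of the paper, $R_{\kappa\mu\nu\lambda}=\langle\pa_\kappa,\Riem(g)(\pa_\nu,\pa_\lambda)\pa_\mu\rangle$ — I would double-check the index placement against this convention), one finds that the $\rms 1$ projection of $R_{j\ell 0 m}\omega^\ell\omega^m\,\dd x^j$, after the appropriate $\frac32$ normalization from Example~\ref{ExMkYProj}, reduces to a multiple of $\Ric(g)_{0 j}\,\sld\omega^j$. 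Tracking the constants carefully should yield the stated $-\tfrac23\Ric(g)_{0 j}\,\dd t\,r\sld\omega^j$.

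For the $\rmv 1$ claim, I would argue that $g_{(2)}$ has no vector type $1$ component at all. From the splittings~\eqref{EqMkYSplitv1} and~\eqref{EqMkYSplitvl}, a vector type $l\geq 1$ symmetric 2-tensor lives in the $\dd x^0\otimes_s r T^*\Sph^2$, $\dd x^1\otimes_s r T^*\Sph^2$, and (for $l\geq 2$) $r^2 S^2 T^*\Sph^2$ blocks, with the spherical part being $\vect=\slstar\sld\scal$. The spherical-part 1-forms appearing in $g_{(2)}$ are $\sld\omega^j$ (which are of scalar type $1$), and the expressions $\omega^\ell\omega^m\sld\omega^j$ decompose into scalar type and vector type pieces, but by~\eqref{EqMkYV1V1} and the surrounding discussion a vector type $1$ 1-form would have to arise from $\scalspace_1\otimes\scalspace_1\otimes(\text{gradient})$-type combinations, and one checks by parity/representation theory that $\pi_{\rmv 1}$ of $\omega^\ell\omega^m\,\sld\omega^j$ vanishes (the vector type content sits in $\rmv 2$ and higher, or is absent). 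Thus all spherical 1-forms entering $g_{(2)}$ are pure scalar type, which forces $\pi_{\rmv 1}(g_{(2)})=0$. Finally, the concluding remark follows immediately: if $\Ric(g)=\Lambda g$, then $\Ric(g)_{0 j}=\Lambda g_{0 j}=0$ at $(t,0)$ since $g_{0 j}=0$ there in Fermi normal coordinates (Lemma~\ref{LemmaGLFermi}), so~\eqref{ItFMc1Pures1} vanishes. The main obstacle I anticipate is purely bookkeeping: getting all the numerical constants (the $\frac32$ from Example~\ref{ExMkYProj}, the $\frac23$ and $\frac13$ prefactors in~\eqref{EqFMc1Metric}, the factors from converting between the $\dd x^j$ frame and the $(\dd x^0,\dd x^1)$ frame, and the curvature sign convention) correct and consistently tracked; the conceptual content — that $g_{(2)}$ is built from $\rms 0$, $\rms 2$, and a single $\rms 1$ piece determined by $\Ric(g)_{0 j}$ — is robust and follows from elementary $O(3)$-representation theory.
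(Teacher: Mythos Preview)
Your scalar type $1$ analysis is essentially on the right track and mirrors the paper's approach, though you should also note that the $\dd t\,\dd r$ piece of $\omega^\ell\omega^m\,\dd t\,\dd x^j$ (from $\dd x^j=\omega^j\,\dd r+r\,\sld\omega^j$) carries a priori $\rms 1$ content as well; its vanishing after contraction with $R_{j\ell 0 m}$ uses the antisymmetry in $j,\ell$ paired against the symmetry of $\omega^\ell\omega^j$.

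There is, however, a genuine gap in your $\rmv 1$ argument. Your catalogue of ``spherical-part 1-forms appearing in $g_{(2)}$'' is incomplete: when you expand $\dd x^j\,\dd x^k = \omega^j\omega^k\,\dd r^2 + (\omega^j\,r\sld\omega^k + \omega^k\,r\sld\omega^j)\,\dd r + r^2\sld\omega^j\,\sld\omega^k$, the cross term contributes to the $\dd r\otimes_s r T^*\Sph^2$ block (hence to the $\dd x^0\otimes_s r T^*\Sph^2$ and $\dd x^1\otimes_s r T^*\Sph^2$ blocks where $\rmv 1$ lives) with spherical 1-forms of the form $\omega^\ell\omega^m\omega^j\,\sld\omega^k$. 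These are \emph{even} under the antipodal map, so your parity argument does not exclude $\rmv 1$ content --- and indeed, as $O(3)$-representations, the family $\{\omega^\ell\omega^m\,\dd x^j\,\dd x^k\}$ does contain $\vectspace_1$ (this is the $\scalspace_2\otimes\scalspace_2\supset\vectspace_1$ observation recorded just before~\eqref{EqFMc1Pure1}). The reason $\pi_{\rmv 1}(g_{(2)})=0$ is not representation-theoretic but algebraic: after contracting with the Riemann tensor, the relevant 1-form $R_{j\ell k m}\omega^\ell\omega^m(\omega^k\,\sld\omega^j+\omega^j\,\sld\omega^k)$ vanishes identically because $R_{j\ell k m}$ is antisymmetric in $k,m$ (resp.\ $j,\ell$) while $\omega^\ell\omega^m\omega^k$ (resp.\ $\omega^\ell\omega^m\omega^j$) is symmetric in the same pair. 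This is the step the paper uses, and your proposal is missing it.
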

\begin{proof}
  Write $\omega^j=\frac{x^j}{r}$, $j=1,2,3$. Since $\omega^j\in\scalspace_1$, we have $\omega^j\omega^m\in\scalspace_0\oplus\scalspace_2$ since the $O(3)$-representation $\scalspace_1\otimes_s\scalspace_1$ is isomorphic to $\scalspace_0\oplus\scalspace_2$. Furthermore, $\dd t^2$ is of scalar type $0$, the 1-form $\dd x^j=r\,\sld\omega^j+\omega^j\,\dd r$ is of scalar type $1$, and $\dd x^j\,\dd x^k$ is a sum of scalar type $0$ and $2$ tensors. Therefore, writing $\rms l$ and $\rmv l$ for tensors of scalar type $l$ and vector type $l$, respectively, the three terms in~\eqref{EqFMc1Metric} are sums of tensors of the following pure types:
  \begin{subequations}
  \begin{align}
  \label{EqFMc1Pure1}
    -\omega^\ell\omega^m\,\dd t^2&:\quad \rms 0,\ \rms 2; \\
  \label{EqFMc1Pure2}
    -\frac43\omega^\ell\omega^m\,\dd t\,\dd(r\omega^j)&:\quad \rms 1,\ \rmv 2,\ \rms 3; \\
  \label{EqFMc1Pure3}
    -\frac13\omega^\ell\omega^m\,\dd(r\omega^j)\,\dd(r\omega^k)&:\quad \rms 0,\ \rmv 1,\ \rms 2,\ \rmv 3,\ \rms 4.
  \end{align}
  \end{subequations}
  We use here isomorphisms of $O(3)$-representations such as $\scalspace_2\otimes\scalspace_2\cong\scalspace_0\oplus\vectspace_1\oplus\scalspace_2\oplus\vectspace_3\oplus\scalspace_4$.

  \pfstep{Scalar type $1$ part.} The only contribution comes from~\eqref{EqFMc1Pure2}. Expanding $\dd(r\omega^j)=\omega^j\,\dd r+r\,\sld\omega^j$, we then note that the contraction of $\omega^\ell\omega^m\omega^j\,\dd r$ (which is symmetric in $j,\ell$) with $R_{j\ell 0 m}$ (which is antisymmetric in $j,\ell$) vanishes. Next, we shall need\footnote{This integral can be evaluated easily via polarization from $\int_{\Sph^2}(\fq\cdot x)^4\,\dd\slg=\frac{4\pi}{5}|\fq|^4$ for $\fq\in\R^3$, which in turn follows by spherical symmetry and scaling from the special case $\fq=(0,0,1)$. In polar coordinates, one finds $\int_0^{2\pi}\int_0^\pi(\cos\theta)^4\sin\theta\,\dd\theta\,\dd\phi=-\frac{2\pi}{5}(\cos^5\theta)|^\pi_0=\frac{4\pi}{5}$.}
  \[
    \frac{1}{4\pi}\int_{\Sph^2} \omega^\ell\omega^m\omega^j\omega^k\,\dd\slg = \frac{1}{15}(\delta^{\ell m}\delta^{j k}+\delta^{\ell j}\delta^{m k}+\delta^{\ell k}\delta^{m j});
  \]
  and we moreover compute $\slg^{-1}(\sld\omega^j,\sld\omega^k)=\la e_j-(e_j\cdot\omega)\omega,e_k-(e_k\cdot\omega)\omega\ra_{\R^3}$ at $\omega\in\Sph^2\subset\R^3$ (where $e_1,e_2,e_3$ is the standard basis of $\R^3$), which equals $\delta^{j k}-\omega^j\omega^k$. Since $\frac{1}{4\pi}\int_{\Sph^2}\omega^\ell\omega^m\,\dd\slg=\frac13\delta^{\ell m}$, the $\rms 1$ part of $\omega^\ell\omega^m\,r\,\sld\omega^j$ is therefore
  \[
    \sum_{k=1}^3 \left(\frac{1}{4\pi}\int_{\Sph^2} \omega^\ell\omega^m\slg^{-1}(\sld\omega^j,\sld\omega^k)\,\dd\slg\right)\,\frac32 r\,\sld\omega^k = \frac{1}{10}\bigl(4\delta^{\ell m}\,r\,\sld\omega^j-\delta^{\ell j}\,r\,\sld\omega^m-\delta^{m j}\,r\,\sld\omega^\ell\bigr),
  \]
  cf.\ Example~\ref{ExMkYProj}. Contracting with $-\frac43 R_{j\ell 0 m}$ gives the stated result.

  \pfstep{Vector type $1$ part.} The vector type $1$ components of~\eqref{EqFMc1Pure3} are necessarily of the form $\dd r\otimes_s r\vect$ for $\vect\in\vectspace_1$, cf.\ \eqref{EqMkYSplitv1}. Contracting
  \[
    \dd(r\omega^j)\,\dd(r\omega^k)=\omega^j\omega^k\,\dd r^2+\omega^k\,r\,\sld\omega^j\,\dd r+\omega^j\,r\,\sld\omega^k\,\dd r+r\,\sld\omega^j\,r\,\sld\omega^k
  \]
  with $R_{j\ell k m}$, we must thus find the vector type $1$ part of $R_{j\ell k m}\omega^\ell\omega^m(\omega^k\,\sld\omega^j+\omega^j\,\sld\omega^k)$. But this 1-form vanishes since $R_{j\ell k m}$ is odd in $k,m$, resp.\ $j,\ell$ while $\omega^\ell\omega^m\cdot\omega^k$, resp.\ $\omega^\ell\omega^m\cdot\omega^j$ is even.
\end{proof}

We now return to the specific setting of the black hole gluing problem.

\begin{lemma}[Error term \#0]
\label{LemmaFErr0}
  The total family $\wt g_0$, defined by~\eqref{EqFwtgNaive}, solves the Einstein vacuum equations to leading order at $M_\circ$ and to leading and subleading order at $\hat M$, in the sense that
  \[
    \Err_0 := \Ric(\wt g_0) - \Lambda\wt g_0 \in \rho_\circ\CI(\wt M\setminus\wt K^\circ;S^2\wt T^*\wt M).
  \]
  Moreover, $\supp\Err_0$ is contained in the domain of influence of a compact subset of $\cU^\circ$, and
  \begin{align}
  \label{EqFErr0MhatLot}
    \sfe\bigl(\Err_0|_{\hat M_{c(t)}}\bigr) &= \wh{D_{\hat g_b}\Ric}(0)\bigl(\hat r^2\sfe g_{(2)}(t)\bigr) - \Lambda\hat g_b, \\
  \label{EqFErr0McircLot}
    (\eps^{-1}\Err_0)|_{M_\circ} &\equiv (D_g\Ric-\Lambda)\Bigl(\frac{2\bhm}{r}(\dd t^2+\dd r^2)\Bigr) \bmod \CIdot(M_\circ;\upbeta_\circ^*S^2 T^*M).
  \end{align}
  These leading order terms lie in $\rho_\circ\CI(\hat M;\hat\upbeta^*S^2\wt T^*M)$ and $\hat\rho^{-1}\CI(M_\circ;\upbeta_\circ^*S^2 T^*M)$, respectively. Furthermore, the scalar type $1$ and vector type $1$ components of the common boundary values $(\hat r\Err_0)|_{\pa\hat M}=(r\eps^{-1}\Err_0)|_{\pa M_\circ}\in\CI(\pa M_\circ;\upbeta_\circ^*S^2 T^*_\cC M)$ vanish.
\end{lemma}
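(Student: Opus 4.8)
The plan is to compute $\Err_0 = \Ric(\wt g_0) - \Lambda\wt g_0$ by working separately in a neighborhood of $M_\circ^\circ$, a neighborhood of $\hat M^\circ$, and near the corner $\pa M_\circ = \pa\hat M$, using Corollary~\ref{CorGLCurvature} to guarantee polyhomogeneity throughout so that the leading order terms computed in each region glue together. First I would recall from~\eqref{EqFwtgNaive}--\eqref{EqFwtgNaiveg1b} that $\wt g_0 = -\dd t^2 + \dd x^2 + \wt\upbeta^*g' + \hat\chi\hat g_{1,b}(|x|/\eps,\omega)$, where $g'$ vanishes quadratically at $\cC$ and by Lemma~\ref{LemmaFMc1Metric} is equal to $r^2 g_{(2)} + \cO(|x|^3)$ with $g_{(2)}$ as in~\eqref{EqFMc1Metric}.

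Near $(M_\circ)^\circ$: here $\wt g_0 \equiv g \bmod \eps\CI$, more precisely $\wt g_0 = g + \eps\frac{2\bhm}{r}(\dd t^2+\dd r^2) + \eps g_1'$ modulo $\rho_\circ^2\CI$ with $g_1' \in \CI(M_\circ;\upbeta_\circ^*S^2 T^*M)$, exactly as in~\eqref{EqAcGWMetric} (the $\hat g_{1,b}$ term restricted to $\hat\rho=0$ contributes the Schwarzschild $\frac{2\bhm}{\hat r}(\dd\hat t^2+\dd\hat r^2)$ piece by~\eqref{EqFwtgNaiveg1b}, and the subleading $\cO(\hat r^{-2})$ terms of $\hat g_{1,b}$ contribute to $g_1'$). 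Since $\Ric(g)-\Lambda g = 0$, Taylor-expanding $\Ric(\wt g_0)-\Lambda\wt g_0$ in $\eps$ gives $\Err_0 = \eps(D_g\Ric-\Lambda)\big(\frac{2\bhm}{r}(\dd t^2+\dd r^2) + g_1'\big) + \cO(\eps^2)$ near $(M_\circ)^\circ$; but $g_1'$ is smooth on $M$ (its lift under $\upbeta_\circ$ is the lift of a smooth tensor), so $(D_g\Ric-\Lambda)(g_1')$ contributes a smooth term, and modulo $\CIdot(M_\circ)$ — i.e.\ after dividing by $\eps$ and restricting — this gives~\eqref{EqFErr0McircLot}. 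To see the weight $\hat\rho^{-1}$ claimed (i.e.\ that there is no $\hat\rho^{-2}$ term), note $\frac{2\bhm}{r}(\dd t^2+\dd r^2) \in \ker\upbeta_\circ^*\wh{D_{\ubar g}\Ric}(0)$ by Proposition~\ref{PropRicMS}\eqref{ItRicMSs0} (it is the linearized mass perturbation, homogeneous of degree $-1$), cf.\ Remark~\ref{RmkAcGWLot}, so $(D_g\Ric-\Lambda)(\frac{2\bhm}{r}(\dd t^2+\dd r^2))$ only involves the difference $D_g\Ric - \wh{D_{\ubar g}\Ric}(0)$ (which gains an order in $r$ since $g$ agrees with $\ubar g$ to second order at $\cC$ and $D_{\ubar g}\Ric\big(\frac{2\bhm}{r}(\dd t^2+\dd r^2)\big)=0$ in $r>0$) plus the $\Lambda$-term applied to $\frac{2\bhm}{r}(\dd t^2+\dd r^2) \in \cO(r^{-1})$; both are $\cO(r^{-1})$.

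Near $\hat M^\circ$: here I apply $\sfe$ and use that $\sfe(\eps^2(\Ric(\wt g_0)-\Lambda\wt g_0))|_{\hat M} = \Ric(\hat g) - 0$ by Corollary~\ref{CorGLCurvature} — wait, more carefully: the $\hat M_p$-model of $\wt g_0$ is $\hat g_b$, so $\sfe(\eps^2\Err_0)|_{\hat M} = \Ric(\hat g_b) = 0$, which only tells us $\Err_0$ vanishes to leading order at $\hat M$. For the subleading term I expand $\sfe\wt g_0$ at $\hat M$: in the fast coordinates $\hat x = x/\eps$, the term $\wt\upbeta^*g' = \wt\upbeta^*(r^2 g_{(2)} + \cO(|x|^3))$ becomes $\eps^2\hat r^2 \sfe g_{(2)} + \cO(\eps^3)$ (since $r = \eps\hat r$ and the coefficients of $g_{(2)}$ are dilation-invariant in $\hat x$), so $\sfe\wt g_0 = \hat g_b + \eps^2\hat r^2\sfe g_{(2)}(t) + \cO(\eps^3)$ for bounded $\hat x$ — in particular the $\eps^1$ coefficient vanishes, which is precisely why $\Err_0$ has the extra order of vanishing at $\hat M$; this is the content of the footnote to~\eqref{EqIPfBErr0} about the quadratic nature of $g - \ubar g$ in Fermi coordinates. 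Then $\sfe(\eps^2\Err_0) = \Ric(\hat g_b + \eps^2\hat r^2\sfe g_{(2)} + \cdots) - \eps^2\Lambda(\hat g_b + \cdots) = \Ric(\hat g_b) + \eps^2(D_{\hat g_b}\Ric)(\hat r^2\sfe g_{(2)}) - \eps^2\Lambda\hat g_b + \cO(\eps^3)$; since $\Ric(\hat g_b) = 0$ and $\hat r^2\sfe g_{(2)}$ is stationary so $D_{\hat g_b}\Ric$ acts via $\wh{D_{\hat g_b}\Ric}(0)$, dividing by $\eps^2$ gives~\eqref{EqFErr0MhatLot}. The weight $\rho_\circ\CI$ at $\hat M$ for this leading term follows because $\hat r^2\sfe g_{(2)}$ has at most quadratic growth ($\sfe g_{(2)} \in \CI(\ff)$, so $\hat r^2\sfe g_{(2)} \in \rho_\circ^{-2}\CI$) and $\wh{D_{\hat g_b}\Ric}(0) \in \rho_\circ^2\Diffb^2$ maps this into $\rho_\circ^0\CI = \CI$; combined with the vanishing of the $\eps^1$-coefficient this places $\Err_0$ in $\rho_\circ\CI$ near $\hat M^\circ$.

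Matching at the corner and the vanishing of the $\rms 1$, $\rmv 1$ parts: the common boundary value $(\hat r\Err_0)|_{\pa\hat M}$ can be read off either from~\eqref{EqFErr0MhatLot} (take the $\rho_\circ$-leading coefficient of $\wh{D_{\hat g_b}\Ric}(0)(\hat r^2\sfe g_{(2)})$, noting $\Lambda\hat g_b$ contributes only at order $\rho_\circ^0$ since $\hat g_b \equiv \hat{\ubar g}$ is $\cO(1)$, hence drops out after multiplying by $\hat r$) or from~\eqref{EqFErr0McircLot} (take the $r^{-1}$-leading coefficient, noting the $\Lambda\cdot\frac{2\bhm}{r}(\dd t^2+\dd r^2)$ term is $\cO(r^{-1})$ but its $r^{-1}$ coefficient $2\bhm\Lambda(\dd t^2+\dd r^2)$ is present — however this is of scalar type $0$, so it does not affect the $\rms 1$ or $\rmv 1$ parts). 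Either way, the leading coefficient at the corner is, up to the $\rms 0$ piece, $\wh{D_{\ubar g}\Ric}(0)$ applied to $g_{(2)}$ reading off the appropriate homogeneity: concretely it equals $\pa_\lambda$-free evaluation $N(\rho^{-2}\wh{D_{\ubar g}\Ric}(0), \text{appropriate }\lambda)$ of $g_{(2)}$, a homogeneous-degree-$0$ tensor. Now $\wh{D_{\ubar g}\Ric}(0)$ preserves pure types in the strong sense of Remark~\ref{RmkMkYAction}, so the $\rms 1$ part of the boundary value equals $\wh{D_{\ubar g}\Ric}(0)$ applied to $\pi_{\rms 1}(g_{(2)})$, and similarly for $\rmv 1$. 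By Lemma~\ref{LemmaFMc1Pure}\eqref{ItFMc1Pures1}, since $g$ solves $\Ric(g) = \Lambda g$ we have $\Ric(g)_{0j} = \Lambda g_{0j}|_\cC = 0$, hence $\pi_{\rms 1}(g_{(2)}) = 0$; and by Lemma~\ref{LemmaFMc1Pure}\eqref{ItFMc1Purev1}, $\pi_{\rmv 1}(g_{(2)}) = 0$ unconditionally. Therefore both the $\rms 1$ and $\rmv 1$ components of $(\hat r\Err_0)|_{\pa\hat M}$ vanish. The support statement is immediate from $\supp\hat\chi$ being contained in the domain of influence of a compact subset of $\cU^\circ$ together with the fact that away from $\supp\hat\chi$ we have $\wt g_0 = g$ (modulo the $\cO(|x|^2)$ Fermi-coordinate corrections, which also vanish where $g' = 0$... actually $g'$ is globally defined but $\wt g_0 - \wt\upbeta^*g$ is supported in $\supp\hat\chi$), so $\Err_0 = \Ric(g) - \Lambda g = 0$ there. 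The main obstacle I anticipate is bookkeeping the precise identification of the corner boundary value from the two sides and confirming that no $\hat\rho^{-2}$ or $\rho_\circ^0$-at-$\hat M$ term sneaks in — i.e.\ the two ``extra order of vanishing'' claims — but both reduce to the already-recorded facts that $g - \ubar g = \cO(|x|^2)$ in Fermi coordinates (killing the $\eps^1$-term at $\hat M$) and that $\frac{2\bhm}{r}(\dd t^2+\dd r^2)$ is a zero-energy null mode on Minkowski space (killing the $\hat\rho^{-2}$-term at $M_\circ$, cf.\ Remark~\ref{RmkAcGWLot}).
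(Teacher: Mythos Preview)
Your overall approach matches the paper's, and the arguments near $(M_\circ)^\circ$ and $\hat M^\circ$ are essentially correct. However, your treatment of the final claim---the vanishing of the scalar and vector type $1$ parts of $(\hat r\Err_0)|_{\pa\hat M}$---has a genuine gap.

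You identify the corner value as ``$\wh{D_{\ubar g}\Ric}(0)$ applied to $g_{(2)}$,'' i.e.\ as $N(\rho_\circ^{-2}\wh{D_{\hat{\ubar g}}\Ric}(0),\lambda)g_{(2)}$ for the appropriate $\lambda$. But $\wh{D_{\hat{\ubar g}}\Ric}(0)(\hat r^2\sfe g_{(2)})$ is homogeneous of degree $0$ in $\hat r$, so its $\rho_\circ^1$-coefficient \emph{vanishes}; the boundary value $(\hat r\Err_0)|_{\pa\hat M}$ is the $\rho_\circ^1$-coefficient of $\sfe(\Err_0|_{\hat M})$, and this comes entirely from the \emph{difference} $\wh{D_{\hat g_b}\Ric}(0)-\wh{D_{\hat{\ubar g}}\Ric}(0)\in\rho_\circ^3\Diffb^2$ applied to $\hat r^2\sfe g_{(2)}\in\rho_\circ^{-2}\CI$, which lands in $\rho_\circ\CI$---precisely the order you need, so it cannot be discarded. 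Your Minkowski replacement is one order too crude.

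The paper's fix is to replace $\hat g_b$ by the \emph{Schwarzschild} metric $\hat g_{b_0}$ instead: since $\hat g_b-\hat g_{b_0}\in\rho_\circ^2\CI$, one has $\wh{D_{\hat g_b}\Ric}(0)\equiv\wh{D_{\hat g_{b_0}}\Ric}(0)\bmod\rho_\circ^4\Diffb^2$, so the error on $\hat r^2\sfe g_{(2)}$ is $\rho_\circ^2\CI$, which is harmless at the $\rho_\circ^1$ level. The Schwarzschild operator $\wh{D_{\hat g_{b_0}}\Ric}(0)$ still preserves pure types (spherical symmetry), and $\hat g_b$ is of scalar type $0$ modulo $\rho_\circ^2\CI$, so together with Lemma~\ref{LemmaFMc1Pure} you conclude that the $\rms 1$ and $\rmv 1$ parts of $\sfe(\Err_0|_{\hat M})$ lie in $\rho_\circ^2\CI$ and hence vanish at the $\rho_\circ^1$ level. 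Equivalently, from the $M_\circ$ side the $r^{-1}$-coefficient is (up to the $\rms 0$ piece from $\Lambda$) the bilinear term $D_{\ubar g}^2\Ric\bigl(r^2 g_{(2)},\tfrac{2\bhm}{r}(\dd t^2+\dd r^2)\bigr)$, not a first linearization; its $\rms 1$, $\rmv 1$ parts vanish because the Schwarzschild factor is $\rms 0$ and the $O(3)$-equivariance of $D_{\ubar g}^2\Ric$ preserves the absence of $\rms 1$, $\rmv 1$ in $g_{(2)}$.
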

\begin{proof}
  We need to show that the $\cO(\eps)$ leading order term of $\eps^2\Err_0$ at $\hat M$ vanishes at $\hat M_{c(t_0)}^\circ$ for all $t_0\in I$. But since $\sfe\wt g_0|_{\{t=t_0\}}=\hat g_b+\cO(\eps^2)$ (regarded as an $\eps$-dependent family of symmetric 2-tensors on $T_{c(t_0)}M$), the desired conclusion follows from $\sfe(\eps^2(\Ric(\wt g_0)-\Lambda\wt g_0))=\Ric(\sfe\wt g_0)+\cO(\eps^2)$ near $\hat M^\circ$.

  In order to verify~\eqref{EqFErr0MhatLot}, we may replace $g'$ in~\eqref{EqFwtgNaive} by its leading order term $r^2 g_{(2)}$, since the lower order terms of $g'$ vanish cubically at $\hat M$ and thus do not contribute to $\Err_0|_{\hat M}$. Working near $\hat M^\circ$ and in the coordinates $\hat t,\hat x$ and writing $\omega=\frac{\hat x}{|\hat x|}$, we have\footnote{One may want to multiply the term $\wt\upbeta^*g'$ in~\eqref{EqFwtgNaive} by a cutoff $\chi_\circ$ vanishing near $\hat x=0$ to stress its origin as a correction term of the metric $g$ at $M_\circ$. We do not do this here for notational brevity.}
  \begin{align*}
    \sfe(\Err_0|_{\hat M}) &= \Bigl(\eps^{-2}\Ric\bigl(\hat g_b + \eps^2\hat r^2 g_{(2)}(t,\omega;\dd\hat t,\dd\hat x)\bigr)\Bigr)\Big|_{\eps=0} - \Lambda\sfe(\wt g_0|_{\hat M}) \\
      &= \wh{D_{\hat g_b}\Ric}(0)(\hat r^2\sfe g_{(2)}) - \Lambda\hat g_b.
  \end{align*}
  Near $(M_\circ)^\circ$, we compute
  \[
    \Ric(\wt g_0)-\Lambda\wt g_0 \equiv \eps (D_g\Ric-\Lambda)\Bigl(\bigl( \eps^{-1}(\wt g_0-\wt\upbeta^* g)\bigr)\big|_{M_\circ}\Bigr)
  \]
  modulo $\eps^2\CI(\wt M\setminus(\hat M\cup\wt K^\circ);S^2\wt T^*\wt M)$. This gives~\eqref{EqFErr0McircLot} upon using~\eqref{EqFwtgNaiveg1b}.

  We prove the final claim by considering the $\rho_\circ$-leading order term of $\sfe(\Err_0|_{\hat M_{c(t)}})$ at $\hat M_{c(t)}$. We need to use the expression~\eqref{EqFErr0MhatLot} and the following facts: the scalar type $1$ and vector type $1$ components of $\hat r^2\sfe g_{(2)}(t)$ vanish (Lemma~\ref{LemmaFMc1Pure}); the operator $\wh{D_{\hat g_b}\Ric}(0)\equiv\wh{D_{\hat g_{b_0}}\Ric}(0)\bmod\rho_\circ^4\Diffb^2$ maps $\hat r^2\sfe g_{(2)}(t)$ into a tensor with vanishing scalar and vector type $1$ components modulo $\rho_\circ^4\hat r^2\CI=\rho_\circ^2\CI$; and $\hat g_b$ is of scalar type $0$ modulo $\rho_\circ^2\CI$.
\end{proof}

The plan for the remainder for this section is as follows.
\begin{enumerate}
\item In~\S\ref{SsFMc1}, we solve away the leading order term of $\Err_0$ at $M_\circ$ using Theorem~\ref{ThmAc}.
\item In~\S\ref{SsFhM1}, we solve away the leading order term of the remaining error $\Err^1$ at $\hat M$. This involves the inversion of the zero energy operator $\wh{D_{\hat g_b}\Ric}(0)$, which however has a cokernel (see Theorem~\ref{ThmAhPhg}). We show, roughly speaking, that if one modulates the center of mass and the axis of rotation of the small black hole by amounts depending on the `slow' time variable $t$ along $\cC$, one can eliminate the cokernel. This is the most delicate part of the construction: an $\cO(\eps^k)$ error at $\hat M$ requires a modulation of the center of mass, resp.\ axis of rotation of the small black hole by an amount of size $\cO(\eps^k)$, resp.\ $\cO(\eps^{k+1})$. We subsequently show that by pulling back the updated metric by a suitable diffeomorphism on the total gluing spacetime, one can re-normalize the center of mass and axis of rotation of the small black hole.
\item In~\S\ref{SsFMc2}, we solve away the leading order term of the remaining error $\Err_2$ at $M_\circ$; we take care in keeping track of certain error terms produced at $\hat M$.
\item In~\S\ref{SsFhM2}, we again turn to $\hat M$, where the remaining error has an additional order of vanishing at $\hat M$, and indeed is of size $\cO(\eps\log\eps)$. By modulating the center of mass and axis of rotation, we again succeed in solving this error away, and afterwards re-center the black hole. Here it is important that the error terms from previous steps have a particular form, as this prevents modulations of the black hole mass or the magnitude of the angular momentum from becoming necessary at this stage.
\item Following another solution step at $M_\circ$ in~\S\ref{SsFMc3}, we turn again to $\hat M$ in~\S\ref{SsFhM3}, where we now face an error of size $\cO(\eps^2(\log\eps)^m)$; this can be solved away by modulating the center of mass of the small black hole by an amount of size $\cO(\eps^2(\log\eps)^m)$, and by also modulating the black hole parameters by amounts of size $\cO(\eps^3(\log\eps)^m)$ which is now better than $\eps^2$ and thus acceptable without the need for further re-centering.
\item The remainder of the construction consists of solving away errors in turn at $M_\circ$ and $\hat M$; at this point no further care needs to be taken as regards the particular nature of error terms. See~\S\ref{SsFRest}.
\end{enumerate}

See Theorem~\ref{ThmFh} for the final result of this section.

\subsection{First correction at \texorpdfstring{$M_\circ$}{the blown-up background spacetime}}
\label{SsFMc1}

We begin by solving away the error term $\Err_0$ from Lemma~\ref{LemmaFErr0} at $M_\circ$. Recall the notation~\eqref{EqBgPhgIndexSets}.

\begin{prop}[First correction at $M_\circ$]
\label{PropFMc1}
  There exist
  \[
    h^1_\sharp \in \cA_\phg^{(1,0)_+}(M_\circ;\upbeta_\circ^*S^2 T^*M),\qquad
    h^1_\flat \in \CI(M;S^2 T^*M),
  \]
  with the following properties:
  \begin{enumerate}
  \item\label{ItFMc1h1sharp} using Fermi normal coordinates near $\pa M_\circ$, we have
  \begin{align*}
    &h^1_\sharp=\hat\chi r h^1_{\sharp,(1,0)}+\tilde h^1_\sharp, \\
    &\qquad h^1_{\sharp,(1,0)}\in\CI(\pa M_\circ;\upbeta_\circ^*S^2 T^*_\cC M),\quad
            \tilde h^1_\sharp\in\cA_\phg^{(2,1)_+}(M_\circ;\upbeta_\circ^*S^2 T^*M),
  \end{align*}
  where $h^1_{\sharp,(1,0)}$ has vanishing scalar type $1$ and vector type $1$ components (on each fiber $\hat M_p\cap\pa M_\circ=\Sph^2$ of $\pa M_\circ$);
  \item\label{ItFMc1h1flat} $h^1_\flat$ vanishes quadratically at $\cC$;
  \item\label{ItFMc1Err1} if we set
    \begin{equation}
    \label{EqFMc1g1}
      \wt g^1 := \wt g_0 + \eps h^1,\qquad
      h^1:=\chi_\circ(h^1_\sharp+\upbeta_\circ^*h^1_\flat),
    \end{equation}
    then we have
    \begin{equation}
    \label{EqFMc1Err1}
      \Err^1 := \Ric(\wt g^1) - \Lambda\wt g^1 \in \cA_\phg^{(0,0)_+,(2,0)}(\wt M\setminus\wt K^\circ;S^2\wt T^*\wt M);
    \end{equation}
  \item we have $\wt g^1=g$ outside (the lift to $M_\circ$ of) the domain of influence $U$ of a compact subset of $\cU^\circ$ in $X$, and thus also $\supp\Err^1\cap M_\circ\subset\upbeta_\circ^*U$;
  \item the leading order term of $\Err^1$ at $\hat M$ is
    \begin{equation}
    \label{EqFMc1hat}
    \begin{split}
      \sfe(\Err^1|_{\hat M_{c(t)}}) &= \sfe(\Err_0|_{\hat M_{c(t)}}) + \wh{D_{\hat g_b}\Ric}(0)\bigl(\chi_\circ\hat r\sfe(h_{\sharp,(1,0)}^1(t))\bigr) \\
        &= \wh{D_{\hat g_b}\Ric}(0)\Bigl[\sfe\bigl( \hat r^2 g_{(2)}(t)+\chi_\circ\hat r h_{\sharp,(1,0)}^1(t) \bigr)\Bigr] - \Lambda\hat g_b;
    \end{split}
    \end{equation}
  \item the $\eps(\log\hat\rho)$ term of $\Err^1$ at $\hat M$ is
    \begin{equation}
    \label{EqFMc1hatLog}
      \sfe^{-1}\bigl(\eps\log(\eps\hat r)\Err^1_{(1,1)}\bigr),\qquad \Err^1_{(1,1)} := \wh{D_{\hat g_b}\Ric}(0)\bigl(\chi_\circ\hat r^2\sfe(h^1_{\sharp,(2,1)})\bigr),
    \end{equation}
    and satisfies $\sfe^{-1}\Err^1_{(1,1)}\in\rho_\circ\CI(\hat M;S^2\wt T^*\wt M)$; here\footnote{The notation reflects the fact that $h^1_{\sharp,(2,1)}$ is the $r^2\log r$ term in the polyhomogeneous expansion of $\tilde h^1_\sharp$ at $r=0$.} $h^1_{\sharp,(2,1)}\in\CI(\pa M_\circ;\upbeta_\circ^*S^2 T^*_\cC M)$ is a sum of scalar and vector type $2$ tensors, as is $(\rho_\circ^{-1}\Err^1_{(1,1)})|_{\pa\hat M}$. That is, if one extends $\Err^1_{(1,1)}$ to an $\eps$-independent tensor near $\hat M^\circ$ in the coordinates $t,\hat x$, then
    \[
      \Err^1-\hat\chi\sfe^{-1}\bigl(\eps\log(\eps\hat r)\Err^1_{(1,1)}\bigr)\in\cA_\phg^{(0,0)_+\setminus\{(1,1)\},(2,0)}(\wt M\setminus\wt K^\circ;S^2\wt T^*\wt M).
    \]
  \end{enumerate}
\end{prop}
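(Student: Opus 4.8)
The plan is to apply Theorem~\ref{ThmAc} to the leading error at $M_\circ$, namely to $f_0:=(\eps^{-1}\Err_0)|_{M_\circ}\in\hat\rho^{-1}\CI(M_\circ;\upbeta_\circ^*S^2 T^*M)$, which by Lemma~\ref{LemmaFErr0} equals $(D_g\Ric-\Lambda)(\tfrac{2\bhm}{r}(\dd t^2+\dd r^2))$ modulo $\CIdot$ and, crucially, satisfies $\delta_g\sfG_g f_0=0$ on $(M_\circ)^\circ=M\setminus\cC$ (this is the linearized second Bianchi identity applied to $\Err_0$, using $\Ric(g)=\Lambda g$). Here $\Err_0\in\rho_\circ\CI$ means $\eps^{-1}\Err_0\in\hat\rho^{-1}\CI$, so $f_0\in\cA_\phg^{\hat\cF}$ with $\hat\cF=\N_0-1$, which has $\Re\hat\cF>-2$ as required by Theorem~\ref{ThmAc}; the corresponding output index set is $\hat\cE=(1,0)_+$ in the notation of~\eqref{EqBgPhgIndexSets}, which matches the claimed membership $h^1_\sharp\in\cA_\phg^{(1,0)_+}(M_\circ;\upbeta_\circ^*S^2 T^*M)$. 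Theorem~\ref{ThmAc} produces $h^1=h^1_\sharp+\upbeta_\circ^*h^1_\flat$ with $(D_g\Ric-\Lambda)h^1=f_0$ on $M\setminus\cC$, with $h^1_\flat\in\CI(M;S^2 T^*M)$ vanishing quadratically at $\cC$ (part~\eqref{ItAchflat}, using that $\cC$ is a geodesic here), and with the support contained in the domain of influence $U$ of a compact subset of $\cU^\circ$. This immediately gives properties~\eqref{ItFMc1h1flat} and the support statement.

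\textbf{Refined structure of $h^1_\sharp$.} For property~\eqref{ItFMc1h1sharp} I would invoke Remark~\ref{RmkAcBdyFormalLog} (control of logarithmic terms): the leading order term of $f_0$ at $\pa M_\circ$ is $(r f_0)|_{r=0}$, which by Lemma~\ref{LemmaFErr0} has vanishing scalar type $1$ and vector type $1$ components on each fiber $\hat M_p\cap\pa M_\circ\cong\Sph^2$. Tracking through the proof of Lemma~\ref{LemmaAcBdyFormal}: the $r^1$-coefficient $h^1_{\sharp,(1,0)}$ of $h^1_\sharp$ solves $N(D\Ric,1)h^1_{\sharp,(1,0)}=(r f_0)|_{r=0}$ (in the notation of~\S\ref{SRic}, with smooth parametric dependence on $t$), and since the right-hand side has no $\rms 1$ or $\rmv 1$ parts and $N(D\Ric,1)$ preserves pure types in the strong sense of Remark~\ref{RmkMkYAction}, we may choose $h^1_{\sharp,(1,0)}$ with vanishing $\rms 1$ and $\rmv 1$ parts. (Here one uses that $1\notin\{-l-1,l\}$ for $l=1$ is false---$\lambda=1$ \emph{is} exceptional for $\rmv 1$ and $\rms 1$ by Proposition~\ref{PropRicMS}---but since the corresponding components of the source vanish, we can and do take the $\rms 1$, $\rmv 1$ parts of $h^1_{\sharp,(1,0)}$ to be zero.) The remainder $\tilde h^1_\sharp=h^1_\sharp-\hat\chi r h^1_{\sharp,(1,0)}$ then has index set contained in $(2,1)_+$ by the structure of $\hat\cE$. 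Writing $h^1_{\sharp,(2,1)}\in\CI(\pa M_\circ;\upbeta_\circ^*S^2 T^*_\cC M)$ for the $r^2\log r$-coefficient of $\tilde h^1_\sharp$: at the step where this term is generated, one solves an indicial equation at $\lambda=2=l$ for $l=2$, which is exceptional; the $k'=1$ alternative in Theorem~\ref{ThmRicSolv}\eqref{ItRicSolvl} kicks in precisely when the $\rms 2$ or $\rmv 2$ part of the relevant source is nonzero, and in that case $h^1_{\sharp,(2,1)}$ is the $\ker N(D\Ric,2)$-valued leading coefficient, which consists of $\rms 2$ and $\rmv 2$ tensors by Proposition~\ref{PropRicMS}\eqref{ItRicMSsl},\eqref{ItRicMSvl}.

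\textbf{Error term identities.} Properties~\eqref{ItFMc1Err1} and the support statement follow by combining $(D_g\Ric-\Lambda)h^1=f_0$ with Proposition~\ref{PropELTAcc}\eqref{ItELTAccMhat} (accuracy of the linearization near $M_\circ$): setting $\wt g^1=\wt g_0+\eps h^1$, we get $\Err^1=\Err_0+\eps(D_g\Ric-\Lambda)h^1+(\text{nonlinear})$; by construction the $\rho_\circ^1$ (equivalently $\eps^1$) term at $M_\circ$ cancels, giving $\Err^1\in\cA_\phg^{(0,0)_+,(2,0)}$ where the $(2,0)$ index set at $M_\circ$ reflects the gain of one order (the nonlinear and cosmological-constant terms being of even higher order in $\eps$, cf.\ the index-set bookkeeping in the proof of Proposition~\ref{PropELTAcc}). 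The $\hat M$-model identity~\eqref{EqFMc1hat} follows from Lemma~\ref{LemmaELTLin}: $\sfe\circ N_{\hat M}(\eps^2(D_{\wt g}\Ric-\Lambda))\circ\sfe^{-1}=D_{\hat g_b}\Ric$ combined with~\eqref{EqFErr0MhatLot}, since the restriction of $\eps h^1$ to $\hat M$ contributes $\chi_\circ\hat r\sfe(h^1_{\sharp,(1,0)}(t))$ at the front face (note $\eps h^1_\sharp=\eps\cdot r h^1_{\sharp,(1,0)}=\hat r\eps^2 h^1_{\sharp,(1,0)}/\hat r\cdot\ldots$, i.e.\ after applying $\sfe$, which multiplies by $\eps^2$, and restricting, one picks up $\chi_\circ\hat r\sfe(h^1_{\sharp,(1,0)})$; the $\upbeta_\circ^*h^1_\flat$ piece vanishes to second order at $\cC$ and contributes nothing). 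Finally, for the $\eps\log\hat\rho$ term~\eqref{EqFMc1hatLog}: the $r^2\log r$ term $\hat\chi r^2(\log r)h^1_{\sharp,(2,1)}$ of $h^1_\sharp$, when multiplied by $\eps$ and lifted near $\hat M^\circ$, becomes (in $t,\hat x$ coordinates) $\eps^3\hat r^2\log(\eps\hat r)\,\hat\chi\,h^1_{\sharp,(2,1)}$, and applying $\sfe$ (multiplication by $\eps^2$... careful: $\sfe$ acts on $S^2\wt T^*\wt M$ by $\eps^2$, so $\sfe(\eps\, h^1_\sharp|_{\text{this term}})=\eps^3\cdot\eps^2\ldots$; the clean statement is that $N_{\hat M}$ of $D_{\wt g}\Ric$ applied to this term produces $\wh{D_{\hat g_b}\Ric}(0)$ acting on $\chi_\circ\hat r^2\sfe(h^1_{\sharp,(2,1)})$, yielding $\Err^1_{(1,1)}$). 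Since $h^1_{\sharp,(2,1)}$ is a sum of $\rms 2$ and $\rmv 2$ tensors and $\wh{D_{\hat g_b}\Ric}(0)\equiv\wh{D_{\hat g_{b_0}}\Ric}(0)\bmod\rho_\circ^4\Diffb^2$ preserves pure types modulo lower order, $(\rho_\circ^{-1}\Err^1_{(1,1)})|_{\pa\hat M}$ inherits the $\rms 2$, $\rmv 2$ structure, and $\sfe^{-1}\Err^1_{(1,1)}\in\rho_\circ\CI$ by the mapping property $\wh{D_{\hat g_b}\Ric}(0)\colon\hat r^2\CI=\rho_\circ^{-2}\CI\to\rho_\circ^2\cdot\rho_\circ^{-2}\CI=\CI$, improved to $\rho_\circ\CI$ since $\hat r^2$ times $\rms 2$/$\rmv 2$ tensors lie in $\ker N(D\Ric,2)$ modulo one order. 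The last displayed inclusion is then just the statement that the $(1,1)$-entry of the $\hat M$-index set is accounted for by this explicit term.

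\textbf{Main obstacle.} The delicate point is the bookkeeping of pure types and logarithmic orders in property~\eqref{ItFMc1h1sharp} and~\eqref{EqFMc1hatLog}: one must verify that the $\rms 1$/$\rmv 1$ vanishing of the leading source (Lemma~\ref{LemmaFMc1Pure}, which itself relies on the Einstein equation $\Ric(g)=\Lambda g$) propagates correctly through the iterative construction of Lemma~\ref{LemmaAcBdyFormal} so that $h^1_{\sharp,(1,0)}$ can be chosen without $\rms 1$, $\rmv 1$ parts, and that the only $r^2\log r$ term that arises has $\rms 2$/$\rmv 2$ type---this requires carefully invoking the refined solvability statements of Theorem~\ref{ThmRicSolv} and Remark~\ref{RmkAcBdyFormalLog} at the exceptional exponents $\lambda=1,2$ rather than the coarse version of Proposition~\ref{PropAcBdy}, and checking that the interaction terms in Proposition~\ref{PropELTAcc}\eqref{ItELTAccMhat} do not spoil the claimed index set. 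Everything else is a direct application of Theorems~\ref{ThmAc} and~\ref{ThmAhPhg}-adjacent machinery together with the normal operator computations of Lemmas~\ref{LemmaELTLin} and~\ref{LemmaFErr0}.
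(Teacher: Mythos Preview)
Your proposal is correct and follows essentially the same route as the paper: apply Theorem~\ref{ThmAc} to $f_0=-(\eps^{-1}\Err_0)|_{M_\circ}$ (note the sign, so that $\wt g^1=\wt g_0+\eps h^1$ cancels the leading error), use the vanishing of the $\rms 1$/$\rmv 1$ components of the leading term of $f_0$ (Lemma~\ref{LemmaFErr0}) together with Theorem~\ref{ThmRicSolv}\eqref{ItRicSolvl} at $z=1$ to get $h^1_{\sharp,(1,0)}$ without $\rms 1$/$\rmv 1$ parts and hence $\tilde h^1_\sharp\in\cA_\phg^{(2,1)_+}$, and then read off the $\hat M$-leading and $\eps\log\hat\rho$ terms via Proposition~\ref{PropELTAcc} and the identity $\eps\, r^2\log r = \eps^3\hat r^2\log(\eps\hat r)$. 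Your identification of $h^1_{\sharp,(2,1)}\in\ker N(D\Ric,2)$ as $\rms 2$/$\rmv 2$ (from the $k'=1$ case of Theorem~\ref{ThmRicSolv}\eqref{ItRicSolvl} at $z=l=2$), and the resulting $\sfe^{-1}\Err^1_{(1,1)}\in\rho_\circ\CI$, matches the paper exactly.
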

\begin{proof}
  We obtain $h^1$ as the solution of
  \begin{equation}
  \label{EqFMc1hatEqh1}
    (D_g\Ric-\Lambda)h^1 = f_0 := -(\eps^{-1}\Err_0)|_{M_\circ} \in \hat\rho^{-1}\CI(M_\circ;\upbeta^*S^2 T^*M)
  \end{equation}
  by means of Theorem~\ref{ThmAc} with $\hat\cF=(-1+\N_0)\times\{0\}$. We have $\delta_g\sfG_g f_0=0$ since the second Bianchi identity, away from $\hat M$, gives
  \[
    0 = \delta_{\wt g_0}\sfG_{\wt g_0}\Err_0 \equiv \delta_g\sfG_g(\eps f_0) \bmod \eps^2\CI(\wt M\setminus\hat M;S^2\wt T^*\wt M).
  \]
  We need to be a bit more precise (cf.\ Remark~\ref{RmkAcBdyFormalLog}): near $\pa M_\circ$ and in Fermi normal coordinates, the forcing term is $f_0=r^{-1}f_{0,(-1,0)}+\tilde f_0$ where $f_{0,(-1,0)}\in\CI(\pa M_\circ;\upbeta_\circ^*S^2 T^*_\cC M)$ has vanishing scalar type $1$ and vector type $1$ components by Lemma~\ref{LemmaFErr0}, and $\tilde f_0\in\CI(M_\circ;\upbeta^*S^2 T^*M)$. Theorem~\ref{ThmRicSolv}\eqref{ItRicSolvl} then produces $h^1_{\sharp,(1,0)}$ with $\wh{D_{\ubar g}\Ric}(0)h^1_{\sharp,(1,0)}=f_{0,(-1,0)}$ (at each fiber of $\pa M_\circ\to\cC$), and
  \[
    f_0 - (D_g\Ric-\Lambda)\bigl(\hat\chi r h^1_{\sharp,(1,0)}\bigr) \in \CI(M_\circ;\upbeta^*S^2 T^*M)
  \]
  is one order better at $\pa M_\circ$ than $f_0$ and still lies in $\ker\delta_g\sfG_g$. Applying Theorem~\ref{ThmAc} to this error term produces $\tilde h_\sharp^1$ and $h_\flat^1$ satisfying properties~\eqref{ItFMc1h1sharp} and \eqref{ItFMc1h1flat}. In view of Theorem~\ref{ThmRicSolv}\eqref{ItRicSolvl} with $z=l=2$, one does expect there to be a nontrivial leading order logarithmic term $r^2(\log r)h^1_{\sharp,(2,1)}$ in the expansion of $\tilde h^1_\sharp$, with $h^1_{\sharp,(2,1)}$ necessarily being the sum of scalar and vector type $2$ tensors lying in $\ker N(r^2\wh{D_{\ubar g}\Ric}(0),2)$.

  Since $\eps h^1\in\cA_\phg^{(2,0)_+,(1,0)}(\wt M\setminus\wt K^\circ;S^2\wt T^*\wt M)$, with $(2,0)_+-2=(0,0)_+$ nonlinearly closed, property~\eqref{ItFMc1Err1} and the leading order description~\eqref{EqFMc1hat} at $\hat M$ follow from Proposition~\ref{PropELTAcc}\eqref{ItELTAccMcirc}. The leading order logarithmic contribution to $\wt g^1$ at $\hat M$ comes from the term
  \begin{equation}
  \label{EqFMc1Eps3LogEps}
    \eps r^2(\log r)\chi_\circ h^1_{\sharp,(2,1)} = \eps^3\log(\eps\hat r)\chi_\circ \hat r^2 h^1_{\sharp,(2,1)},
  \end{equation}
  and thus, following the arguments following~\eqref{EqELTAccFastCoord}, the leading order logarithmic term of $\Err^1$ is given by
  \[
    \sfe^{-1}\eps^{-2}\wh{D_{\hat g_b}\Ric}(0)\bigl(\eps^3(\log\eps)\chi_\circ\hat r^2\sfe(h^1_{\sharp,(2,1)})\bigr),
  \]
  as claimed in~\eqref{EqFMc1hatLog}. The fact that $\sfe^{-1}\Err^1_{(1,1)}\in\rho_\circ\CI$ follows from the earlier observation $h^1_{\sharp,(2,1)}\in\ker N(r^2\wh{D_{\ubar g}\Ric}(0),2)$.
\end{proof}

\subsection{First correction at \texorpdfstring{$\hat M$}{the front face of the total gluing spacetime}: modulation and re-centering}
\label{SsFhM1}

From~\eqref{EqFMc1Err1}, we deduce that
\[
  \sfe(\Err^1|_{\hat M})\in\rho_\circ^2\CI_{\rm I}(\breve T_\cC M\setminus\breve K^\circ;S^2\,{}^{\tscop,\vee}T(\breve T_\cC M)),
\]
with the explicit expression given by~\eqref{EqFMc1hat}; here we write $\breve K^\circ=\bigsqcup_{p\in\cC}\breve K_p^\circ$ where $\breve K_p\subset\breve T_p M$ is the closure of $\{|\hat x|\leq\bhm\}$. Restricted to a single fiber $\breve T_p M$ of $\breve T_\cC M$, we can identify the stationary tensor $\sfe(\Err^1|_{\hat M})$ with an element of $\rho_\circ^2\CI(\hat X_b;S^2\,\Ttsc^*_{\hat X_b}\hat M_b)$, and thus\footnote{This simply means that $f^1$ is a smooth family (in $t\in I$) of symmetric 2-tensors $f^1(t,\hat x)_{\mu\nu}\dd\hat z^\mu\,\dd\hat z^\nu$ where $\hat z=(\hat t,\hat x)\in\R^{1+3}$ and $f^1(t,-)_{\mu\nu}\in\CI(\{\hat x\in\R^3\colon|\hat x|\geq\bhm\})$ is smooth in $\rho_\circ=|\hat x|^{-1}$, $\frac{\hat x}{|\hat x|}$, and vanishes quadratically at $\rho_\circ=0$.}
\begin{equation}
\label{EqFhM1f1}
  f^1 := \sfe(\Err^1|_{\hat M}) \in \CI\bigl(I;\rho_\circ^2\CI(\hat X_b;S^2\,\Ttsc^*_{\hat X_b}\hat M_b)\bigr).
\end{equation}
Moreover, the identity $\delta_{\wt g^1}\sfG_{\wt g^1}\Err^1=0$ implies (in view of the stationarity of $f^1(t)$ for all $t\in I$) that
\[
  \wh{\delta_{\hat g_b}\sfG_{\hat g_b}}(0)f^1(t)=0\qquad\text{for all}\ t\in I.
\]
We investigate the condition~\eqref{EqAhPhgEquiv} required for an application of Theorem~\ref{ThmAhPhg} and thus set
\begin{align*}
  f^1_{\rm Kerr} &:= \la f^1,-\ra_{L^2(\hat X_b)} \in \CI(I;(\cK_{b,\rm Kerr}^*)^*), \\
  f^1_{\rm COM} &:= \la f^1,-\ra_{L^2(\hat X_b)} \in \CI(I;(\cK_{b,\rm COM}^*)^*)
\end{align*}
in the notation of Theorem~\ref{ThmAhKCoker}.

\begin{lemma}[$f^1$ and the cokernel of linearized Ricci]
\label{LemmaFhM1Coker}
  We use the notation of Theorem~\usref{ThmAhPhg}. For $\omega^*=\omega_0^*$ as well as for $\omega^*=\omega_{1 2}^*$ when $\bha\neq 0$,\footnote{Recall here the normalization~\eqref{EqFAngMomNorm}.} and for $\omega^*=\omega_{j k}^*$ for all $1\leq j<k\leq 3$ when $\bha=0$, we have
  \[
    \la f^1(t),\wh{\sfG_{\hat g_b}\delta_{\hat g_b}^*}(0)\omega^*\ra_{L^2(\hat X_b)} = f^1_{\rm Kerr}\bigl(\wh{\sfG_{\hat g_b}\delta_{\hat g_b}^*}(0)\omega^*\bigr) = 0
  \]
  for all $t\in I$.
\end{lemma}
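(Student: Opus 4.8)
The idea is to reduce the statement to a pairing computation with the \emph{leading order term} of $f^1(t)$ at $\pa\hat M$, where $f^1$ is given by~\eqref{EqFhM1f1} together with the explicit formula~\eqref{EqFMc1hat}. Since $f^1(t)\in\rho_\circ^2\CI$ has at least inverse quadratic decay, while the dual pure gauge tensors $h^*=\wh{\sfG_{\hat g_b}\delta_{\hat g_b}^*}(0)\omega^*$ under consideration also lie in $\rho_\circ^2\CI$ by Lemma~\ref{LemmaAhKdef} (and are supported away from $\hat r^{-1}(\bhm)$, as $\omega^*$ is chosen per Theorem~\ref{ThmAhKCoker}), the $L^2(\hat X_b)$-pairing converges absolutely only marginally; the natural move is to insert a regularizer $\chi_\eps=\chi(\rho_\circ/\eps)$ and write
\[
  \la f^1(t),h^*\ra_{L^2(\hat X_b)} = \lim_{\eps\searrow 0}\la f^1(t),\chi_\eps h^*\ra = \lim_{\eps\searrow 0}\big\la [\wh{D_{\hat g_b}\Ric}(0),\chi_\eps]\bigl(\hat r^2 g_{(2)}(t)+\chi_\circ\hat r h^1_{\sharp,(1,0)}(t)\bigr) - \Lambda\chi_\eps\hat g_b, \omega^*\big\ra,
\]
using~\eqref{EqFMc1hat}, integrating $\wh{D_{\hat g_b}\Ric}(0)$ by parts onto $\omega^*$ (legitimate because $\omega^*$ vanishes near the event horizon, so no inner boundary term), and using $\wh{D_{\hat g_b}\Ric}(0)^*\wh{\sfG_{\hat g_b}\delta_{\hat g_b}^*}(0)=0$. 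The $\Lambda$-term, paired against $\chi_\eps\hat g_b$: since $\hat g_b\in\CI$ near $\pa\hat M$ and $\omega^*\in\cO(1)$ (for $\omega^*=\omega_0^*$) or $\omega^*\in\rho_\circ^{-1}\CI$ (for $\omega^*=\omega_{j k}^*$), the commutator $[\Lambda\cdot,\chi_\eps]=0$, so this contributes nothing. Alternatively, and more transparently, one can simply invoke Lemma~\ref{LemmaBgBdyPair}: the pairing equals a boundary pairing on $\pa\hat X_b\cong\Sph^2$ involving only the $\rho_\circ$-leading order terms of $f^1(t)$ and $\omega^*$.

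The second and decisive step is the \emph{pure type} argument, exactly parallel to the computations of parts~\eqref{ItAhKCokerKerrPC}, \eqref{ItAhCokerCOMKerr}, and~\eqref{ItAhKCokerInnerRing} in the proof of Theorem~\ref{ThmAhKCoker}. By Lemma~\ref{LemmaFErr0}, the common boundary value $(\hat r\,\Err_0)|_{\pa\hat M}$ has vanishing scalar type $1$ and vector type $1$ components; combined with the fact, established in the proof of Proposition~\ref{PropFMc1}\eqref{ItFMc1h1sharp}, that $h^1_{\sharp,(1,0)}$ also has vanishing scalar and vector type $1$ parts, one sees from~\eqref{EqFMc1hat} that the $\rho_\circ^2$ leading order term of $f^1(t)$ at $\pa\hat M$ — which is $\wh{D_{\hat g_b}\Ric}(0)$, i.e.\ modulo $\rho_\circ^4\Diffb^2$ its Schwarzschild model $\wh{D_{\hat g_{b_0}}\Ric}(0)$, applied to the scalar/vector type $2$ (and type $0$) tensors $\hat r^2\sfe g_{(2)}(t)+\chi_\circ\hat r\,\sfe h^1_{\sharp,(1,0)}(t)$ — has vanishing scalar type $1$ and vector type $1$ components, since $\wh{D_{\hat g_{b_0}}\Ric}(0)$ preserves pure types and $\hat g_b$ is of scalar type $0$ modulo $\rho_\circ^2\CI$. (This is precisely the last claim of Lemma~\ref{LemmaFErr0}, now read one order up.) On the other hand, $\omega_0^*$ has $\rho_\circ$-leading term of scalar type $0$, so $\wh{\sfG_{\hat g_b}\delta_{\hat g_b}^*}(0)\omega_0^*$ has leading order term of scalar type $0$; and $\omega_{j k}^*$ has $\rho_\circ^{-1}$-leading term $\hat r^2\vect(\fq)$ of vector type $1$, so $\wh{\sfG_{\hat g_b}\delta_{\hat g_b}^*}(0)\omega_{j k}^*$ — which, when $\fq=e_3$ is parallel to $\bha$, or when $\bha=0$, is a linearized Kerr metric $\hat g'_b(0,\dot\bha)$ by Lemma~\ref{LemmaFhM1AxisLie} (to be proved below; cf.\ the argument around~\eqref{EqGKLot} in the proof of part~\eqref{ItAhCokerCOMKerr}) — has $\rho_\circ^2$ leading order term whose scalar type $1$ part vanishes, being a sum of scalar type $0$, scalar type $2$, and vector type $1$ pieces by the analysis of~\eqref{EqGKLot}. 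In every case the boundary pairing is an $L^2(\Sph^2)$-inner product of tensors of mutually orthogonal pure type supports, hence vanishes. (For $\omega_{j k}^*$ with $\bha\neq 0$ and $\fq\notin\R\bha$ the argument would fail — indeed such directions must be modulated — which is why the lemma restricts to $\omega_{1 2}^*$ when $\bha=a e_3\neq 0$.)

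The main obstacle is organizational rather than computational: one must be careful that the \emph{subleading} corrections to $f^1(t)$ beyond its $\rho_\circ^2$ leading order term, and the difference between $\wh{D_{\hat g_b}\Ric}(0)$ and its Schwarzschild model, do not spoil the pure-type orthogonality. This is handled exactly as in Theorem~\ref{ThmAhKCoker}: because $\omega_0^*\in\cO(1)$ and $\hat g'_b(0,\dot\bha)\in\rho_\circ^2\CI$ (for the relevant $\omega_{j k}^*$), the relevant pairings are \emph{genuine boundary pairings} depending only on leading order terms at $\pa\hat X_b$ — the regularized limit exists without sensitivity to lower order data — so the type decomposition of the leading terms suffices. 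One subtlety to record explicitly: for $\omega^*=\omega_{j k}^*$ one pairs $f^1\in\rho_\circ^2\CI$ against $h^*\in\rho_\circ^2\CI$, and the product decays like $\rho_\circ^4$, which is more than enough for $L^2(\hat X_b)$-integrability with respect to the scattering density~\eqref{EqAh0Density}; so in this case no regularization is even needed and the conclusion is immediate from the orthogonality of the (pointwise, hence fiberwise-on-$\Sph^2$ after the $\rho_\circ^2$ rescaling) pure type components. The case $\omega^*=\omega_0^*$ is the one that truly requires the boundary pairing formalism of Lemma~\ref{LemmaBgBdyPair}. Having checked the vanishing for these specific $\omega^*$, the identification $f^1_{\rm Kerr}(\wh{\sfG_{\hat g_b}\delta_{\hat g_b}^*}(0)\omega^*)=\la f^1,\wh{\sfG_{\hat g_b}\delta_{\hat g_b}^*}(0)\omega^*\ra$ is immediate from the definition of $f^1_{\rm Kerr}$ and $\cK^*_{b,\rm Kerr}$.
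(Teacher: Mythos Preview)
Your approach has a genuine gap, and it misses the structural fact that makes the lemma essentially immediate.

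The specific $\omega^*$ in the statement are distinguished precisely because the corresponding vector fields are \emph{exact Killing fields} of $\hat g_b$: $\pa_{\hat t}$ always, the rotation about $\bha/|\bha|$ when $\bha\neq 0$, and all rotations when $\bha=0$. The paper's proof chooses the representative $\omega^*=\chi W^\flat$ with $\chi=\chi(\hat r)$ a radial cutoff and $W$ such a Killing field; then $\delta_{\hat g_b}^*W^\flat=0$ forces $h^*=\sfG_{\hat g_b}[\delta_{\hat g_b}^*,\chi]W^\flat\in\CIc(\hat X_b^\circ)$ to be \emph{compactly supported}. With this, the $\wh{D_{\hat g_b}\Ric}(0)H$ part of~\eqref{EqFMc1hat} integrates by parts to zero with no boundary term, and the $\Lambda\hat g_b$ part reduces to $\int\delta_{\hat g_b}(\chi W^\flat)\,\dd\hat g_b|_{\hat X_b}=-\int\chi'\,\dd\hat r(W)\,\dd\hat g_b|_{\hat X_b}=0$ since $W(\hat r)=0$. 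No boundary pairing or pure-type bookkeeping is needed.

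Your pure-type orthogonality argument, by contrast, does not close. You correctly observe that the $\rho_\circ^2$-leading term of $f^1$ (and the leading term of $H=\hat r^2\sfe g_{(2)}+\chi_\circ\hat r\sfe h^1_{\sharp,(1,0)}$) has vanishing $\rms 1$ and $\rmv 1$ components; but it certainly has nonzero $\rms 0$ components (Lemma~\ref{LemmaFMc1Pure} only kills $\rms 1,\rmv 1$, and the $-\Lambda\hat g_b$ term is $\rms 0$ to leading order). For $\omega^*=\omega_0^*$ you identify the leading term of $h^*$ as $\rms 0$ as well --- so these are \emph{not} orthogonal, and your conclusion ``mutually orthogonal pure type supports, hence vanishes'' is simply false in this case. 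Your treatment of the $\Lambda$ term is also incorrect: once you split $f^1=\wh{D_{\hat g_b}\Ric}(0)H-\Lambda\hat g_b$, neither piece individually has a convergent pairing with $h^*$ (the pointwise product $\hat g_b\cdot h^*\in\rho_\circ^2$ is below the $L^1$ threshold $\rho_\circ^{3+}$), so the statement ``$[\Lambda\cdot,\chi_\eps]=0$, so this contributes nothing'' is a non sequitur --- the regularized term $\la\Lambda\hat g_b,\chi_\eps h^*\ra$ diverges as $\eps\to 0$. Finally, for $\omega_{12}^*$ when $\bha\neq 0$ (and for all $\omega_{jk}^*$ when $\bha=0$) you invoke Lemma~\ref{LemmaFhM1AxisLie}, but that lemma treats rotations with axis \emph{perpendicular} to $\bha$; the rotation about $\bha$ itself is the exact Killing field, which gives $\delta_{\hat g_b}^*(\hat r^2\vect)=0$, not a linearized Kerr metric.
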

\begin{proof}
  Let $\chi=\chi(\hat r)\in\CI(\hat X_b)$ be a radial function which is equal to $1$ near $\pa\hat X_b$ and supported in $\hat r=|\hat x|>\bhm$. In view of the discussion around~\eqref{EqAhKIBP}, we may take $\omega_0^*=\chi\pa_{\hat t}^\flat$ and $\omega_{1 2}^*=\chi V^\flat$ where $V=(e\times\hat x)\cdot\pa_{\hat x}$ is the rotation vector field around the axis $e=(0,0,1)^T$, which is the axis of rotation of $\hat g_b$ when $\bha\neq 0$, or $e\in\R^3$ is an arbitrary unit vector when $\bha=0$. Note then that $W=\pa_{\hat t},V$ is a Killing vector field, and therefore we have
  \[
    \wh{\sfG_{\hat g_b}\delta_{\hat g_b}^*}(0)\omega^* = \sfG_{\hat g_b}[\delta_{\hat g_b}^*,\chi]W^\flat\in\CIc(\hat X_b^\circ;\Ttsc^*_{\hat X_b}\hat M_b).
  \]
  Since this moreover vanishes near $\hat r=\hat\bhm$ by definition of $\chi$, we conclude that for any $h\in\sD'(\hat X_b^\circ;S^2 T^*_{\hat X_b^\circ}\hat M_b^\circ)$ we have
  \begin{equation}
  \label{EqFhM1CokerIBP}
    \la \wh{D_{\hat g_b}\Ric}(0)h, \wh{\sfG_{\hat g_b}\delta_{\hat g_b}^*}(0)\omega^*\ra = \la h, \wh{D_{\hat g_b}\Ric}(0)^*\wh{\sfG_{\hat g_b}\delta_{\hat g_b}^*}(0)\omega^*\ra = 0.
  \end{equation}
  Recalling the expression~\eqref{EqFMc1hat} for $f^1(t)$, it remains to note that
  \[
    \la\hat g_b,\wh{\sfG_{\hat g_b}\delta_{\hat g_b}^*}(0)\omega^*\ra = \int_{\hat X_b} \delta_{\hat g_b}(\omega^*)\,\dd\hat g_b|_{\hat X_b} = 0
  \]
  since, for $W=\pa_{\hat t},V$, we have
  \[
    \delta_{\hat g_b}(\chi W^\flat)=-\chi' W^\flat(\nabla\hat r)=-\chi'\hat g_b(W,\nabla\hat r)=-\chi'\dd\hat r(W)=0.\qedhere
  \]
\end{proof}

\begin{cor}[Parameters for elimination of the cokernel]
\label{CorFhM1CokerParam}
  We use the notation of Theorem~\usref{ThmAhKCoker}. There exist unique functions
  \[
    \hat c^1 \in \CI(I;\R^3),\qquad
    \begin{cases} \bha=0: & \dot\bha^1=0, \\ \bha\neq 0: & \dot\bha^1 \in \CI(I;(\R\bha)^\perp), \end{cases}
  \]
  so that
  \[
    f^1_{\rm Kerr}=\ell_{b,\rm Kerr}(0,\dot\bha^1),\qquad
    f^1_{\rm COM}=\ell_{b,\rm COM}(\hat c^1).
  \]
\end{cor}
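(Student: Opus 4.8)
\textbf{Proof plan for Corollary~\ref{CorFhM1CokerParam}.} The statement is an immediate consequence of the invertibility properties of the maps $\ell_{b,\rm Kerr}$ and $\ell_{b,\rm COM}$ recorded in Theorem~\ref{ThmAhKCoker}, combined with Lemma~\ref{LemmaFhM1Coker}. The plan is as follows. First I would recall the setup: by~\eqref{EqFhM1f1} we have $f^1(t)\in\rho_\circ^2\CI(\hat X_b;S^2\,\Ttsc^*_{\hat X_b}\hat M_b)$ depending smoothly on $t\in I$, with $\wh{\delta_{\hat g_b}\sfG_{\hat g_b}}(0)f^1(t)=0$, so that the pairings $f^1_{\rm Kerr}(t)=\la f^1(t),-\ra|_{\cK_{b,\rm Kerr}^*}$ and $f^1_{\rm COM}(t)=\la f^1(t),-\ra|_{\cK_{b,\rm COM}^*}$ are well-defined elements of $(\cK_{b,\rm Kerr}^*)^*$, resp.\ $(\cK_{b,\rm COM}^*)^*$, depending smoothly on $t$ (the decay $\rho_\circ^2$ of $f^1$ guarantees integrability against the $\rho_\circ^2$-decaying dual pure gauge tensors, which either vanish in the black hole interior or have supports disjoint from $\hat r^{-1}(\bhm)$ by hypothesis on the $\omega^*$'s).

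Next, to solve for $\dot\bha^1$: by Theorem~\ref{ThmAhKCoker}\eqref{ItAhKCokerKerrPK}, the map $\pi_{b,\rm Kerr}\circ\ell_{b,\rm Kerr}\colon\R^4\to(\cK_{b,\rm Kerr}^*)^*$ is an isomorphism, so there is a unique $\dot b^1(t)=(\dot\bhm^1(t),\dot\bha^1(t))\in\R\times\R^3$ with $\pi_{b,\rm Kerr}(\ell_{b,\rm Kerr}(\dot b^1(t)))=f^1_{\rm Kerr}(t)$; smoothness in $t$ follows since the inverse isomorphism is a fixed linear map and $f^1_{\rm Kerr}$ is smooth in $t$. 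It remains to check that in fact $\dot\bhm^1(t)=0$ and, when $\bha\neq0$, $\dot\bha^1(t)\in(\R\bha)^\perp$. This is exactly where Lemma~\ref{LemmaFhM1Coker} enters: reading off the table in Theorem~\ref{ThmAhKCoker}, the functional $\ell_{b,\rm Kerr}(\dot\bhm,0)$ pairs nontrivially with $\sfG_{\hat g_b}\delta_{\hat g_b}^*\dd\hat t$ (value $-8\pi\dot\bhm$), while $\ell_{b,\rm Kerr}(0,\dot\bha)$ pairs with $\sfG_{\hat g_b}\delta_{\hat g_b}^*(\hat r^2\vect(\fq))$ giving $-8\pi\bhm(\fq\cdot\dot\bha)$; but Lemma~\ref{LemmaFhM1Coker} asserts precisely that $f^1_{\rm Kerr}$ annihilates $\sfG_{\hat g_b}\delta_{\hat g_b}^*\omega_0^*$ and (when $\bha\neq0$) $\sfG_{\hat g_b}\delta_{\hat g_b}^*\omega_{1 2}^*$, respectively, and all of $\cK_{b,\rm Kerr}^*$ when $\bha=0$. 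Feeding this through the table: $f^1_{\rm Kerr}$ vanishing on $\sfG_{\hat g_b}\delta_{\hat g_b}^*\dd\hat t$ forces the $\dd\hat t$-component of $\ell_{b,\rm Kerr}(\dot b^1)$ to vanish, i.e.\ (using also the $\bha$-coupling row) $\dot\bhm^1=0$; and $f^1_{\rm Kerr}$ vanishing against $\sfG_{\hat g_b}\delta_{\hat g_b}^*(\hat r^2\vect(\bha))$ (taking $\fq=\bha$) forces $\bhm(\bha\cdot\dot\bha^1)=0$, hence $\dot\bha^1\perp\bha$; when $\bha=0$ the same reasoning with $\fq$ ranging over all of $\R^3$ forces $\dot\bha^1=0$ outright. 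This yields the stated normalization of $\dot\bha^1$.

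Then, to solve for $\hat c^1$: since $\pi_{b,\rm Kerr}\circ\ell_{b,\rm COM}=0$ by Theorem~\ref{ThmAhKCoker}\eqref{ItAhCokerCOMKerr}, and $\pi_{b,\rm COM}\circ\ell_{b,\rm Kerr}=0$ by Theorem~\ref{ThmAhKCoker}\eqref{ItAhKCokerKerrPC}, the decomposition $\cK_{b,\rm tot}^*=\cK_{b,\rm COM}^*\oplus\cK_{b,\rm Kerr}^*$ block-diagonalizes the modulation maps. Thus I would apply $\pi_{b,\rm COM}$: by Theorem~\ref{ThmAhKCoker}\eqref{ItAhKCokerCOM}, $\pi_{b,\rm COM}\circ\ell_{b,\rm COM}\colon\R^3\to(\cK_{b,\rm COM}^*)^*$ is an isomorphism, so there is a unique $\hat c^1(t)\in\R^3$ with $\pi_{b,\rm COM}(\ell_{b,\rm COM}(\hat c^1(t)))=f^1_{\rm COM}(t)$, again depending smoothly on $t$. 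The identities $\ell_{b,\rm COM}(\hat c^1)=f^1_{\rm COM}$ and $\ell_{b,\rm Kerr}(0,\dot\bha^1)=f^1_{\rm Kerr}$ as \emph{full} functionals on $\cK_{b,\rm tot}^*$ (not merely on the respective summands) now follow because $\ell_{b,\rm COM}(\hat c^1)$ has vanishing $\cK_{b,\rm Kerr}^*$-part by \eqref{ItAhCokerCOMKerr} while $f^1_{\rm COM}$ is by definition the restriction of $\la f^1,-\ra$ to $\cK_{b,\rm COM}^*$, and dually $\ell_{b,\rm Kerr}(0,\dot\bha^1)$ has vanishing $\cK_{b,\rm COM}^*$-part by \eqref{ItAhKCokerKerrPC}. (Here one uses that the full functional $\la f^1,-\ra\in(\cK_{b,\rm tot}^*)^*$ decomposes as $f^1_{\rm COM}\oplus f^1_{\rm Kerr}$ under the dual splitting.) Uniqueness of $(\hat c^1,\dot\bha^1)$ with the stated constraints is then clear from the injectivity of the two isomorphisms.

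I do not expect any serious obstacle here: the corollary is a bookkeeping consequence of Theorem~\ref{ThmAhKCoker} and Lemma~\ref{LemmaFhM1Coker}, both already proved. The only point requiring a modicum of care is the verification that $\dot\bhm^1$ and the $\R\bha$-component of $\dot\bha^1$ genuinely vanish — i.e.\ that the additional annihilation statements of Lemma~\ref{LemmaFhM1Coker} are exactly matched, via the explicit table in Theorem~\ref{ThmAhKCoker}, to the coordinates of $\ell_{b,\rm Kerr}^{-1}$ one wants to kill. One should also note en route that the decomposition of $\la f^1,-\ra$ into its $\cK_{b,\rm COM}^*$- and $\cK_{b,\rm Kerr}^*$-parts is precisely $f^1_{\rm COM}\oplus f^1_{\rm Kerr}$, so that solving the two block equations separately does produce a solution of the combined system; this is immediate from the direct-sum structure $\cK_{b,\rm tot}^*=\cK_{b,\rm COM}^*\oplus\cK_{b,\rm Kerr}^*$.
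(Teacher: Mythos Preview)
Your proposal is correct and follows essentially the same approach as the paper: invoke the isomorphism statements of Theorem~\ref{ThmAhKCoker} for existence/uniqueness, and use Lemma~\ref{LemmaFhM1Coker} together with the explicit table to pin down the vanishing of $\dot\bhm^1$ and the $\R\bha$-component of $\dot\bha^1$. The only cosmetic difference is that the paper characterizes the image $\ell_{b,\rm Kerr}(0,(\R\bha)^\perp)\subset(\cK_{b,\rm Kerr}^*)^*$ as the annihilator of $\sfG_{\hat g_b}\delta_{\hat g_b}^*\omega_0^*$ and $\sfG_{\hat g_b}\delta_{\hat g_b}^*\omega_{1 2}^*$ and then observes $f^1_{\rm Kerr}$ lies in it, whereas you first solve in $\R^4$ and then verify the constraints on the solution---the content is identical.
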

\begin{proof}
  The existence and uniqueness of $\hat c^1(t)$, $t\in I$, follows from the fact that $\ell_{b,\rm COM}\colon\R^3\to(\cK_{b,\rm COM}^*)^*$ is an isomorphism. When $\bha=0$, then $f^1_{\rm Kerr}=0$ by Lemma~\ref{LemmaFhM1Coker}, which thus equals $\ell_{b,\rm Kerr}(0,0)$ indeed. When $\bha\neq 0$ on the other hand, then the explicit expression for $\ell_{b,\rm Kerr}$ in Theorem~\ref{ThmAhKCoker} shows that the space $\ell_{b,\rm Kerr}(0,(\R\bha)^\perp)\subset(\cK_{b,\rm Kerr}^*)^*$ consists of all linear functionals on $\cK_{b,\rm Kerr}^*$ which annihilate $\sfG_{\hat g_b}\delta_{\hat g_b}^*\omega^*$ for $\omega^*=\omega_0^*$ and $\omega_{1 2}^*$; this space thus contains $f^1_{\rm Kerr}(t)$ for each $t\in I$. The smoothness of $\hat c^1$ and $\dot\bha^1$ along $I$ follows from their uniqueness and the linearity of their construction.
\end{proof}

We shall account for $\hat c^1$ by modulating the center of mass of the small black hole, and for $\dot\bha^1$ in the case $\bha\neq 0$ by modulating its axis of rotation. We first motivate our strategy for the two modulations separately in~\S\S\ref{SssFhM1COM}--\ref{SssFhM1Axis}. In~\S\ref{SssFhM1Comb}, we combine the two modulations, and in~\S\ref{SssFhM1Mod} we graft them into the total gluing spacetime.

\subsubsection{Center of mass}
\label{SssFhM1COM}

In Fermi normal coordinates $(t,x)$, and with $\hat x=\frac{x}{\eps}$, and for $\hat c\in\CI(I;\R^3)$, define \emph{for fixed $t\in I$} the diffeomorphism
\[
  \Phi_{1,\hat c(t)} \colon (\eps,\hat x) \mapsto (\eps,\hat x+\hat c(t))
\]
on $[0,1)_\eps\times\{t\}\times\R^3_{\hat x}$. Here and in~\S\ref{SssFhM1Axis} below, we consider a simplified setting in which instead of $\wt g^1$ we work on
\[
  [0,1)_\eps\times\hat M^\circ = [0,1)_\eps \times I_t \times \R^3_{\hat x}
\]
(which is diffeomorphic to a neighborhood of $\hat M^\circ\subset\wt M$) with a section of the pullback of $S^2 T^*_\cC M$ which on a level set of $t$ we define by $\Phi_{1,\hat c(t)}^*\hat g_b$; we denote this tensor by $(\Phi_{1,\hat c(t)}^*\hat g_b)_{t\in I}$. We emphasize that $\hat g_b=\hat g_b(\hat x;\dd\hat t,\dd\hat x)$; thus,
\[
  (\Phi_{1,\hat c(t)}^*\hat g_b)_{t\in I}=\hat g_b(\hat x+\hat c(t);\dd\hat t,\dd\hat x).
\]
Working near the interior of the front face of $[\wt M;\hat M_{c(t_0)}]$ (cf.\ \eqref{EqGffBlowup}) with coordinates
\begin{equation}
\label{EqFhM1hatCoord}
  \eps,\quad
  \hat t=\frac{t-t_0}{\eps},\quad
  \hat x,
\end{equation}
we have
\begin{equation}
\label{EqFhM1COMFiberwise}
  (\Phi_{1,\hat c(t)}^*\hat g_b)_{t\in I} = \hat g_b(\hat x+\hat c(t_0+\eps\hat t);\dd\hat t,\dd\hat x).
\end{equation}
Taylor expanding $\hat c$ around $t_0$ gives $\hat c(t)=\hat c(t_0)+(t-t_0)\hat c'(t_0)+\frac{(t-t_0)^2}{2}\hat c''(t_0)+(t-t_0)^3\CI$ and thus (regarding $t$ as a parameter, and subsequently expressing it in terms of $\hat t$), defining $\Phi_{\hat c(t_0)}\colon(\hat t,\hat x)\mapsto(\hat t,\hat x+\hat c(t_0))$,
\begin{equation}
\label{EqFhM1COMTaylor}
  (\Phi_{1,\hat c(t)}^*\hat g_b)_{t\in I} \equiv \Phi_{\hat c(t_0)}^*\Bigl(\hat g_b + \eps\hat t\cL_{\hat c'(t_0)\cdot\pa_{\hat x}}\hat g_b + \frac{\eps^2\hat t^2}{2}\bigl( \cL_{\hat c'(t_0)\cdot\pa_{\hat x}}^2 + \cL_{\hat c''(t_0)\cdot\pa_{\hat x}}\bigr)\hat g_b \Bigr) \bmod \eps^3\CI
\end{equation}
for bounded $\hat t,\hat x$; here, we write $\CI=\CI([0,1)_\eps\times\R^4_{\hat t,\hat x};S^2 T^*\R^4)$. This gives
\begin{equation}
\label{EqFhM1COMRicFiberwise}
\begin{split}
  \Ric((\Phi_{1,\hat c(t)}^*\hat g_b)_{t\in I}) &\equiv \Phi_{\hat c(t_0)}^*\Bigl( \Ric(\hat g_b) + \eps D_{\hat g_b}\Ric\bigl(\hat t\cL_{\hat c'(t_0)\cdot\pa_{\hat x}}\hat g_b\bigr)\Bigr) \\
    &\equiv 2\eps \Phi_{\hat c(t_0)}^*\bigl( [D_{\hat g_b}\Ric,\hat t] h_{b,\hat c'(t_0)}\bigr) \bmod \eps^2\CI,
\end{split}
\end{equation}
where we used~\eqref{EqAhKTrans}. (An explicit calculation using Corollary~\ref{CorMkDiffDRic} shows that this does not vanish unless $\hat c'(t_0)=0$.)

\begin{rmk}[Comparison with total pullback]
\label{RmkFhM1COMTotal}
  The pullback of $(t,\hat x)\mapsto\hat g_b(\hat x;\dd\hat t,\dd\hat x)$ along
  \begin{equation}
  \label{EqFhM1TotalDiffeo}
    \Phi_{\hat c}\colon(\eps,t,\hat x)\mapsto(\eps,t,\hat x+\hat c(t)),
  \end{equation}
  which in the coordinates~\eqref{EqFhM1hatCoord} is
  \begin{equation}
  \label{EqFhM1COMTotal}
    \Phi_{\hat c} \colon (\eps,\hat t,\hat x)\mapsto(\eps,\hat t,\hat x+\hat c(t_0+\eps\hat t)) = \bigl(\eps,\hat t,\hat x+\hat c(t_0) + \eps\hat t\hat c'(t_0) + \cO(\eps^2)\bigr),
  \end{equation}
  is equal to
  \begin{equation}
  \label{EqFhM1COMTotalPb}
    \Phi_{\hat c}^*\hat g_b = \hat g_b\bigl(\hat x+\hat c(t_0+\eps\hat t),\dd\hat t,\dd\hat x+\eps\hat c'(t_0+\eps\hat t)\dd\hat t\bigr).
  \end{equation}
  Since $\hat g_b$ is Ricci-flat, this pullback is also Ricci-flat. Note that this pullback metric differs from~\eqref{EqFhM1COMFiberwise} when $\hat c'\neq 0$. The Taylor expansion of $\Phi_{\hat c}^*\hat g_b$ in~\eqref{EqFhM1COMTotalPb} at $\eps=0$ (and for bounded $\hat t$) is
  \[
    \Phi_{\hat c}^*\hat g_b \equiv \Phi_{\hat c(t_0)}^*\bigl(\hat g_b + \eps\cL_{\hat t\hat c'(t_0)\cdot\pa_{\hat x}}\hat g_b\bigr) \bmod \eps^2\CI.
  \]
  Compared with~\eqref{EqFhM1COMTaylor}, the factor $\hat t$ is in the argument of the Lie derivative here.
\end{rmk}

Returning to~\eqref{EqFhM1COMRicFiberwise}, recall now from Definition~\ref{DefAhKBoosts} that
\begin{equation}
\label{EqFhM1COMBoost}
\begin{split}
  &\hat t h_{b,\hat c'(t_0)}+\breve h_{b,\hat c'(t_0)} = \frac12\cL_{\hat t\hat c'(t_0)\cdot\pa_{\hat x}+(\hat c'(t_0)\cdot\hat x)\pa_{\hat t}}\hat g_b \in\ker D_{\hat g_b}\Ric \\
  &\qquad \implies \breve h_{b,\hat c'(t_0)} = \frac12\bigl(\cL_{\hat t\hat c'(t_0)\cdot\pa_{\hat x}}-\hat t\cL_{\hat c'(t_0)\cdot\pa_{\hat x}} + \cL_{(\hat c'(t_0)\cdot\hat x)\pa_{\hat t}}\bigr)\hat g_b.
\end{split}
\end{equation}
We thus deduce that $\Ric((\Phi_{1,\hat c(t)}^*(\hat g_b+2\eps\breve h_{b,\hat c'(t)}))_{t\in I})\in\eps^2\CI$. We proceed to compute the $\eps^2$ leading order term, which is related to the tensors in~\eqref{EqAhKCokerC} and~\eqref{EqAhKCoker2}. We first record the following general result:

\begin{lemma}[Lie derivatives and the Taylor expansion of Ricci]
\label{LemmaFhM1Lie}
  If $(\cM,g)$ is Ricci-flat and $V\in\cV(\cM)$, then
  \begin{align*}
    D_g^2\Ric(\cL_V g,\cL_V g) &= -D_g\Ric(\cL_V^2 g), \\
    D_g^3\Ric(\cL_V g,\cL_V g,\cL_V g)&=-3 D_g^2\Ric(\cL_V g,\cL_V^2 g) - D_g\Ric(\cL_V^3 g).
  \end{align*}
  For $V,W\in\cV(\cM)$, we furthermore have
  \begin{equation}
  \label{EqFhM1LiePolarized}
    D_g^2\Ric(\cL_V g,\cL_W g) = -\frac12\bigl(D_g\Ric(\cL_V\cL_W g) + D_g\Ric(\cL_W\cL_V g)\bigr).
  \end{equation}
\end{lemma}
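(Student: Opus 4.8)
\textbf{Plan for the proof of Lemma~\ref{LemmaFhM1Lie}.} The whole statement is a formal consequence of differentiating the diffeomorphism covariance of the Ricci operator. The plan is as follows. For a fixed vector field $V$, let $\phi_s$ denote its flow, so that $\phi_s^*g$ is a smooth one-parameter family of metrics with $\phi_0^*g=g$ and $\frac{\dd}{\dd s}\phi_s^*g|_{s=0}=\cL_V g$, and more generally $\frac{\dd^k}{\dd s^k}\phi_s^*g|_{s=0}=\cL_V^k g$. By naturality of curvature, $\Ric(\phi_s^*g)=\phi_s^*\Ric(g)$. Since $(\cM,g)$ is Ricci-flat, the right-hand side vanishes identically in $s$, hence $\frac{\dd^k}{\dd s^k}\Ric(\phi_s^*g)\big|_{s=0}=0$ for every $k\in\N$.

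The first identity is then just the $k=2$ case: by the chain rule (Fa\`a di Bruno for the second derivative of a composition), writing $h(s)=\phi_s^*g$, we have
\[
  0 = \frac{\dd^2}{\dd s^2}\Ric(h(s))\Big|_{s=0} = D_g^2\Ric\bigl(h'(0),h'(0)\bigr) + D_g\Ric\bigl(h''(0)\bigr) = D_g^2\Ric(\cL_V g,\cL_V g) + D_g\Ric(\cL_V^2 g),
\]
which rearranges to the claim. The second identity is the $k=3$ case: the third derivative of $s\mapsto\Ric(h(s))$ at $s=0$ equals $D_g^3\Ric(h'(0),h'(0),h'(0)) + 3 D_g^2\Ric(h'(0),h''(0)) + D_g\Ric(h'''(0))$, and setting this to zero and substituting $h^{(k)}(0)=\cL_V^k g$ gives the stated relation. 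For the polarized identity~\eqref{EqFhM1LiePolarized}, I would either polarize the first identity directly — replacing $V$ by $V+W$, expanding both sides using bilinearity of $D_g^2\Ric$, and subtracting the pure $V$ and pure $W$ identities — noting that $\cL_{V+W}^2 g = \cL_V^2 g + \cL_W^2 g + \cL_V\cL_W g + \cL_W\cL_V g$; or, equivalently, use the two-parameter family $\psi_{s,u}=\phi_s\circ\chi_u$ where $\chi_u$ is the flow of $W$, apply $\partial_s\partial_u|_{s=u=0}$ to $\Ric(\psi_{s,u}^*g)=\psi_{s,u}^*\Ric(g)=0$, and observe that $\partial_s\partial_u|_0(\psi_{s,u}^*g)$ symmetrizes to $\frac12(\cL_V\cL_W+\cL_W\cL_V)g$ while $\partial_s|_0\psi_{s,u}^*g=\cL_V g$ and $\partial_u|_0\psi_{s,u}^*g=\cL_W g$.

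There is essentially no obstacle here; the only point requiring a modicum of care is the bookkeeping in the chain rule for the third derivative and the combinatorial factor $3$, and making sure the mixed second-derivative term in the polarization is correctly symmetrized (the Lie derivatives $\cL_V$ and $\cL_W$ do not commute, which is precisely why the right-hand side of~\eqref{EqFhM1LiePolarized} is a symmetrized sum rather than a single term). I would state the computation compactly and leave the elementary verification to the reader.
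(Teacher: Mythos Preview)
Your proposal is correct and follows essentially the same approach as the paper: Taylor expand $\Ric(\Psi_s^*g)=0$ (where $\Psi_s$ is the flow of $V$) around $s=0$ and read off the vanishing of the $s^2$ and $s^3$ coefficients, with the mixed identity obtained by polarization. The paper simply writes ``follows by polarization'' for~\eqref{EqFhM1LiePolarized}, whereas you spell out the $V\mapsto V+W$ substitution; both are fine.
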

\begin{proof}
  Denote by $\Psi_s$ the time $s$ flow of $V$, which is defined on any fixed precompact open subset $U\subset\cM$ when $s$ is sufficiently small (depending on $U$). We compute the Taylor expansion of $\Ric(\Psi_s^*g)=0$ around $s=0$ using
  \begin{equation}
  \label{EqFhM1LieMetric}
    \Psi_s^*g = g + s\cL_V g + \frac{s^2}{2}\cL_V^2 g + \frac{s^3}{6}\cL_V^3 g + \cO(s^4)
  \end{equation}
  where $\cO(s^4)$ denotes a smooth symmetric 2-tensor on $U$ which depends smoothly on $s$ when $s$ is sufficiently small and vanishes to fourth order at $s=0$, to be
  \begin{align*}
    0 &= \Ric(g) + s D_g\Ric(\cL_V g) + \frac{s^2}{2}\bigl(D_g\Ric(\cL_V^2 g)+D_g^2\Ric(\cL_V g,\cL_V g)\bigr) \\
      &\qquad + \frac{s^3}{6}\bigl(D_g\Ric(\cL_V^3 g) + 3 D_g^2\Ric(\cL_V g,\cL_V^2 g) + D_g^3\Ric(\cL_V g,\cL_V g,\cL_V g) \bigr) + \cO(s^4).
  \end{align*}
  The vanishing of the coefficients of $s^2$ and $s^3$ gives the desired result. The identity~\eqref{EqFhM1LiePolarized} follows by polarization.
\end{proof}

\begin{lemma}[Ricci tensor of fiberwise pullback: center of mass]
\label{LemmaFhM1COMRic}
  Consider on $[0,1)_\eps\times\hat M^\circ$ the symmetric 2-tensor (in $\dd\hat t$, $\dd\hat x$, with smooth dependence on $\eps,t,\hat x$) which for fixed $t\in I$ is given by $\Phi_{1,\hat c(t)}^*(\hat g_b+2\eps\breve h_{b,\hat c'(t)})$. For $t_0\in I$, set $V(t_0)=\hat t\hat c'(t_0)\cdot\pa_{\hat x}+(\hat c'(t_0)\cdot\hat x)\pa_{\hat t}$ (generator of a Lorentz boost). In terms of $t=t_0+\eps\hat t$, we then have
  \begin{equation}
  \label{EqFhM1COMRicMet}
    \bigl(\Phi_{1,\hat c(t)}^*(\hat g_b+2\eps\breve h_{b,\hat c'(t)})\bigr) \equiv \Phi_{\hat c(t_0)}^*\bigl(\hat g_b + \eps\cL_{V(t_0)}\hat g_b\bigr) \bmod \eps^2\CI,
  \end{equation}
  and furthermore
  \begin{equation}
  \label{EqFhM1Ric}
  \begin{split}
    &\Ric\bigl( (\Phi_{1,\hat c(t)}^*(\hat g_b+2\eps\breve h_{b,\hat c'(t)}))_{t\in I} \bigr) \\
    &\qquad \equiv \eps^2\Phi_{\hat c(t_0)}^*\Bigl[ 2 D_{\hat g_b}\Ric\Bigl(\frac{\hat t^2}{2}h_{b,\hat c''(t_0)}+\hat t\breve h_{b,\hat c''(t_0)}\Bigr) + D_{\hat g_b}\Ric(f(t_0)) \Bigr] \bmod \eps^3\CI
  \end{split}
  \end{equation}
  where $f\in\CI(I;\CI(\hat X_b;S^2\,\Ttsc^*_{\hat X_b}\hat M_b))$.\footnote{That is, the components of $f$ in the frame $\dd\hat t,\dd\hat x$ are smooth on $I_t\times(\ol{\R^3_{\hat x}}\setminus\{|\hat x|<\bhm\})$.}
\end{lemma}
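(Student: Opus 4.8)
\textbf{Proof plan for Lemma~\ref{LemmaFhM1COMRic}.}

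The plan is to compute the Taylor expansion of the fiberwise pullback at $\eps=0$ directly, and to organize the second-order term using the Lie-derivative identities of Lemma~\ref{LemmaFhM1Lie}. For the metric identity~\eqref{EqFhM1COMRicMet}: starting from~\eqref{EqFhM1COMTaylor}, I would add the correction $2\eps\breve h_{b,\hat c'(t)}$. Taylor-expanding $2\eps\breve h_{b,\hat c'(t)}$ around $t_0$ contributes $2\eps\breve h_{b,\hat c'(t_0)}$ modulo $\eps^2\CI$ (the $\cO(t-t_0)$ correction has a factor $t-t_0=\eps\hat t$, hence is $\cO(\eps^2)$ for bounded $\hat t$), and one must also account for the fact that the overall pullback $\Phi_{1,\hat c(t)}^*$ acts on $\breve h_{b,\hat c'(t)}$; but since $\breve h_{b,\hat c'(t)}=\cO(\eps^0)$ and $\Phi_{1,\hat c(t)}^*=\Phi_{\hat c(t_0)}^*+\cO(\eps)$ (using~\eqref{EqFhM1COMTotal} with $t$ a parameter), this too only affects the $\eps^2$ term. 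Thus the $\eps^1$ coefficient is $\Phi_{\hat c(t_0)}^*(\hat t\cL_{\hat c'(t_0)\cdot\pa_{\hat x}}\hat g_b+2\breve h_{b,\hat c'(t_0)})$, which by~\eqref{EqFhM1COMBoost} equals $\Phi_{\hat c(t_0)}^*\cL_{V(t_0)}\hat g_b$, giving~\eqref{EqFhM1COMRicMet}.

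For~\eqref{EqFhM1Ric}: since the metric in~\eqref{EqFhM1COMRicMet} agrees with $\Phi_{\hat c(t_0)}^*(\hat g_b+\eps\cL_{V(t_0)}\hat g_b)$ modulo $\eps^2\CI$, its Ricci tensor agrees with that of the latter modulo $\eps^2\CI$ as well (a $\cO(\eps^2)$ perturbation of a metric produces a $\cO(\eps^2)$ change in Ricci), but I need the $\eps^2$ coefficient, so I must expand one order further. Writing $\fg_\eps$ for the exact fiberwise pullback $\Phi_{1,\hat c(t)}^*(\hat g_b+2\eps\breve h_{b,\hat c'(t)})$, I would expand $\fg_\eps\equiv\Phi_{\hat c(t_0)}^*(\hat g_b+\eps\fh_1+\eps^2\fh_2)$ modulo $\eps^3\CI$, where $\fh_1=\cL_{V(t_0)}\hat g_b$ and $\fh_2$ collects the three sources of $\eps^2$ terms: (i) the $\frac{\hat t^2}{2}(\cL_{\hat c'(t_0)\cdot\pa_{\hat x}}^2+\cL_{\hat c''(t_0)\cdot\pa_{\hat x}})\hat g_b$ term from~\eqref{EqFhM1COMTaylor}; (ii) the $\eps^1$ correction to $\breve h_{b,\hat c'(t)}$ (a $\hat t$-linear term involving $\hat c''(t_0)$) composed with $\Phi_{\hat c(t_0)}^*$ at leading order; and (iii) the $\eps^1$ correction to the diffeomorphism $\Phi_{1,\hat c(t)}$ acting on $2\breve h_{b,\hat c'(t_0)}$. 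Then
\[
  \Ric(\fg_\eps) \equiv \Phi_{\hat c(t_0)}^*\Bigl( D_{\hat g_b}\Ric(\fh_1)\eps + \bigl(D_{\hat g_b}\Ric(\fh_2) + \tfrac12 D_{\hat g_b}^2\Ric(\fh_1,\fh_1)\bigr)\eps^2 \Bigr) \bmod \eps^3\CI.
\]
The $\eps^1$ term vanishes since $\fh_1=\cL_{V(t_0)}\hat g_b$ is a pure gauge (Lie-derivative) perturbation of the Ricci-flat $\hat g_b$. For the $\eps^2$ term, Lemma~\ref{LemmaFhM1Lie} gives $\tfrac12 D_{\hat g_b}^2\Ric(\cL_{V(t_0)}\hat g_b,\cL_{V(t_0)}\hat g_b)=-\tfrac12 D_{\hat g_b}\Ric(\cL_{V(t_0)}^2\hat g_b)$, so the whole $\eps^2$ coefficient is $D_{\hat g_b}\Ric$ applied to $\fh_2-\tfrac12\cL_{V(t_0)}^2\hat g_b$. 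It then remains to identify the $\hat t$-quadratic part of this combination with $2(\tfrac{\hat t^2}{2}h_{b,\hat c''(t_0)}+\hat t\breve h_{b,\hat c''(t_0)})$ up to a stationary remainder $f(t_0)$: the $\hat c''(t_0)$-linear piece of $\fh_2$ is $\tfrac{\hat t^2}{2}\cL_{\hat c''(t_0)\cdot\pa_{\hat x}}\hat g_b+(\text{$\hat t$-linear in $\hat c''(t_0)$})$, which by the same boost identity~\eqref{EqFhM1COMBoost} (now applied to $\hat c''$) reassembles into $\hat t^2 h_{b,\hat c''(t_0)}+2\hat t\breve h_{b,\hat c''(t_0)}$ modulo a stationary tensor; and the $\hat c'(t_0)$-quadratic pieces of $\fh_2$ together with $-\tfrac12\cL_{V(t_0)}^2\hat g_b$ must, by the computation already carried out in~\eqref{EqAhKCoker2} in the proof of Theorem~\ref{ThmAhKCoker}, combine into a tensor whose image under $D_{\hat g_b}\Ric$ is stationary (indeed there it was shown $D_{\hat g_b}\Ric(\tfrac{\hat t^2}{2}h_{b,\hat c}+\hat t\breve h_{b,\hat c})=\tfrac12[[D_{\hat g_b}\Ric,\hat t],\hat t]h_{b,\hat c}+[D_{\hat g_b}\Ric,\hat t]\breve h_{b,\hat c}$, which is stationary). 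Absorbing all stationary contributions into $f(t_0)$ yields~\eqref{EqFhM1Ric}, and the smoothness and support properties of $f$ follow from those of $\hat g_b$, $h_{b,\hat c}$, $\breve h_{b,\hat c}$ and their conormality at $\pa\hat X_b$ (Lemmas~\ref{LemmaAhKdef} and~\ref{LemmaAhKBoostLot}), together with the smooth dependence of all data on $t_0\in I$.

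The main obstacle I anticipate is the careful bookkeeping of the three sources of $\eps^2$ terms and verifying that the $\hat c'(t_0)$-quadratic terms (which are \emph{not} of the clean form $\tfrac{\hat t^2}{2}h_{b,\hat c}+\hat t\breve h_{b,\hat c}$) produce only a stationary contribution to the Ricci tensor; this is exactly the algebraic content that was isolated in the commutator manipulation~\eqref{EqAhKCoker2} for Theorem~\ref{ThmAhKCoker}, so I would reduce to that identity rather than recompute from scratch. A secondary point requiring care is that the claimed expansions hold only for bounded $\hat t,\hat x$ (i.e.\ near the interior of the front face $\hat M_{c(t_0)}^\circ$), but since both sides of~\eqref{EqFhM1COMRicMet}--\eqref{EqFhM1Ric} are polyhomogeneous on $\wt M$ (by Corollary~\ref{CorGLCurvature} and the structure of $\hat g_b$), equality in the fiber interior for every $t_0$ propagates to the stated equality of leading/subleading coefficients at $\hat M$.
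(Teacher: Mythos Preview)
Your proposal follows the paper's approach closely and the derivation of~\eqref{EqFhM1COMRicMet} as well as the overall structure for~\eqref{EqFhM1Ric}---expand to order $\eps^2$, apply Lemma~\ref{LemmaFhM1Lie} to convert $\tfrac12 D_{\hat g_b}^2\Ric(\cL_V\hat g_b,\cL_V\hat g_b)$ to $-\tfrac12 D_{\hat g_b}\Ric(\cL_V^2\hat g_b)$, then split into a $\hat c''$-linear and a $\hat c'$-quadratic piece---is exactly what the paper does.

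The one imprecision is your proposed handling of the $\hat c'(t_0)$-quadratic remainder. The identity~\eqref{EqAhKCoker2} computes $D_{\hat g_b}\Ric$ applied to the \emph{linear-in-$\hat c$} combination $\tfrac{\hat t^2}{2}h_{b,\hat c}+\hat t\breve h_{b,\hat c}$; it does not say anything about the quadratic terms $\tfrac{\hat t^2}{2}\cL_{\hat c'\cdot\pa_{\hat x}}^2\hat g_b + 2\hat t\cL_{\hat c'\cdot\pa_{\hat x}}\breve h_{b,\hat c'}$ versus $\tfrac12\cL_{V(t_0)}^2\hat g_b$. The paper instead sets $A_1=\wt\cL(\hat c'\cdot\pa_{\hat x})$, $A_0=[\wt\cL,\hat t](\hat c'\cdot\pa_{\hat x})+\wt\cL((\hat c'\cdot\hat x)\pa_{\hat t})$, so that $\cL_{V(t_0)}=\hat t A_1+A_0$, and observes that $\tfrac12(\hat t A_1+A_0)^2\hat g_b = (\tfrac{\hat t^2}{2}A_1^2+\hat t A_1 A_0)\hat g_b + (\tfrac12[A_0,\hat t A_1]+\tfrac12 A_0^2)\hat g_b$. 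The first parenthesis is precisely your $\hat c'$-quadratic piece of $\fh_2$, so the remainder is $-(\tfrac12[A_0,\hat t A_1]+\tfrac12 A_0^2)\hat g_b$. A short direct computation of $[A_0,\hat t A_1]$ (using $[A_1,\hat t]=0$ and the vanishing of $[[\wt\cL,f](V),\wt\cL(f V)]$ when $V(f)=0$) shows this equals $\cL_{W(t_0)}\hat g_b$ plus a stationary tensor, with $W(t_0)=\tfrac12((\hat c'\cdot\hat x)\hat c'\cdot\pa_{\hat x}-|\hat c'|^2\hat t\pa_{\hat t})$. The Lie derivative part is annihilated by $D_{\hat g_b}\Ric$, and the stationary part is absorbed into $f(t_0)$. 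So your anticipated conclusion is correct, but the route to it is this commutator algebra rather than~\eqref{EqAhKCoker2}.
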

\begin{proof}
  Using~\eqref{EqFhM1COMTaylor}, we have, modulo $\eps^3\CI$,
  \begin{equation}
  \label{EqFhM1RicgTaylor}
  \begin{split}
    &(\Phi_{1,\hat c(t)}^*(\hat g_b+2\eps\breve h_{b,\hat c'(t)}))_{t\in I} \\
    &\quad \equiv \Phi_{\hat c(t_0)}^*\Bigl[ \hat g_b + \eps\Bigl( \hat t\cL_{\hat c'(t_0)\cdot\pa_{\hat x}}\hat g_b + 2\breve h_{b,\hat c'(t_0)}\Bigr) \\
    &\quad\quad\hspace{4em} + \eps^2\Bigl(\frac{\hat t^2}{2}\cL_{\hat c''(t_0)\cdot\pa_{\hat x}}\hat g_b + \frac{\hat t^2}{2}\cL_{\hat c'(t_0)\cdot\pa_{\hat x}}^2\hat g_b + 2\hat t\breve h_{b,\hat c''(t_0)} + 2\hat t\cL_{\hat c'(t_0)\cdot\pa_{\hat x}}\breve h_{b,\hat c'(t_0)}\Bigr) \Bigr] \\
    &\quad = \Phi_{\hat c(t_0)}^*\Bigl[ \hat g_b + \eps \cL_{\hat t\hat c'(t_0)\cdot\pa_{\hat x}+(\hat c'(t_0)\cdot\hat x)\pa_{\hat t}}\hat g_b + 2\eps^2\Bigl(\frac{\hat t^2}{2}h_{b,\hat c''(t_0)}+\hat t\breve h_{b,\hat c''(t_0)}\Bigr) \\
    &\quad\quad\hspace{4em} + \eps^2\Bigl(\frac{\hat t^2}{2}\cL_{\hat c'(t_0)\cdot\pa_{\hat x}}^2\hat g_b + \hat t\cL_{\hat c'(t_0)\cdot\pa_{\hat x}}\bigl(\cL_{\hat t\hat c'(t_0)\cdot\pa_{\hat x}}-\hat t\cL_{\hat c'(t_0)\cdot\pa_{\hat x}} + \cL_{(\hat c'(t_0)\cdot\hat x)\pa_{\hat t}}\bigr)\hat g_b \Bigr) \Bigr],
  \end{split}
  \end{equation}
  where we used~\eqref{EqFhM1COMBoost} for the second equality. Motivated by Lemma~\ref{LemmaFhM1Lie}, we rewrite the final line. To simplify the notation, we use the notation $\wt\cL(V):=\cL_V(\cdot)$ from~\eqref{EqAhKCokerwtcL}. Writing $\hat c'=\hat c'(t_0)$, we then compute
  \begin{align*}
    &\frac12\cL_{\hat t\hat c'\cdot\pa_{\hat x}+(\hat c'\cdot\hat x)\pa_{\hat t}}^2\hat g_b = \frac12\wt\cL(\hat t\hat c'\cdot\pa_{\hat x}+(\hat c'\cdot\hat x)\pa_{\hat t})^2\hat g_b = \frac12(\hat t A_1 + A_0)^2\hat g_b, \\
    &\qquad A_1 := \wt\cL(\hat c'\cdot\pa_{\hat x}),\quad
            A_0 := [\wt\cL,\hat t](\hat c'\cdot\pa_{\hat x}) + \wt\cL((\hat c'\cdot\hat x)\pa_{\hat t}).
  \end{align*}
  Since $[A_1,\hat t]=0$, this is further equal to
  \[
    \Bigl(\frac{\hat t^2}{2}A_1^2 + \hat t A_1 A_0\Bigr)\hat g_b + \Bigl(\frac12[A_0,\hat t A_1] + \frac12 A_0^2\Bigr)\hat g_b.
  \]
  The first parenthesis precisely matches the $\eps^2$-coefficient in the final line of~\eqref{EqFhM1RicgTaylor}. Furthermore, $\frac12 A_0^2\hat g_b=A_0\breve h_{b,\hat c'}\in\rho_\circ\CI(\hat X_b;S^2\,\Ttsc^*_{\hat X_b}\hat M_b)$ (where $\rho_\circ=\la\hat x\ra^{-1}$) since $\breve h_{b,\hat c'}$ is of this class (and stationary) and $[\wt\cL,\hat t](\hat c'\cdot\pa_{\hat x})\in\Diffb^0(\hat X_b;S^2\,\Ttsc^*_{\hat X_b}\hat M_b)$ is stationary as well, and $\wt\cL((\hat c'\cdot\hat x)\pa_{\hat t})\breve h_{b,\hat c'}=[\wt\cL,\hat c'\cdot\hat x](\pa_{\hat t})\breve h_{b,\hat c'}$ with $[\wt\cL,\hat c'\cdot\hat x](\pa_{\hat t})\in\Diffb^0(\hat X_b;S^2\,\Ttsc^*_{\hat X_b}\hat M_b)$.

  We proceed to rewrite $\frac12[A_0,\hat t A_1]\hat g_b$: this is $\frac12$ times
  \begin{align*}
    &\bigl[ [\wt\cL,\hat t](\hat c'\cdot\pa_{\hat x}) + \wt\cL((\hat c'\cdot\hat x)\pa_{\hat t}), \wt\cL(\hat t\hat c'\cdot\pa_{\hat x}) - [\wt\cL,\hat t](\hat c'\cdot\pa_{\hat x}) \bigr]\hat g_b \\
    &\qquad = \bigl[ [\wt\cL,\hat t](\hat c'\cdot\pa_{\hat x}), \wt\cL(\hat t\hat c'\cdot\pa_{\hat x}) \bigr]\hat g_b + \bigl[\wt\cL((\hat c'\cdot\hat x)\pa_{\hat t}),\wt\cL(\hat t\hat c'\cdot\pa_{\hat x})\bigr]\hat g_b \\
    &\qquad\qquad - \bigl[\wt\cL((\hat c'\cdot\hat x)\pa_{\hat t}),[\wt\cL,\hat t](\hat c'\cdot\pa_{\hat x})\bigr]\hat g_b.
  \end{align*}
  The second term is the Lie derivative of $\hat g_b$ along $[(\hat c'\cdot\hat x)\pa_{\hat t},\hat t\hat c'\cdot\pa_x]=(\hat c'\cdot\hat x)\hat c'\cdot\pa_{\hat x}-|\hat c'|^2\hat t\pa_{\hat t}$; the third term is a stationary tensor which lies in $\CI(\hat X_b;S^2\,\Ttsc^*_{\hat X_b}\hat M_b)$. For $f=\hat t$, $V=\hat c'\cdot\pa_{\hat x}$, the operator acting on $\hat g_b$ in the first term is
  \begin{align*}
    \bigl[ [\wt\cL,f](V), \wt\cL(f V) \bigr] &= [ \wt\cL(f V)-f\wt\cL(V),\wt\cL(f V) ] = \cL_{f V}f\cL_V - f\cL_V\cL_{f V} \\
      &= f V(f)\cL_V+ f[\cL_{f V},\cL_V] = f V(f)\cL_V + f\cL_{[f V,V]} \\
      &= f V(f)\cL_V - f\cL_{V(f)V} = 0
  \end{align*}
  since $V(f)=0$. In summary, we have shown that
  \begin{align*}
    &(\Phi_{1,\hat c(t)}^*(\hat g_b+2\eps\breve h_{b,\hat c'(t)}))_{t\in I} \\
    &\qquad \Phi_{\hat c(t_0)}^*\Bigl[ \hat g_b + \eps\cL_{V(t_0)}\hat g_b + \frac{\eps^2}{2}\cL_{V(t_0)}^2\hat g_b + 2\eps^2\Bigl(\frac{\hat t^2}{2}h_{b,\hat c''(t_0)}+\hat t\breve h_{b,\hat c''(t_0)}\Bigr) \\
    &\qquad\hspace{18em} + \eps^2\bigl(\cL_{W(t_0)}\hat g_b + f(t_0)\bigr) \Bigr] \bmod \eps^3\CI,
\end{align*}
where $f\in\CI(I;\CI(\hat X_b;S^2\,\Ttsc^*_{\hat X_b}\hat M_b))$, and $W\in\CI(I;\cV(\R^{1+3}_{\hat t,\hat x}))$. (The proof gives $W(t_0)=\frac12((\hat c'(t_0)\cdot\hat x)\hat c'(t_0)\cdot\pa_{\hat x}-|\hat c'(t_0)|^2\hat t\pa_{\hat t})$.)

  Next, note that pullback of tensors along $\Phi_{\hat c(t_0)}^*$, regarded as a $t$-independent diffeomorphism of $t$-level sets coincides with pullback along $(\eps,t,\hat x)\mapsto(\eps,t,\hat x+\hat c(t_0))$ (cf.\ Remark~\ref{RmkFhM1COMTotal}) and thus commutes with the computation of the Ricci tensor. We moreover have $\Ric(\hat g_b+\eps\cL_{V(t_0)}\hat g_b+\frac{\eps^2}{2}\cL_{V(t_0)}^2\hat g_b)\in\eps^3\CI$ by (the proof of) Lemma~\ref{LemmaFhM1Lie}. The expression~\eqref{EqFhM1Ric} thus follows from the fact that $D_{\hat g_b}\Ric(\cL_{W(t_0)}\hat g_b)=0$.
\end{proof}

\subsubsection{Axis of rotation}
\label{SssFhM1Axis}

For this part, we shall assume
\[
  \bha\neq 0.
\]
(Otherwise, there is nothing to do by Corollary~\ref{CorFhM1CokerParam}.) We need to modify the tensor $(\eps,t,\hat x)\mapsto\hat g_b(\hat x;\dd\hat t,\dd\hat x)$, $t=t_0+\eps\hat t$, in such a way as to produce a term $\hat t g_b'(0,\dot\bha^1)$ in the $\eps^2$-coefficient of the Ricci tensor, similarly to (but in fact simpler than)~\eqref{EqFhM1Ric}. We define the rotation vector field
\begin{equation}
\label{EqFhM1AxisVF}
  \Omega(\dot\bha) = \Omega_0(\dot\bha,\hat x)\cdot\pa_{\hat x},\qquad
  \Omega_0(\dot\bha,\hat x) = \Bigl(\dot\bha\times\frac{\bha}{|\bha|^2}\Bigr)\times\hat x.
\end{equation}

\begin{lemma}[Linearized Kerr metric as a Lie derivative]
\label{LemmaFhM1AxisLie}
  If $\R^3\ni\dot\bha\perp\bha$ (in terms of the Euclidean metric on $\R^3$), then
  \[
    g_{\bhm,\bha}'(0,\dot\bha) = \cL_{\Omega(\dot\bha)}\hat g_b.
  \]
\end{lemma}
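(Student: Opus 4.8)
The statement is a purely infinitesimal/symmetry fact about the Kerr family: infinitesimally tilting the rotation axis from $\bha$ to $\bha + s\dot\bha$ (for $\dot\bha\perp\bha$) is, to first order, implemented by a spatial rotation. The plan is to exhibit a one-parameter family of spatial rotations $R_s\in SO(3)$ with $R_0=\mathrm{Id}$ whose infinitesimal generator is (the spatial part of) $\Omega(\dot\bha)$, and which maps the axis $\frac{\bha}{|\bha|}$ to $\frac{\bha+s\dot\bha}{|\bha+s\dot\bha|}$ to first order in $s$; then differentiate the identity $\hat g_{\bhm,\bha+s\dot\bha} = (R_s)_*\hat g_{\bhm,\bha}$ at $s=0$.

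First I would recall from Definition~\usref{DefGKModel} that $\hat g_{\bhm,\bha}=R_*\hat g_{\bhm,a}$ where $a=|\bha|$ and $R\in SO(3)$ is any rotation taking $(0,0,a)^T$ to $\bha$; since rotations about the $\hat x^3$-axis are isometries of $\hat g_{\bhm,a}$, the magnitude $a$ is all that matters together with the direction $\hat\bha=\frac{\bha}{|\bha|}$, and indeed $\hat g_{\bhm,\bha}$ depends on $\bha$ only through $a$ and $\hat\bha$. Since $\dot\bha\perp\bha$, we have $\frac{\dd}{\dd s}|\bha+s\dot\bha|\big|_{s=0}=0$, so to first order in $s$ only the \emph{direction} of the axis changes: $\frac{\dd}{\dd s}\frac{\bha+s\dot\bha}{|\bha+s\dot\bha|}\big|_{s=0}=\frac{\dot\bha}{|\bha|}$. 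The unique antisymmetric matrix $A$ with $A\bha=\dot\bha$ and $A\dot\bha'=0$ for $\dot\bha'\perp\mathrm{span}\{\bha,\dot\bha\}$ — equivalently the generator of the rotation in the $\bha$–$\dot\bha$ plane with the correct speed — is given by $A\hat x = \bfw\times\hat x$ with $\bfw=\dot\bha\times\frac{\bha}{|\bha|^2}$ (one checks $\bfw\times\bha=(\dot\bha\times\frac{\bha}{|\bha|^2})\times\bha=\dot\bha$ using $\dot\bha\cdot\bha=0$ and the vector triple product identity, and $\bfw\perp\bha$, $\bfw\perp\dot\bha$). Thus $R_s=\exp(sA)$ satisfies $R_0=\mathrm{Id}$, $\frac{\dd}{\dd s}(R_s\bha)|_{s=0}=\dot\bha$, and its generating vector field on $\R^3_{\hat x}$ is $\hat x\mapsto A\hat x=(\dot\bha\times\frac{\bha}{|\bha|^2})\times\hat x=\Omega_0(\dot\bha,\hat x)$, i.e. $\Omega(\dot\bha)$ from~\eqref{EqFhM1AxisVF}, extended by $0$ in the $\hat t$-direction.

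Next I would verify the key identity $\hat g_{\bhm,\bha+s\dot\bha}=(R_s)_*\hat g_{\bhm,\bha}$ for all small $s$. Since $R_s\bha$ and $\bha+s\dot\bha$ need not be literally equal, but only agree to first order and have the same norm to first order, I would instead argue as follows: write $\bha(s)=\bha+s\dot\bha$ and let $\tilde R_s\in SO(3)$ be any smooth family with $\tilde R_0=\mathrm{Id}$ and $\tilde R_s$ mapping $\hat\bha=\frac{\bha}{|\bha|}$ to $\frac{\bha(s)}{|\bha(s)|}$; then $\hat g_{\bhm,\bha(s)}=(\tilde R_s)_*\hat g_{\bhm,|\bha(s)|\hat\bha}=(\tilde R_s)_*\hat g_{\bhm,|\bha(s)|\,\hat\bha}$. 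Differentiating at $s=0$ and using $\frac{\dd}{\dd s}|\bha(s)|\big|_{s=0}=0$ kills the term coming from varying the magnitude, leaving $\frac{\dd}{\dd s}\hat g_{\bhm,\bha(s)}\big|_{s=0}=\cL_{V}\hat g_{\bhm,\bha}$ where $V=\frac{\dd}{\dd s}\tilde R_s\big|_{s=0}$ is a rotation vector field on $\R^3$; and one is free to choose $\tilde R_s=R_s$ since $\frac{\dd}{\dd s}(R_s\hat\bha)\big|_{s=0}=\frac{\dot\bha}{|\bha|}=\frac{\dd}{\dd s}\frac{\bha(s)}{|\bha(s)|}\big|_{s=0}$, matching derivatives. (Any two valid choices of $\tilde R_s$ differ, to first order, by a rotation about the axis $\hat\bha$, whose generator is a Killing vector field for $\hat g_{\bhm,\bha}$ and hence does not affect the Lie derivative.) This yields $\hat g_b'(0,\dot\bha)=\frac{\dd}{\dd s}\hat g_{\bhm,\bha(s)}\big|_{s=0}=\cL_{V}\hat g_b=\cL_{\Omega(\dot\bha)}\hat g_b$, which is the claim.

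\textbf{Main obstacle.} The only slightly delicate point is justifying that one may replace the literal variation $\bha+s\dot\bha$ by $R_s\bha$ inside $\hat g_{\bhm,\cdot}$ when differentiating — i.e., that the magnitude variation genuinely drops out and the two families of rotations can be matched. This is handled precisely by the observation $\dot\bha\perp\bha\Rightarrow\frac{\dd}{\dd s}|\bha+s\dot\bha|^2|_{s=0}=2\bha\cdot\dot\bha=0$, together with the fact that rotations about the symmetry axis are isometries (so the choice of $\tilde R_s$ beyond its first-order behavior is immaterial). Everything else is a routine vector-algebra computation verifying that the generator of $R_s$ is exactly $\Omega_0(\dot\bha,\hat x)=(\dot\bha\times\frac{\bha}{|\bha|^2})\times\hat x$; I would include that triple-product check but not belabor it.
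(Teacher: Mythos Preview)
Your approach is the same as the paper's: use that spatial rotations rotate the axis vector $\bha$ while preserving its length, so an infinitesimal axis tilt is a Lie derivative along a rotation field. However, your execution has a sign slip. The vector triple product gives
\[
  \bfw\times\bha=\Bigl(\dot\bha\times\tfrac{\bha}{|\bha|^2}\Bigr)\times\bha=(\dot\bha\cdot\bha)\tfrac{\bha}{|\bha|^2}-\Bigl(\tfrac{\bha}{|\bha|^2}\cdot\bha\Bigr)\dot\bha=-\dot\bha,
\]
not $+\dot\bha$; so the flow $R_s$ of $\Omega(\dot\bha)$ satisfies $R_s\bha\approx\bha-s\dot\bha$. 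You also implicitly equate $\tfrac{\dd}{\dd s}(\tilde R_s)_*\hat g_b\big|_{s=0}$ with $\cL_V\hat g_b$ where $V=\tfrac{\dd}{\dd s}\tilde R_s\big|_{s=0}$, but pushforward differentiates to $-\cL_V$ (the Lie derivative is defined via \emph{pullback}). These two sign errors cancel, so your final conclusion is correct, but the intermediate assertions are not.

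The paper's three-line proof avoids both issues by working with pullback from the start: denoting by $\phi_s$ the time-$s$ flow of $\Omega(\dot\bha)$, one has the \emph{exact} identity $\phi_s^*\hat g_{\bhm,\bha}=\hat g_{\bhm,\phi_s^{-1}\bha}$ (since $|\phi_s^{-1}\bha|=|\bha|$ holds exactly, no first-order matching of $R_s\bha$ with $\bha+s\dot\bha$ is needed), and differentiating at $s=0$ gives $\cL_{\Omega(\dot\bha)}\hat g_b=\hat g'_b(0,-\Omega_0(\dot\bha,\bha))=\hat g'_b(0,\dot\bha)$ directly. This also dissolves your ``main obstacle'': there is nothing to match.
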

\begin{proof}
  Denote the time $s$ flow of $\Omega(\dot\bha)$ by $\phi_s$. Then the derivative at $s=0$ of $\phi_s^*\hat g_{\bhm,\bha}=\hat g_{\bhm,\phi_s^{-1}\bha}$ reads
  \[
    \cL_{\Omega(\dot\bha)}\hat g_{\bhm,\bha} = \hat g'_{\bhm,\bha}(0,-\Omega_0(\dot\bha,\bha)).
  \]
  But $\Omega_0(\dot\bha,\bha)=-(\bha\cdot\frac{\bha}{|\bha|^2})\dot\bha=-\dot\bha$ since $\dot\bha\cdot\bha=0$.
\end{proof}

The analogue of Lemma~\ref{LemmaFhM1COMRic} is then:

\begin{lemma}[Ricci tensor of fiberwise pullback: axis of rotation]
\label{LemmaFhM1AxisRic}
  Let $\dot\bha\in\CI(I;(\R\bha)^\perp)$. Consider on $[0,1)_\eps\times\hat M^\circ$ the symmetric 2-tensor (in $\dd\hat t$, $\dd\hat x$, with smooth dependence on $\eps,t,\hat x$) which for fixed $t\in I$ is given by $\hat g_b+\eps\hat g'_b(0,\dot\bha(t))$. Let $t_0\in I$. In terms of $t=t_0+\eps\hat t$, we then have
  \begin{equation}
  \label{EqFhM1AxisRicMet}
    \bigl(\hat g_b+\eps\hat g'_b(0,\dot\bha(t))\bigr)_{t\in I} = \hat g_b + \eps\cL_{\Omega(\dot\bha(t_0))}\hat g_b \bmod \eps^2\CI.
  \end{equation}
  Furthermore,
  \begin{equation}
  \label{EqFhM1AxisRic}
    \Ric\bigl((\hat g_b+\eps\hat g'_b(0,\dot\bha(t)))_{t\in I}\bigr) \equiv \eps^2\Bigl[D_{\hat g_b}\Ric\bigl(\hat t\hat g'_b(0,\dot\bha'(t_0))\bigr)  + D_{\hat g_b}\Ric(f(t_0)) \Bigr] \bmod \eps^3\CI,
  \end{equation}
  where $f=-\frac{1}{2}\cL^2_{\Omega(\dot\bha(\cdot))}\hat g_b\in\CI(I;\rho_\circ^2\CI(\hat X_b;S^2\,\Ttsc^*_{\hat X_b}\hat M_b))$ (where $\rho_\circ=\la\hat x\ra^{-1}$).
\end{lemma}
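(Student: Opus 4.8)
The plan is to mirror the proof of Lemma~\ref{LemmaFhM1COMRic}, but in the much simpler setting where only a linearized Kerr metric (rather than a fiberwise pullback of Kerr under translations) is involved. First I would establish~\eqref{EqFhM1AxisRicMet}: expanding $\dot\bha(t)=\dot\bha(t_0)+(t-t_0)\dot\bha'(t_0)+(t-t_0)^2\CI$ around $t=t_0$ and substituting $t=t_0+\eps\hat t$ gives, for bounded $\hat t,\hat x$,
\[
  \bigl(\hat g_b+\eps\hat g'_b(0,\dot\bha(t))\bigr)_{t\in I} \equiv \hat g_b + \eps\hat g'_b(0,\dot\bha(t_0)) + \eps^2\hat t\hat g'_b(0,\dot\bha'(t_0)) \bmod \eps^3\CI.
\]
By Lemma~\ref{LemmaFhM1AxisLie} (applicable since $\dot\bha(t_0)\perp\bha$), we have $\hat g'_b(0,\dot\bha(t_0))=\cL_{\Omega(\dot\bha(t_0))}\hat g_b$, which gives~\eqref{EqFhM1AxisRicMet}. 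The membership $f=-\tfrac12\cL^2_{\Omega(\dot\bha(\cdot))}\hat g_b\in\CI(I;\rho_\circ^2\CI(\hat X_b;S^2\,\Ttsc^*_{\hat X_b}\hat M_b))$ follows from $\Omega(\dot\bha)\in\Vb(\hat X_b)$ (it is a linear vector field in $\hat x$, hence a b-vector field on the radial compactification) together with $\hat g'_b(0,\dot\bha)\in\rho_\circ^2\CI$ by~\eqref{EqGKLinSize}: indeed $\cL^2_{\Omega(\dot\bha)}\hat g_b=\cL_{\Omega(\dot\bha)}(\hat g'_b(0,\dot\bha))$, and $\cL_{\Omega(\dot\bha)}$ preserves $\rho_\circ^2\CI$.

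Next I would compute the Ricci tensor. Using $\hat g'_b(0,\dot\bha(t_0))=\cL_{\Omega(\dot\bha(t_0))}\hat g_b=:\cL_V\hat g_b$ with $V=\Omega(\dot\bha(t_0))$, write
\[
  \bigl(\hat g_b+\eps\hat g'_b(0,\dot\bha(t))\bigr)_{t\in I} \equiv \hat g_b + \eps\cL_V\hat g_b + \tfrac{\eps^2}{2}\cL_V^2\hat g_b + \eps^2\bigl(\hat t\hat g'_b(0,\dot\bha'(t_0)) - \tfrac12\cL_V^2\hat g_b\bigr) \bmod \eps^3\CI.
\]
Since $\hat g_b+\eps\cL_V\hat g_b+\tfrac{\eps^2}{2}\cL_V^2\hat g_b=\phi_\eps^*\hat g_b\bmod\eps^3\CI$, where $\phi_\eps$ is the time-$\eps$ flow of $V$, and $\hat g_b$ is Ricci-flat, we have $\Ric(\hat g_b+\eps\cL_V\hat g_b+\tfrac{\eps^2}{2}\cL_V^2\hat g_b)\in\eps^3\CI$ (this is exactly the $s^2$-coefficient vanishing in the proof of Lemma~\ref{LemmaFhM1Lie}). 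Hence the Taylor expansion of the Ricci tensor of the full metric is, modulo $\eps^3\CI$,
\[
  \Ric\bigl((\hat g_b+\eps\hat g'_b(0,\dot\bha(t)))_{t\in I}\bigr) \equiv \eps^2\Bigl[ D_{\hat g_b}\Ric\bigl(\hat t\hat g'_b(0,\dot\bha'(t_0))\bigr) - \tfrac12 D_{\hat g_b}\Ric\bigl(\cL_V^2\hat g_b\bigr) \Bigr],
\]
which is~\eqref{EqFhM1AxisRic} with $f(t_0)=-\tfrac12\cL_V^2\hat g_b=-\tfrac12\cL^2_{\Omega(\dot\bha(t_0))}\hat g_b$. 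Here one uses that the $\eps^2$-coefficient of $\Ric$ of a metric $\hat g_b+\eps k_1+\eps^2 k_2$ is $D_{\hat g_b}\Ric(k_2)+\tfrac12 D^2_{\hat g_b}\Ric(k_1,k_1)$, and that the $k_1=\cL_V\hat g_b$, $k_2=\tfrac12\cL_V^2\hat g_b$ part of this combines with the $-\tfrac12\cL_V^2\hat g_b$ correction to cancel exactly, leaving $D_{\hat g_b}\Ric(\hat t\hat g'_b(0,\dot\bha'(t_0)))$ plus $-\tfrac12 D_{\hat g_b}\Ric(\cL_V^2\hat g_b)$.

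This argument is entirely parallel to, and strictly simpler than, the center-of-mass computation in Lemma~\ref{LemmaFhM1COMRic}: there is no boost vector field $(\hat c'(t_0)\cdot\hat x)\pa_{\hat t}$ to introduce, no distinction between fiberwise and total pullback, and the modulation term $\hat t\hat g'_b(0,\dot\bha'(t_0))$ appears directly rather than after the $[[D\Ric,\hat t],\hat t]$ manipulations. I do not anticipate a genuine obstacle; the only point requiring minor care is the bookkeeping of which terms in the $\eps^2$-coefficient of the Ricci tensor survive, i.e.\ verifying that the Lie-derivative-squared contributions organize exactly as in Lemma~\ref{LemmaFhM1Lie} so that the leftover $\eps^2$-term is precisely $D_{\hat g_b}\Ric(\hat t\hat g'_b(0,\dot\bha'(t_0))+f(t_0))$. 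One should also confirm smoothness in $t$ throughout, which is immediate since $\dot\bha\in\CI(I;(\R\bha)^\perp)$ and all constructions ($\Omega$, Lie derivatives, $\hat g'_b$) depend smoothly on $t$ through $\dot\bha(t)$ and its derivatives.
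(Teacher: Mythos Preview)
Your proposal is correct and follows essentially the same approach as the paper: Taylor expand $\dot\bha(t)$ around $t_0$, use Lemma~\ref{LemmaFhM1AxisLie} to rewrite $\hat g'_b(0,\dot\bha(t_0))$ as $\cL_{\Omega(\dot\bha(t_0))}\hat g_b$, add and subtract $\tfrac{\eps^2}{2}\cL_V^2\hat g_b$ so that the first three terms match~\eqref{EqFhM1LieMetric} to order $\eps^2$, and apply Lemma~\ref{LemmaFhM1Lie} to conclude. Your explicit verification that $f\in\rho_\circ^2\CI$ via $\Omega(\dot\bha)\in\Vb(\hat X_b)$ and~\eqref{EqGKLinSize} is a welcome addition that the paper leaves implicit.
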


The first term in~\eqref{EqFhM1AxisRic} is precisely the one appearing in~\eqref{EqAhKCokerK}.

\begin{proof}[Proof of Lemma~\usref{LemmaFhM1AxisRic}]
  Taylor expanding $\dot\bha(t)=\dot\bha(t_0)+(t-t_0)\dot\bha'(t_0)+\cO((t-t_0)^2)$ and then writing $t=t_0+\eps\hat t$ gives
  \begin{align*}
    (\hat g_b+\eps\hat g'_b(0,\dot\bha(t)))_{t\in I} &\equiv \hat g_b + \eps\cL_{\Omega(\dot\bha(t_0))}\hat g_b + \eps^2\hat t\hat g_b'(0,\dot\bha'(t_0)) \\
      &\equiv \hat g_b + \eps\cL_{\Omega(\dot\bha(t_0))}\hat g_b + \frac{\eps^2}{2}\cL^2_{\Omega(\dot\bha(t_0))}\hat g_b \\
      &\quad\qquad + \eps^2\Bigl(\hat t\hat g'_b(0,\dot\bha'(t_0))-\frac12\cL^2_{\Omega(\dot\bha(t_0))}\hat g_b\Bigr) \bmod \eps^3\CI
  \end{align*}
  in view of Lemma~\ref{LemmaFhM1AxisLie}. Lemma~\ref{LemmaFhM1Lie} then gives~\eqref{EqFhM1AxisRic}.
\end{proof}

\subsubsection{Combination; total pullback}
\label{SssFhM1Comb}

The combination of Lemmas~\ref{LemmaFhM1COMRic} and \ref{LemmaFhM1AxisRic} is:

\begin{prop}[Ricci tensor of fiberwise pullback: combination]
\label{PropFhM1CombRic}
  Let $\hat c\in\CI(I;\R^3)$. If $\bha\neq 0$ and $\dot\bha\in\CI(I;(\R\bha)^\perp)$, define
  \[
    \hat g_{1,b,\hat c,\dot\bha}=\hat g_{1,b,\hat c,\dot\bha}(\eps,t,\hat x;\dd\hat t,\dd\hat x)
  \]
  on $[0,1)_\eps\times\hat M^\circ$ to be equal to $\Phi_{1,\hat c(t)}^*(\hat g_b+2\eps\breve h_{b,\hat c'(t)}+\eps\hat g'_b(0,\dot\bha(t)))$ for fixed $t\in I$, where $\Phi_{1,\hat c(t)}(\eps,\hat x)=(\eps,\hat x+\hat c(t))$. Let $t_0\in I$ and set $V(t_0)=\hat t\hat c'(t_0)\cdot\pa_{\hat x}+(\hat c'(t_0)\cdot\hat x)\pa_{\hat t}+\Omega(\dot\bha(t_0))$. Then, in terms of $t=t_0+\eps\hat t$, and writing $\Phi_{\hat c(t_0)}\colon(\hat t,\hat x)\mapsto(\hat t,\hat x+\hat c(t_0))$,
  \begin{equation}
  \label{EqFhM1CombRicMet}
    \hat g_{1,b,\hat c,\dot\bha} = \Phi_{\hat c(t_0)}^*\bigl( \hat g_b + \eps\cL_{V(t_0)}\hat g_b \bigr) \bmod \eps^2\CI.
  \end{equation}
  Furthermore,
  \begin{equation}
  \label{EqFhM1CombRic}
  \begin{split}
    \Ric(\hat g_{1,b,\hat c,\dot\bha}) &\equiv \eps^2\Phi_{\hat c(t_0)}^*\Bigl[ D_{\hat g_b}\Ric\Bigl(\frac{\hat t^2}{2} h_{b,2\hat c''(t_0)} + \hat t\breve h_{b,2\hat c''(t_0)} + \hat t\hat g'_b(0,\dot\bha'(t_0))\Bigr) \\
      &\quad\hspace{13em} + D_{\hat g_b}\Ric(f(t_0)) \Bigr] \bmod \eps^3\CI,
  \end{split}
  \end{equation}
   where $f\in\CI(I;\CI(\hat X_b;S^2\,\Ttsc^*_{\hat X_b}\hat M_b))$. If $\bha=0$, define $\hat g_{b,\hat c}$ to be equal to $\Phi_{1,\hat c(t)}^*(\hat g_b+2\eps\breve h_{b,\hat c'(t)})$; then we have~\eqref{EqFhM1CombRicMet}--\eqref{EqFhM1CombRic} for $\hat g_{b,\hat c}$ if we drop all terms involving $\dot\bha$.
\end{prop}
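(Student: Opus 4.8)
The strategy is to combine Lemmas~\ref{LemmaFhM1COMRic} and~\ref{LemmaFhM1AxisRic} by exploiting the bilinear structure of the second-order Taylor expansion of the Ricci operator. First, I would establish~\eqref{EqFhM1CombRicMet}: one Taylor-expands $\hat c(t)$, $\hat c'(t)$, and $\dot\bha(t)$ around $t=t_0$ in the variable $\eps\hat t$, and notes that the $\eps^0$-coefficient is $\Phi_{\hat c(t_0)}^*\hat g_b$ while the $\eps^1$-coefficient collects $\hat t\cL_{\hat c'(t_0)\cdot\pa_{\hat x}}\hat g_b + 2\breve h_{b,\hat c'(t_0)} + \hat g'_b(0,\dot\bha(t_0))$. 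Using~\eqref{EqFhM1COMBoost} (which expresses $2\breve h_{b,\hat c'(t_0)}$ so that $\hat t\cL_{\hat c'(t_0)\cdot\pa_{\hat x}}\hat g_b+2\breve h_{b,\hat c'(t_0)}=\cL_{\hat t\hat c'(t_0)\cdot\pa_{\hat x}+(\hat c'(t_0)\cdot\hat x)\pa_{\hat t}}\hat g_b$) and Lemma~\ref{LemmaFhM1AxisLie} (which gives $\hat g'_b(0,\dot\bha(t_0))=\cL_{\Omega(\dot\bha(t_0))}\hat g_b$), the $\eps^1$-coefficient becomes $\cL_{V(t_0)}\hat g_b$ with $V(t_0)$ as stated. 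This handles the metric expansion.

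For~\eqref{EqFhM1CombRic}, the key observation is that $\hat g_{1,b,\hat c,\dot\bha}$ differs from the pure center-of-mass tensor of Lemma~\ref{LemmaFhM1COMRic} by an additive $\eps$-order term $\eps\Phi_{1,\hat c(t)}^*\hat g'_b(0,\dot\bha(t))$. I would write $\hat g_{1,b,\hat c,\dot\bha}=\hat g^{\mathrm{COM}}+\eps\hat g^{\mathrm{rot}}$ (schematically) where $\hat g^{\mathrm{COM}}$ is as in Lemma~\ref{LemmaFhM1COMRic} and $\hat g^{\mathrm{rot}}=\Phi_{1,\hat c(t)}^*\hat g'_b(0,\dot\bha(t))$, and then Taylor-expand $\Ric(\hat g^{\mathrm{COM}}+\eps\hat g^{\mathrm{rot}})$ in $\eps$: the $\eps^0$- and $\eps^1$-terms of $\Ric(\hat g^{\mathrm{COM}})$ vanish, the $\eps^1$-term of the cross contribution is $D_{\hat g_b}\Ric$ applied to the $\eps^0$-part of $\eps\hat g^{\mathrm{rot}}$, hence vanishes (as $\hat g'_b(0,\dot\bha(t_0))\in\ker D_{\hat g_b}\Ric$ by~\eqref{EqGKLinRic}), and the $\eps^2$-term splits into (i) the $\eps^2$-term of $\Ric(\hat g^{\mathrm{COM}})$, computed in Lemma~\ref{LemmaFhM1COMRic}, (ii) the $\eps^1$-term of $D_{\hat g^{\mathrm{COM}}}\Ric$ applied to the $\eps^1$-part of $\eps\hat g^{\mathrm{rot}}$ plus the action on the $\eps^0$-part coming from the $\eps$-variation of $\hat g^{\mathrm{COM}}$, which together produce $D_{\hat g_b}\Ric(\hat t\hat g'_b(0,\dot\bha'(t_0)))$ modulo a stationary term of class $\CI(I;\CI(\hat X_b;\ldots))$ (this is where Lemma~\ref{LemmaFhM1AxisRic} enters, applied fiberwise after pulling back along $\Phi_{\hat c(t_0)}$, using that $\Phi_{\hat c(t_0)}^*$ commutes with $\Ric$ as it is a $t$-independent diffeomorphism of $t$-level sets), and (iii) a quadratic interaction term $\tfrac12 D_{\hat g_b}^2\Ric$ evaluated on the two $\eps^1$-parts. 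For term (iii), I would use the polarized identity~\eqref{EqFhM1LiePolarized} of Lemma~\ref{LemmaFhM1Lie} together with $\hat g^{\mathrm{rot}}_{\eps^1}=\cL_{\Omega(\dot\bha(t_0))}\hat g_b$ and $\hat g^{\mathrm{COM}}_{\eps^1}=\cL_{V_{\mathrm{boost}}(t_0)}\hat g_b$ to rewrite the cross term as $-\tfrac12 D_{\hat g_b}\Ric(\cL_{V_{\mathrm{boost}}}\cL_\Omega\hat g_b+\cL_\Omega\cL_{V_{\mathrm{boost}}}\hat g_b)$, which is $D_{\hat g_b}\Ric$ of a stationary tensor (since $\cL_\Omega$ and $\cL_{V_{\mathrm{boost}}}$ applied to $\hat g_b$ and then to each other produce tensors in the appropriate conormal class), hence absorbable into the $D_{\hat g_b}\Ric(f(t_0))$ term.

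Collecting the contributions $\Phi_{\hat c(t_0)}^*\bigl[2 D_{\hat g_b}\Ric(\tfrac{\hat t^2}{2}h_{b,\hat c''(t_0)}+\hat t\breve h_{b,\hat c''(t_0)})\bigr]$ from Lemma~\ref{LemmaFhM1COMRic} (noting $2 h_{b,\hat c''}=h_{b,2\hat c''}$ etc.\ by linearity) and $\Phi_{\hat c(t_0)}^*\bigl[D_{\hat g_b}\Ric(\hat t\hat g'_b(0,\dot\bha'(t_0)))\bigr]$ from the rotation part, plus all the stationary remainders swept into $D_{\hat g_b}\Ric(f(t_0))$, gives exactly~\eqref{EqFhM1CombRic}. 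The case $\bha=0$ is obtained by simply discarding the rotation terms, reducing to Lemma~\ref{LemmaFhM1COMRic} verbatim. The main obstacle is the bookkeeping of term (iii), the genuine quadratic self- and cross-interaction of the boost and rotation perturbations: one must verify that after applying~\eqref{EqFhM1LiePolarized} all resulting tensors are stationary and lie in $\CI(I;\CI(\hat X_b;S^2\,\Ttsc^*_{\hat X_b}\hat M_b))$ (or the sharper $\rho_\circ^2$-weighted class where claimed), so that they can be harmlessly folded into $f(t_0)$ without disturbing the two displayed "modulation" terms whose pairings against the cokernel we will control in Theorem~\ref{ThmAhKCoker}; this is routine but requires care with $r$-weights, exactly as in the proof of Lemma~\ref{LemmaFhM1COMRic}.
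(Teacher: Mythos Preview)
Your approach is correct, but the paper takes a shorter route. Rather than decomposing $\hat g_{1,b,\hat c,\dot\bha}$ into a center-of-mass part plus an $\eps$-order rotation correction and then tracking the cross-interactions, the paper observes that since $\hat g'_b(0,\dot\bha(t))=\cL_{\Omega(\dot\bha(t))}\hat g_b$ (Lemma~\ref{LemmaFhM1AxisLie}) is itself a Lie derivative of $\hat g_b$, one can simply substitute $\breve h_{b,\hat c'(t)}\mapsto\breve h_{b,\hat c'(t)}+\tfrac12\hat g'_b(0,\dot\bha(t))$ throughout the proof of Lemma~\ref{LemmaFhM1COMRic}. Under this substitution the operator $A_0$ there becomes $A_0'=A_0+\wt\cL(\Omega(\dot\bha(t_0)))$, the vector field $V(t_0)$ picks up the rotation $\Omega(\dot\bha(t_0))$, and the modulation terms $\hat t\breve h_{b,\hat c''(t_0)}$ acquire the extra piece $\tfrac12\hat t\hat g'_b(0,\dot\bha'(t_0))$, giving exactly the right hand side of~\eqref{EqFhM1CombRic}. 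The algebraic structure of the proof of Lemma~\ref{LemmaFhM1COMRic}---writing the $\eps^2$-part as $\tfrac12\cL_{V(t_0)}^2\hat g_b$ plus Lie derivative and stationary remainders---carries over verbatim, since $\Omega$ is stationary and $[\wt\cL(\Omega),\hat t]=0$.

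Your additive decomposition also works and leads to the same result, but you then have to separately verify that the cross-interaction terms (your item (iii), together with the $\hat t\cL_{\hat c'(t_0)\cdot\pa_{\hat x}}\hat g'_b(0,\dot\bha(t_0))$ piece in $\hat g^{\mathrm{rot}}_1$ that you did not write out) collapse to $D_{\hat g_b}\Ric$ of a stationary tensor; this ultimately reduces to the same commutator identities but is more bookkeeping. The paper's substitution absorbs the rotation into the Lie-derivative framework from the start, so the cross terms never need to be isolated.
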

\begin{proof}
  When $\bha=0$, this is Lemma~\ref{LemmaFhM1COMRic}; when $\bha\neq 0$, then in the proof of Lemma~\ref{LemmaFhM1COMRic}, we merely need to replace every occurrence of $\breve h_{b,\hat c'(t)}$ by $\breve h_{b,\hat c'(t)}+\frac12\hat g'_b(0,\dot\bha(t))=\breve h_{b,\hat c'(t)}+\frac12\cL_{\Omega(\dot\bha)}\hat g_b$.
\end{proof}

In order to prepare the grafting of $\hat g_{1,b,\hat c,\dot\bha}$ into the total gluing spacetime $\wt M$, we record:

\begin{lemma}[Further properties of $\hat g_{1,b,\hat c,\dot\bha}$]
\label{LemmaFhM1CombProp}
  Define the space
  \begin{equation}
  \label{EqFhM1CombPropSpace}
    \wt\cM := \bigl[\,[0,1)_\eps\times I_t\times\ol{\R^3_{\hat x}}; \{0\}\times I\times\pa\ol{\R^3}\,\bigr];
  \end{equation}
  denote its front face by $\cM_\circ$ and the lift of $\{0\}\times I\times\ol{\R^3}$ by $\hat\cM$, and write $\rho_\circ$, resp.\ $\hat\rho\in\CI(\wt\cM)$ for defining functions of $\cM_\circ$, resp.\ $\hat\cM$. Suppose first that $\bha\neq 0$.
  \begin{enumerate}
  \item\label{ItFhM1CombPropDecay}{\rm (Decay of coefficients.)} In the coordinates $\hat z=(\hat t,\hat x)$, the components of the tensor $\hat g_{1,b,\hat c,\dot\bha}(\eps,t,\hat x;\dd\hat t,\dd\hat x)$ (defined in Proposition~\usref{PropFhM1CombRic}), which are smooth functions on $[0,1)_\eps\times I_t\times\R^3_{\hat x}$, lift to smooth functions on\footnote{More precisely, they are defined on the subset where $|\hat x+\hat c(t)|\geq\bhm$; we shall not make this explicit in the notation here.} $\wt\cM$ which differ from the corresponding components of $\hat g_b$ by terms in $\rho_\circ^2\CI(\wt M)$.
  \item\label{ItFhM1CombPropPb}{\rm (Total pullback.)} In the notation of~\eqref{EqFhM1AxisVF}, set
  \[
    \Phi_{1,\hat c,\dot\bha} \colon (\eps,t,\hat x) \mapsto \bigl(\eps, t - \eps^2(\hat c'(t)\cdot\hat x), e^{-\eps\Omega(\dot\bha(t))}\hat x - \hat c(t) \bigr),
  \]
  where $e^{-\eps\Omega(\dot\bha(t))}$ is the time $\eps$ flow of $-\Omega_{\dot\bha(t)}$ (i.e.\ a rotation). Then $\Phi_{1,\hat c,\dot\bha}$ lifts to a diffeomorphism of $\wt M$ which is the identity on $\cM_\circ$ and the translation $(t,\hat x)\mapsto(t,\hat x-\hat c(t))$ on $\hat\cM^\circ$. Moreover, the components of
  \[
    \Phi_{1,\hat c,\dot\bha}^*\hat g_{1,b,\hat c,\dot\bha} - \hat g_b
  \]
  lie in $\hat\rho^2\rho_\circ\CI(\wt M)$.
  \end{enumerate}
  If $\bha=0$, the same conclusions hold upon dropping all terms and subscripts involving $\dot\bha$.
\end{lemma}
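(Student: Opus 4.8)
\textbf{Proof plan for Lemma~\usref{LemmaFhM1CombProp}.} The strategy is to reduce everything to the behavior of Kerr metrics and of the already-computed Lie-derivative corrections under the blow-down map $\wt\upbeta$ and the projection $\breve\pi$, using the dilation structure systematically.

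First I would treat part~\eqref{ItFhM1CombPropDecay}. Since $\hat g_{1,b,\hat c,\dot\bha}=\Phi_{1,\hat c(t)}^*(\hat g_b+2\eps\breve h_{b,\hat c'(t)}+\eps\hat g_b'(0,\dot\bha(t)))$ on each $t$-level set, it suffices to analyze the three summands separately. For the first, $\Phi_{1,\hat c(t)}^*\hat g_b$ has components $\hat g_b(\hat x+\hat c(t);\dd\hat t,\dd\hat x)$; since $\hat g_b-\hat{\ubar g}\in\rho_\circ\CI(\hat X_b;S^2\,\Ttsc^*_{\hat X_b}\hat M_b)$ by~\eqref{EqGKCpt3sc} and the translation $\hat x\mapsto\hat x+\hat c(t)$ is a scattering diffeomorphism of $\ol{\R^3_{\hat x}}$ depending smoothly on $t$, the difference from the translation-invariant $\hat{\ubar g}$ still lies in $\rho_\circ\CI$ — but I need the \emph{stronger} membership in $\rho_\circ^2\CI(\wt M)$, i.e.\ agreement with $\hat g_b$ (not just $\hat{\ubar g}$) to order $\rho_\circ^2$. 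For this I use $\hat g_b-\hat g_{b_0}\in\rho_\circ^2\CI$ and the fact that $\hat g_{b_0}$ is spherically symmetric, so $\hat g_{b_0}(\hat x+\hat c(t))-\hat g_{b_0}(\hat x)\in\rho_\circ^2\CI$; and the $\rho_\circ$-leading term of $\hat g_b-\hat g_{b_0}$ (the angular-momentum part $\propto\bha$ in~\eqref{EqGKLot}) is linear in $\hat x/|\hat x|$ at leading order, hence invariant under translation modulo $\rho_\circ^2$. For $\eps\breve h_{b,\hat c'(t)}$ I use Lemma~\usref{LemmaAhKBoostLot}: $\breve h_{*,b,\hat c}\in\rho_\circ\CI$, so $\eps\breve h_{b,\hat c'(t)}=\hat\rho^{-1}\rho_\circ\cdot\rho_\circ\hat\rho\,\CI$; but on $\wt M$ one has $\eps=\hat\rho\rho_\circ$ near the corner, so $\eps\cdot\rho_\circ\CI$ is actually $\hat\rho\rho_\circ^2\CI\subset\rho_\circ^2\CI(\wt M)$ — and similarly $\eps\hat g_b'(0,\dot\bha(t))\in\eps\cdot\rho_\circ^2\CI\subset\rho_\circ^2\CI$ by~\eqref{EqGKLinSize}. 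One has to be a little careful to combine the $\breve h$ and linearized-Kerr terms with the translation pullback, but since these are already $\cO(\eps\rho_\circ)$ the translation only produces further $\rho_\circ$-suppressed errors, so nothing worse than $\rho_\circ^2\CI(\wt M)$ arises.

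Next, part~\eqref{ItFhM1CombPropPb}. The first task is to verify that $\Phi_{1,\hat c,\dot\bha}$ lifts to a diffeomorphism of $\wt M$ with the stated boundary behavior: on $\cM_\circ$ (where $\eps=0$, $\hat x$ large) all three components reduce to the identity since the $\eps$- and $\eps^2$-corrections vanish and $e^{-\eps\Omega}\to\Id$; on $\hat\cM^\circ$, in the coordinates $\hat t=(t-t_0)/\eps$, $\hat x$, one computes that $t-\eps^2(\hat c'\cdot\hat x)\mapsto t$ modulo $\cO(\eps^2)$, while $e^{-\eps\Omega(\dot\bha(t))}\hat x-\hat c(t)\mapsto\hat x-\hat c(t)$ modulo $\cO(\eps)$ for bounded $\hat x$, which is exactly the translation claimed. (Smoothness across the corner follows from checking in the projective coordinates~\eqref{EqGLocCoordProj}, as for the analogous diffeomorphisms in~\S\usref{SsGff}.) The heart of the matter is then the computation of $\Phi_{1,\hat c,\dot\bha}^*\hat g_{1,b,\hat c,\dot\bha}-\hat g_b\in\hat\rho^2\rho_\circ\CI(\wt M)$. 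The point is that $\Phi_{1,\hat c,\dot\bha}$ is designed so that its fiberwise restriction to each $t$-level set, composed with $\Phi_{1,\hat c(t)}$, undoes the boost and rotation generated by $V(t_0)=\hat t\hat c'(t_0)\cdot\pa_{\hat x}+(\hat c'(t_0)\cdot\hat x)\pa_{\hat t}+\Omega(\dot\bha(t_0))$ at each order: by~\eqref{EqFhM1CombRicMet}, $\hat g_{1,b,\hat c,\dot\bha}=\Phi_{\hat c(t_0)}^*(\hat g_b+\eps\cL_{V(t_0)}\hat g_b)$ modulo $\eps^2\CI$, and the total diffeomorphism $\Phi_{1,\hat c,\dot\bha}$ is precisely $\Phi_{\hat c}^{-1}$ composed with the time-$\eps$ flow of the vector field whose fiberwise linearization at $\eps=0$ is $V$, so that $\Phi_{1,\hat c,\dot\bha}^*$ kills the $\eps\cL_{V}\hat g_b$ term exactly (cf.\ the invariance arguments in Remark~\usref{RmkFhM1COMTotal}), leaving a remainder that is $\cO(\eps^2)$ in the near-field and hence, converting $\eps=\hat\rho\rho_\circ$, lands in $\hat\rho^2\rho_\circ\CI(\wt M)$ once one also accounts for the $\rho_\circ$-decay inherited from part~\eqref{ItFhM1CombPropDecay}.

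The main obstacle, and where I would spend the most care, is precisely this last weight bookkeeping near the corner $\hat M\cap M_\circ$: one must track simultaneously the $\eps$-order (improving as one adds boost/rotation corrections), the $\rho_\circ$-decay at spatial infinity of each Kerr/boost term, and the relation $\eps=\hat\rho\rho_\circ$, to see that the three separate gains combine to the claimed $\hat\rho^2\rho_\circ$. The cleanest way is to do the computation first in the region $|\hat x|\lesssim 1$ (where $\rho_\circ\sim 1$, $\hat\rho\sim\eps$, and the claim reduces to the $\cO(\eps^2)$ statement following from~\eqref{EqFhM1CombRicMet} and the definition of $\Phi_{1,\hat c,\dot\bha}$ as an exact diffeomorphism cancellation), then in the region $|x|\gtrsim\delta>0$ (where $\hat\rho\sim 1$, $\rho_\circ\sim\eps$, and one uses that $\hat g_{1,b,\hat c,\dot\bha}$ and $\Phi_{1,\hat c,\dot\bha}$ reduce to their $M_\circ$-models modulo $\rho_\circ$), and finally glue the two via the projective-coordinate description of the corner, exactly as in the proof of Lemma~\usref{LemmaGRel} and the normal-operator computations of~\S\usref{SsGVf}. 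The $\bha=0$ case is strictly simpler since $\Omega(\dot\bha)$ and $\hat g_b'(0,\dot\bha)$ drop out.
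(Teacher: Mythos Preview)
Your overall strategy is correct, and for part~\eqref{ItFhM1CombPropPb} it matches the paper's approach (the paper directly Taylor expands $\Phi_{\hat c(t_0)}\circ\Phi_{1,\hat c,\dot\bha}$ in $\eps$ rather than splitting into two regions, but the substance is the same). For part~\eqref{ItFhM1CombPropDecay}, however, your argument is more convoluted than necessary and contains imprecise or incorrect intermediate statements. The decomposition via $\hat g_{b_0}$ is not needed: the statement $\hat g_{b_0}(\hat x+\hat c)-\hat g_{b_0}(\hat x)\in\rho_\circ^2\CI$ is true but is not a consequence of spherical symmetry as you state; and your assertion that the leading term of $\hat g_b-\hat g_{b_0}$ is ``linear in $\hat x/|\hat x|$'' is wrong (the $\hat r^{-2}$ terms in~\eqref{EqGKLot} involve quadratic expressions like $(\bha\cdot\frac{\hat x}{|\hat x|})^2$), though this point is moot since $\hat g_b-\hat g_{b_0}\in\rho_\circ^2\CI$ already and translation preserves this space. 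The paper's argument bypasses all of this: $\Phi_{1,\hat c(t)}$ is the time-$1$ flow of $\hat c(t)\cdot\pa_{\hat x}$, and $\cL_{\hat c(t)\cdot\pa_{\hat x}}\hat g_b=2 h_{b,\hat c(t)}\in\rho_\circ^2\CI$ directly from Lemma~\ref{LemmaAhKBoostLot}, giving $\Phi_{1,\hat c(t)}^*\hat g_b-\hat g_b\in\rho_\circ^2\CI$ in one line. The paper also explicitly verifies that $\Phi_{1,\hat c(t)}$ lifts to a diffeomorphism of $\wt\cM$ by rewriting it in $(\eps,x)$-coordinates ($x=\eps\hat x$) as the linear map $(\eps,x)\mapsto(\eps,x+\eps\hat c(t))$ preserving $\{0\}\times\R^3$; you should make this point for part~\eqref{ItFhM1CombPropDecay} as well, not only for~\eqref{ItFhM1CombPropPb}.
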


Recalling the second part of Lemma~\ref{LemmaGRel}, a neighborhood of $\hat\cM\subset\wt\cM$ is diffeomorphic to a neighborhood of $\hat M\subset\wt M$, with the diffeomorphism given by the identity map in the coordinates $(\eps,t,\hat x)$ on $\wt M^\circ$ induced by the Fermi normal coordinates $(t,x)$.

\begin{proof}[Proof of Lemma~\usref{LemmaFhM1CombProp}]
  We only consider the case $\bha\neq 0$, as the case $\bha=0$ follows from the same arguments but with $\dot\bha$ dropped.

  \pfstep{Part~\eqref{ItFhM1CombPropDecay}.} Lemma~\ref{LemmaAhKBoostLot} gives
  \[
    \breve h_{b,\hat c'(\cdot)}\in\CI\bigl(I;\rho_\circ\CI(\hat X_b;S^2\,\Ttsc^*_{\hat X_b}\hat M_b)\bigr),
  \]
  and hence the coefficients of $\eps\breve h_{b,\hat c'(\cdot)}$ are of class $\eps\la\hat x\ra^{-1}\CI$ on~\eqref{EqFhM1CombPropSpace}. The same is true, a fortiori, for the coefficients of $\eps\hat g'_b(0,\dot\bha)$ in view of the even better (namely, quadratic) decay of the coefficients of $\hat g'_b(0,\dot\bha)$ as $|\hat x|\to\infty$, cf.\ \eqref{EqGKLinSize}. Since $\eps\la\hat x\ra^{-1}$ is a smooth function vanishing quadratically at the front face $\cM_\circ$ of~\eqref{EqFhM1CombPropSpace}, this shows that the coefficients of $(2\eps\breve h_{b,\hat c'(t)}+\eps\hat g'_b(0,\dot\bha(t)))_{t\in I}$ vanish quadratically at $\cM_\circ$.

  To deal with the pullback, consider now the map $\Phi_{1,\hat c(t)}\colon(\eps,\hat x)\mapsto(\eps,\hat x+\hat c(t))$ from Proposition~\ref{PropFhM1CombRic}. Letting $x=\eps\hat x=(x^1,x^2,x^3)$, so $\Phi_{1,\hat c(t)}\colon(\eps,x)\mapsto(\eps,x+\eps\hat c(t))$, this is a smooth family (in $t$) of invertible linear maps in $(\eps,x)\in\R\times\R^3$ which preserve $\{0\}\times\R^3$. Therefore, this lifts to a smooth family of diffeomorphisms of $[[0,1)\times\ol{\R^3};\{0\}\times\pa\ol{\R^3}]$ and hence to a diffeomorphism of $\wt\cM$ which preserves all boundary hypersurfaces.

  To conclude the proof of the first part, note then that $\Phi_{1,\hat c(t_0)}$ is the time $1$ flow of $\hat c(t_0)\cdot\pa_{\hat x}$; but since $\cL_{\hat c(t_0)\cdot\pa_{\hat x}}\hat g_b=2 h_{b,\hat c'(t_0)}\in\rho_\circ^2\CI(\hat X_b;S^2\,\Ttsc^*_{\hat X_b}\hat M_b)$, we have $(\hat g_b-\Phi_{1,\hat c(t)}^*\hat g_b)_{t\in I}\in\CI(I;\rho_\circ^2\CI(\hat X_b;S^2\,\Ttsc^*_{\hat X_b}\hat M_b))$, i.e.\ upon lifting to $\wt\cM$ its coefficients vanish quadratically at $\cM_\circ$.

  \pfstep{Part~\eqref{ItFhM1CombPropPb}.} The smoothness properties of $\Phi_{1,\hat c,\dot\bha}$ are clear near the interior of $\hat\cM$, and also near the interior of $\cM_\circ$ where this map has the local coordinate expression
  \[
    (\eps,t,x) \mapsto \bigl(\eps, t-\eps(\hat c'(t)\cdot x), e^{-\eps\Omega(\dot\bha(t))}x-\eps\hat c(t)\bigr).
  \]
  At $\eps=0$, this is the identity map $(0,t,x)\mapsto(0,t,x)$. Furthermore, the projection of $\Phi_{1,\hat c,\dot\bha}$ to $I$ is clearly smooth, and the projection to $[[0,1)\times\ol{\R^3};\{0\}\times\pa\ol{\R^3}]$ is a smooth family (in $t\in I$) of smooth maps as well, as follows from the same arguments as in the proof of part~\eqref{ItFhM1CombPropDecay}.

  Let us now work near $t=t_0\in I$ and write $\hat t=\frac{t-t_0}{\eps}$; thus $\hat t,\hat x$ are linear coordinates in the interior of the front face of the blow-up of $\wt\cM$ at the lift of $\{0\}\times\{t_0\}\times\ol{\R^3}$ (cf.\ \eqref{EqGffBlowup}). The Taylor expansion of $\Phi_{\hat c(t_0)}\circ\Phi_{1,\hat c,\dot\bha}$ around $t=t_0$ expressed in the coordinates $\eps,\hat t,\hat x$ and truncated at order $\eps^2$ is
  \[
    (\Phi_{\hat c(t_0)}\circ\Phi_{1,\hat c,\dot\bha})(\eps,\hat t,\hat x) \equiv \bigl(\eps, \hat t-\eps(\hat c'(t_0)\cdot\hat x), \hat x - \eps\hat t\hat c'(t_0) - \eps\Omega_0(\dot\bha(t_0),\hat x) \bigr) \bmod \eps^2\CI.
  \]
  The quadratic vanishing of $\Phi_{1,\hat c,\dot\bha}^*\hat g_{1,b,\hat c,\dot\bha}-\hat g_b$ at $\hat\cM^\circ$ then follows from
  \[
    (\Phi_{\hat c(t_0)}\circ\Phi_{1,\hat c,\dot\bha})^*(\hat g_b+\eps\cL_{V(t_0)}\hat g_b) - \hat g_b \equiv (\hat g_b+\eps\cL_{V(t_0)}\hat g_b)-\eps\cL_{V(t_0)}(\hat g_b+\eps\cL_{V(t_0)}\hat g_b) - \hat g_b \equiv 0
  \]
  modulo terms vanishing quadratically at $\hat\cM^\circ$. Since $\hat g_{1,b,\hat c,\dot\bha}$ and $\hat g_b$ are equal at $\cM_\circ$ and $\Phi_{1,\hat c,\dot\bha}$ is the identity there, the simple vanishing at $\cM_\circ$ is immediate.
\end{proof}

\subsubsection{Full implementation}
\label{SssFhM1Mod}

We shall only consider the case
\[
  \bha\neq 0
\]
explicitly here; the treatment of the case $\bha=0$ is simpler, and obtained by omitting all terms involving $\dot\bha$.

We return to the total family $\wt g^1$ from Proposition~\ref{PropFMc1} and its error $f^1$ from~\eqref{EqFhM1f1}. Recall the functions $\hat c^1\in\CI(I;\R^3)$, $\dot\bha^1\in\CI(I;(\R\bha)^\perp)$ produced by Corollary~\ref{CorFhM1CokerParam}. In light of Theorem~\ref{ThmAhKCoker}, we can solve away $f^1$ to leading order at $\hat M$ only if we adjust $\wt g^1$ two orders (in powers of $\eps$) earlier.\footnote{In particular, the vanishing of the scalar type $1$ component of $g_{(2)}$ in Lemma~\ref{LemmaFMc1Pure} is critical: otherwise, the solution $h^1$ of equation~\eqref{EqFMc1hatEqh1} would have an $r(\log r)$ scalar type $1$ leading order term, which would lead to a scalar type $1$ error term at $\hat M^\circ$ of size $\eps\cdot\eps(\log\eps)$, which could not be solved away since this would require a modulation of the center of mass at order $\log\eps$---which blows up at $\hat M$.} Theorem~\ref{ThmAhKCoker} and Proposition~\ref{PropFhM1CombRic} suggest setting, for any fixed $\ubar t\in I$,
\begin{equation}
\label{EqFhM1ModChoices}
  \hat c(t) = -\frac{1}{2}\int_{\ubar t}^t\int_{\ubar t}^s \hat c^1(w)\,\dd w\,\dd s,\qquad
  \dot\bha(t) = -\int_{\ubar t}^t \dot\bha^1(s)\,\dd s,
\end{equation}
so that $2\hat c''=-\hat c^1$ and $\dot\bha'=-\dot\bha^1$. In view of Lemma~\ref{LemmaFhM1CombProp}\eqref{ItFhM1CombPropDecay}, if we set
\begin{equation}
\label{EqFhM1Modgu1}
  \wt g'_{1,\hat c,\dot\bha} := \wt g^1 + \hat\chi\sfe^{-1}(\hat g_{1,b,\hat c,\dot\bha}-\hat g_b),
\end{equation}
then $\wt g'_{1,\hat c,\dot\bha}-\wt g^1$ is a smooth section of $S^2\wt T^*\wt M$ near $M_\circ$ which vanishes quadratically at $M_\circ$. Furthermore, by~\eqref{EqFwtgNaive}, Lemma~\ref{LemmaFMc1Metric} (so $g'(t,x;\dd t,\dd x)=r^2 g_{(2)}(t,\frac{x}{|x|};\dd t,\dd x)$ plus a remainder with cubic decay at $r=0$), and~\eqref{EqFMc1g1}, we have
\[
  \sfe\wt g'_{1,\hat c,\dot\bha} \equiv \hat g_{1,b,\hat c,\dot\bha}(\eps,t,\hat x;\dd\hat t,\dd\hat x) + \eps^2\Bigl(\hat r^2 g_{(2)}\Bigl(t,\frac{x}{|x|};\dd\hat t,\dd\hat x\Bigr) + \chi_\circ\sfe(\hat r h^1_{\sharp,(1,0)})\Bigr)
\]
modulo terms which vanish more than quadratically at $\hat M$; this expression also defines $\wt g'_{1,\hat c,\dot\bha}$ on $\wt M\setminus\wt K_{\hat c}^\circ$, where $\wt K_{\hat c}^\circ=\{(\eps,t,\hat x)\colon(\eps,t,\hat x+\hat c(t))\in\wt K^\circ\}$ similarly to Proposition~\ref{PropFMc1}. The $\eps^3(\log\hat\rho)$-term of $\wt g'_{1,\hat c,\dot\bha}$, which is the leading logarithmic term at $\hat M$, is moreover given by~\eqref{EqFMc1Eps3LogEps} still. We then have
\begin{equation}
\label{EqFhM1ModErru1}
  \Err'_{1,\hat c,\dot\bha} := \Ric(\wt g'_{1,\hat c,\dot\bha}) - \Lambda\wt g'_{1,\hat c,\dot\bha} \in \cA_\phg^{(0,0)_+,(2,0)}(\wt M\setminus\wt K_{\hat c}^\circ;S^2\wt T^*\wt M),
\end{equation}
and indeed at $\hat M_{c(t)}$ we have
\begin{align*}
  f_{1,\hat c,\dot\bha}(t) &:= \sfe(\Err'_{1,\hat c,\dot\bha}|_{\hat M_{c(t)}}) \\
    &= \Bigl(\eps^{-2}\bigl(\Ric(\sfe\wt g'_{1,\hat c,\dot\bha}) - \Lambda\eps^2\sfe\wt g'_{1,\hat c,\dot\bha}\bigr)\Bigr)\Big|_{\eps=0} \\
    &= \Phi_{\hat c(t)}^*\biggl( \wh{D_{\hat g_b}\Ric}(0)\Bigl[ \sfe\bigl(\hat r^2 g_{(2)}(t)+\chi_\circ\hat r h^1_{\sharp,(1,0)}(t)\bigr) \\
    &\qquad \hspace{3em} - \Bigl( \frac{\hat t^2}{2}h_{b,\hat c^1(t)}+\hat t\breve h_{b,\hat c^1(t)} + \hat t\hat g'_b(0,\dot\bha^1(t))\Bigr) + \wh{D_{\hat g_b}\Ric}(0)(f(t)) \Bigr] - \Lambda\hat g_b \biggr)
\end{align*}
(with smooth dependence on $t$) by Proposition~\ref{PropFhM1CombRic}, where $\Phi_{\hat c(t)}\colon(\hat t,\hat x)\mapsto(\hat t,\hat x+\hat c(t))$ and $f(t)\in\CI(\hat X_b;S^2\,\Ttsc^*_{\hat X_b}\hat M_b)$. That is, compared to~\eqref{EqFMc1hat} we have the additional terms from~\eqref{EqFhM1CombRic}. Furthermore, the $\eps(\log\hat\rho)$-term of $\Err_{1,\hat c,\dot\bha}$ is given on a $t$-level set by the pullback of~\eqref{EqFMc1hatLog} along $\Phi_{\hat c(t)}$.

By Theorem~\ref{ThmAhKCoker} and Corollary~\ref{CorFhM1CokerParam}, and in view of the choices~\eqref{EqFhM1ModChoices}, the assumptions of Theorem~\ref{ThmAhPhg} (cf.\ \eqref{EqAhPhgEquiv}) are satisfied for $(\Phi_{\hat c(t)}^{-1})^*f_{1,\hat c,\dot\bha}(t)\in\rho_\circ^2\CI(\hat X_b;S^2\,\Ttsc^*_{\hat X_b}\hat M_b)$ for all $t\in I$, in that
\[
  \big\la (\Phi_{\hat c(t)}^{-1})^*f_{1,\hat c,\dot\bha}(t), - \big\ra_{L^2(\hat X_b)} = 0 \in (\cK_{b,\rm tot}^*)^*.
\]
Theorem~\ref{ThmAhPhg} produces
\[
  h\in \cA_\phg^{(0,1)_+}(\hat M\setminus\wt K^\circ;S^2\wt T^*_{\hat M}\wt M),
\]
so $\sfe h\in\CI(I;\cA_\phg^{(0,1)_+}(\hat X_b;S^2\,\Ttsc^*_{\hat X_b}\hat M_b))$, with $\Phi_{\hat c(t)}^*(\wh{D_{\hat g_b}\Ric}(0)(\sfe h(t)))=-f_{1,\hat c,\dot\bha}(t)$ for all $t\in I$. Since $\eps^2\hat\chi h\in\cA_\phg^{(2,0),(2,1)_+}$, we therefore find that\footnote{The error space here is contained in $\cA_\phg^{\N_0\cup(3,1)_+,(1,0)_+}$. But we keep the two summands separate here to record the fact that the coefficient of the first logarithmic term at $\hat M$, resp.\ $M_\circ$ does not have a logarithmic leading order term at the other boundary.}
\[
  \wt g_{1,\hat c,\dot\bha} := \wt g'_{1,\hat c,\dot\bha} + \eps^2\hat\chi h \equiv \wt\upbeta^*g \bmod \bigl(\cA_\phg^{(0,0)\cup(3,1)_+,(1,0)}+\cA_\phg^{(2,0),(2,1)_+}\bigr)(\wt M\setminus\wt K_{\hat c}^\circ;S^2\wt T^*\wt M)
\]
satisfies
\[
  \Err_{1,\hat c,\dot\bha} := \Ric(\wt g_{1,\hat c,\dot\bha}) - \Lambda\wt g_{1,\hat c,\dot\bha} \in \cA_\phg^{(1,1)_+,(2,1)_+}(\wt M\setminus\wt K_{\hat c}^\circ;S^2\wt T^*\wt M).
\]
(Thus, we have succeeded in solving away the leading order error at $\hat M$ at the expense of generating mild logarithmic terms at $M_\circ$ and shifting the small black hole.)

More precisely, recalling~\eqref{EqFhM1ModErru1}, the error $\Err_{1,\hat c,\dot\bha}$ is the sum of $\Err'_{1,\hat c,\dot\bha}$, the linear term $\eps^2 D_{\wt g'_{1,\hat c,\dot\bha}}\Ric(\hat\chi h)$, and quadratic and higher order error terms. But since $\wt g'_{1,\hat c,\dot\bha}\in\cA_\phg^{(0,0)\cup(3,1)_+,(0,0)}$ by Proposition~\ref{PropFMc1} and~\eqref{EqFhM1Modgu1}---i.e.\ this metric has smooth coefficients modulo $\cA_\phg^{(3,1)_+,(0,0)}$ (so modulo almost three orders down at $\hat M$)---the operator $D_{\wt g_{1,\hat c,\dot\bha}}\Ric\in\cA_\phg^{(-2,0)\cup(1,1)_+,(0,0)}\Diffse^2$ has smooth coefficients modulo $\cA_\phg^{(1,1)_+,(0,0)}\Diffse^2$ (so almost modulo three orders down at $\hat M$). As a consequence, we can separate the leading order logarithmic terms of $\Err_{1,\hat c,\dot\bha}$ at $\hat M$ (which is a pullback of the term~\eqref{EqFMc1hat}) and $M_\circ$ (arising from the $\log\hat r$ leading order term of $h$ at $\pa\hat M$), to wit,\footnote{The leading parts of the index sets of the first term are $(1,1)$ and $(2,0)$, and those of the second term are $(1,0)$ and $(2,1)$.}
\[
  \Err_{1,\hat c,\dot\bha} \in \bigl(\cA_\phg^{(1,1)_+,(2,0)\cup(3,2)_+} + \cA_\phg^{(1,0),(2,1)_+}\bigr)(\wt M\setminus\wt K_{\hat c}^\circ;S^2\wt T^*\wt M).
\]
We refine the description of the second summand here: the first step in the proof of Theorem~\ref{ThmAhPhg} is, in Lemma~\ref{LemmaAhPhgFormal}, the application of Theorem~\ref{ThmRicSolv}\eqref{ItRicSolv0}, and therefore the coefficient $h_{(0,1)}\in\CI(\pa M_\circ;\upbeta_\circ^* S^2 T^*_\cC M)$ of the leading order term $(\log\hat r)\sfe(h_{(0,1)})$ of $h$ is of scalar type $0$. The same is thus true for the coefficient of the logarithmic term of $\Err_{1,\hat c,\dot\bha}$ of the second summand here at $\pa M_\circ$, as this is $\eps^2(\log\hat r)D_g\Ric(h_{(0,1)})$ with $D_g\Ric(h_{(0,1)})\equiv D_{\ubar g}\Ric(h_{(0,1)})\bmod\CI(M_\circ;\upbeta_\circ^*S^2 T^*M)$ (which lies in $\hat\rho^{-1}\CI$ since $h_{(0,1)}\in\ker N(D\Ric,0)$, cf.\ the final part of Theorem~\ref{ThmRicSolv}).

Finally, we pull back $\wt g_{1,\hat c,\dot\bha}$ by $\Phi_{1,\hat c,\dot\bha}$ in the notation of Lemma~\ref{LemmaFhM1CombProp}\eqref{ItFhM1CombPropPb}; pullback by $\Phi_{1,\hat c,\dot\bha}$ preserves polyhomogeneity, and
\[
  \wt g_1 := \Phi_{1,\hat c,\dot\bha}^*\wt g_{1,\hat c,\dot\bha}
\]
equals $g$ at $M_\circ$ and $\hat g_b$ modulo $\hat\rho^2\CI$ near $\hat M$. In summary, we have shown:

\begin{prop}[First correction at $\hat M$]
\label{PropFhM1}
  There exists a $((3,1)_+,(1,0)_+)$-smooth total family $\wt g_1$ which is equal to\footnote{That is, its $\hat M$-model is constant along $\cC$ and equal to $\hat g_b$, while its $M_\circ$-model is $g$.} $\wt g_0$ at $\hat M$ and $M_\circ$ so that
  \[
    \Err_1 := \Ric(\wt g_1) - \Lambda\wt g_1 \in \bigl(\cA_\phg^{(1,1)_+,(2,0)\cup(3,2)_+} + \cA_\phg^{(1,0),(2,1)_+}\bigr)(\wt M\setminus\wt K^\circ;S^2\wt T^*\wt M),
  \]
  which moreover has the following properties.
  \begin{enumerate}
  \item\label{ItFhM1hatMQuadr} $\wt g_1$ is equal to the Kerr metric along $\hat M$ modulo $\cO(\eps^2)$ errors in the sense that in the coordinates $\eps,t,\hat x$ near $\hat M^\circ$ the coefficients of $\sfe(\wt g_1)(t,\hat x)-\hat g_b(\hat x)$ in the frame $\dd\hat t,\dd\hat x$ lie in $\eps^2\CI([0,1)_\eps\times I_t\times\R^3_{\hat x})$;
  \item $\wt g_1=g$ outside the domain of influence $U$ of a compact subset of $\cU^\circ$ in $X$, and thus also $\supp\Err_1\cap M_\circ\subset\upbeta_\circ^*U$;
  \item\label{ItFhM1hM} the leading order term of $\Err_1$ at $\hat M$ (in Fermi normal coordinates) is equal to the $\eps\log\hat\rho$ term $\sfe^{-1}(\eps\log(\eps\hat r)\Err^1_{(1,1)})$, $\Err^1_{(1,1)}=\wh{D_{\hat g_b}\Ric}(0)(\chi_\circ\hat r^2\sfe(h^1_{\sharp,(2,1)}))$, $h^1_{\sharp,(2,1)}\in\CI(\pa M_\circ;\upbeta_\circ^*S^2 T^*_\cC M)$, of $\Err^1$ in Proposition~\usref{PropFMc1}, in the sense described there, and $(\hat r\Err^1_{(1,1)})|_{\pa\hat M}$ is a sum of scalar and vector type $2$ terms;
  \item\label{ItFhM1Mc} the leading order term of $\Err_1$ at $M_\circ$ is equal to $\eps^2(\log\rho_\circ)\Err_{1,(2,1)}$ where $\Err_{1,(2,1)}\in\hat\rho^{-1}\CI(M_\circ;\upbeta_\circ^*S^2 T^*M)$ has scalar type $0$ leading order term $(r\Err_{1,(2,1)})|_{\pa M_\circ}$.
  \end{enumerate}
\end{prop}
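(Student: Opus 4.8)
\textbf{Proof strategy for Proposition~\ref{PropFhM1}.}
The plan is to follow the chain of constructions in \S\ref{SssFhM1COM}--\S\ref{SssFhM1Mod}, assembling the total family $\wt g_1$ by (1) passing from $\wt g^1$ (Proposition~\ref{PropFMc1}) to the modulated family $\wt g_{1,\hat c,\dot\bha}$ that cancels the leading-order error at $\hat M$, and (2) pulling back by the diffeomorphism $\Phi_{1,\hat c,\dot\bha}$ of Lemma~\ref{LemmaFhM1CombProp}\eqref{ItFhM1CombPropPb} to re-center the small black hole. Concretely: first I would recall from~\eqref{EqFhM1f1} that $f^1=\sfe(\Err^1|_{\hat M})$ is a smooth family (in $t\in I$) of stationary symmetric 2-tensors on $\hat X_b$ of class $\rho_\circ^2\CI$, lying in $\ker\wh{\delta_{\hat g_b}\sfG_{\hat g_b}}(0)$ by the second Bianchi identity for $\wt g^1$. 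By Lemma~\ref{LemmaFhM1Coker} and Corollary~\ref{CorFhM1CokerParam}, there are unique $\hat c^1\in\CI(I;\R^3)$ and $\dot\bha^1$ (vanishing if $\bha=0$, in $\CI(I;(\R\bha)^\perp)$ otherwise) with $f^1_{\rm Kerr}=\ell_{b,\rm Kerr}(0,\dot\bha^1)$ and $f^1_{\rm COM}=\ell_{b,\rm COM}(\hat c^1)$. I then define $\hat c$, $\dot\bha$ by the double/single antiderivatives~\eqref{EqFhM1ModChoices}, so that $2\hat c''=-\hat c^1$, $\dot\bha'=-\dot\bha^1$, form $\wt g'_{1,\hat c,\dot\bha}$ as in~\eqref{EqFhM1Modgu1}, and use Proposition~\ref{PropFhM1CombRic} to compute that the $\hat M$-model of the new error $\Err'_{1,\hat c,\dot\bha}$ is, up to the pullback $\Phi_{\hat c(t)}^*$, the old model $f^1$ plus exactly the correction terms $D_{\hat g_b}\Ric(\tfrac{\hat t^2}{2}h_{b,\hat c^1}+\hat t\breve h_{b,\hat c^1}+\hat t\hat g'_b(0,\dot\bha^1))$ appearing in~\eqref{EqAhKCokerC}--\eqref{EqAhKCokerK}. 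By Theorem~\ref{ThmAhKCoker}\eqref{ItAhKCokerKerrPK}, \eqref{ItAhKCokerCOM} (and parts \eqref{ItAhKCokerKerrPC}, \eqref{ItAhCokerCOMKerr}, \eqref{ItAhKCokerNoC} for the vanishing of the cross- and $h_{b,\hat c}$-terms), this precisely cancels $\la f^1(t),-\ra$ on $\cK^*_{b,\rm tot}$, so that $(\Phi_{\hat c(t)}^{-1})^*$ of the new $\hat M$-model lies in the range of $\wh{D_{\hat g_b}\Ric}(0)$ in the sense of Theorem~\ref{ThmAhPhg}. Applying Theorem~\ref{ThmAhPhg} fiberwise in $t$ (with smooth dependence on $t$, as in Lemma~\ref{LemmaAhPhgFormal}) produces the near-field correction $\eps^2\hat\chi h$, and I set $\wt g_{1,\hat c,\dot\bha}=\wt g'_{1,\hat c,\dot\bha}+\eps^2\hat\chi h$ and finally $\wt g_1=\Phi_{1,\hat c,\dot\bha}^*\wt g_{1,\hat c,\dot\bha}$.

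Next I would verify the claimed index sets and the four enumerated properties. The membership $\Err_1\in(\cA_\phg^{(1,1)_+,(2,0)\cup(3,2)_+}+\cA_\phg^{(1,0),(2,1)_+})$ follows from Proposition~\ref{PropELTAcc}: the linear-in-$h$ term contributes via $D_{\wt g'_{1,\hat c,\dot\bha}}\Ric$, whose coefficients are smooth modulo $\cA_\phg^{(1,1)_+,(0,0)}\Diffse^2$ because $\wt g'_{1,\hat c,\dot\bha}$ is smooth modulo $\cA_\phg^{(3,1)_+,(0,0)}$ (from Proposition~\ref{PropFMc1} plus Lemma~\ref{LemmaFhM1CombProp}\eqref{ItFhM1CombPropDecay}); the quadratic and higher terms are then of strictly lower order. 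Property~\eqref{ItFhM1hatMQuadr} is immediate from~\eqref{EqFhM1CombRicMet} and the $\eps^2$-vanishing of the far-field correction $\eps^2\hat\chi(\hat r^2 g_{(2)}+\hat r h^1_{\sharp,(1,0)})$ together with $\eps^2\hat\chi h$ at $\hat M$. For~\eqref{ItFhM1hM} I note that the leading order term of $\Err_1$ at $\hat M$ is unchanged from that of $\Err^1$, namely the $\eps\log\hat\rho$ term $\sfe^{-1}(\eps\log(\eps\hat r)\Err^1_{(1,1)})$ of Proposition~\ref{PropFMc1}, since all modulation corrections are $\cO(\eps^2)$ at $\hat M$ and pullback by $\Phi_{1,\hat c,\dot\bha}$ acts as a translation on $\hat\cM^\circ$ which preserves the scalar/vector type $2$ structure of $(\hat r\Err^1_{(1,1)})|_{\pa\hat M}$. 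Property~\eqref{ItFhM1Mc} requires identifying the leading $M_\circ$-term of $\Err_1$ as $\eps^2(\log\rho_\circ)D_g\Ric(h_{(0,1)})$, where $h_{(0,1)}$ is the coefficient of the $\log\hat r$ leading term of $h$ at $\pa\hat M$; this coefficient is of scalar type $0$ because the first step of the proof of Theorem~\ref{ThmAhPhg} (Lemma~\ref{LemmaAhPhgFormal}) invokes Theorem~\ref{ThmRicSolv}\eqref{ItRicSolv0} at $z=0$, whose output $h_{k+1}\in\ker N_{\rms 0}(D\Ric,0)$ is of scalar type $0$, and $D_g\Ric(h_{(0,1)})\equiv D_{\ubar g}\Ric(h_{(0,1)})$ modulo $\CI(M_\circ)$ lies in $\hat\rho^{-1}\CI$ since $h_{(0,1)}\in\ker N(D\Ric,0)$.

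The main obstacle, as flagged in~\S\ref{SssFhM1Mod}, is the \emph{mismatch in orders}: an $\cO(\eps^2)$ error in the fiberwise Ricci tensor requires modulating the center of mass $\hat c$ at order $\cO(1)$ (not $\cO(\eps)$), because the quadratic-in-$\hat t$ expansion of $\hat g_b(\hat x+\hat c(t_0+\eps\hat t))$ only yields $D_{\hat g_b}\Ric(\tfrac{\hat t^2}{2}\cL_{\hat c''(t_0)\cdot\pa_{\hat x}}\hat g_b)$ at order $\eps^2$, and similarly $\dot\bha$ must be modulated at order $\cO(1)$ with $\dot\bha'$ entering at $\eps^2$. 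This is why $\hat c$ and $\dot\bha$ are (double/single) antiderivatives of $\hat c^1,\dot\bha^1$ rather than being $O(\eps)$ corrections, and it is why Proposition~\ref{PropFhM1CombRic} is needed in the delicate form stated---tracking which Lie-derivative combinations survive after using that $h_{b,\hat c}$ and $\hat t h_{b,\hat c}+\breve h_{b,\hat c}$ lie in $\ker D_{\hat g_b}\Ric$ and that various double commutators of $\wt\cL$ with $\hat t$ vanish. A closely related subtlety is that the far-field solution $h^1$ must have vanishing scalar and vector type $1$ leading coefficients $h^1_{\sharp,(1,0)}$ at $\pa M_\circ$ (Proposition~\ref{PropFMc1}\eqref{ItFMc1h1sharp}, ultimately from Lemma~\ref{LemmaFMc1Pure})---otherwise a scalar type $1$ error of size $\eps\cdot\eps(\log\eps)$ would appear at $\hat M^\circ$, demanding a center-of-mass modulation at order $\log\eps$, which blows up at $\hat M$ and is forbidden. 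Verifying that this obstruction does \emph{not} arise, i.e.\ that the surviving logarithmic error at $\hat M$ is of scalar/vector type $2$ (hence solvable via Theorem~\ref{ThmAhPhg} without further mass or $|\bha|$ modulation), is the crux and is exactly what parts~\eqref{ItFhM1hM}--\eqref{ItFhM1Mc} record for use in the next step \S\ref{SsFMc2}.
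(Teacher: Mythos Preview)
Your proposal is correct and follows the paper's construction in \S\ref{SssFhM1Mod} essentially line by line. Two minor corrections: for property~\eqref{ItFhM1hatMQuadr}, equation~\eqref{EqFhM1CombRicMet} shows $\hat g_{1,b,\hat c,\dot\bha}$ equals $\Phi_{\hat c(t_0)}^*\hat g_b$ only modulo $\cO(\eps)$ (not $\cO(\eps^2)$), and the $\cO(\eps^2)$ closeness of $\wt g_1$ to $\hat g_b$ is achieved only \emph{after} the pullback by $\Phi_{1,\hat c,\dot\bha}$ via Lemma~\ref{LemmaFhM1CombProp}\eqref{ItFhM1CombPropPb} (which you do cite in your first paragraph); and for property~\eqref{ItFhM1hM}, the modulation corrections $\hat\chi\sfe^{-1}(\hat g_{1,b,\hat c,\dot\bha}-\hat g_b)$ are in fact $\cO(\eps)$ at $\hat M$ (not $\cO(\eps^2)$), but being \emph{smooth} in $\eps$ they contribute no $\eps\log\hat\rho$ terms, which is why the leading logarithmic term of $\Err_1$ at $\hat M$ agrees with that of $\Err^1$.
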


\begin{rmk}[Re-interpretation of the update from $\wt g^1$ to $\wt g_1$]
\label{RmkFhM1Nature}
  Carefully note that $\Phi_{1,\hat c,\dot\bha}$ differs from the identity map near $(M_\circ)^\circ$ by a term of size $\eps$; thus, we typically only have $\wt g_1-\wt g^1=\cO(\eps)$ there, and in this sense the construction leading to Proposition~\ref{PropFhM1} in fact also entails a correction at $M_\circ$ at the $\cO(\eps)$ level. (On the other hand, $\wt g_1$ and $\wt g^1$ are equal modulo $\cO(\eps^2)$ corrections near $\hat M^\circ$.) An alternative approach to the proof of Proposition~\ref{PropFhM1} is thus to replace the correction term $\eps h^1$ from Proposition~\ref{PropFMc1} by
  \[
    (e^{\eps W})^*(g+\eps h^1)-g=\eps(h^1+\cL_W g)+\eps^2\Bigl(\frac12\cL_W^2 g+\cL_W h^1\Bigr)+\cO(\eps^3)
  \]
  for a suitable smooth vector field $W$ on $M$; the contributions of $\eps\cL_W g$, $\frac{\eps^2}{2}\cL_W^2 g$, and $\eps^2\cL_W h$ at $\hat M$ are all of size $\hat\rho^2$ (and sensitive only to the quadratic term $g_{(2)}$ in the Taylor expansion of $g$ at $\cC$ as well as the size $r$ leading order term of $h$ at $\pa M_\circ$). The advantage of the approach chosen above is that it will generalize easily to a unified construction of corrections at later stages of our argument which require infinitesimal changes of the parameters of the black hole which (unlike those encountered above) are not pure gauge.
\end{rmk}

\begin{rmk}[Less precise description of the leading order terms]
\label{RmkFhM1LessPrecise}
  We keep track of the nature of the leading order terms in parts~\eqref{ItFhM1hM}--\eqref{ItFhM1Mc} in order to limit the power of logarithms appearing below. However, this precision is not actually necessary for completing the construction of a formal solution if one is willing to acquire more logarithmic terms early on in the construction; and in later stages of the construction, we shall abandon this quest.
\end{rmk}

\subsection{Second correction at \texorpdfstring{$M_\circ$}{the blown-up background spacetime}}
\label{SsFMc2}

By Proposition~\ref{PropFhM1}, there exist
\begin{equation}
\label{EqFMc2f0f1}
\begin{split}
  f_0&\in\cA_\phg^{(-1,1)_+}(M_\circ;\upbeta_\circ^*S^2 T^*M), \\
  f_1&\in\hat\rho^{-1}\CI(M_\circ;\upbeta_\circ^*S^2 T^*M),
\end{split}
\end{equation}
so that\footnote{The term $f_1$ is independent of the choice of defining function $\rho_\circ$, whereas changing from $\rho_\circ$ to $a\rho_\circ$ where $0<a\in\CI(\wt M)$ changes $f_0$ to $f_0+(\log a|_{M_\circ})f_1$. This does not affect the memberships~\eqref{EqFMc2f0f1}.}
\[
  \Err_1 \equiv -\bigl( \eps^2 f_0 + \eps^2(\log\rho_\circ)f_1 \bigr) \bmod \cA_\phg^{(1,1)_+,(3,2)_+}(\wt M\setminus\wt K^\circ;S^2\wt T^*\wt M).
\]
We take $\hat\rho=r$ near $\pa M_\circ$ and set $\rho_\circ=\frac{\eps}{\hat\rho}$; then the coefficient of $r^{-1}\log r$ of $f_0$ is a sum of terms of scalar and vector type $2$, and the coefficient of $r^{-1}$ of $f_1$ is of scalar type $0$. The important common feature in the sequel is that \emph{the leading order terms of $f_0$ and $f_1$ at $\pa M_\circ$ have vanishing scalar and vector type $1$ components}.

\begin{prop}[Second correction at $M_\circ$]
\label{PropFMc2}
  There exists a symmetric 2-tensor $h\in\cA_\phg^{(3,1)_+\cup(4,3)_+,(2,1)_+}(\wt M;S^2\wt T^*\wt M)$ with support in the domain of influence of a compact subset of $\cU^\circ$ so that $\wt g^2:=\wt g_1+h$, which is a $((3,1)_+\cup(4,3)_+,(1,0)_+)$-smooth total family, satisfies
  \[
    \Err^2 \in \cA_\phg^{(1,1)_+\cup(2,3)_+,(3,2)_+}(\wt M\setminus\wt K^\circ;S^2\wt T^*\wt M).
  \]
  Moreover, the leading order term of $\Err^2$ at $\hat M$ is $\eps(\log\hat\rho)\sfe^{-1}\Err^2_{(1,1)}$ where
  \[
    \Err^2_{(1,1)}:=\wh{D_{\hat g_b}\Ric}(0)(\sfe \hat h)
  \]
  for some $\hat h\in\cA^{(-2,0)_+}(\hat M;S^2\wt T^*\wt M)$; and $\sfe^{-1}\Err^2_{(1,1)}\in\cA_\phg^{(2,2)_+}(\hat M\setminus\wt K^\circ;S^2\wt T^*\wt M)$.
\end{prop}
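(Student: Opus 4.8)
The plan is to mirror the structure of the proof of Proposition~\usref{PropFMc1}: first solve away the leading order (and leading logarithmic) terms of $\Err_1$ at $\pa M_\circ$ in Taylor series by an application of Theorem~\usref{ThmRicSolv}, then complete this to a true solution of the linearized equation on $M_\circ$ via Theorem~\usref{ThmAc}. Concretely, I would seek $h=\eps^2\chi_\circ(h_\sharp+\upbeta_\circ^*h_\flat)$ where $h_\sharp\in\cA_\phg^{(1,1)_+,(2,1)_+}(M_\circ;\upbeta_\circ^*S^2 T^*M_\circ)$ solves
\[
  (D_g\Ric-\Lambda)(h_\sharp+\upbeta_\circ^*h_\flat) = f_0 + (\log\rho_\circ)f_1
\]
on $(M_\circ)^\circ$, with $f_0,f_1$ as in~\eqref{EqFMc2f0f1}. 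The integrability condition $\delta_g\sfG_g(f_0+(\log\rho_\circ)f_1)=0$ holds automatically away from $\hat M$ by the second Bianchi identity $\delta_{\wt g_1}\sfG_{\wt g_1}\Err_1=0$, applied modulo $\cO(\eps^3)$ and comparing coefficients of $\eps^2$ and $\eps^2\log\rho_\circ$. Then $\eps^2 h\in\cA_\phg^{(3,1)_+\cup(4,3)_+,(2,1)_+}$ (with the $(4,3)_+$ component at $\hat M$ arising from the $\log^2$ term, i.e.\ from applying $D_g\Ric$ to the $r^2(\log r)$ and $r^2(\log r)^2$ coefficients of $h_\sharp$, cf.\ the appearance of $(2,1)_+$-type index sets when Theorem~\usref{ThmRicSolv}\eqref{ItRicSolvl} is invoked at $z=l=2$), so that Proposition~\usref{PropELTAcc}\eqref{ItELTAccMcirc} gives $\Err^2\in\cA_\phg^{(1,1)_+\cup(2,3)_+,(3,2)_+}$.

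\textbf{Key steps.} First, near $\pa M_\circ$ I decompose $f_0=r^{-1}(\log r)f_{0,(-1,1)}+r^{-1}f_{0,(-1,0)}+\tilde f_0$ and $f_1=r^{-1}f_{1,(-1,0)}+\tilde f_1$ in Fermi normal coordinates, where $f_{0,(-1,1)}$ is a sum of scalar and vector type $2$ tensors, $f_{1,(-1,0)}$ is of scalar type $0$, and $\tilde f_0,\tilde f_1\in\CI(M_\circ;\upbeta_\circ^*S^2 T^*M)$. Crucially, none of these leading order terms has a scalar or vector type $1$ component, so when I invoke Theorem~\usref{ThmRicSolv} (with smooth parametric dependence on $t\in I$, exactly as in the proof of Lemma~\usref{LemmaAcBdyFormal}) to solve $\wh{D_{\ubar g}\Ric}(0)(\cdots)=f_{0,(-1,0)}$, resp.\ $=f_{0,(-1,1)}$, resp.\ $=f_{1,(-1,0)}$, I am in the $z=2$, $l=2$ case (part~\eqref{ItRicSolvl}) or the $z=1$ case (no obstruction since the conditions~\eqref{EqRicSolvv1Pair}--\eqref{EqRicSolvs1Pair} only enter at $z=0,-1$; here $z>0$ throughout as $\Re f_0,\Re f_1>-1$); thus no obstruction arises and each step costs at most one additional power of $\log r$. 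Iterating as in Lemma~\usref{LemmaAcBdyFormal}/Proposition~\usref{PropAcBdy} produces the formal solution $h_\sharp$; Theorem~\usref{ThmAc} then yields $h_\flat\in\CI(M;S^2 T^*M)$ solving away the Schwartz remainder while preserving the support property, using that $(X,\gamma,k)$ has no KIDs in $\cU^\circ$. Second, I read off the leading order term of $\Err^2$ at $\hat M$: it comes from the $\eps^2 r^2(\log r)$ term $\eps^4\log(\eps\hat r)\chi_\circ\hat r^2 h_{\sharp,(2,1)}$ (and, a fortiori, from the $r^2(\log r)^2$ term) of $\eps^2 h_\sharp$, which after the rescaling argument following~\eqref{EqELTAccFastCoord} contributes $\eps\log(\eps\hat r)\sfe^{-1}\wh{D_{\hat g_b}\Ric}(0)(\chi_\circ\hat r^2\sfe(h_{\sharp,(2,1)}))$; since $h_{\sharp,(2,1)}\in\ker N(r^2\wh{D_{\ubar g}\Ric}(0),2)$ is a sum of scalar and vector type $2$ tensors, the resulting $\sfe^{-1}\Err^2_{(1,1)}$ lies in $\cA_\phg^{(2,2)_+}(\hat M\setminus\wt K^\circ;S^2\wt T^*\wt M)$, with $\hat h$ the $\eps$-independent extension of $\chi_\circ\hat r^2\sfe(h_{\sharp,(2,1)})$ (of class $\cA^{(-2,0)_+}$).

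\textbf{Main obstacle.} The only genuine subtlety is bookkeeping of index sets: I must verify that solving away $f_0+(\log\rho_\circ)f_1$ (which already carries $\log\rho_\circ$, i.e.\ $\log(\eps\hat r)=\log\eps+\log\hat\rho$ structure at $M_\circ$) produces exactly the claimed index sets $(3,1)_+\cup(4,3)_+$ at $\hat M$ and $(2,1)_+$ at $M_\circ$ for $h$, and hence $(1,1)_+\cup(2,3)_+$ and $(3,2)_+$ for $\Err^2$. The potential pitfall is that feeding a $\log\rho_\circ$ factor through the iterative construction of Lemma~\usref{LemmaAcBdyFormal} could in principle generate uncontrolled powers of logarithms; this is prevented precisely by the fact that $\Re f_0,\Re f_1>-1$ (so $z\geq 1$ everywhere, away from the exceptional values $z=-l-1,l$ for $l\geq 1$ except the single value $z=2$, $l=2$, which costs one $\log$), together with the observation that after one full iteration step the error improves by a full order at $M_\circ$ so the linear equations stabilize. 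I expect the verification that the scalar/vector type $1$ components of the leading terms of $f_0,f_1$ vanish — which is inherited from parts~\eqref{ItFhM1hM}--\eqref{ItFhM1Mc} of Proposition~\usref{PropFhM1} and guarantees we never hit a logarithm-forcing obstruction at $z=0,-1$ — to be the one place where the earlier careful tracking pays off, but it requires no new computation.
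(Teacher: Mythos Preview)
Your strategy is the same as the paper's, but two aspects of the execution need correction.

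First, the equation you propose to solve, $(D_g\Ric-\Lambda)(h_\sharp+\upbeta_\circ^*h_\flat)=f_0+(\log\rho_\circ)f_1$, is not an equation on $M_\circ$: since $\log\rho_\circ=\log\eps-\log\hat\rho$, the right-hand side depends affinely on $\log\eps$, so a single $\eps$-independent $h_\sharp$ on $M_\circ$ cannot work---your ansatz $h=\eps^2\chi_\circ(h_\sharp+\upbeta_\circ^*h_\flat)$ has index set $(2,0)$ at $M_\circ$, not the required $(2,1)_+$. The paper makes your ``comparing coefficients of $\eps^2$ and $\eps^2\log\rho_\circ$'' explicit: from the second Bianchi identity it extracts the two separate conditions $\delta_g\sfG_g f_1=0$ and $\delta_g\sfG_g(f_0-(\log\hat\rho)f_1)=0$, applies Theorem~\ref{ThmAc} to $f_1$ to get $h_1\in\cA_\phg^{(1,0)_+}(M_\circ)$ (using the absence of scalar/vector type $1$ in the $r^{-1}$ term of $f_1$), then sets $f'_0:=f_0+[D_g\Ric,\log\hat\rho]h_1\in\cA_\phg^{(-1,1)_+}$ (which inherits $\delta_g\sfG_g f'_0=0$, and whose $r^{-1}\log r$ leading term agrees with that of $f_0$) and applies Theorem~\ref{ThmAc} again to get $h'_0\in\cA_\phg^{(1,1)_+\cup(2,3)_+}(M_\circ)$. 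The correction is $h=\chi_\circ\eps^2(h'_0+(\log\rho_\circ)h_1)$; the explicit $(\log\rho_\circ)h_1$ factor is precisely what your ansatz omits.

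Second, your identification of the $\eps\log\hat\rho$ leading term of $\Err^2$ at $\hat M$ has an arithmetic slip and an omission. The term $\eps^2 r^2(\log r)\,h_{\sharp,(2,1)}$ you single out equals $\eps^4\hat r^2\log(\eps\hat r)\,h_{\sharp,(2,1)}$ and, after the $\eps^{-2}\wh{D_{\hat g_b}\Ric}(0)$ rescaling, lands at order $\eps^2\log\hat\rho$---subleading. The actual $\eps\log\hat\rho$ term of $\Err^2$ is the sum of two pieces: (i) the $\eps\log\hat\rho$ term of $\Err_1$ from Proposition~\ref{PropFhM1}\eqref{ItFhM1hM}, namely $\wh{D_{\hat g_b}\Ric}(0)(\chi_\circ\hat r^2\sfe(h^1_{\sharp,(2,1)}))$, which persists since $h$ is a correction at $M_\circ$ only; and (ii) $\wh{D_{\hat g_b}\Ric}(0)$ applied to the $\eps^3\log\hat\rho$ leading term of $h$, which arises from the $r\log r$ (not $r^2\log r$) coefficient of $h'_0$. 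The combined $\hat h$ thus lies in $\cA^{(-2,0)_+}$ as stated, but not for the reason you give.
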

\begin{proof}
  The second Bianchi identity for $\wt g_1$, i.e.\ $\delta_{\wt g_1}\sfG_{\wt g_1}\Err_1=0$, implies
  \[
    0 = \delta_g\sfG_g \bigl( f_0 + (\log\rho_\circ)f_1 \bigr) = \delta_g\sfG_g\bigl[ (\log\eps)f_1 + (f_0-(\log\hat\rho)f_1) \bigr]
  \]
  for all $\eps>0$, and therefore
  \[
    \delta_g\sfG_g f_1 = 0,\qquad
    \delta_g\sfG_g\bigl(f_0-(\log\hat\rho)f_1\bigr) = 0.
  \]
  We may thus apply Theorem~\ref{ThmAc}, strengthened (following Remark~\ref{RmkAcBdyFormalLog}) as in the proof of Proposition~\ref{PropFMc1} using the absence of scalar and vector type $1$ components of the leading order term of $f_1$, to obtain
  \[
    h_1 \in \cA_\phg^{(1,0)_+}(M_\circ;\upbeta_\circ^*S^2 T^*M),\qquad
    (D_g\Ric-\Lambda)(h_1)=f_1.
  \]

  We then note that
  \begin{align*}
    f'_0 &:= f_0 - (\log\hat\rho)f_1 + (D_g\Ric-\Lambda)\bigl((\log\hat\rho)h_1\bigr) = f_0 + [D_g\Ric,\log\hat\rho]h_1 \\
      &\in \cA_\phg^{(-1,1)_+}(M_\circ;\upbeta_\circ^*S^2 T^*M)
  \end{align*}
  has the same $r^{-1}\log r$ leading order term as $f_0$, and still lies in $\delta_g\sfG_g$. Since application of Theorem~\ref{ThmRicSolv} (with $z=1$) to the $r^{-1}$ terms of $f'_0$ does not produce additional logarithmic terms to leading order, Theorem~\ref{ThmAc} produces
  \[
    h'_0 \in \cA_\phg^{(1,1)_+\cup(2,3)_+}(M_\circ;\upbeta_\circ^*S^2 T^*M),\qquad
    (D_g\Ric-\Lambda)(h'_0)=f'_0.
  \]

  Altogether then, we have
  \begin{align*}
    (D_g\Ric-\Lambda)\bigl(h'_0 + (\log\rho_\circ)h_1 \bigr) &= (D_g\Ric-\Lambda)\bigl(h'_0 - (\log\hat\rho)h_1\bigr) + (\log\eps)(D_g\Ric-\Lambda)(h_1) \\
      &= f_0 - (\log\hat\rho)f_1 + (\log\eps)f_1 \\
      &= f_0 + (\log\rho_\circ)f_1.
  \end{align*}
  We then set
  \begin{equation}
  \label{EqFMc2h}
    \wt g^2 := \wt g_1 + h,\qquad
    h := \chi_\circ\eps^2\bigl(h'_0 + (\log\rho_\circ)h_1\bigr) \in \cA_\phg^{(3,1)_+\cup(4,3)_+,(2,1)_+}(\wt M;S^2\wt T^*\wt M).
  \end{equation}
  Since $\wt g^1\equiv\wt\upbeta^*g\bmod\cA_\phg^{(0,0)\cup(2,0)_+,(1,0)_+}(\wt M\setminus\wt K^\circ;S^2\wt T^*\wt M)$ (from Proposition~\ref{PropFhM1}), this is a $((2,0)_+\cup(4,3)_+,(1,0)_+)$-smooth total family, and thus
  \[
    \Err^2 = \Ric(\wt g^2)-\Lambda\wt g^2 \in \cA_\phg^{(0,0)_+\cup(2,3)_+,(1,0)_+}(\wt M\setminus\wt K^\circ;S^2\wt T^*\wt M).
  \]
  Being a correction of $\wt g_1$ by the tensor $h$ which eliminates the leading order terms of $\Err_1$ at $M_\circ$ and which only contributes terms at $\hat M$ with index set $(1,1)_+\cup(2,3)_+$ (cf.\ \eqref{EqFMc2h}), we in fact have
  \[
    \Err^2 \in \cA_\phg^{(1,1)_+\cup(2,3)_+,(3,2)_+}(\wt M\setminus\wt K^\circ;S^2\wt T^*\wt M).
  \]

  The leading order logarithmic term of $\Err^2$ at $\hat M$ is the sum of that of $\Err_1$ (described in Proposition~\ref{PropFhM1}\eqref{ItFhM1hM}) and an additional contribution which is $\wh{D_{\hat g_b}\Ric}(0)$ applied to the $\eps^3\log\hat\rho$ leading order term of $h$ (from $h'_0$ in~\eqref{EqFMc2h}).
\end{proof}

\subsection{Second correction at \texorpdfstring{$\hat M$}{the front face of the total gluing spacetime}}
\label{SsFhM2}

In the notation of Proposition~\ref{PropFMc2}, we have
\[
  \Err^2 - \hat\chi\eps(\log\hat\rho)\sfe^{-1}(\Err^2_{(1,1)}) \in \cA_\phg^{(1,0)_+\cup(2,3)_+,(3,2)_+}(\wt M\setminus\wt K^\circ;S^2\wt T^*\wt M).
\]
We shall thus first solve away this logarithmic term in~\S\ref{SssFhM2Log} before dealing with the remaining $\cO(\eps)$ error at $\hat M^\circ$ in~\S\ref{SssFhM2Rem}.

\subsubsection{Solving away the logarithmic leading order term}
\label{SssFhM2Log}

We have the following analogue of Corollary~\ref{CorFhM1CokerParam}, which follows from the analogue of Lemma~\ref{LemmaFhM1Coker} for the error term $\Err^2_{(1,1)}|_{\hat M_{c(t)}}\in\wh{D_{\hat g_b}\Ric}(0)(\cA^{(-2,0)_+}(\hat M_{c(t)}))$, $t\in I$, the key calculation being~\eqref{EqFhM1CokerIBP}:

\begin{lemma}[Parameters for elimination of $\Err^2_{(1,1)}$]
\label{LemmaFhM2Coker}
  We use the notation of Theorem~\usref{ThmAhKCoker}. There exist unique functions
  \[
    \hat c^2 \in \CI(I;\R^3),\qquad
    \begin{cases} \bha=0: & \dot\bha^2=0, \\ \bha\neq 0: & \dot\bha^2 \in \CI(I;(\R\bha)^\perp), \end{cases}
  \]
  so that
  \begin{alignat*}{3}
    \la\Err^2_{(1,1)}(t),-\ra_{L^2(\hat X_b)} &= \ell_{b,\rm Kerr}(0,\dot\bha^2(t)) &&\in (\cK_{b,\rm Kerr}^*)^*, \\
    \la\Err^2_{(1,1)}(t),-\ra_{L^2(\hat X_b)} &= \ell_{b,\rm COM}(\hat c^2(t)) &&\in (\cK_{b,\rm COM}^*)^*.
  \end{alignat*}
\end{lemma}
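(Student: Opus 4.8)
The statement is the direct analogue of Corollary~\ref{CorFhM1CokerParam}, with $f^1$ replaced by (the restriction to fibers $\hat M_{c(t)}$ of) the leading logarithmic term $\Err^2_{(1,1)}$ from Proposition~\ref{PropFMc2}. The proof proceeds exactly along the lines of the first correction step. First I would record the structural facts about $\Err^2_{(1,1)}$ that we need: by Proposition~\ref{PropFMc2}, $\sfe(\Err^2_{(1,1)}|_{\hat M_{c(t)}}) = \wh{D_{\hat g_b}\Ric}(0)(\sfe\hat h(t))$ for some $\hat h\in\cA^{(-2,0)_+}(\hat M;S^2\wt T^*\wt M)$, and the image tensor lies in $\cA_\phg^{(2,2)_+}(\hat M\setminus\wt K^\circ;S^2\wt T^*\wt M)$, i.e.\ it has at least (almost) inverse quadratic decay as $\rho_\circ\to 0$ and is stationary on each fiber; moreover $\delta_{\hat g_b}\sfG_{\hat g_b}\Err^2_{(1,1)}=0$ follows from the second Bianchi identity for $\wt g^2$ by separating the $\log\hat\rho$ leading order term (as in the proof of Proposition~\ref{PropFMc2} where $\delta_g\sfG_g f_1=0$ was deduced). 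Thus $\Err^2_{(1,1)}|_{\hat M_{c(t)}}$, viewed via $\sfe$ as an element of $\rho_\circ^2\cA^0(\hat X_b;S^2\,\Ttsc^*_{\hat X_b}\hat M_b)\cap\ker\wh{\delta_{\hat g_b}\sfG_{\hat g_b}}(0)$ depending smoothly on $t$, is precisely of the type to which Theorem~\ref{ThmAhKCoker} applies, and the pairings $\la\Err^2_{(1,1)}(t),-\ra_{L^2(\hat X_b)}$ against $\cK^*_{b,\rm tot}$ are well-defined.

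Next I would run the vanishing argument of Lemma~\ref{LemmaFhM1Coker}: for $\omega^*$ of the form $\chi W^\flat$ with $W$ a Killing vector field of $\hat g_b$ ($W=\pa_{\hat t}$, or $W$ the rotation generator about the axis $\bha$ when $\bha\neq 0$) and $\chi$ a radial cutoff equal to $1$ near $\pa\hat X_b$ and vanishing near $\hat r=\bhm$, the integration by parts~\eqref{EqFhM1CokerIBP} gives $\la\wh{D_{\hat g_b}\Ric}(0)\hat h,\wh{\sfG_{\hat g_b}\delta_{\hat g_b}^*}(0)\omega^*\ra=0$ since $\wh{\sfG_{\hat g_b}\delta_{\hat g_b}^*}(0)\omega^*$ is compactly supported in $\hat X_b^\circ$ and $\wh{D_{\hat g_b}\Ric}(0)^*\wh{\sfG_{\hat g_b}\delta_{\hat g_b}^*}(0)\omega^*=0$. (Unlike in Lemma~\ref{LemmaFhM1Coker}, there is no separate $-\Lambda\hat g_b$ term here to dispose of, since $\Err^2_{(1,1)}$ is already purely in the image of $\wh{D_{\hat g_b}\Ric}(0)$; this makes the present lemma slightly cleaner.) Consequently $\la\Err^2_{(1,1)}(t),-\ra_{L^2(\hat X_b)}$ annihilates $\sfG_{\hat g_b}\delta_{\hat g_b}^*\omega_0^*$ and, when $\bha\neq 0$, also $\sfG_{\hat g_b}\delta_{\hat g_b}^*\omega_{1 2}^*$. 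Using the explicit form of $\ell_{b,\rm Kerr}$ in the table of Theorem~\ref{ThmAhKCoker}, the image $\ell_{b,\rm Kerr}(0,(\R\bha)^\perp)$ is exactly the subspace of $(\cK^*_{b,\rm Kerr})^*$ annihilating those two functionals; hence there is a unique $\dot\bha^2(t)\in(\R\bha)^\perp$ (and $\dot\bha^2=0$ when $\bha=0$) with $\la\Err^2_{(1,1)}(t),-\ra|_{\cK^*_{b,\rm Kerr}}=\ell_{b,\rm Kerr}(0,\dot\bha^2(t))$, using that $\pi_{b,\rm Kerr}\circ\ell_{b,\rm Kerr}$ is an isomorphism (Theorem~\ref{ThmAhKCoker}\eqref{ItAhKCokerKerrPK}) restricted to this subspace. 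Similarly, since $\pi_{b,\rm COM}\circ\ell_{b,\rm COM}\colon\R^3\to(\cK^*_{b,\rm COM})^*$ is an isomorphism (Theorem~\ref{ThmAhKCoker}\eqref{ItAhKCokerCOM}), there is a unique $\hat c^2(t)\in\R^3$ with $\la\Err^2_{(1,1)}(t),-\ra|_{\cK^*_{b,\rm COM}}=\ell_{b,\rm COM}(\hat c^2(t))$.

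Finally I would address smoothness in $t$: since $\Err^2_{(1,1)}$ depends smoothly on $t\in I$ (its coefficients are smooth on $I_t\times(\ol{\R^3_{\hat x}}\setminus\{|\hat x|<\bhm\})$ after an $\eps$-independent extension), the pairings $t\mapsto\la\Err^2_{(1,1)}(t),-\ra_{L^2(\hat X_b)}\in(\cK^*_{b,\rm tot})^*$ are smooth, and since $\hat c^2,\dot\bha^2$ are obtained by applying the fixed linear isomorphisms $(\pi_{b,\rm COM}\circ\ell_{b,\rm COM})^{-1}$ and $(\pi_{b,\rm Kerr}\circ\ell_{b,\rm Kerr}|_{(\R\bha)^\perp})^{-1}$, they are smooth in $t$ as well; uniqueness is immediate from the invertibility of these maps. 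I do not anticipate any real obstacle here: the only point requiring a small amount of care is verifying that $\Err^2_{(1,1)}$ genuinely has the decay and divergence-free properties needed for Theorem~\ref{ThmAhKCoker}, and that $-\Lambda\hat g_b$ plays no role (both already addressed above). The substantive work—the computation of the table of pairings in Theorem~\ref{ThmAhKCoker} and the isomorphism statements—has been done, so this lemma is essentially a bookkeeping corollary.
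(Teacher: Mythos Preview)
Your proposal is correct and follows essentially the same approach as the paper: the paper's proof is a one-line reference to the analogue of Lemma~\ref{LemmaFhM1Coker} for $\Err^2_{(1,1)}\in\wh{D_{\hat g_b}\Ric}(0)(\cA^{(-2,0)_+})$, with the key calculation being the integration by parts~\eqref{EqFhM1CokerIBP}, and you have spelled out exactly these ingredients (including the simplification that no $-\Lambda\hat g_b$ term appears).
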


Furthermore, the second Bianchi identity for $\wt g^2$ implies that
\[
  \wh{\delta_{\hat g_b}\sfG_{\hat g_b}}(0)\bigl(\Err^2_{(1,1)}(t)\bigr) = 0\qquad\text{for all}\ t\in I.
\]
Similarly to~\S\ref{SsFhM1}, the strategy is now roughly to modify $\wt g^2$ by pure gauge terms $\eps(\log\eps)h_{b,\hat c}$ and $\eps^2(\log\eps)g'_b(0,\dot\bha)$ for suitable functions $\hat c,\dot\bha$ so as to generate $\eps^3(\log\eps)$ corrections making $\Err^2_{(1,1)}$ orthogonal to the full cokernel (i.e.\ so that~\eqref{EqAhPhgEquiv} holds for the modified $\Err^2_{(1,1)}$). There are two main differences to~\S\ref{SsFhM1}: first, such modifications leave the $\hat M$-model unchanged; second, the nonlinear terms of the Ricci curvature operator interact in a different manner due to different powers of $\eps$ and $\log\eps$ appearing here.

\textbf{Calculations for the Kerr model.} Consider functions $\hat c\in\CI(I;\R^3)$, and $\dot\bha\in\CI(I;(\R\bha)^\perp)$ if $\bha=0$ and $\dot\bha\equiv 0$ otherwise; we shall determine them later. Consider then
\[
  \Phi_{2,\hat c(t)} \colon (\eps,\hat x) \mapsto \bigl(\eps,\hat x+\eps(\log\eps)\hat c(t)\bigr).
\]
Anticipating the need to produce Lorentz boosts and modulations of the angular momentum vector at the order $\eps^2(\log\eps)$, consider, near $t=t_0\in I$ and with $\hat t=\frac{t-t_0}{\eps}$ as in~\eqref{EqFhM1hatCoord}, the symmetric 2-tensor $\hat g_{2,b,\hat c,\dot\bha}$ defined by
\begin{align}
  \hat g_{2,b,\hat c,\dot\bha}&:=\Bigl(\Phi_{2,\hat c(t)}^*\bigl(\hat g_b+\eps^2(\log\eps)\bigl(\breve h_{b,2\hat c'(t)}+\cL_{\Omega(\dot\bha(t))}\hat g_b\bigr)\bigr) \Bigr)_{t\in I} \nonumber\\
  &\equiv \Phi_{2,\hat c(t_0)}^*\Bigl(\hat g_b+\eps^2(\log\eps)\bigl(\hat t h_{b,2\hat c'(t_0)}+\breve h_{b,2\hat c'(t_0)} + \cL_{\Omega(\dot\bha(t_0))}\hat g_b\bigr) \nonumber\\
  &\hspace{8em} + \eps^3(\log\eps)\Bigl(\frac{\hat t^2}{2}h_{b,2\hat c''(t_0)}+\hat t\breve h_{b,2\hat c''(t_0)} + \hat t\cL_{\Omega(\dot\bha'(t_0))}\hat g_b\Bigr)\Bigr) \nonumber\\
\label{EqFhM2gbPullback}
\begin{split}
  &\equiv \hat g_b + \eps(\log\eps)\cL_{\hat c(t_0)\cdot\pa_{\hat x}}\hat g_b + \eps^2(\log\eps)\cL_{\hat t\hat c'(t_0)\cdot\pa_{\hat x}+(\hat c'(t_0)\cdot\hat x)\pa_{\hat t}+\Omega(\dot\bha(t_0))}\hat g_b \\
  &\quad\qquad + \frac12\eps^2(\log\eps)^2\cL_{\hat c(t_0)\cdot\pa_{\hat x}}^2\hat g_b + \eps^3(\log\eps)\Bigl(\frac{\hat t^2}{2}h_{b,2\hat c''(t_0)} + \hat t\breve h_{b,2\hat c''(t_0)} + \hat t\hat g_b'(0,\dot\bha'(t_0))\Bigr) \\
  &\quad\qquad + \eps^3(\log\eps)^2\cL_{\hat c(t_0)\cdot\pa_{\hat x}}\cL_{\hat t\hat c'(t_0)\cdot\pa_{\hat x}+(\hat c'(t_0)\cdot\hat x)\pa_{\hat t}+\Omega(\dot\bha(t_0))}\hat g_b \\
  &\quad\qquad + \frac16 \eps^3(\log\eps)^3\cL_{\hat c(t_0)\cdot\pa_{\hat x}}^3\hat g_b \bmod \cA_\phg^{(4,4)_+};
\end{split}
\end{align}
here, we write $\cA_\phg^{(4,4)_+}=\cA_\phg^{(4,4)_+}([0,1)_\eps;\CI(\R^4_{\hat t,\hat x};S^2 T^*\R^4))$, and we used Lemma~\ref{LemmaFhM1AxisLie}. Using Lemma~\ref{LemmaFhM1Lie}, we then compute
\begin{subequations}
\begin{equation}
\label{EqFhM2gbRic1}
\begin{split}
  \Ric(\hat g_{2,b,\hat c,\dot\bha})&\equiv \eps^3(\log\eps)\Bigl[ D_{\hat g_b}\Ric\Bigl(\frac{\hat t^2}{2}h_{b,2\hat c''(t_0)} + \hat t\breve h_{b,2\hat c''(t_0)} + \hat t\hat g_b'(0,\dot\bha'(t_0))\Bigr)\Bigr] \\
  &\quad + \eps^3(\log\eps)^2\Bigl[ D_{\hat g_b}^2\Ric\bigl( \cL_{\hat c(t_0)\cdot\pa_{\hat x}}\hat g_b, \cL_{\hat t\hat c'(t_0)\cdot\pa_{\hat x}+(\hat c'(t_0)\cdot\hat x)\pa_{\hat t}+\Omega(\dot\bha(t_0))}\hat g_b\bigr) \\
  &\quad \hspace{5.7em} + D_{\hat g_b}\Ric\bigl( \cL_{\hat c(t_0)\cdot\pa_{\hat x}}\cL_{\hat t\hat c'(t_0)\cdot\pa_{\hat x}+(\hat c'(t_0)\cdot\hat x)\pa_{\hat t}+\Omega(\dot\bha(t_0))}\hat g_b\bigr) \Bigr] \bmod \cA_\phg^{(4,4)_+}.
\end{split}
\end{equation}
Writing $V=\hat c(t_0)\cdot\pa_{\hat x}$ and $W=\hat t\hat c'(t_0)\cdot\pa_{\hat x}+(\hat c'(t_0)\cdot\hat x)\pa_{\hat t}+\Omega(\dot\bha(t_0))$, the identity~\eqref{EqFhM1LiePolarized} implies that the coefficient of $\eps^3(\log\eps)^2$ is $\frac12$ times
\begin{equation}
\label{EqFhM2gbRic2}
  D_{\hat g_b}\Ric(\cL_V\cL_W\hat g_b - \cL_W\cL_V\hat g_b) = D_{\hat g_b}\Ric(\cL_{[V,W]}\hat g_b) = 0.
\end{equation}
\end{subequations}
Thus, the Ricci tensor of $\hat g_{2,b,\hat c,\dot\bha}$ produces $\eps^3(\log\eps)$ terms which we shall use below to eliminate the cokernel of $\wh{D_{\hat g_b}\Ric}(0)$, cf.\ Theorem~\ref{ThmAhKCoker}.

\begin{lemma}[Further properties of $\hat g_{2,b,\hat c,\dot\bha}$]
\label{LemmaFhM2Further}
  We use the notation of Lemma~\usref{LemmaFhM1CombProp}. Suppose first that $\bha\neq 0$.
  \begin{enumerate}
  \item{\rm (Decay of coefficients.)} The components of $\hat g_{2,b,\hat c,\dot\bha}(\eps,t,\hat x;\dd\hat t,\dd\hat x)$, which are smooth functions on $(0,1)_\eps\times I_t\times\R^3_{\hat x}$, lift to elements of $\cA_\phg^{(0,0)_+,(0,0)\cup(3,1)_+}(\wt M)$ (with index sets referring to $\hat\cM$ and $\cM_\circ$, in this order) which differ from the corresponding components of $\hat g_b$ by terms in $\cA_\phg^{(1,1)_+,(3,1)_+}(\wt M)$.
  \item{\rm (Total pullback.)} Set
    \[
      \Phi_{2,\hat c,\dot\bha} \colon (\eps,t,\hat x) \mapsto \bigl(\eps, t - \eps^3(\log\eps)\hat c'(t)\cdot\hat x, e^{-\eps^2(\log\eps)\Omega(\dot\bha(t))}\hat x-\eps(\log\eps)\hat c(t)\bigr).
    \]
    Then $\Phi_{2,\hat c,\dot\bha}$ lifts to a conormal diffeomorphism of $\wt M$ with the property that the components of $(\Phi_{2,\hat c,\dot\bha})^*\hat g_{2,b,\hat c,\dot\bha}-\hat g_b$ lie in $\cA_\phg^{(3,2)_+,(2,1)_+}(\wt M)$.
  \end{enumerate}
  If $\bha=0$, the same conclusions hold upon dropping all terms and subscripts involving $\dot\bha$.
\end{lemma}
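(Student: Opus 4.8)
\textbf{Proof plan for Lemma~\ref{LemmaFhM2Further}.} The strategy closely mirrors the proof of Lemma~\ref{LemmaFhM1CombProp}, with the additional bookkeeping forced by the presence of $\log\eps$ factors. I will only treat the case $\bha\neq 0$ explicitly, since the case $\bha=0$ follows by deleting every term carrying $\dot\bha$.

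First I would establish the decay of coefficients. By Lemma~\ref{LemmaAhKBoostLot} we have $\breve h_{b,2\hat c'(\cdot)}\in\CI(I;\rho_\circ\CI(\hat X_b;S^2\,\Ttsc^*_{\hat X_b}\hat M_b))$, and by~\eqref{EqGKLinSize} the coefficients of $\cL_{\Omega(\dot\bha(\cdot))}\hat g_b=\hat g'_b(0,\dot\bha(\cdot))$ have even better (quadratic) decay as $|\hat x|\to\infty$. Therefore the coefficients of $\eps^2(\log\eps)(\breve h_{b,2\hat c'(t)}+\cL_{\Omega(\dot\bha(t))}\hat g_b)$ are of class $\eps^2(\log\eps)\la\hat x\ra^{-1}\CI$ on $\wt\cM$, i.e.\ lie in $\cA_\phg^{(2,1)_+,(3,1)_+}(\wt\cM)$ since $\eps$ vanishes simply at $\hat\cM$ while $\eps\la\hat x\ra^{-1}=\rho_\circ\hat\rho\cdot\hat\rho$ (abusing notation for defining functions as in the reference) vanishes triply at $\cM_\circ$ and the $\log\eps$ factor contributes to the leading logarithmic index. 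Next, $\Phi_{2,\hat c(t)}\colon(\eps,x)\mapsto(\eps,x+\eps(\log\eps)\hat c(t))$ with $x=\eps\hat x$ is a smoothly ($t$-)parameterized family of conormal maps which preserve the boundary hypersurfaces of $[[0,1)\times\ol{\R^3};\{0\}\times\pa\ol{\R^3}]$, hence lifts to a conormal diffeomorphism of $\wt\cM$; and since $\Phi_{2,\hat c(t_0)}$ is the time $\log\eps$ flow of $\hat c(t_0)\cdot\pa_{\hat x}$, while $\cL_{\hat c(t_0)\cdot\pa_{\hat x}}\hat g_b=2 h_{b,\hat c(t_0)}\in\rho_\circ^2\CI$, the difference $(\hat g_b-\Phi_{2,\hat c(t)}^*\hat g_b)_{t\in I}$ has coefficients in $\cA_\phg^{(1,1)_+,(3,1)_+}(\wt\cM)$. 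Combining these contributions with the expansion~\eqref{EqFhM2gbPullback} gives the claimed membership of the coefficients of $\hat g_{2,b,\hat c,\dot\bha}$, and of their difference with $\hat g_b$.

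For the total pullback statement, I would check smoothness (conormality) of $\Phi_{2,\hat c,\dot\bha}$ on $\wt M$ exactly as in the proof of Lemma~\ref{LemmaFhM1CombProp}\eqref{ItFhM1CombPropPb}: near $\hat\cM^\circ$ it is visibly smooth, near $\cM_\circ^\circ$ it has the local expression $(\eps,t,x)\mapsto(\eps,t-\eps^2(\log\eps)\hat c'(t)\cdot x,e^{-\eps(\log\eps)\Omega(\dot\bha(t))}x-\eps(\log\eps)\hat c(t))$ which is conormal and restricts to the identity at $\eps=0$, and the projections to $I$ and to $[[0,1)\times\ol{\R^3};\{0\}\times\pa\ol{\R^3}]$ are separately conormal by the same argument as before. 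Then, working near $t=t_0$ with $\hat t=\frac{t-t_0}{\eps}$, the Taylor expansion of $\Phi_{2,\hat c(t_0)}\circ\Phi_{2,\hat c,\dot\bha}$ in $\eps$ (and powers of $\log\eps$) reads, modulo $\cA_\phg^{(3,3)_+}$,
\[
  (\eps,\hat t,\hat x)\mapsto\bigl(\eps,\hat t-\eps^2(\log\eps)\hat c'(t_0)\cdot\hat x,\hat x-\eps^2(\log\eps)\hat t\hat c'(t_0)-\eps^2(\log\eps)\Omega_0(\dot\bha(t_0),\hat x)\bigr),
\]
i.e.\ it agrees with the identity up to an $\eps^2(\log\eps)$-correction generated by $W(t_0)=\hat t\hat c'(t_0)\cdot\pa_{\hat x}+(\hat c'(t_0)\cdot\hat x)\pa_{\hat t}+\Omega(\dot\bha(t_0))$. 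Pulling back the leading $\eps^2(\log\eps)$-part of $\hat g_{2,b,\hat c,\dot\bha}$ from~\eqref{EqFhM2gbPullback}, namely $\Phi_{2,\hat c(t_0)}^*(\hat g_b+\eps^2(\log\eps)\cL_{W(t_0)}\hat g_b)$, along this map gives $\hat g_b+\eps^2(\log\eps)\cL_{W(t_0)}\hat g_b-\eps^2(\log\eps)\cL_{W(t_0)}\hat g_b=\hat g_b$ modulo terms vanishing to order $\eps^3(\log\eps)^?$ at $\hat\cM$; more precisely the remaining error has coefficients in $\cA_\phg^{(3,2)_+}$ at $\hat\cM$. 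Together with the fact that $\Phi_{2,\hat c,\dot\bha}$ is the identity on $\cM_\circ$ and $\hat g_{2,b,\hat c,\dot\bha}$ equals $\hat g_b$ there, this yields the stated membership of the coefficients of $\Phi_{2,\hat c,\dot\bha}^*\hat g_{2,b,\hat c,\dot\bha}-\hat g_b$ in $\cA_\phg^{(3,2)_+,(2,1)_+}(\wt M)$.

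The main obstacle is purely combinatorial rather than conceptual: one must carefully track the powers of $\log\eps$ generated at each step of the Taylor expansion~\eqref{EqFhM2gbPullback}, verifying that the cubic and higher Lie-derivative terms (such as $\eps^3(\log\eps)^3\cL_{\hat c(t_0)\cdot\pa_{\hat x}}^3\hat g_b$) indeed land in the asserted index sets and that, after the total pullback, the would-be $\eps^3(\log\eps)^2$ and $\eps^3(\log\eps)^3$ contributions to $\Phi_{2,\hat c,\dot\bha}^*\hat g_{2,b,\hat c,\dot\bha}-\hat g_b$ cancel by the same Lie-bracket identity~\eqref{EqFhM2gbRic2} that appeared in the Ricci computation. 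This cancellation — which reduces the leading logarithmic power at $\hat\cM$ from $(\log\eps)^3$ down to $(\log\eps)^2$, consistent with $(3,2)_+$ — is the one place where one cannot simply quote Lemma~\ref{LemmaFhM1CombProp} verbatim and must redo the bracket bookkeeping; everything else is a routine adaptation.
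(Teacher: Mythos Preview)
Your overall structure follows the paper closely: Part~(1) is handled essentially as in Lemma~\ref{LemmaFhM1CombProp}\eqref{ItFhM1CombPropDecay}, and for Part~(2) you correctly check conormality of $\Phi_{2,\hat c,\dot\bha}$ separately near $\hat\cM^\circ$ and $\cM_\circ^\circ$, and then Taylor expand to identify the surviving terms. There is, however, a genuine gap in your explanation of why the index set at $\hat\cM$ is $(3,2)_+$ rather than $(3,3)_+$.

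You attribute the absence of the $\eps^3(\log\eps)^3$ contribution in $\Phi_{2,\hat c,\dot\bha}^*\hat g_{2,b,\hat c,\dot\bha}-\hat g_b$ to the Lie-bracket identity~\eqref{EqFhM2gbRic2}. That identity is a statement about $D_{\hat g_b}\Ric$ (it says a certain second-order Ricci term vanishes because $[V,W]$ generates a symmetry), not about the metric tensor itself; it cannot be invoked to cancel metric coefficients. The actual mechanism is simpler and purely a Taylor-series observation: the terms $\frac{1}{j!}\eps^j(\log\eps)^j\cL_{\hat c(t_0)\cdot\pa_{\hat x}}^j\hat g_b$ in~\eqref{EqFhM2gbPullback} are precisely the Taylor expansion of the translation $\hat x\mapsto\hat x+\eps(\log\eps)\hat c(t_0)$ applied to $\hat g_b$, and the $-\eps(\log\eps)\hat c(t)$ piece of $\Phi_{2,\hat c,\dot\bha}$ is (to leading order at $t=t_0$) exactly the inverse translation. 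Hence \emph{all} $(j,j)$-terms, $j\in\N$, cancel identically upon pullback---this is just $e^{a\pa}e^{-a\pa}=\mathrm{Id}$---leaving the $\eps^3(\log\eps)$ and $\eps^3(\log\eps)^2$ terms of~\eqref{EqFhM2gbPullback} as the leading survivors. No bracket computation is needed, since only the single vector field $\hat c(t_0)\cdot\pa_{\hat x}$ is involved at this level.

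Your phrasing ``the would-be $\eps^3(\log\eps)^2$ and $\eps^3(\log\eps)^3$ contributions \ldots\ cancel'' is also inconsistent with the target: the $\eps^3(\log\eps)^2$ term $\cL_{\hat c(t_0)\cdot\pa_{\hat x}}\cL_{W(t_0)}\hat g_b$ does \emph{not} cancel (it is what forces the second index $2$ in $(3,2)_+$); only the $(j,j)$-terms do. Once you replace the bracket-identity argument by the translation observation, the rest of your proposal is correct and matches the paper.
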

\begin{proof}
  The polyhomogeneity of the components of $\hat g_{2,b,\hat c,\dot\bha}$ follows from the smoothness of the coefficients of $\hat g_b$. The index set of $\hat g_{2,b,\hat c,\dot\bha}$ at $\hat\cM$ is $(0,0)_+$ since $f(\hat x+\eps(\log\eps)\hat c(t))\in\cA_\phg^{(0,0)_+}(\wt\cM\setminus\cM_\circ)$ when $f$ is a smooth function on $\R^3$, as follows by Taylor expanding $f$ around $\hat x$. Near $(\cM_\circ)^\circ$, we can use the coordinates $\eps$, $t$, $x=\eps\hat x$, so that $\Phi_{2,\hat c(t)}\colon(\eps,x)\mapsto(\eps,x+\eps^2(\log\eps)\hat c(t))$; we then write
  \[
    \hat g_{2,b,\hat c,\dot\bha} - \hat g_b = \bigl(\Phi_{2,\hat c(t)}^*\hat g_b - \hat g_b\bigr) + \eps^2(\log\eps)\Phi_{2,\hat c(t)}^*\bigl(\breve h_{b,2\hat c'(t)} + \cL_{\Omega(\dot\bha(t))}\hat g_b\bigr).
  \]
  Since the components of $\breve h_{b,2\hat c'(t)}+\cL_{\Omega(\dot\bha(t))}\hat g_b$ lie in $\rho_\circ\CI(\wt M\setminus\hat\cM)$, the second summand here has index set $(3,1)_+$ at $\cM_\circ$. If in the first summand we replaced $\hat g_b$ by $\hat{\ubar g}$, it would vanish; therefore, we may replace $\hat g_b$ by $\hat g_b-\hat{\ubar g}$, whose components lie in $\rho_\circ\CI(\wt M\setminus\hat\cM)$, and thus the pullback along $\Phi_{2,\hat c(t)}$ (which equals the identity map at $\eps=0$ modulo a $\cA_\phg^{(2,1)}$ correction) lies in $\cA_\phg^{(3,1)_+}$ as well.

  For the second part, the index set at $\cM_\circ$ is a consequence of the first part and the fact that, in the coordinates $\eps,t,x$ near $(\cM_\circ)^\circ$, the map $\Phi_{2,\hat c,\dot\bha}$ is the identity map plus corrections of class $\cA^{(2,1)_+}_\phg$. Near a fiber of $\hat\cM^\circ$ and using $\hat t=\frac{t-t_0}{\eps}$, we have
  \[
    \Phi_{2,\hat c,\dot\bha}(\eps,\hat t,\hat x) \equiv \bigl(\eps, \hat t - \eps^2(\log\eps)\hat c'(t_0+\eps\hat t)\cdot\hat x, \hat x - \eps(\log\eps)\hat c(t_0+\eps\hat t) - \eps^2(\log\eps)\Omega(\dot\bha(t_0+\eps\hat t)\bigr)
  \]
  modulo corrections of class $\cA^{(4,2)_+}$; thus, pullback along $\Phi_{2,\hat c,\dot\bha}$ cancels all terms in~\eqref{EqFhM2gbPullback} except for the $\eps^3(\log\eps)$ and $\eps^3(\log\eps)^2$ terms. In fact, all terms $\frac{1}{j!}\eps^j(\log\eps)^j\cL^j_{\hat c(t_0)\cdot\pa_{\hat x}}\hat g_b$, $j\in\N$, are canceled in this manner, and hence the index set of $\Phi_{2,\hat c,\dot\bha}^*\hat g_{2,b,\hat c,\dot\bha}-\hat g_b$ at $\hat M$ does not contain $(j,j)$, $j\in\N$.
\end{proof}

\textbf{Grafting into the total gluing spacetime; computation of the error term.} In the notation of Proposition~\ref{PropFMc2}, set
\[
  \wt g'_{2,\hat c,\dot\bha} := \wt g^2 + \hat\chi\sfe^{-1}(\hat g_{2,b,\hat c,\dot\bha}-\hat g_b) \equiv \wt g^2 \bmod \cA_\phg^{(1,1)_+,(3,1)_+}(\wt M\setminus\wt K^\circ;S^2\wt T^*\wt M).
\]
By the definition of $\wt g^2$ (and $h$) in Proposition~\ref{PropFMc2}, and recalling Proposition~\ref{PropFhM1}\eqref{ItFhM1hatMQuadr}, we have
\begin{equation}
\label{EqFhM2wtg2MhatLot}
\begin{split}
  &\sfe\wt g'_{2,\hat c,\dot\bha}(\eps,t,\hat x;\dd\hat t,\dd\hat x) \\
  &\quad \equiv \hat g_{2,b,\hat c,\dot\bha}(\eps,t,\hat x;\dd\hat t,\dd\hat x) + \eps^2 h_{(2)}(t,\hat x;\dd\hat t,\dd\hat x) + h \bmod \cA_\phg^{(3,0)_+\cup(4,3)_+,(0,0)\cup(1,0)_+} \\
  &\quad\equiv \hat g_{2,b,\hat c,\dot\bha}(\eps,t,\hat x;\dd\hat t,\dd\hat x) + \eps^2 h_{(2)}(t,\hat x;\dd\hat t,\dd\hat x) \bmod \cA_\phg^{(3,1)_+\cup(4,3)_+,(0,0)\cup(1,0)_+}
\end{split}
\end{equation}
for some $h_{(2)}$ with\footnote{The index set can be sharpened to $(-2,0)\cup(-1,0)_+$, but we do not need this level of precision here.} $\sfe^{-1}h_{(2)}\in\cA_\phg^{(-2,0)_+}(\hat M\setminus\wt K^\circ;S^2\wt T^*\wt M)$. Combining these two facts and using~\eqref{EqFhM2gbRic1}--\eqref{EqFhM2gbRic2} shows that
\begin{equation}
\label{EqFhM2Err2ca}
  \Err'_{2,\hat c,\dot\bha} := \Ric(\wt g'_{2,\hat c,\dot\bha})-\Lambda\wt g'_{2,\hat c,\dot\bha} \in \cA_\phg^{(1,1)_+\cup(2,3)_+,(3,2)_+}(\wt M\setminus\wt K^\circ;S^2\wt T^*\wt M).
\end{equation}
Furthermore, using the description~\eqref{EqFhM2wtg2MhatLot} and taking into account the contribution from the coupling of $\eps^2 h_{(2)}$ with the $\eps(\log\eps)$ term $\eps(\log\eps)\cL_{\hat c(t_0)\cdot\pa_{\hat x}}\hat g_b=\eps(\log\eps)h_{b,2\hat c(t_0)}$ in~\eqref{EqFhM2gbPullback}, the coefficient of the $\eps(\log\eps)$ leading order term of $\Err'_{2,\hat c,\dot\bha}$ at $\hat M$ is, at the fiber $\hat M_{c(t)}$, given by $\sfe^{-1}$ applied to
\begin{equation}
\label{EqFhM2Errca2}
\begin{split}
  \Err'_{2,\hat c,\dot\bha,(1,1)} &:= \Err^2_{(1,1)} + \wh{D_{\hat g_b}\Ric}(0)\Bigl(\frac{\hat t^2}{2}h_{b,2\hat c''(t)} + \hat t\breve h_{b,2\hat c''(t)} + \hat t\hat g_b'(0,\dot\bha'(t))\Bigr) \\
    &\qquad - \Lambda h_{b,2\hat c(t)} + D_{\hat g_b}^2\Ric(h_{b,2\hat c(t)},h_{(2)}(t)) \\
  &\quad \in \cA_\phg^{(2,2)_+}(\hat X_b;S^2\,\Ttsc^*_{\hat X_b}\hat M_b).
\end{split}
\end{equation}
The membership here is a consequence of~\eqref{EqFhM2Err2ca}, and the term $-\Lambda h_{b,2\hat c(t)}$ arises from the $\eps(\log\eps)\cL_{\hat c(t_0)\cdot\pa_{\hat x}}\hat g_b$ term of $\hat g_{2,b,\hat c,\dot\bha}$ in~\eqref{EqFhM2gbPullback}. We can simplify this expression using the following identity:

\begin{lemma}[Lie derivative and linearized Ricci]
\label{LemmaFhM2LieLinRic}
  Let $g$ be a metric, $h$ a symmetric 2-tensor, and $V$ a vector field. Then
  \[
    \cL_V\bigl( D_g\Ric(h)\bigr) = D_g^2\Ric(\cL_V g,h) + D_g\Ric(\cL_V h).
  \]
\end{lemma}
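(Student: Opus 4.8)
\textbf{Proof plan for Lemma~\usref{LemmaFhM2LieLinRic}.}

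The plan is to obtain the identity by differentiating the diffeomorphism covariance of the Ricci operator, in exactly the same spirit as the proof of Lemma~\usref{LemmaFhM1Lie}. Recall that for any diffeomorphism $\psi$ one has $\Ric(\psi^*g)=\psi^*\Ric(g)$. First I would let $\Psi_s$ denote the time $s$ flow of $V$, defined on any fixed precompact open subset of the manifold for $|s|$ small enough (depending on the subset), so that $\Psi_0=\Id$ and $\frac{\dd}{\dd s}\Psi_s^*(\cdot)|_{s=0}=\cL_V(\cdot)$. The covariance identity applied to $\Psi_s$ and to the one-parameter family of metrics $g+s h$ (really $\Psi_s^*(g+s h)$) reads
\[
  \Ric\bigl(\Psi_s^*(g+s h)\bigr) = \Psi_s^*\Ric(g+s h).
\]
I would then differentiate both sides at $s=0$.

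For the right-hand side, the chain rule gives $\frac{\dd}{\dd s}\Psi_s^*\Ric(g+s h)|_{s=0}=\cL_V\Ric(g)+D_g\Ric(h)$. For the left-hand side, note that $\frac{\dd}{\dd s}\Psi_s^*(g+s h)|_{s=0}=\cL_V g+h$, so by the chain rule for the nonlinear operator $\Ric$ one gets $\frac{\dd}{\dd s}\Ric(\Psi_s^*(g+s h))|_{s=0}=D_g\Ric(\cL_V g+h)=D_g\Ric(\cL_V g)+D_g\Ric(h)$. Equating the two expressions and using that $\cL_V\Ric(g)=D_g\Ric(\cL_V g)$ (which is the $s$-derivative at $0$ of $\Ric(\Psi_s^*g)=\Psi_s^*\Ric(g)$, the special case $h=0$) would already give a trivial identity — so the correct family to differentiate is not $g+s h$ but rather one that produces the second variation. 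I would instead differentiate $\Ric(\Psi_s^*(g+t h))=\Psi_s^*\Ric(g+t h)$ in $s$ at $s=0$ while keeping $t$ as an independent parameter, obtaining $D_{g+t h}\Ric(\cL_V(g+t h))=\cL_V\Ric(g+t h)$, and then differentiate this identity in $t$ at $t=0$. The left side yields $D_g^2\Ric(h,\cL_V g)+D_g\Ric(\cL_V h)$ (using $\frac{\dd}{\dd t}\cL_V(g+t h)=\cL_V h$ and that $D_{g+t h}\Ric$ depends on its base point through its first argument), while the right side yields $\cL_V D_g\Ric(h)$. This is precisely the claimed identity.

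The argument is entirely formal once the flow $\Psi_s$ is introduced, so there is essentially no obstacle; the only point requiring a sentence of care is the localization (so that $\Psi_s$ is defined), which is handled exactly as in Lemma~\usref{LemmaFhM1Lie}, and the bookkeeping of which variable is being differentiated. Alternatively, and perhaps more cleanly, I would phrase it as follows: the map $(g,h)\mapsto D_g\Ric(h)$ is a natural operator, hence equivariant under diffeomorphisms, i.e.\ $D_{\psi^*g}\Ric(\psi^*h)=\psi^*\bigl(D_g\Ric(h)\bigr)$; differentiating this identity along $\psi=\Psi_s$ at $s=0$, and using that $\frac{\dd}{\dd s}|_{s=0}$ of the left-hand side is $D_g^2\Ric(\cL_V g,h)+D_g\Ric(\cL_V h)$ (the first term coming from the variation of the base point, the second from the variation of the argument) while $\frac{\dd}{\dd s}|_{s=0}$ of the right-hand side is $\cL_V\bigl(D_g\Ric(h)\bigr)$, gives the result. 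I would present whichever of these two derivations is shorter in the final text; both avoid any coordinate computation.
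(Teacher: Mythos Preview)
Your proposal is correct and takes essentially the same approach as the paper: the paper starts from $\cL_V(D_g\Ric(h))=\frac{\dd}{\dd t}\big|_0(e^{tV})^*(D_g\Ric(h))$, uses diffeomorphism covariance to rewrite this as $\frac{\dd}{\dd t}\big|_0 D_{(e^{tV})^*g}\Ric((e^{tV})^*h)$, and then differentiates in $t$---which is exactly your ``cleaner'' formulation at the end. Your two-parameter variant (differentiating first in $s$, then in $t$) is the same argument with the order of differentiation swapped, using the symmetry $D_g^2\Ric(h,\cL_V g)=D_g^2\Ric(\cL_V g,h)$.
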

\begin{proof}
  Writing $e^{t V}$ for the time $t$ flow of $V$, the left hand side of the claimed identity is
  \begin{align*}
    \frac{\dd}{\dd t}\Big|_0\Bigl[(e^{t V})^*\Bigl(\frac{\dd}{\dd s}\Big|_0 \Ric(g+s h)\Bigr)\Bigr] &= \frac{\dd}{\dd t}\Big|_0\,\frac{\dd}{\dd s}\Big|_0 \Ric\bigl((e^{t V})^*g + s(e^{t V})^*h\bigr) \\
      &= \frac{\dd}{\dd t}\Big|_0\,\frac{\dd}{\dd s}\Big|_0 \Bigl( (e^{t V})^*\Ric(g) + s D_{(e^{t V})^*g}\Ric((e^{t V})^*h) \Bigr) \\
      &= \frac{\dd}{\dd t}\Big|_0 D_{(e^{t V})^*g}\Ric(h+t\cL_V h) \\
      &= D^2_g\Ric(\cL_V g,h) + D_g\Ric(\cL_V h).\qedhere
  \end{align*}
\end{proof}

Since $\eps^2\sfe(\Err^2)$ vanishes more than quadratically at $\hat M$, we have
\[
  D_{\hat g_b}\Ric(h_{(2)}(t))-\Lambda\hat g_b = \wh{D_{\hat g_b}\Ric}(0)(h_{(2)}(t))-\Lambda\hat g_b=0
\]
for all $t\in I$. Taking the Lie derivative of this along $\hat c(t)\cdot\pa_{\hat x}$ (on each $t$-level set separately) gives the identity
\begin{equation}
\label{EqFhM2h2Identity}
  D_{\hat g_b}^2\Ric(h_{b,2\hat c(t)},h_{(2)}(t)) - \Lambda h_{b,2\hat c(t)} + \wh{D_{\hat g_b}\Ric}(0)\bigl(\cL_{\hat c(t)\cdot\pa_{\hat x}}h_{(2)}(t)\bigr) = 0.
\end{equation}
Therefore, we can rewrite~\eqref{EqFhM2Errca2} as
\begin{equation}
\label{EqFhM2Err2}
  \Err'_{2,\hat c,\dot\bha,(1,1)} = \Err^2_{(1,1)} + \wh{D_{\hat g_b}\Ric}(0)\Bigl(\frac{\hat t^2}{2}h_{b,2\hat c''(t)} + \hat t\breve h_{b,2\hat c''(t)} + \hat t\hat g_b'(0,\dot\bha'(t)) - \cL_{\hat c(t)\cdot\pa_{\hat x}}h_{(2)}(t)\Bigr).
\end{equation}
For later use, we note that, by definition of $h_{(2)}$, the leading order term $(\hat r^2\sfe(h_{(2)}))|_{\pa\hat M}$ is equal to $g_{(2)}$ from Lemma~\ref{LemmaFMc1Metric}; therefore, near $|\hat x|^{-1}=0$, the coefficients (in the frame $\dd\hat t,\dd\hat x$) of $h_{(2)}(t)$ are quadratic polynomials in $\hat x$ modulo $\cA_\phg^{(-1,1)_+}$ errors, and therefore those of $\cL_{\hat c(t)\cdot\pa_{\hat x}}h_{(2)}(t)$ are linear functions of $\hat x$ modulo $\cA_\phg^{(0,1)_+}$ errors; since $\hat z^\mu\,\dd\hat z^\kappa\,\dd\hat z^\lambda\in\ker\wh{D_{\ubar g}\Ric}(0)$ where $\hat z=(\hat t,\hat x)$, this implies
\begin{equation}
\label{EqFhM2DRicLh2}
  \wh{D_{\hat g_b}\Ric}(0)\bigl(\cL_{\hat c(t)\cdot\pa_{\hat x}}h_{(2)}(t)\bigr) \in \cA_\phg^{(2,1)_+}(\hat X_b;S^2\,\Ttsc^*_{\hat X_b}\hat M_b),
\end{equation}
with smooth dependence on $t\in I$.

\textbf{Eliminating the cokernel; conclusion.} In the notation of Theorem~\ref{ThmAhKCoker}, consider now the map
\[
  \ell_{2,\rm COM} \colon \CI(I;\R^3) \ni \hat c \mapsto \la\Err'_{2,\hat c,\dot\bha,(1,1)},-\ra_{L^2(\hat X_b)} \in \CI\bigl(I;(\cK_{b,\rm COM}^*)^*\bigr);
\]
in view of~Theorem~\ref{ThmAhKCoker}\eqref{ItAhKCokerKerrPC}, this map is independent of the choice of $\dot\bha$, and by Lemma~\ref{LemmaFhM2Coker} it evaluates at $t\in I$ to
\begin{align*}
  \ell_{2,\rm COM}(\hat c)(t) &= \ell_{b,\rm COM}(\hat c^2(t)) + \ell_{b,\rm COM}(2\hat c''(t)) + \lambda(t,\hat c(t)), \\
  &\qquad \lambda(t,\hat c):=-\big\la \wh{D_{\hat g_b}\Ric}(0)\bigl(\cL_{\hat c\cdot\pa_{\hat x}}h_{(2)}(t)\bigr),-\big\ra_{L^2(\hat X_b)}\in(\cK_{b,\rm COM}^*)^*.
\end{align*}
The functional $\lambda(t,\hat c)$ is well-defined since by~\eqref{EqFhM2DRicLh2} the first factor has almost a full order of decay more than necessary for integration against an element of $\cK_{b,\rm COM}^*$. Since $I\times\R^3\ni(t,\hat c)\mapsto\lambda(t,\hat c)$ defines an element of $\CI(I;(\R^3)^*)$, the equation
\[
  \ell_{2,\rm COM}(\hat c) = 0
\]
is a nondegenerate linear second order ODE for $\hat c$ and thus has a solution
\[
  \hat c \in \CI(I;\R^3).
\]

Next, when $\bha\neq 0$, we still need to find $\dot\bha\in\CI(I;(\R\bha)^\perp)$ so that also $\ell_{2,\rm Kerr}(\dot\bha)=0$, where we define
\[
  \ell_{2,\rm Kerr} \colon \CI(I;(\R\bha)^\perp) \ni \dot\bha \mapsto \la\Err'_{2,\hat c,\dot\bha,(1,1)},-\ra_{L^2(\hat X_b)} \in \CI\bigl(I;(\cK_{b,\rm Kerr}^*)^*\bigr).
\]
Using Theorem~\ref{ThmAhKCoker} and the expression~\eqref{EqFhM2Err2}, we find that
\begin{align*}
  \ell_{2,\rm Kerr}(\dot\bha)(t) &= \ell_{b,\rm Kerr}(0,\dot\bha^2(t)) + \ell_{b,\rm Kerr}(0,\dot\bha'(t)) + \mu(t), \\
    &\qquad \mu(t):=-\la\wh{D_{\hat g_b}\Ric}(0)(\cL_{\hat c(t)\cdot\pa_{\hat x}}h_{(2)}(t)),-\ra\in(\cK_{b,\rm Kerr}^*)^*.
\end{align*}
Arguing as in~\eqref{EqFhM1CokerIBP} and the proof of Corollary~\ref{CorFhM1CokerParam}, we deduce that one can write $\mu(t)=\ell_{b,\rm Kerr}(0,\fq(t))$ where $\fq\in\CI(I;(\R\bha)^\perp)$. Therefore, the equation $\ell_{2,\rm Kerr}(\dot\bha)=0$ is a nondegenerate linear first order ODE for $\dot\bha$ which thus has a global solution
\[
  \dot\bha \in \CI(I;(\R\bha)^\perp)\qquad (\bha\neq 0).
\]

For these choices of $\hat c,\dot\bha$, the assumptions of Theorem~\ref{ThmAhPhg} are satisfied for $\Err'_{2,\hat c,\dot\bha,(1,1)}(t)$ for all $t\in I$, and therefore there exists
\[
  h \in \cA_\phg^{(0,3)_+}(\hat M\setminus\wt\cK^\circ;S^2\wt T_{\hat M}^*\wt M),
\]
i.e.\ $\sfe h\in\CI(I;\cA_\phg^{(0,3)_+}(\hat X_b;S^2\,\Ttsc^*_{\hat X_b}\hat M_b))$, so that $\wh{D_{\hat g_b}\Ric}(0)(\sfe h(t))=-\Err'_{2,\hat c,\dot\bha,(1,1)}(t)$ for all $t\in I$. The metric
\[
  \wt g_{2,\hat c,\dot\bha} := \wt g'_{2,\hat c,\dot\bha}+\eps^3(\log\hat\rho)\hat\chi h,
\]
which is a $((1,1)_+,(1,0)_+\cup(3,3)_+)$-smooth total family, thus satisfies
\[
  \Ric(\wt g_{2,\hat c,\dot\bha}) - \Lambda\wt g_{2,\hat c,\dot\bha} \in \cA_\phg^{(1,0)\cup(2,3)_+,(3,3)_+}(\wt M\setminus\wt K^\circ;S^2\wt T^*\wt M).
\]
As desired, this eliminates the $\eps(\log\hat\rho)$ leading order term of $\Err^2$ at $\hat M$ from Proposition~\ref{PropFMc2}, at the expense of additional logarithmic factors at $M_\circ$. In a last step, we re-center the small black hole by passing from $\wt g_{2,\hat c,\dot\bha}$ to
\[
  \wt g'_2 := \Phi_{2,\hat c,\dot\bha}^*\wt g_{2,\hat c,\dot\bha} = \Phi_{2,\hat c,\dot\bha}^*\bigl(\hat\chi\sfe^{-1}\hat g_{2,b,\hat c,\dot\bha} + (\wt g^2-\hat\chi\sfe^{-1}\hat g_b) + \eps^3(\log\hat\rho)\hat\chi h\bigr).
\]
By Lemma~\ref{LemmaFhM2Further}, noting that $\eps^3(\log\hat\rho)\hat\chi h\in\cA_\phg^{(3,1),(3,3)_+}(\wt M\setminus\wt K^\circ;S^2\wt T^*\wt M)$, and recalling~\eqref{EqFhM2wtg2MhatLot}--\eqref{EqFhM2Err2ca}, we have proved the following result:

\begin{prop}[Second correction at $\hat M$: first step]
\label{PropFhM2first}
  There exists a total family $\wt g'_2$ which is $((3,2)_+,(1,0)_+\cup(3,3)_+)$-smooth, equal to $\wt g_0$ at $\hat M$ and $M_\circ$, and which satisfies
  \begin{equation}
  \label{EqFhM2Err}
    \Err_2' := \Ric(\wt g_2') - \Lambda\wt g_2' \in \cA_\phg^{(1,0)\cup(2,3)_+,(3,3)_+}(\wt M\setminus\wt K^\circ;S^2\wt T^*\wt M);
  \end{equation}
  moreover, $\wt g'_2=g$ outside the domain of influence of a compact subset of $\cU^\circ$. Furthermore, $\wt g'_2$ is equal to the Kerr metric $\hat g_b$ along $\hat M$ modulo $\cO(\eps^2)$ in the sense of Proposition~\usref{PropFhM1}\eqref{ItFhM1hatMQuadr}, and the leading order term $h_{(2)}(t):=\eps^{-2}(\sfe\wt g'_2(t)-\hat g_b)|_{\hat M(t)}\in\cA_\phg^{(-2,0)_+}(\hat X_b;S^2\,\Ttsc^*_{\hat X_b}\hat M_b)$ (with smooth dependence on $t\in I$) satisfies~\eqref{EqFhM2DRicLh2} for all $\hat c$.
\end{prop}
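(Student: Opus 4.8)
The plan is to mirror the strategy of the first correction at $\hat M$ (Proposition~\ref{PropFhM1}), now starting from the total family $\wt g^2$ and its error $\Err^2$ produced by Proposition~\ref{PropFMc2}, whose leading order term at $\hat M$ is the logarithmic term $\eps(\log\hat\rho)\sfe^{-1}(\Err^2_{(1,1)})$ with $\Err^2_{(1,1)}=\wh{D_{\hat g_b}\Ric}(0)(\sfe\hat h)$, $\sfe^{-1}\hat h\in\cA^{(-2,0)_+}(\hat M)$. The second Bianchi identity $\delta_{\wt g^2}\sfG_{\wt g^2}\Err^2=0$ forces $\wh{\delta_{\hat g_b}\sfG_{\hat g_b}}(0)(\Err^2_{(1,1)}(t))=0$ for all $t\in I$, and Lemma~\ref{LemmaFhM2Coker} encodes the cokernel obstructions to solving $\wh{D_{\hat g_b}\Ric}(0)h=-\Err^2_{(1,1)}(t)$ by functions $\hat c^2\in\CI(I;\R^3)$ and $\dot\bha^2$ (zero if $\bha=0$, in $\CI(I;(\R\bha)^\perp)$ otherwise). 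These obstructions will be absorbed by modulating the center of mass at order $\eps(\log\eps)$ and the axis of rotation at order $\eps^2(\log\eps)$.

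First I would construct the modulating tensor $\hat g_{2,b,\hat c,\dot\bha}$ by pulling back $\hat g_b+\eps^2(\log\eps)(\breve h_{b,2\hat c'(t)}+\cL_{\Omega(\dot\bha(t))}\hat g_b)$ along $\Phi_{2,\hat c(t)}\colon(\eps,\hat x)\mapsto(\eps,\hat x+\eps(\log\eps)\hat c(t))$ on each $t$-slice, expanding in Taylor series in the fast time $\hat t=\frac{t-t_0}{\eps}$ near a point $c(t_0)\in\cC$ through order $\eps^3(\log\eps)^3$. The key computation is that, by Lemma~\ref{LemmaFhM1Lie} together with $D_{\hat g_b}\Ric(\cL_V\cL_W\hat g_b-\cL_W\cL_V\hat g_b)=D_{\hat g_b}\Ric(\cL_{[V,W]}\hat g_b)=0$, all pure-logarithmic coefficients of $\Ric(\hat g_{2,b,\hat c,\dot\bha})$ of order $\eps^3(\log\eps)^2$ and higher vanish, leaving an $\eps^3(\log\eps)$ term equal to $\wh{D_{\hat g_b}\Ric}(0)$ applied to $\frac{\hat t^2}{2}h_{b,2\hat c''(t_0)}+\hat t\breve h_{b,2\hat c''(t_0)}+\hat t\hat g_b'(0,\dot\bha'(t_0))$ — precisely the tensors paired against $\cK_{b,\rm COM}^*\oplus\cK_{b,\rm Kerr}^*$ by $\ell_{b,\rm COM}$, $\ell_{b,\rm Kerr}$ in Theorem~\ref{ThmAhKCoker}. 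I would also record (Lemma~\ref{LemmaFhM2Further}) that, lifted to $\wt M$, the coefficients of $\hat g_{2,b,\hat c,\dot\bha}-\hat g_b$ lie in $\cA_\phg^{(1,1)_+,(3,1)_+}$ and that the compatibly defined conormal diffeomorphism $\Phi_{2,\hat c,\dot\bha}$ of $\wt M$ pulls $\hat g_{2,b,\hat c,\dot\bha}$ back to $\hat g_b$ modulo $\cA_\phg^{(3,2)_+,(2,1)_+}$. Grafting in, I set $\wt g'_{2,\hat c,\dot\bha}:=\wt g^2+\hat\chi\sfe^{-1}(\hat g_{2,b,\hat c,\dot\bha}-\hat g_b)$ and compute its error, finding $\Err'_{2,\hat c,\dot\bha}\in\cA_\phg^{(1,1)_+\cup(2,3)_+,(3,2)_+}$ with $\eps(\log\hat\rho)$ leading order term at $\hat M_{c(t)}$ given by $\sfe^{-1}$ of $\Err^2_{(1,1)}+\wh{D_{\hat g_b}\Ric}(0)(\frac{\hat t^2}{2}h_{b,2\hat c''}+\hat t\breve h_{b,2\hat c''}+\hat t\hat g_b'(0,\dot\bha'))-\Lambda h_{b,2\hat c}+D^2_{\hat g_b}\Ric(h_{b,2\hat c},h_{(2)})$, the last two terms coming from the $\eps(\log\eps)\cL_{\hat c\cdot\pa_{\hat x}}\hat g_b$ piece of $\hat g_{2,b,\hat c,\dot\bha}$ and its quadratic coupling with the $\eps^2 h_{(2)}$ term of $\sfe\wt g^2$, where $\sfe^{-1}h_{(2)}\in\cA_\phg^{(-2,0)_+}(\hat M)$.

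Then I would simplify the leading order error. Since $\eps^2\sfe(\Err^2)$ vanishes to more than quadratic order at $\hat M$, one has $\wh{D_{\hat g_b}\Ric}(0)(h_{(2)}(t))-\Lambda\hat g_b=0$; taking its Lie derivative along $\hat c(t)\cdot\pa_{\hat x}$ and applying Lemma~\ref{LemmaFhM2LieLinRic} gives $D^2_{\hat g_b}\Ric(h_{b,2\hat c},h_{(2)})-\Lambda h_{b,2\hat c}+\wh{D_{\hat g_b}\Ric}(0)(\cL_{\hat c\cdot\pa_{\hat x}}h_{(2)})=0$, so the leading order error becomes $\Err^2_{(1,1)}+\wh{D_{\hat g_b}\Ric}(0)(\frac{\hat t^2}{2}h_{b,2\hat c''}+\hat t\breve h_{b,2\hat c''}+\hat t\hat g_b'(0,\dot\bha')-\cL_{\hat c\cdot\pa_{\hat x}}h_{(2)})$; using that the $\hat M$-boundary value of $h_{(2)}$ is $g_{(2)}$, hence quadratic in $\hat x$, one gets $\wh{D_{\hat g_b}\Ric}(0)(\cL_{\hat c\cdot\pa_{\hat x}}h_{(2)})\in\cA_\phg^{(2,1)_+}$, which will become property~\eqref{EqFhM2DRicLh2}. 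Projecting the leading order error onto $\cK_{b,\rm COM}^*$ (independent of $\dot\bha$ by Theorem~\ref{ThmAhKCoker}\eqref{ItAhKCokerKerrPC}) and onto $\cK_{b,\rm Kerr}^*$ (using Theorem~\ref{ThmAhKCoker}\eqref{ItAhCokerCOMKerr}), I obtain $\ell_{b,\rm COM}(\hat c^2(t))+\ell_{b,\rm COM}(2\hat c''(t))+\lambda(t,\hat c(t))$ and $\ell_{b,\rm Kerr}(0,\dot\bha^2(t))+\ell_{b,\rm Kerr}(0,\dot\bha'(t))+\mu(t)$, where $\lambda$ is linear in $\hat c$ and smooth in $t$ (well-defined by the above $\cA_\phg^{(2,1)_+}$ gain) and $\mu(t)=\ell_{b,\rm Kerr}(0,\fq(t))$ for some $\fq\in\CI(I;(\R\bha)^\perp)$ by the argument of Corollary~\ref{CorFhM1CokerParam}. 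Since $\ell_{b,\rm COM}$, resp.\ $\pi_{b,\rm Kerr}\circ\ell_{b,\rm Kerr}$, are isomorphisms, vanishing of these two projections is a nondegenerate linear second order ODE for $\hat c$ and a nondegenerate linear first order ODE for $\dot\bha$, with global solutions $\hat c,\dot\bha\in\CI(I)$.

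With these choices, Theorem~\ref{ThmAhPhg} applies fiberwise in $t$ to solve away the remaining $\eps(\log\hat\rho)$ error at $\hat M$, giving $\sfe h\in\CI(I;\cA_\phg^{(0,3)_+}(\hat X_b))$; I then set $\wt g_{2,\hat c,\dot\bha}:=\wt g'_{2,\hat c,\dot\bha}+\eps^3(\log\hat\rho)\hat\chi h$ and, finally, $\wt g'_2:=\Phi^*_{2,\hat c,\dot\bha}\wt g_{2,\hat c,\dot\bha}$, which re-centers the small black hole. Tracking index sets through the conormality of $\Phi_{2,\hat c,\dot\bha}$ (Lemma~\ref{LemmaFhM2Further}) and using that $\Err^2_{(1,1)}$ had scalar/vector type $\geq 2$ leading behavior and smooth origin away from $\hat M$ yields the stated $((3,2)_+,(1,0)_+\cup(3,3)_+)$-smoothness, the equality with $\hat g_b$ at $\hat M$ modulo $\cO(\eps^2)$, the error bound~\eqref{EqFhM2Err}, property~\eqref{EqFhM2DRicLh2}, and (by finite speed of propagation as in Proposition~\ref{PropFMc2}) the support statement. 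I expect the main obstacle to be the cokernel-elimination step: one must check that no logarithmic tower of order higher than $\eps^3(\log\eps)$ is generated by $\Ric(\hat g_{2,b,\hat c,\dot\bha})$, that the contributions of $\hat c$ and $\dot\bha$ to the leading order error land exactly in the ranges covered by the isomorphisms of Theorem~\ref{ThmAhKCoker}, and that the quadratic coupling $D^2_{\hat g_b}\Ric(h_{b,2\hat c},h_{(2)})$ collapses via Lemma~\ref{LemmaFhM2LieLinRic} to a term lying in $\cA_\phg^{(2,1)_+}$ — without the latter, the equation for $\hat c$ would not be a well-posed ODE.
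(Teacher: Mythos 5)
Your proposal is correct and follows essentially the same route as the paper's own argument in \S\ref{SssFhM2Log}: the same modulating tensor $\hat g_{2,b,\hat c,\dot\bha}$ and fiberwise pullback $\Phi_{2,\hat c(t)}$, cancellation of the $\eps^3(\log\eps)^2$ term via the polarized identity of Lemma~\ref{LemmaFhM1Lie}, the same simplification of the quadratic coupling through Lemma~\ref{LemmaFhM2LieLinRic} and~\eqref{EqFhM2h2Identity}, the same second/first order ODEs for $\hat c$ and $\dot\bha$ via Theorem~\ref{ThmAhKCoker}, and the final application of Theorem~\ref{ThmAhPhg} followed by re-centering with $\Phi_{2,\hat c,\dot\bha}$. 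The points you flag as potential obstacles are exactly the ones the paper resolves in~\eqref{EqFhM2gbRic2}, \eqref{EqFhM2DRicLh2}, and Lemma~\ref{LemmaFhM2Further}, so no gap remains.
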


\subsubsection{Solving away the remaining term}
\label{SssFhM2Rem}

We next solve away the leading order term of $\Err_2'$ at $\hat M$, which is
\[
  \Err_{2,(1,0)}' := \sfe\bigl((\eps^{-1}\Err_2')|_{\hat M}\bigr) \in \CI\bigl(I;\cA_\phg^{(2,3)_+}(\hat X_b;S^2\,\Ttsc^*_{\hat X_b}\hat M_b)\bigr).
\]
Unlike in previous arguments at $\hat M$, we no longer record any special properties of
\[
  \la\Err'_{2,(1,0)},-\ra \in \CI\bigl(I;(\cK_{b,\rm tot}^*)^*\bigr),
\]
and thus need to allow for the full range of linearized Kerr metrics when eliminating the cokernel.

As usual, we begin with a computation for the Kerr spacetime. Let $\hat c\in\CI(I;\R^3)$ and $\dot b\in\CI(I;\R\times\R^3)$, and set $\Phi'_{2,\hat c(t)}\colon(\eps,\hat x)\mapsto(\eps,\hat x+\eps\hat c(t))$. Fix $t_0\in I$ and write $\hat t=\frac{t-t_0}{\eps}$. Then the tensor
\begin{align*}
  \hat g'_{2,b,\hat c,\dot b} & := \Bigl((\Phi'_{2,\hat c(t)})^*\bigl(\hat g_b+\eps^2\bigl(\breve h_{b,2\hat c'(t)} + \hat g'_b(\dot b(t))\bigr)\bigr)\Bigr)_{t\in I} \\
  &\qquad \equiv (\Phi'_{2,\hat c(t_0)})^*\Bigl(\hat g_b + \eps^2\bigl(\hat t h_{b,2\hat c'(t_0)}+\breve h_{b,2\hat c'(t_0)}+\hat g'_b(\dot b(t_0))\bigr) \\
  &\qquad \hspace{7em} + \eps^3\Bigl(\frac{\hat t^2}{2}h_{b,2\hat c''(t_0)}+\hat t\breve h_{b,2\hat c''(t_0)} + \hat t\hat g'_b(\dot b'(t_0))\Bigr) \Bigr)_{t\in I} \\
  &\qquad \equiv \hat g_b + \eps\cL_{\hat c(t_0)\cdot\pa_{\hat x}}\hat g_b + \eps^2\Bigl(\frac{1}{2}\cL_{\hat c(t_0)\cdot\pa_{\hat x}}^2\hat g_b + \cL_{\hat t\hat c'(t_0)\cdot\pa_{\hat x}+(\hat c'(t_0)\cdot\hat x)\pa_{\hat t}}\hat g_b + \hat g'_b(\dot b(t_0))\Bigr) \\
  &\qquad\hspace{3em} + \eps^3\Bigl(\frac{\hat t^2}{2}h_{b,2\hat c''(t_0)} + \hat t\breve h_{b,2\hat c''(t_0)} + \hat t\hat g'_b(\dot b'(t_0))\Bigr) \\
  &\qquad\hspace{3em} + \eps^3\Bigl(\frac{1}{6}\cL_{\hat c(t_0)\cdot\pa_{\hat x}}^3\hat g_b + \cL_{\hat c(t_0)\cdot\pa_{\hat x}}\bigl(\hat t h_{b,2\hat c'(t_0)}+\breve h_{b,2\hat c'(t_0)}+\hat g'_b(\dot b(t_0))\bigr)\Bigr) \bmod \eps^4\CI
\end{align*}
satisfies
\begin{align*}
  \Ric(\hat g'_{2,b,\hat c,\dot b}) &\equiv \eps^3 D_{\hat g_b}\Ric\Bigl(\frac{\hat t^2}{2}h_{b,2\hat c''(t_0)} + \hat t\breve h_{b,2\hat c''(t_0)} + \hat t\hat g'_b(\dot b'(t_0))\Bigr) \\
  & \quad + \eps^3\Bigl[ D_{\hat g_b}^2\Ric\bigl(\cL_{\hat c(t_0)\cdot\pa_{\hat x}}\hat g_b, \cL_{\hat t\hat c'(t_0)\cdot\pa_{\hat x}+(\hat c'(t_0)\cdot\hat x)\pa_{\hat t}}\hat g_b + \hat g'_b(\dot b(t_0))\bigr) \\
  &\quad \hspace{2.05em} + D_{\hat g_b}\Ric\Bigl(\cL_{\hat c(t_0)\cdot\pa_{\hat x}}\bigl(\cL_{\hat t\hat c'(t_0)\cdot\pa_{\hat x}+(\hat c'(t_0)\cdot\hat x)\pa_{\hat t}}\hat g_b+\hat g'_b(\dot b(t_0))\bigr)\Bigr)\Bigr] \bmod \eps^4\CI.
\end{align*}
Using the identity~\eqref{EqFhM1LiePolarized} as around~\eqref{EqFhM2gbRic2}, the expression in square brackets is
\[
  D_{\hat g_b}^2\Ric\bigl(\cL_{\hat c(t_0)\cdot\pa_{\hat x}}\hat g_b,\hat g'_b(\dot b(t_0))\bigr) + D_{\hat g_b}\Ric\bigl(\cL_{\hat c(t_0)\cdot\pa_{\hat x}}\hat g'_b(\dot b(t_0))\bigr);
\]
but by Lemma~\ref{LemmaFhM2LieLinRic}, this is further equal to $\cL_{\hat c(t_0)\cdot\pa_{\hat x}}\bigl[D_{\hat g_b}\Ric\bigl(\hat g'_b(\dot b(t_0))\bigr)\bigr]$ and thus vanishes in view of~\eqref{EqGKLinRic}. Therefore,
\[
  \Ric(\hat g'_{2,b,\hat c,\dot b}) \equiv \eps^3\Bigl[ D_{\hat g_b}\Ric\Bigl(\frac{\hat t^2}{2}h_{b,2\hat c''(t_0)} + \hat t\breve h_{b,2\hat c''(t_0)} + \hat t\hat g'_b(\dot b'(t_0))\Bigr)\Bigr] \bmod \eps^4\CI.
\]

We now let
\[
  \wt g'_{2,\hat c,\dot b} := \wt g'_2 + \hat\chi\sfe^{-1}(\hat g'_{2,b,\hat c,\dot b}-\hat g_b).
\]
The second summand here lies in $\cA_\phg^{(1,0),(3,0)}(\wt M\setminus\wt K^\circ;S^2\wt T^*\wt M)$, and we have
\[
  \Ric(\wt g'_{2,\hat c,\dot b})-\Lambda\wt g'_{2,\hat c,\dot b} \in \cA_\phg^{(1,0)\cup(2,3)_+,(3,3)_+}(\wt M\setminus\wt K^\circ;S^2\wt T^*\wt M),
\]
cf.\ \eqref{EqFhM2Err}, with leading order term $\eps$ times $\sfe^{-1}$ applied to
\begin{equation}
\label{EqFhM2Err2p}
\begin{split}
  \Err'_{2,\hat c,\dot b} &:= \Err'_{2,(1,0)} + D_{\hat g_b}\Ric\Bigl(\frac{\hat t^2}{2}h_{b,2\hat c''(t_0)} + \hat t\breve h_{b,2\hat c''(t_0)} + \hat t\hat g'_b(\dot b'(t_0)) - \cL_{\hat c(t)\cdot\pa_{\hat x}}h_{(2)}(t)\Bigr) \\
    &\in \CI\bigl(I;\cA_\phg^{(2,3)_+}(\hat X_b;S^2\,\Ttsc^*_{\hat X_b}\hat M_b)\bigr),
\end{split}
\end{equation}
as follows from the same computations as in~\eqref{EqFhM2Errca2}--\eqref{EqFhM2Err2}. We also recall the membership~\eqref{EqFhM2DRicLh2}. By Theorem~\ref{ThmAhKCoker}, the equation
\[
  \la\Err'_{2,\hat c,\dot b},-\ra_{L^2(\hat X_b)} = 0 \in \CI\bigl(I;(\cK_{b,\rm COM}^*)^*\bigr)
\]
is a nondegenerate linear second order ODE for $\hat c$ which thus also has a global solution $\hat c\in\CI(I;\R^3)$. With $\hat c$ now fixed, the equation
\[
  \la\Err'_{2,\hat c,\dot b},-\ra_{L^2(\hat X_b)} = 0 \in \CI\bigl(I;(\cK_{b,\rm Kerr}^*)^*\bigr)
\]
is a nondegenerate linear first order ODE for $\dot b$ which thus has a global solution $\dot b\in\CI(I;\R\times\R^3)$. For these choices of $\hat c,\dot b$, we can now apply Theorem~\ref{ThmAhPhg} and obtain
\[
  h \in \cA_\phg^{(0,4)_+}(\hat M\setminus\wt K^\circ;S^2\wt T^*_{\hat M}\wt M)
\]
so that
\[
  \wh{D_{\hat g_b}\Ric}(0)(\sfe h(t)) = -\Err'_{2,\hat c,\dot b}(t)\qquad\forall\,t\in I.
\]
Finally, we set $\wt g_2=(\Phi'_{2,\hat c})^*(\wt g'_{2,\hat c,\dot b}+\eps^3\hat\chi h)$ where the pullback by $\Phi'_{2,\hat c}\colon(\eps,t,\hat x)\mapsto(\eps,t,\hat x-\eps\hat c(t))$ applied to $\wt g'_{2,\hat c,\dot b}$ is equal to $\hat g_b$ along $\hat M$ up to $\cO(\eps^2)$ errors. In summary:

\begin{prop}[Second correction at $\hat M$: second step]
\label{PropFhM2}
  There exists a $((3,2)_+,(1,0)_+\cup(3,4)_+)$-smooth total family $\wt g_2$ which is equal to $\wt g_0$ at $\hat M$ and $M_\circ$ and which satisfies\footnote{Recall the notation~\eqref{EqBgPhgIndexSets2}.}
  \[
    \Err_2 := \Ric(\wt g_2) - \Lambda\wt g_2 \in \cA_\phg^{(2,3)_+,(3,4)_{+\times}}(\wt M\setminus\wt K^\circ;S^2\wt T^*\wt M);
  \]
  moreover, $\wt g_2=g$ outside the domain of influence of a compact subset of $\cU^\circ$. Furthermore, $\wt g_2$ is equal to the Kerr metric $\hat g_b$ along $\hat M$ modulo $\cO(\eps^2)$ in the sense of Proposition~\usref{PropFhM1}\eqref{ItFhM1hatMQuadr}, and the subleading term $h_{(2)}(t):=\eps^{-2}(\sfe\wt g_2(t)-\hat g_b)|_{\hat M(t)}\in\cA_\phg^{(-2,0)_+}(\hat X_b;S^2\,\Ttsc^*_{\hat X_b}\hat M_b)$ satisfies~\eqref{EqFhM2DRicLh2} for all $\hat c$.
\end{prop}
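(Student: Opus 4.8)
\textbf{Proof proposal for Proposition~\ref{PropFhM2}.}

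The plan is to follow the two-step template already established in~\S\ref{SssFhM2Rem} — solve away the leading order error at $\hat M$ by modulating the center of mass at order $\eps$ and the black hole parameters at order $\eps^2$, then apply Theorem~\ref{ThmAhPhg} to handle the remaining ``trivial'' error — and to keep careful track of only those two structural features that are needed for the statement, namely the $\cO(\eps^2)$ agreement with $\hat g_b$ at $\hat M$ and the identity~\eqref{EqFhM2DRicLh2} for the subleading coefficient $h_{(2)}$. Concretely: start from $\wt g'_2$ and its error $\Err'_2$ from Proposition~\ref{PropFhM2first}, whose leading order term at $\hat M$ is $\Err'_{2,(1,0)}\in\CI(I;\cA_\phg^{(2,3)_+}(\hat X_b;S^2\,\Ttsc^*_{\hat X_b}\hat M_b))$. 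Introduce the fiberwise-pulled-back family $\hat g'_{2,b,\hat c,\dot b}$ using the diffeomorphism $\Phi'_{2,\hat c(t)}\colon(\eps,\hat x)\mapsto(\eps,\hat x+\eps\hat c(t))$ together with the order-$\eps^2$ corrections $\eps^2(\breve h_{b,2\hat c'(t)}+\hat g'_b(\dot b(t)))$, exactly as written in~\S\ref{SssFhM2Rem}, and compute its Ricci tensor modulo $\eps^4$; the polarized identity~\eqref{EqFhM1LiePolarized} from Lemma~\ref{LemmaFhM1Lie}, combined with Lemma~\ref{LemmaFhM2LieLinRic} and~\eqref{EqGKLinRic}, collapses the would-be $\eps^3$ cross terms, leaving
\[
  \Ric(\hat g'_{2,b,\hat c,\dot b}) \equiv \eps^3\, D_{\hat g_b}\Ric\Bigl(\tfrac{\hat t^2}{2}h_{b,2\hat c''(t_0)} + \hat t\breve h_{b,2\hat c''(t_0)} + \hat t\hat g'_b(\dot b'(t_0))\Bigr) \bmod \eps^4\CI.
\]

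Next, graft this into the total gluing spacetime by setting $\wt g'_{2,\hat c,\dot b}:=\wt g'_2+\hat\chi\sfe^{-1}(\hat g'_{2,b,\hat c,\dot b}-\hat g_b)$; the added term lies in $\cA_\phg^{(1,0),(3,0)}$, so the error stays in $\cA_\phg^{(1,0)\cup(2,3)_+,(3,3)_+}$, with leading order coefficient at $\hat M$ equal to $\Err'_{2,\hat c,\dot b}$ as in~\eqref{EqFhM2Err2p} — here one uses the identity~\eqref{EqFhM2h2Identity} (obtained by Lie-differentiating $D_{\hat g_b}\Ric(h_{(2)})-\Lambda\hat g_b=0$ along $\hat c(t)\cdot\pa_{\hat x}$, which is legitimate precisely because Proposition~\ref{PropFhM2first} guarantees~\eqref{EqFhM2DRicLh2}) to rewrite the $D^2_{\hat g_b}\Ric$ and $-\Lambda$ contributions as $-\wh{D_{\hat g_b}\Ric}(0)(\cL_{\hat c(t)\cdot\pa_{\hat x}}h_{(2)}(t))$. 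Then pair against the cokernel: by Theorem~\ref{ThmAhKCoker}\eqref{ItAhKCokerCOM}, the equation $\la\Err'_{2,\hat c,\dot b},-\ra_{L^2(\hat X_b)}=0\in\CI(I;(\cK^*_{b,\rm COM})^*)$ is a nondegenerate linear second-order ODE for $\hat c$ (the contribution of $\hat c''$ comes from $D_{\hat g_b}\Ric(\tfrac{\hat t^2}{2}h_{b,2\hat c''}+\hat t\breve h_{b,2\hat c''})$ via $\ell_{b,\rm COM}$, and the extra term $-\la\wh{D_{\hat g_b}\Ric}(0)(\cL_{\hat c\cdot\pa_{\hat x}}h_{(2)}),-\ra$ is a zeroth-order coefficient, well-defined by~\eqref{EqFhM2DRicLh2}), so it has a global solution $\hat c\in\CI(I;\R^3)$; with $\hat c$ fixed, by Theorem~\ref{ThmAhKCoker}\eqref{ItAhKCokerKerrPK} and \eqref{ItAhKCokerKerrPC} the equation $\la\Err'_{2,\hat c,\dot b},-\ra=0\in\CI(I;(\cK^*_{b,\rm Kerr})^*)$ is a nondegenerate linear first-order ODE for $\dot b$, so it too has a global solution $\dot b\in\CI(I;\R\times\R^3)$. (Note this is where the present step genuinely differs from the first correction at $\hat M$: no structural restriction on $\la\Err'_{2,(1,0)},-\ra$ is available, so the full four-parameter family $\hat g'_b(\dot b)$ is needed, not merely axis rotations.)

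For these choices, $\Err'_{2,\hat c,\dot b}(t)$ is orthogonal to all of $\cK^*_{b,\rm tot}$ for every $t\in I$, so Theorem~\ref{ThmAhPhg} (applied fiber by fiber with smooth parametric dependence on $t$, as in Lemma~\ref{LemmaAhPhgFormal}) yields $h\in\cA_\phg^{(0,4)_+}(\hat M\setminus\wt K^\circ;S^2\wt T^*_{\hat M}\wt M)$ with $\wh{D_{\hat g_b}\Ric}(0)(\sfe h(t))=-\Err'_{2,\hat c,\dot b}(t)$; adding $\eps^3\hat\chi h$ and then pulling back by the conormal diffeomorphism $\Phi'_{2,\hat c}\colon(\eps,t,\hat x)\mapsto(\eps,t,\hat x-\eps\hat c(t))$ (to re-center the black hole; this preserves polyhomogeneity and equals the identity at $M_\circ$) produces the desired $\wt g_2$. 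Index-set bookkeeping then gives smoothness of type $((3,2)_+,(1,0)_+\cup(3,4)_+)$ and, via Proposition~\ref{PropELTAcc} for the quadratic and higher interactions, $\Err_2\in\cA_\phg^{(2,3)_+,(3,4)_{+\times}}$; the $\cO(\eps^2)$ agreement with $\hat g_b$ at $\hat M$ persists since all corrections added here are $\cO(\eps^2)$ near $\hat M^\circ$, and $h_{(2)}$ is unchanged at leading order at $\pa\hat M$ (its $\pa\hat M$ value is still $g_{(2)}$), so~\eqref{EqFhM2DRicLh2} continues to hold. The support statement follows from $\supp\hat\chi$ lying in the domain of influence of a compact subset of $\cU^\circ$ together with finite speed of propagation as in the earlier steps. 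I expect the main obstacle to be purely organizational rather than conceptual: verifying that the nonlinear self-interaction and quadratic cross-interaction terms (e.g.\ cubic interactions of the $\cO(\eps)$ center-of-mass modulation with itself, and quadratic interactions with the $\cO(\eps^3\log\hat\rho)$ pieces carried over from Proposition~\ref{PropFhM2first}) all land in $\cA_\phg^{(3,4)_{+\times}}$ at $M_\circ$ and in $\cA_\phg^{(2,3)_+}$ at $\hat M$, and in particular contribute nothing that obstructs the ODE solvability arguments for $\hat c$ and $\dot b$ above — this is exactly the kind of ``rather elementary structural information'' bookkeeping flagged in~\S\ref{SsIO}, and it should go through by the same reasoning as in~\S\S\ref{SssFhM2Log}--\ref{SssFhM2Rem}.
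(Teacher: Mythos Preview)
The proposal is correct and follows essentially the same approach as the paper: modulate the center of mass at order $\eps$ and the full black hole parameters at order $\eps^2$ via the fiberwise pullback $\hat g'_{2,b,\hat c,\dot b}$, use Lemmas~\ref{LemmaFhM1Lie} and~\ref{LemmaFhM2LieLinRic} together with~\eqref{EqGKLinRic} to kill the $\eps^3$ cross terms, solve the resulting second-order ODE for $\hat c$ and then the first-order ODE for $\dot b$ via Theorem~\ref{ThmAhKCoker}, apply Theorem~\ref{ThmAhPhg} to remove the remaining error, and finally re-center via $(\Phi'_{2,\hat c})^*$. Your identification of the organizational index-set bookkeeping as the only remaining work is accurate, and the paper indeed leaves this implicit.
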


\begin{rmk}[Linearized Kerr contributions]
\label{RmkFhM2LinKerr}
  The term $h_{(2)}(t)$ in Proposition~\ref{PropFhM2} is not pure gauge when $g_{(2)}(t)\neq 0$, or equivalently when $\Riem(g)\neq 0$ at $c(t)\in\cC$; this is why we may as well admit correction terms $\eps^2\hat g'_b(\dot b(t))$ in $\hat g'_{2,b,\hat c,\dot b}$ which are not pure gauge either. By contrast, we carefully avoided $\cO(\eps^2\log\eps)$ metric perturbations which are not pure gauge in~\S\ref{SssFhM2Log}.
\end{rmk}

\subsection{Third correction at \texorpdfstring{$M_\circ$}{the blown-up background spacetime}}
\label{SsFMc3}

At this point, no special arguments are required at $M_\circ$ anymore. We thus record the following general result:

\begin{prop}[Correction at $M_\circ$]
\label{PropFMc3Gen}
  Let $k\in\N$.\footnote{At the current stage, we apply this result with $k=3$.} Suppose $\wt g_{k-1}$ is a $((1,*),(1,*))$-smooth total family with $\wt g_{k-1}=\wt g_0$ at $\hat M\cup M_\circ$. Suppose moreover that
  \[
    \Err_{k-1} := \Ric(\wt g_{k-1}) - \Lambda\wt g_{k-1} \in \cA_\phg^{(k-1,*),(k,m)\cup(k+1,*)}(\wt M\setminus\wt K^\circ;S^2\wt T^*\wt M)
  \]
  for some $m\in\N_0$, and $\supp\Err_{k-1}\subset U$ where $U$ is the domain of influence of a compact subset of $\cU^\circ$. Then there exists $h\in\cA_\phg^{(k+1,*),(k,m)}(\wt M\setminus\wt K^\circ;S^2\wt T^*\wt M)$ with $\supp h\cap M_\circ\subset\upbeta_\circ^*U$ so that for $\wt g^k:=\wt g_{k-1}+h$, we have
  \[
    \Err^k := \Ric(\wt g^k) - \Lambda\wt g^k \in \cA_\phg^{(k-1,*),(k+1,*)}(\wt M\setminus\wt K^\circ;S^2\wt T^*\wt M).
  \]
\end{prop}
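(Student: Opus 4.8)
The plan is to peel off the leading order error at $M_\circ$ by solving the linearized Einstein vacuum equations on $M_\circ$, exactly as in Propositions~\ref{PropFMc1} and \ref{PropFMc2}, but now without needing to track any fine structure of leading order terms. First I would extract the leading term of $\Err_{k-1}$ at $M_\circ$: since $\Err_{k-1}\in\cA_\phg^{(k-1,*),(k,m)\cup(k+1,*)}(\wt M\setminus\wt K^\circ;S^2\wt T^*\wt M)$, we may write, near $M_\circ$,
\[
  \Err_{k-1} \equiv -\eps^k\sum_{j=0}^m (\log\rho_\circ)^j f_j \bmod \cA_\phg^{(k-1,*),(k+1,*)}(\wt M\setminus\wt K^\circ;S^2\wt T^*\wt M),
\]
where $f_j\in\cA_\phg^{*}(M_\circ;\upbeta_\circ^*S^2 T^*M)$, with leading behavior at $\pa M_\circ$ of the form $\hat\rho^{-1}(\log\hat\rho)^{*}$ (more precisely, polyhomogeneous with exponents $\geq-1$ at $\pa M_\circ$, since $\Err_{k-1}\in\cA_\phg^{(k-1,*),\cdots}$ means the weight at $\hat M$ of the $\eps^k$-coefficient is $\geq -1$). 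Decomposing $\log\rho_\circ=\log\eps-\log\hat\rho$ and using that this holds for all $\eps>0$, the second Bianchi identity $\delta_{\wt g_{k-1}}\sfG_{\wt g_{k-1}}\Err_{k-1}=0$ forces $\delta_g\sfG_g f_j=0$ for $j=m$ and, recursively, $\delta_g\sfG_g(f_j-\sum_{i>j}\binom{i}{j}(-\log\hat\rho)^{i-j}f_i)=0$ for each $j$; in other words, each $f_j$, after subtracting explicit logarithmic corrections built from the $f_i$ with $i>j$, lies in $\ker\delta_g\sfG_g$.

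The key step is then an iterated application of Theorem~\ref{ThmAc}. Starting from $j=m$ and descending, I would solve $(D_g\Ric-\Lambda)h_m=f_m$ with $h_m\in\cA_\phg^{*}(M_\circ;\upbeta_\circ^*S^2 T^*M)$ of class $(1,*)$ at $\pa M_\circ$ (note $\hat\cF$ in Theorem~\ref{ThmAc} has $\Re\hat\cF>-2$, satisfied here since $f_m$ has weight $\geq-1$); then replace $f_{m-1}$ by $f_{m-1}-(\log\hat\rho)f_m+(D_g\Ric-\Lambda)((\log\hat\rho)h_m)=f_{m-1}+[D_g\Ric,\log\hat\rho]h_m$, which still lies in $\ker\delta_g\sfG_g$ and has the same leading behavior at $\pa M_\circ$, and solve again. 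After $m+1$ steps this produces $h_\sharp':=h_0+(\log\hat\rho)h_1+\cdots$ — or more transparently, rewriting in terms of $\log\rho_\circ$, one obtains $h_\flat':=\sum_{j=0}^m(\log\rho_\circ)^j\tilde h_j$ with $\tilde h_j\in\cA_\phg^{*}(M_\circ;\upbeta_\circ^*S^2 T^*M)$ satisfying $(D_g\Ric-\Lambda)(h_\flat')=\sum_{j=0}^m(\log\rho_\circ)^j f_j$ in $(M_\circ)^\circ$. Setting
\[
  h := \chi_\circ\,\eps^k\,h_\flat' \in \cA_\phg^{(k+1,*),(k,m)}(\wt M\setminus\wt K^\circ;S^2\wt T^*\wt M),
\]
where $\chi_\circ$ is the cutoff to a collar of $M_\circ$ from~\eqref{EqFCutoffs} (and its support lies in the domain of influence of a compact subset of $\cU^\circ$, so that $\supp h\cap M_\circ\subset\upbeta_\circ^*U$ holds by finite speed of propagation and the support statement in Theorem~\ref{ThmAc}\eqref{ItAcSupp}), gives the candidate correction. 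The index set at $\hat M$ is $(k+1,*)$ because $h_\flat'$ has weight $\geq 1$ at $\pa M_\circ$, so $\eps^k h_\flat' = \eps^k(\eps/\hat\rho)^{\geq 1}\hat\rho^{\geq 1}\cdots$ contributes to $\hat M$ only at orders $\geq k+1$; I would double-check this bookkeeping via Lemma~\ref{LemmaGRel} and the coordinates $(t,\hat\rho,\rho_\circ,\omega)$.

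It remains to verify the error bound for $\wt g^k:=\wt g_{k-1}+h$. Since $\wt g_{k-1}$ is $((1,*),(1,*))$-smooth and $h\in\cA_\phg^{(k+1,*),(k,m)}$ with $k\geq 1$, the family $\wt g^k$ is still $((1,*),(1,*))$-smooth, so $\Ric(\wt g^k)-\Lambda\wt g^k$ is polyhomogeneous by Corollary~\ref{CorGLCurvature}. Near $(M_\circ)^\circ$, Proposition~\ref{PropELTAcc}\eqref{ItELTAccMcirc} gives
\[
  \Err^k - \Err_{k-1} - \chi_\circ(D_g\Ric-\Lambda)(\chi_\circ h) \in \cA_\phg^{*,(2k,*)}(\wt M\setminus\wt K^\circ;S^2\wt T^*\wt M),
\]
and since $(D_g\Ric-\Lambda)(\chi_\circ h)$ cancels the $\eps^k(\log\rho_\circ)^j f_j$ terms of $\Err_{k-1}$ modulo $\cA_\phg^{\cdot,(k+1,*)}$ (the commutator $[D_g\Ric-\Lambda,\chi_\circ]h$ and the contributions away from $(M_\circ)^\circ$ are all of class $(k+1,*)$ at $M_\circ$, and $2k\geq k+1$), we get $\Err^k\in\cA_\phg^{\cdot,(k+1,*)}$ at $M_\circ$. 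Near $\hat M$, the correction $h$ has index set $(k+1,*)$, hence contributes to $\Err^k-\Err_{k-1}$ only at orders $\geq k+1$ at $\hat M$ (using Proposition~\ref{PropELTAcc} again, now part~\eqref{ItELTAccMhat}), so the $\hat M$-index set of $\Err^k$ is $(k-1,*)$, inherited from $\Err_{k-1}$ (the $(k-1,*)$ part of $\Err_{k-1}$ is untouched by $h$). Combining the two regions and the polyhomogeneity at the corner yields $\Err^k\in\cA_\phg^{(k-1,*),(k+1,*)}(\wt M\setminus\wt K^\circ;S^2\wt T^*\wt M)$, as claimed. The main obstacle is purely bookkeeping: making sure the logarithmic powers and the exponents at the corner $\hat M\cap M_\circ$ are propagated correctly through the iterated application of Theorem~\ref{ThmAc} and the two parts of Proposition~\ref{PropELTAcc}; there is no new analytic input beyond what is already in~\S\S\ref{SAc}--\ref{SE}.
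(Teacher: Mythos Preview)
Your approach is correct and essentially the same as the paper's; the only substantive difference is organizational. The paper expands $\sum_{j=0}^m(\log\rho_\circ)^j f_j=\sum_{q=0}^m(\log\eps)^q f'_q$ with $f'_q=\sum_{j=0}^{m-q}\binom{j+q}{q}(-1)^j(\log\hat\rho)^j f_{j+q}$, observes that the Bianchi identity forces $\delta_g\sfG_g f'_q=0$ for each $q$ separately (since $\log\eps$ is a free parameter), applies Theorem~\ref{ThmAc} to each $f'_q$ independently to get $h_q$, and sets $h=\chi_\circ\eps^k\sum_q(\log\eps)^q h_q$. This avoids any iteration: the $m+1$ problems decouple completely.

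Your descending commutator recursion works too, but your formula ``replace $f_{m-1}$ by $f_{m-1}+[D_g\Ric,\log\hat\rho]h_m$'' is only literally correct when $m=1$ (the case of Proposition~\ref{PropFMc2}). For general $m$, the commutator $[D_g\Ric-\Lambda,(\log\rho_\circ)^m]=m(\log\rho_\circ)^{m-1}[D_g\Ric-\Lambda,\log\rho_\circ]+\binom{m}{2}(\log\rho_\circ)^{m-2}\bigl[[D_g\Ric-\Lambda,\log\rho_\circ],\log\rho_\circ\bigr]$ feeds into \emph{both} $f'_{m-1}$ (with a factor of $m$, not $1$) and $f'_{m-2}$; your recursion as written misses these. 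This is the bookkeeping obstacle you anticipate, and the paper's regrouping by $\log\eps$ sidesteps it entirely. Your ``more transparent'' reformulation in terms of $\sum_j(\log\rho_\circ)^j\tilde h_j$ is in fact the paper's endpoint, just reached by a cleaner route.
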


In other words, we can solve away the error term $\Err_{k-1}$ to leading order at $M_\circ$. A careful accounting of index sets, similarly to (but combinatorially more involved than) the proof of Proposition~\ref{PropFMc2}, allows one to specify the index sets; we leave this to the interested reader.

\begin{proof}[Proof of Proposition~\usref{PropFMc3Gen}]
  There exist $f_0,\ldots,f_m\in\cA_\phg^{(-1,*)}(M_\circ;\upbeta_\circ^*S^2 T^*M)$ so that the index set of
  \[
    \Err_{k-1} - \chi_\circ\eps^k\sum_{j=0}^m(\log\rho_\circ)^j f_j
  \]
  at $M_\circ$ is $(k+1,*)$. Writing $\rho_\circ=\frac{\eps}{\hat\rho}$, the second Bianchi identity for $\wt g_{k-1}$ implies
  \begin{align*}
    &0 = \delta_g\sfG_g\biggl(\sum_{j=0}^m (\log\eps-\log\hat\rho)^j f_j\biggr) = \sum_{q=0}^m (\log\eps)^q \delta_g\sfG_g f'_q, \\
    &\hspace{10em} f'_q := \sum_{j=0}^{m-q}\begin{pmatrix}j+q\\q\end{pmatrix}(-1)^j(\log\hat\rho)^j f_{j+q} \subset \cA_\phg^{(-1,*)}(M_\circ;\upbeta_\circ^*S^2 T^*M).
  \end{align*}
  Therefore $\delta_g\sfG_g f'_q=0$ for all $q=0,\ldots,m$. Theorem~\ref{ThmAc} produces
  \[
    h_q \in \cA_\phg^{(1,*)}(M_\circ;\upbeta_\circ^*S^2 T^*M),\qquad (D_g\Ric-\Lambda)h_q=f'_q,
  \]
  with $\supp h_q$ contained in the domain of influence of a compact subset of $\cU^\circ$. Therefore,
  \[
    (D_g\Ric-\Lambda)\biggl(\sum_{q=0}^m(\log\eps)^q h_q\biggr) = \sum_{q=0}^m(\log\eps)^q f'_q = \sum_{j=0}^m (\log\eps-\log\hat\rho)^j f_j.
  \]
  Thus, the conclusions of the Proposition hold for $h=\chi_\circ\eps^k\sum_{q=0}^m(\log\eps)^q h_q$.
\end{proof}

For $k=3$, this produces a $((3,*),(1,0)_+\cup(3,*))$-smooth total family $\wt g^3$ with
\begin{equation}
\label{EqFMc3}
\begin{split}
  \Err^3 := \Ric(\wt g^3)-\Lambda\wt g^3 \in \cA_\phg^{(2,*),(4,*)}(\wt M\setminus\wt K^\circ;S^2\wt T^*\wt M).
\end{split}
\end{equation}
Note that the correction $\wt g^3-\wt g_2$ produced by Proposition~\ref{PropFMc3Gen} has index set $(4,*)$ at $\hat M$, so in particular the description of the leading order correction to the Kerr metric at $\hat M$ in Proposition~\ref{PropFhM2} remains valid for $\wt g^3$ as well.

\subsection{Third correction at \texorpdfstring{$\hat M$}{the front face of the total gluing spacetime}}
\label{SsFhM3}

Let $m\in\N_0$ be such that the index set of $\Err^3$ in~\eqref{EqFMc3} at $\hat M$ is $(2,m)\cup(3,*)$. Write the leading order part of $\Err^3$ at $\hat M$ as
\[
  \sum_{j=0}^m \eps^2(\log\eps)^j \sfe^{-1}f_j,\qquad f_j\in\CI\bigl(I;\cA_\phg^{(2,*)}(\hat X_b;S^2\,\Ttsc^*_{\hat X_b}\hat M_b)\bigr).
\]
In order to solve this away, we need to make corrections to $\wt g^3$ by deformation tensors of translations by amounts $\cO(\eps^2(\log\eps)^m)$; since such corrections are thus still larger (by logarithmic factors) at $\hat M$ than the $\eps^2$ deviation of the total family $\wt g^3$ from $\hat g_b$ at $\hat M$, we need to solve away this leading order part using a combination of a correction and a pullback much as in~\S\S\ref{SsFhM1} and \ref{SsFhM2}. (Only in later steps of the construction can we omit the pullback part; see~\S\ref{SsFRest}.)

Since the arguments are very similar to those in~\S\ref{SsFhM2}, we shall be brief. We begin by solving away the term $\eps^2(\log\eps)^m\sfe^{-1}f_m$. Setting
\[
  \Phi_{3,\hat c(t)} \colon (\eps,\hat x) \mapsto \bigl(\eps,\hat x+\eps^2(\log\eps)^m\hat c(t)\bigr)
\]
where $\hat c\in\CI(I;\R^3)$ is to be determined, we consider for $\dot b\in\CI(I;\R\times\R^3)$ the tensor
\[
  \hat g_{3,\hat c,\dot b}=\hat g_{3,\hat c,\dot b}(\eps,t,\hat x;\dd\hat t,\dd\hat x)
\]
defined on a $t$-level set by $\Phi_{3,\hat c(t)}^*(\hat g_b+\eps^3(\log\eps)^m(\breve h_{b,2\hat c'(t)}+\hat g'_b(\dot b(t))))$; thus, in the coordinates~\eqref{EqFhM1hatCoord} near $t=t_0\in I$, we have
\begin{align*}
  \hat g_{3,\hat c,\dot b} &\equiv \hat g_b + \eps^2(\log\eps)^m \cL_{\hat c(t_0)\cdot\pa_{\hat x}}\hat g_b + \eps^3(\log\eps)^m\bigl(\cL_{\hat t\hat c'(t_0)\cdot\pa_{\hat x}+(\hat c'(t_0)\cdot\hat x)\pa_{\hat t}}\hat g_b + \hat g_b'(\dot b(t_0))\bigr) \\
    &\quad + \eps^4(\log\eps)^m\Bigl(\frac{\hat t^2}{2}h_{b,2\hat c''(t_0)}+\hat t\breve h_{b,2\hat c''(t_0)} + \hat t\hat g'_b(\dot b'(t_0))\Bigr) + \eps^4(\log\eps)^{2 m}\cL_{\hat c(t_0)\cdot\pa_{\hat x}}^2\hat g_b \\
    &\quad \bmod \cA_\phg^{(5,*)}\bigl([0,1)_\eps;\CI(\R^4_{\hat t,\hat x};S^2 T^*\R^4)\bigr).
\end{align*}
(The interaction of $\cL_{\hat c(t_0)\cdot\pa_{\hat x}}\hat g_b$ and $\cL_{\hat t\hat c(t_0)\cdot\pa_{\hat x}+(\hat c(t_0)\cdot\hat x)\pa_{\hat t}}\hat g_b+\hat g'_b(\dot b(t_0))$ is now of order $\eps^5(\log\eps)^{2 m}$ and thus negligible.) Therefore,
\[
  \Ric(\hat g_{3,\hat c,\dot b}) \equiv \eps^4(\log\eps)^m D_{\hat g_b}\Ric\Bigl(\frac{\hat t^2}{2}h_{b,2\hat c''(t_0)}+\hat t\breve h_{b,2\hat c''(t_0)} + \hat t\hat g'_b(\dot b'(t_0))\Bigr) \bmod \cA_\phg^{(5,*)}.
\]

For the tensor $\wt g_{3,\hat c,\dot b}:=\wt g^3+\hat\chi\sfe^{-1}(\hat g_{3,\hat c,\dot b}-\hat g_b)$, define the error term
\[
  \Err_{3,\hat c,\dot b} := \Ric(\wt g_{3,\hat c,\dot b})-\Lambda\wt g_{3,\hat c,\dot b} \in \cA_\phg^{(2,m)\cup(3,*),(4,*)}(\wt M\setminus\wt K^\circ;S^2\wt T^*\wt M);
\]
then the $\eps^2(\log\eps)^m$ coefficient of $\sfe\Err_{3,\hat c,\dot b}$ is at $\hat M_{c(t)}$ given by
\[
  f_{3,\hat c,\dot b,m}(t) = f_m(t) + D_{\hat g_b}\Ric\Bigl(\frac{\hat t^2}{2}h_{b,2\hat c''(t)}+\hat t\breve h_{b,2\hat c''(t)} + \hat t\hat g'_b(\dot b'(t)) - \cL_{\hat c(t)\cdot\pa_{\hat x}}h_{(2)}(t)\Bigr)
\]
analogously to~\eqref{EqFhM2Err2p}, where, as in Proposition~\ref{PropFhM2}, the subleading term $h_{(2)}=\eps^{-2}(\sfe\wt g^3-\hat g_b)|_{\hat M(t)}\in\cA_\phg^{(-2,0)_+}(\hat X_b;S^2\,\Ttsc^*_{\hat X_b}\hat M_b)$ satisfies~\eqref{EqFhM2DRicLh2} for all $\hat c$. Choosing $\hat c$ and then $\dot b$ by solving nondegenerate linear second, resp.\ first order ODEs by means of Theorem~\ref{ThmAhKCoker}, we can arrange for this coefficient to satisfy the assumptions of Theorem~\ref{ThmAhPhg} for all $t\in I$. Therefore, we obtain $h_m\in\cA_\phg^{(0,*)}(\hat M\setminus\wt K^\circ;S^2\wt T^*_{\hat M}\wt M)$ so that
\[
  \wh{D_{\hat g_b}\Ric}(0)(\sfe h_m(t)) = -f_{3,\hat c,\dot b,m}(t),\qquad t\in I.
\]
The update to $\wt g^3$ is then
\[
  \wt g_{3,m} := \Phi_{3,\hat c,\dot b}^*\bigl(\wt g_{3,\hat c,\dot b}+\hat\chi\eps^4(\log\eps)^m h_m\bigr),\qquad
  \Phi_{3,\hat c,\dot b}(\eps,t,\hat x) = \bigl(\eps,t,\hat x+\eps^2(\log\eps)^m\hat c(t)\bigr).
\]
The error term $\Err_{3,m}:=\Ric(\wt g_{3,m})-\Lambda\wt g_{3,m}$ has index set $(2,m-1)\cup(3,*)$ at $\hat M$ and $(4,*)$ at $M_\circ$; and $\sfe\wt g_{3,m}$ equals $\hat g_b$ along $\hat M$ modulo quadratic error terms.

One can continue in this fashion to eliminate the $\eps^2(\log\eps)^{m-1}$ term of $\Err_{3,m}$, and so on; after $m+1$ steps, one obtains:

\begin{prop}[Third correction at $\hat M$]
\label{PropFhM3}
  There exists a $((3,*),(1,0)_+\cup(3,*))$-smooth total family $\wt g_3$ which is equal to $\wt g_0$ at $\hat M$ and $M_\circ$ and equal to $g$ near $X\setminus\cU^\circ$, with
  \[
    \Err_3 := \Ric(\wt g_3) - \Lambda\wt g_3 \in \cA_\phg^{(3,*),(4,*)}(\wt M\setminus\wt K^\circ;S^2\wt T^*\wt M)
  \]
  with $\supp\Err_3\cap M_\circ\subset\upbeta_\circ^*U$ where $U$ is the domain of influence of a compact subset of $\cU^\circ$, and so that $\sfe\wt g_3$ is equal to $\hat g_b$ at $\hat M$ modulo quadratically vanishing errors in the sense of Proposition~\usref{PropFhM1}\eqref{ItFhM1hatMQuadr}.
\end{prop}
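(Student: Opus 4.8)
The plan is to run the iterative scheme of~\S\S\ref{SssFhM2Log}--\ref{SssFhM2Rem} once more, adapted to the logarithmic structure of the error at this stage. I would start from the $((3,*),(1,0)_+\cup(3,*))$-smooth total family $\wt g^3$ produced by Proposition~\ref{PropFMc3Gen} (with $k=3$); by~\eqref{EqFMc3} it satisfies $\Err^3=\Ric(\wt g^3)-\Lambda\wt g^3\in\cA_\phg^{(2,*),(4,*)}(\wt M\setminus\wt K^\circ;S^2\wt T^*\wt M)$ with $\supp\Err^3\cap M_\circ\subset\upbeta_\circ^*U$ for $U$ the domain of influence of a compact subset of $\cU^\circ$, equals $\wt g_0$ at $\hat M\cup M_\circ$, equals $g$ near $X\setminus\cU^\circ$, and—since the correction $\wt g^3-\wt g_2$ has index set $(4,*)$ at $\hat M$—still has $\sfe\wt g^3=\hat g_b$ at $\hat M$ modulo quadratically vanishing errors in the sense of Proposition~\ref{PropFhM1}\eqref{ItFhM1hatMQuadr}, with subleading term $h_{(2)}(t):=\eps^{-2}(\sfe\wt g^3-\hat g_b)|_{\hat M_{c(t)}}\in\cA_\phg^{(-2,0)_+}(\hat X_b;S^2\,\Ttsc^*_{\hat X_b}\hat M_b)$ obeying the relation~\eqref{EqFhM2DRicLh2} inherited from Proposition~\ref{PropFhM2}. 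Let $m\in\N_0$ be minimal so that the index set of $\Err^3$ at $\hat M$ is $(2,m)\cup(3,*)$; the goal is to eliminate, one logarithmic power at a time, the terms $\eps^2(\log\eps)^j$, $j=m,m-1,\ldots,0$.

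For the top power $j=m$: write the leading order part of the error at $\hat M$ as $\sum_{j=0}^m\eps^2(\log\eps)^j\sfe^{-1}f_j$ with $f_j\in\CI(I;\cA_\phg^{(2,*)}(\hat X_b;S^2\,\Ttsc^*_{\hat X_b}\hat M_b))$, each in $\ker\wh{\delta_{\hat g_b}\sfG_{\hat g_b}}(0)$ by the second Bianchi identity for $\wt g^3$. For $\hat c\in\CI(I;\R^3)$ and $\dot b\in\CI(I;\R\times\R^3)$ to be chosen, I would introduce the fiberwise pullback $\hat g_{3,\hat c,\dot b}$ of $\hat g_b+\eps^3(\log\eps)^m(\breve h_{b,2\hat c'(t)}+\hat g'_b(\dot b(t)))$ along $\Phi_{3,\hat c(t)}\colon(\eps,\hat x)\mapsto(\eps,\hat x+\eps^2(\log\eps)^m\hat c(t))$, Taylor expand in $\eps$, and use Lemmas~\ref{LemmaFhM1Lie} and~\ref{LemmaFhM2LieLinRic} to obtain $\Ric(\hat g_{3,\hat c,\dot b})\equiv\eps^4(\log\eps)^m D_{\hat g_b}\Ric(\tfrac{\hat t^2}{2}h_{b,2\hat c''(t_0)}+\hat t\breve h_{b,2\hat c''(t_0)}+\hat t\hat g'_b(\dot b'(t_0)))$ modulo $\cA_\phg^{(5,*)}$—at this order the cubic self-interaction of the $\cO(\eps^2(\log\eps)^m)$ translation is negligible and the $\cL_V\cL_W$-type cross terms cancel by $[V,W]$-identities exactly as in~\eqref{EqFhM2gbRic2}. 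Grafting $\hat g_{3,\hat c,\dot b}$ into $\wt g^3$ through $\hat\chi$ and using~\eqref{EqFhM2DRicLh2} together with~\eqref{EqFhM2h2Identity}, the $\eps^2(\log\eps)^m$-coefficient of the new error at $\hat M_{c(t)}$ becomes $f_m(t)+D_{\hat g_b}\Ric(\tfrac{\hat t^2}{2}h_{b,2\hat c''(t)}+\hat t\breve h_{b,2\hat c''(t)}+\hat t\hat g'_b(\dot b'(t))-\cL_{\hat c(t)\cdot\pa_{\hat x}}h_{(2)}(t))$, which lies in $\cA_\phg^{(2,*)}(\hat X_b;S^2\,\Ttsc^*_{\hat X_b}\hat M_b)$.

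The crucial point is to kill the $7$-dimensional cokernel $\cK^*_{b,\mathrm{tot}}$. Pairing this coefficient against $\cK^*_{b,\mathrm{COM}}$ and using Theorem~\ref{ThmAhKCoker} (that $\pi_{b,\rm COM}\circ\ell_{b,\rm COM}$ is an isomorphism, that $\pi_{b,\rm COM}\circ\ell_{b,\rm Kerr}=0$, and that~\eqref{EqFhM2DRicLh2} makes $h^*\mapsto\la\wh{D_{\hat g_b}\Ric}(0)(\cL_{\hat c\cdot\pa_{\hat x}}h_{(2)}(t)),h^*\ra$ a well-defined functional on $\cK^*_{b,\mathrm{COM}}$ depending smoothly and linearly on $(t,\hat c)$) turns the vanishing of the $\cK^*_{b,\mathrm{COM}}$-pairing into a nondegenerate linear second order ODE for $\hat c$, globally solvable on $I$. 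With $\hat c$ fixed, pairing against $\cK^*_{b,\mathrm{Kerr}}$ and using that $\pi_{b,\rm Kerr}\circ\ell_{b,\rm Kerr}$ is an isomorphism—together with the reduction of the $h_{(2)}$-coupling functional to $\ell_{b,\rm Kerr}(0,\fq(t))$ for some $\fq\in\CI(I;\R^3)$ via the integration-by-parts argument of Corollary~\ref{CorFhM1CokerParam}—gives a nondegenerate linear first order ODE for $\dot b$, again globally solvable; note that, unlike in~\S\ref{SssFhM2Log}, we admit the full parameter $\dot b\in\R\times\R^3$, since the resulting correction $\eps^3(\log\eps)^m\hat g'_b(\dot b(t))$ is of size $\eps^3$, smaller than the $\eps^2$ deviation from $\hat g_b$, hence harmless. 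For these $\hat c,\dot b$ the coefficient meets the hypotheses of Theorem~\ref{ThmAhPhg}, which produces $h_m\in\cA_\phg^{(0,*)}(\hat M\setminus\wt K^\circ;S^2\wt T^*_{\hat M}\wt M)$, smooth in $t$, with $\wh{D_{\hat g_b}\Ric}(0)(\sfe h_m(t))$ equal to minus this coefficient. Setting $\wt g_{3,m}:=\Phi_{3,\hat c,\dot b}^*(\wt g^3+\hat\chi\sfe^{-1}(\hat g_{3,\hat c,\dot b}-\hat g_b)+\eps^4(\log\eps)^m\hat\chi h_m)$, with $\Phi_{3,\hat c,\dot b}(\eps,t,\hat x)=(\eps,t,\hat x+\eps^2(\log\eps)^m\hat c(t))$, re-centers the black hole: the $\Phi_{3,\hat c,\dot b}^*$-generated term $-\eps^2(\log\eps)^m\cL_{\hat c\cdot\pa_{\hat x}}\hat g_b$ cancels the leading part of the grafted correction, so $\wt g_{3,m}$ again equals $\hat g_b$ at $\hat M$ modulo $\cO(\eps^2)$, differs from $\wt g^3$ only by $\cO(\eps^3)$ at $\hat M$ (hence $h_{(2)}$ and with it~\eqref{EqFhM2DRicLh2} are preserved), has error of index set $(2,m-1)\cup(3,*)$ at $\hat M$ and $(4,*)$ at $M_\circ$, and retains the support properties since all corrections are localized by $\hat\chi,\chi_\circ$. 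Iterating $m+1$ times—for $j=m,m-1,\ldots,0$, with modulations at sizes $\cO(\eps^2(\log\eps)^j)$ and $\cO(\eps^3(\log\eps)^j)$—yields $\wt g_3$ with $\Err_3\in\cA_\phg^{(3,*),(4,*)}$, equal to $\wt g_0$ at $\hat M\cup M_\circ$, to $g$ near $X\setminus\cU^\circ$, and to $\hat g_b$ at $\hat M$ modulo quadratic errors.

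The main obstacle is precisely the cokernel elimination in the third paragraph: one must check that the coupling term $\cL_{\hat c(t)\cdot\pa_{\hat x}}h_{(2)}(t)$ between the modulation and the non-pure-gauge subleading metric $h_{(2)}$ does not destroy the nondegeneracy of the ODEs for $\hat c$ and $\dot b$. This is exactly what relation~\eqref{EqFhM2DRicLh2} guarantees—it forces $\wh{D_{\hat g_b}\Ric}(0)(\cL_{\hat c(t)\cdot\pa_{\hat x}}h_{(2)}(t))\in\cA_\phg^{(2,1)_+}$, enough decay to be paired against $\cK^*_{b,\mathrm{tot}}$—which is why it was built into Propositions~\ref{PropFhM2first} and~\ref{PropFhM2}; verifying that it survives the present iteration (the corrections added here are $\cO(\eps^3)$ at $\hat M$ and so leave the $\eps^2$-coefficient $h_{(2)}$ untouched) is the bookkeeping point requiring care.
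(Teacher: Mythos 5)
Your proposal is correct and follows essentially the same route as the paper's proof of this proposition: graft the fiberwise-translated and parameter-modulated Kerr metric with translation at order $\eps^2(\log\eps)^j$ and modulation at order $\eps^3(\log\eps)^j$, eliminate the cokernel via the second-order ODE for $\hat c$ and first-order ODE for $\dot b$ from Theorem~\ref{ThmAhKCoker}, solve with Theorem~\ref{ThmAhPhg}, add $\eps^4(\log\eps)^j h_j$, re-center by the total pullback, and iterate over $j=m,\ldots,0$. Your explicit check that the subleading term $h_{(2)}$ and the relation~\eqref{EqFhM2DRicLh2} persist through each step is exactly the bookkeeping the paper leaves implicit when it says one can "continue in this fashion."
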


\subsection{Completion of the construction}
\label{SsFRest}

Proposition~\ref{PropFMc3Gen} produces a correction $h\in\cA_\phg^{(5,*),(4,*)}(\wt M\setminus\wt K^\circ;S^2\wt T^*\wt M)$ to $\wt g_3$, with support in the domain of influence of a compact subset of $\cU^\circ$, so that
\[
  \Err^4 := \Ric(\wt g^4)-\Lambda\wt g^4 \in \cA_\phg^{(3,*),(5,*)}(\wt M\setminus\wt K^\circ;S^2\wt T^*\wt M),\qquad \wt g^4:=\wt g_3+h.
\]
Note that $\wt g^4$ is a $((3,*),(1,0)_+\cup(3,*))$-smooth total family. Solving away $\Err^4$ to leading order at $\hat M$ now relies on the following result:

\begin{prop}[Correction at $\hat M$]
\label{PropFResthM}
  Let $k\geq 4$. Suppose $\wt g^k$ is a $((3,*),(1,0)_+\cup(3,*))$-smooth total family so that $\sfe\wt g^k$ is equal to $\hat g_b$ at $\hat M$ modulo quadratically vanishing errors in the sense of Proposition~\usref{PropFhM1}\eqref{ItFhM1hatMQuadr}. Suppose that
  \[
    \Err^k = \Ric(\wt g^k)-\Lambda\wt g^k \in \cA_\phg^{(k-1,m)\cup(k,*),(k+1,*)}(\wt M\setminus\wt K^\circ;S^2\wt T^*\wt M).
  \]
  Then there exists $h\in\cA_\phg^{(k-1,m),(k+1,*)}(\wt M\setminus\wt K^\circ;S^2\wt T^*\wt M)$ so that for $\wt g_k:=\wt g^k+h$, we have
  \[
    \Err_k := \Ric(\wt g_k)-\Lambda\wt g_k \in \cA_\phg^{(k,*),(k+1,*)}(\wt M\setminus\wt K^\circ;S^2\wt T^*\wt M).
  \]
  In particular, $\sfe\wt g_k$ is equal to $\hat g_b$ at $\hat M$ modulo quadratically vanishing errors; and we have $\supp\Err_k\cap M_\circ\subset\upbeta_\circ^*U$ where $U$ is the domain of influence of a compact subset of $\cU^\circ$.
\end{prop}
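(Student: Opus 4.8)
\textbf{Proof strategy for Proposition~\ref{PropFResthM}.}
The plan is to follow the now-familiar pattern from~\S\S\ref{SsFhM1}--\ref{SsFhM3}, but in the simplified regime $k\geq 4$: here a correction term of size $\cO(\eps^{k-1}(\log\eps)^m)$ at $\hat M$ is \emph{smaller} than the $\cO(\eps^2)$ deviation of $\wt g^k$ from $\hat g_b$, so we can solve the error away by a \emph{pure correction} (a symmetric 2-tensor added to $\wt g^k$), with \emph{no pullback/re-centering step} needed at the end. Write the leading order part of $\Err^k$ at $\hat M$ as $\sum_{j=0}^m\eps^{k-1}(\log\eps)^j\sfe^{-1}f_j$ with $f_j\in\CI(I;\cA_\phg^{(k+1,*)/\ldots}(\hat X_b;S^2\,\Ttsc^*_{\hat X_b}\hat M_b))$ of the appropriate decay (at least inverse quadratic, since $\Err^k$ has weight $>-1$ at $\hat M$ and the index set is $(k-1,m)\cup(k,*)$; more precisely $\sfe^{-1}f_j$ has $\hat M$-weight giving $f_j$ a positive decay rate as $\rho_\circ\to 0$, which for $k\geq 4$ is comfortably $\geq 2$). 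I would solve these away one logarithmic power at a time, starting with $j=m$.

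For the step eliminating $\eps^{k-1}(\log\eps)^m\sfe^{-1}f_m$: by the second Bianchi identity $\delta_{\wt g^k}\sfG_{\wt g^k}\Err^k=0$, separating powers of $\log\eps$ (via $\rho_\circ=\eps/\hat\rho$) as in the proof of Proposition~\ref{PropFMc3Gen} and Proposition~\ref{PropFhM2}, one gets $\wh{\delta_{\hat g_b}\sfG_{\hat g_b}}(0)f_m(t)=0$ for all $t\in I$. To eliminate the $7$-dimensional cokernel obstruction, introduce the modulation $\Phi_{\hat c(t)}\colon(\eps,\hat x)\mapsto(\eps,\hat x+\eps^{k-2}(\log\eps)^m\hat c(t))$ together with a linearized-Kerr correction $\eps^{k-1}(\log\eps)^m\hat g'_b(\dot b(t))$ and the boost-completion $\eps^{k-1}(\log\eps)^m\breve h_{b,2\hat c'(t)}$; Taylor expanding in $\eps$ near $t=t_0$ as in~\S\ref{SsFhM3}, and using Lemmas~\ref{LemmaFhM1Lie}, \ref{LemmaFhM2LieLinRic} together with $D_{\hat g_b}\Ric(\hat g'_b(\dot b))=0$, the new $\eps^{k-1}(\log\eps)^m$ contribution to $\sfe\Err$ at $\hat M_{c(t)}$ becomes
\[
  f_m(t) + D_{\hat g_b}\Ric\Bigl(\tfrac{\hat t^2}{2}h_{b,2\hat c''(t)} + \hat t\breve h_{b,2\hat c''(t)} + \hat t\hat g'_b(\dot b'(t)) - \cL_{\hat c(t)\cdot\pa_{\hat x}}h_{(2)}(t)\Bigr),
\]
where $h_{(2)}$ is the subleading term of $\sfe\wt g^k$ at $\hat M$, which by hypothesis satisfies~\eqref{EqFhM2DRicLh2}. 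Applying $\la-,\cdot\ra_{L^2(\hat X_b)}$ against $\cK_{b,\rm COM}^*$ and then $\cK_{b,\rm Kerr}^*$ and invoking Theorem~\ref{ThmAhKCoker}\eqref{ItAhKCokerKerrPK}--\eqref{ItAhKCokerCOM} (and the fact, via the integration-by-parts argument~\eqref{EqFhM1CokerIBP}, that the $\cL_{\hat c\cdot\pa_{\hat x}}h_{(2)}$ contribution defines a smooth finite-dimensional functional of $\hat c$), the vanishing of the cokernel pairing becomes a nondegenerate linear second order ODE for $\hat c$, and then with $\hat c$ fixed a nondegenerate linear first order ODE for $\dot b$; both have global solutions in $\CI(I;\R^3)$ and $\CI(I;\R\times\R^3)$. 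Theorem~\ref{ThmAhPhg} then yields $h_m\in\cA_\phg^{(0,*)}(\hat M\setminus\wt K^\circ;S^2\wt T^*_{\hat M}\wt M)$ solving $\wh{D_{\hat g_b}\Ric}(0)(\sfe h_m(t))$ equal to minus that coefficient, and the updated metric is $\Phi_{\hat c,\dot b}^*(\wt g^k + \hat\chi\sfe^{-1}(\hat g_{b,\hat c,\dot b}-\hat g_b) + \eps^k(\log\eps)^m\hat\chi h_m)$; the pullback $\Phi_{\hat c,\dot b}$, differing from the identity by $\cO(\eps^{k-2}(\log\eps)^m)$ in the slow coordinates, preserves the total-family structure and ensures $\sfe(\text{new metric})=\hat g_b$ at $\hat M$ modulo quadratic errors. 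Iterating over $j=m,m-1,\ldots,0$, then tracking index sets (the corrections contribute index set $(k+1,*)$ at $M_\circ$, and the support constraint is preserved since all pieces are supported where $\hat\chi\neq0$, inside the domain of influence of a compact subset of $\cU^\circ$), produces $h$ and $\wt g_k=\wt g^k+h$ with $\Err_k\in\cA_\phg^{(k,*),(k+1,*)}$.

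The main obstacle — though for $k\geq 4$ it is a \emph{conceptual} rather than technical one, already fully resolved in~\S\S\ref{SsFhM1}--\ref{SsFhM3} — is the bookkeeping that makes the cokernel-elimination ODEs for $\hat c$ and $\dot b$ well-posed: one must verify that the quadratic self-interaction of the $\cO(\eps^{k-2}(\log\eps)^m)$ center-of-mass modulation, and its cross-interaction with the $\cO(\eps^2)$ term $h_{(2)}$ in $\wt g^k$, produce exactly the combination $D_{\hat g_b}\Ric(-\cL_{\hat c\cdot\pa_{\hat x}}h_{(2)})$ displayed above (via the identity~\eqref{EqFhM2h2Identity}, whose analogue here follows by Lie-differentiating the relation $\wh{D_{\hat g_b}\Ric}(0)(h_{(2)}(t))=\Lambda\hat g_b$ along $\hat c(t)\cdot\pa_{\hat x}$), and that this contribution is, by~\eqref{EqFhM2DRicLh2}, of decay order $\cA_\phg^{(2,*)}$ — almost a full order better than needed for the $\cK_{b,\rm tot}^*$ pairings — so that the coefficients of the two ODEs are finite and depend smoothly on $t\in I$. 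Once this structural point is in place, Theorem~\ref{ThmAhKCoker} supplies invertibility of the principal parts of the ODE systems, Theorem~\ref{ThmAhPhg} supplies the remaining polyhomogeneous correction, and the rest is routine index-set arithmetic. Since no special algebraic properties of $\Err^k$ (such as absence of scalar/vector type $1$ pieces, which mattered only in the first few steps) are needed here, one does not have to worry about logarithms blowing up at $\hat M$: the modulation is applied at order $\eps^{k-2}(\log\eps)^m$ with $k-2\geq 2$, so it genuinely vanishes at $\hat M$.
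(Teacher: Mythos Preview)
Your approach is the same as the paper's, but there is a systematic off-by-one slip in the $\eps$-exponents that, as written, breaks the argument. The center-of-mass translation should be at order $\eps^{k-1}(\log\eps)^m$ (not $\eps^{k-2}$), the boost-completion $\breve h_{b,2\hat c'}$ and the linearized Kerr piece $\hat g'_b(\dot b)$ at order $\eps^k(\log\eps)^m$ (not $\eps^{k-1}$), and the polyhomogeneous correction $\hat\chi h_m$ at order $\eps^{k+1}(\log\eps)^m$ (not $\eps^k$). With your exponents, Taylor expanding $\hat c(t_0+\eps\hat t)$ places $\tfrac{\hat t^2}{2}h_{b,2\hat c''}$ at order $\eps^{k-2}\cdot\eps^2=\eps^k$ in the fast-coordinate metric, so its contribution to $\Ric(\wt g_{\hat c,\dot b})-\Lambda\wt g_{\hat c,\dot b}$ in $S^2\wt T^*\wt M$ sits at $\eps^{k-2}(\log\eps)^m$, one order \emph{larger} than the error term $f_m$ you are trying to cancel; the cross-interaction with $\eps^2 h_{(2)}$ likewise lands at $\eps^{k-2}$. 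Your displayed formula for the modified $\eps^{k-1}(\log\eps)^m$ coefficient is therefore not what your setup actually produces. Shifting every exponent up by one recovers exactly the paper's proof, and the resulting correction then lies in $\cA_\phg^{(k-1,m),(k+1,*)}$ as the statement requires.

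On the pullback: your opening sentence is right and your later formula is not. The paper omits the re-centering diffeomorphism for $k\geq 4$ and sets $\wt g_{k,m}=\wt g_{\hat c,\dot b}+\hat\chi\eps^{k+1}(\log\eps)^m h_m$ directly. With the correct $\eps^{k-1}$ translation the metric correction $\hat g_{\hat c,\dot b}-\hat g_b$ is $\cO(\eps^{k-1})\subset\cO(\eps^3)$, strictly below the $\eps^2 h_{(2)}$ term, so the property ``$\sfe\wt g_k=\hat g_b$ at $\hat M$ modulo $\cO(\eps^2)$'' is preserved automatically. (With your $\eps^{k-2}$ choice and $k=4$ the correction would sit exactly at $\eps^2$, which is another symptom of the same indexing error.)
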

\begin{proof}
  We shall solve away the leading order term of $\Err^k$ at $\hat M$, which is
  \[
    \eps^{k-1}(\log\eps)^m\sfe^{-1}f,\qquad
    f\in\CI\bigl(I;\cA_\phg^{(2,*)}(\hat X_b;S^2\,\Ttsc^*_{\hat X_b}\hat M_b)\bigr).
  \]
  Let $\hat c\in\CI(I;\R^3)$ and $\dot b\in\CI(I;\R\times\R^3)$; we will determine these functions momentarily. Consider $\Phi_{\hat c(t)}\colon(\eps,\hat x)\mapsto(\eps,\hat x+\eps^{k-1}(\log\eps)^m\hat c(t))$ and the tensor
  \[
    g_{\hat c,\dot b} = \Bigl( \Phi_{\hat c(t)}^*\bigl( \hat g_b + \eps^k(\log\eps)^m\bigl(\breve h_{b,\hat c'(t)} + \hat g'_b(\dot b(t))\bigr)\bigr) \bigr)_{t\in I}.
  \]
  Taylor expanding around $t=t_0\in I$ and using the coordinates $\eps$, $\hat t=\frac{t-t_0}{\eps}$, $\hat x$, this is
  \begin{align*}
    \hat g_{\hat c,\dot b} &\equiv \hat g_b + \eps^{k-1}(\log\eps)^m h_{b,2\hat c(t_0)} + \eps^k(\log\eps)^m\bigl(\cL_{\hat t\hat c'(t_0)\cdot\pa_{\hat x}+(\hat c'(t_0)\cdot\hat x)\pa_{\hat t}}\hat g_b + \hat g'_b(\dot b(t_0))\bigr) \\
      &\quad\qquad + \eps^{k+1}(\log\eps)^m\Bigl(\frac{\hat t^2}{2}h_{b,2\hat c''(t_0)} + \hat t\breve h_{b,2\hat c''(t_0)} + \hat t\hat g'_b(\dot b'(t_0))\Bigr) \\
      &\quad\qquad \hspace{8em} \bmod \cA_\phg^{(k+2,*)}\bigl([0,1)_\eps;\CI(\R^4_{\hat t,\hat x};S^2 T^*\R^4)\bigr),
  \end{align*}
  and therefore we have
  \[
    \Ric(\hat g_{\hat c,\dot b}) \equiv \eps^{k+1}(\log\eps)^m\wh{D_{\hat g_b}\Ric}(0)\Bigl(\frac{\hat t^2}{2}h_{b,2\hat c''(t_0)} + \hat t\breve h_{b,2\hat c''(t_0)} + \hat t\hat g'_b(\dot b'(t_0))\Bigr) \bmod \cA_\phg^{(k+2,*)}.
  \]

  We then consider
  \[
    \wt g_{\hat c,\dot b} := \wt g^k + \hat\chi\sfe^{-1}(\hat g_{\hat c,\dot b}-\hat g_b).
  \]
  Since $\sfe\wt g^k\equiv\hat g_b+\eps^2 h_{(2)}\bmod\cA_\phg^{(3,*)}$ at $\hat M^\circ$, we compute, as in~\eqref{EqFhM2Err2p} and earlier in~\eqref{EqFhM2Errca2}--\eqref{EqFhM2Err2}, the leading order term of $\Ric(\wt g_{\hat c,\dot b})-\Lambda\wt g_{\hat c,\dot b}$ at $\hat M$ to be $\eps^{k-1}(\log\eps)^m$ times $\sfe^{-1}$ applied to
  \begin{align*}
    \Err_{\hat c,\dot b}(t) &:= f(t) + D_{\hat g_b}\Ric\Bigl(\frac{\hat t^2}{2}h_{b,2\hat c''(t)} + \hat t\breve h_{b,2\hat c''(t)} + \hat t\hat g'_b(\dot b'(t)) - \cL_{\hat c(t)\cdot\pa_{\hat x}}h_{(2)}(t)\Bigr), \\
    &\qquad \Err_{\hat c,\dot b}\in \CI\bigl(I;\cA_\phg^{(2,*)}(\hat X_b;S^2\,\Ttsc^*_{\hat X_b}\hat M_b)\bigr).
  \end{align*}
  We then solve $\la\Err_{\hat c,\dot b},-\ra_{L^2(\hat X_b)}=0\in\CI(I;(\cK_{b,\rm COM}^*)^*)$, which is a nondegenerate linear second order ODE for $\hat c$ by Theorem~\ref{ThmAhKCoker}; and then we take $\dot b$ to be a solution of the nondegenerate first order ODE $\la\Err_{\hat c,\dot b},-\ra_{L^2(\hat X_b)}=0\in\CI(I;(\cK_{b,\rm Kerr}^*)^*)$.

  For these choices of $\hat c,\dot b$, we can then apply Theorem~\ref{ThmAhPhg} to find $h_m\in\cA_\phg^{(0,*)}(M_\circ\setminus\wt K^\circ;S^2\wt T^*_{\hat M}\wt M)$ with $\wh{D_{\hat g_b}\Ric}(0)(\sfe h_m(t))=-\Err_{\hat c,\dot b}(t)$ for all $t\in I$. We then set
  \[
    \wt g_{k,m} := \wt g_{\hat c,\dot b} + \hat\chi\eps^{k+1}(\log\eps)^m h_m;
  \]
  this is a $((3,*),(1,0)_+\cup(3,*))$-smooth total family with
  \[
    \Ric(\wt g_{k,m})-\Lambda\wt g_{k,m} \in \cA_\phg^{(k-1,m-1)\cup(k,*),(k+1,*)}(\wt M\setminus\wt K^\circ;S^2\wt T^*\wt M).
  \]
  Thus, we have removed the $\cO(\eps^{k-1}(\log\eps)^m)$ leading order term of $\Err^k$ using a correction $\hat\chi\eps^{k+1}(\log\eps)^m h_m$ with index sets $(k-1,m)\cup(k+1,*)$ at $\hat M$ and $(k+1,*)$ at $M_\circ$. Continuing in this fashion, we can successively remove all $\cO(\eps^{k-1}(\log\eps)^j)$ terms for $j=k-1,\ldots,0$. This finishes the proof.
\end{proof}

We are now ready to conclude the first part of the construction:

\begin{thm}[Formal solution at $\eps=0$]
\label{ThmFh}
  In the notation of Theorem~\usref{ThmM}, there exists a $((3,*),(1,0)_+\cup(3,*))$-smooth total family $\wt g_\infty$ over $\wt M\setminus\wt K^\circ$ with respect to $g$ with the following properties:
  \begin{enumerate}
  \item in the frame $\dd\hat t,\dd\hat x$ related to the fixed choice Fermi normal coordinates,\footnote{See also point~\eqref{ItMFamily} in~\S\ref{SM}.} the $\hat M_p$-model of $\hat g$ is equal to the Kerr metric $\hat g_{\bhm,\bha}$ for all $p\in\cC$;
  \item the tensor $\sfe\wt g_\infty$ (defined near $\hat M$) is equal to the Kerr metric $\hat g_b$ at $\hat M$ modulo quadratically vanishing errors in the sense of Theorem~\usref{ThmM} (or as in Proposition~\usref{PropFhM1}\eqref{ItFhM1hatMQuadr});
  \item we have
    \begin{equation}
    \label{EqFh}
      \Err_\infty := \Ric(\wt g_\infty) - \Lambda\wt g_\infty \in \CIdot(\wt M\setminus\wt K^\circ;S^2\wt T^*\wt M);
    \end{equation}
  \item $\wt g_\infty=g$ outside the domain of influence $U$ of a compact subset of $\cU^\circ$ (and in particular $\supp\Err_\infty\cap M_\circ\subset U$).
  \end{enumerate}
\end{thm}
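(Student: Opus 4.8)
The theorem is the endpoint of the iterative scheme developed in \S\S\ref{SsFMc1}--\ref{SsFRest}; the whole construction is essentially an induction on the order of vanishing of the error at $\eps=0$, and the final step is to extract an honest formal solution by asymptotic summation. Concretely, I would proceed as follows. Starting from $\wt g_0$ (Lemma~\ref{LemmaGKNaive}\eqref{ItGKNaiveGeod}) with error $\Err_0\in\rho_\circ\CI$ (Lemma~\ref{LemmaFErr0}), I invoke Propositions~\ref{PropFMc1}, \ref{PropFhM1}, \ref{PropFMc2}, \ref{PropFhM2first}, \ref{PropFhM2}, \ref{PropFMc3Gen} (with $k=3$), and \ref{PropFhM3} to obtain a $((3,*),(1,0)_+\cup(3,*))$-smooth total family $\wt g_3$ with $\Err_3\in\cA_\phg^{(3,*),(4,*)}(\wt M\setminus\wt K^\circ;S^2\wt T^*\wt M)$, which is equal to $g$ near $X\setminus\cU^\circ$ and whose $\hat M$-model agrees with $\hat g_b$ modulo quadratic errors (Proposition~\ref{PropFhM3}). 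For $k\geq 4$ I then alternate Propositions~\ref{PropFMc3Gen} and \ref{PropFResthM}: given a $((3,*),(1,0)_+\cup(3,*))$-smooth total family $\wt g_k$ with $\Err_k\in\cA_\phg^{(k,*),(k+1,*)}$, first Proposition~\ref{PropFMc3Gen} (applied with the integer $k+1$ in place of $k$) produces $\wt g^{k+1}$ with $\Err^{k+1}\in\cA_\phg^{(k,*),(k+2,*)}$, and then Proposition~\ref{PropFResthM} (applied with $k+1$) produces $\wt g_{k+1}$ with $\Err_{k+1}\in\cA_\phg^{(k+1,*),(k+2,*)}$, preserving all the side conditions: the $\hat M$-model is $\hat g_b$ to quadratic order, the metric equals $g$ outside the domain of influence $U$ of a fixed compact subset of $\cU^\circ$, and $\supp\Err_{k+1}\cap M_\circ\subset\upbeta_\circ^*U$. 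Writing $h_k:=\wt g_{k+1}-\wt g_k$, I record from these propositions that $h_k\in\cA_\phg^{(k,*),(k+1,*)}(\wt M\setminus\wt K^\circ;S^2\wt T^*\wt M)$, i.e.\ $h_k$ vanishes to order $\geq k$ at $\hat M$ and to order $\geq k+1$ at $M_\circ$, and $h_k=0$ outside $U$.

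The second, and essentially only nontrivial, ingredient is the asymptotic summation. Since $h_k\to 0$ with increasing order at both boundary hypersurfaces of $\wt M$, I take $\wt g_\infty$ to be an asymptotic sum $\wt g_\infty \sim \wt g_3 + \sum_{k\geq 3} h_k$ in the standard Borel-type sense for polyhomogeneous (indeed conormal) functions on a manifold with corners: choosing cutoffs $\chi_k\in\CI([0,1))$ equal to $1$ near $0$ with supports shrinking fast enough, the series $\wt g_3 + \sum_{k\geq 3}\chi_k(\hat\rho)\chi_k(\rho_\circ)h_k$ converges in $\cA_\phg^{(3,*),(1,0)_+\cup(3,*)}$ and its difference from any partial sum $\wt g_N$ lies in $\cA_\phg^{(N,*),(N+1,*)}$; since all $h_k$ vanish outside $U$, so does $\wt g_\infty - \wt g_3$, hence $\wt g_\infty = g$ outside $U$. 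The $M_\circ$- and $\hat M$-models: at $M_\circ$ every $h_k$ has index set $\geq(k+1,*)$ with $k\geq 3$, so $\wt g_\infty|_{M_\circ}=\wt g_3|_{M_\circ}=\upbeta_\circ^*g$; at $\hat M$ every $h_k$ vanishes (index set $(k,*)$, $k\geq 3$), so $\sfe(\wt g_\infty|_{\hat M})=\sfe(\wt g_3|_{\hat M})=\hat g_b$, hence the $\hat M_p$-model is $\hat g_{\bhm,\bha}$ in the frame $\dd\hat t,\dd\hat x$ for all $p\in\cC$; and the $\eps^2$-leading correction to $\hat g_b$ at $\hat M$ also stabilizes at the third step (the corrections $h_k$, $k\geq 4$, have index set at least $(4,*)$ at $\hat M$), so the quadratic-error statement of the theorem follows from Proposition~\ref{PropFhM3} (cf.\ Proposition~\ref{PropFhM1}\eqref{ItFhM1hatMQuadr}).

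Finally, for the error bound, I would use Proposition~\ref{PropELTAcc}\eqref{ItELTAccMcirc} (or simply the polyhomogeneity and structure from Corollary~\ref{CorGLCurvature} together with Lemma~\ref{LemmaELTLin}) to compare $\Err_\infty = \Ric(\wt g_\infty)-\Lambda\wt g_\infty$ with $\Err_N = \Ric(\wt g_N)-\Lambda\wt g_N$: since $\wt g_\infty-\wt g_N\in\cA_\phg^{(N,*),(N+1,*)}$ and the nonlinear Einstein operator, when linearized about $\wt g_N$, is of class $\cA_\phg^{(-2,0)\cup\cdots,(0,0)}\Diffse^2$ modulo higher-order terms, the difference $\Err_\infty-\Err_N$ lies in $\cA_\phg^{(N-2,*),(N+1,*)}$; combined with $\Err_N\in\cA_\phg^{(N,*),(N+1,*)}$ this gives $\Err_\infty\in\cA_\phg^{(N-2,*),(N+1,*)}$ for every $N$, hence $\Err_\infty\in\CIdot(\wt M\setminus\wt K^\circ;S^2\wt T^*\wt M)$, i.e.\ \eqref{EqFh}. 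The support statement $\supp\Err_\infty\cap M_\circ\subset U$ is inherited from each $\Err_N$. The main obstacle, such as it is, is purely bookkeeping: one must check that the cutoffs in the asymptotic summation can indeed be chosen so that convergence holds simultaneously at $\hat M$ and $M_\circ$ (and at the corner, in the log-smooth sense), and that the summation does not disturb the side conditions (support, Kerr model, quadratic agreement) — all of which is immediate from the fact that the tails $h_k$ vanish to increasing order at both faces and are supported in $U$, independently of $k$. No genuinely new analytic input is required beyond Propositions~\ref{PropFMc3Gen} and \ref{PropFResthM}, which supply the inductive step.
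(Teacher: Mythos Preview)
Your proposal is correct and follows essentially the same route as the paper: iterate Propositions~\ref{PropFMc3Gen} and \ref{PropFResthM} starting from $\wt g_3$ (the ``For $k\geq 4$'' should read ``For $k\geq 3$'', since the first pass takes $\wt g_3\to\wt g^4\to\wt g_4$), then Borel-sum the corrections. The one point you gloss over that the paper treats carefully is the \emph{uniformity} of the support condition: each application of Theorem~\ref{ThmAc} produces a correction supported in the domain of influence of \emph{some} compact subset of $\cU^\circ$, and a priori these compacta could drift toward $\pa\cU^\circ$ as $k\to\infty$, so that the asymptotic sum need not be supported in the domain of influence of a \emph{single} compact set. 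The paper handles this by fixing a small $\delta>0$, working throughout on a $\delta$-shrinking of $\cU^\circ$, and arranging that at step $k$ the support stays at distance $\geq\delta+\delta 2^{-k}$ from $\pa\cU^\circ$; since $\sum 2^{-k}<\infty$, the closure of the union of all supports is compact in $\cU^\circ$. Your assertion ``supported in $U$, independently of $k$'' is what one wants, but it needs exactly this kind of justification.
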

\begin{proof}
  Using Proposition~\ref{PropFResthM} with $k=4$, we correct the $((3,*),(1,0)_+\cup(3,*))$-smooth total family $\wt g^4$ using $h_4\in\cA_\phg^{(3,*),(5,*)}(\wt M\setminus\wt K^\circ;S^2\wt T^*\wt M)$ to $\wt g_4=\wt g^4+h_4$, which satisfies
  \[
    \Err_4 := \Ric(\wt g_4) - \Lambda\wt g_4 \in \cA_\phg^{(4,*),(5,*)}(\wt M\setminus\wt K^\circ;S^2\wt T^*\wt M).
  \]
  Proposition~\ref{PropFMc3Gen}, with $k=5$, then produces $h^5\in\cA_\phg^{(6,*),(5,*)}(\wt M\setminus\wt K^\circ;S^2\wt T^*\wt M)$ so that $\wt g^5:=\wt g_4+h^5$ satisfies
  \[
    \Err^5 := \Ric(\wt g^5) - \Lambda\wt g^5 \in \cA_\phg^{(4,*),(6,*)}(\wt M\setminus\wt K^\circ;S^2\wt T^*M).
  \]
  Continuing in this fashion, we obtain sequences $h_k\in\cA_\phg^{(k-1,*),(k+1,*)}(\wt M\setminus\wt K^\circ;S^2\wt T^*\wt M)$ and $h^{k+1}\in\cA_\phg^{(k+2,*),(k+1,*)}(\wt M\setminus\wt K^\circ;S^2\wt T^*\wt M)$ so that for $\wt g_k=\wt g^k+h_k$ and $\wt g^{k+1}=\wt g_k+h^{k+1}$ we have
  \begin{align*}
    \Err_k&:=\Ric(\wt g_k)-\Lambda\wt g_k\in\cA_\phg^{(k,*),(k+1,*)}, \\
    \Err^{k+1}&:=\Ric(\wt g^{k+1})-\Lambda\wt g^{k+1}\in\cA_\phg^{(k-1,*),(k+1,*)}.
  \end{align*}
  Moreover, the supports of the correction terms $h_k,h^{k+1}$, and thus also of the error terms $\Err_k$, $\Err^{k+1}$, are contained in the domain of influence of a compact subset of $\cU^\circ$. More precisely, in the $k$-th step of the construction we can ensure that $\supp h_k$ and $\supp h^{k+1}$ are contained in the domain of influence of a compact subset of $\cU^\circ$ whose distance (with respect to any fixed Riemannian metric on $X$) to $\pa\cU^\circ$ is at least $\delta+\delta 2^{-k}$ for some fixed small $\delta>0$; for $h_k$ this is clear since we may cut off $h_k$ to any neighborhood of $\hat M$ (which is, indeed, how $h_k$ is constructed), while for $h^{k+1}$ this follows from the proof of Theorem~\ref{ThmAc}, specifically Proposition~\ref{PropAcTrue} which one simply applies to a $\delta$-shrinking of $\cU^\circ$, for a fixed small $\delta>0$, throughout the entire construction in this section. Define then $\wt g_\infty$ to be an asymptotic sum
  \[
    \wt g_\infty \sim \wt g^4+\sum_{k=4}^\infty(h_k+h^{k+1}),
  \]
  in the sense that the difference of $\wt g_\infty$ and the truncation of the series at $k=N$ (which gives $\wt g^N$) is polyhomogeneous on $\wt M$ with index sets $(N,*)$ and $(N+2,*)$ at $\hat M$ and $M_\circ$, respectively. In view of the support properties of $\wt g^4,h_k,h^{k+1}$, we can arrange that $\wt g_\infty=\wt g^4=g$ outside the domain of influence of a compact subset of $\cU^\circ$, as desired. Since $\wt g_\infty$ differs from $\wt g_k$ by error terms which have increasing orders of vanishing at $\hat M$ and $M_\circ$ as $k\to\infty$, the membership~\eqref{EqFh} follows.
\end{proof}

\begin{rmk}[Construction on a slightly larger manifold]
\label{RmkMLarger}
  While in the notation of Definition~\ref{DefGKModel}, the above construction takes place outside of $[0,1)_\eps\times I_t\times\hat K_{\bhm,\bha}^\delta$ where $\delta=0$, any other choice of $\delta\in(-\sqrt{\bhm^2-\bha^2},\sqrt{\bhm^2-\bha^2})$ works just as well. Thus, if we write
  \[
    \wt K^\delta=\{(\eps,t,x)\colon|\eps x|\leq\bhm-\delta\},
  \]
  then Theorem~\ref{ThmFh} remains valid if one replaces $\wt K$ by $\wt K^\delta$. By taking $\delta>0$, we can thus construct a formal solution $\wt g_\infty$ on the larger manifold $\wt M\setminus(\wt K^\delta)^\circ$.
\end{rmk}

\section{Formal solution of the initial value problem}
\label{SIVP}

We continue using the notation from~\S\ref{SM}. In this section, the focus is on the hypersurface $X\subset M$, which intersects the curve $\cC\subset M$ orthogonally at the point $\fp$. We now complete the proof of Theorem~\ref{ThmM} by adding to the total family $\wt g_\infty$ from Theorem~\ref{ThmFh} a further correction, supported near the total gluing space $\wt X\subset\wt M$ (see point~\eqref{ItMDataTilde} in~\S\ref{SM}), which vanishes to infinite order at $\hat M\cup M_\circ$. For technical reasons, we start with $\wt g_\infty$ defined on a larger manifold $\wt M\setminus(\wt K^\delta)^\circ$, $\delta\in(0,\sqrt{\bhm^2-|\bha|^2})$, as in Remark~\ref{RmkMLarger}.\footnote{Following Notation~\ref{NotGLDef}, all tensors are defined only in some open neighborhood of $\hat M\cup M_\circ$. Recall that $\wt g_\infty$ is a Lorentzian signature section of $S^2\wt T^*\wt M$ in some such open neighborhood by Corollary~\ref{CorGLInitial}\eqref{ItGLInitialSpace}.}

\begin{thm}[Formal solution at $\wt X$]
\label{ThmIVP}
  Let $\hat\cE,\cE\subset\C\times\N_0$ be two nonlinearly closed index sets with $\Re\hat\cE,\Re\cE>0$. Suppose $\wt g_\infty$ is a $(\hat\cE,\cE)$-smooth total family with respect to $g$ over $\wt M\setminus(\wt K^\delta)^\circ$ whose $\hat M$-model is $\hat g$, i.e.\ the $\hat M_p$-model is a fixed subextremal Kerr metric $\hat g_{\bhm,\bha}$ for all $p\in\cC$ (as in point~\eqref{ItMFamily} in~\textnormal{\S}\usref{SM}). Suppose that $\wt g_\infty=g$ near $X\setminus\cU^\circ$, and $\Ric(\wt g_\infty)-\Lambda\wt g_\infty\in\CIdot(\wt M\setminus\wt K^\circ;S^2\wt T^*\wt M)$.\footnote{For $\hat\cE=(3,*)$ and $\cE=(1,0)_+\cup(3,*)$, Theorem~\ref{ThmFh} produces such a $\wt g_\infty$.} Then there exists $\wt h\in\CIdot(\wt M\setminus\wt K^\circ;S^2\wt T^*\wt M)$ with support contained in any fixed open neighborhood of $\hat M_\fp\cup\upbeta_\circ^*\cU^\circ$, so that for the $(\hat\cE,\cE)$-smooth total family
  \[
    \wt g := \wt g_\infty + \wt h
  \]
  over $\wt M\setminus\wt K^\circ$, the error $\Ric(\wt g)-\Lambda\wt g$ vanishes to infinite order at $\hat M\cup M_\circ\cup\wt X$.
\end{thm}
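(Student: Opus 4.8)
The plan is to convert the problem to the initial data level, correct the first and second fundamental forms of $\wt g_\infty$ at $\wt X$ by $\cO(\eps^\infty)$-small tensors so that the \emph{constraint equations} hold exactly there, and then propagate this into a formal solution of the evolution equations near $\wt X$ by constructing the full Taylor series of the metric at $\wt X$ in a fixed $(3+1)$-splitting. Throughout, all corrections will be supported near $\hat M_\fp\cup\upbeta_\circ^*\cU^\circ$, using finite speed of propagation on each fiber $\wt M_\eps$ together with the fact that $\Ric(\wt g_\infty)-\Lambda\wt g_\infty$ vanishes to infinite order at $\hat M\cup M_\circ$ and that $\wt g_\infty=g$ (which solves the constraints exactly) near $X\setminus\cU^\circ$.

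\textbf{Step 1: exact constraints at $\wt X$.} By Corollary~\ref{CorGLInitial}, the initial data $(\wt\gamma,\wt k)$ of $\wt g_\infty$ at $\wt X\setminus\wt K^\circ$ form a $(\hat\cE,\cE)$-smooth total family with $\hat M_\fp$-model the Kerr data $(\hat\gamma,\hat k)$, which solve the constraints exactly, and $X_\circ$-model the data $(\gamma,k)$ of $(M,g)$, which also solve the constraints exactly. Hence the constraint defect $P(\wt\gamma,\wt k;\Lambda)$ (see~\eqref{EqMConstraints}) lies in $\CIdot(\wt X\setminus\wt K^\circ)$, i.e.\ vanishes to infinite order at $\hat X\cup X_\circ$; moreover, since $\Ric(\wt g_\infty)-\Lambda\wt g_\infty\in\CIdot$, equations~\eqref{EqMConstraintsEin} show $P(\wt\gamma,\wt k;\Lambda)$ also vanishes to infinite order at $\eps=0$ in the fiber direction, and it is supported near $\upbeta_\circ^*\cU^\circ\cup\hat X_\fp$ because $\wt g_\infty=g$ near $X\setminus\cU^\circ$. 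Now I would adapt the contraction-mapping argument of \cite{HintzGlueID} (as flagged in~\S\ref{SssIPfC}): write $\wt\gamma_\eps=\wt\gamma|_{\wt X_\eps}$, $\wt k_\eps=\wt k|_{\wt X_\eps}$ on each slice $\wt X_\eps\subset M_\eps$; use the underdetermined-elliptic solvability of $D_{(\gamma,k)}P$ with controlled supports (Proposition~\ref{PropAcConstr}), which applies because $(X,\gamma,k)$ has no KIDs in $\cU^\circ$ and this persists on slight shrinkings (Remark~\ref{RmkMKIDs}); and solve $P(\wt\gamma_\eps+\dot\gamma_\eps,\wt k_\eps+\dot k_\eps;\Lambda)=0$ by Banach fixed point, with $(\dot\gamma_\eps,\dot k_\eps)$ depending smoothly/log-smoothly on $(\eps,z)$ and vanishing to infinite order as $\eps\to 0$ and at $\pa\cU^\circ$. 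The smallness of the defect guarantees the contraction works with $\cO(\eps^\infty)$ output. This yields corrected data $(\wt\gamma',\wt k')=(\wt\gamma+\dot\gamma,\wt k+\dot k)$ on $\wt X$ solving the constraints exactly, with $(\dot\gamma,\dot k)\in\CIdot(\wt X\setminus\wt K^\circ)$ supported near $\hat X_\fp\cup\upbeta_\circ^*\cU^\circ$.

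\textbf{Step 2: Taylor series of the metric at $\wt X$ and infinite-order vanishing of the error.} Fix a collar of $\wt X$ in $\wt M$ (fiberwise, a geodesic normal collar for $\wt g_\infty$, depending log-smoothly on $\eps$) and write $\wt g=-N^2\,\dd s^2+\wt\gamma(s)$ in the $(3+1)$-splitting with a fixed lapse $N$ and shift; I would keep lapse and shift equal to those induced by $\wt g_\infty$. The exact constraints from Step~1 determine $(\wt\gamma(0),\pa_s\wt\gamma(0))$; I then construct the higher normal derivatives $\pa_s^j\wt\gamma(0)$, $j\geq 2$, order by order by differentiating the evolution equations $\Ric(\wt g)-\Lambda\wt g=0$ in $s$ and solving the resulting algebraic relations for $\pa_s^{j}\wt\gamma(0)$ in terms of lower-order data — this is the standard propagation-of-constraints/formal-Taylor argument, valid verbatim on each slice $\wt M_\eps$, and the contracted Bianchi identity ensures the constraints, hence all of $\Ric(\wt g)-\Lambda\wt g$, vanish to infinite order at $\wt X$. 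Because the input data differ from those of $\wt g_\infty$ only by $\CIdot$ terms supported near $\hat X_\fp\cup\upbeta_\circ^*\cU^\circ$, Borel-summing produces $\wt g$ with $\wt h:=\wt g-\wt g_\infty\in\CIdot(\wt M\setminus\wt K^\circ;S^2\wt T^*\wt M)$, supported in any prescribed neighborhood of $\hat M_\fp\cup\upbeta_\circ^*\cU^\circ$; in particular $\wt h$ vanishes to infinite order at $\hat M\cup M_\circ$, so $\wt g$ is still a $(\hat\cE,\cE)$-smooth total family with the same $\hat M$- and $M_\circ$-models, and $\Ric(\wt g)-\Lambda\wt g$ vanishes to infinite order at $\hat M\cup M_\circ$ (inherited from $\wt g_\infty$, since $\wt h$ and all its derivatives are $\cO(\eps^\infty)$ there) as well as at $\wt X$ (by construction). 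Shifting from $\wt K^\delta$ back to $\wt K$ as in Remark~\ref{RmkMLarger} gives the statement.

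\textbf{Main obstacle.} The delicate point is Step~1: solving the nonlinear constraint system uniformly in $\eps$ with the support control of Proposition~\ref{PropAcConstr} while retaining log-smoothness of the corrections in $(\eps,z)$ and infinite-order vanishing as $\eps\to 0$. This requires that the coercive a priori estimates for $(D_{(\wt\gamma_\eps,\wt k_\eps)}P)^*$ hold uniformly for $\eps$ small — which follows since $(\wt\gamma_\eps,\wt k_\eps)\to(\gamma,k)$ (away from $\fp$) and $\to(\hat\gamma,\hat k)$ (near $\fp$, after rescaling), both KID-free on the relevant sets — and a careful bookkeeping of the fixed-point iteration to see that the $\CIdot$ nature of the defect is preserved; here I would lean directly on the corresponding argument in \cite{HintzGlueID}, which is set up in precisely this total-family framework, and on Corollary~\ref{CorGLInitial} to keep everything within the $(\hat\cE,\cE)$-smooth category. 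Step~2 is then routine.
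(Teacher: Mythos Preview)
Your two-step strategy --- correct the initial data on $\wt X$ to satisfy the constraints exactly, then build the Taylor series of the metric at $\wt X$ in a $(3+1)$-splitting with fixed lapse and shift --- is exactly the paper's approach (Propositions~\ref{PropIVPConst} and~\ref{PropIVPTay}), and your Step~2 is essentially correct and routine, matching Lemma~\ref{LemmaIVPn1Taylor} and the Taylor construction at the end of~\S\ref{SssIVPn1}.

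However, there is a genuine gap in Step~1. You assert that the rescaled data near $\fp$ converge to Kerr data $(\hat\gamma,\hat k)$ which are ``KID-free on the relevant sets,'' and that therefore coercive estimates for $(D_{(\wt\gamma_\eps,\wt k_\eps)}P)^*$ hold uniformly. This is false: Kerr has the Killing fields $\pa_{\hat t}$ and (when $\bha\neq 0$) the axial rotation, so $(\hat\gamma,\hat k)$ \emph{does} have KIDs on any annular domain in $\hat X_b$. Consequently the linearized constraint map has a nontrivial finite-dimensional cokernel near $\hat X$, and your proposed fixed-point argument via Proposition~\ref{PropAcConstr} does not close as stated. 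Relatedly, you never address the inner boundary $\hat r=\bhm$ of $\wt X_\eps$: Proposition~\ref{PropAcConstr} is formulated on a precompact open set $\cU^\circ\subset X$ without inner boundary, whereas here each slice $\wt X_\eps\setminus\wt K^\circ$ has one.

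The paper deals with both issues simultaneously, and this is exactly why the hypothesis places $\wt g_\infty$ on the \emph{larger} domain $\wt M\setminus(\wt K^\delta)^\circ$ while the conclusion is only on $\wt M\setminus\wt K^\circ$. In Proposition~\ref{PropIVPConst}, the extra annulus $\bhm-\delta<\hat r<\bhm$ is used as follows: first interpolate the data to \emph{exact} Kerr data near $\hat r=\bhm-\delta$ (so that the setup of \cite[Theorem~6.2]{HintzGlueID} applies); then solve the constraints in Taylor series at $\hat X$, allowing a finite-dimensional error $E_1(\wt c')$ supported in a compact set $K'\Subset\hat r^{-1}((\bhm-\delta,\bhm))$ spanning the Kerr cokernel; then apply the contraction-mapping/correction argument from \cite{HintzGlueID} to obtain a true solution up to an error again in $\ran E_1$. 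Finally, one \emph{restricts} to $\hat r\geq\bhm$, which discards the region supporting $E_1$ and yields an exact solution of the constraints on $\wt X\setminus\wt K^\circ$. Your reference to Remark~\ref{RmkMLarger} for ``shifting from $\wt K^\delta$ back to $\wt K$'' misidentifies the role of $\delta$: the extra room is not a convenience of Theorem~\ref{ThmFh} but is essential input for the constraint step here.
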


The infinite order vanishing of $\Ric(\wt g_\infty+\wt h)-\Lambda(\wt g_\infty+\wt h)$ at $\hat M\cup M_\circ$ is automatic for $\wt h\in\CIdot(\wt M\setminus\wt K^\circ;S^2\wt T^*\wt M)$. Furthermore, the desired conclusion depends only on $\wt h$ in a neighborhood of $\wt X\subset\wt M$, and indeed only on the jet of $\wt h$ at $\wt X$; therefore, the support property of $\wt h$ can be arranged via multiplication with a smooth cutoff which equals $1$ near $\wt X$. The task is thus to construct $\wt h$ in Taylor series at $\wt X$.

We shall work in a $(3+1)$-decomposition near $\wt X$ (see~\S\ref{SsIVPFol}), and construct $\wt g$ with the additional requirement (which amounts to fixing a gauge) that the lapse and shift of $\wt g$ be equal to those of $\wt g_\infty$. The first step of the construction is to arrange for the constraint equations to hold at $\wt X$ (Proposition~\ref{PropIVPConst}); then we solve for the full Taylor series of $\wt h$ at $\wt X$ order by order (Proposition~\ref{PropIVPTay}). This gives Theorem~\ref{ThmIVP}.

\subsection{Foliations by spacelike hypersurfaces}
\label{SsIVPFol}

To fix notation and conventions and to illustrate the Taylor series construction, we first consider in~\S\ref{SssIVPn1} $(n+1)$-decompositions on general spacetimes. In~\S\ref{SssIVPwt}, we set up the construction on $\wt X$ which is then carried out in~\S\ref{SsIVPPf}.

\subsubsection{\texorpdfstring{$(n+1)$-}{(n+1)-}decomposition on a general manifold}
\label{SssIVPn1}

We first recall the $(n+1)$-de\-com\-po\-si\-tion near a spacelike hypersurface $X$ inside a smooth $(n+1)$-dimensional Lorentzian manifold $(M,g)$, following \cite[\S\S{VI.2--VI.3}]{ChoquetBruhatGR} (which uses different sign conventions). Thus, we identify an open neighborhood of $X\subset M$ with a neighborhood
\[
  O \subset \R\times X
\]
of $\{0\}\times X$, where $X$ is identified with $\{0\}\times X$. This induces splittings $T_O M=\pi_1^*T\R\oplus\pi_2^* T X$ and $T_O^*M=\pi_1^*T^*\R\oplus\pi_2^*T^*X$ of the tangent and cotangent bundles, where $\pi_1\colon O\to\R$, $\pi_2\colon O\to X$ are the projection maps which we henceforth drop from the notation. We denote the coordinate in the first factor of $\R\times X$ by $t$, and assume that $\dd t$ is past timelike, so in particular all $t$-level sets $X_t$ are spacelike. The future unit normals to the $X_t$ give the vector field
\[
  \nu := -\frac{\dd t^\sharp}{(-g^{-1}(\dd t,\dd t))^{1/2}}.
\]
The \emph{lapse} $0<N\in\CI(O)$ and the \emph{shift} $\beta\in\CI(O;T X)$ are uniquely determined by
\begin{equation}
\label{EqIVPn1LapseShift}
  \nu = N^{-1}(\pa_t-\beta),\qquad
  N := \frac{1}{\dd t(\nu)} = \bigl(-g^{-1}(\dd t,\dd t)\bigr)^{-1/2},\quad
  \beta := N\nu-\pa_t.
\end{equation}
Defining the vector field
\begin{equation}
\label{EqIVPn1e0}
  e_0 := \pa_t - \beta\in\CI(O;T_O M),\qquad g(e_0,e_0)=-N^2,
\end{equation}
we then have an orthogonal splitting
\[
  T_{(t,x)} M=\R e_0\oplus T_x X,\qquad (t,x)\in O.
\]
which induces an orthogonal splitting
\[
  T_{(t,x)}^*M=\R\dd t\oplus B(T_x^*X),\qquad B\colon\xi\mapsto\xi+\xi(\beta)\,\dd t,
\]
and correspondingly
\begin{equation}
\label{EqIVPn1S2Split}
\begin{split}
  S^2 T_{(t,x)} M &= \R e_0^2 \oplus \bigl(2 e_0 \otimes_s T_x X\bigr) \oplus S^2 T_x X, \\
  S^2 T^*_{(t,x)} M &= \R\dd t^2 \oplus \bigl(2\dd t\otimes_s B(T^*_x X)\bigr) \oplus B(S^2 T^*_x X),
\end{split}
\end{equation}
where we write $B$ also for the induced map on $S^2 T_x^*X$. In view of $e_0\perp T_x X$ and $\dd t\perp B(T^*_x X)$, we can thus write the metric $g$ and dual metric $g^{-1}$ as\footnote{In local coordinates $x^1,\ldots,x^n$ on $X$, this means $g=-N^2\dd t^2+\bar g_{i j}(\dd x^i+\beta^i\,\dd t)(\dd x^j+\beta^j\,\dd t)$ and $g^{-1}=-N^{-2}e_0^2+\bar g^{i j}\pa_{x^i}\otimes_s\pa_{x^j}$.}
\[
  g = \bigl( -N^2, 0, B(\bar g) \bigr),\quad
  g^{-1} = \bigl( -N^{-2}, 0, \bar g^{-1} \bigr),
  \qquad \bar g\in\CI(O;S^2 T^*X),
\]
where $\bar g$ is positive definite, and $\bar g^{-1}\in\CI(O;S^2 T X)$ is its dual. The first and second fundamental form of the $t$-level sets inside $O$ are\footnote{The sign convention for $k$ here is consistent with the one used in~\S\ref{SsECE}, but it is the opposite of \cite{ChoquetBruhatGR}.}
\[
  \gamma=\bar g,\qquad
  k(V,W):=g(\nabla_V\nu,W)\quad (V,W\in T X).
\]
Let us write $\bar\nabla$ for the Levi-Civita connection of $\bar g$ on each level set of $t$ inside $O$, and $\ol{\Ric}$ and $\bar R$ for the Ricci and scalar curvature of $\bar g$ on the $t$-level sets, respectively.

\begin{lemma}[Levi-Civita connection]
\label{LemmaIVPn1Nabla}
  For $V,W\in\CI(O;T X)$, we have
  \begin{alignat*}{2}
    \nabla_{e_0}e_0 &= N^{-1}e_0(N)e_0 + N\bar\nabla N, &\qquad
    \nabla_{e_0}W &= N^{-1}W(N)e_0 + N k(W,\cdot) + [e_0,W], \\
    \nabla_V e_0 &= N^{-1}V(N)e_0 + N k(V,\cdot), &\qquad
    \nabla_V W &= N^{-1}k(V,W)e_0 + \bar\nabla_V W.
  \end{alignat*}
  Furthermore, $k=\frac12 N^{-1}\cL_{e_0}\bar g$.
\end{lemma}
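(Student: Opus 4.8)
\textbf{Proof plan for Lemma \ref{LemmaIVPn1Nabla}.} The plan is to compute all four Christoffel expressions directly from the Koszul formula, using the orthogonal splitting $T M = \R e_0 \oplus T X$, the defining relation $g(e_0,e_0) = -N^2$ from \eqref{EqIVPn1e0}, and the fact that $e_0 = \pa_t - \beta$ is a coordinate-type vector field so that $[e_0, W] = [\pa_t - \beta, W]$ is again tangent to the $t$-level sets whenever $W \in \CI(O; T X)$. The one genuinely geometric input needed is the formula for the second fundamental form in terms of the normal $\nu = N^{-1} e_0$: since $k(V,W) = g(\nabla_V \nu, W)$ for $V, W \in T X$, and $\nu = N^{-1} e_0$, we get $g(\nabla_V e_0, W) = N k(V,W)$, while $g(\nabla_V e_0, e_0) = \tfrac12 V(g(e_0,e_0)) = -N V(N)$, giving the decomposition of $\nabla_V e_0$ into its $e_0$-component (coefficient $-N V(N) / (-N^2) = N^{-1} V(N)$) and its $T X$-component $N k(V,\cdot)$. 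This yields the third identity immediately.

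For the other three identities I would proceed slot by slot. For $\nabla_{e_0} e_0$: pairing with $e_0$ gives $\tfrac12 e_0(g(e_0,e_0)) = -N e_0(N)$, hence the $e_0$-component has coefficient $N^{-1} e_0(N)$; pairing with $W \in T X$ and using the Koszul formula together with $g(e_0,W) = 0$, $[e_0,W] \in T X$, and $e_0(g(e_0,W)) = 0$, $W(g(e_0,e_0)) = W(-N^2) = -2 N W(N)$, reduces to $g(\nabla_{e_0} e_0, W) = N W(N) = N \bar g(\bar\nabla N, W)$, i.e.\ the $T X$-component is $N \bar\nabla N$. For $\nabla_{e_0} W$: its $T X$-component is determined by the Koszul formula applied to vectors tangent to the level sets together with the $t$-dependence, and a standard manipulation identifies it as $\bar\nabla_{e_0}^{\mathrm{flat}} W$ — more precisely one writes $\nabla_{e_0} W = \nabla_W e_0 + [e_0, W]$ and substitutes the already-derived expression for $\nabla_W e_0 = N^{-1} W(N) e_0 + N k(W,\cdot)$, which gives exactly the stated formula $N^{-1} W(N) e_0 + N k(W,\cdot) + [e_0, W]$. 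For $\nabla_V W$ with $V, W \in T X$: the $T X$-component is $\bar\nabla_V W$ by the Gauss formula (the induced connection on a hypersurface is the tangential projection of the ambient one), and the $e_0$-component is governed by the second fundamental form: $g(\nabla_V W, e_0) = -g(W, \nabla_V e_0) = -g(W, N^{-1} V(N) e_0 + N k(V,\cdot)) = -N k(V,W)$, so the coefficient of $e_0$ is $-N k(V,W)/(-N^2) = N^{-1} k(V,W)$.

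Finally, the identity $k = \tfrac12 N^{-1} \cL_{e_0} \bar g$ follows by computing $(\cL_{e_0}\bar g)(V,W)$ for $V, W \in \CI(O; T X)$ that commute with $e_0$ (which can be arranged pointwise): $(\cL_{e_0}\bar g)(V,W) = e_0(g(V,W)) = g(\nabla_{e_0} V, W) + g(V, \nabla_{e_0} W)$, and substituting $\nabla_{e_0} V = \nabla_V e_0 + [e_0,V] = N^{-1}V(N) e_0 + N k(V,\cdot) + [e_0,V]$ (and similarly for $W$), using $g(e_0, W) = 0$ and $[e_0,V] = [e_0,W] = 0$, collapses the right-hand side to $2 N k(V,W)$. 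Since both sides are tensorial in $V, W$ the identity holds in general.

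I do not anticipate any serious obstacle here: the lemma is a routine unwinding of the Koszul formula in an adapted frame, and the only points requiring a little care are bookkeeping the signs coming from $g(e_0,e_0) = -N^2 < 0$ (so that raising an $e_0$-index introduces a factor $-N^{-2}$) and correctly invoking the Gauss formula for the tangential part of $\nabla_V W$. The sign conventions for $k$ must be matched to those fixed in \eqref{EqIVPn1e0} and in \S\ref{SsECE}; once that is pinned down, every displayed expression follows by a two-line computation.
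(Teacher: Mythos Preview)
Your proposal is correct and follows essentially the same approach as the paper: both decompose each $\nabla$ expression into its $e_0$- and $T X$-components using the orthogonality $e_0 \perp T X$, the relation $g(e_0,e_0)=-N^2$, the torsion-free identity $\nabla_{e_0}W-\nabla_W e_0=[e_0,W]\in T X$, and the Gauss formula for $\nabla_V W$. The only cosmetic differences are the order (you derive $\nabla_V e_0$ first and deduce $\nabla_{e_0}W$ from it via the torsion-free identity, whereas the paper does the reverse) and your use of commuting extensions plus tensoriality for the final identity $k=\tfrac12 N^{-1}\cL_{e_0}\bar g$, while the paper keeps the bracket terms $[e_0,V],[e_0,W]$ explicit and verifies directly that they combine into the full Lie derivative formula for $\cL_{e_0}\bar g$.
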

\begin{proof}
  We have
  \begin{equation}
  \label{EqIVPn1NableHor}
    [e_0,V]=[\pa_t,V]-[\beta,V]\in\CI(O;T X),
  \end{equation}
  so $[e_0,V]\perp e_0$. Writing $\la\cdot,\cdot\ra=g(\cdot,\cdot)$ (which on vector fields tangent to $X$ is equal to $\bar g(\cdot,\cdot)$), we can then use~\eqref{EqIVPn1e0}, the orthogonality $e_0\perp V,W$, and the torsion-free property $\nabla_{e_0}V-\nabla_V e_0=[e_0,V]$ to get
  \begin{align*}
    \la\nabla_{e_0}e_0,e_0\ra &= \frac12 e_0\la e_0,e_0\ra = -N e_0(N), \\
    \la\nabla_{e_0}e_0,V\ra &= -\la e_0,\nabla_{e_0}V\ra = -\la e_0,\nabla_V e_0\ra = -\frac12 V\la e_0,e_0\ra = N V(N);
  \end{align*}
  since $V(N)=\la\bar\nabla N,V\ra$, this gives the stated expression for $\nabla_{e_0}e_0$. Similarly, we have $\la\nabla_{e_0}W,e_0\ra=\frac12 W\la e_0,e_0\ra=-N W(N)$ and
  \begin{align*}
    \la\nabla_{e_0}W,V\ra &= \la\nabla_W e_0,V\ra + \la[e_0,W],V\ra \\
      &= \la\nabla_W(N\nu),V\ra + \la[e_0,W],V\ra \\
      &= N k(W,V) + \la[e_0,W],V\ra,
  \end{align*}
  which gives the stated expression for $\nabla_{e_0}W$. The expression for $\nabla_V e_0$ is obtained from this using $\nabla_V e_0=\nabla_{e_0}V+[V,e_0]$. For $\nabla_V W$, we have $\nabla_V W=\bar\nabla_V W+k(V,W)\nu$.

  Finally, the stated expression for $k$ arises from~\eqref{EqIVPn1NableHor} and
  \begin{align*}
    2 N k(V,W) &= g(\nabla_V e_0,W) + g(V,\nabla_W e_0) \\
      &= g(\nabla_{e_0}V,W)+g(V,\nabla_{e_0}W) - g([e_0,V],W)-g(V,[e_0,W]) \\
      &= e_0(g(V,W)) - g([e_0,V],W)-g(V,[e_0,W]) \\
      &= e_0(\bar g(V,W)) - \bar g([e_0,V],W)-\bar g(V,[e_0,W]) \\
      &= (\cL_{e_0}\bar g)(V,W).\qedhere
  \end{align*}
\end{proof}

\begin{cor}[Curvature]
\label{CorIVPn1Curv}
  In local coordinates $x^1,\ldots,x^n$ on $X$, the Ricci tensor $\Ric=\Ric(g)$ of $g$ satisfies
  \begin{align*}
    \Ric_{0 0} &= -N\bar\Delta N - N\tr_{\bar g}(\cL_{e_0}k) + N^2|k|_{\bar g}^2, \\
    \Ric_{0 i} &= -N(\delta_{\bar g}k + \dd\tr_{\bar g}k)_i, \\
    \Ric_{m i} &= N^{-1}(\cL_{e_0}k)_{m i} + \ol{\Ric}_{m i} + (\tr_{\bar g}k)k_{m i} - 2 k_{i q}k_m{}^q - N^{-1}(\bar\nabla^2 N)_{m i}.
  \end{align*}
  Here $\bar\Delta N=-\tr_{\bar g}(\bar\nabla^2 N)$, and the index {\rm`$0$'} stands for $e_0$. The scalar curvature is
  \[
    R_g = \bar R_g + 2 N^{-1}\tr_{\bar g}(\cL_{e_0}k) + (\tr_{\bar g}k)^2 - 3|k|_{\bar g}^2 + 2 N^{-1}\bar\Delta N.
  \]
\end{cor}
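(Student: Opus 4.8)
\textbf{Proof plan for Corollary~\ref{CorIVPn1Curv}.}
The plan is to compute the Ricci tensor directly from the definition $\Ric(X,Y)=\tr(Z\mapsto\Riem(g)(Z,X)Y)$ using the orthonormal-ish frame consisting of $e_0$ (with $g(e_0,e_0)=-N^2$) and a local coordinate frame $\pa_{x^1},\dots,\pa_{x^n}$ on the $t$-level sets, together with the Levi-Civita connection formulas from Lemma~\ref{LemmaIVPn1Nabla}. Since the splitting $T M=\R e_0\oplus TX$ is orthogonal, the trace defining $\Ric$ decomposes as $\Ric(X,Y)=-N^{-2}\,g(\Riem(g)(e_0,X)Y,e_0)+\bar g^{i j}g(\Riem(g)(\pa_{x^i},X)Y,\pa_{x^j})$, so the whole computation reduces to evaluating the curvature endomorphism $\Riem(g)(Z,X)Y=\nabla_Z\nabla_X Y-\nabla_X\nabla_Z Y-\nabla_{[Z,X]}Y$ on the frame vectors and pairing with $e_0$ and with $\pa_{x^j}$.

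First I would record the ingredients needed repeatedly: the commutator $[e_0,V]\in\CI(O;TX)$ for $V$ tangent to $X$ (already noted in~\eqref{EqIVPn1NableHor}), the identity $k=\tfrac12 N^{-1}\cL_{e_0}\bar g$ from Lemma~\ref{LemmaIVPn1Nabla}, and the Gauss and Codazzi equations relating the ambient curvature to $\ol{\Ric}$, $k$, and derivatives of $k$ along $e_0$. Concretely: for $\Ric_{0 0}$ one computes $g(\Riem(g)(\pa_{x^i},e_0)e_0,\pa_{x^j})$, which after substituting $\nabla_{e_0}e_0=N^{-1}e_0(N)e_0+N\bar\nabla N$ and $\nabla_V e_0=N^{-1}V(N)e_0+N k(V,\cdot)$ produces the Hessian term $-N\bar\Delta N$, the $e_0$-derivative of $k$ term $-N\tr_{\bar g}(\cL_{e_0}k)$, and the quadratic term $N^2|k|_{\bar g}^2$; the $e_0 e_0$ component of the curvature contributes nothing to $\Ric_{0 0}$ by antisymmetry. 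For $\Ric_{0 i}$ one pairs $\Riem(g)(\pa_{x^m},e_0)\pa_{x^i}$ with $\pa_{x^q}$, contracts, and recognizes the Codazzi combination $\delta_{\bar g}k+\dd\tr_{\bar g}k$. For the spatial components $\Ric_{m i}$ one separately extracts the $-N^{-2}g(\Riem(g)(e_0,\pa_{x^m})\pa_{x^i},e_0)$ piece — which yields the Lie-derivative term $N^{-1}(\cL_{e_0}k)_{m i}$, the Hessian $-N^{-1}(\bar\nabla^2 N)_{m i}$, and one of the quadratic-in-$k$ pieces — and the $\bar g^{p q}g(\Riem(g)(\pa_{x^p},\pa_{x^m})\pa_{x^i},\pa_{x^q})$ piece, which by the Gauss equation gives $\ol{\Ric}_{m i}+(\tr_{\bar g}k)k_{m i}-2k_{i q}k_m{}^q$ after the $k$-quadratic terms are combined. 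Finally $R_g=g^{\mu\nu}\Ric_{\mu\nu}=-N^{-2}\Ric_{0 0}+\bar g^{m i}\Ric_{m i}$ is obtained by summing, with the signs and factors falling out as stated.

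The bookkeeping — in particular getting the signs right given the curvature convention $R_{\kappa\mu\nu\lambda}=\langle\pa_\kappa,\Riem(g)(\pa_\nu,\pa_\lambda)\pa_\mu\rangle$ fixed earlier, and correctly combining the several quadratic-in-$k$ contributions coming from $\nabla_V W=N^{-1}k(V,W)e_0+\bar\nabla_V W$ with those coming from $\nabla_{e_0}W$ — is the only real obstacle; there is no conceptual difficulty, and the result is the standard Gauss--Codazzi--Mainardi decomposition of the Ricci tensor adapted to the present lapse/shift conventions (cf.\ \cite[\S\S{VI.2--VI.3}]{ChoquetBruhatGR}, with sign adjustments for the opposite convention for $k$ and for $\Riem$). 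One cross-check I would perform: contracting $\Ric_{0 0}$ and $\Ric_{0 i}$ against $\nu$ and comparing with the constraint-map identities~\eqref{EqMConstraintsEin}, which forces the coefficients of $\bar R_g$, $|k|_{\bar g}^2$, $(\tr_{\bar g}k)^2$ in $R_g$ and the Codazzi term in $\Ric_{0 i}$ to be exactly as claimed.
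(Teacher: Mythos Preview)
Your proposal is correct and follows essentially the same approach as the paper: both compute the Riemann components $R_{m l i j}$, $R_{0 l i j}$, $R_{0 l 0 j}$ directly from the connection formulas of Lemma~\ref{LemmaIVPn1Nabla} and then contract using $\Ric_{0 0}=\bar g^{l j}R_{0 l 0 j}$, $\Ric_{0 i}=\bar g^{l j}R_{0 l i j}$, $\Ric_{m i}=\bar g^{l j}R_{m l i j}-N^{-2}R_{0 m 0 i}$. The paper carries out the computation of $\nabla_{\pa_i}\nabla_{\pa_j}\pa_l$ and $R_{0 l 0 j}$ explicitly rather than invoking Gauss--Codazzi--Mainardi by name, but the content is identical.
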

\begin{proof}
  Using abstract indices $i,j,l,m=1,\ldots,n$ for local coordinates on $X$, we have
  \begin{align*}
    \nabla_{\pa_i}\nabla_{\pa_j}\pa_l&=\nabla_{\pa_i}\bigl(\bar\nabla_{\pa_j}\pa_l + N^{-1}k_{j l} e_0\bigr) \\
      &= \bar\nabla_{\pa_i}\bar\nabla_{\pa_j}\pa_l + \bigl(N^{-1}k(\pa_i,\bar\nabla_{\pa_j}\pa_l) + \pa_i(N^{-1}k_{j l}) \bigr)e_0 \\
      &\quad\qquad + N^{-1}k_{j l}\bigl(N^{-1}(\pa_i N)e_0+N k(\pa_i,\cdot)\bigr) \\
      &= \bar\nabla_{\pa_i}\bar\nabla_{\pa_j}\pa_l + k_{j l}k(\pa_i,\cdot) \\
      &\quad\qquad + N^{-1}\bigl( (\bar\nabla_i k)(\pa_j,\pa_l) + k(\pa_i,\bar\nabla_{\pa_j}\pa_l)+k(\pa_j,\bar\nabla_{\pa_i}\pa_l) + k(\bar\nabla_{\pa_i}\pa_j,\pa_\ell)\bigr)e_0,
  \end{align*}
  and therefore
  \[
     [\nabla_{\pa_i},\nabla_{\pa_j}]\pa_l = [\bar\nabla_{\pa_i},\bar\nabla_{\pa_j}]\pa_l + N^{-1}\bigl((\bar\nabla_i k)_{j l}-(\bar\nabla_j k)_{i l}\bigr)e_0 + k_{j l}k(\pa_i,\cdot) - k_{i l}k(\pa_j,\cdot).
  \]
  This gives
  \begin{align*}
    R_{m l i j} &= \la[\nabla_{\pa_i},\nabla_{\pa_j}]\pa_l,\pa_m\ra = \bar R_{m l i j} + k_{i m}k_{j l} - k_{i l}k_{j m}, \\
    R_{0 l i j} &= \la[\nabla_{\pa_i},\nabla_{\pa_j}]\pa_l,e_0\ra = N\bigl((\bar\nabla_j k)_{i l}-(\bar\nabla_i k)_{j l}\bigr).
  \end{align*}
  Lastly,
  \begin{align*}
    R_{0 l 0 j} &= \big\la [\nabla_{e_0},\nabla_{\pa_j}]\pa_l - \nabla_{[e_0,\pa_j]}\pa_l,e_0\big\ra \\
      &=\big\la \nabla_{e_0}\bigl(\bar\nabla_{\pa_j}\pa_l + N^{-1}k_{j l}e_0\bigr) - \nabla_{\pa_j}\bigl(N^{-1}(\pa_l N)e_0+N k(\pa_l,\cdot)+[e_0,\pa_l]\bigr) - \nabla_{[e_0,\pa_j]}\pa_l, e_0 \big\ra \\
      &= -N(\bar\nabla_{\pa_j}\pa_l)(N)-N^2 e_0(N^{-1}k_{j l}) - k_{j l}e_0(N) + N^2\pa_j\bigl(N^{-1}(\pa_l N)\bigr) + (\pa_l N)(\pa_j N) \\
        &\quad\qquad + N^2 k(\pa_j,k(\pa_l,\cdot)) + N k(\pa_j,[e_0,\pa_l]) + N k([e_0,\pa_j],\pa_l) \\
      &= N(\bar\nabla^2 N)_{j l} - N(\cL_{e_0}k)_{j l} + N^2 k_{j q}k_l{}^q.
  \end{align*}
  This gives the stated expressions for $\Ric_{0 0}=\bar g^{l j}R_{0 l 0 j}$, $\Ric_{0 i}=\bar g^{l j}R_{0 l i j}$, and
  \[
    \Ric_{m i} = \bar g^{l j}R_{m l i j} - N^{-2} R_{m 0 i 0} = \bar g^{l j}R_{m l i j} - N^{-2} R_{0 m 0 i}.
  \]

  The scalar curvature can now be computed from $R_g=\bar g^{m i}\Ric_{m i}-N^{-2}\Ric_{0 0}$.
\end{proof}

For the Einstein tensor $\Ein(g)=\Ric(g)-\frac12 R_g g$ of $g$, we recover the constraints
\begin{alignat*}{2}
  (\Ein(g)+\Lambda g)(\nu,\nu) &= N^{-2}\Bigl(\Ric_{0 0} + \frac12 R_g N^2\Bigr) - \Lambda &&= \frac12\bigl(\bar R - |k|_{\bar g}^2 + (\tr_{\bar g}k)^2 - 2\Lambda \bigr), \\
  (\Ein(g)+\Lambda g)(\nu,\pa_j) &= N^{-1}\Ric_{0 j} &&= -(\delta_{\bar g}k + \dd\tr_{\bar g}k)_j,
\end{alignat*}
as in~\S\ref{SsECE}.

By Lemma~\ref{LemmaIVPn1Nabla}, $k$ is essentially the first $e_0$-derivative of $\bar g$, and therefore Corollary~\ref{CorIVPn1Curv} shows that $\Ric_{m i}-\Lambda\bar g_{m i}=0$ determines the second $e_0$-derivative of $\bar g$. We shall thus use the spatial part $(\Ric(g)-\Lambda g)_{m i}=0$ to construct $\bar g$ in Taylor series at $t=0$. This is sufficient to solve the full Einstein vacuum equations, provided the constraints are satisfied at $t=0$.

\begin{lemma}[Solution in Taylor series]
\label{LemmaIVPn1Taylor}
  Let $n\geq 3$. Suppose the constraint equations $(\Ein(g)+\Lambda g)_{0\mu}=0$, $\mu=0,\ldots,n$, hold for $g$ at $t=0$. (The index $0$ refers to $e_0$, and indices between $1$ and $n$ refer to local coordinates on $X$.) Suppose moreover that $(\Ric(g)-\frac{2\Lambda}{n-1}g)_{i j}$ vanishes to infinite order at $t=0$ for all spatial indices $1\leq i,j\leq n$. Then $\Ein(g)+\Lambda g$ (and thus also $\Ric(g)-\frac{2\Lambda}{n-1}g$) vanishes to infinite order at $t=0$.
\end{lemma}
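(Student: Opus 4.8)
The plan is to exploit the contracted second Bianchi identity, which holds for any metric, to propagate the vanishing of the Einstein-plus-cosmological-constant tensor off the initial slice. Write $\cG := \Ein(g) + \Lambda g = \Ric(g) - \frac12 R_g g + \Lambda g$; note $\tr_g\cG = -\frac{n-1}{2}R_g + (n+1)\Lambda$, so $\cG = 0$ is equivalent to $\Ric(g) - \frac{2\Lambda}{n-1}g = 0$ once $n\geq 3$ (one solves for $R_g$ in terms of $\Lambda$), and similarly the componentwise statement in the lemma is the natural one to track. First I would record the decomposition of $\cG$ relative to the $(n+1)$-splitting~\eqref{EqIVPn1S2Split}: the components $\cG_{0 0}$ and $\cG_{0 i}$ (indices referring to $e_0$ and to local coordinates on $X$) are exactly the constraint quantities $\half P_1(\gamma,k;\Lambda)$ and $-P_2(\gamma,k)$ up to normalization, as computed right after Corollary~\usref{CorIVPn1Curv}, while $\cG_{i j}$ is a combination of the spatial Ricci components $\Ric_{i j}$ (which by Corollary~\usref{CorIVPn1Curv} involves $\cL_{e_0}k$, i.e.\ two $e_0$-derivatives of $\bar g$) and the scalar curvature term. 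By hypothesis $\cG_{i j}$ — equivalently $(\Ric(g)-\frac{2\Lambda}{n-1}g)_{i j}$, after accounting for the trace — vanishes to infinite order at $t = 0$, and $\cG_{0\mu} = 0$ at $t = 0$. The goal is to show all of $\cG$ vanishes to infinite order at $t = 0$; this only requires propagating the infinite-order vanishing of $\cG_{0\mu}$.

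Second, I would use the contracted second Bianchi identity $\delta_g\sfG_g\cG = 0$, i.e.\ $\nabla^\mu\cG_{\mu\nu} = 0$ for all $\nu$ (this holds because $\delta_g\sfG_g\Ric(g)=0$ and $\delta_g\sfG_g g = 0$). Expand the divergence in the $(n+1)$-frame $\{e_0, \pa_i\}$ using the Levi-Civita connection formulas of Lemma~\usref{LemmaIVPn1Nabla}. Taking $\nu = 0$ (the $e_0$-direction) yields a transport equation of the schematic form
\[
  N^{-2}\,\nabla_{e_0}\cG_{0 0} = (\text{lapse/shift/$k$ terms})\cdot\cG_{0 0} + (\text{terms})\cdot\cG_{0 i} + \bar g^{i j}\nabla_{\pa_i}\cG_{0 j} + (\text{terms})\cdot\cG_{i j},
\]
and taking $\nu = j$ (a spatial direction) yields
\[
  N^{-2}\,\nabla_{e_0}\cG_{0 j} = (\text{terms})\cdot\cG_{0 0} + (\text{lapse/shift/$k$ terms})\cdot\cG_{0 i} + \bar g^{i l}\nabla_{\pa_i}\cG_{l j} + (\text{terms})\cdot\cG_{i j}.
\]
The crucial structural point is that in each case the $e_0$-derivative of the constraint quantity $(\cG_{0 0},\cG_{0 j})$ is expressed through the constraint quantities themselves, plus contributions involving only the spatial components $\cG_{i j}$ and their tangential derivatives — and these spatial contributions vanish to infinite order at $t = 0$ by hypothesis. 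This is a linear first-order ODE system along the integral curves of $e_0$ for the pair $(\cG_{0 0}, \cG_{0 j})$, with a forcing term vanishing to infinite order at $t = 0$ and initial data $(\cG_{0 0},\cG_{0 j})|_{t=0} = 0$.

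Third, I would conclude by an induction on the order of vanishing. Since $(\cG_{0 0},\cG_{0 j})$ vanishes at $t = 0$ and its $e_0$-derivative is, modulo an $\cO(t^\infty)$ forcing, a smooth linear function of $(\cG_{0 0},\cG_{0 j})$ itself, differentiating the transport system $k$ times in $e_0$ and evaluating at $t = 0$ shows inductively that $\pa_t^k(\cG_{0 0},\cG_{0 j})|_{t=0} = 0$ for all $k$: at each stage, every term on the right-hand side is either a coefficient times a lower-order $e_0$-jet of $(\cG_{0 0},\cG_{0 j})$ (which vanishes by the inductive hypothesis), a tangential derivative of such (also vanishing), or an $e_0$-jet of a spatial component $\cG_{i j}$ (vanishing by the infinite-order hypothesis). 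Hence $\cG_{0 0}$ and $\cG_{0 j}$ vanish to infinite order at $t = 0$, and together with the hypothesis on $\cG_{i j}$ this gives that $\cG = \Ein(g) + \Lambda g$ vanishes to infinite order at $t = 0$; the equivalent statement for $\Ric(g) - \frac{2\Lambda}{n-1}g$ follows by taking the trace-reversal, which is a smooth bundle isomorphism.

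The main obstacle I anticipate is purely bookkeeping: verifying that when one expands $\nabla^\mu\cG_{\mu 0}$ and $\nabla^\mu\cG_{\mu j}$ in the $(n+1)$-frame, the only place $e_0$-derivatives of the \emph{spatial} components $\cG_{i j}$ enter is through terms that are manifestly $\cO(t^\infty)$ — equivalently, that no term forces one to control a \emph{transverse} derivative of $\cG_{i j}$ that is not already assumed to vanish. This is where the precise connection coefficients from Lemma~\usref{LemmaIVPn1Nabla} matter: $\nabla_{e_0}\pa_i$ has an $e_0$-component proportional to $k$ but the $\cG_{i j}$ that multiplies it is already $\cO(t^\infty)$, and the $\bar g^{i l}\nabla_{\pa_i}\cG_{l j}$ term involves only \emph{tangential} derivatives of $\cG$, which preserve infinite-order vanishing at $t=0$. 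Once this is checked, the ODE/induction argument is routine. A mild secondary point is ensuring $n \geq 3$ is genuinely used only to pass between the $\Ein$-normalization and the $\Ric$-normalization (so that $\tr_g g = n+1 \neq 2$), which it is.
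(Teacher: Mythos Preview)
Your proposal has a genuine gap in the first step. You assert that ``by hypothesis $\cG_{i j}$ --- equivalently $(\Ric(g)-\frac{2\Lambda}{n-1}g)_{i j}$, after accounting for the trace --- vanishes to infinite order at $t = 0$,'' but this is not what the hypothesis says, and the two quantities are \emph{not} equivalent componentwise. The relation $\cG = \sfG_g\bigl(\Ric(g)-\frac{2\Lambda}{n-1}g\bigr)$ involves the full spacetime trace, so
\[
  \cG_{i j} = \Bigl(\Ric(g)-\tfrac{2\Lambda}{n-1}g\Bigr)_{i j} - \tfrac12\bar g_{i j}\Bigl(-N^{-2}\bigl(\Ric(g)-\tfrac{2\Lambda}{n-1}g\bigr)_{0 0} + \bar g^{k l}\bigl(\Ric(g)-\tfrac{2\Lambda}{n-1}g\bigr)_{k l}\Bigr).
\]
The hypothesis controls only the spatial block of $\Ric(g)-\frac{2\Lambda}{n-1}g$; the $0 0$-component, and hence $\cG_{i j}$, is \emph{not} known to vanish to infinite order a priori. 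Consequently, in your Bianchi transport system the ``forcing'' terms $\bar g^{i l}\nabla_{\pa_i}\cG_{l j}$ and $(\text{terms})\cdot\cG_{i j}$ are not $\cO(t^\infty)$, and the ODE-for-constraints argument does not close as stated.

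The paper's proof deals with exactly this point: using the hypothesis on the spatial Ricci block and the definition of scalar curvature one first derives the algebraic relation $\cG_{i j} \equiv N^{-2}\cG_{0 0}\,\bar g_{i j}$ (equality in Taylor series at $t=0$), see~\eqref{EqIVP13Taylor}. With this in hand the spatial block $\cG_{i j}$ is expressed linearly in terms of $\cG_{0 0}$, so substituting into the Bianchi identity yields a closed first-order system for $(\cG_{0 0},\cG_{0 j})$ with the right hand side built from $(\cG_{0 0},\cG_{0 j})$ and their \emph{tangential} derivatives; the induction on the order of the $e_0$-jet then goes through exactly as you describe. So your overall strategy is correct, but you must insert this algebraic step (or equivalently run the induction for the pair $(\cG_{0\mu},\cG_{i j})$ jointly, updating $\cG_{i j}$ via~\eqref{EqIVP13Taylor} at each stage) before invoking Bianchi.
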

\begin{proof}
  This is a Taylor series version of \cite[Chapter VI, Theorem~4.1]{ChoquetBruhatGR}. We consider the tensor $E:=\Ein(g)+\Lambda g$; so $E_{0\mu}=0$ at $t=0$ by assumption. Let us write `$\equiv$' for equality in Taylor series at $t=0$. Then
  \[
    R_g \equiv \bar g^{m i}\cdot\frac{2\Lambda}{n-1}\bar g_{m i} - N^{-2}\Ric_{0 0} = \frac{2\Lambda n}{n-1} - N^{-2}\Ric_{0 0}
  \]
  and
  \begin{align*}
    E_{0 0} &= \Ric_{0 0} + \Lambda g_{0 0} - \frac12 R_g g_{0 0} \\
      &\equiv \Ric_{0 0} - N^2\Lambda + \frac{N^2}{2}\Bigl(\frac{2\Lambda n}{n-1}-N^{-2}\Ric_{0 0}\Bigr) \\
      &= \frac12\Ric_{0 0} + \frac{N^2\Lambda}{n-1}
  \end{align*}
  imply $N^{-2}\Ric_{0 0}=2 N^{-2}E_{0 0}-\frac{2\Lambda}{n-1}$, so $R_g\equiv \frac{2(n+1)}{n-1}\Lambda-2 N^{-2} E_{0 0}$, and therefore
  \begin{equation}
  \label{EqIVP13Taylor}
    E_{i j} = (\Ric(g)+\Lambda g)_{i j} - \frac12 R_g\bar g_{i j} \equiv \Bigl(\frac{2\Lambda}{n-1}+\Lambda-\frac{\Lambda(n+1)}{n-1}\Bigr)\bar g_{i j} + N^{-2}E_{0 0}\bar g_{i j} = N^{-2}E_{0 0}\bar g_{i j}.
  \end{equation}
  Therefore, we also have $E_{i j}=0$ at $t=0$.

  Next, the second Bianchi identity $\delta_g E=0$ gives
  \[
    0 = (\delta_g E)(e_0) = N^{-2}(\nabla_{e_0}E)(e_0,e_0) - \bar g^{j k}(\nabla_{\pa_j}E)(e_0,\pa_k),
  \]
  so $e_0(E_{0 0})$ can be written in terms of $E_{\mu\nu}$ and their spatial derivatives at $t=0$, and therefore $e_0(E_{0 0})=0$ at $t=0$. Similarly,
  \[
    0 = (\delta_g E)(\pa_i) = N^{-2}(\nabla_{e_0}E)(\pa_i,e_0) - \bar g^{j k}(\nabla_{\pa_j}E)(\pa_i,\pa_k)
  \]
  allows one to express $e_0(E_{0 i})$ in terms of $E_{\mu\nu}$ and their spatial derivatives at $t=0$, so $e_0(E_{0 i})=0$ at $t=0$. Differentiating~\eqref{EqIVP13Taylor} along $e_0$, we thus obtain $e_0(E_{i j})=0$ at $t=0$.

  Differentiating the second Bianchi identity along $e_0$ at $t=0$ then implies $e_0^2(E_{0\mu})=0$ at $t=0$, and then the second derivative of~\eqref{EqIVP13Taylor} along $e_0$ gives $e_0^2(E_{i j})=0$. Continuing in this fashion, we deduce that all derivatives of $E_{\mu\nu}$ at $t=0$ vanish, as claimed.
\end{proof}

We end this section by explaining the procedure for constructing a formal solution of the Einstein vacuum equations at $\Sigma=t^{-1}(0)\cong X$. Suppose that $g$ satisfies the constraint equations at $\Sigma$. We shall leave the initial data $\gamma=\bar g$ and $k=\frac12 N^{-1}\cL_{e_0}\bar g$ unchanged at $\Sigma$. Let $\bar h_2\in\CI(O;S^2 T^*X)$ and consider
\[
  g_2 := g + t^2 B\bar h_2;
\]
then in the splitting~\eqref{EqIVPn1S2Split} we have $g_2=(-N^2,0,B(\bar g+t^2\bar h_2))$, and thus the lapse and shift of $g_2$ are equal to those of $g$. By Lemma~\ref{LemmaIVPn1Nabla}, the second fundamental form $k_2$ of $X_t$ with respect to $g_2$ is
\[
  k_2=\frac12 N^{-1}\cL_{e_0}\bigl(\bar g+t^2\bar h_2\bigr)=k + t N^{-1}\bar h_2 + \frac{t^2}{2}N^{-1}\cL_{e_0}\bar h_2,
\]
and therefore $\cL_{e_0}k_2=\cL_{e_0}k+N^{-1}\bar h_2$ at $\Sigma$. By Corollary~\ref{CorIVPn1Curv}, the equation
\[
  \Bigl(\Ric(g_2)_{m i}-\frac{2\Lambda}{n-1}(g_2)_{m i}\Bigr)\Big|_\Sigma=0
\]
gives an algebraic expression for $\bar h_2|_\Sigma$ in terms of $k|_\Sigma$ and $\bar g|_\Sigma$. With $\bar h_2|_\Sigma$ thus fixed and $\bar h_2$ being any smooth extension, one can then determine $\bar h_3|_\Sigma$ so that for $g_3=g_2+t^3 B\bar h_3$, the spatial coefficients $\Ric(g_3)_{m i}-\frac{2\Lambda}{n-1}(g_3)_{m i}$ vanish not just to first, but to second order at $t=0$: this uses the fact, again from Corollary~\ref{CorIVPn1Curv}, that the equation $(e_0(\Ric(g_3)_{m i}-\frac{2\Lambda}{n-1}(g_3)_{m i}))|_\Sigma=0$ produces an algebraic expression for $\bar h_3$. Proceeding in this manner, one can construct a full Taylor expansion $g'\sim g+\sum_{j\geq 2}t^j B\bar h_j$ which satisfies $\Ric(g')_{m i}-\frac{2\Lambda}{n-1}g'_{m i}\equiv 0$ (equality in Taylor series at $t=0$), and thus $\Ein(g')+\Lambda g'\equiv 0$ at $\Sigma$ by Lemma~\ref{LemmaIVPn1Taylor}.

\subsubsection{\texorpdfstring{$(3+1)$-}{(3+1)-}decomposition near \texorpdfstring{$\wt X$}{the total gluing space for initial data}}
\label{SssIVPwt}

Returning to the setup of Theorem~\ref{ThmIVP}, fix a smooth function $t\in\CI(M)$ so that $\dd t$ is past timelike on $X$, and fix a smooth vector field $V\in\CI(M;T M)$ near $X$ with $\dd t(V)=1$ which at $\cC\subset M$ is tangent to $\cC$. The flow of $V$ defines a diffeomorphism $\Psi$ from an open neighborhood $O\subset\R\times X$ of $\{0\}\times X$ to an open neighborhood $\Psi(O)\subset M$ of $X\subset M$; it has the property that $\Psi(0,x)=x$ for $x\in X$, and $\Psi(t,\fp)\in\cC$ for all $t$ with $(t,\fp)\in O$ (where $\{\fp\}=X\cap\cC$). Write
\[
  X_t := \Psi\bigl(O\cap(\{t\}\times X)\bigr)
\]
for the (images of the) level sets of $t$; upon shrinking $O$ if necessary, we may assume that they are spacelike hypersurfaces in $(M,g)$. Via the identification $O\cong\Psi(O)$ (which identifies $X_t$ with open submanifolds of $X$ containing $\fp$), the map $\Psi$ induces an embedding $T_{X_t}X=T X_t\hra T_{X_t}M$.

Consider now the map $\wt\Psi'\colon[0,1)\times O\to\wt M'$, $(\eps,t,x)\mapsto(\eps,\Psi(t,x))$, which is a diffeomorphism onto $[0,1)\times\Psi(O)$. Since $\wt\Psi'(\{0\}\times(O\cap(\R\times\{\fp\})))\subset\cC$, the map $\wt\Psi'$ lifts to a diffeomorphism
\[
  \wt\Psi \colon \wt O := \bigl[ [0,1)\times O; \{0\}\times(O\cap(\R\times\{\fp\})) \bigr] \to \wt\beta^*\bigl([0,1)\times\Psi(O)\bigr) \subset \wt M = [\wt M';\cC].
\]
This maps (a neighborhood of) the lift of $[0,1)\times(O\cap(\{t\}\times X))$ to (a neighborhood of) the lift $\wt X_t$ of $[0,1)\times X_t$ to $\wt M$. Moreover, we have $\wt X_0=\wt X$. We henceforth identify $\wt O\cong\wt\Psi(\wt O)$; thus, we write points in the open neighborhood $\wt O\subset\wt M$ of $\wt X$ as $(t,\wt x)\in\R\times\wt X$. We have subbundles $\wt T_{\wt X_t}\wt X\cong\wt T\wt X_t\hra\wt T_{\wt X_t}\wt M$ of corank $1$, and we write
\[
  \wt T_{\wt O}\wt X:=\bigsqcup_t\,\{t\}\times\wt T\wt X_t\subset\wt T_{\wt O}\wt M.
\]

We now define lapse $\wt N$ and shift $\wt\beta$ for the section $\wt g_\infty$ of $S^2\wt T^*\wt M$ over $\wt M\setminus(\wt K^\delta)^\circ$, which has Lorentzian signature in a neighborhood of $\hat M\cup\wt M$ over which we exclusively work (even though we do not make this explicit in the notation), relative to the foliation of $\wt O$ by $\{t\}\times\wt X_t$.\footnote{That is, the restrictions of $\wt N$ and $\wt\beta$ to an $\eps$-level set $\wt M_\eps$, $\eps>0$, of $\wt M$ are the lapse and shift of $\wt g_\infty|_{\wt M_\eps}$.} Note that (upon shrinking $O$ and thus $\wt O$ further, if necessary) $\dd t$ is past timelike by Corollary~\ref{CorGLInitial} (using Lemma~\ref{LemmaGKCoords} and the spacelike nature of $\dd\hat t$ for the Kerr metrics at $\hat M$). Thus, $-\wt g_\infty^{-1}(\dd t,\dd t)$ has a strictly positive lower bound near $(\hat X\setminus(\wt K^\delta)^\circ)\cup X_\circ$. The regularity of $\wt g_\infty$ in Theorem~\ref{ThmIVP} thus implies, via~\eqref{EqIVPn1LapseShift},\footnote{We have the more precise memberships in $\wt\upbeta^*\CI(O)+\cA_\phg^{\N_0\cup\hat\cE,\cE}(\wt O\setminus(\wt K^\delta)^\circ)$ here and below.}
\[
  \wt N,\ \wt N^{-1} \in \cA_\phg^{\hat\cG,\cG}(\wt O\setminus(\wt K^\delta)^\circ),\qquad
  \wt\beta \in \cA_\phg^{\hat\cG,\cG}(\wt O\setminus(\wt K^\delta)^\circ;\wt T_{\wt O}\wt X),\qquad
  \hat\cG:=\N_0\cup\hat\cE,\quad \cG:=\N_0\cup\cE.
\]
We moreover define the vector field
\[
  \wt e_0:=\pa_t-\wt\beta \in \cA_\phg^{\hat\cG,\cG}(\wt O\setminus(\wt K^\delta)^\circ;\wt T_{\wt O}\wt M),
\]
with $\wt\nu:=N^{-1}\wt e_0$ being the future pointing unit normal to all $\wt X_t$. The shift $\wt\beta$ gives rise to a bundle map $\wt B\in\cA_\phg^{\hat\cG,\cG}(\wt O\setminus(\wt K^\delta)^\circ;\Hom(\wt T^*_{\wt O}\wt X,\wt T^*_{\wt O}\wt M))$,
\begin{equation}
\label{EqIVPwtB}
  \wt B \colon \wt T^*_{\wt x}\wt X_t \to \wt T^*_{(t,\wt x)}\wt M,\qquad \xi \mapsto \xi + \wt\beta(\xi)\,\dd t,
\end{equation}
which induces a map on the symmetric second tensor power which we also denote $\wt B$. We can then write
\[
  \wt g_\infty = -N^2\,\dd t^2 + \wt B\bigl(\wt{\bar g}_\infty\bigr),\qquad \wt{\bar g}_\infty \in \cA_\phg^{\hat\cG,\cG}(\wt O\setminus(\wt K^\delta)^\circ;S^2\wt T^*\wt X).
\]
Since $\wt e_0$, as a vector field on $\wt O\subset\wt M$, is a vertical (i.e.\ tangent to $\eps$-level sets) vector field of class
\begin{equation}
\label{EqIVPwte0}
  \wt e_0 \in \cA_\phg^{\hat\cG-1,\cG}\Vb(\wt O\setminus(\wt K^\delta)^\circ),
\end{equation}
the second fundamental forms $\wt k_\infty=\frac12\wt N^{-1}\cL_{\wt e_0}\wt{\bar g}_\infty$ (see Lemma~\ref{LemmaIVPn1Nabla}) of the leaves $\wt X_t$ satisfy
\[
  \wt k_\infty \in \cA_\phg^{\hat\cG-1,\cG}(\wt O\setminus(\wt K^\delta)^\circ;S^2\wt T^*_{\wt O}\wt X).
\]

The initial data of $\wt X=\wt X_0$ are
\[
  \wt\gamma_{\infty,0} = \wt{\bar g}_\infty|_{\wt X}\in\cA_\phg^{\hat\cG,\cG}(\wt X\setminus(\wt K^\delta)^\circ;S^2\wt T^*\wt X),\qquad
  \wt k_{\infty,0} = \wt k_\infty|_{\wt X}\in\cA_\phg^{\hat\cG-1,\cG}(\wt X\setminus(\wt K^\delta)^\circ;S^2\wt T^*\wt X).
\]
Matching Corollary~\ref{CorGLInitial}, they satisfy $(\wt\gamma_{\infty,0},\wt k_{\infty,0})|_{X_\circ}=\upbeta_\circ^*(\gamma,k)$ where $(\gamma,k)$ are the initial data of $X$ inside $(M,g)$, and
\[
  \sfs\bigl(\wt\gamma_{\infty,0},\eps\wt k_{\infty,0}\bigr)\big|_{\hat X}=(\hat\gamma,\hat k)
\]
are the initial data of $\hat X_b$ inside the Kerr spacetime $(\hat M_b,\hat g_b)$; they are in fact a $(\hat\cE,\cE)$-smooth total family in the terminology of \cite[Definition~4.18]{HintzGlueID}. Furthermore, since $\Ein(\wt g_\infty)+\Lambda\wt g_\infty\in\CIdot(\wt M\setminus(\wt K^\delta)^\circ;S^2\wt T^*\wt M)$, the constraint equations are satisfied to infinite order at $\eps=0$ as well, in particular at $\wt X$; that is,
\begin{subequations}
\begin{equation}
\label{EqIVPwtConstraints}
  P(\wt\gamma_{\infty,0},\wt k_{\infty,0};\Lambda) \in \CIdot(\wt X\setminus(\wt K^\delta)^\circ;\ul\R\oplus\wt T^*\wt X)
\end{equation}
in the notation of~\eqref{EqMConstraints}. Since $\wt g_\infty=g$ near $X\setminus\cU^\circ$, the intersection of the support of~\eqref{EqIVPwtConstraints} with $\upbeta_\circ^*X$ is in fact contained in $\upbeta_\circ^*\cU^\circ$ and in particular compact; thus, there exists $\eps_0>0$ so that
\begin{equation}
\label{EqIVPwtSupp}
  \{\eps\leq\eps_0\} \cap \supp P(\wt\gamma_{\infty,0},\wt k_{\infty,0};\Lambda) \Subset \wt\upbeta^*([0,\eps_0]\times\cU^\circ).
\end{equation}
\end{subequations}

\subsection{Proof of Theorem~\ref{ThmIVP}}
\label{SsIVPPf}

We continue using the notation of~\S\ref{SssIVPwt}. We begin the proof of Theorem~\ref{ThmIVP} by removing the error term~\eqref{EqIVPwtConstraints}.

\begin{prop}[Solving the constraint equations]
\label{PropIVPConst}
  There exists a metric perturbation $\wt h_1\in\CIdot(\wt M\setminus\wt K^\circ;S^2\wt T^*\wt M)$ so that the initial data of
  \[
    \wt g_1 := \wt g_\infty + \wt h_1
  \]
  satisfy the constraint equations at $(\wt X\setminus\wt K^\circ)\cap\{\eps<\eps_0\}$ for some small $\eps_0>0$, and so that lapse and shift of $\wt g_1$ are equal to those of $\wt g_\infty$ in the $(3+1)$-decomposition fixed in~\textnormal{\S}\usref{SssIVPwt}.
\end{prop}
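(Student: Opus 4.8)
The plan is to proceed in two steps: first correct the initial data $(\wt\gamma_{\infty,0},\wt k_{\infty,0})$ of $\wt g_\infty$ at $\wt X$ by a pair of $\cO(\eps^\infty)$-vanishing tensors so that the constraint equations hold \emph{exactly} for $\eps<\eps_0$, and then realize the corrected data as the initial data of a metric perturbation $\wt h_1\in\CIdot(\wt M\setminus\wt K^\circ;S^2\wt T^*\wt M)$ that leaves lapse and shift unchanged. For the second step I would work in the $(3+1)$-decomposition of \S\ref{SsIVPFol}: given corrections $(\dot{\wt\gamma},\dot{\wt k})\in\CIdot(\wt X\setminus\wt K^\circ;S^2\wt T^*\wt X)$, set $\wt h_1=\chi\,\wt B(\wt{\bar h}_1)$ near $\wt X$, where $\chi$ is a cutoff equal to $1$ near $\wt X$ and $\wt{\bar h}_1\in\CIdot$ is chosen so that $\wt{\bar h}_1|_{\wt X}=\dot{\wt\gamma}$ and $(\cL_{\wt e_0}\wt{\bar h}_1)|_{\wt X}=2\wt N\dot{\wt k}-(\cL_{\wt e_0}\wt{\bar h}_1^{(0)})|_{\wt X}$ for an arbitrary extension $\wt{\bar h}_1^{(0)}$ of $\dot{\wt\gamma}$ (e.g. $\wt{\bar h}_1=\wt{\bar h}_1^{(0)}+t\,\wt{\bar h}_1^{(1)}$ with $\wt{\bar h}_1^{(1)}|_{\wt X}$ fixed as above). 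Since $\wt B$ maps $S^2\wt T^*\wt X$ into the third summand of the splitting~\eqref{EqIVPn1S2Split}, adding $\wt h_1$ to $\wt g_\infty=-\wt N^2\,\dd t^2+\wt B(\wt{\bar g}_\infty)$ changes neither the $\dd t^2$-component nor the mixed component, hence leaves $\wt N$ and $\wt\beta$ unchanged; and as $\chi=1$ near $\wt X$, the first and second fundamental form of $\wt X$ for $\wt g_\infty+\wt h_1$ are $(\wt\gamma_{\infty,0}+\dot{\wt\gamma},\wt k_{\infty,0}+\dot{\wt k})$ by Lemma~\ref{LemmaIVPn1Nabla}. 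Membership of $\wt h_1$ in $\CIdot$ follows from that of $(\dot{\wt\gamma},\dot{\wt k})$ together with the log-smoothness of $\wt N$, $\wt\beta$, $\wt e_0$ and $\wt B$.

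The substantive step is the construction of $(\dot{\wt\gamma},\dot{\wt k})$, for which I would adapt the contraction-mapping argument of \cite{HintzGlueID}. By~\eqref{EqIVPwtConstraints} the constraint defect $P(\wt\gamma_{\infty,0},\wt k_{\infty,0};\Lambda)$ vanishes to infinite order at $\hat X\cup X_\circ$, and by~\eqref{EqIVPwtSupp} its support, intersected with $\{\eps\leq\eps_0\}$, lies in $\wt\upbeta^*([0,\eps_0]\times\cU^\circ)$; moreover, since $(X,\gamma,k)$ has no KIDs in $\cU^\circ$---indeed on every $\delta$-shrinking $\cU^\circ_\delta$, cf.\ Remark~\ref{RmkMKIDs}---and the coercive estimates for $(D_{(\gamma,k)}P)^*$ underlying Proposition~\ref{PropAcConstr} are stable under small perturbations of the data, the restriction of $(\wt\gamma_{\infty,0},\wt k_{\infty,0})$ to each slice $\wt X_\eps$, $\eps<\eps_0$ (after shrinking $\eps_0$), has no KIDs in $\upbeta_\circ^*\cU^\circ_\delta$, with $\eps$-uniform constants. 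Following Corvino \cite{CorvinoScalar} and Chru\'sciel--Delay \cite{ChruscielDelayMapping}, I would seek the correction in the form $(\dot{\wt\gamma},\dot{\wt k})=\phi\,(D_{(\wt\gamma_{\infty,0},\wt k_{\infty,0})}P)^*\xi$, with $\phi$ a weight vanishing exponentially at $\pa\cU^\circ_\delta$, thereby reducing the constraint equations to the fixed-point problem $\xi=-L^{-1}\bigl(P(\wt\gamma_{\infty,0},\wt k_{\infty,0};\Lambda)+Q(\xi)\bigr)$, where $L=DP\circ\phi\,(DP)^*$ is elliptic and invertible with $\eps$-uniform bounds near $\eps=0$ and $Q$ is the quadratic nonlinear remainder; this map is a contraction on a small ball precisely because the initial defect is $\cO(\eps^\infty)$. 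The fixed point $\xi$ is supported in $\cU^\circ_\delta$ and---because the entire iteration depends smoothly on $\eps$ down to $\eps=0$ with all derivatives controlled uniformly---inherits the $\CIdot$-property, whence $(\dot{\wt\gamma},\dot{\wt k})\in\CIdot(\wt X\setminus\wt K^\circ;S^2\wt T^*\wt X)$.

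The main obstacle is the uniformity in $\eps$: one must show that the family $L=DP\circ\phi\,(DP)^*$ over the slices $\wt X_\eps$---equivalently, a single operator on the total gluing space for initial data $\wt X$, transversal to $\hat X\cup X_\circ$---is invertible with bounds independent of $\eps\searrow0$ and that its solution operator preserves the $\CIdot$ class. This is exactly the analytic content of the nonlinear theory in \cite{HintzGlueID} (together with the control-of-supports input of Proposition~\ref{PropAcConstr} and \cite{ChruscielDelayMapping}), proved there for the analogous family of spacelike slices; the adaptation needed here is only the replacement of their slice and its resolution by $X$ and $\wt X$, which I would carry out with that reference in hand. Combining the resulting $(\dot{\wt\gamma},\dot{\wt k})$ with the metric realization of the first paragraph then yields the desired $\wt g_1=\wt g_\infty+\wt h_1$.
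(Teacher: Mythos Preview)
Your two–step structure is correct, and your second step (realizing the corrected data $(\wt\gamma_{\infty,0}+\dot{\wt\gamma},\wt k_{\infty,0}+\dot{\wt k})$ as the first and second fundamental forms of $\wt g_\infty+\wt B(\wt{\bar h}_1)$, with $\wt{\bar h}_1(0)$ and $\wt{\bar h}{}'_1(0)$ determined by $\dot{\wt\gamma},\dot{\wt k}$ via Lemma~\ref{LemmaIVPn1Nabla}) is essentially identical to the paper's.

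The gap is in your first step. You propose either a direct slice-by-slice contraction using the no-KIDs stability on $\cU^\circ_\delta$, or a wholesale appeal to the nonlinear machinery of \cite{HintzGlueID}. Neither works as stated. The slicewise argument fails because on $\wt X_\eps$ the data are \emph{not} a small perturbation of $(\gamma,k)$: near $\fp$, in the region $|\hat x|\sim 1$, they look like rescaled Kerr data, and the coercive estimates for $(D_{(\gamma,k)}P)^*$ on $\cU^\circ_\delta$ do not transfer uniformly across this singular zone (nor do you say what happens at the inner boundary $\hat r=\bhm$). The appeal to \cite{HintzGlueID} fails for a subtler reason which the paper isolates explicitly: the relevant result there (\cite[Proposition~5.6]{HintzGlueID}) requires the data to be \emph{exactly} Kerr initial data in some annulus $\{|\eps x|<\hat R_0\}$, $\hat R_0>\bhm-\delta$, near $\hat X$, whereas here $\sfs(\wt\gamma_{\infty,0},\eps\wt k_{\infty,0})$ are only polyhomogeneous with leading term Kerr, due to the global construction of $\wt g_\infty$.

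The paper's fix is genuinely additional and uses the extra room provided by working on $\wt X\setminus(\wt K^\delta)^\circ$ with $\delta>0$ (set up precisely for this purpose; cf.\ Remark~\ref{RmkMLarger}). One first modifies the data by a cutoff so that they \emph{are} exactly Kerr for $\hat r\leq\bhm-\tfrac23\delta$; this costs a constraint violation in the transition region $\bhm-\tfrac23\delta\leq\hat r\leq\bhm-\tfrac13\delta$. One then solves this away in Taylor series at $\hat X^\circ$ (following \cite[Theorem~6.2]{HintzGlueID}) modulo a finite-dimensional cokernel $E_1(\wt c')$ supported in a compact set $K'\Subset\hat r^{-1}((\bhm-\delta,\bhm))$, and then corrects further to a true solution except for $E_1(\wt c'+\wt c'')$ with $\wt c',\wt c''$ vanishing at $\eps=0$. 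Finally one restricts to $\wt X\setminus\wt K^\circ=\{\hat r\geq\bhm\}$, which is disjoint from $\supp E_1$, so the cokernel terms simply disappear. This dumping of the obstruction into the excised black-hole interior is the idea you are missing.
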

\begin{proof}
  With $\wt\gamma_{\infty,0},\wt k_{\infty,0}$ as in~\eqref{EqIVPwtConstraints}--\eqref{EqIVPwtSupp}, the key step is to find corrections
  \begin{equation}
  \label{EqIVPConsthq}
    \wt h,\wt q\in\CIdot(\wt X\setminus\wt K^\circ;S^2\wt T^*\wt X)
  \end{equation}
  with support compactly contained in $\wt\upbeta^*([0,\eps_0]\times\cU^\circ)$ so that
  \begin{equation}
  \label{EqIVPConst}
    P(\wt\gamma_{\infty,0}+\wt h,\wt k_{\infty,0}+\wt q;\Lambda) = 0
  \end{equation}
  on $\wt X\cap\{\eps\leq\eps_0\}$ for some small $\eps_0>0$. This is \emph{almost} the content of \cite[Proposition~5.6]{HintzGlueID}; the difference to the reference is that in the present paper the initial data $\sfs(\wt\gamma_{\infty,0},\eps\wt k_{\infty,0})$ near $\hat X^\circ$ are not equal to exact Kerr initial data in any set $\{(\eps,x)\colon|\eps x|<\hat R_0\}$, $\hat R_0>\bhm-\delta$, due to the global (in the fibers of $\hat M$) nature of our construction of $\wt g_\infty$. (Cf.\ \cite[Theorem~5.2, item (2)]{HintzGlueID}.)

  We can fix this in the following ad hoc manner, using that the initial data are equal to Kerr data $(\hat\gamma,\hat k)$ to leading order at $\hat X$. Fix a cutoff function $\chi\in\CI(\wt X)$ so that $\chi=1$ for $\hat r\leq\bhm-\frac23\delta$ and $\chi=0$ for $\hat r\geq\bhm-\frac13\delta$. Then
  \[
    (\wt\gamma'_{\infty,0},\wt k'_{\infty,0}) := (\wt\gamma_{\infty,0},\wt k_{\infty,0}) + \chi\Bigl(\bigl(\sfs^{-1}\hat\gamma,\eps^{-1}\sfs^{-1}\hat k\bigr) - (\wt\gamma_{\infty,0},\wt k_{\infty,0})\Bigr).
  \]
  Then $P(\wt\gamma'_{\infty,0},\wt k'_{\infty,0};\Lambda)\in\cA_\phg^{\hat\cF-2,\emptyset}(\wt X\setminus(\wt K^\delta)^\circ;\ul\R\oplus\wt T^*\wt X)$ where $\Re\hat\cF>0$ since the $\hat X$-model of $(\wt\gamma'_{\infty,0},\wt k'_{\infty,0})$ (i.e.\ the Kerr initial data) satisfies the constraint equations; and since in fact the constraint equations are violated, modulo errors vanishing to infinite order at $\eps=0$, only in the transition region $\supp\dd\chi$, we conclude that if $\chi'\in\CIc(\wt X)$ with $\supp\chi'\Subset\hat r^{-1}((\bhm-\delta,\bhm))$ equals $1$ near $\supp\dd\chi$, then we can write
  \[
    P(\wt\gamma'_{\infty,0},\wt k'_{\infty,0};\Lambda) = \chi'\Err_0 + \Err_1
  \]
  where $\Err_1\in\CIdot$ is supported in $\hat r>\bhm-\delta$, and $\Err_0\in\cA_\phg^{\hat\cF-2,\emptyset}$. Let $K'\Subset\hat r^{-1}((\bhm-\delta,\bhm))\subset\wt X$ be a smoothly bounded domain with $\supp\chi'\subset(K')^\circ$. Using arguments as in the proof of \cite[Theorem~6.2]{HintzGlueID}, we can then correct $(\wt\gamma'_{\infty,0},\wt k'_{\infty,0})$ in generalized Taylor series at $\hat X^\circ$ by tensors $(\wt h',\wt q')$ with support contained in $K'$ with vanishing restriction to $\hat X$, at the expense of admitting a failure of the constraints which lies in some fixed finite-dimensional space; that is,
  \[
    P(\wt\gamma'_{\infty,0}+\wt h',\wt k'_{\infty,0}+\wt q';\Lambda) - E_1(\wt c') = \Err'_1,
  \]
  where $\Err'_1\in\CIdot$ has support in $\hat r>\bhm-\delta$, while $\wt c'=\wt c'(\eps)$ with $\wt c'(0)=0$ is polyhomogeneous and $E_1$ is a linear map from $\C^N$ into $\CIc((K')^\circ;\ul\R\oplus\wt T^*\wt X)$ whose range spans the cokernel of the linearization of the constraints map around $(\hat\gamma,\hat k)$ on 00-Sobolev spaces on $(K')^\circ$ with exponentially decaying weights at $\pa K'$.

  One can then correct $(\wt\gamma'_{\infty,0}+\wt h',\wt k'_{\infty,0}+\wt q')$ further by tensors $(\wt h'',\wt q'')$ to a true solution of the constraints except for the presence of $E_1(\wt c'+\wt c'')$ on the right hand side (instead of $0$) by following the arguments of \cite[Theorem~6.2]{HintzGlueID}; here $\wt h'',\wt q''$ vanish identically near $\hat r=\bhm-\delta$, and $\wt h'',\wt q'',\wt c''$ vanish to infinite order at $\eps=0$. The restrictions
  \[
    (\wt\gamma_1,\wt k_1) := (\wt\gamma'_{\infty,0}+\wt h'+\wt h'',\wt k'_{\infty,0}+\wt q'+\wt q'')|_{\wt X\setminus\wt K^\circ}
  \]
  to the smaller domain $\wt X\setminus\wt K^\circ\subset\wt X\setminus(\wt K^\delta)^\circ$ (which is disjoint from $\supp E_1(\wt c'+\wt c'')$) thus satisfies the constraint equations for all small $\eps$, and by construction differ from $(\wt\gamma_{\infty,0},\wt k_{\infty,0})|_{\wt X\setminus\wt K^\circ}$ by tensors $\wt h,\wt q$ of class~\eqref{EqIVPConsthq}.

  Finally then, we shall take $\wt h_1=\wt B(\wt{\bar h}_1)$ (using the second symmetric tensor power of the map~\eqref{EqIVPwtB}) for a suitable choice of $\wt{\bar h}_1\in\CIdot(\cO\setminus\wt K^\circ;S^2\wt T^*\wt X)$. Writing $\wt{\bar h}_1(t)=\wt{\bar h}_1(0)+t\wt{\bar h}{}_1'(0)+\cO(t^2)$, we only need to specify $\wt{\bar h}_1(0)$ and $\wt{\bar h}{}_1'(0)$, and the remaining Taylor coefficients are arbitrary. The requirement that the first fundamental form $(\wt{\bar g}_\infty+\wt{\bar h}_1)|_{\wt X}=\wt\gamma_{\infty,0}+\wt{\bar h}_1(0)$ of $\wt g_\infty+\wt h_1$ be equal to $\wt\gamma_1=\wt\gamma_{\infty,0}+\wt h$ forces $\wt{\bar h}_1(0)=\wt h$. For the second fundamental form, we require
  \begin{align*}
    \wt k_1 = \wt k_{\infty,0}+\wt q &= \Bigl(\frac12\wt N^{-1}\cL_{\wt e_0}(\wt{\bar g}_\infty+\wt{\bar h}_1(0)+t\wt{\bar h}{}_1'(0))\Bigr)\Big|_{\wt X} \\
      &= \wt k_{\infty,0} + \frac12\wt N^{-1}\bigl( \wt{\bar h}{}'_1(0) + \cL_{\wt e_0}\wt{\bar h}_1(0)\bigr),
  \end{align*}
  with $\wt{\bar h}_1(0)$ on the right regarded as a $t$-independent section of $S^2\wt T^*\wt X$ over $\wt O$. This determines $\wt{\bar h}{}'_1(0)\in\CIdot(\wt X\setminus\wt K^\circ;S^2\wt T^*\wt X)$ indeed; this uses that the Lie derivative along $\wt e_0$ preserves the rapid vanishing of $\wt{\bar h}_1(0)$ at $\eps=0$ in view of~\eqref{EqIVPwte0}.
\end{proof}

We now improve the tensor $\wt g_1$ from Proposition~\ref{PropIVPConst} further.

\begin{prop}[Taylor series construction at $\wt X$]
\label{PropIVPTay}
  Let $\wt g_1$ be as in Proposition~\usref{PropIVPConst}; thus $\Ric(\wt g_1)-\Lambda\wt g_1\in\CIdot(\wt M\setminus\wt K^\circ;S^2\wt T^*\wt M)$ and the constraint equations are satisfied at $\wt X$. Then there exists $\wt h\in\CIdot(\wt M\setminus\wt K^\circ;S^2\wt T^*\wt M)$ so that for $\wt g:=\wt g_1+\wt h$, the error $\Ric(\wt g)-\Lambda\wt g\in\CIdot(\wt M\setminus\wt K^\circ;S^2\wt T^*\wt M)$ vanishes to infinite order at $\wt X$.
\end{prop}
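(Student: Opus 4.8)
The plan is to construct $\wt h$ in Taylor series at $\wt X$, adapting to the total family setting the order-by-order recipe described at the end of \S\ref{SssIVPn1}, and imposing the gauge condition that the lapse $\wt N$ and shift $\wt\beta$ of $\wt g_1$ (fixed in \S\ref{SssIVPwt}) be unchanged. Writing $t\in\CI(\wt M)$ for the defining function of $\wt X$ from \S\ref{SssIVPwt}, I seek $\wt h=\wt B(\wt{\bar h})$ with $\wt{\bar h}\sim\sum_{j\geq 2}t^j\,\wt{\bar h}_j$, $\wt{\bar h}_j\in\CIdot(\wt O\setminus\wt K^\circ;S^2\wt T^*\wt X)$. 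By the block structure \eqref{EqIVPn1S2Split} together with \eqref{EqIVPn1LapseShift}, adding $\wt B(\wt{\bar h})$ changes neither $\wt g^{-1}(\dd t,\dd t)=-\wt N^{-2}$ nor the future unit normal direction, so lapse and shift are preserved; and since $\wt{\bar h}$ vanishes to \emph{second} order at $\wt X$, Lemma~\ref{LemmaIVPn1Nabla} shows that both the first and the second fundamental form of $\wt X$ for $\wt g:=\wt g_1+\wt h$ coincide with those for $\wt g_1$, i.e.\ with $(\wt\gamma_1,\wt k_1)$, which by Proposition~\ref{PropIVPConst} satisfy the constraint equations at $\wt X\cap\{\eps<\eps_0\}$.

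The coefficients $\wt{\bar h}_j|_{\wt X}$ are then determined inductively. Suppose $\wt g^{(j-1)}:=\wt g_1+\wt B\bigl(\sum_{i=2}^{j-1}t^i\wt{\bar h}_i\bigr)$ has been constructed so that the spatial components $\bigl(\Ric(\wt g^{(j-1)})-\Lambda\wt g^{(j-1)}\bigr)_{m i}$ vanish to order $j-2$ at $\wt X$ (the base case $j=2$ holds since $\Ric(\wt g_1)-\Lambda\wt g_1\in\CIdot$, so its restriction to $\wt X$ lies in $\CIdot(\wt X\setminus\wt K^\circ)$). By Lemma~\ref{LemmaIVPn1Nabla} the perturbation $t^j\wt B(\wt{\bar h}_j)$ contributes $j\,t^{j-1}\wt N^{-1}\wt{\bar h}_j+\cO(t^j)$ to the second fundamental form, hence an $\cO(t^{j-2})$ term with leading coefficient a nonzero multiple of $\wt N^{-2}\wt{\bar h}_j$ to $\cL_{\wt e_0}k$; since by Corollary~\ref{CorIVPn1Curv} the term $\cL_{\wt e_0}k$ enters $\Ric_{m i}$ linearly with invertible coefficient $\wt N^{-1}$, while all other terms there ($\ol{\Ric}$, $(\tr_{\bar g}k)k$, the quadratic terms in $k$, and $\bar\nabla^2 N$) involve only lower-order data, the equation $\tfrac{1}{(j-2)!}\pa_t^{j-2}\bigl(\Ric(\wt g^{(j)})-\Lambda\wt g^{(j)}\bigr)_{m i}\big|_{\wt X}=0$ with $\wt g^{(j)}:=\wt g^{(j-1)}+t^j\wt B(\wt{\bar h}_j)$ is an algebraic equation of the form $(\text{const})\,\wt N^{-2}\wt{\bar h}_j|_{\wt X}+(\text{data from }\wt g^{(j-1)})=0$. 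It has a unique solution $\wt{\bar h}_j|_{\wt X}$, which lies in $\CIdot(\wt X\setminus\wt K^\circ;S^2\wt T^*\wt X)$ because the data do; this is where the inductive $\CIdot$-hypothesis is used, and it is the one structural point requiring care. Extending each $\wt{\bar h}_j|_{\wt X}$ arbitrarily to $\wt{\bar h}_j\in\CIdot$, Borel-summing $\wt{\bar h}\sim\sum_{j\geq 2}t^j\wt{\bar h}_j$, cutting off to a neighborhood of $\wt X$ and taking $\wt h=0$ outside, I obtain $\wt h\in\CIdot(\wt M\setminus\wt K^\circ;S^2\wt T^*\wt M)$ for which $\bigl(\Ric(\wt g)-\Lambda\wt g\bigr)_{m i}$ vanishes to infinite order at $\wt X$, while $\Ric(\wt g)-\Lambda\wt g\in\CIdot(\wt M\setminus\wt K^\circ;S^2\wt T^*\wt M)$ since $\wt h\in\CIdot$.

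Finally, I upgrade this to infinite-order vanishing of the full error tensor. For each fixed $\eps\in(0,\eps_0)$, the metric $\wt g|_{\wt M_\eps}$ has initial data at $\wt X_\eps$ satisfying the constraint equations and spatial Einstein vacuum components vanishing to infinite order at $\wt X_\eps$, so Lemma~\ref{LemmaIVPn1Taylor} (with $n=3$, hence $\tfrac{2\Lambda}{n-1}=\Lambda$), applied fiberwise near $\wt X_\eps$, gives that $\Ric(\wt g|_{\wt M_\eps})-\Lambda\wt g|_{\wt M_\eps}$ vanishes to infinite order at $\wt X_\eps$. Since $\Ric(\wt g)-\Lambda\wt g$ is a polyhomogeneous section on $\wt M\setminus\wt K^\circ$ lying in $\CIdot$ at $\eps=0$, this fiberwise statement upgrades to infinite-order vanishing at $\wt X$ as a section on $\wt M\setminus\wt K^\circ$, as claimed. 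The only real subtlety in the whole argument, beyond the bookkeeping already carried out in \S\ref{SsIVPFol} and Proposition~\ref{PropIVPConst}, is the verification that each stage of the induction yields an invertible algebraic equation for the next Taylor coefficient — which is precisely the structure of Corollary~\ref{CorIVPn1Curv} — and that the $\CIdot$-at-$\eps=0$ property survives the iteration, the Borel summation, and the fiberwise use of Lemma~\ref{LemmaIVPn1Taylor}.
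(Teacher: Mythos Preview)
Your proposal is correct and follows essentially the same route as the paper: set $\wt h=\wt B(\wt{\bar h})$ to preserve lapse and shift, write $\wt{\bar h}\sim\sum_{j\geq 2}t^j\wt{\bar h}_j$, determine the $\wt{\bar h}_j|_{\wt X}$ iteratively via the algebraic equation coming from the $\cL_{\wt e_0}k$ term in Corollary~\ref{CorIVPn1Curv} (exactly as in the procedure at the end of~\S\ref{SssIVPn1}), Borel-sum, and then invoke Lemma~\ref{LemmaIVPn1Taylor} $\eps$-fiberwise. You are simply more explicit than the paper about the inductive step and about why the $\CIdot$-at-$\eps=0$ property propagates through the construction.
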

\begin{proof}
  In view of Lemma~\ref{LemmaIVPn1Taylor}, applied to each level set of $\eps>0$, it suffices to construct $\wt h=\wt B(\wt{\bar h})$ (using~\eqref{EqIVPwtB}) where $\wt{\bar h}\in\CIdot(\wt O\setminus\wt K^\circ;S^2\wt T^*\wt X)$ so that the pullback of
  \begin{equation}
  \label{EqIVPTay}
    \Ric(\wt g_1+\wt h)-\Lambda(\wt g_1+\wt h)
  \end{equation}
  to $\wt X_t$ (i.e.\ the spatial components) vanishes to infinite order at $t=0$. This form of $\wt h$ ensures that lapse and shift of $\wt g_1+\wt h$ are equal to those of $\wt g_\infty$. Writing the Taylor series of $\wt{\bar h}(t)$ as $\sum_{j\geq 2}\wt{\bar h}_j t^j$ at $t=0$, we can then iteratively compute the coefficients $\wt{\bar h}_j$, $j=2,3,\ldots$, using the formula $\frac12\wt N^{-1}\cL_{\wt e_0}(\wt g_1+\wt h)$ for the second fundamental form and using the formula for the spatial components of~\eqref{EqIVPTay} from Corollary~\ref{CorIVPn1Curv}, by repeating the arguments at the end of~\S\ref{SssIVPwt}.
\end{proof}

This finishes the proof of Theorem~\ref{ThmIVP}, and in combination with Theorem~\ref{ThmFh} also the proof of the main result of this paper, Theorem~\ref{ThmM} (which is Theorem~\ref{ThmI} in the setting~\eqref{ItIOGeneric}).

\section{Extreme mass ratio mergers}
\label{SX}

While Theorem~\ref{ThmM} allows one to glue a subextremal Kerr black hole into a given spacetime $(M,g)$ under a genericity assumption (see Definition~\ref{DefMData}\eqref{ItMDataKID}), this assumption is not satisfied for explicit spacetimes of physical interest such as Kerr or Kerr--(anti) de~Sitter (K(A)dS) spacetimes. The situation for initial data gluing was discussed in \cite[\S6.2]{HintzGlueID}; here we describe the analogue for our formal spacetime gluing procedure (and thereby prove Theorem~\ref{ThmI} in the setting~\eqref{ItIOKdS}).

Recall that for $\Lambda\in\R$ and parameters $\bhm>0$, $a\in\R$, the Kerr, Kerr--de~Sitter, or Kerr--anti de~Sitter (depending on whether $\Lambda=0$, $\Lambda>0$, or $\Lambda<0$) metric $g_{\Lambda,\bhm,a}$ is given by
\begin{alignat*}{2}
  &g_{\Lambda,\bhm,a} = -\frac{\mu(r)}{b^2\varrho^2(r,\theta)}(\dd t-a\,\sin^2\theta\,\dd\phi)^2 + \varrho^2(r,\theta)\Bigl(\frac{\dd r^2}{\mu(r)}+\frac{\dd\theta^2}{c(\theta)}\Bigr) \hspace{-15em}&& \\
  &\hspace{14em} + \frac{c(\theta)\sin^2\theta}{b^2\varrho^2(r,\theta)}\bigl((r^2+a^2)\dd\phi-a\,\dd t)^2, \hspace{-15em}&& \\
  &\quad \mu(r)=(r^2+a^2)\Bigl(1-\frac{\Lambda r^2}{3}\Bigr)-2\bhm r, &\qquad \varrho^2(r,\theta)&=r^2+a^2\cos^2\theta, \\
  &\quad b=1+\frac{\Lambda a^2}{3},&\qquad c(\theta)&=1+\frac{\Lambda a^2}{3}\cos^2\theta.
\end{alignat*}
This solves the Einstein vacuum equations
\[
  \Ric(g_{\Lambda,\bhm,a}) - \Lambda g_{\Lambda,\bhm,a} = 0.
\]
We require the parameters $(\Lambda,\bhm,a)$ to be \emph{subextremal}; this means that $\mu(r)=0$ has four (when $\Lambda\neq 0$), resp.\ two (when $\Lambda=0$) distinct real roots. The second largest (when $\Lambda\neq 0$), resp.\ largest (when $\Lambda=0$) root $r=r_{\Lambda,\bhm,a}$ is the radius of the event horizon. For $\Lambda=0$, we recover the Kerr spacetime from Definition~\ref{DefGK} with different symbols for the coordinates.

Using a change of coordinates $\ft=t-T(r)$ and $\varphi=\phi-\Phi(r)$ for suitable functions $T,\Phi$ (see \cite[Equation~(1.5)]{HintzKdSMS} for the case $\Lambda>0$), the metric $g_{\Lambda,\bhm,a}$ extends analytically across the two largest roots $r_{\Lambda,\bhm,a}<r_{\Lambda,\bhm,a}^c$ (the event and cosmological horizons) when $\Lambda>0$, resp.\ the largest root $r_{\Lambda,\bhm,a}$ (the event horizon) when $\Lambda=0$, and indeed to the manifold
\[
  M = M_{2\eta},\qquad
  M_\delta := \R_\ft \times [r_{\Lambda,\bhm,a}-\delta,\infty) \times \Sph^2_{\theta,\varphi}
\]
for sufficiently small $\eta>0$; here $\delta\in(0,2\eta]$. We can select $T$ so that moreover the level sets of $\ft$ are spacelike; set then
\[
  X = X_{2\eta},\qquad X_\delta := \{0\} \times [r_{\Lambda,\bhm,a}-\delta,\infty) \times \Sph^2_{\theta,\varphi}.
\]
The \emph{black hole exterior} (or \emph{domain of outer communications}) is the subset $M_{\rm ext}$ of $M$ where $r\in(r_{\Lambda,\bhm,a},r_{\Lambda,\bhm,a}^c)$ when $\Lambda>0$, resp.\ $r>r_{\Lambda,\bhm,a}$ when $\Lambda=0$; similarly, we set
\[
  X_{\rm ext}=
    \begin{cases}
      \{0\}\times(r_{\Lambda,\bhm,a},r_{\Lambda,\bhm,a}^c)\times\Sph^2, & \Lambda>0, \\
      \{0\}\times(r_{0,\bhm,a},\infty)\times\Sph^2, & \Lambda\leq 0.
    \end{cases}
\]
The part of the \emph{black hole interior} contained in $M$ is the region where $r<r_{\Lambda,\bhm,a}$. In the case $\Lambda<0$, we replace $M$ by the domain of dependence of $\{\ft=0\}$ in order to avoid having to impose boundary conditions at the conformal boundary.

\begin{thm}[Gluing a small black hole along a timelike geodesic in a subextremal Kerr(--de~Sitter) spacetime]
\label{ThmXGlue}
  Let $\Lambda\in\R$, $\bhm>0$, $a\in\R$ be subextremal K((A)dS) parameters, and let $g=g_{\Lambda,\bhm,a}$. Let $\fp\in X_{\rm ext}$, let $v\in T_\fp M$ be a future timelike unit vector, and write $\cC\subset M$ for the maximally extended geodesic with initial conditions $\fp,v$, and let $\cU_M^\circ\subset M$ be a smoothly bounded precompact connected open neighborhood of $\fp$ so that $\cU_M^\circ\cap X$ contains a point in $X_{2\eta}\setminus X_{\frac32\eta}$. Let\footnote{We use hats here to notationally distinguish the small black hole parameters from those of $(M,g)$.} $\hat\bhm>0$ and $\hat\bha\in T_\fp M$, $\hat\bha\perp v$, with $|\hat\bha|<\hat\bhm$, be subextremal Kerr parameters. Then the conclusions of Theorem~\usref{ThmM} hold on $\wt M_\eta\setminus\wt K^\circ$ (with $\hat\bhm,\hat\bha$ in place of $\bhm,\bha$), with $\wt K$ defined as in point~\eqref{ItMFamily} in~\textnormal{\S}\usref{SM}. That is, there exists a $((3,*),(1,0)_+\cup(3,*))$-smooth total family $\wt g$ on $\wt M_\eta\setminus\wt K^\circ$ with respect to $g$ and with $\hat M_p$-model equal to the Kerr metric $\hat g_{\hat\bhm,\hat\bha}$ for all $p\in\cC$ (as in point~\eqref{ItMFamily} in~\textnormal{\S}\usref{SM}) so that:
  \begin{enumerate}
  \item $\Ric(\wt g)-\Lambda\wt g$ is a smooth section of $S^2\wt T^*\wt M$ over $\wt M_\eta\setminus\wt K^\circ$ which vanishes to infinite order at $\hat M$, $M_\circ$, and $\wt X$;
  \item $\wt g=g$ outside the union of the causal past and future of a compact subset of $\cU_M^\circ\cap X$;
  \item $\sfe\wt g$ is equal to the Kerr metric at $\hat M$ modulo quadratically vanishing error terms.
  \end{enumerate}
\end{thm}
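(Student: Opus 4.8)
The plan is to reduce Theorem~\ref{ThmXGlue} to Theorems~\ref{ThmFh} and~\ref{ThmIVP}, whose proofs take place entirely in setting~\eqref{ItIOGeneric} of Theorem~\ref{ThmI} (i.e.\ under the assumption that $(M,g)$ admits no Killing vector fields near a Cauchy surface for $\cU_M^\circ$). The only place where that assumption was used is in~\S\ref{SsAcTrue}, specifically in Proposition~\ref{PropAcTrue} (solving away the trivial error $f_\flat$ on $M_\circ$), which relied on the solvability of the linearized constraints on $\cU^\circ$ with controlled support (Proposition~\ref{PropAcConstr})---and this in turn used that $(X,\gamma,k)$ has no KIDs in $\cU^\circ$. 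Everything else---the indicial analysis of~\S\ref{SRic}, the near-field theory of~\S\ref{SAh} (which concerns only the fixed Kerr model $\hat g_b$ and is insensitive to the ambient spacetime), and the bookkeeping of~\S\ref{SF}---goes through verbatim on $\wt M_\eta$. So the entire content of the modification is to produce a version of Proposition~\ref{PropAcConstr}, and hence of Proposition~\ref{PropAcTrue}, that is valid on a suitable subset of the K((A)dS) Cauchy surface $X=X_{2\eta}$.

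The key observation, following \cite[\S6.2]{HintzGlueID}, is that while a K((A)dS) spacetime \emph{does} admit Killing vector fields (stationarity, and axisymmetry), the initial data $(X,\gamma,k)$ of $g_{\Lambda,\bhm,a}$ have \emph{no KIDs in any open set $\cU^\circ\subset X$ that contains points in both the black hole exterior and the black hole interior}: a KID on such a $\cU^\circ$ would extend to a Killing vector field on the maximal globally hyperbolic development of $(\cU^\circ,\gamma|_{\cU^\circ},k|_{\cU^\circ})$, which contains parts of the black hole interior where, by an analysis of the Killing algebra of K((A)dS) (every Killing field there is a combination of $\pa_\ft$ and $\pa_\varphi$, and neither is spacelike-orthogonal in a way compatible with being tangent to the relevant region), no such extension is possible. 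Concretely, one takes $\cU^\circ\subset X=X_{2\eta}$ to be a smoothly bounded precompact connected open set containing $\fp$, a point in the black hole interior of $M$, and---to retain the support control needed for the stability application in Remark~\ref{RmkIMerger}---a point in $X_{2\eta}\setminus X_{\frac32\eta}$, and with domain of dependence containing $\cU_M^\circ$; this matches item~\eqref{ItIOKdS} of Theorem~\ref{ThmI}. With this choice, the argument of Remark~\ref{RmkMKIDs} (nonexistence of KIDs persists on a $\delta$-shrinking) applies, and Proposition~\ref{PropAcConstr} holds with this $\cU^\circ$.

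The hypersurface $X$ here is noncompact, so some care is needed: one works with $\cU^\circ$ precompact (which is what Propositions~\ref{PropAcConstr} and~\ref{PropAcTrue} require) and one must also arrange the orthogonality condition $T_\fp X\perp T_\fp\cC$ by modifying $X$ near $\fp$ via Lemma~\ref{LemmaGKMod} (inside a convex neighborhood of $\fp$, so that $X$ remains a Cauchy surface for the relevant piece of $M$); this is exactly as in~\S\ref{SM}. The Fermi normal coordinate setup, the naive gluing of Lemma~\ref{LemmaGKNaive}, and the choice of $\wt K$ all carry over unchanged, now on $\wt M_\eta=[[0,1)_\eps\times M_\eta;\{0\}\times\cC]$. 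With Proposition~\ref{PropAcTrue} available on $\cU^\circ$, Theorem~\ref{ThmAc} holds on $M_\circ$, and then the entire order-by-order construction of~\S\S\ref{SsFMc1}--\ref{SsFRest} yields the formal solution $\wt g_\infty$ on $\wt M_\eta\setminus\wt K^\circ$ with $\Ric(\wt g_\infty)-\Lambda\wt g_\infty\in\CIdot$, the Kerr $\hat M$-models, the $\cO(\eps^2)$ accuracy at $\hat M$, and the support property (now phrased in terms of the domain of influence of a compact subset of $\cU^\circ\cap X=\cU^\circ$, i.e.\ the union of causal past and future of a compact subset of $\cU_M^\circ\cap X$, since $\cC$ passes through $\cU_M^\circ$ and $\cU^\circ$ dominates it). Finally, Theorem~\ref{ThmIVP} applies verbatim---its proof uses only the existence of the formal solution, the $(3+1)$-decomposition near $\wt X$, and the solvability of the linearized constraints with support control on $\cU^\circ$, the last of which we have just arranged---to correct $\wt g_\infty$ to $\wt g$ which also solves the Einstein vacuum equations to infinite order at $\wt X$.

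The main obstacle, and really the only nontrivial point, is the verification that the K((A)dS) initial data have no KIDs on the chosen $\cU^\circ$; this is where one must invoke the explicit structure of the isometry group of subextremal K((A)dS) and the fact that a KID on a set straddling the event horizon would have to extend to a Killing field of the development that is inconsistent with the known $2$-dimensional Killing algebra---and in particular would have to be spacelike somewhere it cannot be. This is precisely the content of \cite[\S6.2]{HintzGlueID} in the initial-data setting, and the same reasoning applies here. Once that is in hand, the passage from Theorem~\ref{ThmM} to Theorem~\ref{ThmXGlue} is a matter of substituting $\cU^\circ$ and $M_\eta$ for the corresponding objects throughout and checking that no step used the global genericity hypothesis beyond what the local no-KIDs statement provides.
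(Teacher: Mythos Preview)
Your central claim—that the K((A)dS) initial data have no KIDs on an open set $\cU^\circ\subset X$ that straddles the event horizon—is false. The Killing vector fields $\pa_\ft$ and $\pa_\varphi$ of $g_{\Lambda,\bhm,a}$ extend smoothly across the event horizon (this is exactly why the metric itself extends in the $(\ft,r,\theta,\varphi)$ coordinates) and are Killing on all of $M_{2\eta}$, including the black hole interior. By Moncrief's correspondence, they give rise to KIDs on every open subset of $X$, and in particular on your $\cU^\circ$. There is nothing about the interior region that kills these vector fields. Your appeal to \cite[\S6.2]{HintzGlueID} is a misreading: that section does not establish absence of KIDs but rather explains how to \emph{live with} them.

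The paper's proof follows a genuinely different strategy. It accepts that the cokernel $K^*=\ker D_{(\gamma,k)}P^*$ on $\cU^\circ$ has positive dimension $N$, and handles it by a domain-shrinking scheme: one fixes a nested sequence $\eta_0=2\eta>\eta_1>\eta_2>\cdots$ with $\inf\eta_j>\eta$, and at each application of Proposition~\ref{PropAcTrue} one solves the linearized constraints only modulo an error $E_1(c)$ lying in a fixed $N$-dimensional space supported in $X_{\eta_j}\setminus X_{\eta_{j+1}}$, i.e.\ in a thin shell near the inner boundary. This error propagates into a failure of the gauge condition (and hence of the linearized Einstein equation) only in $M_{\eta_j}\setminus M_{\eta_{j+1}}$, by finite speed of propagation and the fact that $\dd r$ is future timelike in the interior. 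One then restricts to $M_{\eta_{j+1}}$ before the next step, discarding the contaminated shell. The hypothesis that $\cU_M^\circ\cap X$ contains a point in $X_{2\eta}\setminus X_{\frac32\eta}$ is precisely what ensures there is room to place the error map $E_1$ in the disposable region. The analogous device is used in the proof of Theorem~\ref{ThmIVP}, where the constraint equations are solved up to a finite-dimensional error supported in $X_{\ubar\eta}\setminus X_\eta$. Your proposal misses this mechanism entirely.
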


With only minor notational modifications (concerning the definitions of $X,M$, specifically defining their extensions across the event horizon up until a spacelike boundary hypersurface), one can also treat the case of \emph{extremal} black holes.

Theorem~\ref{ThmXGlue} can be applied to produce formal solutions (as $\eps\searrow 0$) of the Einstein vacuum equations describing \emph{extreme mass ratio mergers}. See Figure~\ref{FigXGlue}. A fortiori, by definition of $\wt M_\eta$, the restriction of $\wt g$ to the $\eps$-level set of $\wt M_\eta$ is $\eps$-close to the K((A)dS) metric $g$ outside any fixed neighborhood of $\cC$, whereas an $\eps^{-1}$-rescaling of $\wt g$ near a point on $\cC$ is $\eps$-close to any fixed compact subset of a Kerr spacetime with parameters $\hat\bhm,\hat\bha$.

\begin{figure}[!ht]
\centering
\includegraphics{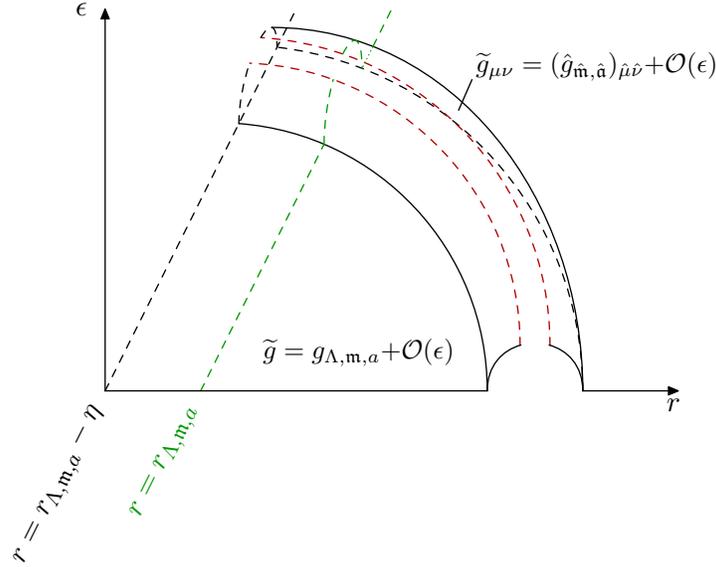}
\caption{Illustration of Theorem~\ref{ThmXGlue}. The front face arises by blowing up a timelike geodesic $\cC$---which here passes the event horizon of the ambient K((A)dS) black hole, indicated in green---inside $[0,1)_\eps\times M_\eta$ at $\eps=0$. The restriction of $\wt g$ to an $\eps$-level set is $\eps$-close to the K((A)dS) metric $g_{\Lambda,\bhm,a}$ away from the front face, whereas the metric coefficients (in Fermi normal coordinates $t,x$) near the front face, arising by blowing up the timelike geodesic $\cC$ in $[0,1)_\eps\times M_\eta$ at $\eps=0$, are $\eps$-close to those of a Kerr metric (with respect to the `fast' coordinates $\hat t=\frac{t-t_0}{\eps},\hat x=\frac{x}{\eps}$ near any point $(t_0,0)$ along $\cC$). The red dashed lines form a spacelike hypersurface bounding a coordinate sphere in the interior of the small black hole which we excise, much like we excise the ball $r<r_{\Lambda,\bhm,a}-\eta$ in the interior of the ambient K((A)dS) black hole.}
\label{FigXGlue}
\end{figure}

\begin{rmk}[Formal extreme mass ratio mergers: finite time theory]
\label{RmkXDomain}
  Consider the case $\Lambda\geq 0$. Suppose the geodesic $c(s)$ with $c(0)=\fp$, $c'(0)=v$ crosses the event horizon of $(M,g)$ at proper time $s=s_0>0$. Then since $\dd r$ is future timelike in $r<r_{\Lambda,\bhm,a}$, the function $r\circ c(s)$ is monotonically decreasing for $s>s_0$, and thus $r(c(s))<r-\frac32\eta$ for all $s>s_1>s_0$. Fixing $s_2>s_1$ and letting $\ft_2:=t(c(s_2))$, the total family $\wt g$ is then $\eps$-close to the ambient K(dS) metric $g$ on compact subsets of $M_\eta\cap\{\ft\geq\ft_2\}$. In the Kerr--de~Sitter case ($\Lambda>0$), this in particular applies to $[\ft_2,\ft_2+1]\times[r_{\Lambda,\bhm,a}-\eta,r_{\Lambda,\bhm,a}^c+\eta]\times\Sph^2$; in the Kerr case, $\wt g$ is \emph{equal} to $g$ on $[\ft_2,\ft_2+1]\times[r_2,\infty)\times\Sph^2$ for some $r_2<\infty$ (depending on $\ft_2$), and $\eps$-close to $g$ on $[\ft_2,\ft_2+1]\times[r_{\bhm,a}-\eta,r_2]\times\Sph^2$. This can be taken as the starting point for the evolution of initial data $\eps$-close to K(dS) data (but violating the constraint equations by an amount $\cO(\eps^\infty)$) in an $\eta$-neighborhood of the domain of outer communications. We stress that $\eta>0$ is fixed, and $\eps>0$ can be taken to be arbitrarily small, and therefore domain of dependence considerations imply that the initial data at $\ft_2$ are defined on a sufficiently large set so as to permit, in principle, the unique global future solvability of (quasi)linear wave equations.
\end{rmk}

\begin{rmk}[Formal extreme mass ratio mergers: nonlinear stability of the merger]
\label{RmkXStab}
  Continuing Remark~\ref{RmkXDomain} and focusing on the very slowly rotating Kerr--de~Sitter case $\Lambda>0$, $|\frac{a}{\bhm}|\ll 1$, note that, a fortiori, the initial data of $\wt g_\eps=\wt g|_{(\wt M_\eta)_\eps}$ at $\ft=\ft_2$ satisfy the constraint equations up to $\cO(\eps^\infty)$ error terms. Therefore, we can apply \cite[Theorem~11.2]{HintzVasyKdSStability}, with $(h,k)$ there equal to the initial data of $\wt g_\eps$ at $\ft=\ft_2$, to obtain a solution $\wt g'_\eps$ of the gauge-fixed Einstein vacuum equation \cite[(11.10)]{HintzVasyKdSStability} on
  \[
    \Omega:=[\ft_2,\infty)\times[r_{\Lambda,\bhm,a}-\eta,r_{\Lambda,\bhm,a}^c+\eta]\times\Sph^2
  \]
  with the following properties:
  \begin{enumerate}
  \item $\wt g'_\eps$ is equal to a Kerr--de~Sitter metric plus a tail whose coefficients in standard coordinates on $\Omega$ are bounded, together with all their coordinate derivatives, by $e^{-\alpha\ft}$ for some $\alpha>0$;
  \item $\Err_\eps:=\Ric(\wt g'_\eps)-\Lambda\wt g'_\eps$ (in the sign convention of the present paper) obeys the same $\cO(e^{-\alpha\ft})$ bound with constants of size $\cO(\eps^N)$ for all $N$, i.e.\ the components of $\Err_\eps$ are bounded by $C_N\eps^N e^{-\alpha\ft}$ together with all coordinate derivatives.
  \end{enumerate}
  The second property is a consequence of the fact that the gauge 1-form, denoted $\Ups(g)-\Ups(g_{b_0,b})-\theta$ in \cite{HintzVasyKdSStability}, has Cauchy data at $\ft=\ft_2$ of size $\cO(\eps^\infty)$; since it moreover lies in $\ker\wt\Box^{\rm CP}_g$, it can be seen to decay exponentially using the extension of \cite[Theorem~8.1]{HintzVasyKdSStability}---which in particular proves the absence of non-exponentially decaying mode solutions when $g$ is a subextremal Schwarzschild--de~Sitter metric---to metrics which exponentially decay to very slowly rotating Kerr--de~Sitter spacetimes; this extension follows from the methods of \cite[\S5]{HintzVasyKdSStability}. Thus, we can control a formal (i.e.\ up $\cO(\eps^\infty)$ errors) solution of the initial value problem for the Einstein vacuum equations with initial data given at $\ft=0$ \emph{globally} in forward time: Theorem~\ref{ThmXGlue} provides the part of the solution for $\ft\in[0,\ft_2]$, and \cite{HintzVasyKdSStability} provides the rest. See Figure~\ref{FigXStab}.
\end{rmk}

\begin{figure}[!ht]
\centering
\includegraphics{FigXStab}
\caption{Illustration of Remark~\ref{RmkXStab} for very small but fixed $\eps>0$: the metric $\wt g_\eps=\wt g|_{(\wt M_\eta)_\eta}$ is a formal solution of the Einstein vacuum equations and roughly speaking describes a small Kerr black hole with parameters $(\eps\hat\bhm,\eps\hat\bha)$ near $\cC$ which merges with a given Kerr--de~Sitter black hole $(M,g)$ with parameters $(\bhm,a)$. Having crossed the event horizon of the KdS black hole at time $\ft=\ft_2$, the metric $\wt g_\eps$ is $\eps$-close to $g$ in a fixed neighborhood of the domain of outer communications (the exterior of the gray cylinder, with the cosmological horizon not shown). Nonlinear stability results for the gauge-fixed Einstein vacuum equations, applied with parametric dependence in $\eps$, give a formal solution $\wt g'_\eps$ for all subsequent times, which decays exponentially fast to an exact Kerr--de~Sitter metric.}
\label{FigXStab}
\end{figure}

\begin{proof}[Proof of Theorem~\usref{ThmXGlue}]
  We indicate the modifications required in (the proofs of) Theorems~\ref{ThmFh} and \ref{ThmIVP}. Fix $\eta_0=2\eta$ and pick $\eta_j\in\N$ with $\eta_0>\eta_1>\eta_2>\ldots$, and $\ubar\eta:=\inf_j\eta_j>\eta$.

  The only place in the proof of Theorem~\ref{ThmFh} where the absence of Killing vector fields (or KIDs on the level of initial data sets) is used is the proof of Proposition~\ref{PropAcTrue} (which leads to Theorem~\ref{ThmAc}), specifically in the construction of the symmetric 2-tensors $\dot\gamma,\dot k$ in equation~\eqref{EqAcTrueID}. In the present setting, we cannot solve~\eqref{EqAcTrueID} directly due to the presence of a cokernel $K^*:=\ker_{\CI(\cU^\circ;\ul\R\oplus T^*X_{\eta_0})}D_{(\gamma,k)}P^*$ of dimension $N=\dim K^*\geq 1$; here $\cU^\circ=\cU_M^\circ\cap X$. Instead, as in \cite[\S6.2]{HintzGlueID}, we may fix an injective linear map
  \[
    E_1 \colon \R^N \to \CIc(\cU^\circ\setminus X_{\eta_1};\ul\R\oplus T^*X_{\eta_0})
  \]
  so that the $L^2$-pairing $K^*\times\ran E_1\to\R$ is nondegenerate; and then there exists a unique solution $c\in\R^N$, $\dot\gamma,\dot k\in\CIc(\cU^\circ;S^2 T^*X)$ of the equation
  \[
    D_{(\gamma,k)}P(\dot\gamma,\dot k;\Lambda) = (\sfG_g f_\flat)(\nu,-) + E_1(c).
  \]
  Proceeding as after~\eqref{EqAcTrueID}, the gauge 1-form $\eta=\delta_g\sfG_g h_\flat-\theta$ still vanishes at $X$; but now $(\sfG_g\delta_g^*\eta)(\nu,-)$ vanishes only outside of $\supp E_1(c)\subset X_{\eta_0}\setminus X_{\eta_1}$, and therefore $\eta\in\ker\delta_g\sfG_g\delta_g^*$ does not necessarily vanish. However, since $\dd r$ is future timelike in $M_{\rm int}$, we do have $\supp\eta\subset M_{\eta_0}\setminus M_{\eta_1}$ by finite speed of propagation. In conclusion, Theorem~\ref{ThmAc} remains valid if we replace $M$ by $M_{\eta_0}$ and allow for $(D_g\Ric-\Lambda)h-f$ in equation~\eqref{EqAc} to be nonzero outside $M_{\eta_1}$.

  After the first application of Theorem~\ref{ThmFh} in~\S\ref{SsFMc1}, we restrict to $M_{\eta_1}$. The subsequent solution step at $\hat M$ is unchanged. When applying Theorem~\ref{ThmFh} for the first of two times again in~\S\ref{SsFMc2}, we first work on $M_{\eta_1}$ and permit violations of the constraints, and thus of the linearized Einstein vacuum equations, outside $M_{\eta_2}$; and the second time we work on $M_{\eta_2}$ and permit violations outside $M_{\eta_3}$; and so on. In conclusion then, the analogue of Theorem~\ref{ThmFh} in the present setting remains valid on $\wt M_{\ubar\eta}$.

  In the proof of Theorem~\ref{ThmIVP}, the absence of KIDs is used in the (adapted) proof of \cite[Proposition~5.6]{HintzGlueID} in \cite[Propositions~4.15]{HintzGlueID} (via \cite[Proposition~4.21]{HintzGlueID}). In the present setting, where KIDs are present, one proceeds as in the proof of \cite[Theorem~6.2]{HintzGlueID} to solve the constraint equations up to an error which lies in a fixed finite-dimensional space of smooth tensors supported in $X_{\ubar\eta}\setminus X_\eta$. The remainder of the proof of Theorem~\ref{ThmIVP} is unchanged.
\end{proof}

\begin{rmk}[General spacetimes with noncompact Cauchy hypersurfaces: Theorem~\usref{ThmI2}]
\label{RmkXNc}
  We again indicate the modifications required in the proofs of Theorems~\ref{ThmFh} and \ref{ThmIVP}. First, in the construction of initial data $\dot\gamma,\dot k$ for equation~\eqref{EqAcTrueID} in the proof of Proposition~\ref{PropAcTrue}, we are now dropping the requirement that $\dot\gamma,\dot k\in\CI(X;S^2 T^*X)$ have compact support. The cokernel of $D_{(\gamma,k)}P$ on the corresponding dual space of compactly supported distributions (which thus vanish on some open set due to the noncompactness of $X$) is then trivial; see also \cite{HintzLinEin}. Therefore, we can indeed solve~\eqref{EqAcTrueID} for $\dot\gamma,\dot k\in\CI(X;S^2 T^*X)$. The proof of Theorem~\ref{ThmFh} then goes through without any further modifications. To deal with the possibility of KIDs in the proof of Theorem~\ref{ThmIVP}, fix a nonempty open set $W\subset X\setminus\bar V$; similarly to before, we can then solve the constraint equations up to an error which lies in a fixed finite-dimensional space of smooth tensors supported in $W$.
\end{rmk}

\bibliographystyle{alphaurl}
\bibliography{
/scratch/users/hintzp/ownCloud/research/bib/math,
/scratch/users/hintzp/ownCloud/research/bib/mathcheck,
/scratch/users/hintzp/ownCloud/research/bib/phys
}

\end{document}